\setlist[enumerate]{itemsep=2pt,topsep=3pt}
\setlist[itemize]{itemsep=2pt,topsep=3pt}
\setlist[enumerate,1]{label={\upshape (\roman*)}}
\pgfplotsset{compat=1.16} 
\newcommand\myshade{85}
\colorlet{mylinkcolor}{violet}
\colorlet{mycitecolor}{PineGreen}
\colorlet{myurlcolor}{Aquamarine}
\newcommand{\navy}[1]{\textcolor{MidnightBlue}{\bf #1}}
\definecolor{pale}{RGB}{235, 235, 235}
\definecolor{pale2}{RGB}{255,255,255}
\definecolor{aquamarine}{RGB}{69,139,116}
\newcommand{\CC}{\mathbbm C}
\newcommand{\EE}{\mathbbm E}
\newcommand{\RR}{\mathbbm R}
\newcommand{\MM}{\mathbbm M}
\newcommand{\NN}{\mathbbm N}
\newcommand{\PP}{\mathbbm P}
\newcommand{\QQ}{\mathbbm Q}
\newcommand{\ZZ}{\mathbbm Z}
\newcommand{\Xsf}{\mathsf X}
\newcommand{\Ysf}{\mathsf Y}
\newcommand{\cC}{\mathscr C}
\newcommand{\dD}{\mathscr D}
\newcommand{\gG}{\mathscr G}
\newcommand{\iI}{\mathscr I}
\newcommand{\fF}{\mathscr F}
\newcommand{\pP}{\mathscr P}
\newcommand{\sS}{\mathscr S}
\newcommand{\mM}{\mathscr M}
\newcommand{\oO}{\mathscr O}
\renewcommand{\leq}{\leqslant}
\renewcommand{\geq}{\geqslant}
\renewcommand{\phi}{\varphi}
\renewcommand{\epsilon}{\varepsilon}
\providecommand{\inner}[1]{\left\langle{#1}\right\rangle}
\newcommand{\matset}[2]{ \MM^{ #1 \times #2 } }
\newcommand{\argmax}{\operatornamewithlimits{argmax}}
\newcommand{\argmin}{\operatornamewithlimits{argmin}}
\newcommand{\st}{\ensuremath{\ \mathrm{s.t.}\ }}
\newcommand{\setntn}[2]{ \{ #1 : #2 \} }
\newcommand{\natset}[1]{[ #1 ]}
\newcommand{\given}{\, | \,}
\newcommand{\cf}[1]{ \lstinline|#1| }
\newcommand{\fore}{\therefore \quad}
\newcommand{\1}{\mathbbm 1}
\newcommand{\me}{\mathrm{e}}               
\newcommand*\diff{\mathop{}\!\mathrm{d}}   
\newcommand{\la}{\langle}
\newcommand{\ra}{\rangle}
\newcommand{\eqdist}{\stackrel{d} {=} }
\newcommand{\iidsim}{\stackrel{\textrm{ {\sc iid }}} {\sim} }
\newcommand{\listofauthorsname}{List of Authors}%
\newcommand{\listofauthors}{%
\chapter*{\listofauthorsname}%
\phantomsection%
\addcontentsline{toc}{chapter}{\listofauthorsname}%
\noindent%
\printauthorindex%
}%
\DeclareMathOperator{\Exp}{Exp}  
\DeclareMathOperator{\trace}{trace}
\DeclareMathOperator{\Span}{span}
\DeclareMathOperator{\diag}{diag}
\DeclareMathOperator{\rank}{rank}
\DeclareMathOperator{\kernel}{null}
\DeclareMathOperator{\var}{Var}
\DeclareMathOperator{\range}{range}
\DeclareMathOperator{\epi}{epi}
\DeclareMathOperator{\vecop}{vec}
\DeclareMathOperator{\real}{Re}
\DeclareMathOperator{\imag}{Im}
\DeclareMathOperator{\csum}{colsum} 
\DeclareMathOperator{\rsum}{rowsum} 
\theoremstyle{plain}
\newtheorem{theorem}{Theorem}[section]
\newtheorem{corollary}[theorem]{Corollary}
\newtheorem{lemma}[theorem]{Lemma}
\newtheorem{proposition}[theorem]{Proposition}
\theoremstyle{definition}
\newtheorem{example}{Example}[section]
\newtheorem{remark}{Remark}[section]
\newtheorem{assumption}{Assumption}[section]
\begin{document}

\frontmatter

\title{Economic Networks: \\
    Theory and Computation\\ 
    \vspace{1em}
\textsc{\normalsize QuantEcon Book I} \\ }

\author{John Stachurski and Thomas J. Sargent}

\maketitle

\tableofcontents

\chapter{Preface}\label{c:preface}

The development and use of network science has grown exponentially since the
beginning of the 21st century.  The ideas and techniques found in this field
are already core tools for analyzing a vast range of phenomena, from epidemics and
disinformation campaigns to chemical reactions and brain function.  

In economics, network theory is typically taught as a specialized
subfield, available to students as one of many elective courses
towards the end of their program. However, we are rapidly approaching the
stage where every aspiring scientist---including social scientists and
economists---wants to know the foundations of this field.  It is
arguably the case that, just as every well-trained economist learns the basics
of convex optimization, maximum likelihood and linear regression, so too
should every graduate student in economics learn the fundamental ideas of network theory.

This textbook is an introduction to economic networks, intended for students
and researchers in the fields of economics and applied mathematics. The
textbook emphasizes quantitative modeling, with the main underlying tools
being graph theory, linear algebra, fixed point theory and programming. Most
mathematical tools are covered from first principles, with the two main
technical results---the Neumann series lemma and the Perron--Frobenius
theorem---playing a central role. 

The text is suitable for a one-semester course, taught either to
advanced undergraduate students who are comfortable with linear
algebra or to beginning graduate students.  (For example, although we define
eigenvalues, an ideal student would already know what eigenvalues and
eigenvectors are, so that concepts like ``eigenvector centrality'' or results
like the Neumann series lemma are readily absorbed.) The text will also suit
students from mathematics, engineering, computer science and other related
fields who wish to learn about connection between economics and networks. 

Several excellent textbooks on network theory in economics
and social science already exist, including \cite{jackson2010social},
\cite{easley2010networks}, and \cite{borgatti2018analyzing}, as well as the
handbook by \cite{bramoulle2016oxford}. These textbooks have broad scope and
treat many useful applications. In contrast, our book is narrower and
more technical.  It provides mathematical, computational and
graph-theoretic foundations that are required to understand and apply network
theory, along with a treatment of some of the most important network
applications in economics, finance and operations research.  It can be used as
a complementary resource, or as a preliminary course that facilitates
understanding of the alternative texts listed above, as well as research
papers in the area.  

The book contains a mix of Python and Julia code. The majority is in Python
because the libraries are somewhat more stable at the time of writing,
although Julia also has strong graph manipulation and optimization libraries.
Code for figures is available from the authors.  There are many solved
exercises, ranging from simple to quite hard.  At the end of each chapter we
provide notes, informal comments and references.

We are greatly indebted to Jim Savage and Schmidt Futures for generous
financial support, as well as to Shu Hu and Chien Yeh for their outstanding
research assistance.  QuantEcon research fellow Matthew McKay generously lent
us his time and remarkable expertise in data analysis, networks and
visualization.  QuantEcon research assistant Mark Dawkins turned a messy
collection of code files into an elegant companion Jupyter book.  For many
important fixes, comments and suggestions, we thank Quentin Batista, Rolf
Campos, Fernando Cirelli, Rebekah Dix, Saya Ikegawa, Fazeleh Kazemian, Dawie
van Lill, Simon Mishricky, Pietro Monticone, Flint O'Neil, Zejin Shi, Akshay
Shanker, Arnav Sood, Natasha Watkins, Chao Wei, and Zhuoying Ye.  Finally,
Chase Coleman, Alfred Galichon, Spencer Lyon, Daisuke Oyama and Jesse Perla
are collaborators at QuantEcon, and almost everything we write has benefited
from their input.  This text is no exception.

\clearpage{\pagestyle{empty}\cleardoublepage}

\chapter*{Common Symbols}
\addcontentsline{toc}{chapter}{Common Symbols}
\label{c:cs}

{\setstretch{1.2}

\begin{tabular}{c | l}\label{c:coms}
    $P \implies Q$ & $P$ implies $Q$ \\
    $P \iff Q$ & $P \implies Q$ and $Q \implies P$ \\
    $\natset{n}$  & the set $\{1, \ldots, n\}$ \\
    $\alpha := 1$ & $\alpha$ is defined as equal to $1$ \\
    $f \equiv 1$ & function $f$ is everywhere equal to $1$ \\
    $\wp(A)$ & the power set of $A$; that is, the collection of all subsets of given set $A$ \\
    $\NN$, $\ZZ$ and $\RR$ & the natural numbers, integers and real numbers
        respectively  \\ 
    $\CC$ & the set of complex numbers (see \S\ref{sss:complex}) \\
    $\ZZ_+$, $\RR_+$, etc. & the nonnegative elements of $\ZZ$, $\RR$, etc. \\
    $\matset{n}{k}$ & all $n \times k$ matrices  \\
    $\diag(a_1, \ldots, a_n)$ & the diagonal matrix with $a_1, \ldots a_n$ on
        the principle diagonal \\
    $\delta_x$ & the probability distribution concentrated on point $x$\\
    $|x|$ & the absolute value of $x \in \RR$ \\
    $|B|$ & the cardinality of (number of elements in) set $B$ \\
    $f \colon A \to B$ & $f$ is a function from set $A$ to set $B$  \\
    $\RR^S$ & the set of all functions from $S$ to $\RR$  \\
    $\RR^n$ & all $n$-tuples of real numbers  \\
    $\| x \|_1$ & the $\ell_1$ norm $\sum_i |x_i|$  (see \S\ref{sss:norms}) \\
    $\| x \|_\infty$ & the $\ell_\infty$ norm $\max_i |x_i|$ (see \S\ref{sss:norms}) \\
    $\| A \|$ when $A \in \matset{n}{k}$ & the operator norm of $A$ (see \S\ref{sss:onorm}) \\
    $\la a, b \ra$ & the inner product of $a$ and $b$ \\
    $g \ll h$ & function (or vector) $g$ is everwhere strictly less than $h$ \\
    $\1$  & vector of ones or function everywhere equal to one \\
    $\1\{P\}$  & indicator, equal to 1 if statement $P$ is true and 0 otherwise \\
    {\sc iid} & independent and identically distributed  \\
\end{tabular}

\begin{tabular}{c || l}
    $i_d(v)$  & in-degree of node $v$ \\
    $o_d(v)$  & out-degree of node $v$ \\
    $\iI(v)$  & set of direct predecessors of node $v$ \\
    $\oO(v)$  & set of direct successors of node $v$ \\
    $u \to v$  & node $v$ is accessible from node $u$ \\
    $X \eqdist Y$ & $X$ and $Y$ have the same distribution \\
    $X \sim F$ & $X$ has distribution $F$ \\
    $\Pi(\phi, \psi)$ & the set of all couplings of $(\phi, \psi)$ \\
\end{tabular}

}

\clearpage{\pagestyle{empty}\cleardoublepage}

\mainmatter

\chapter{Introduction}\label{c:intro}

\epigraph{Relations are the fundamental fabric of reality.}{Michele Coscia}

\section{Motivation}\label{s:prelob}

Alongside the exponential growth of computer networks over the last few decades,
we have witnessed concurrent and equally rapid growth in a field called
\emph{network science}.  Once computer networks brought network
structure into clearer focus, scientists began to 
recognize networks almost everywhere, even in phenomena that had already
received centuries of attention using other methods, and to apply network theory to organize and
expand knowledge right throughout the sciences, in every field and discipline.

The set of possible examples is vast, and sources mentioning or
treating hundreds of different applications of network methods and graph
theory are listed in the reading notes at the end of the chapter.  In computer
science and machine learning alone, we see computational graphs, graphical
networks, neural networks and deep learning.  In operations research, network
analysis focuses on minimum cost flow, traveling salesman, shortest path, and
assignment problems.  In biology, networks are a standard way to represent
interactions between bioentities.

In this text, our interest lies in economic and social phenomena.
Here, too, networks are pervasive.  Important examples include financial
networks, production networks, trade networks, transport networks and social
networks.   For example, social and information networks affect trends in
sentiments and opinions, consumer decisions, and a range of peer effects.  The
topology of financial networks helps to determine relative fragility of the
financial system, while the structure of production networks affects trade,
innovation and the propagation of local shocks.

Figures~\ref{f:crude_oil_2019}--\ref{f:commercial_aircraft_2019_1} show
two examples of trade networks. Figure~\ref{f:crude_oil_2019} is called a
\emph{Sankey diagram}, which is a kind of figure used to represent flows.
Oil flows from left to right.
The countries on the left are the top 10 exporters of crude oil, while the countries on the
right are the top 20 consumers.  The figure relates to one of our core topics:
optimal (and equilibrium) flows across networks.  We treat optimal flows at
length in Chapter~\ref{c:ofd}.\footnote{This figure
    was constructed by QuantEcon research fellow Matthew McKay, using
    International Trade Data (SITC, Rev 2) collected by The Growth Lab at
    Harvard University.}

Figure~\ref{f:commercial_aircraft_2019_1} shows international trade in large
commercial aircraft in
2019.\footnote{This figure was also constructed by Matthew McKay, using data
          2019 International Trade Data SITC Revision 2, code 7924.  The data
               pertains to trade in commercial aircraft weighted at least
           15,000kg. It was sourced from CID Dataverse.} Node size is
           proportional to total exports and link width is proportional to
           exports to the target country.  The US, France and Germany are
           revealed as major export hubs.  

\begin{figure}
   \centering
   \scalebox{1.1}{\includegraphics[trim = 10mm 10mm 10mm 20mm, clip]{ 
       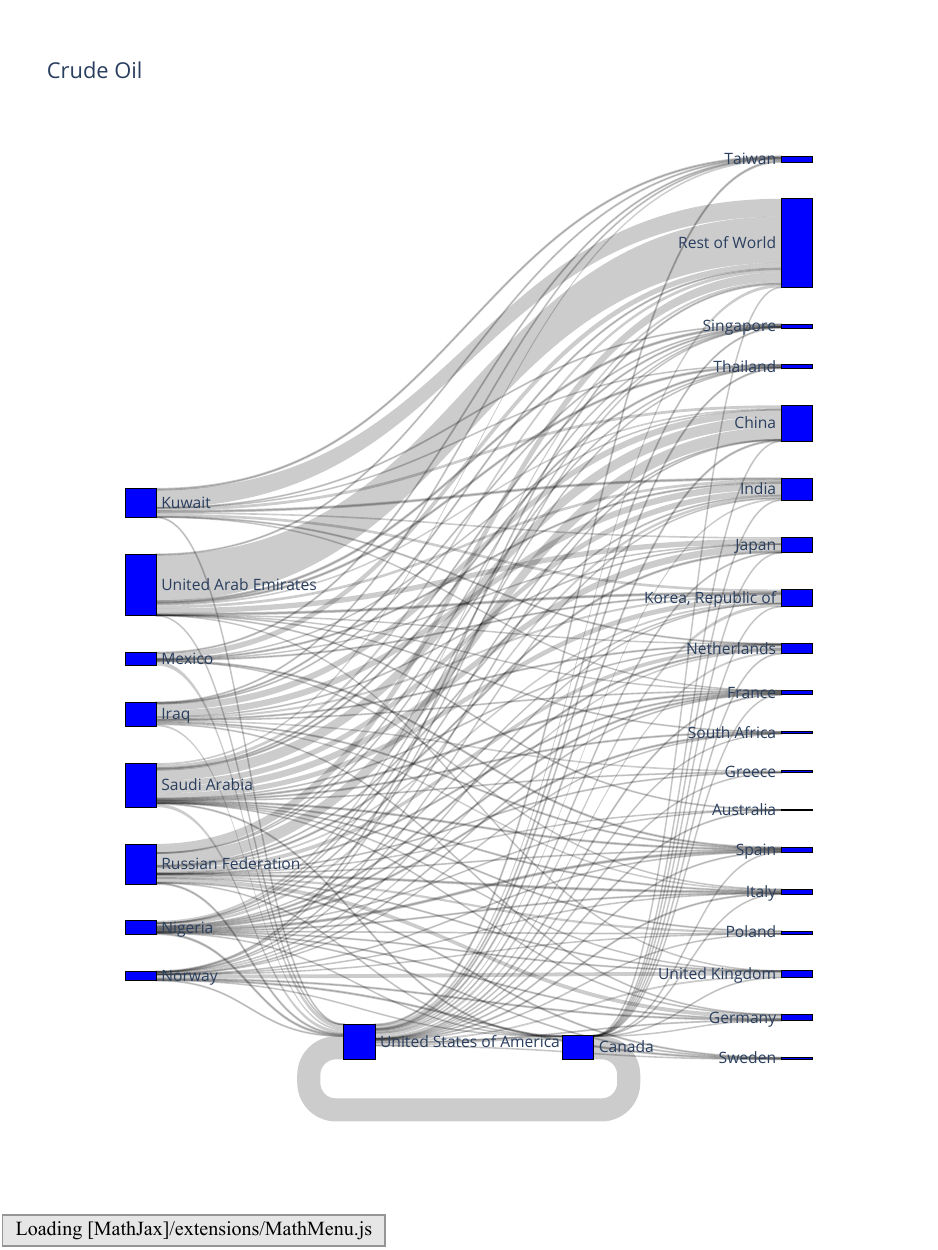}}
   \caption{\label{f:crude_oil_2019} International trade in crude oil 2019}
\end{figure}

\begin{figure}
   \centering
   \scalebox{0.65}{\includegraphics[trim = 0mm 30mm 0mm 30mm, clip]{ 
       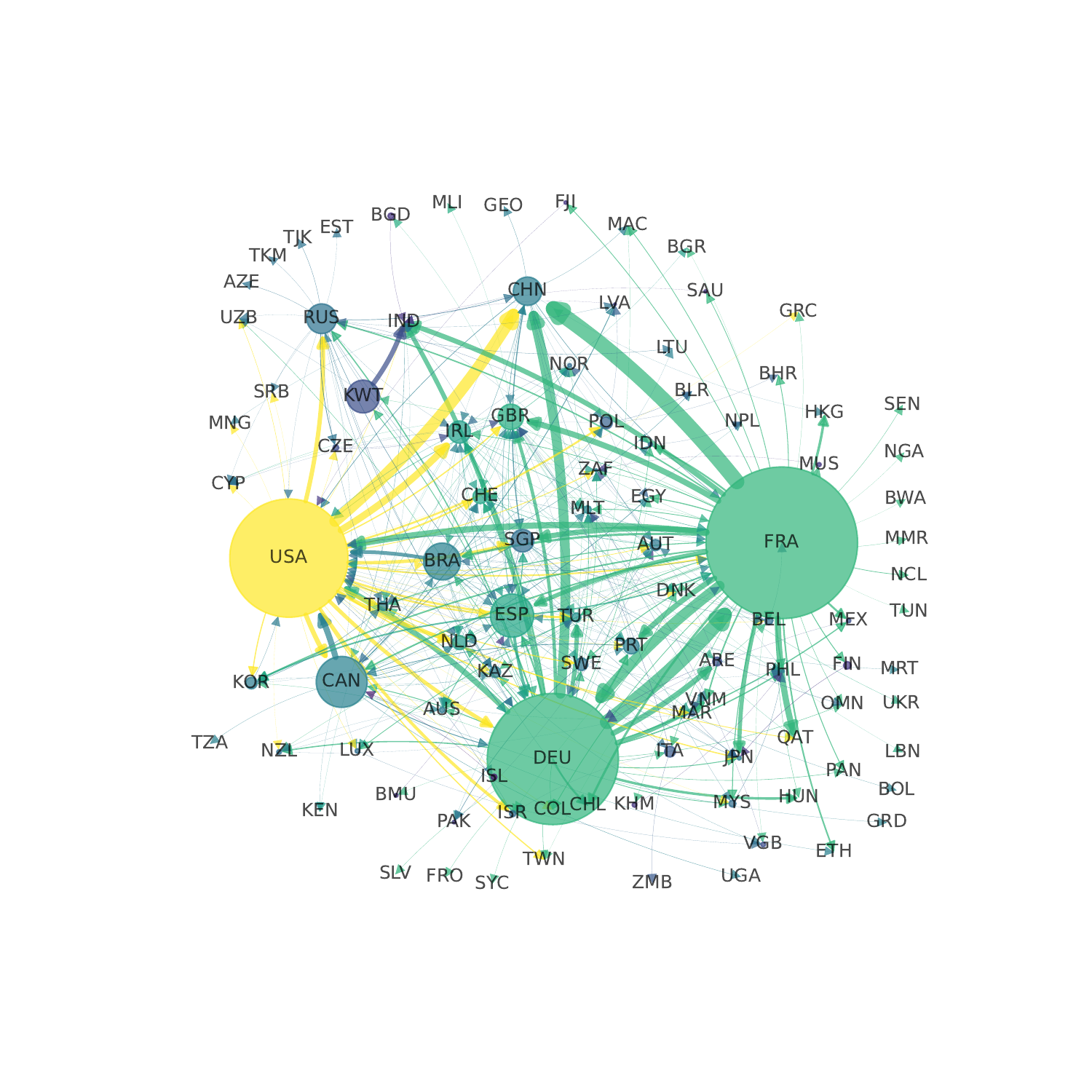}}
   \caption{\label{f:commercial_aircraft_2019_1} International trade in
   commercial aircraft during 2019}
\end{figure}

While some readers viewing
Figures~\ref{f:crude_oil_2019}--\ref{f:commercial_aircraft_2019_1} might at
first
suspect that the network perspective adds little more than an attractive
technique for visualizing data, it actually adds much more. For example, in
Figure~\ref{f:commercial_aircraft_2019_1}, node colors are based on a ranking
of ``importance'' in the network called \emph{eigenvector centrality}, which
we introduce in \S\ref{sss:eigcen}.  Such rankings and centrality measures are
an active area of research among network scientists. Eigenvector and other
forms of centrality feature throughout the text. For example, we will see that 
these concepts are closely connected to---and shed new light
on---fundamental ideas first developed many years ago by researchers in the
field of input-output economics. 

In addition, in production networks, it turns out that the nature of shock
propagation is heavily dependent on the underlying structure of the network.
For example, for a few highly connected nodes, shocks occurring
within one firm or sector can have an outsized influence on aggregate-level
fluctuations.  Economists are currently racing to understand these
relationships, their interactions with various centrality measures, and other
closely related phenomena.

To understand this line of work, as well as other applications of network
methods to economics and finance, some technical foundations are required. For
example, to define eigenvector centrality, we need to be familiar with
eigenvectors, spectral decompositions and the Perron--Frobenius theorem.  To
work with \emph{Katz centrality}, which also features regularly in network
science and economics, we require a sound understanding of the Neumann series
lemma. The Perron--Frobenius theorem and the Neumann series lemma form much of
the technical foundation of this textbook. We review them in detail in
\S\ref{s:introcs} and develop extensions throughout remaining chapters.  


One reason that analysis of networks is challenging is high-dimensionality. To
see why, consider implementing a model with $n$ economic agents.  This
requires $n$ times more data than one representative agent in a setting where
agents are atomistic or coordinated by a fixed number of prices.  For example,
\cite{carvalho2019large} model the dynamics of $n = 6 \times 10^6$ firms, all
of which need to be tracked when running a simulation. However, if we wish to
model interactions between each pair $i, j$ (supply linkages, liabilities,
etc.), then, absent sparsity conditions, the data processing requirement grows
like $O(n^2)$.\footnote{See \S\ref{ss:convergence} for a discussion of big O
notation.} In the \cite{carvalho2019large} example, $n^2$ is $3.6 \times
10^{13}$, which is very large even for modern computers.    One lesson is that
network models can be hard to solve, even with powerful computers, unless we
think carefully about algorithms.

In general, to obtain a good grasp on the workings of economic networks, we
will need computational skills plus a firm understanding of linear algebra,
probability and a field of discrete mathematics called graph theory.  The rest
of this chapter provides relevant background in these topics.  
Before tackling this background, we recommend that readers skim the list of
common symbols on page~\pageref{c:coms}, as well the
mathematical topics in the appendix, which start on page~\pageref{c:ap}.
(The appendix is not intended for sequential reading, but rather as a
source of definitions and fundamental results to be drawn on in what follows.)

\section{Spectral Theory}\label{s:introcs}

In this section we review some linear algebra needed for the study of
graphs and networks.  Highlights include the spectral decomposition of
diagonalizable matrices, the Neumann series lemma, and the fundamental theorem
of Perron and Frobenius.

\subsection{Eigendecompositions}\label{ss:eigen}

Our first task is to cover spectral decompositions and the spectral theorem.
We begin with a brief review of eigenvalues and their properties.
(If you are not familiar with eigenvalues and eigenvectors, please consult an
elementary treatment first.  See, for example, \cite{cohen2021linear}.)


\subsubsection{Eigenvalues}

Fix $A$ in $\matset{n}{n}$.  A scalar $\lambda \in \CC$ is
called an \navy{eigenvalue}\index{Eigenvalue} of $A$ if there exists a
nonzero $e \in \CC^n$ such that $Ae = \lambda e$.  A vector $e$
satisfying this equality is called an \navy{eigenvector}\index{Eigenvector}
corresponding to the eigenvalue $\lambda$.  (Notice that eigenvalues and
eigenvectors are allowed to be complex, even though we restrict elements of
$A$ to be real.) The set of all eigenvalues of $A$ is called the
\navy{spectrum}\index{Spectrum} of $A$ and written as $\sigma(A)$.  As we show
below, $A$ has at most $n$ distinct eigenvalues.

In Julia, we can check for the eigenvalues of a given square matrix $A$ via
\texttt{eigvals(A)}.  Here is one example
\begin{minted}{julia}
using LinearAlgebra
A = [0 -1;
     1  0]
eigenvals = eigvals(A) 
\end{minted}
Running this code in a Jupyter cell (with Julia kernel) or Julia REPL produces 
\begin{minted}{julia}
2-element Vector{ComplexF64}:
 0.0 - 1.0im
 0.0 + 1.0im
\end{minted}
Here \texttt{im} stands for $i$, the imaginary unit (i.e., $i^2=-1$). 

\begin{Exercise}
    Using pencil and paper, confirm that Julia's output is correct.  In
    particular,  show that
    \begin{equation*}
        A = 
        \begin{pmatrix}
            0 & -1 \\
            1 & 0
        \end{pmatrix}
        \quad \implies \quad
        \sigma(A) = \{i, -i\},
    \end{equation*}
    with corresponding eigenvectors $(-1, i)^\top$ and $(-1, -i)^\top$.
\end{Exercise}

\begin{Answer}
    For $\lambda_1= i$, we have
    \begin{equation*}
        A e_1 = 
        \begin{pmatrix}
            0 & -1 \\
            1 & 0
        \end{pmatrix} 
        \begin{pmatrix}
            -1 \\
            i
        \end{pmatrix} =
        \begin{pmatrix}
            -i \\
            -1
        \end{pmatrix} = \lambda_1 e_1.
    \end{equation*}

    Similarly for $\lambda_2= -i$, we have
    \begin{equation*}
        A e_2 = 
        \begin{pmatrix}
            0 & -1 \\
            1 & 0
        \end{pmatrix} 
        \begin{pmatrix}
            -1 \\
            -i
        \end{pmatrix} =
        \begin{pmatrix}
            i \\
            -1
        \end{pmatrix} = \lambda_2 e_2.
    \end{equation*}
\end{Answer}

If $\lambda \in \sigma(A)$ and $e$ is an eigenvector for
$\lambda$, then $(\lambda, e)$ is called an \navy{eigenpair}\index{Eigenpair}.  

\begin{Exercise}\label{ex:uniev}
    Prove: if $(\lambda, e)$ is an eigenpair of $A$ and $\alpha$ is a nonzero
    scalar, then $(\lambda, \alpha e)$ is also an eigenpair of $A$.  
\end{Exercise}

\begin{Answer}
    Fix an eigenpair $(\lambda, e)$ of $A$ and a nonzero scalar $\alpha$. We have
    \begin{equation*}
        A (\alpha e) = \alpha A e
                     = \lambda (\alpha e).
    \end{equation*}
    Hence $\alpha e$ is an eigenvector and $\lambda$ is an eigenvalue, as claimed.
\end{Answer}

\begin{lemma}\label{l:chlam}
    $\lambda \in \CC$ is an eigenvalue of $A$ if and only if $\det(A -
    \lambda I) = 0$. 
\end{lemma}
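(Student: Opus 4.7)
The plan is to unpack the definition of eigenvalue and translate it into a singularity statement about the matrix $A - \lambda I$, then invoke the standard characterization of singular matrices via the determinant. Concretely, I would proceed by the chain of equivalences
\begin{equation*}
    \lambda \in \sigma(A)
    \;\iff\; \exists\, e \neq 0 \text{ with } A e = \lambda e
    \;\iff\; \exists\, e \neq 0 \text{ with } (A - \lambda I) e = 0,
\end{equation*}
so the question reduces to whether the matrix $M := A - \lambda I$ has a nontrivial kernel.

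Next I would cite (from the linear algebra preliminaries that the text assumes the reader knows) the basic fact that for any square matrix $M \in \matset{n}{n}$, the following are equivalent: (a) $M$ is singular, (b) $M$ has nontrivial kernel, (c) the columns of $M$ are linearly dependent, and (d) $\det(M) = 0$. Applying (b) $\iff$ (d) to $M = A - \lambda I$ completes both directions of the lemma simultaneously.

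The only subtlety worth flagging is that $\lambda$ is allowed to be complex while the entries of $A$ are real; so strictly speaking one is treating $A$ as an element of $\matset{n}{n}$ viewed as complex matrices, and applying the above equivalence over $\CC$. Since the determinant is a polynomial in the entries and behaves identically over $\RR$ and $\CC$, no real work is needed here — it is just a matter of being clear about the field. There is no genuine obstacle in the argument; the whole proof is a one-line chain of equivalences once the singular-iff-zero-determinant fact is taken as given.
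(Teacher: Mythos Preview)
Your proposal is correct and matches the paper's approach essentially line for line: both rewrite $Ae=\lambda e$ as $(A-\lambda I)e=0$, invoke the standard equivalence between nontrivial kernel and vanishing determinant (the paper cites its Theorem~\ref{t:nncase} for this), and then remark that the argument carries over to complex $\lambda$ because the same equivalences hold for complex matrices.
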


\begin{proof}
    If $\lambda \in \RR$, then Lemma~\ref{l:chlam} follows directly from
    Theorem~\ref{t:nncase} on page~\pageref{t:nncase}, since $\det(A - \lambda I)
    = 0$ is equivalent to existence of nonzero vector $e$ such that $(A - \lambda
    I) e = 0$, which in turn says that $\lambda$ is an eigenvalue of $A$.
    The same arguments extend to the case $\lambda \in \CC$ because
    the statements in Theorem~\ref{t:nncase} are also valid for complex-valued
    matrices (see, e.g., \cite{jan}).
\end{proof}

It can be shown that $p(\lambda) := \det(A
-\lambda I)$ is a polynomial of degree $n$.\footnote{See, for example,
\cite{jan}, Chapter 6.}  This polynomial is called the
\navy{characteristic polynomial}\index{characteristic polynomial} of $A$.  By
the Fundamental Theorem of Algebra, there are $n$ roots (i.e., solutions in
$\CC$ to the equation $p(\lambda)=0$), although some may be repeated 
as in the complete factorization of $p(\lambda)$.  By Lemma~\ref{l:chlam},
\begin{enumerate}
    \item each of these roots is an eigenvalue, and
    \item no other eigenvalues exist besides these $n$ roots.
\end{enumerate}

If $\lambda \in \sigma(A)$ appears $k$ times in the factorization of
the polynomial $p(\lambda)$, then $\lambda$ is said to have \navy{algebraic multiplicity} 
$k$.  An eigenvalue with algebraic multiplicity one is called
\navy{simple}\index{Simple eigenvalue}.  A simple eigenvalue $\lambda$ has the
property that its eigenvector is unique up to a scalar multiple, in the sense
of Exercise~\ref{ex:uniev}.  In other words, the linear span of 
$\setntn{e \in \CC^n}{(\lambda, e) \text{ is an eigenpair}}$ (called 
the \navy{eigenspace}\index{Eigenspace} of $\lambda$) is one-dimensional.

\begin{Exercise}\label{ex:tausigma}
    For $A \in \matset{n}{n}$, show that $\lambda \in \sigma(A)$ iff $\tau
    \lambda \in \sigma(\tau A)$ for all $\tau > 0$.
\end{Exercise}

\begin{Answer}
    Fix $A \in \matset{n}{n}$ and $\tau > 0$.  If
    $\lambda \in \sigma(A)$, then $\tau^n \det(A - \lambda I) = 0$, or
    $\det(\tau A - \tau \lambda I) = 0$.  Hence $\tau \lambda \in \sigma(\tau
    A)$.  To obtain the converse implication, multiply by $1/\tau$.
\end{Answer}

\begin{Exercise}\label{ex:dirip}
    A useful fact concerning eigenvectors is that
    if the characteristic polynomial $p(\lambda) := \det(A -
    \lambda I)$ has $n$ distinct roots, then the $n$ corresponding
    eigenvectors form a basis of $\CC^n$.  Prove this for the case where all
    eigenvectors are real---that is show that the $n$ (real) eigenvectors
    form a basis of $\RR^n$.  (Bases are defined in \S\ref{sss:bvd}.  Proving
    this for $n=2$ is also a good effort.)
\end{Exercise}

\begin{Answer}
    If $p(\lambda) := \det(A - \lambda I)$ has $n$ distinct roots, then
    $|\sigma(A)|=n$.  For each $\lambda_i \in \sigma(A)$, let $e_i$ be a
    corresponding eigenvector.  It suffices to show that $\{e_i\}_{i=1}^n$ is linearly
    independent.  To this end, let $k$ be the largest number such that $\{e_1,
    \ldots, e_k\}$ is independent.  Seeking a contradiction, suppose that $k < n$.
    Then $e_{k+1} = \sum_{i=1}^k \alpha_i e_i$ for suitable scalars $\{\alpha_i\}$.
    Hence, by $A e_{k+1} = \lambda_{k+1} e_{k+1}$, we have
    \begin{equation*}
        \sum_{i=1}^k \alpha_i \lambda_i e_i
        = \sum_{i=1}^k \alpha_i \lambda_{k+1} e_i
        \quad \iff \quad
        \sum_{i=1}^k \alpha_i (\lambda_i - \lambda_{k+1}) e_i = 0.
    \end{equation*}
    Since $\{e_1, \ldots, e_k\}$ is independent, we have 
    $\alpha_i (\lambda_i - \lambda_{k+1})  = 0$ for all $i$.  At least one
    $\alpha_i$ is nonzero, so $\lambda_i = \lambda_{k+1}$ for some $i \leq k$.
    Contradiction.
\end{Answer}

\subsubsection{The Eigendecomposition}\label{sss:eigde}

What are the easiest matrices to work with?  An obvious answer to this
question is: the diagonal matrices.  For example, when $D = \diag(\lambda_i)$ with $i \in \natset{n}$,
\begin{itemize}
    \item the linear system $Dx = b$ reduces to $n$ completely independent scalar equations,
    \item the $t$-th power $D^t$ is just $\diag(\lambda_i^t)$, and
    \item the inverse $D^{-1}$ is just $\diag(\lambda_i^{-1})$, assuming all
        $\lambda_i$'s are nonzero.
\end{itemize}

While most matrices are not diagonal,  there is a way that
``almost any'' matrix can be viewed as a diagonal matrix, after translation
of the usual coordinates in $\RR^n$ via an alternative basis.  This can be
extremely useful.  The key ideas are described below.

$A \in \matset{n}{n}$ is called
\navy{diagonalizable}\index{Diagonalizable} if 
\begin{equation*}
    A = P D P^{-1}
    \text{ for some } 
    D = \diag(\lambda_1, \ldots, \lambda_n)
    \text{ and nonsingular matrix } P.  
\end{equation*}
We allow both $D$ and $P$ to contain complex values.
The representation $P D P^{-1}$ is called the
\navy{eigendecomposition}\index{Eigendecomposition} or the \navy{spectral
decomposition} of $A$.  

One way to think about diagonalization is in terms of maps, as in
\begin{equation*}
    \begin{tikzcd}
        \RR^n \arrow{r}{A} \arrow[swap]{d}{P^{-1}} & \RR^n  \\
        \CC^n \arrow{r}{D}& \CC^n \arrow[swap]{u}{P}
    \end{tikzcd}
\end{equation*}
Either we can map directly with $A$ or, alternatively, we can shift to
$\CC^n$ via $P^{-1}$, apply the diagonal matrix $D$, and then shift back to $\RR^n$
via $P$.

The equality $A = P D P^{-1}$ can also be written as $A P =  P D$.  Decomposed
across column vectors, this equation says that each column of $P$ is an
eigenvector of $A$ and each element along the principal diagonal of $D$ is an
eigenvalue.  

\begin{Exercise}
    Confirm this.  Why are column vectors taken from $P$ nonzero, as
    required by the definition of eigenvalues?
\end{Exercise}

\begin{Answer}
    Suppose to the contrary that there is one zero column vector in $P$. 
    Then $P$ is not nonsingular. Contradiction.
\end{Answer}

\begin{Exercise}
    The trace of a matrix is equal to the sum of its eigenvalues, and the
    determinant is their product.  Prove this fact in the case where $A$ is
    diagonalizable. 
\end{Exercise}

\begin{Answer}
    Let $A$ be as stated, with $A = P D P^{-1}$.  Using elementary properties of
    the trace and determinant, we have
    \begin{equation*}
        \trace(A) 
        = \trace(P D P^{-1}) 
        = \trace(D P P^{-1}) 
        = \trace(D) 
        = \sum_i \lambda_i
    \end{equation*}
    and 
    \begin{equation*}
        \det(A) 
        = \det(P)\det(D) \det(P^{-1}) 
        = \det(P)\det(D) \det(P)^{-1} 
        = \det(D) 
        = \prod_i \lambda_i.
    \end{equation*}
\end{Answer}

\begin{Exercise}
    The asymptotic properties of $m \mapsto A^m$ are determined by the
    eigenvalues of $A$. This is clearest in the diagonalizable case, where 
    $A = P \diag(\lambda_i) P^{-1}$.  To illustrate, use induction to show that 
    \begin{equation}\label{eq:diagpow}
        A = P \diag(\lambda_i) P^{-1} 
        \; \implies \;
        A^m = P \diag (\lambda_i^m) P^{-1}
        \text{ for all } m \in \NN.
    \end{equation}
\end{Exercise}

When does diagonalizability hold?

While diagonalizability is not universal, the set of
matrices in $\matset{n}{n}$ that fail to be diagonalizable has ``Lebesgue
measure zero'' in $\matset{n}{n}$. (Loosely
speaking, only special or carefully constructed examples will fail to be
diagonalizable.) The next results provide conditions for the property.

\begin{theorem}\label{t:diagiff}
    A matrix $A \in \matset{n}{n}$ is diagonalizable if and only if its
    eigenvectors form a basis of $\CC^n$.
\end{theorem}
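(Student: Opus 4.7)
The plan is to prove both directions by leveraging the column-vector interpretation of the equation $AP = PD$, which the excerpt has already highlighted just before the theorem.

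For the forward direction, I would assume $A = PDP^{-1}$ with $D = \diag(\lambda_1, \ldots, \lambda_n)$ and $P$ nonsingular. Rearranging gives $AP = PD$. Reading this equation column by column, the $i$-th column $p_i$ of $P$ satisfies $Ap_i = \lambda_i p_i$, and by the earlier exercise each $p_i$ is nonzero (otherwise $P$ would be singular). So $(\lambda_i, p_i)$ is an eigenpair of $A$ for each $i$. Since $P$ is nonsingular, its $n$ columns are linearly independent in $\CC^n$, hence form a basis consisting of eigenvectors of $A$.

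For the converse, suppose $A$ has eigenvectors $e_1, \ldots, e_n$ that form a basis of $\CC^n$, with corresponding eigenvalues $\lambda_1, \ldots, \lambda_n$ (not necessarily distinct). I would build the matrix $P$ whose $i$-th column is $e_i$ and set $D = \diag(\lambda_1, \ldots, \lambda_n)$. Because the columns of $P$ form a basis, $P$ is nonsingular. A direct column-wise computation shows
\begin{equation*}
    AP = (Ae_1, \ldots, Ae_n) = (\lambda_1 e_1, \ldots, \lambda_n e_n) = PD,
\end{equation*}
so multiplying on the right by $P^{-1}$ gives $A = PDP^{-1}$, i.e., $A$ is diagonalizable.

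There is no serious obstacle: the content is essentially a repackaging of the definition of diagonalizability together with the observation (already made and exercised on in the text) that the eigendecomposition $A = PDP^{-1}$ is equivalent to $AP = PD$. The only mild subtlety is remembering that ``eigenvectors form a basis of $\CC^n$'' means one can find $n$ linearly independent eigenvectors, so that we have exactly the right number of columns to build a square invertible $P$. I would just state this carefully at the start of the converse direction to avoid any ambiguity about multiplicity.
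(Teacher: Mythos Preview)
Your proof is correct and follows exactly the approach the paper takes. The paper itself does not give a formal proof but only the intuitive remark that $A = PDP^{-1}$ requires $P$ invertible, hence its $n$ columns are linearly independent and form a basis of $\CC^n$; you have simply written out both directions of this observation carefully.
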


This result is intuitive: for $A = P D P^{-1}$ to hold we need $P$ to be
invertible, which requires that its $n$ columns are linearly independent.
Since $\CC^n$ is $n$-dimensional, this means that the columns form a basis of
$\CC^n$.  

\begin{corollary}\label{c:dimdi}
    If $A \in \matset{n}{n}$ has $n$ distinct eigenvalues, then $A$ is
    diagonalizable.
\end{corollary}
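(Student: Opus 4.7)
The plan is to reduce the corollary to Theorem~\ref{t:diagiff}: it suffices to show that if $A$ has $n$ pairwise distinct eigenvalues $\lambda_1, \ldots, \lambda_n$ with corresponding eigenvectors $e_1, \ldots, e_n$, then $\{e_1, \ldots, e_n\}$ is a basis of $\CC^n$. Since $\dimension \CC^n = n$ and we already have $n$ vectors, the entire task collapses to proving linear independence of the $e_i$. Once that is done, Theorem~\ref{t:diagiff} delivers diagonalizability.

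First I would recall that Lemma~\ref{l:chlam} together with the existence of $n$ distinct roots of $p(\lambda) = \det(A - \lambda I)$ guarantees that $\sigma(A) = \{\lambda_1, \ldots, \lambda_n\}$ with $|\sigma(A)| = n$, and that each $\lambda_i$ admits at least one nonzero eigenvector $e_i$. So we can fix such a family $\{e_i\}_{i=1}^n$ at the outset.

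The heart of the argument is essentially the same induction/contradiction used in Exercise~\ref{ex:dirip}, now carried out over $\CC^n$ rather than $\RR^n$. I would let $k$ be the largest index such that $\{e_1, \ldots, e_k\}$ is linearly independent, suppose for contradiction that $k < n$, and write
\begin{equation*}
    e_{k+1} = \sum_{i=1}^{k} \alpha_i e_i
\end{equation*}
for some scalars $\alpha_i \in \CC$, not all zero (since $e_{k+1} \neq 0$). Applying $A$ gives $\sum_i \alpha_i \lambda_i e_i = \lambda_{k+1} e_{k+1} = \sum_i \alpha_i \lambda_{k+1} e_i$, so
\begin{equation*}
    \sum_{i=1}^{k} \alpha_i (\lambda_i - \lambda_{k+1}) e_i = 0.
\end{equation*}
Linear independence of $\{e_1, \ldots, e_k\}$ forces $\alpha_i (\lambda_i - \lambda_{k+1}) = 0$ for every $i \leq k$. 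Since at least one $\alpha_i$ is nonzero and $\lambda_{k+1} \neq \lambda_i$ for all $i \leq k$ by hypothesis, this is a contradiction.

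There is no real obstacle here; the only subtlety is that the scalars and eigenvectors may be genuinely complex, so the argument must be run in $\CC^n$ and cite Theorem~\ref{t:diagiff} in its full complex form rather than a real version. Everything else is the same short maximality-of-$k$ argument already rehearsed in the exercise.
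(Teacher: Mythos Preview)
Your proof is correct and follows essentially the same approach as the paper, which simply cites Exercise~\ref{ex:dirip}; the maximality-of-$k$ contradiction argument you give is exactly the intended one, with the minor (and appropriate) adjustment of working over $\CC^n$ rather than $\RR^n$.
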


\begin{proof}
    See Exercise~\ref{ex:dirip}.
\end{proof}

\begin{Exercise}
    Give a counterexample to the statement that the condition in
    Corollary~\ref{c:dimdi} is necessary as well as sufficient.
\end{Exercise}

\begin{Answer}
    If $I$ is the identity and $Ie = \lambda e$ for some nonzero $e$, then 
    $e = \lambda e$ and hence $\lambda=1$.  Hence $\sigma(A) = \{1\}$.
    At the same time, $I$ is diagonalizable, since $I = I D I^{-1}$ when
    $D=I$.
\end{Answer}

There is another way that we can establish diagonalizability, based on
symmetry.  Symmetry also lends the diagonalization certain properties that
turn out to be very useful in applications.  We are referring to the following
celebrated theorem.

\begin{theorem}[Spectral theorem]\label{t:spec}
    If $A \in \matset{n}{n}$ is symmetric, then there exists a real orthonormal
    $n \times n$ matrix $U$ such that 
    \begin{equation*}
        A = U D U^\top
        \quad \text{with} \quad
        \lambda_i \in \RR_+ \text{ for all } i,
        \text{ where } 
        D = \diag(\lambda_1, \ldots, \lambda_n) .
    \end{equation*}
\end{theorem}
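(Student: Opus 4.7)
The plan is to proceed by induction on $n$, after first establishing that the spectrum of any real symmetric matrix lies in $\RR$. For the reality claim, I take an arbitrary $\lambda \in \sigma(A)$ with eigenvector $v \in \CC^n \setminus \{0\}$, and consider the scalar $v^* A v$, where $v^*$ denotes conjugate transpose. Because $A$ is real and symmetric, $(v^* A v)^* = v^* A^\top v = v^* A v$, so $v^* A v \in \RR$. On the other hand, $Av = \lambda v$ gives $v^* A v = \lambda \|v\|^2$, and $\|v\|^2 > 0$ then forces $\lambda \in \RR$. With the eigenvalues real, Lemma~\ref{l:chlam} gives that $A - \lambda I$ is a real singular matrix, so for each $\lambda \in \sigma(A)$ I can pick a unit eigenvector in $\RR^n$.

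Now I would induct on $n$. The base case $n=1$ is immediate with $U = (1)$. For the inductive step, pick any $\lambda_1 \in \sigma(A)$ together with a real unit eigenvector $u_1$, and let $V := \{v \in \RR^n : \langle u_1, v \rangle = 0\}$. The key observation is that $V$ is $A$-invariant: for $v \in V$, symmetry gives
\begin{equation*}
\langle u_1, Av \rangle = \langle A u_1, v \rangle = \lambda_1 \langle u_1, v \rangle = 0.
\end{equation*}
Choose any orthonormal basis $\{w_2, \ldots, w_n\}$ of $V$ and collect these into a matrix $W \in \matset{n}{n-1}$, so that $W^\top W = I_{n-1}$. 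The restriction of $A$ to $V$ is represented in this basis by the $(n-1)\times(n-1)$ matrix $B := W^\top A W$, and $B^\top = W^\top A^\top W = W^\top A W = B$, so $B$ is again symmetric.

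By the inductive hypothesis there exist an orthonormal $Q \in \matset{n-1}{n-1}$ and a diagonal $D_0 = \diag(\lambda_2, \ldots, \lambda_n)$ with $B = Q D_0 Q^\top$. I then set $U := [\, u_1 \mid W Q \,] \in \matset{n}{n}$. The first column has unit norm and is orthogonal to every column of $WQ$ (those columns lie in $V$), while the remaining columns are orthonormal because the columns of $W$ are orthonormal and $Q$ is orthogonal; hence $U^\top U = I_n$. A direct $2 \times 2$ block computation, using $Au_1 = \lambda_1 u_1$ and $A W = W B$, then yields $U^\top A U = \diag(\lambda_1, \lambda_2, \ldots, \lambda_n)$, equivalently $A = U D U^\top$.

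The main obstacle I expect is the bookkeeping in the inductive step, namely verifying that the induced operator on the orthogonal complement of $u_1$, once written in a chosen orthonormal basis, is again a symmetric matrix of one smaller dimension so that the hypothesis applies; once this is in hand, assembling $U$ and checking orthonormality of its columns is mechanical. One caveat I would flag: the argument above yields $\lambda_i \in \RR$ rather than $\lambda_i \in \RR_+$, and nonnegativity holds only under the additional hypothesis that $A$ is positive semidefinite, so I would read the ``$\RR_+$'' in the statement as a typographical slip for ``$\RR$'' absent a positivity assumption elsewhere.
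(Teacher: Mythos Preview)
The paper does not actually prove this theorem; it is stated without proof and immediately followed by a remark that symmetry implies diagonalizability via $U^\top = U^{-1}$. So there is no ``paper's own proof'' to compare against.

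Your argument is the standard inductive proof and is correct. The reality-of-eigenvalues step via $v^* A v$ is clean, the invariance of $u_1^\perp$ under $A$ is verified correctly using symmetry, and the block assembly works. The one place worth a sentence of justification is the claim $AW = WB$: since each $Aw_j$ lies in $V$ and the columns of $W$ form an orthonormal basis of $V$, you have $Aw_j = W(W^\top A w_j)$, hence $AW = W(W^\top A W) = WB$. With that line made explicit the inductive step is airtight.

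Your flag on the statement is also right: symmetry alone gives $\lambda_i \in \RR$, not $\lambda_i \in \RR_+$. The ``$\RR_+$'' is a slip; nonnegativity of the eigenvalues would require positive semidefiniteness, which is not assumed here.
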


Since, for the orthonormal matrix $U$, we have $U^\top = U^{-1}$ (see
Lemma~\ref{l:porthms}), one consequence of the spectral theorem is that $A$ is
diagonalizable.  For obvious reasons, we often say that $A$ is
\navy{orthogonally diagonalizable}\index{Orthogonally diagonalizable}.

\subsubsection{Worker Dynamics}\label{sss:wdi}

Let's study a small application of the eigendecomposition.  Suppose that, each month, workers are hired at
rate $\alpha$ and fired at rate $\beta$.  Their two states are unemployment
(state 1) and employment (state 2).  Figure~\ref{f:worker_switching} shows the
transition probabilities for a given worker in each of these two states.

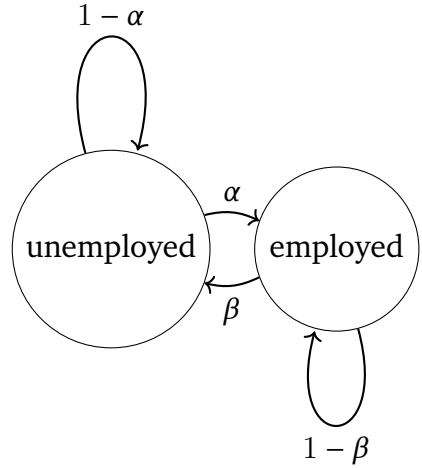
\begin{figure}
   \centering
   \begin{tikzpicture}
  \node[circle, draw] (1) at (0, 0) {unemployed};
  \node[circle, draw] (2) at (3, 0) {employed};
  \draw[->, thick, black]
  (1) edge [bend left=20, above] node {$\alpha$} (2)
  (2) edge [bend left=20, below] node {$\beta$} (1)
  (2) edge [loop below] node {$1-\beta$} (2)
  (1) edge [loop above] node {$1-\alpha$} (1);
\end{tikzpicture}
   \caption{\label{f:worker_switching} Worker transition dynamics}
\end{figure}

We translate these dynamics into the matrix 
\begin{equation*}
    P_w = 
    \begin{pmatrix}
        1-\alpha & \alpha \\
        \beta & 1- \beta
    \end{pmatrix}
    \quad \text{where} \quad
    0 \leq \alpha, \beta \leq 1.
\end{equation*}

\begin{itemize}
    \item Row 1 of $P_w$ gives probabilities for unemployment and employment
        respectively when currently unemployed.
    \item Row 2 of $P_w$ gives probabilities for unemployment and employment
        respectively when currently employed.
\end{itemize}
\begin{Exercise}
    Using Lemma~\ref{l:chlam}, show that the two eigenvalues of $P_w$ are 
        $\lambda_1 := 1$
        and $\lambda_2 := 1-\alpha-\beta$.
    Show that, when $\min\{\alpha, \beta\} > 0$,
    \begin{equation*}
        e_1 := 
        \begin{pmatrix}
            1 \\
            1
        \end{pmatrix}
        \quad \text{and} \quad 
        e_2 :=
        \begin{pmatrix}
            -\alpha \\
            \beta
        \end{pmatrix}
    \end{equation*}
    are two corresponding eigenvectors, and that $\lambda_1$ and
    $\lambda_2$ are simple.
\end{Exercise}

\begin{Exercise}
    Show that, when $\alpha=\beta=0$, the eigenvalue $\lambda_1$ 
    is not simple.
\end{Exercise}

Below we demonstrate that the $m$-th power of $P_w$ provides $m$-step
transition probabilities for workers.  Anticipating this discussion, we now
seek an expression for $P^m_w$ at arbitrary $m \in \NN$.  This problem is
simplified if we use diagonalization.

\begin{Exercise}
    Assume that $\min\{\alpha, \beta\} > 0$.
    (When $\alpha=\beta=0$, computing the powers of $P_w$ is trivial.)
    Show that
    \begin{equation*}
        P_w = E D E^{-1}
        \quad \text{when} \quad 
        D = 
        \begin{pmatrix}
            1 & 0 \\
            0 & \lambda_2
        \end{pmatrix}
        \quad \text{and} \quad 
        E = 
        \begin{pmatrix}
            1 & -\alpha  \\
            1 & \beta
        \end{pmatrix}.
    \end{equation*}
    Using \eqref{eq:diagpow}, prove that
    \begin{equation}\label{eq:pwpk}
        P_w^m =
        \frac{1}{\alpha + \beta}
        \begin{pmatrix}
            \beta + \alpha(1-\alpha-\beta)^m & 
                \alpha(1 - (1-\alpha-\beta)^m) \\
            \beta(1-(1-\alpha-\beta)^m)  & 
                \alpha + \beta(1-\alpha-\beta)^m
        \end{pmatrix}
    \end{equation}
    for every $m \in \NN$.
\end{Exercise}

\subsubsection{Left Eigenvectors}\label{sss:lefteps}

A vector $\epsilon \in \CC^n$ is called a \navy{left eigenvector} of $A \in
\matset{n}{n}$ if $\epsilon$ is an eigenvector of $A^\top$.  In other words,
$\epsilon$ is nonzero and there exists a $\lambda \in \CC$ such that $A^\top
\epsilon = \lambda \epsilon$.  We can alternatively write the expression as
$\epsilon^\top A = \lambda \epsilon^\top$, which is where the name ``left''
eigenvector originates.

Left eigenvectors will play important roles in what follows, including that of
stochastic steady states for dynamic models under a Markov assumption.
To help distinguish between ordinary and left eigenvectors, we will at times
call (ordinary) eigenvectors of $A$ \navy{right eigenvectors} of $A$.  

If $A$ is diagonalizable, then so is $A^\top$.  To show this, let $A = P D
P^{-1}$ with $D = \diag(\lambda_i)$.  We know from earlier discussion that the
columns of $P$ are the (right) eigenvectors of $A$.

\begin{Exercise}\label{ex:lei1}
    Let $Q = (P^\top)^{-1}$.  Prove that $Q^\top P = I$ and $A^\top = Q D Q^{-1}$.
\end{Exercise}

The results of the last exercise show that, when $A = P D
P^{-1}$, the columns of $(P^\top)^{-1}$ coincide with the left eigenvectors of
$A$. (Why?)  Equivalently, $A = P D Q^\top$ where $Q = (\epsilon_1,
\ldots, \epsilon_n)$ is the $n\times n$ matrix with $i$-th column equal to the
$i$-th left eigenvector of $A$.

\begin{Exercise}\label{ex:lei2}
    Let $(e_i)_{i=1}^n$ be right eigenvectors of $A$ and let
    $(\epsilon_i)_{i=1}^n$ be the left eigenvectors.  Prove that
    \begin{equation}\label{eq:lei2}
        \inner{\epsilon_i, e_j} = \1\{i = j\}
        \qquad (i, j \in \natset{n}).
    \end{equation}
    (Hint: Use the results of Exercise~\ref{ex:lei1}.)
\end{Exercise}

\begin{Exercise}\label{ex:aspecrep}
    Continuing with the notation defined above and continuing to assume that
    $A$ is diagonalizable, prove that
    \begin{equation}\label{eq:aspecrep}
        A = \sum_{i=1}^n \lambda_i e_i \epsilon_i^\top
        \quad \text{and} \quad
        A^m = \sum_{i=1}^n \lambda_i^m e_i \epsilon_i^\top
    \end{equation}
    for all $m \in \NN$.  The expression for $A$ on the left hand side of
    \eqref{eq:aspecrep} is called the \navy{spectral representation} of $A$.
\end{Exercise}

\begin{Exercise}
    Prove that each $n \times n$ matrix $\lambda_i e_i \epsilon_i^\top$ in the sum
        $\sum_{i=1}^n \lambda_i e_i \epsilon_i^\top$ is rank 1.
\end{Exercise}

\subsubsection{Similar Matrices}\label{sss:sim}

Diagonalizability is a special case of a more general
concept: $A \in \matset{n}{n}$ is called \navy{similar}\index{Similar matrix} to 
$B \in \matset{n}{n}$ if there exists an invertible matrix $P$ such
that $A =  P B P^{-1}$.  In this terminology, $A$ is diagonalizable if and
only if it is similar to a diagonal matrix.

\begin{Exercise}
    Prove that similarity between matrices is an equivalence relation (see
    \S\ref{sss:eqclass}) on $\matset{n}{n}$.
\end{Exercise}

\begin{Exercise}
    The fact that similarity is an equivalence relation on $\matset{n}{n}$
    implies that this relation partitions $\matset{n}{n}$ into disjoint
    equivalence classes, elements of which are all similar.
    Prove that all matrices in each equivalence class share the same eigenvalues.
\end{Exercise}

\begin{Exercise}
   Prove: If $A$ is similar to $B$, then $A^m$ is similar to $B^m$.  In
   particular
    \begin{equation*}
        A =  P B P^{-1} \; \implies \; A^m =  P B^m P^{-1}
        \text{ for all } m \in \NN.
    \end{equation*}
\end{Exercise}

The last result is a generalization of~\eqref{eq:diagpow}.  When $A$ is large,
calculating the powers $A^k$ can be computationally expensive or infeasible.
If, however, $A$ is similar to some simpler matrix $B$, then we can take
powers of $B$ instead, and then transition back to $A$ using the similarity
relation.\footnote{The only concern with this shift process is that $P$ can be
ill-conditioned, implying that the inverse is numerically unstable.} 


\subsection{The Neumann Series Lemma}\label{ss:nsl}

Most high school students learn that, if $a$ is a number with $|a| < 1$, then
\begin{equation}\label{eq:sgeom}
    \sum_{i \geq 0} a^i = \frac{1}{1-a}. 
\end{equation}
This geometric series representation extends to matrices: If $A$ is a
matrix satisfying a certain condition, then \eqref{eq:sgeom} holds, in the
sense that $\sum_{i \geq 0} A^i = (I-A)^{-1}$.  (Here $I$ is the identity
matrix.)  But what is the ``certain condition'' that
we need to place on $A$, which generalizes the concept $|a|<1$ to matrices?
The answer to this question involves the ``spectral radius'' of a matrix,
which we now describe.

\subsubsection{Spectral Radii}

Fix $A \in \matset{n}{n}$.  With $|z|$ indicating the modulus of a complex
number $z$, the \navy{spectral radius}\index{Spectral radius} of $A$ is
defined as
\begin{equation}
    \label{eq:sr_fc}
    r(A) 
    := \max \setntn{|\lambda|}{ \lambda \text{ is an eigenvalue of } A}.
\end{equation}
Within economics, the spectral radius has important applications in
dynamics, asset pricing, and numerous other fields.  As we will see, the same
concept also plays a key role in network analysis.

\begin{remark}
    For any square matrix $A$, we have $r(A^{\top}) = r(A)$.
    This follows from the fact that $A$ and $A^\top$ always have the same
    eigenvalues.
\end{remark}

\begin{example}\label{eg:normdiag2}
    As usual, diagonal matrices supply the simplest example: If $D =
    \diag(d_i)$, then the spectrum $\sigma(D)$ is just $\{d_i\}_{i\in
    \natset{n}}$ and hence $r(D) = \max_i |d_i|$.
\end{example}

After executing
\begin{minted}{python}
import numpy as np
\end{minted}
The following Python code computes the spectral radius of a square matrix $M$:
\begin{@empty}\label{m:sr}  
\end{@empty}                
\begin{minted}{python}
def spec_rad(M):
    return np.max(np.abs(np.linalg.eigvals(M)))    
\end{minted}

\subsubsection{Geometric Series}

We can now return to the matrix extension of \eqref{eq:sgeom} and state a
formal result.

\index{Neumann series lemma}
\index{NSL}
\begin{theorem}[Neumann series lemma (NSL)]\label{t:nsl}
    If $A$ is in $\matset{n}{n}$ and $r(A) < 1$, then $I - A$ is nonsingular
    and
    \begin{equation}\label{eq:nsl}
         (I - A)^{-1} = \sum_{m=0}^{\infty} A^m.
    \end{equation}
\end{theorem}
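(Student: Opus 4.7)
The plan is to split the argument into two parts: first show that $I - A$ is nonsingular, then verify that $\sum_{m \geq 0} A^m$ converges and equals $(I-A)^{-1}$ via a telescoping identity.

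For nonsingularity, I would appeal to Lemma~\ref{l:chlam}: $I - A$ fails to be invertible iff $\det(A - I) = 0$ iff $1 \in \sigma(A)$. But $r(A) < 1$ means every eigenvalue has modulus strictly less than $1$, so $1 \notin \sigma(A)$ and $I - A$ is invertible. Next, let $S_N := \sum_{m=0}^N A^m$. A direct algebraic manipulation (multiplying out and cancelling) gives the telescoping identity
\begin{equation*}
    (I - A)\, S_N \;=\; I - A^{N+1}.
\end{equation*}
Once we know that $A^{N+1} \to 0$ as $N \to \infty$ in some matrix norm, the right-hand side tends to $I$, so $S_N \to (I-A)^{-1}$ after multiplying on the left by the inverse just established. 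Absolute convergence of the series then follows from the same decay estimate.

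The main obstacle is showing $A^m \to 0$ when $r(A) < 1$ without assuming $A$ is diagonalizable. The cleanest route is Gelfand's formula $r(A) = \lim_{m \to \infty} \|A^m\|^{1/m}$ for the operator norm (see \S\ref{sss:onorm}): given any $\rho$ with $r(A) < \rho < 1$, this formula supplies an $M$ such that $\|A^m\| \leq \rho^m$ for all $m \geq M$, whence $A^m \to 0$ geometrically and $\sum_m \|A^m\|$ is dominated by a convergent geometric series. If one wants to avoid invoking Gelfand, the diagonalizable case follows immediately from Exercise~\ref{ex:aspecrep}, since
\begin{equation*}
    A^m = \sum_{i=1}^n \lambda_i^m\, e_i \epsilon_i^\top
\end{equation*}
and $|\lambda_i|^m \leq r(A)^m \to 0$ for every $i$; the general case can then be recovered either by a continuity argument using density of diagonalizable matrices in $\matset{n}{n}$, or by replacing the eigendecomposition with Schur's triangularization and bounding the resulting sums of products of eigenvalues.

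With $A^m \to 0$ in hand, assembling the pieces is routine: pass to the limit in $(I - A)\, S_N = I - A^{N+1}$ to obtain $(I-A)\lim_N S_N = I$, and conclude $\sum_{m=0}^\infty A^m = (I-A)^{-1}$. I anticipate the bookkeeping of the norm estimate to be the only technical step worth writing out carefully; everything else is algebra plus Lemma~\ref{l:chlam}.
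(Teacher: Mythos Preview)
Your approach matches the paper's own sketch: the text does not give a full proof but outlines exactly the telescoping idea $(I-A)S = I$ and identifies convergence of the power series as the only real technical issue, referring to \cite{cheney2013analysis} for details. Your version is a more explicit realization of the same argument; the only caveat is that you invoke Gelfand's formula (Theorem~\ref{t:gf}), which appears later in the text, so in the book's ordering this is a forward reference---logically harmless, since Gelfand is proved independently, but worth flagging.
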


The sum $\sum_{m=0}^{\infty} A^m$ is called the \navy{power series}
representation of $(I - A)^{-1}$.  Convergence of the matrix series is
understood as element-by-element convergence.  

A full proof of Theorem~\ref{t:nsl} can be found in \cite{cheney2013analysis}
and many other sources.  The core idea is simple: if $S = I + A + A^2 +
\cdots$ then $I + A S = S$.  Reorganizing gives $(I - A)S = I$, which is
equivalent to~\eqref{eq:nsl}.  The main technical issue is showing that the
power series converges.  The full proof shows that this always holds when
$r(A)<1$.

\begin{Exercise}
    Fix $A \in \matset{n}{n}$. Prove the following:  if $r(A) < 1$, then, for
    each $b \in \RR^n$ the linear system $x = A x + b$ has the unique solution
    $x^* \in \RR^n$ given by 
    \begin{equation}\label{eq:xstarnsl}
        x^* = \sum_{m=0}^{\infty} A^m  b .
    \end{equation}
\end{Exercise}

\begin{Answer}
    Fix $A \in \matset{n}{n}$ and $b \in \RR^n$ where $r(A) < 1$. We can write
    $x = A x + b$ as $(I-A)x = b$.  Since $r(A) < 1$, $I-A$ is invertible and
    hence the linear system $(I-A)x = b$ has unique solution $x^* = (I-A)^{-1}
    b$.  The expression $x^* = \sum_{m=0}^{\infty} A^m  b$ follows from the
    Neumann series lemma.
\end{Answer}

\subsection{The Perron--Frobenius Theorem}\label{ss:pft}

In this section we state and discuss a suprisingly far reaching theorem due to
Oskar Perron and Ferdinand Frobenius, which has applications in network
theory, machine learning, asset pricing, Markov dynamics, nonlinear dynamics,
input-output analysis and many other fields.  In essence, the theorem provides additional
information about eigenvalues and eigenvectors when the matrix in question is
positive in some sense.

\subsubsection{Order in Matrix Space}\label{ss:omat}

We require some definitions.  In what follows, for $A \in \matset{n}{k}$, we write
\begin{itemize}
    \item \navy{$A \geq 0$} if all elements of $A$ are nonnegative and
    \item \navy{$A \gg 0$} if all elements of $A$ are strictly positive. 
\end{itemize}

It's easy to imagine how nonnegativity and positivity are important notions
for matrices, just as they are for numbers.  However, strict positivity of
every element of a matrix is hard to satisfy, especially for a large
matrix.  As a result, mathematicians routinely use two notions of ``almost
everywhere strictly positive,'' which sometimes provide sufficient positivity
for the theorems that we need.

Regarding these two notions, for $A \in \matset{n}{n}$, we say that $A \geq 0$ is 
\begin{itemize}
    \item \navy{irreducible}\index{Irreducible} if $\sum_{m=0}^\infty A^m \gg 0$ and
    \item \navy{primitive}\index{Primitive matrix} if there exists an $m \in
        \NN$ such that $A^m \gg 0$.
\end{itemize}
Evidently, for $A \in \matset{n}{n}$ we have 
\begin{equation*}
    A \gg 0
    \; \implies \;
    A \text{ primitive }
    \; \implies \;
    A \text{ irreducible }
    \; \implies \;
    A \geq 0.
\end{equation*}
A nonnegative matrix is called \navy{reducible} if it fails to be irreducible.

\begin{Exercise}\label{ex:pwprop}
    By examining the expression for $P_w^m$ in~\eqref{eq:pwpk}, show that $P_w$ is
    \begin{enumerate}
        \item irreducible if and only if $0 < \alpha, \beta \leq 1$; and
        \item primitive if and only if $0 < \alpha, \beta \leq 1$ and
            $\min\{\alpha, \beta\} < 1$.
    \end{enumerate}
\end{Exercise}

\begin{Answer}
    ((i), ($\Rightarrow$)) Suppose that $P_w$ is irreducible and yet $\alpha=0$.
    Then, by the expression for $P^m_w$ in \eqref{eq:pwpk}, we have $P_w^m(1,2)=0$
    for all $m$.  This contradicts irreducibility, so $\alpha>0$ must hold.  A
    similar argument shows that $\beta > 0$.

    ((i), ($\Leftarrow$)) If $\alpha, \beta > 0$, then the diagonal elements are
    strictly positive whenever $m$ is even.  A small amount of algebra shows that
    the off-diagonal elements are strictly positive whenever $m$ is odd.  Hence
    $P_w + P_w^2 \gg 0$ and $P_w$ is irreducible.

    ((ii), ($\Rightarrow$)) Suppose that $P_w$ is primitive.  Then $P_w$ is
    irreducible, so it remains only to show that $\min\{\alpha, \beta\} < 1$.
    Suppose to the contrary that $\alpha=\beta=1$. Then 
    $P_w^m$ has zero diagonal elements when $m$ is even and zero off-diagonal
    elements when $m$ is odd.  This contradicts the primitive property, so 
    $\min\{\alpha, \beta\} < 1$ must hold.

    ((ii), ($\Leftarrow$)) Suppose that $0 < \alpha, \beta \leq 1$ and $\alpha<1$.
    Some algebra shows that $P^2_2 \gg 0$.  The same is true when $0 < \alpha,
    \beta \leq 1$ and $\beta<1$.  Hence $P_w$ is primitive.
\end{Answer}


%
In addition to the above notation, for $A, B \in \matset{n}{k}$, we also write
\begin{itemize}
    \item $A \geq B$ if $A - B \geq 0$ and $A \gg B$ if $A - B \gg 0$,
    \item $A \leq 0$ if $-A \geq 0$, etc.
\end{itemize}

\begin{Exercise}\label{ex:ppo}
    Show that $\leq$ is a partial order (see \S\ref{ss:posets}) on $\matset{n}{k}$.
\end{Exercise}

The partial order $\leq$ discussed in Exercise~\ref{ex:ppo} is usually called the
\navy{pointwise partial order}\index{Pointwise partial order} on 
$\matset{n}{k}$.  Analogous notation and terminology are used for vectors.  

The following exercise shows that nonnegative matrices are order-preserving
maps (see \S\ref{sss:mono}) on vector space under the pointwise partial
order---a fact we shall exploit many times.

\begin{Exercise}\label{ex:nnmatop}
    Show that the map $x \mapsto Ax$ is order-preserving (see
    \S\ref{sss:mono}) whenever $A \geq 0$ (i.e., $x \leq y$ implies $Ax \leq
    Ay$ for any conformable vectors $x, y$).
\end{Exercise}

\begin{Answer}
    Fix $A \in \matset{n}{k}$ with $A \geq 0$, along with $x, y \in \RR^k$.  From $x
    \leq y$ we have $y - x \geq 0$, so $A (y-x) \geq 0$.
    But then $Ay - Ax \geq 0$, or $Ax \leq Ay$.
\end{Answer}

\subsubsection{Statement of the Theorem}

Let $A$ be in $\matset{n}{n}$.  In general, $r(A)$ is not an eigenvalue of $A$. 
For example, 
\begin{equation*}
    A = \diag(-1, 0) 
    \; \implies \; \sigma(A) = \{-1, 0\} 
    \; \text{ while } r(A) = 1.     
\end{equation*}
But $r(A)$ is always an eigenvalue when $A \geq 0$.   This is just one
implication of the following famous theorem.

\begin{theorem}[Perron--Frobenius]\label{t:pf}
    If $A \geq 0$, then $r(A)$ is an eigenvalue of $A$ with
    nonnegative real right and left eigenvectors:
    \begin{equation}\label{eq:pfrl}
        \exists \,
        \text{ nonzero }
        e, \epsilon \in \RR^n_+
        \text{ such that }
        A e = r(A) e 
        \text{ and } 
        \epsilon^\top A = r(A) \epsilon^\top.
    \end{equation}
    If $A$ is irreducible, then, in addition,
    \begin{enumerate}
        \item $r(A)$ is strictly positive and a simple eigenvalue,
        \item the eigenvectors $e$ and $\epsilon$ are everywhere positive, and
        \item eigenvectors of $A$ associated with other eigenvalues fail to be nonnegative.
    \end{enumerate}
    If $A$ is primitive, then, in addition,
    \begin{enumerate}
        \item the inequality $|\lambda| \leq r(A)$ is strict for all eigenvalues $\lambda$ of $A$ distinct from $r(A)$, and
        \item with $e$ and $\epsilon$ normalized so that $\inner{\epsilon,
            e}=1$, we have
            \begin{equation}\label{eq:patocon}
                r(A)^{-m} A^m
                \to
                e \, \epsilon^\top 
                \qquad (m \to \infty).
            \end{equation}
    \end{enumerate}
\end{theorem}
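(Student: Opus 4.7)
The plan is to prove the three tiers of claims in order of strengthening hypotheses, using Brouwer's fixed-point theorem for existence of the Perron root, a positivity argument via $(I+A)^{n-1} \gg 0$ for the irreducible conclusions, and the spectral representation from Exercise~\ref{ex:aspecrep} together with a Jordan-form bound for the primitive convergence.

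For the nonnegative case, I would first reduce to matrices with no zero column by replacing $A$ with $A_\delta := A + \delta \mathbf{1}\mathbf{1}^\top$ for $\delta > 0$ and recovering the general statement via compactness as $\delta \downarrow 0$. For such $A$, the map $T(x) := Ax / \|Ax\|_1$ sends the simplex $\Delta := \{x \in \RR^n_+ : \|x\|_1 = 1\}$ continuously into itself, so Brouwer produces $e \in \Delta$ with $Ae = \lambda e$ for $\lambda := \|Ae\|_1 \geq 0$. To identify $\lambda$ with $r(A)$, take any eigenpair $(\mu, v)$ of $A$; entrywise absolute values combined with $A \geq 0$ give $|\mu||v| \leq A|v|$, and pairing against a nonnegative left eigenvector of $A^\top$ (constructed the same way) forces $|\mu| \leq \lambda$. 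Applying the whole construction to $A^\top$ produces the left eigenvector $\epsilon$.

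For the irreducible case, the hypothesis $\sum_{m \geq 0} A^m \gg 0$ is equivalent, by a pigeonhole path-length argument, to $(I + A)^{n-1} \gg 0$. Any nonnegative Perron eigenvector $e$ then satisfies $(I + A)^{n-1} e = (1 + r(A))^{n-1} e \gg 0$, so $e \gg 0$; the same argument for $A^\top$ yields $\epsilon \gg 0$. Strict positivity $r(A) > 0$ follows because $r(A) = 0$ would make $A$ nilpotent, incompatible with $\sum_m A^m \gg 0$. Simplicity and claim (iii) both flow from a single duality argument: for any eigenpair $(\mu, v)$, pairing with $\epsilon \gg 0$ gives $\mu \inner{\epsilon, v} = r(A) \inner{\epsilon, v}$, so any nonzero $v \geq 0$ forces $\mu = r(A)$, and any two linearly independent Perron eigenvectors could be combined into one with a zero coordinate, contradicting the strict positivity just established.

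For the primitive case, fix $m$ with $A^m \gg 0$. Since eigenvalues of $A^m$ are $m$-th powers of those of $A$, and $A^m \gg 0$ is irreducible with $r(A^m) = r(A)^m$ as a simple maximal-modulus eigenvalue, any $\lambda \in \sigma(A)$ with $|\lambda| = r(A)$ and $\lambda \neq r(A)$ would yield $\lambda^m \neq r(A)^m$ as another maximal-modulus eigenvalue of $A^m$, contradicting the irreducible conclusion applied to $A^m$. Hence $|\lambda| < r(A)$ strictly. For the convergence in \eqref{eq:patocon}, the clean route assumes diagonalizability and invokes Exercise~\ref{ex:aspecrep}: with $\lambda_1 = r(A)$ dominating, $r(A)^{-m} A^m = e \epsilon^\top + \sum_{i \geq 2} (\lambda_i / r(A))^m e_i \epsilon_i^\top \to e \epsilon^\top$, with the normalization $\inner{\epsilon, e} = 1$ fixing the coefficient. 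The main obstacle is that $A$ need not be diagonalizable in the primitive case, so I would pass through Jordan canonical form instead: a block of size $s$ attached to a non-Perron eigenvalue $\lambda$ contributes entries of order $m^{s-1}|\lambda|^m / r(A)^m$, which vanish because $|\lambda/r(A)| < 1$, while simplicity of $r(A)$ ensures its block is $1 \times 1$ and contributes exactly the rank-one limit $e \epsilon^\top$.
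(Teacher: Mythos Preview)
The paper does not actually prove Theorem~\ref{t:pf}: it explicitly omits the proof, cites \cite{meyer2000matrix}, \cite{seneta2006non} and \cite{meyer2012banach}, and then only verifies the conclusions for the single $2\times 2$ worker matrix $P_w$. So there is no ``paper's approach'' to compare against; your sketch is a genuine proof attempt where the paper offers none, and the Brouwer/$(I+A)^{n-1}$/Jordan strategy you outline is one of the standard routes.

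That said, your primitive-case argument for the strict spectral gap has a real hole. You claim that if $|\lambda|=r(A)$ with $\lambda\neq r(A)$, then $\lambda^m\neq r(A)^m$, and that this contradicts ``the irreducible conclusion applied to $A^m$.'' Both steps fail. First, $\lambda\neq r(A)$ does \emph{not} imply $\lambda^m\neq r(A)^m$: if $\lambda=r(A)\,\omega$ with $\omega$ an $m$-th root of unity, then $\lambda^m=r(A)^m$ exactly. Second, even when $\lambda^m\neq r(A)^m$, the irreducible conclusions you proved only say $r(A^m)$ is simple, not that it is the unique eigenvalue of maximum modulus; simplicity of $r(A^m)$ is not contradicted by another eigenvalue on the spectral circle. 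The standard fix is to prove directly that a \emph{strictly positive} matrix $B=A^m\gg 0$ has $r(B)$ as its sole eigenvalue of maximum modulus (via the equality case of $|Bv|\leq B|v|$, which forces all coordinates of $v$ to share a common phase), and then combine $\lambda^m=r(A)^m$ with $\lambda^{m+1}=r(A)^{m+1}$ (using that $A^{m+1}\gg 0$ as well, since irreducible $A$ has no zero rows) to deduce $\lambda=r(A)$.

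A smaller point: your simplicity argument in the irreducible tier establishes only geometric multiplicity one. The paper's definition of ``simple'' is algebraic multiplicity one, so you still need to rule out a generalized eigenvector $w$ with $(A-r(A)I)w=e$; pairing with $\epsilon\gg 0$ gives $0=\epsilon^\top(A-r(A)I)w=\epsilon^\top e>0$, which is the missing line.
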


The fact that $r(A)$ is simple under irreducibility means that 
its eigenvectors are unique up to scalar multiples.   We will exploit this
property in several important uniqueness proofs.

In the present context, $r(A)$ is called the \navy{dominant
eigenvalue} or \navy{Perron root}\index{Perron root} of $A$, while $\epsilon$
and $e$ are called the \navy{dominant left and right eigenvectors}\index{Dominant eigenvector}
of $A$, respectively.

Why do we use the word ``dominant'' here? To help illustrate, let us suppose
that $A \in \matset{n}{n}$ is primitive and fix any $x \in \RR^n$.
Consider what happens to the point $x_m := A^m x$ as $m$ grows.
By~\eqref{eq:patocon} we have $A^m x \approx r(A)^m c e$ for large $m$, where
$c = \epsilon^\top x$.   In other words, asymptotically, the
sequence $(A^m x)_{m \in \NN}$ is just scalar multiples of $e$, growing at
rate $\ln r(A)$.  Thus, $r(A)$ dominates other eigenvalues in controlling the
growth rate of $A^m x$, while $e$ dominates other eigenvectors in controlling
the direction of growth.

\begin{Exercise}
    The $n \times n$ matrix $P := e \, \epsilon^\top$ in \eqref{eq:patocon} is
    called the \navy{Perron projection}\index{Perron projection} of
    $A$.  Prove that $P^2 = P$ (a property that is often used to
    define projection matrices) and $\rank P = 1$.  Describe the one-dimensional
    space that $P$ projects all of $\RR^n$ into.
\end{Exercise}

\begin{example}
    Fix $A \geq 0$. If $r(A)=1$, then $I-A$ is not invertible.  To see this,
    observe that, by Theorem~\ref{t:pf}, since $r(A)$ is an eigenvalue of $A$,
    there exists a nonzero vector $e$ such that $(I-A)e=0$. The claim follows.
    (Why?)
\end{example}

\subsubsection{Worker Dynamics II}\label{sss:wdii}

We omit the full proof of Theorem~\ref{t:pf}, which is quite long and can be
found in \cite{meyer2000matrix}, \cite{seneta2006non} or
\cite{meyer2012banach}.\footnote{See also \cite{glynn2018probabilistic}, which
    provides a new proof of the main results, based on probabilistic
arguments, including extensions to infinite state spaces.}  Instead, to build
intuition, let us prove the theorem in a rather simple special case.  

The special case we will consider is the class of matrices
\begin{equation*}
    P_w = 
    \begin{pmatrix}
        1-\alpha & \alpha \\
        \beta & 1- \beta
    \end{pmatrix}
    \quad \text{with} \quad
    0 \leq \alpha, \beta \leq 1.
\end{equation*}
This example is drawn from the study of worker dynamics in \S\ref{sss:wdi}.

You might recall from $\S\ref{sss:wdi}$ that $\lambda_1 = 1$ and $\lambda_2 =
1-\alpha-\beta$.  Clearly $r(A)=1$, so $r(A)$ is an eigenvalue, as claimed by
the first part of the Perron--Frobenius theorem.

From now on we assume that $\min\{\alpha, \beta\} > 0$, which just means that
we are excluding the identity matrix in order to avoid some tedious qualifying
remarks. 

The two right eigenvectors ($e_1, e_2$) and two left eigenvectors
($\epsilon_1, \epsilon_2$) are, respectively,
\begin{equation*}
    e_1 := 
        \begin{pmatrix}
            1 \\
            1
        \end{pmatrix},
    \quad
    e_2 :=
        \begin{pmatrix}
            -\alpha \\
            \beta
        \end{pmatrix},
    \quad
    \epsilon_1 :=
        \frac{1}{\alpha + \beta}
        \begin{pmatrix}
            \alpha \\
            \beta 
        \end{pmatrix}
    \quad \text{and} \quad
    \epsilon_2 :=
        \begin{pmatrix}
            \alpha \\
            -\alpha 
        \end{pmatrix}.
\end{equation*}

\begin{Exercise}
    Verify these claims.  (The right eigenvectors were treated in~\S\ref{sss:wdi}.)
\end{Exercise}

\begin{Exercise}
    Recall from Exercise~\ref{ex:pwprop} that $P_w$ is irreducible if and only
    if both $\alpha$ and $\beta$ are strictly positive.  Show that all the claims about
    irreducible matrices in the Perron--Frobenius theorem are valid for $P_w$
    under this irreducibility condition.
\end{Exercise}

\begin{Exercise}
    Recall from Exercise~\ref{ex:pwprop} that $P_w$ is primitive if and only
    if $0 < \alpha, \beta \leq 1$ and $\min\{\alpha, \beta\} < 1$.  Verify the
    additional claim~\eqref{eq:patocon}
    for $P_w$ under these conditions.  In doing so, you can use the
    expression for $P_w^m$ in~\eqref{eq:pwpk}.
\end{Exercise}

\subsubsection{Bounding the Spectral Radius}\label{sss:someimp}

Using the Perron--Frobenius theorem, we can provide useful bounds on the
spectral radius of a nonnegative matrix.  In what follows, fix 
$A = (a_{ij}) \in \matset{n}{n}$ and set
\begin{itemize}
    \item $\rsum_i(A) := \sum_j a_{ij} = $ the $i$-th row sum of $A$ and
    \item $\csum_j(A) := \sum_i a_{ij} = $ the $j$-th column sum of $A$.
\end{itemize}

\vspace{0.2em}
\begin{lemma}\label{l:rscsbounds}
    If $A \geq 0$, then
    \begin{enumerate}
        \item $\min_i \rsum_i(A) \leq r(A) \leq \max_i \rsum_i(A)$ and 
        \item $ \min_j \csum_j(A) \leq r(A) \leq \max_j \csum_j(A)$.
    \end{enumerate}
\end{lemma}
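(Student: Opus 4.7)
The plan is to prove the row-sum bounds first and then deduce the column-sum bounds by applying the row-sum result to $A^\top$, using the remark noted earlier that $r(A^\top) = r(A)$. So the work reduces to establishing part (i).

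Set $m := \min_i \rsum_i(A)$ and $M := \max_i \rsum_i(A)$, and write $\mathbf{1}$ for the all-ones vector in $\RR^n$. Since $(A\mathbf{1})_i = \rsum_i(A)$, we immediately have the pointwise sandwich $m\,\mathbf{1} \leq A\mathbf{1} \leq M\,\mathbf{1}$. The first key step is to propagate this inequality through iteration of $A$: using that $x \mapsto Ax$ is order-preserving when $A \geq 0$ (Exercise~\ref{ex:nnmatop}), a straightforward induction on $k$ gives
\begin{equation*}
    m^k \, \mathbf{1} \leq A^k \mathbf{1} \leq M^k \, \mathbf{1}
    \qquad \text{for all } k \in \NN.
\end{equation*}
Indeed, if the inequality holds at stage $k$, then applying $A$ yields $m^k A\mathbf{1} \leq A^{k+1}\mathbf{1} \leq M^k A\mathbf{1}$, and the outer terms are controlled by $A\mathbf{1} \in [m\mathbf{1}, M\mathbf{1}]$.

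The second key step is to convert these inequalities between vectors into inequalities between scalars by pairing with the nonnegative left eigenvector supplied by the Perron--Frobenius theorem. Theorem~\ref{t:pf} delivers a nonzero $\epsilon \in \RR^n_+$ with $\epsilon^\top A = r(A)\epsilon^\top$, so $\epsilon^\top A^k \mathbf{1} = r(A)^k \, \epsilon^\top \mathbf{1}$. Because $\epsilon \geq 0$ is nonzero, $\epsilon^\top \mathbf{1} > 0$, and because $\epsilon \geq 0$ the functional $x \mapsto \epsilon^\top x$ preserves the pointwise order. Taking $\epsilon^\top$ of the sandwich above and dividing by the positive scalar $\epsilon^\top \mathbf{1}$ gives $m^k \leq r(A)^k \leq M^k$ for every $k \in \NN$, and taking $k$-th roots (or just setting $k=1$) yields $m \leq r(A) \leq M$, which is part (i).

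Part (ii) follows from part (i) applied to $A^\top$, since $\rsum_i(A^\top) = \csum_i(A)$ and $r(A^\top) = r(A)$. I do not expect any serious obstacle: both the monotonicity step and the existence of a nonnegative left Perron root are already in hand from Exercise~\ref{ex:nnmatop} and Theorem~\ref{t:pf}; the only mild subtlety is the need to know $\epsilon^\top \mathbf{1} > 0$ so that we can divide, which is immediate from $\epsilon \geq 0$ being nonzero.
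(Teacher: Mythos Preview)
Your proof is correct and uses essentially the same idea as the paper: pair the Perron--Frobenius eigenvector against $\mathbf{1}$ and compute $\epsilon^\top A\mathbf{1}$ in two ways. The paper organizes this slightly more economically---normalizing the eigenvector to have entries summing to one and reading off directly that $r(A)$ is a convex combination of the row (resp.\ column) sums---which makes your iteration to general $k$ unnecessary (as you yourself note, $k=1$ already suffices). The paper also proves the column-sum bound first via the right eigenvector and then says ``similarly with the left eigenvector'' for rows, whereas you do rows first and pass to columns by transposing; both routes are fine.
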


\begin{proof}
    Let $A$ be as stated and let $e$ be the right eigenvector in
    \eqref{eq:pfrl}.  Since $e$ is nonnegative and nonzero, we can and do
    assume that $\sum_j e_j = 1$.  From $A e = r(A) e$ we have $\sum_j a_{ij}
    e_j = r(A) e_i$ for all $i$.  Summing with respect to $i$ gives $\sum_j
    \csum_j(A) e_j = r(A)$.  Since the elements of $e$ are nonnegative and sum
    to one, $r(A)$ is a weighted average of the column sums. Hence the second
    pair of bounds in Lemma~\ref{l:rscsbounds} holds.  The remaining proof is
    similar (use the left eigenvector).
\end{proof}

\section{Probability}\label{s:prob}

Next we review some elements of probability that will be required for
analysis of networks.

\subsection{Discrete Probability}\label{ss:spt}

We first introduce probability models on finite sets and then consider
sampling methods and stochastic matrices.

\subsubsection{Probability on Finite Sets}

Throughout this text, if $S$ is a finite set, then we set
\begin{equation*}
    \dD(S) := 
        \left\{
            \phi \in \RR^S_+ 
            \; : \:
            \sum_{x \in S} \phi(x) = 1
        \right\}
\end{equation*}
and call $\dD(S)$ the set of \navy{distributions}\index{Distributions} on $S$.
We say that a random variable $X$ taking values in $S$ has distribution $\phi
\in \dD(S)$ and write $X \eqdist \phi$ if
\begin{equation*}
    \PP \{X = x \} = \phi(x) \quad \text{for all } x \in S.
\end{equation*}

A distribution $\phi$ can also be understood as a vector $(\phi(x_i))_{i =1}^n
\in \RR^n$ (see Lemma~\ref{l:rxrn} in \S\ref{sss:functions}).    As a result,
$\dD(S)$ can be viewed as a subset of $\RR^n$.  Figure~\ref{f:simplex_1}
provides a visualization when $S = \{1, 2, 3\}$.  Each $\phi \in \dD(S)$ is
identified by the point $(\phi(1), \phi(2), \phi(3))$ in $\RR^3$.  

More generally, if $|S|=n$, then $\dD(S)$ can be identified with the
\navy{unit simplex}\index{Unit simplex} in $\RR^n$, which is the set of all
$n$-vectors that are nonnegative and sum to one.

\begin{figure}
   \centering
   \scalebox{0.45}{\includegraphics[trim = 0mm 15mm 0mm 0mm, clip]{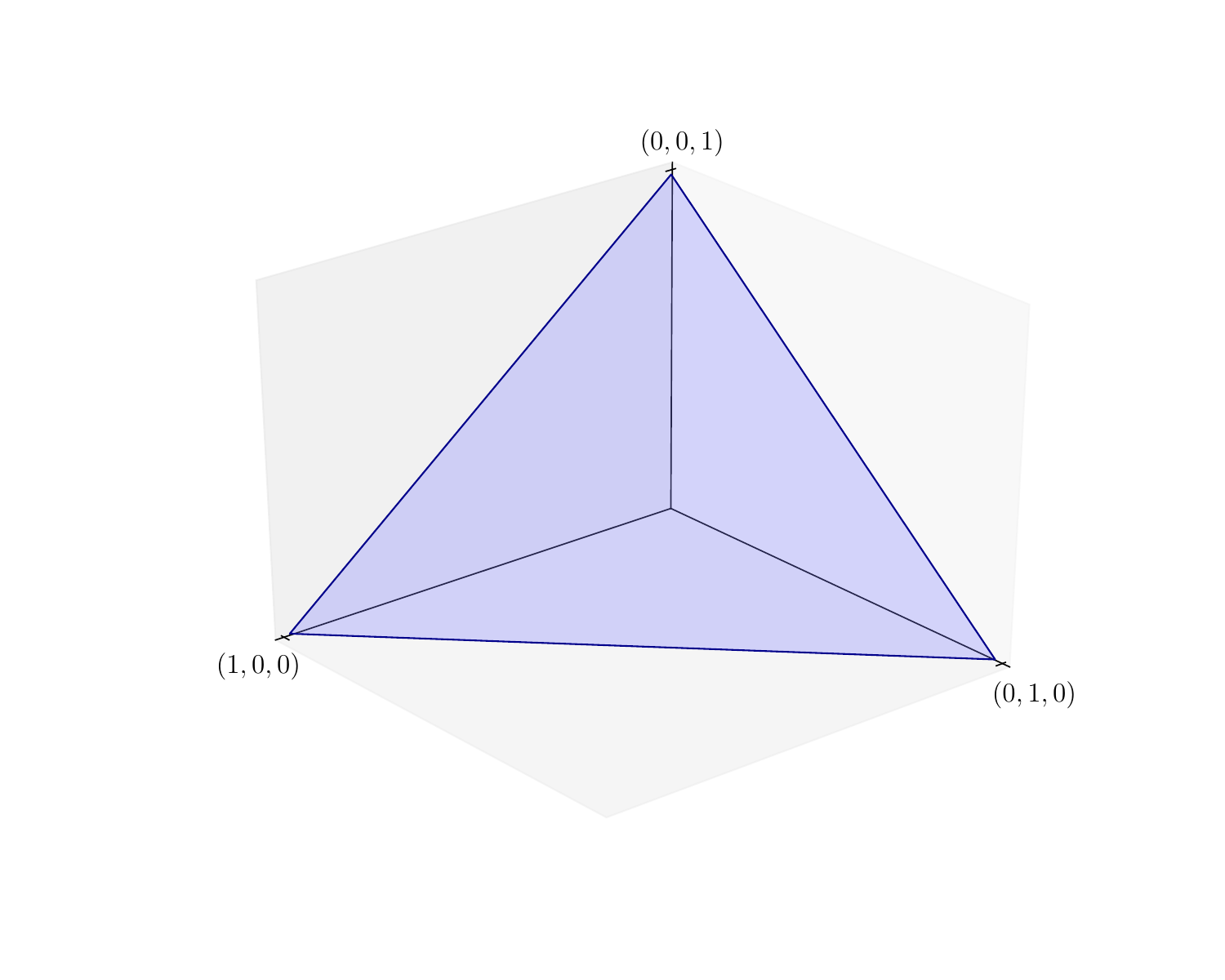}}
   \caption{\label{f:simplex_1} If $S = \{1, 2, 3\}$, then $\dD(S)$ is the unit simplex in $\RR^3$}
\end{figure}

Throughout, given $x \in S$, we use the symbol $\delta_x$ to represent the
element of $\dD(S)$ that puts all mass on $x$.  In other words,
$\delta_x(y) = \1\{y = x\}$ for all $y \in S$.  In
Figure~\ref{f:simplex_1}, each $\delta_x$ is a vertex of the unit simplex.

We frequently make use of the \navy{law of total probability}\index{Law of
    total probability}, which states that, for a random variable $X$ on $S$
    and arbitrary $A \subset S$,
\begin{equation}\label{eq:discreteltp}
    \PP\{X \in A\}
    = \sum_{i} \PP\{ X \in A \given X \in B_i\} \PP\{X \in B_i\}
\end{equation}
where $\{B_i\}$ is a partition of $S$ (i.e., finite collection of disjoint subsets of
$S$ such that their union equals $S$).

\begin{Exercise}
    Prove~\eqref{eq:discreteltp} assuming $\PP\{X \in B_i\} > 0$ for all $i$.
\end{Exercise}

\subsubsection{Inverse Transform Sampling}\label{sss:its}

Let $S$ be a finite set. Suppose we have the ability to generate random
variables that are uniformly distributed on $(0, 1]$.  We now want to generate
random draws from $S$ that are distributed according to arbitrary $\phi \in \dD(S)$.

Let $W$ be uniformly distributed on $(0,1]$, so that, for any $a \leq b \in
(0,1]$, we have $\PP\{a < W \leq b\} = b - a$, which is the length of the
interval $(a, b]$.\footnote{The probability is the same no matter whether inequalities
are weak or strict.}  Our problem will be solved if we can create a function
$z \mapsto \kappa(z )$ from $(0,1]$ to $S$ such that $\kappa(W)$ has
distribution $\phi$. One technique is as follows.  First we divide the unit
interval $(0,1]$ into disjoint subintervals, one for each $x \in S$.  The
interval corresponding to $x$ is denoted $I(x)$ and is chosen to have length
$\phi(x)$.  More specifically, when $S = \{x_1, \ldots, x_N\}$, we take
\begin{equation*}
    I(x_i ) 
    := (q_{i-1}, \, q_i] 
    \quad \text{where} \quad
    q_i := \phi(x_1) + \cdots + \phi(x_{i}) 
    \quad \text{and} \quad
    q_0 := 0.
\end{equation*}
You can easily confirm that the length of $I(x_i )$ is $\phi(x_i)$ for
all $i$.  

Now consider the function $z \mapsto \kappa(z)$ defined by
\begin{equation}
    \label{eq:fpdfu}
    \kappa (z) := \sum_{x \in S} x \1\{z \in I(x )\}
    \qquad (z \in (0,1])
\end{equation}
where $\1\{z \in I(x )\}$ is one when $z \in I(x )$ and zero
otherwise.  It turns out that $\kappa(W)$ has the distribution we desire.

\begin{Exercise}
    Prove: 
    \begin{enumerate}
        \item For all $x \in S$, we have $\kappa(z) = x$ if and only if $z \in I(x)$.
        \item The random variable $\kappa (W)$ has distribution $\phi$.
    \end{enumerate}
\end{Exercise}

\begin{Answer}
    For the first claim, fix $x \in S$ and $z \in (0, 1]$.  If $\kappa(z) =
    x$, then, since all elements of $S$ are distinct, the definition of
    $\kappa$ implies $z \in I(x)$.  Conversely, if $z \in I(x)$, then, since
    all intervals are disjoint, we have $\kappa(z) = x$.

    For the second claim, pick any $x \in S$, and observe that, by the first
    claim, the $\kappa (W) = x$ precisely when $W \in I(x)$. The probability
    of this event is the length of the interval $I(x)$, which, by
    construction, is $\phi(x)$. Hence $\PP\{\kappa (W) = x\} = \phi(x)$ for
    all $x \in S$ as claimed.  
\end{Answer}

\begin{Exercise}\label{ex:eber}
    Let $\phi$, $\kappa$ and $W$ be as defined above.  Prove that $\EE
    \1\{\kappa(W)=j\} = \phi(j)$ holds for all $j \in \natset{n}$.
\end{Exercise}
\begin{Answer}
    Fix $j \in \natset{n}$.   Observe that $Y := \1\{\kappa(W)=j\}$ is
    Bernoulli random variable.  The expectation of such a $Y$ equals $\PP\{Y =
    1\}$.  As $\kappa(W) \eqdist \phi$, this is $\phi(j)$.
\end{Answer}

\begin{Exercise}
    Using Julia or another language of your choice, implement the inverse
    transform sampling procedure described above when $S = \{1, 2, 3\}$ and
    $\phi = (0.2, 0.1, 0.7)$.  Generate $1,000,000$ (quasi) independent draws
    $(X_i)$ from $\phi$ and confirm that $(1/n) \sum_{i=1}^n \1\{X_i = j\}
    \approx \phi(j)$ for $j \in \{1, 2, 3\}$.
\end{Exercise}

The last exercise tells us that that the law of large numbers holds in this
setting, since, under this law, we expect that
\begin{equation*}
    \frac{1}{n} \sum_{i=1}^n \1\{X_i = j\} 
    \to \EE\1\{X_i = j\}
\end{equation*}
with probability one as $n \to \infty$.  In view of Exercise~\ref{ex:eber},
the right hand side equals $\phi(j)$.

\begin{Exercise}
    Suppose that, on a computer, you can generate only uniform random variables on
    $(0,1]$, and you wish to simulate a flip of a biased coin with heads
    probability $\delta \in (0,1)$.  Propose a method.
\end{Exercise}

\begin{Answer}
    Draw $U$ uniformly on $(0,1]$ and set the coin to heads if $U \leq \delta$.
    The probability of this outcome is $\PP\{U \leq \delta\}=\delta$.
\end{Answer}

\begin{Exercise}\label{ex:concond}
    Suppose that, on a computer, you are able to sample from distributions $\phi$
    and $\psi$ defined on some set $S$.  The set $S$ can be discrete or
    continuous and, in the latter case, the distributions are understood as
    densities. Propose a method to sample on a computer from the convex
    combination $f(s) = \delta \phi(s) + (1- \delta) \psi(s)$, where $\delta
    \in (0, 1)$.
\end{Exercise}

\begin{Answer}
    For the solution we assume that $S$ is finite, although the argument can
    easily be extended to densities. On the computer, we flip a biased coin $B \in
    \{0, 1\}$ with $\PP\{B = 0\} = \delta$ and then
    \begin{enumerate}
        \item draw $Y$ from $\phi$ if $B = 0$, or
        \item draw $Y$ from $\psi$ if $B = 1$.
    \end{enumerate}
    With this set up, by the law of total probability,
    \begin{equation*}
        \PP\{Y = s\}
        = \PP\{Y =s \given B = 0\}\PP\{B=0\}
            + \PP\{Y =s \given B = 1\}\PP\{B=1\}
            = \delta \phi(s) + (1-\delta) \psi(s).
    \end{equation*}
    In other words, $Y \eqdist f$.
\end{Answer}

\subsubsection{Stochastic Matrices}\label{sss:stochmat}

A matrix $P = (p_{ij}) \in \matset{n}{n}$ is called a \navy{stochastic matrix}
\index{Stochastic matrix} if 
\begin{equation*}
    P \geq 0 
    \quad \text{and} \quad
    P \1 = \1,
    \text{ where } \1 \in \RR^n \text{ is a column vector of ones}.
\end{equation*}
In other words, $P$ is nonnegative and has unit row sums.

We will see many applications of stochastic matrices in this text.  Often the
applications are probabilistic, where each row of $P$ is interpreted as a
distribution over a finite set.

\begin{Exercise}\label{ex:sm_sr1}
    Let $P, Q$ be $n \times n$ stochastic matrices. Prove the following facts.
    \begin{enumerate}
        \item $P Q$ is also stochastic.
        \item $r(P)=1$.
        \item There exists a row vector $\psi \in \RR^n_+$ such that $\psi \1 =
            1$ and $\psi P = \psi$.
    \end{enumerate}
\end{Exercise}

\begin{Answer}
    Let $P$ and $Q$ be as stated. Evidently $PQ \geq 0$.  Moreover, $PQ \1 = P
    \1 = \1$, so $PQ$ is stochastic. That $r(P)=1$ follows directly from
    Lemma~\ref{l:rscsbounds}. By the Perron--Frobenius theorem, there exists a
    nonzero, nonnegative row vector $\phi$ satisfying $\phi P = \phi$.  Rescaling $\phi$ to
    $\phi / (\phi \1)$ gives the desired vector $\psi$.
\end{Answer}

The vector $\psi$ in part (iii) of Exercise~\ref{ex:sm_sr1} is called the
\emph{PageRank vector} by some authors, due to its prominence in Google's
PageRank algorithm.  We will call it a \navy{stationary
    distribution}\index{Stationary distribution} instead.\footnote{Stationary
    distributions of stochastic matrices were intensively studied by many
    mathematicians well over a century before Larry Page and Sergey Brin
    patented the PageRank algorithm, so it seems unfair to allow them to
    appropriate the name.}  Stationary distributions play a key role in the
    theory of Markov chains, to be treated in \S\ref{s:amcs}.
Ranking methods are discussed again in \S\ref{ss:netcen}.  PageRank is treated
in more detail in \S\ref{sss:pagerank}.

\subsection{Power Laws}\label{ss:pow}

Next we discuss distributions on the (non-discrete) sets $\RR$ and $\RR_+$. We
are particularly interested in a certain class of distributions that are
apparently non-standard and yet appear with surprising regularity in
economics, social science, and the study of networks.  We refer to the
distributions that are said to obey a ``power law.''

In what follows, given a real-valued random variable $X$, the function 
\begin{equation*}
    F(t) := \PP\{X \leq t\}
    \qquad (t \in \RR)
\end{equation*}
is called the \navy{cumulative distribution function ({\sc cdf})}
\index{Cumulative distribution function}\index{cdf} of $X$. The \navy{counter
{\sc cdf} ({\sc ccdf})}\index{Counter CDF (CCDF)}  of $X$ is the function $G(t)
:= \PP\{X > t\} = 1 - F(t)$. 

A useful property that holds for any nonnegative random variable $X$
and $p \in \RR_+$ is the identity 
\begin{equation}
    \label{eq:uexpid}
    \EE \, X^p = \int_0^\infty p t^{p-1} \PP\{ X > t \} \diff t.
\end{equation}
See, for example, \cite{cinlar2011probability}, p.~63.

\subsubsection{Heavy Tails}

Recall that a random variable $X$ on $\RR$ is said to be \navy{normally
distributed}\index{Normal distribution} with mean $\mu$ and variance
$\sigma^2$, and we write $X \eqdist N(\mu, \sigma^2)$, if $X$ has density
\begin{equation*}
    \phi(t) := 
        \sqrt{\frac{1}{2 \pi \sigma^2}} 
        \exp \left( \frac{-(t-\mu)^2}{2\sigma^2} \right)
        \qquad (t \in \RR).
\end{equation*}
One notable feature of the normal density is that the tails of the density
approach zero quickly.  For example, $\phi(t)$ goes to zero like $\exp(-t^2)$
as $t \to \infty$, which is extremely fast.  

A random variable $X$ on $\RR_+$ is called \navy{exponentially
distributed} and we write \navy{$X \eqdist \Exp(\lambda)$} if, for some
$\lambda > 0$, $X$ has density 
\begin{equation*}
    p(t) = \lambda \me^{-\lambda t}  
    \qquad (t \geq 0).
\end{equation*}
The tails of the exponential density go to zero like $\exp(-t)$ as $t \to
\infty$, which is also relatively fast.  

When a distribution is relatively light-tailed, in the sense that its tails go
to zero quickly, draws rarely deviate more than a few standard deviations from
the mean.  In the case of a normal random variable, the probability of
observing a draw more than 3 standard deviations above the mean is around
0.0014.  For 6 standard deviations, the probability falls to $10^{-11}$. 

In contrast, for some distributions, ``extreme'' outcomes occur relatively
frequently.  The left panel of Figure~\ref{f:heavy_tailed_draws} helps to illustrate by
simulating 1,000 independent draws from Student's
t-distribution\index{t-distribution}, with $1.5$ degrees of freedom.  For
comparison, the right subfigure shows an equal number of independent draws
from the $N(0, 4)$ distribution.  The Student's t draws reveal tight
clustering around zero combined with a few large deviations.

\begin{figure}
    \centering
    \scalebox{0.68}{\includegraphics{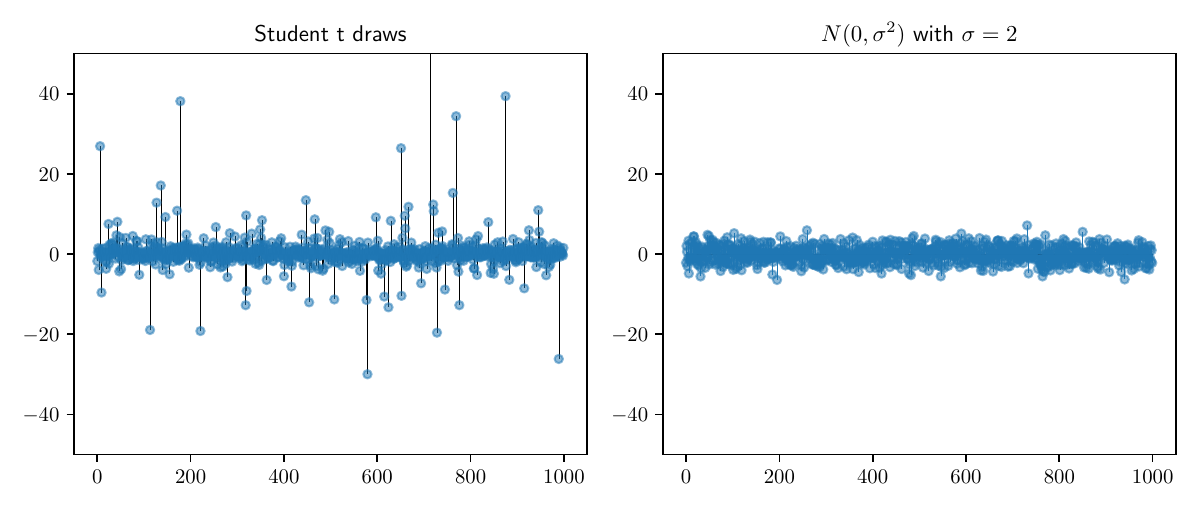}}
    \caption{\label{f:heavy_tailed_draws} Independent draws from Student's t- and normal distributions}
\end{figure}

Formally, a random variable $X$ on $\RR$ is called
\navy{light-tailed}\index{Light-tailed} if its \navy{moment generating
function}\index{Moment generating function} 
\begin{equation}
    \label{eq:defht}
    m(t) := \EE \me^{tX} 
    \qquad (t \geq 0)
\end{equation}
is finite for at least one $t > 0$.  Otherwise $X$ is called
\navy{heavy-tailed}\index{Heavy-tailed}.\footnote{Terminology on heavy tails
    varies across the literature but our choice is increasingly standard.
    See, for example, \cite{foss2011introduction} or
\cite{nair2021fundamentals}.}   

\begin{example}
    If $X \eqdist N(\mu, \sigma^2)$, the the moment generating function of $X$
    is known to be 
    \begin{equation*}
        m(t) = \exp \left( \mu t + \frac{t^2 \sigma^2}{2} \right)
        \qquad (t \geq 0).
    \end{equation*}
    Hence $X$ is light tailed.  
\end{example}

\begin{example}\label{eg:lognorm}
    A random variable $X$ on $(0, \infty)$ is said to have the \navy{lognormal
    density}\index{Lognormal distribution} and we write \navy{$X \eqdist
    LN(\mu, \sigma^2)$} if $\ln X \eqdist N(\mu, \sigma^2)$. The mean and variance of
    this distribution are, respectively,
    \begin{equation*}
        \EE \, X = \exp( \mu + \sigma^2/2)
        \quad \text{and} \quad
        \var \, X = (\exp(\sigma^2) - 1) \exp( 2 \mu + \sigma^2).
    \end{equation*}
    The moment generating function $m(t)$ is known to be infinite for all $t >
    0$, so any lognormally distributed random variable is heavy-tailed.
\end{example}

For any random variable $X$ and any $r \geq 0$, the (possibly infinite)
expectation $\EE |X|^r$ called the $r$-th \navy{moment}\index{Moment} of $X$.

\begin{lemma}
    \label{l:ltimp}
    Let $X$ be a random variable on $\RR_+$. If $X$ is light-tailed, then all
    of its moments are finite.
\end{lemma}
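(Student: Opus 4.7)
The plan is to exploit the fact that, on $[0,\infty)$, the exponential function $x \mapsto \me^{t_0 x}$ dominates every polynomial $x \mapsto x^r$. Since $X$ is light-tailed there exists some $t_0 > 0$ with $m(t_0) = \EE \me^{t_0 X} < \infty$, and I will use this single value of $t_0$ throughout.

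First I would fix an arbitrary $r \geq 0$ and consider the auxiliary function $g(x) := x^r \me^{-t_0 x}$ on $[0,\infty)$. This $g$ is continuous, nonnegative, equals $0$ at $x=0$ (for $r>0$; the $r=0$ case is trivial), and satisfies $g(x) \to 0$ as $x \to \infty$ because exponentials beat polynomials. Hence $C_r := \sup_{x \geq 0} g(x)$ is finite. (If one wants an explicit constant, calculus gives $C_r = (r/(\me t_0))^r$ for $r > 0$ and $C_r = 1$ for $r = 0$, but this is not needed.)

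Next I would rearrange this bound to get the pointwise inequality $x^r \leq C_r \me^{t_0 x}$ valid for every $x \geq 0$. Because $X$ takes values in $\RR_+$, this inequality can be applied pathwise to $X$, and monotonicity of expectation gives
\begin{equation*}
    \EE X^r \leq C_r \EE \me^{t_0 X} = C_r \, m(t_0) < \infty.
\end{equation*}
Since $r \geq 0$ was arbitrary, every moment of $X$ is finite.

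There is no serious obstacle: the only step requiring care is the justification that $C_r < \infty$, which is a routine calculus fact. As an alternative route one could combine the tail identity \eqref{eq:uexpid} with Markov's inequality in the form $\PP\{X > t\} \leq \me^{-t_0 t} m(t_0)$, producing the explicit bound $\EE X^r \leq \Gamma(r+1) m(t_0) / t_0^r$; this gives a sharper constant but relies on the Gamma function, whereas the dominance argument above is entirely elementary.
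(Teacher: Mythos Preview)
Your proof is correct and uses essentially the same idea as the paper: both exploit that $\me^{t_0 x}$ eventually dominates $x^r$. The only cosmetic difference is that the paper splits the integral at a threshold $\bar x$ beyond which $\me^{t_0 x} \geq x^r$, bounding the two pieces by $\bar x^r$ and $m(t_0)$ respectively, whereas you absorb the threshold into a single global constant $C_r$ so that $x^r \leq C_r \me^{t_0 x}$ holds for all $x \geq 0$; this is arguably tidier but not materially different.
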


\begin{proof}
    Pick any $r > 0$.  We will show that $\EE X^r < \infty$.  Since $X$ is
    light-tailed, there exists a $t > 0$ such that $m(t) = \EE \exp(tX) <
    \infty$.  For a sufficiently large constant $\bar x$ we have $\exp(tx)
    \geq x^r$ whenever $x \geq \bar x$.  As a consequence, with $F$ as the
    distribution of $X$, we have
    \begin{equation*}
        \EE X^r
        = \int_0^{\bar x} x^r F(\diff x)
            + \int_{\bar x}^\infty x^r F(\diff x)
        \leq \bar x^r + m(t) < \infty.
        \qedhere
    \end{equation*}
\end{proof}

\begin{Exercise}\label{ex:lnfmo}
    Prove that the lognormal distribution has finite moments of every order.
\end{Exercise}

\begin{Answer}
    Fix $p > 0$ and let $X$ be $LN(\mu, \sigma^2)$.  We have
    \begin{equation*}
        \EE | X |^p = \EE X^p = \EE \exp(p \mu + p \sigma Z)
        \quad \text{ for } Z \eqdist N(0,1).
    \end{equation*}
    Since $p \mu + p \sigma Z \eqdist N(p\mu, p^2\sigma^2)$, we can apply the formula
    for the mean of a lognormal distribution to obtain $m_p = \exp(p \mu +
    (p\sigma)^2 / 2) < \infty$.
\end{Answer}

Together with Lemma~\ref{l:ltimp}, Exercise~\ref{ex:lnfmo} shows that
existence of an infinite moment is a sufficient but not necessary condition
for heavy tails.

\subsubsection{Pareto Tails}

Given $\alpha > 0$, a nonnegative random variable $X$ is said to have a
\navy{Pareto tail}\index{Pareto tail} with \navy{tail index}\index{Tail index}
$\alpha$ if there exists a $c > 0$ such that
\begin{equation}
    \label{eq:plrt}
    \lim_{t \to \infty} t^\alpha \, \PP\{X > t\} = c.
\end{equation}
In other words, the {\sc ccdf} $G$ of $X$ satisfies 
\begin{equation}
    \label{eq:plrt0}
    G(t) \approx c t^{-\alpha}  \text{ for large } t.
\end{equation}
If $X$ has a Pareto tail for some $\alpha > 0$, then $X$ is also said to obey
a \navy{power law}\index{Power law}.

\begin{example}\label{eg:ptp}
    A random variable $X$ on $\RR_+$ is said to have a \navy{Pareto
    distribution}\index{Pareto distribution} with parameters $\bar x, \alpha >
    0$ if its {\sc ccdf} obeys
    \begin{equation}
        \label{eq:pareto}
        G(t) =
        \begin{cases}
            1 & \text{ if } t < \bar x
            \\
            \left( \bar x/t \right)^{\alpha} & \text{ if } t \geq \bar x
        \end{cases}
    \end{equation}
    It should be clear that such an $X$ has a Pareto tail with tail
    index $\alpha$. 
\end{example}

Regarding Example~\ref{eg:ptp}, note that the converse is not true: 
Pareto-tailed random variables are not necessarily Pareto distributed, since
the Pareto tail property only restricts the far right hand tail.

\begin{Exercise}\label{ex:ptail}
    Show that, if $X$ has a Pareto tail with tail index $\alpha$, then $\EE[X^r] =
    \infty$ for all $r \geq \alpha$.  [Hint: Use~\eqref{eq:uexpid}.]
\end{Exercise}

\begin{Answer}
    Let $X$ have a Pareto tail with tail index $\alpha$ and let $G$ be its
    {\sc ccdf}.  Fix $r \geq \alpha$.   Under the Pareto tail assumption, we can take 
    positive constants $b$ and $\bar x$ such that $G(t) \geq b t^{- \alpha}$
    whenever $t \geq \bar x$.  Using \eqref{eq:uexpid} we have
    \begin{equation*}
        \EE X^r = r \int_0^\infty t^{r-1} G(t) \diff t
        \geq 
        r \int_0^{\bar x} t^{r-1} G( t ) \diff t
        + r \int_{\bar x}^\infty  t^{r-1} b t^{-\alpha} \diff t.
    \end{equation*}
    But $\int_{\bar x}^\infty t^{r-\alpha-1} \diff t = \infty$ whenever $r -
    \alpha - 1 \geq -1$.  Since $r \geq \alpha$, we have $\EE X^r = \infty$.
\end{Answer}

From Exercise~\ref{ex:ptail} and Lemma~\ref{l:ltimp}, we see that every
Pareto-tailed random variable is heavy-tailed. The converse is not true, since
the Pareto tail property~\eqref{eq:plrt} is very specific.  Despite this, it
turns out that many heavy-tailed distributions encountered in the study of
networks are, in fact, Pareto-tailed.

\begin{Exercise}
    Prove: If $X \eqdist \Exp(\lambda)$ for some $\lambda > 0$, then 
    $X$ does not obey a power law.
\end{Exercise}

\begin{Answer}
    Fix $\lambda > 0$ and suppose $X \eqdist \Exp(\lambda)$. A simple integration
    exercise shows that $\PP\{X > t\} = \me^{-\lambda t}$.  Now fix $\alpha > 0$.
    Since $\lim_{t \to \infty} t^\alpha  \me^{-\lambda t} = 0$, the random variable $X$
    does not obey a power law.
\end{Answer}

\subsubsection{Empirical Power Law Plots}

When the Pareto tail property holds, the {\sc ccdf} satisfies $\ln G(t)
\approx \ln c - \alpha \ln t$ for large $t$.  In other words, $G$ is
eventually log linear.  Figure~\ref{f:ccdf_comparison_1} illustrates this
using a Pareto distribution.  For comparison, the {\sc ccdf} of an exponential
distribution is also shown.

\begin{figure}
    \centering
    \scalebox{0.65}{\includegraphics{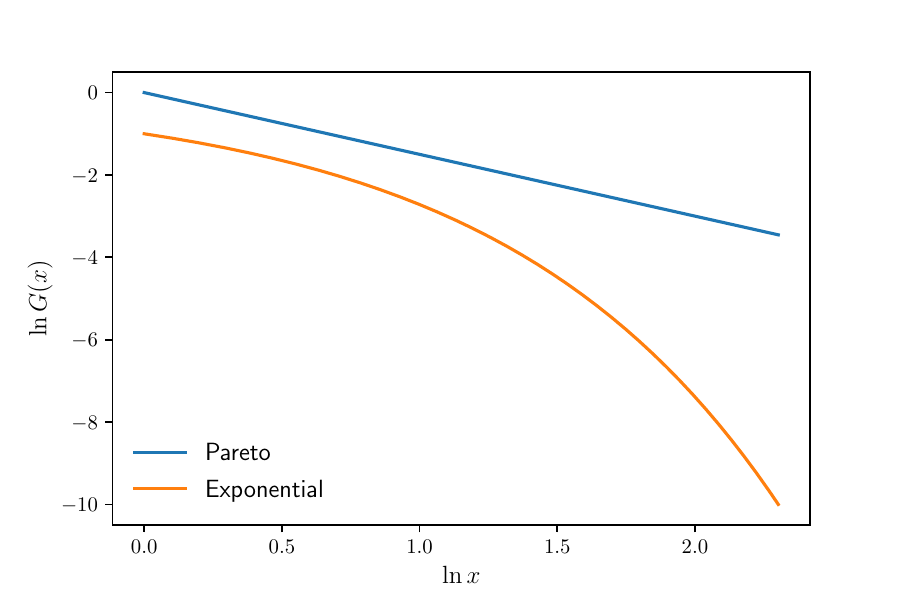}}
    \caption{\label{f:ccdf_comparison_1} {\sc ccdf} plots for the Pareto and exponential distributions}
\end{figure}

If we replace the {\sc ccdf} $G$ with its empirical counterpart---which
returns, for each $x$, the fraction of the sample with values greater than
$x$---we should also obtain an approximation to a straight line under the
Pareto tail assumption.

For example, consider the cross-sectional distribution 
of firm sizes.  While the precise nature of this distribution
depends on the measure of firm size, the sample of firms and other factors,
the typical picture is one of extreme heavy tails.  As an illustration,
Figure~\ref{f:empirical_powerlaw_firms_forbes} shows an empirical {\sc ccdf}
log-log plot for market values of the largest 500 firms in the Forbes Global
2000 list, as of March 2021.  The slope estimate and data distribution are
 consistent with a Pareto tail and infinite population variance.

\begin{figure}
    \centering
    \scalebox{0.74}{\includegraphics{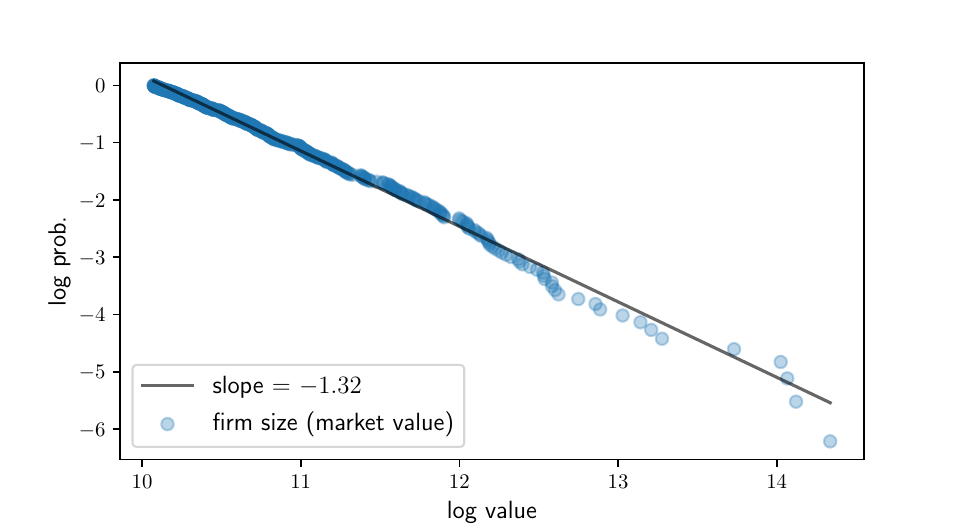}}
    \caption{\label{f:empirical_powerlaw_firms_forbes} Empirical {\sc ccdf} plots for largest firms (Forbes)}
\end{figure}

\subsubsection{Discrete Power Laws}\label{sss:dpls}

Let $X$ be a random variable with the Pareto distribution, as described in
Example~\ref{eg:ptp}.  The density of this random variable on the set $[\bar
x, \infty)$ is $p(t) = c t^{-\gamma}$ with $c := \alpha \bar x^\alpha$ and $\gamma
:= \alpha + 1$.  The next exercise extends this idea.

\begin{Exercise}\label{ex:dtpo}
    Let $X$ be a random variable with density $p$ on $\RR_+$.  Suppose that,
    for some constants $c > 0$, $\gamma > 1$ and $\bar x \in \RR_+$, we have
    \begin{equation}\label{eq:thiss}
        p(t) = c t^{-\gamma}
        \quad \text{whenever} \quad
        t \geq \bar x.
    \end{equation}
    Prove that $X$ is Pareto-tailed with tail index $\alpha := \gamma - 1$.
\end{Exercise}

\begin{Answer}
    Let $p$ and the constants $\gamma, c > 0$ and $\bar x \in \RR_+$ be as
    described in the exercise. Pick any $t \geq \bar x$.  By the usual rules of
    integration,
    \begin{equation*}
        \PP\{X > t\}
        = c \int_t^\infty u^{-\gamma} \diff u
        = - \frac{c}{1-\gamma} t^{1-\gamma}.
    \end{equation*}
    With $\alpha := \gamma - 1$, we then have $t^\alpha \PP\{X > t\} =
    c/\alpha$, and $X$ is Pareto-tailed with tail index $\alpha$.
\end{Answer}

The discrete analog of \eqref{eq:thiss} is a distribution on the positive
integers with
\begin{equation}\label{eq:zeta}
    f(k) = c k^{-\gamma}
\end{equation}
for large $k$. In the special case where this equality holds for all $k \in
\NN$, and $c$ is chosen so that $\sum_{k \in \NN} f(k) = 1$, we obtain the
\navy{zeta distribution}\index{Zeta distribution}.\footnote{Obviously the
    correct value of $c$ depends on $\gamma$, so we can write $c = H(\gamma)$ for
    some suitable function $H$.  The correct function for this normalization
is called the Riemann zeta function.}

In general, when we see a probability mass function with the specification
\eqref{eq:zeta} for large $k$, we can identify this with a Pareto tail, with
tail index $\alpha = \gamma - 1$.  Figure~\ref{f:zeta_1} illustrates with
$\gamma=2$.

\begin{figure}
    \centering
    \scalebox{0.7}{\includegraphics{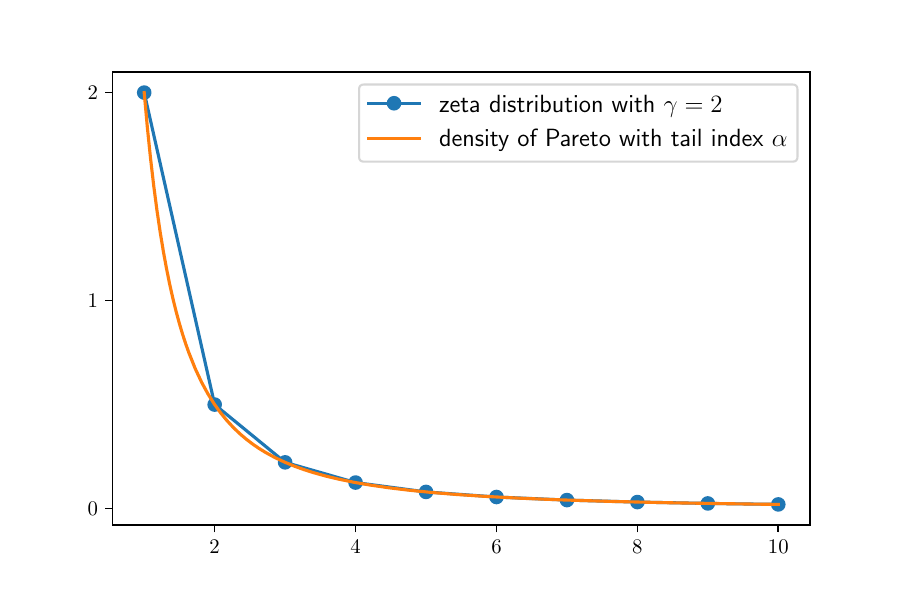}}
    \caption{\label{f:zeta_1} Zeta and Pareto distributions}
\end{figure}

\section{Graph Theory}\label{c:networks}\label{s:gandnet}

Graph theory is a major branch of discrete mathematics. It plays an essential
role in this text because it forms the foundations of network analysis.  This
section provides a concise and fast-moving introduction to graph theory
suitable for our purposes.\footnote{Graph theory is often regarded as originating from
work by the brilliant Swiss mathematician Leonhard Euler (1707--1783),
including his famous paper on the ``Seven Bridges of K\"onigsberg.''}

Graph theory has another closely related use: many economic models are
stochastic and dynamic, which means that they specify states of the world and
rates of transition between them.   One of the most natural ways to
conceptualize these notions is to view states as vertices in a graph and
transition rates as relationships between them.

We begin with definitions and fundamental concepts. We focus
on directed graphs, where there is a natural asymmetry in relationships (bank
$A$ lends money to bank $B$, firm $A$ supplies goods to firm $B$, etc.).  This
costs no loss of generality, since undirected graphs (where relationships are
symmetric two-way connections) can be recovered by insisting on symmetry
(i.e., existence of a connection from $A$ to $B$ implies existence of a
connection from $B$ to $A$).

\subsection{Unweighted Directed Graphs}\label{ss:uwdg}

We begin with unweighted directed graphs and examine standard properties, such
as connectedness and aperiodicity.

\subsubsection{Definition and Examples}

A \navy{directed graph}\index{Directed graph} or \navy{digraph}\index{Digraph}
is a pair $\gG = (V, E)$, where
\begin{itemize}
    \item $V$ is a finite nonempty set called the
        \navy{vertices}\index{Vertices} or \navy{nodes}\index{Nodes} of $\gG$, and 
    \item $E$ is a collection of ordered pairs $(u, v)$ in $V \times V$ called
        \navy{edges}\index{Edges}.
\end{itemize}
Intuitively and visually, $(u,v)$ is understood as an arrow from vertex $u$ to
vertex $v$.  

Two graphs are given in
Figures~\ref{f:rich_poor_no_label}--\ref{f:poverty_trap}.  Each graph has
three vertices.  In these cases, the arrows (edges) could be thought of as
representing positive possibility of transition over a given unit of time.

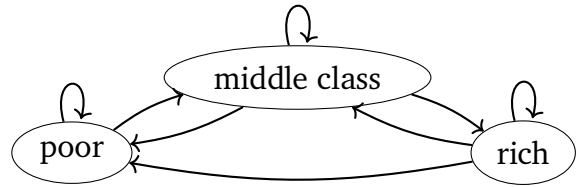
\begin{figure}
   \begin{center}
       \scalebox{1.0}{\begin{tikzpicture}
  \node[ellipse, draw] (0) at (0, 0) {poor};
  \node[ellipse, draw] (1) at (3, 1) {middle class};
  \node[ellipse, draw] (2) at (6, 0) {rich};
  
  \draw[->, thick, black]
  (0) edge [bend left=10, above] node {} (1)
  (1) edge [bend left=10, above] node {} (0)
  (2) edge [bend left=10, above] node {} (0)
  (1) edge [bend left=10, above] node {} (2)
  (2) edge [bend left=10, above] node {} (1)
  (0) edge [loop above] node {}(0)
  (1) edge [loop above] node {} (1) 
  (2) edge [loop above] node {} (2);
\end{tikzpicture}}
   \end{center}
   \caption{\label{f:rich_poor_no_label} A digraph of classes}
\end{figure}

\begin{figure}
   \centering
   \scalebox{1.0}{\begin{tikzpicture}
  \node[ellipse, draw] (0) at (0, 0) {middle class};
  \node[ellipse, draw] (1) at (3, 0) {rich};
  \node[ellipse, draw] (2) at (5, 1.25) {poor};
  \draw[->, thick, black]
  (0) edge [bend left=10, above] node {} (1)
  (0) edge [bend left=10, above] node {} (2)
  (1) edge [bend left=10, below] node {} (0)
  (1) edge [bend left=0, below] node {} (2)
  (0) edge [loop above] node {} (0)
  (1) edge [loop above] node {} (1)
  (2) edge [loop above] node {} (2) ;
\end{tikzpicture}}
   \caption{\label{f:poverty_trap} An alternative edge list}
\end{figure}
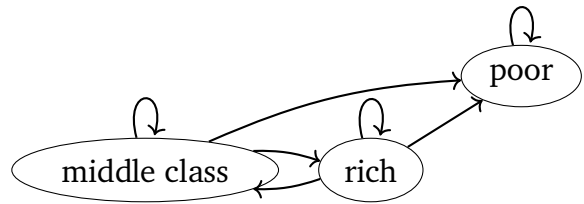

For a given edge $(u, v)$, the vertex $u$ is called the \navy{tail} of the
edge, while $v$ is called the \navy{head}.  Also,
$u$ is called a \navy{direct predecessor} of $v$ and $v$ is called a \navy{direct
successor} of $u$.  For $v \in V$, we use the following notation:
\begin{itemize}
    \item \navy{$\iI(v) :=$}  the set of all direct predecessors of $v$
    \item \navy{$\oO(v) :=$}  the set of all direct successors of $v$
\end{itemize}
Also,
\begin{itemize}
    \item the \navy{in-degree $i_d(v)$} $:= |\iI(v)| =$ the number of direct
        predecessors of $v$, and
    \item the \navy{out-degree $o_d(v)$} $:= |\oO(v)| =$ the number of
        direct successors of $v$.
\end{itemize}

If $i_d(v)=0$ and $o_d(v) > 0$, then $v$ is called a \navy{source}.  If either
$\oO(v)=\emptyset$ or $\oO(v)=\{v\}$, then 
$v$ is called a \navy{sink}.  For example, in Figure~\ref{f:poverty_trap},
``poor'' is a sink with an in-degree of 3.

\subsubsection{Digraphs in Networkx}\label{sss:nx}

Both Python and Julia provide valuable interfaces to numerical computing with
graphs.  Of these libraries, the Python package Networkx is probably the most
mature and fully developed.  It provides a convenient data structure for
representing digraphs and implements many common routines for analyzing them.
To import it into Python we run

\begin{minted}{python}
import networkx as nx
\end{minted}

In all of the code snippets shown below, we assume readers have executed this
import statement, as well as the additional two imports:

\begin{minted}{python}
import numpy as np
import matplotlib.pyplot as plt
\end{minted}

As an example, let us create the digraph in Figure~\ref{f:poverty_trap}, which
we denote henceforth by $\gG_p$.  To do so, we first create an empty
\texttt{DiGraph} object:

\begin{minted}{python}
G_p = nx.DiGraph()    
\end{minted}

Next we populate it with nodes and edges.  To do this we write down a list of
all edges, with \texttt{poor} represented by \texttt{p} and so on:

\begin{minted}{python}
edge_list = [
    ('p', 'p'),
    ('m', 'p'), ('m', 'm'), ('m', 'r'),
    ('r', 'p'), ('r', 'm'), ('r', 'r')
]
\end{minted}

Finally, we add the edges to our \texttt{DiGraph} object:

\begin{minted}{python}
for e in edge_list:
    u, v = e
    G_p.add_edge(u, v)
\end{minted}

Adding the edges automatically adds the nodes, so \texttt{G\_p} is now a
correct representation of $\gG_p$.  For our small digraph we can verify this
by plotting the graph via Networkx with the following code: 

\begin{minted}{python}
fig, ax = plt.subplots()
nx.draw_spring(G_p, ax=ax, node_size=500, with_labels=True, 
                 font_weight='bold', arrows=True, alpha=0.8,
                 connectionstyle='arc3,rad=0.25', arrowsize=20)
plt.show()
\end{minted}

This code produces Figure~\ref{f:networkx_basics_1}, which matches the
original digraph in Figure~\ref{f:poverty_trap}.

\begin{figure}
   \centering
   \scalebox{0.6}{\includegraphics[trim = 10mm 10mm 0mm 10mm, clip]{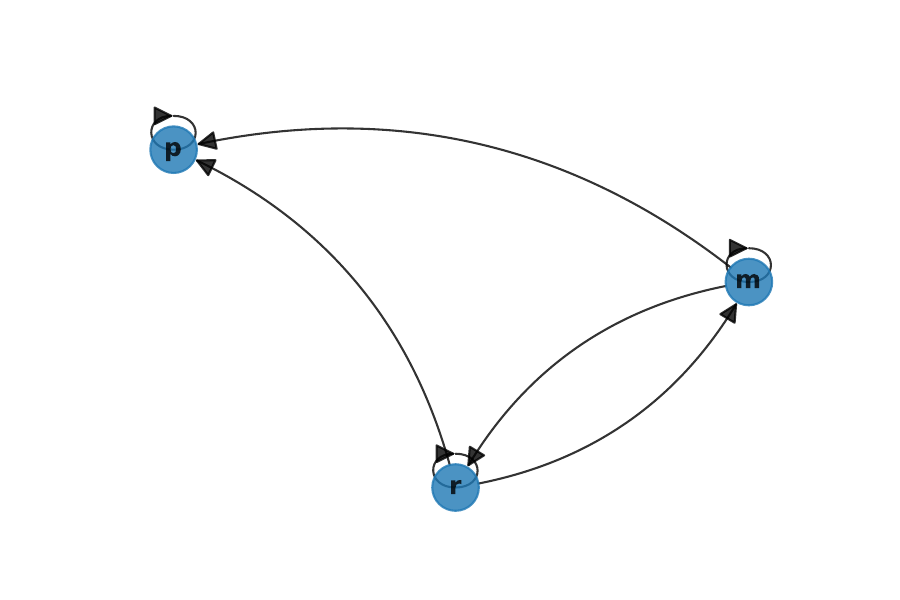}}
   \caption{\label{f:networkx_basics_1} Networkx digraph plot}
\end{figure}

\texttt{DiGraph} objects have methods that calculate in-degree and out-degree
of vertices.   For example,
\begin{minted}{python}
G_p.in_degree('p')
\end{minted}
prints 3.

\subsubsection{Communication}\label{sss:coms}

Next we study communication and connectedness, which have important
implications for production, financial, transportation and other networks, as
well as for dynamic properties of Markov chains.  

A \navy{directed walk}\index{Directed walk} from vertex $u$ to vertex $v$  of
a digraph $\gG$ is a finite sequence of vertices, starting with $u$ and ending
with $v$, such that any consecutive pair in the sequence is an edge of $\gG$.
A \navy{directed path}\index{Directed path} from $u$ to $v$ is a directed walk
from $u$ to $v$ such that all vertices in the path are distinct.  
For example, in Figure~\ref{f:strong_connected_components}, $(3, 2, 3, 2, 1)$
is a directed walk from $3$ to $1$, while $(3, 2, 1)$ is a directed path from
$3$ to $1$.

As is standard, the \navy{length} of a directed walk (or path) counts the
number of edges rather than vertices.   For example, the directed path $(3, 2,
1)$ from $3$ to $1$ in Figure~\ref{f:strong_connected_components} is said to have length 2.

Vertex $v$ is called \navy{accessible}\index{Accessible} (or \navy{reachable}) from vertex $u$, and
we write \navy{$u \to v$}, if either $u=v$ or there exists a directed path
from $u$ to $v$.  A set $U \subset V$ is called
\navy{absorbing}\index{Absorbing} for the directed graph $(V, E)$ if no
element of $V \setminus U$ is accessible from $U$.

\begin{example}
    Let $\gG = (V, E)$ be a digraph representing a production network, where
    elements of $V$ are sectors and $(i, j) \in E$ means that $i$ supplies to
    $j$.  Then sector $m$ is an upstream supplier of sector $\ell$ whenever $m
    \to \ell$.
\end{example}

\begin{figure}
   \begin{center}
       \begin{tikzpicture}
  \node[ellipse, draw] (1) at (0, 0) {1};
  \node[ellipse, draw] (2) at (2, 0) {2};
  \node[ellipse, draw] (3) at (4, 0) {3};
  \draw[->, thick, black]
  (2) edge [bend left=0, above] node {} (1)
  (2) edge [bend left=20, above] node {} (3)
  (3) edge [bend left=20, above] node {} (2)
  (1) edge [loop above] node {} (1);
  
  \draw[blue,thick,dashed] (-0.5,-1) -- (-0.5,1) -- (0.5,1) -- (0.5,-1) -- (-0.5,-1);
  \draw[blue,thick,dashed] (1.5,-1) -- (1.5,1) -- (4.5,1) -- (4.5,-1) -- (1.5,-1);
\end{tikzpicture}
   \end{center}
   \caption{\label{f:strong_connected_components} Strongly connected
   components of a digraph (rectangles)}
\end{figure}
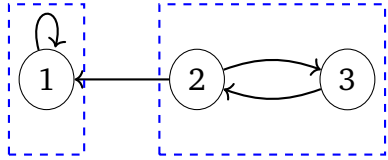

\begin{example}
    The vertex $\{ \text{poor} \}$ in the Markov digraph displayed in
    Figure~\ref{f:poverty_trap} is absorbing, since  $\{ \text{middle, rich}
    \}$ is not accessible from  $\{ \text{poor} \}$.
\end{example}

Two vertices $u$ and $v$ are said to \navy{communicate}\index{Communication
(graphs)} if $u \to v$ and $v \to u$.

\begin{Exercise}\label{ex:sceqrel}
    Let $(V, E)$ be a directed graph and write $u \sim v$ if $u$ and $v$
    communicate.  Show that $\sim$ is an equivalence relation (see  \S\ref{sss:eqclass}).
\end{Exercise}

Since communication is an equivalence relation, it induces a partition of $V$
into a finite collection of equivalence classes.  Within each of these
classes, all elements communicate.  These classes are called \navy{strongly
connected components}.  The graph itself is called \navy{strongly
connected}\index{Strongly connected} if there is only one such component; that
is, $v$ is accessible from $u$ for any pair $(u, v) \in V \times V$.  This
corresponds to the idea that any node can be reached from any other.  

\begin{example}
    Figure~\ref{f:strong_connected_components} shows a digraph with 
    strongly connected components $\{1\}$ and $\{2, 3\}$.
    The digraph is not strongly connected.
\end{example}

\begin{example}
    In Figure~\ref{f:rich_poor_no_label}, the digraph is strongly connected.  In
    contrast, in Figure~\ref{f:poverty_trap}, rich is not accessible from
    poor, so the graph is not strongly connected.  
\end{example}

Networkx can be used to test for communication and strong connectedness, as
well as to compute strongly connected components.  For example, applied to the
digraph in Figure~\ref{f:strong_connected_components}, the code
\begin{minted}{python}
G = nx.DiGraph()
G.add_edge(1, 1)
G.add_edge(2, 1)
G.add_edge(2, 3)
G.add_edge(3, 2)

list(nx.strongly_connected_components(G)) 
\end{minted}
prints \texttt{[\{1\}, \{2, 3\}]}.

\subsubsection{Aperiodicity}\label{sss:aperiodicgraph}

A \navy{cycle}\index{Cycle} $(u, v, w, \ldots, u)$ of a directed graph $\gG =
(V, E)$ is a directed walk in $\gG$ such that (i) the first and last vertices
are equal and (ii) no other vertex is repeated.  The graph is called
\navy{periodic}\index{Periodic digraph} if there exists a $k > 1$ such that
$k$ divides the length of every cycle.  A graph that fails to be periodic is
called \navy{aperiodic}.

\begin{example}
    In Figure~\ref{f:periodic}, the cycles are $(a, b, a)$, $(b, a, b)$, $(b, c, b)$,
    $(c, b, c)$, $(c, d, c)$ and $(d, c, d)$.  Hence the length of every cycle is $2$ and the graph is
    periodic.
\end{example}

\begin{figure}
   \begin{center}
       \begin{tikzpicture}
  \node[ellipse, draw] (1) at (0, 0) {a};
  \node[ellipse, draw] (2) at (2, 0) {b};
  \node[ellipse, draw] (3) at (4, 0) {c};
  \node[ellipse, draw] (4) at (6, 0) {d};
  \draw[->, thick, black]
  (1) edge [bend left=20, above] node {} (2)
  (2) edge [bend left=20, below] node {} (1)
  (2) edge [bend left=20, above] node {} (3)
  (3) edge [bend left=20, below] node {} (2)
  (3) edge [bend left=20, above] node {} (4)
  (4) edge [bend left=20, below] node {} (3);
\end{tikzpicture}
   \end{center}
   \caption{\label{f:periodic} A periodic digraph}
\end{figure}
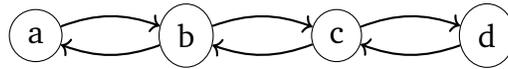

An obvious sufficient condition for aperiodicity is existence of even one
self-loop.  The digraphs in
Figures~\ref{f:rich_poor_no_label}--\ref{f:strong_connected_components} are
aperiodic for this reason.

The next result provides an easy way to understand aperiodicity for connected
graphs.  Proofs can be found in \cite{haggstrom2002finite} and
\cite{norris1998markov}.

\begin{lemma}\label{l:scaper}
    Let $\gG = (V, E)$ be a digraph.
    If $\gG$ is strongly connected, then $\gG$ is aperiodic if and only
    if, for all $v \in V$, there exists a $q \in \NN$ such that, for all $k
    \geq q$, there exists a directed walk of length $k$ from $v$ to $v$.
\end{lemma}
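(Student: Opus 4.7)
The plan is to convert the combinatorial statement about cycle lengths into an algebraic statement about the greatest common divisor of the set $L(v) := \{k \in \NN : \text{there exists a directed walk of length } k \text{ from } v \text{ to } v\}$. The crucial structural feature is that $L(v)$ is closed under addition, since concatenating two closed walks at $v$ produces another. My strategy is to show that, under strong connectedness, $\gcd(L(v))$ coincides for every vertex $v$ with the quantity $d := \gcd\{\text{lengths of cycles in } \gG\}$, and then to combine this with the standard numerical-semigroup fact that an additively closed $S \subseteq \NN$ with $\gcd(S) = 1$ contains all sufficiently large integers (Sylvester--Frobenius).

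The core step is the identity $\gcd(L(v)) = d$. One inclusion is direct: every closed walk can be decomposed into cycles by repeatedly peeling off a minimal repeated-vertex loop, so each element of $L(v)$ is a nonnegative integer combination of cycle lengths; hence $d$ divides $\gcd(L(v))$. For the reverse inclusion I would use strong connectedness: given any cycle $C$ of length $\ell$ passing through a vertex $u$, pick a directed walk $v \to u$ of length $a$ and a directed walk $u \to v$ of length $b$. Then $a+b \in L(v)$ and $a+\ell+b \in L(v)$, so $\gcd(L(v))$ divides their difference $\ell$. Letting $C$ range over all cycles of $\gG$ gives $\gcd(L(v)) \mid d$.

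With this identity in hand, both directions follow quickly. For ($\Rightarrow$), aperiodicity of $\gG$ means $d = 1$, hence $\gcd(L(v)) = 1$ for every $v$; applying the Sylvester--Frobenius fact to the additively closed set $L(v)$ yields the required $q$. For ($\Leftarrow$), fix any $v$ and the corresponding $q$. Then $q, q+1 \in L(v)$, so $d$ divides both $q$ and $q+1$, forcing $d \mid 1$ and hence $d = 1$, i.e., $\gG$ is aperiodic.

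The main obstacle I anticipate is the cycle-decomposition of closed walks: it is intuitive but needs a careful inductive argument (on walk length, showing that a non-cycle closed walk contains a strictly shorter closed subwalk at some repeated vertex, which one peels off as a cycle). The Sylvester--Frobenius step is classical and can simply be cited. A minor subtlety is that the reverse direction technically needs the hypothesis only at a single $v$, but the statement assumes it for all $v$, so no additional work is needed there.
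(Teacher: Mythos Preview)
The paper does not supply its own proof of this lemma; it simply cites \cite{haggstrom2002finite} and \cite{norris1998markov}. Your proposal is correct and is essentially the standard argument one finds in those references: reduce to the numerical-semigroup structure of $L(v)$, establish $\gcd(L(v)) = d$ via strong connectedness (the $a+b$ versus $a+\ell+b$ trick) together with cycle decomposition of closed walks, and then invoke the Sylvester--Frobenius fact that an additively closed subset of $\NN$ with $\gcd$ equal to $1$ contains all sufficiently large integers. The only step requiring real care, as you note, is the cycle decomposition of closed walks, and your proposed induction on walk length handles it cleanly.
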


It is common to call a vertex $v$ satisfying the condition in
Lemma~\ref{l:scaper} \navy{aperiodic}\index{Aperiodic}.  With this
terminology, Lemma~\ref{l:scaper} states that a strongly connected digraph is
aperiodic if and only if every vertex is aperiodic.

Networkx can be used to check for aperiodicity of vertices or graphs.  For
example, if \texttt{G} is a \texttt{DiGraph} object, then
\texttt{nx.is\_aperiodic(G)} returns \texttt{True} or \texttt{False} depending
on aperiodicity of \texttt{G}.

\subsubsection{Adjacency Matrices}\label{sss:admat}

There is a simple map between edges of a graph with fixed vertices and a
binary matrix called an adjacency matrix. The benefit of viewing connections
through adjacency matrices is that they bring the power of linear algebra to
the analysis of digraphs.   We illustrate this briefly here and extensively in
\S\ref{ss:wdig}.

Let $\gG = (V, E)$ be a digraph and let $|V| = n$.
If we enumerate the elements of $V$, writing them as $(v_1, \ldots, v_n)$,
then the $n \times n$ \navy{adjacency matrix} corresponding to $(V, E)$ is
defined by\footnote{Note that, in some applied fields, the adjacency matrix is
    transposed: $a_{ij}=1$ if there is an edge from $j$ to $i$, rather than
from $i$ to $j$.  We will avoid this odd and confusing definition (which
contradicts both standard graph theory and standard notational conventions in
the study of Markov chains).}
\begin{equation}\label{eq:digrapham}
    A = (a_{ij})_{1 \leq i, j \leq n}
    \quad \text{with} \quad
    a_{ij} = \1\{(v_i, v_j) \in E\}.
\end{equation}
For example, with $\{$poor, middle, rich$\}$ mapped to $(1, 2, 3)$, the adjacency
matrix corresponding to the digraph in Figure~\ref{f:poverty_trap} is
\begin{equation}\label{eq:adjpt}
    A = 
    \begin{pmatrix}
        1 & 0 & 0 \\
        1 & 1 & 1 \\
        1 & 1 & 1 
    \end{pmatrix}.
\end{equation}

An adjacency matrix provides us with enough information to recover the edges
of a graph.  More generally, given a set of vertices $V = (v_1, \ldots, v_n)$,
an $n \times n$ matrix $A = (a_{ij})_{1 \leq i, j \leq n}$ with binary entries
generates a digraph $\gG$ with vertices $V$ and edges $E$ equal to all $(v_i, v_j)
\in V \times V$ with $a_{ij} = 1$.  The adjacency matrix of this graph $(V,
E)$ is $A$.

\begin{Exercise}\label{ex:uddg}
    A digraph $(V, E)$ is called \navy{undirected} if $(u, v) \in E$ implies
    $(v, u) \in E$.
    What property does this imply on the adjacency matrix?
\end{Exercise}

\begin{Answer}
    If $(V, E)$ is undirected, then the adjacency matrix is symmetric.
\end{Answer}

\begin{remark}
    The idea that a digraph can be undirected, presented in
    Exercise~\ref{ex:uddg}, seems contradictory.  After
    all, a digraph is a \emph{directed} graph.  Another way to introduce
    undirected graphs is to define them as a vertex-edge pair $(V, E)$, where
    each edge $\{u, v\} \in E$ is an unordered pair, rather than an ordered
    pair $(u, v)$.  However, the definition in Exercise~\ref{ex:uddg} is
    essentially equivalent and more convenient for our purposes, since we mainly
    study directed graphs.
\end{remark}

Like Networkx, the QuantEcon Python library \texttt{quantecon} supplies a
graph object that implements certain graph-theoretic algorithms.  
The set of available algorithms is more limited but each one is faster,
accelerated by just-in-time compilation. In the case of
QuantEcon's \texttt{DiGraph} object, an instance is created via the adjacency matrix.  For
example, to construct a digraph corresponding to Figure~\ref{f:poverty_trap}
we use the corresponding adjacency matrix \eqref{eq:adjpt}, as follows:

\begin{minted}{python}
import quantecon as qe
import numpy as np

A = ((1, 0, 0),
     (1, 1, 1,),
     (1, 1, 1))
A = np.array(A)       # Convert to NumPy array
G = qe.DiGraph(A)
\end{minted}

Let's print the set of strongly connected components, as a list of NumPy
arrays:

\begin{minted}{python}
G.strongly_connected_components 
\end{minted}

The output is \texttt{[array([0]), array([1, 2])]}.

\subsection{Weighted Digraphs}\label{ss:wdig}

Early quantitative work on networks tended to focus on unweighted digraphs,
where the existence or absence of an edge is treated as sufficient information
(e.g., following or not following on social media, existence or absence of a
road connecting two towns). However, for some networks, this binary measure is
less significant than the size or strength of the connection.  

As one illustration, consider
Figure~\ref{f:financial_network_analysis_visualization}, which shows flows of
funds (i.e., loans) between private banks, grouped by country of origin.  An
arrow from Japan to the US, say, indicates aggregate claims held by Japanese
banks on all US-registered banks, as collected by the Bank of International
Settlements (BIS). The size of each node in the figure is increasing in the
total foreign claims of all other nodes on this node. The widths of the arrows
are proportional to the foreign claims they represent.\footnote{Data for the
    figure was obtained from the BIS consolidated
    banking statistics, for Q4 of 2019. Our calculations used the immediate
    counterparty basis for financial claims of domestic and foreign banks,
    which calculates the sum of cross-border claims and local claims of
    foreign affiliates in both foreign and local currency. The foreign claim
    of a node to itself is set to zero.}
The country codes are given in Table~\ref{table:cfn}.

\begin{figure}
   \centering
   \scalebox{0.9}{\includegraphics[trim = 20mm 20mm 0mm 20mm, clip]{
   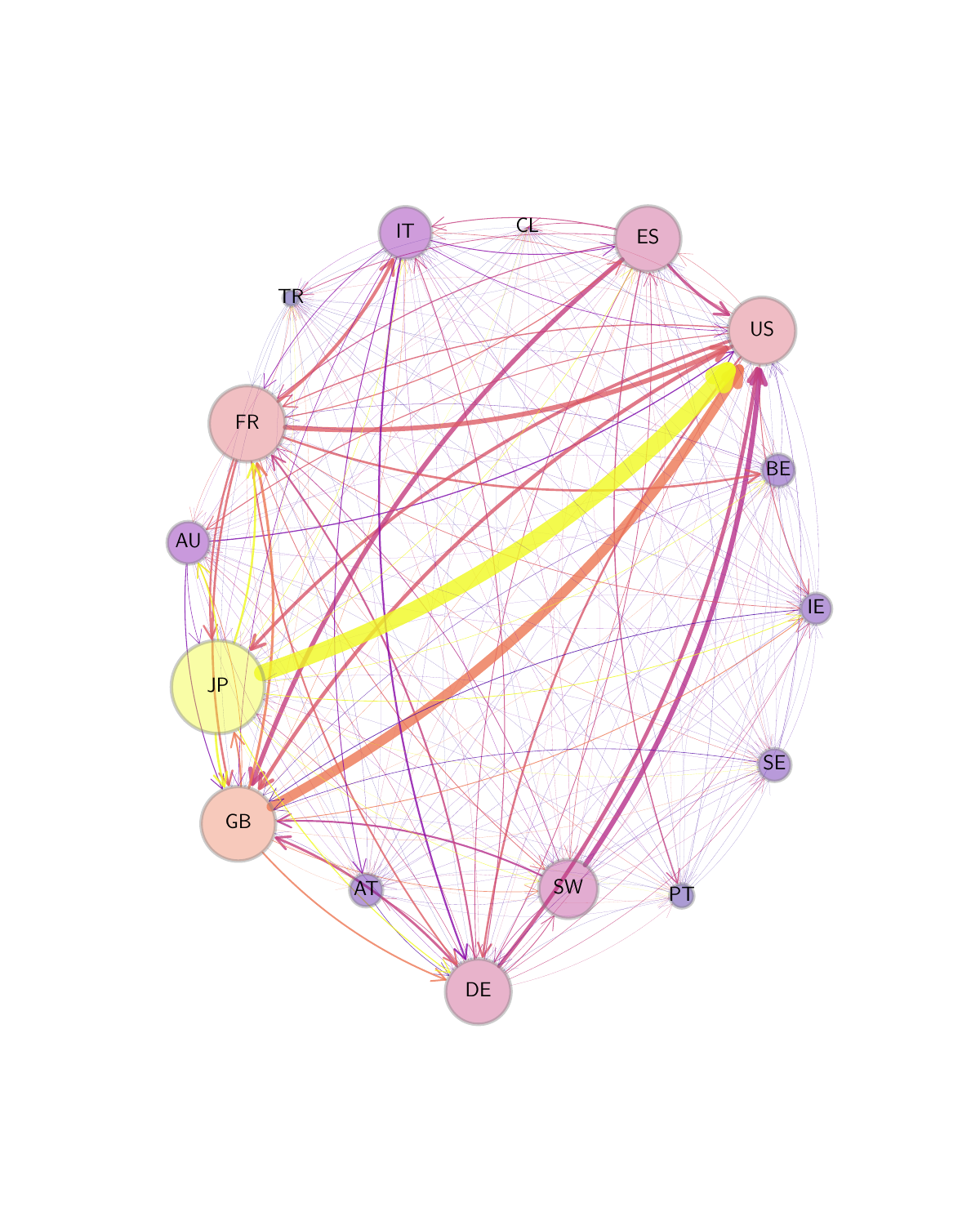}}
   \caption{\label{f:financial_network_analysis_visualization} International
   private credit flows by country}
\end{figure}

\begin{table}
    \small
    \centering
    \addtolength{\tabcolsep}{-2pt}
    \centering
    \begin{tabular}{ll|ll|ll|ll}
        \hline 
        AU  & Australia  & DE  & Germany & CL  & Chile & ES & Spain \\
        PT  &  Portugal & FR & France & TR  & Turkey & GB & United Kingdom \\
        US  & United States & IE & Ireland & AT  & Austria & IT & Italy \\
        BE  & Belgium & JP & Japan & SW & Switzerland & SE & Sweden \\
        \hline
    \end{tabular}
    \caption{\label{table:cfn} Codes for the 16-country financial network}
\end{table}

In this network, an edge $(u, v)$ exists for almost every choice of $u$ and
$v$ (i.e., almost every country in the network).\footnote{In fact arrows
    representing foreign claims less than US\$10 million are cut from
    Figure~\ref{f:financial_network_analysis_visualization}, so the network is
even denser than it appears.}  Hence existence of an edge is not
particularly informative.  To understand the network, we need to record not
just the existence or absence of a credit flow, but also the size of the flow.
The correct data structure for recording this information is a ``weighted
directed graph,'' or ``weighted digraph.'' In this section we define this
object and investigate its properties.

\subsubsection{Definitions}\label{sss:gd}

A \navy{weighted digraph}\index{Weighted digraph} $\gG$ is a triple $(V, E, w)$
such that $(V, E)$ is a digraph and $w$ is a function from $E$ to $(0,
\infty)$, called the \navy{weight function}\index{weight function}.

\begin{remark}
    Weights are traditionally regarded as nonnegative. In this text we insist
    that weights are also positive, in the sense that $w(u, v) > 0$ for all
    $(u, v) \in E$.  The reason is that the intuitive notion of zero weight is
    understood, here and below, as absence of a connection.  In other words,
    if $(u, v)$ has ``zero weight,'' then $(u, v)$ is not in $E$, so $w$ is
    not defined on $(u, v)$.
\end{remark}

\begin{example}
    As suggested by the discussion above, 
    the graph shown in Figure~\ref{f:financial_network_analysis_visualization}
    can be viewed as a weighted digraph.  Vertices are countries of origin
    and an edge exists between country $u$ and country $v$ when private banks
    in $u$ lend nonzero quantities to banks in $v$.  The weight assigned to
    edge $(u, v)$ gives total loans from $u$ to $v$ as measured according to
    the discussion of Figure~\ref{f:financial_network_analysis_visualization}.
\end{example}

\begin{example}
    Figure~\ref{f:rich_poor} shows a weighted digraph, with arrows 
    representing edges of the induced digraph (compare with the unweighted
    digraph in Figure~\ref{f:rich_poor_no_label}). The numbers next to
    the edges are the weights.  In this case, you can think of the numbers
    on the arrows as transition probabilities for a household over, say, one
    year.  For example, a rich household has a 10\% chance of becoming poor.  
\end{example}

\begin{figure}
   \begin{center}
       \scalebox{1.0}{\begin{tikzpicture}
  \node[ellipse, draw] (0) at (0, 0) {poor};
  \node[ellipse, draw] (1) at (3, 2) {middle class};
  \node[ellipse, draw] (2) at (6, 0) {rich};
  
  \draw[->, thick, black]
  (0) edge [bend left=10, above] node {$0.1$} (1)
  (1) edge [bend left=10, below] node {$0.4$} (0)
  (2) edge [bend left=10, below] node {$0.1$} (0)
  (1) edge [bend left=10, above] node {$0.2$} (2)
  (2) edge [bend left=10, below] node {$0.1$} (1)
  (0) edge [loop above] node {$0.9$}(0)
  (1) edge [loop above] node {$0.4$} (1) 
  (2) edge [loop above] node {$0.8$} (2);
\end{tikzpicture}}
   \end{center}
   \caption{\label{f:rich_poor} A weighted digraph}
\end{figure}
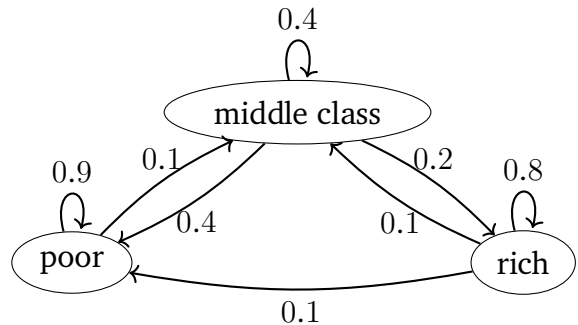

The definitions of \navy{accessibility}, \navy{communication},
\navy{periodicity} and \navy{connectedness} extend to any weighted digraph
$\gG
= (V, E, w)$ by applying them to $(V, E)$.  For example,
$(V, E, w)$ is called strongly connected if $(V, E)$ is strongly connected. The
weighted digraph in Figure~\ref{f:rich_poor} is strongly connected.

\subsubsection{Adjacency Matrices of Weighted Digraphs}

In \S\ref{sss:admat} we discussed adjacency matrices of unweighted digraphs.
The \navy{adjacency matrix}\index{Adjacency matrix} of a \emph{weighted}
digraph $(V, E, w)$ with vertices $(v_1, \ldots, v_n)$ is the matrix
\begin{equation*}
    A = (a_{ij})_{1 \leq i,j \leq n}
    \quad \text{with} \quad
    a_{ij} =
    \begin{cases}
        w(v_i, v_j) & \text{ if } (v_i, v_j) \in E
        \\
        0           & \text{ otherwise}.
    \end{cases}
\end{equation*}
Clearly, once the vertices in $V$ are enumerated, the weight function and
adjacency matrix provide essentially the same information.  We often work
with the latter, since it facilitates computations.

\begin{example}
    With $\{$poor, middle, rich$\}$ mapped to $(1, 2, 3)$, 
    the adjacency matrix corresponding to the weighted digraph in
    Figure~\ref{f:rich_poor} is
    \begin{equation}\label{eq:fnegwa0}
        A = 
        \begin{pmatrix}
            0.9 & 0.1 & 0 \\
            0.4 & 0.4 & 0.2 \\
            0.1 & 0.1 & 0.8
        \end{pmatrix}.
    \end{equation}
\end{example}

In QuantEcon's \texttt{DiGraph} implementation, weights are recorded via the
keyword \texttt{weighted}:

\begin{minted}{python}
A = ((0.9, 0.1, 0.0),
     (0.4, 0.4, 0.2),
     (0.1, 0.1, 0.8))
A = np.array(A)
G = qe.DiGraph(A, weighted=True)    # Store weights
\end{minted}

One of the key points to remember about adjacency matrices is that taking the
transpose ``reverses all the arrows'' in the associated digraph.  

\begin{example}
    The digraph in Figure~\ref{f:network_liabfin} can be interpreted as a
    stylized version of a financial network, with vertices as banks and edges
    showing flow of funds, similar to
    Figure~\ref{f:financial_network_analysis_visualization} on
    page~\pageref{f:financial_network_analysis_visualization}.  For example,
    we see that bank 2 extends a loan of size 200 to bank 3.
    The corresponding adjacency matrix is
    \begin{equation}\label{eq:fnegwa}
        A = 
        \begin{pmatrix}
            0 & 100 & 0 & 0 & 0 \\
            50 & 0 & 200 & 0 & 0 \\
            0 & 0 & 0 & 100 & 0 \\
            0 & 500 & 0 & 0 & 50 \\
            150 & 0 & 250 & 300 & 0 
        \end{pmatrix}.
    \end{equation}
    The transposition is
    \begin{equation}\label{eq:fnegwat}
        A^\top = 
        \begin{pmatrix}
            0   & 50  & 0   & 0   & 150 \\
            100 & 0   & 0   & 500 & 0 \\
            0   & 200 & 0   & 0   & 250 \\
            0   & 0   & 100 & 0   & 300 \\
            0   & 0   & 0   & 50  & 0 
        \end{pmatrix}.
    \vspace{0.3em}
    \end{equation}
    The corresponding network is visualized in
    Figure~\ref{f:network_liabfin_trans}.  This figure shows the network of
    liabilities after the loans have been granted.
    Both of these networks (original and transpose) are useful for analysis of
    financial markets (see, e.g., Chapter~\ref{c:fpms}).
\end{example}

\begin{figure}
   \begin{center}
    \begin{tikzpicture}
  \node[circle, draw] (1) at (2.5, 3) {1};
  \node[circle, draw] (2) at (-1, 2) {2};
  \node[circle, draw] (3) at (-2, -0.5) {3};
  \node[circle, draw] (4) at (1.5, -1) {4};
  \node[circle, draw] (5) at (3.5, 0) {5};
  \draw[->, thick, black]
  (1) edge [bend left=20, below] node {$100$} (2)
  (2) edge [bend left=20, above] node {$50$} (1)
  (2) edge [bend right=20, left] node {$200$} (3)
  (3) edge [bend right=20, below] node {$100$} (4)
  (4) edge [bend right=20, right] node {$500$} (2)
  (5) edge [bend right=20, below left] node {$250$} (3)
  (5) edge [bend left=30, below] node {$300$} (4)
  (4) edge [bend left=30, below] node {$50$} (5)
  (5) edge [bend right=30, right] node {$150$} (1);
\end{tikzpicture}
    \caption{\label{f:network_liabfin} A network of credit flows across institutions}
   \end{center}
\end{figure}
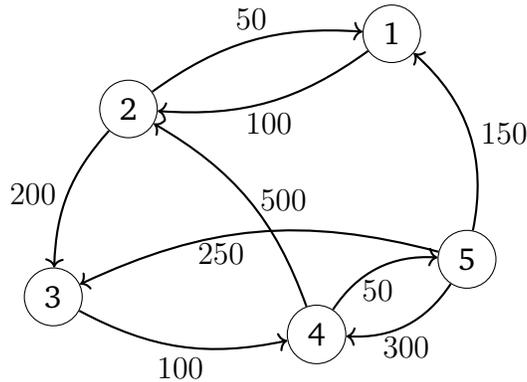

\begin{figure}
   \begin{center}
    \begin{tikzpicture}
  \node[circle, draw] (1) at (2.5, 3) {1};
  \node[circle, draw] (2) at (-1, 2) {2};
  \node[circle, draw] (3) at (-2, -0.5) {3};
  \node[circle, draw] (4) at (1.5, -1) {4};
  \node[circle, draw] (5) at (3.5, 0) {5};
  \draw[<-, thick, black]
  (1) edge [bend left=20, below] node {$100$} (2)
  (2) edge [bend left=20, above] node {$50$} (1)
  (2) edge [bend right=20, left] node {$200$} (3)
  (3) edge [bend right=20, below] node {$100$} (4)
  (4) edge [bend right=20, right] node {$500$} (2)
  (5) edge [bend right=20, below left] node {$250$} (3)
  (5) edge [bend left=30, below] node {$300$} (4)
  (4) edge [bend left=30, below] node {$50$} (5)
  (5) edge [bend right=30, right] node {$150$} (1);
\end{tikzpicture}
    \caption{\label{f:network_liabfin_trans} The transpose: a network of liabilities}
   \end{center}
\end{figure}
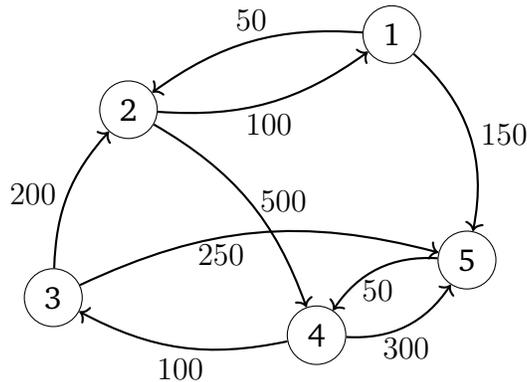

It is not difficult to see that each nonnegative $n \times n$ matrix $A = (a_{ij})$ can be
viewed as the adjacency matrix of a weighted digraph with vertices equal
to $\natset{n}$.  The weighted digraph $\gG = (V, E, w)$ in question is 
formed by setting 
\begin{equation*}
    V = \natset{n},
    \quad
    E = \setntn{(i,j) \in V \times V}{ a_{ij} > 0}
    \quad \text{and} \quad
    w(i, j) = a_{ij} \text{ for all } (i,j) \in E.
\end{equation*}
We call $\gG$ the \navy{weighted digraph induced by $A$}.

The next exercise helps to reinforce the point that transposes reverse the
edges.

\begin{Exercise}\label{ex:rar}
    Let $A= (a_{ij})$ be a nonnegative $n \times n$ matrix and let $\gG =
    (\natset{n}, E, w)$ and $\gG' = (\natset{n}, E', w')$ be the weighted
    digraphs induced by $A$ and $A^\top$, respectively.  Show that 
    \begin{enumerate}
        \item $(j, k) \in E'$ if and only if $(k, j) \in E$.
        \item $j \to k$ in $\gG'$ if and only if $k \to j$ in $\gG$.
    \end{enumerate}

\end{Exercise}

\begin{Answer}
    Let $A^{\top} = (a'_{ij})$, so that $a'_{ij} = a_{ji}$ for each $i, j$.  
    By definition, we have 
    \begin{equation*}
        (j, k) \in E' 
        \; \iff \;
        a'_{jk} > 0
        \; \iff \;
        a_{kj} > 0
        \; \iff \;
        (k, j) \in E,
    \end{equation*}
    which proves (i).  Regarding (ii), to say that 
    $k$ is accessible from $j$ in $\gG'$ means that we can find vertices
    $i_1, \ldots, i_m$ that form a directed path from $j$ to $k$ under $\gG'$, in the sense that 
    such that $i_1 = j$, $i_m=k$, and each successive
    pair $(i_{\ell}, i_{\ell+1})$ is in $E'$.  But then, by (i), 
    $i_m, \ldots, i_1$ provides a directed path from $k$ to $j$ under $\gG$,
    since and each successive pair $(i_{\ell+1}, i_{\ell})$ is in
    $E$.
\end{Answer}

\subsubsection{Application: Quadratic Network Games}\label{sss:quadgame}

\cite{acemoglu2016networks} and \cite{zenou2016key} consider quadratic
games with $n$ agents where agent $k$ seeks to maximize
\begin{equation}\label{eq:uing}
    u_k(x) 
    := -\frac{1}{2} x_k^2 + \alpha x^\top A x + x_k \epsilon_k.
\end{equation}
Here $x = (x_i)_{i=1}^n$, $A$ is a symmetric matrix with $a_{ii}=0$ for all
$i$, $\alpha \in (0,1)$ is a parameter and $\epsilon = (\epsilon_i)_{i=1}^n$
is a random vector.  (This is the set up for the quadratic game in \S21.2.1 of
\cite{acemoglu2016networks}.)  The $k$-th agent takes the decisions $x_j$ as
given for all $j \not= k$ when maximizing \eqref{eq:uing}.  

In this context, $A$ is understood as the adjacency matrix of a graph with
vertices $V = \natset{n}$, where each vertex is one agent.  We can reconstruct
the weighted digraph $(V, E, w)$ by setting $w(i, j) = a_{ij}$ and letting $E$
be all $(i,j)$ pairs in $\natset{n} \times \natset{n}$ with $a_{ij} > 0$.
The weights identify some form of relationship between the agents, such as
influence or friendship.

\begin{Exercise}
    A \navy{Nash equilibrium}\index{Nash equilibrium} for the quadratic network
    game is a vector $x^* \in \RR^n$ such that, for all $i \in \natset{n}$,
    the choice $x_i^*$ of agent $i$ maximizes~\eqref{eq:uing} taking $x_j^*$
    as given for all $j \not= i$.  Show that, whenever $r(A) < 1/\alpha$,
    a unique Nash equilibrium $x^*$ exists in $\RR^n$ and, moreover,
    $x^* := (I - \alpha A)^{-1} \epsilon$.
\end{Exercise}

\begin{Answer}
    Recalling that $\partial / (\partial x_k) x^\top A x = (x^\top A)_k$, the first
    order condition corresponding to \eqref{eq:uing}, taking the actions of other
    players as given, is
    \begin{equation*}
        x_k = \alpha (x^\top A)_k + \epsilon_k
        \qquad (k \in \natset{n}).
    \end{equation*}
    Concatenating into a row vector and then taking the transpose yields $x =
    \alpha A x + \epsilon$, where we used the fact that $A$ is symmetric.
    Since $r(\alpha A) = \alpha r(A)$, the condition $r(A) < 1/\alpha$ implies
    that $r(\alpha A) < 1$, so, by the Neumann series lemma, the unique solution 
    is $x^* = (I - \alpha A)^{-1} \epsilon$.
\end{Answer}

The network game described in this section has many interesting applications,
including social networks, crime networks and peer networks.  References are
provided in \S\ref{s:cnni}.

\subsubsection{Properties}

In this section, we examine some of the fundamental properties of and
relationships among digraphs, weight functions and adjacency matrices.
Throughout this section, the vertex set $V$ of any graph we examine will be
set to $\natset{n}$.  This costs no loss of generality, since, in this
text, the vertex set of a digraph is always finite and nonempty.

Also, while we refer to weighted digraphs for their additional generality, the
results below connecting adjacency matrices and digraphs are valid for
unweighted digraphs.  Indeed, an unweighted digraph $\gG = (V, E)$ can be mapped
to a weighted digraph by introducing a weight function that maps each element
of $E$ to unity.  The resulting adjacency matrix agrees with our original
definition for unweighted digraphs in \eqref{eq:digrapham}.

As an additional convention, if $A$ is an adjacency matrix, and $A^k$ is the
$k$-th power of $A$, then we write $a^k_{ij}$ for a typical element of $A^k$. 
With this notation, we observe that, since
$A^{(s+t)} = A^s A^t$, the rules of matrix multiplication imply
\begin{equation}\label{eq:accip}
    a^{s+t}_{ij}
    = \sum_{\ell=1}^n a^s_{i \ell} \, a^t_{\ell j}
    \qquad (i, j \in \natset{n}, \;\; s,t \in \NN).
\end{equation}
($A^0$ is the identity.) The next proposition explains the significance of the
powers.

\begin{proposition}\label{p:accesspos}
    Let $\gG$ be a weighted digraph with adjacency matrix $A$. For distinct
    vertices $i, j \in \natset{n}$ and $k \in \NN$, we have
    \begin{equation*}
        a^k_{i j} > 0
        \; \iff \;
        \text{ there exists a directed walk of length $k$ from $i$ to $j$}.
    \end{equation*}
\end{proposition}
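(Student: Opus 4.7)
The plan is to prove the equivalence by induction on the walk length $k$, noting that the restriction to distinct $i, j$ is inessential: the same equivalence holds for arbitrary pairs of vertices, and the induction runs most smoothly when stated in this slightly more general form. Throughout, I would exploit the fact that $A \geq 0$ (entries of an adjacency matrix are by construction nonnegative), so that nonnegative sums and products behave predictably.

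For the base case $k = 1$, the claim reduces to the definition of the adjacency matrix: $a^1_{ij} = a_{ij}$, which is positive iff $(v_i, v_j) \in E$, iff there is a directed walk of length one from $i$ to $j$. For the inductive step, assume the equivalence for some $k \geq 1$ and all vertex pairs. Using \eqref{eq:accip} with $s = k$ and $t = 1$,
\begin{equation*}
    a^{k+1}_{ij} = \sum_{\ell = 1}^n a^k_{i\ell} \, a_{\ell j}.
\end{equation*}
Because $A \geq 0$, induction gives $A^k \geq 0$, so every summand on the right is nonnegative. A sum of nonnegative numbers is strictly positive iff at least one summand is strictly positive, and a product of two nonnegative numbers is positive iff both factors are positive. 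Therefore $a^{k+1}_{ij} > 0$ iff there exists $\ell \in \natset{n}$ with $a^k_{i\ell} > 0$ and $a_{\ell j} > 0$.

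By the inductive hypothesis this is in turn equivalent to the existence of a directed walk $(i = u_0, u_1, \ldots, u_k = \ell)$ of length $k$ from $i$ to $\ell$ together with an edge $(\ell, j) \in E$. Concatenating gives a directed walk $(u_0, u_1, \ldots, u_k, j)$ of length $k+1$ from $i$ to $j$. Conversely, any directed walk of length $k+1$ from $i$ to $j$ decomposes uniquely as a walk of length $k$ from $i$ to its penultimate vertex $\ell$ followed by the edge $(\ell, j)$, so the existence of such a walk produces an index $\ell$ with both factors positive. This closes the induction.

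The argument is almost entirely bookkeeping; the only point requiring any care is the nonnegativity observation, which justifies passing positivity back and forth between the sum, individual summands, and the two factors within each summand. Without $A \geq 0$ one could have cancellation in the sum and the induction would break — so the main (mild) obstacle to flag is simply that this proof depends essentially on the sign structure of adjacency matrices, not just on \eqref{eq:accip} itself.
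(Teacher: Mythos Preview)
Your proof is correct and follows essentially the same induction-on-$k$ approach as the paper, decomposing a walk of length $k+1$ into a walk of length $k$ plus a final edge and using \eqref{eq:accip} with nonnegativity to pass positivity through the sum. The paper's version is terser (it writes out only the $\Leftarrow$ direction and leaves $\Rightarrow$ as an exercise), while you handle both directions at once and make the role of $A \geq 0$ explicit, but the underlying argument is the same.
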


\begin{proof}
    ($\Leftarrow$ ).  The statement is true by definition when $k=1$.  Suppose
    in addition that $\Leftarrow$ holds at $k-1$, and suppose
    there exists a directed walk $(i, \ell, m, \ldots, n, j)$ of length $k$
    from $i$ to $j$.  By the induction hypothesis we have $a^{k-1}_{i n} >
    0$.  Moreover, $(n, j)$ is part of a directed walk, so $a_{n j} > 0$.  Applying
    \eqref{eq:accip} now gives $a^k_{i j} > 0$.

    ($\Rightarrow$).  Left as an exercise (just use the same logic).
\end{proof}

\begin{example}
    In \S\ref{s:amcs} we show that if
    elements of $A$ represent
    one-step transition probabilities across states, then elements of $A^t$,
    the $t$-th power of $A$, provide $t$-step transition probabilities.  
    In Markov process theory, \eqref{eq:accip} is called the
    \emph{Chapman--Kolmogorov equation}.
\end{example}

In this context, the next result is fundamental.

\begin{theorem}\label{t:sconir}
    Let $\gG$ be a weighted digraph.  The following statements are equivalent:
    \begin{enumerate}
        \item $\gG$ is strongly connected. 
        \item The adjacency matrix generated by $\gG$ is irreducible.
    \end{enumerate}
\end{theorem}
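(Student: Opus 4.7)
My plan is to reduce the equivalence directly to Proposition~\ref{p:accesspos}, which already translates existence of directed walks into positivity of entries of powers of $A$. The only bit of bookkeeping needed is to unpack the definition of irreducibility: since $A \geq 0$ implies $A^m \geq 0$ for all $m$, the $(i,j)$ entry of $\sum_{m=0}^\infty A^m$ is strictly positive if and only if $a^m_{ij} > 0$ for some $m \in \ZZ_+$. So irreducibility is equivalent to the statement ``for every pair $(i,j) \in \natset{n}^2$, there is some $m \geq 0$ with $a^m_{ij} > 0$.''

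For ($\text{(i)} \Rightarrow \text{(ii)}$), I would assume $\gG$ is strongly connected and fix $i,j \in \natset{n}$. If $i=j$, then $a^0_{ii} = 1 > 0$ because $A^0 = I$, so the $(i,i)$ entry of the Neumann-type sum is positive. If $i \neq j$, strong connectedness yields a directed path, hence a directed walk, from $i$ to $j$ of some length $k \geq 1$; Proposition~\ref{p:accesspos} then gives $a^k_{ij} > 0$, and again the $(i,j)$ entry of $\sum_{m=0}^\infty A^m$ is strictly positive. Hence $A$ is irreducible.

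For ($\text{(ii)} \Rightarrow \text{(i)}$), I would assume $A$ is irreducible and pick any pair $(i,j)$ with $i \neq j$. By the reformulation above there exists $m \geq 0$ with $a^m_{ij} > 0$; since $a^0_{ij} = 0$ whenever $i \neq j$, this $m$ must satisfy $m \geq 1$. Applying Proposition~\ref{p:accesspos} in the reverse direction produces a directed walk of length $m$ from $i$ to $j$, which contains a directed path from $i$ to $j$ (by deleting repeated vertices, using \S\ref{sss:coms}). Thus $j$ is accessible from $i$, and since $(i,j)$ was arbitrary, $\gG$ is strongly connected.

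There is really no hard step here; the only mildly delicate point is handling the diagonal separately (because Proposition~\ref{p:accesspos} is stated for distinct vertices) and being explicit that irreducibility in the form $\sum_{m \geq 0} A^m \gg 0$ is equivalent to ``some power has a positive entry at every coordinate,'' which uses only nonnegativity of the summands $A^m$.
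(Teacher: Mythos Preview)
Your proof is correct and follows essentially the same route as the paper's: reduce both directions to Proposition~\ref{p:accesspos}, handle the diagonal via $A^0 = I$, and use nonnegativity of the $A^m$ to identify irreducibility with ``some power has a positive $(i,j)$ entry for every pair.'' The paper's argument is terser (it does both directions in one sentence), but the content is the same; your extra care in extracting a directed path from the walk and in separating the $i=j$ case just makes explicit what the paper leaves implicit.
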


\begin{proof}
    Let $\gG$ be a weighted digraph with adjacency matrix $A$.   By
    Proposition~\ref{p:accesspos}, strong connectedness of $\gG$ is equivalent
    to the statement that, for each $i, j \in V$, we can find a $k \geq 0$
    such that $a^k_{ij} > 0$. (If $i=j$  then set $k=0$.) This, in turn, is
    equivalent to $\sum_{m=0}^\infty A^m \gg 0$, which is irreducibility of
    $A$.
\end{proof}

\begin{example}
    Strong connectivity fails in the digraph in Figure~\ref{f:io_reducible}, since
    vertex 4 is a source.  By
    Theorem~\ref{t:sconir}, the adjacency matrix must be reducible.
\end{example}

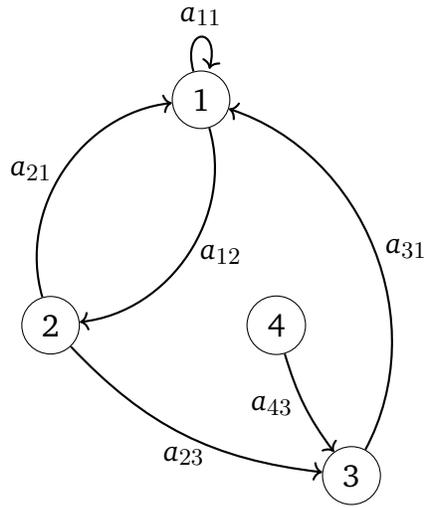
\begin{figure}
   \begin{center}
    \begin{tikzpicture}
  \node[circle, draw] (1) at (-1, 3) {1};
  \node[circle, draw] (2) at (-3, 0) {2};
  \node[circle, draw] (3) at (1, -2) {3};
  \node[circle, draw] (4) at (0, 0) {4};
  \draw[->, thick, black]
  (1) edge [bend left=50, right] node {$a_{12}$} (2)
  (2) edge [bend left=50, left] node {$a_{21}$} (1)
  (2) edge [bend right=20, below] node {$a_{23}$} (3)
  (3) edge [bend right=50, right] node {$a_{31}$} (1)
  (4) edge [bend right=10, left] node {$a_{43}$} (3)
  (1) edge [loop above] node {$a_{11}$} (1);
\end{tikzpicture}
    \caption{\label{f:io_reducible} Failure of strong connectivity}
   \end{center}
\end{figure}

We will find that the property of being primitive is valuable for analysis.
(The Perron--Frobenius Theorem hints at this.)  What do we need to add to strong
connectedness to obtain primitiveness?

\begin{theorem}\label{t:scaperpr}
    For a weighted digraph $\gG=(V, E, w)$, the following statements are equivalent:
    \begin{enumerate}
        \item $\gG$ is strongly connected and aperiodic. 
        \item The adjacency matrix generated by $\gG$ is primitive.
    \end{enumerate}
\end{theorem}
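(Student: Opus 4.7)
The plan is to convert both properties into statements about walks using Proposition~\ref{p:accesspos} (and its obvious extension to the case $i = j$, which follows by the same Chapman--Kolmogorov identity), and then invoke Lemma~\ref{l:scaper} to handle aperiodicity. Fix a weighted digraph $\gG$ with adjacency matrix $A$.

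For (ii) $\Rightarrow$ (i): Assume $A$ is primitive, so $A^m \gg 0$ for some $m \in \NN$. Strong connectedness follows at once, since for every ordered pair $(i, j)$ the inequality $a^m_{ij} > 0$ yields a directed walk of length $m$ from $i$ to $j$. For aperiodicity, I would first observe that $A$ has no zero row (if row $i$ of $A$ were zero, the identity $a^m_{ij} = \sum_k a_{ik} a^{m-1}_{kj}$ would force row $i$ of $A^m$ to vanish, contradicting $A^m \gg 0$) and then argue by induction that $A^k \gg 0$ for all $k \geq m$: writing $A^{k+1} = A \cdot A^k$, positivity of $A^k$ together with the fact that every row of $A$ has at least one positive entry yields $a^{k+1}_{ij} > 0$. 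Consequently, for every vertex $v$ and every $k \geq m$ there is a directed walk of length $k$ from $v$ to $v$, so Lemma~\ref{l:scaper} delivers aperiodicity (strong connectedness is already in hand).

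For (i) $\Rightarrow$ (ii): Assume $\gG$ is strongly connected and aperiodic. By Lemma~\ref{l:scaper}, for each $v \in V$ there is some $q_v \in \NN$ such that, for every $k \geq q_v$, a directed walk of length $k$ from $v$ to $v$ exists. By strong connectedness, for each ordered pair $(i, j)$ there is a walk of some length $\ell_{ij}$ from $i$ to $j$. Set
\begin{equation*}
    m := \max_{i, j \in V} \bigl( \ell_{ij} + q_j \bigr),
\end{equation*}
which is well-defined because $V$ is finite. For any pair $(i, j)$ and any $k \geq q_j$ I can concatenate a length-$\ell_{ij}$ walk from $i$ to $j$ with a length-$k$ self-walk at $j$, producing a walk from $i$ to $j$ of length $\ell_{ij} + k$; choosing $k = m - \ell_{ij} \geq q_j$ yields a walk of length exactly $m$. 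By Proposition~\ref{p:accesspos} this gives $a^m_{ij} > 0$ for every $(i, j)$, i.e., $A^m \gg 0$, which is primitiveness.

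The main friction point is the forward direction: one must verify that a single exponent $m$ works uniformly for all pairs $(i, j)$, and that the concatenation of a connecting walk and a loop truly produces a walk of the prescribed length. Both are handled by finiteness of $V$ and by the freedom Lemma~\ref{l:scaper} gives us in choosing loop length, so no delicate estimate is required once the construction is set up.
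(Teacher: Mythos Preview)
Your proof is correct and follows essentially the same route as the paper: both directions hinge on Lemma~\ref{l:scaper} and Proposition~\ref{p:accesspos}, with the forward direction obtained by concatenating a connecting walk with a self-loop at the endpoint and taking a uniform maximum over pairs, and the backward direction obtained by showing that $A^m \gg 0$ propagates to all higher powers because $A$ has no zero row. The only cosmetic difference is that the paper deduces the no-zero-row property from strong connectedness (no sinks) rather than directly from $A^m \gg 0$, but the arguments are otherwise interchangeable.
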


\begin{proof}[Proof of Theorem~\ref{t:scaperpr}]
    Throughout the proof we set $V=\natset{n}$.
    First we show that, if $\gG$ is aperiodic and strongly connected, then, for all $i,
    j \in V$, there exists a $q \in \NN$ such that $a^k_{ij} > 0$
    whenever $k \geq q$.  To this end, pick any $i,j$ in $V$.   Since $\gG$ is
    strongly connected,
    there exists an $s \in \NN$ such that $a^s_{ij} > 0$.  Since $\gG$ is
    aperiodic, we can find an $m \in \NN$ such that $\ell \geq m$ implies
    $a^\ell_{jj} > 0$.  Picking $\ell \geq m$ and applying~\eqref{eq:accip},
    we have
    \begin{equation*}
        a^{s+\ell}_{ij}
        = \sum_{r \in V} a^s_{i r} a^\ell_{r j}
        \geq  a^s_{ij} a^\ell_{jj}
        > 0.
    \end{equation*}
    Thus, with $t = s + m$, we have $a^k_{ij} > 0$ whenever $k \geq t$.

    ((i) $\Rightarrow$ (ii)).  By the preceding argument, given any $i, j \in V$,
    there exists an $s(i,j) \in \NN$ such that $a^m_{ij} > 0$ whenever $m
    \geq s(i,j)$.  Setting $k := \max s(i, j)$ over all $(i,j)$ yields
    $A^k \gg 0$.

    ((ii) $\Rightarrow$ (i)). Suppose that $A$ is primitive.  Then, for some
    $k \in \NN$, we have $A^k \gg 0$.  Strong connectedness of the digraph
    follows directly from Proposition~\ref{p:accesspos}.  It remains to check
    aperiodicity.

    Aperiodicity will hold if we can establish that $a^{k+t}_{ii} > 0$ for all $t \geq 0$.
    To show this, it suffices to show that $A^{k + t}  \gg 0$ for all $t \geq 0$.
    Moreover, to prove the latter, we need only show that $A^{k + 1}  \gg 0$,
    since the claim then follows from induction.

    To see that $A^{k+1} \gg 0$, observe that, for any given
    $i, j$, the relation \eqref{eq:accip} implies
    \begin{equation*}
        a^{k+1}_{ij}
        = \sum_{\ell \in V} a_{i\ell} a^k_{\ell j}
        \geq \bar a  \sum_{\ell \in V} a_{i \ell}.
    \end{equation*}
    where $\bar a := \min_{\ell \in V} a^k_{\ell j} > 0$.
    The proof will be done if $\sum_{\ell \in V} a_{i \ell} > 0$.  But this
    must be true, since otherwise vertex $i$ is a sink, which contradicts
    strong connectedness.
\end{proof}

\begin{example}
    In Exercise~\ref{ex:pwprop} we worked hard to show that $P_w$ is
    irreducible if and only if $0 < \alpha, \beta \leq 1$, using the approach
    of calculating and then examining the powers of $P_w$ (as shown in
    \eqref{eq:pwpk}).  However, the result is trivial when we examine
    the corresponding digraph in Figure~\ref{f:worker_switching} and use the
    fact that irreducibility is equivalent to strong connectivity.
    Similarly, the result in Exercise~\ref{ex:pwprop} that $P_w$ is
    primitive if and only if $0 < \alpha, \beta \leq 1$ and $\min\{\alpha,
    \beta \} < 1$ becomes much easier to establish if we examine the digraph and use
    Theorem~\ref{t:scaperpr}.
\end{example}

\subsection{Network Centrality}\label{ss:netcen}

When studying networks of all varieties, a recurring topic is the relative
``centrality'' or ``importance'' of different nodes. One classic application
is the ranking of web pages by search engines.  Here are some examples related
to economics:

\begin{itemize}
    \item In which industry will one dollar of additional demand have the most
        impact on aggregate production, once we take into account all the
        backward linkages?  In which sector will a rise in productivity have
        the largest effect on national output?
    \item A negative shock endangers the solvency of the entire banking sector.
        Which institutions should the government rescue, if any?  
    \item In the network games considered in \S\ref{sss:quadgame},
        the Nash equilibrium is
        $x^* = (I - \alpha A)^{-1} \epsilon$. Players' actions are
        dependent on the topology of the network, as encoded in $A$.  A common
        finding is that the level of activity or effort exerted by an agent
        (e.g., severity of criminal activity by a participant in a criminal
        network) can be predicted from their ``centrality'' within the
        network.  
\end{itemize}

In this section we review essential concepts related to network
centrality.\footnote{Centrality measures are sometimes called ``influence
measures,'' particularly in connection with social networks.}

\subsubsection{Centrality Measures}

Let $G$ be the set of weighted digraphs. A \navy{centrality measure}
associates to each $\gG = (V, E, w)$ in $G$ a vector $m(\gG) \in
\RR^{|V|}$, where the $i$-th element of $m(\gG)$ is interpreted as the
centrality (or rank) of vertex $v_i$.  In most cases $m(\gG)$ is nonnegative.
In what follows, to simplify notation, we take $V = \natset{n}$.

(Unfortunately, the definitions and terminology associated with even the most
common centrality measures vary widely across the applied literature.  Our
convention is to follow the mathematicians, rather than the physicists.  For
example, our terminology is consistent with \cite{benzi2015limiting}.)

\subsubsection{Authorities vs Hubs}

Search engine designers recognize that web pages can be important in two
different ways.  Some pages have high \navy{hub centrality}\index{Hub
centrality}, meaning that they \emph{link to} valuable sources of information
(e.g., news aggregation sites) .  Other pages have high \navy{authority
centrality}, meaning that they contain valuable information, as indicated by
the number and significance of \emph{incoming} links (e.g., websites of
respected news organizations).  Figure~\ref{f:hub_and_authority} helps to
visualize the difference.

\begin{figure*}
   \begin{center}
    \begin{tikzpicture}
  \node[circle, draw, radius=6] (1) at (0, 0) {$  $hub$  $};
  \node[circle, draw, radius=2] (2) at (4, 0) {authority};
  \node[ellipse, draw, scale=0.8] (01) at (-1, 1.5) {};
  \node[ellipse, draw, scale=0.8] (02) at (-1.5, -1.5) {};
  \node[ellipse, draw, scale=0.8] (03) at (0, -2) {};
  \node[ellipse, draw, scale=0.8] (04) at (1.5, -1.5) {};
  \node[ellipse, draw, scale=0.8] (05) at (3, -1.5) {};
  \node[ellipse, draw, scale=0.8] (06) at (5, -1.5) {};
  \node[ellipse, draw, scale=0.8] (07) at (6, 0) {};
  \node[ellipse, draw, scale=0.8] (08) at (5, 1.5) {};
  \node[ellipse, draw, scale=0.8] (09) at (3, 1.5) {};
  \node[ellipse, draw, scale=0.8] (10) at (1.5, 1.5) {};

  \draw[->, thick, black]
  (1) edge [bend left=0, left, -{Stealth[scale=1]}, line width=1pt] node {}(01)
  (1) edge [bend left=0, below, -{Stealth[scale=1]}, line width=1pt] node {} (02)
  (1) edge [bend left=0, below, -{Stealth[scale=1]}, line width=1pt] node {} (03)
  (1) edge [bend left=0, below, -{Stealth[scale=1]}, line width=1pt] node {} (04)
  (05) edge [bend left=0, below, -{Stealth[scale=1]}, line width=1pt] node {} (2)
  (06) edge [bend left=0, below, -{Stealth[scale=1]}, line width=1pt] node {} (2)
  (06) edge [bend left=0, below, -{Stealth[scale=1]}, line width=1pt] node {} (2)
  (07) edge [bend left=0, below, -{Stealth[scale=1]}, line width=1pt] node {} (2)
  (08) edge [bend left=0, below, -{Stealth[scale=1]}, line width=1pt] node {} (2)
  (09) edge [bend left=0, below, -{Stealth[scale=1]}, line width=1pt] node {} (2)
  (1) edge [bend left=0, below, -{Stealth[scale=1]}, line width=1pt] node {} (10);
\end{tikzpicture}
    \caption{\label{f:hub_and_authority} Hub vs authority}
   \end{center}
\end{figure*}
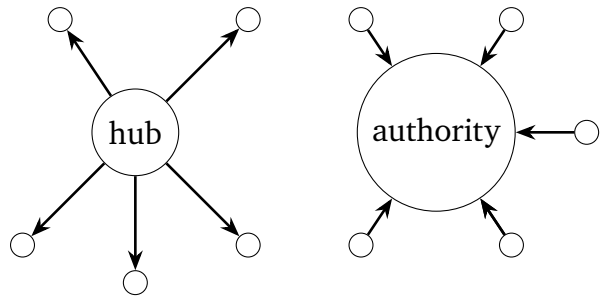

Similar ideas can and have been applied to economic networks (often using
different terminology).  For example, in production networks we study below,
high hub centrality is related to upstreamness: such sectors tend to supply
intermediate goods to many important industries. Conversely, a high authority
ranking will coincide with downstreamness.

In what follows we discuss both hub-based and authority-based centrality
measures, providing definitions and illustrating the relationship between
them.

\subsubsection{Degree Centrality}\label{sss:degcen}

Two of of the most elementary measures of ``importance''
of a vertex in a given digraph $\gG = (V, E)$ are its in-degree and
out-degree. Both of these provide a centrality measure.  
\navy{In-degree centrality}\index{In-degree centrality} $i(\gG)$ is defined as  
the vector $(i_d(v))_{v \in V}$.
\navy{Out-degree centrality}\index{out-degree centrality} $o(\gG)$ is defined
as $(o_d(v))_{v \in V}$.  If $\gG$ is expressed as a Networkx \texttt{DiGraph} called
\texttt{G} (see, e.g., \S\ref{sss:nx}), then $i(\gG)$ can be calculated via
\begin{minted}{python}
iG = [G.in_degree(v) for v in G.nodes()]    
\end{minted}

This method is relatively slow when $\gG$ is a large digraph. Since vectorized
operations are generally faster, let's look at an alternative method using
operations on arrays.  

To illustrate the method, recall the network of financial institutions in
Figure~\ref{f:network_liabfin}.  We can compute the in-degree and out-degree
centrality measures by first converting the adjacency matrix, which is shown in
\eqref{eq:fnegwa}, to a binary matrix that corresponds to the adjacency matrix
of the same network viewed as an unweighted graph:
\begin{equation}\label{eq:fnegwau}
    U = 
    \begin{pmatrix}
        0 & 1 & 0 & 0 & 0 \\
        1 & 0 & 1 & 0 & 0 \\
        0 & 0 & 0 & 1 & 0 \\
        0 & 1 & 0 & 0 & 1 \\
        1 & 0 & 1 & 1 & 0 
    \end{pmatrix}
\end{equation}
Now $U(i, j) = 1$ if and only if $i$ points to $j$.  The out-degree and
in-degree centrality measures can be computed as 
\begin{equation}\label{eq:ogig}
    o(\gG) = U \1
    \quad \text{and} \quad
    i(\gG) = U^\top \1,
\end{equation}
respectively.  That is, summing the rows of $U$ gives the out-degree
centrality measure, while summing the columns gives the in-degree measure.

The out-degree centrality measure is a hub-based ranking, while the vector of
in-degrees is an authority-based ranking.  For the financial network in
Figure~\ref{f:network_liabfin}, a high out-degree for a given institution
means that it lends to many other institutions.  A high in-degree indicates
that many institutions lend to it.

Notice that, to switch from a hub-based ranking to an authority-based ranking,
we need only transpose the (binary) adjacency matrix $U$.  We will see that
the same is true for other centrality measures.   This is intuitive, since
transposing the adjacency matrices reverses the direction of the edges
(Exercise~\ref{ex:rar}).

For a weighted digraph $\gG = (V, E, w)$ with adjacency matrix $A$, the
\navy{weighted out-degree centrality} and \navy{weighted in-degree centrality}
measures are defined as
\begin{equation}\label{eq:wogig}
    o(\gG) = A \1
    \quad \text{and} \quad
    i(\gG) = A^\top \1,
\end{equation}
respectively, by analogy with \eqref{eq:ogig}.  We present some intuition for
these measures in applications below.

Unfortunately, while in- and out-degree measures of centrality are simple to
calculate, they are not always informative.  As an example, consider again the
international credit network shown in
Figure~\ref{f:financial_network_analysis_visualization}.  There, an edge
exists between almost every node, so the in- or out-degree based centrality
ranking fails to effectively separate the countries. This can be seen in the
out-degree ranking of countries corresponding to that network in the top left panel of
Figure~\ref{f:financial_network_analysis_centrality}, and in the in-degree
ranking in the top right.

\begin{figure}
   \centering
   \scalebox{0.64}{\includegraphics[trim = 0mm 10mm 0mm 10mm, clip]{ 
   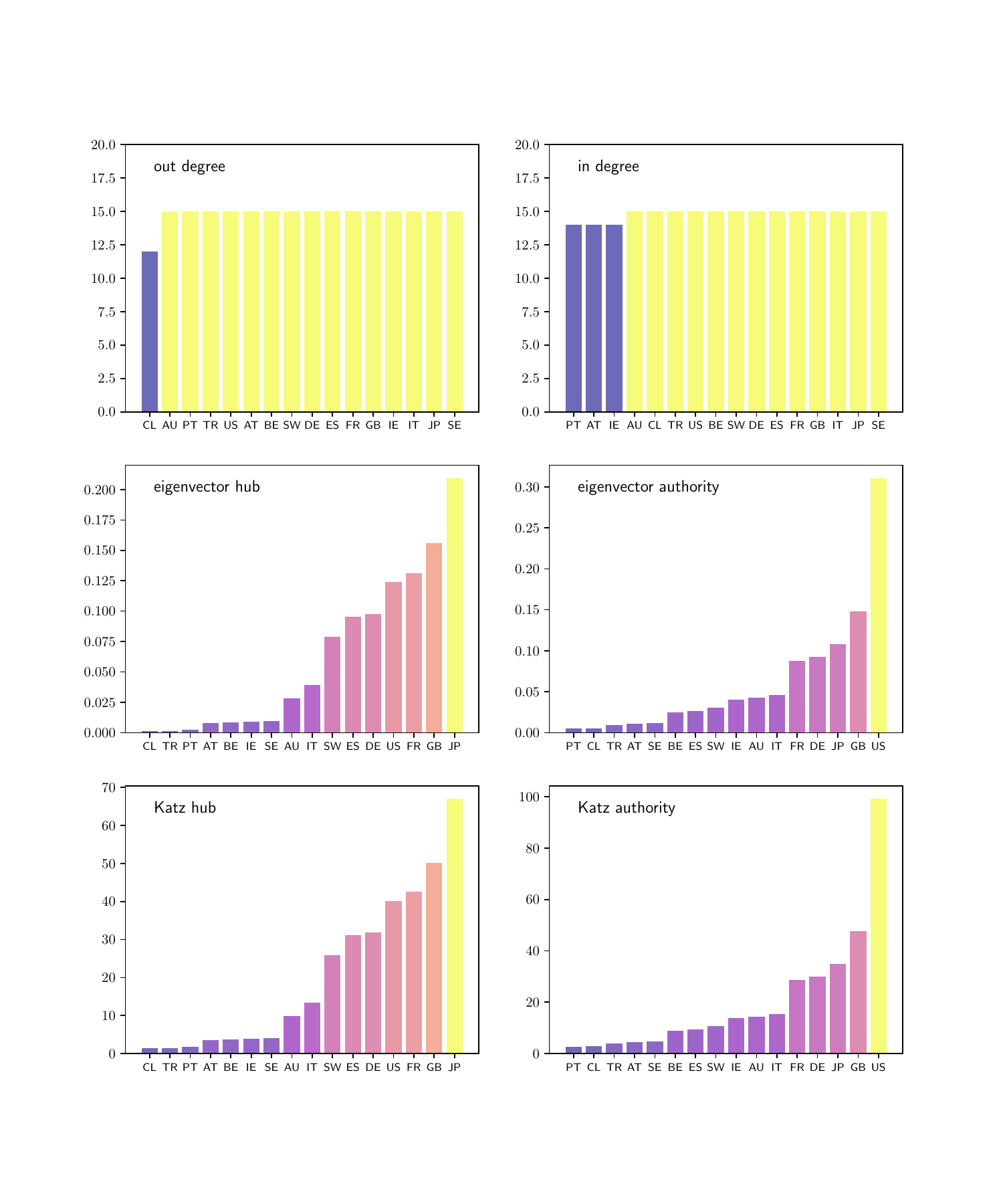}}
   \caption{\label{f:financial_network_analysis_centrality} Centrality measures for the credit network}
\end{figure}

There are other limitations of degree-based centrality rankings. For example,
suppose web page A has many inbound links, while page B has fewer.  Even
though page A dominates in terms of in-degree, it might be less important than
web page B to, say, a potential advertiser, when the links into B are from
more heavily trafficked pages.  Thinking about this point suggests
that importance can be recursive: the importance of a given node depends on
the importance of other nodes that link to it.  The next set of centrality
measures we turn to has this recursive property.

\subsubsection{Eigenvector Centrality}\label{sss:eigcen}

Let $\gG = (V, E, w)$ be a weighted digraph with adjacency matrix $A$.
Recalling that $r(A)$ is the spectral radius of $A$, the \navy{hub-based eigenvector
centrality}\index{Eigenvector centrality} of $\gG$ is
defined as the $e \in \RR^n_+$ that solves
\begin{equation}\label{eq:eicen0}
    e = \frac{1}{r(A)} A e.
\end{equation}
Element-by-element, this is
\begin{equation}\label{eq:eicen}
    e_i = \frac{1}{r(A)} \sum_{j \in \natset{n}} a_{ij} e_j
    \qquad  \text{for all } i \in \natset{n}.
\end{equation}
Note the recursive nature of the definition:
 the centrality obtained by vertex $i$ is proportional to a
sum of the centrality of all vertices, weighted by the ``rates of flow'' from
$i$ into these vertices.   A vertex $i$ is highly ranked if (a) there are many
edges leaving $i$, (b) these edges have large weights, and (c) the edges
point to other highly ranked vertices.

When we study demand shocks in \S\ref{ss:dshocks}, we will provide a more
concrete interpretation of eigenvector centrality.  We will see that, in
production networks, sectors with high hub-based eigenvector centrality are
important \emph{suppliers}.  In particular, they are activated by a wide array
of demand shocks once orders flow backwards through the network.

\begin{Exercise}
    Show that~\eqref{eq:eicen} has a unique solution, up to a positive scalar
    multiple, whenever $A$ is strongly connected.
\end{Exercise}

\begin{Answer}
    When $A$ is strongly connected, the Perron--Frobenius theorem tells us that
    $r(A)>0$ and $A$ has a unique (up to a scalar multiple) dominant right
    eigenvector satisfying $r(A) e = A e$.  Rearranging
    gives~\eqref{eq:eicen}.\footnote{While the dominant eigenvector is only
        defined up to a positive scaling constant, this is no reason for
        concern, since positive scaling has no impact on the ranking.  In most
        cases, users of this centrality ranking choose the dominant
    eigenvector $e$ satisfying $\| e \| = 1$.}
\end{Answer}

As the name suggests, hub-based eigenvector centrality is a measure of hub
centrality: vertices are awarded high rankings when they \emph{point to}
important vertices.  The next two exercises help to reinforce this point.

\begin{Exercise}\label{ex:sinkev}
    Show that nodes with zero out-degree always have zero hub-based
    eigenvector centrality. 
\end{Exercise}

To compute eigenvector centrality when the adjacency matrix $A$ is primitive,
we can employ the Perron--Frobenius Theorem, which tells us that 
$r(A)^{-m} A^m \to e \, \epsilon^\top$ as $m \to \infty$,
where $\epsilon$ and $e$ are the dominant left and right eigenvectors
of $A$.  This implies 
\begin{equation}\label{eq:ramm}
    r(A)^{-m} A^m \1 \to c e 
    \quad \text{where } c := \epsilon^\top \1.
\end{equation}
Thus, evaluating $r(A)^{-m} A^m \1$ at large $m$ returns a scalar multiple of
$e$.  The package Networkx provides a function for computing eigenvector
centrality via \eqref{eq:ramm}.

One issue problem with this method is the assumption of primitivity,
since the convergence in \eqref{eq:ramm} can fail without it.
The following function uses an alternative technique, based on Arnoldi
iteration, which generally works even when primitivity fails.
(The \texttt{authority} option is explained below.) 

\begin{minted}{python}
import numpy as np
from scipy.sparse import linalg

def eigenvector_centrality(A, m=40, authority=False):
    """
    Computes and normalizes the dominant eigenvector of A.  
    """
    A_temp = A.T if authority else A
    r, vec_r = linalg.eigs(A_temp, k=1, which='LR')
    e = vec_r.flatten().real
    return e / np.sum(e)
\end{minted}

\begin{Exercise}\label{ex:ha0}
    Show that the digraph in Figure~\ref{f:hub_vs_authorith} is not primitive.
    Using the code above or another suitable routine, compute the hub-based
    eigenvector centrality rankings. You should obtain values close to $e =
    (0.3694,0.2612,0.3694,0)$.  Note that the sink vertex (vertex 4) obtains
    the lowest
    rank.
\end{Exercise}

\begin{figure}
   \begin{center}
    \begin{tikzpicture}
  \node[ellipse, draw] (1) at (0, 0) {1};
  \node[ellipse, draw] (2) at (2, 0) {2};
  \node[ellipse, draw] (3) at (4, 0) {3};
  \node[ellipse, draw] (4) at (2, -2) {4};
  \draw[->, thick, black]
  (1) edge [bend left=0, above] node {$1$} (2)
  (2) edge [bend left=20, above] node {$\frac{1}{2}$} (3)
  (3) edge [bend left=20, below] node {$1$} (2)
  (2) edge [bend left=0, right] node {$\frac{1}{2}$} (4);
\end{tikzpicture}
    \caption{\label{f:hub_vs_authorith} A network with a source and a sink}
   \end{center}
\end{figure}
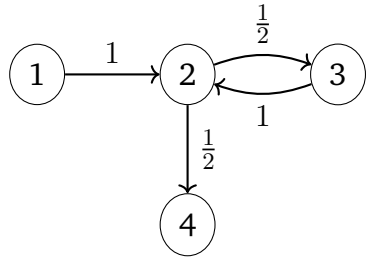

The middle left panel of Figure~\ref{f:financial_network_analysis_centrality}
shows the hub-based eigenvector centrality ranking for the international
credit network shown in
Figure~\ref{f:financial_network_analysis_visualization}.  Countries that are rated
highly according to this rank tend to be important players in terms of supply
of credit.  Japan takes the highest rank according to this measure, although
countries with large financial sectors such as Great Britain and France are
not far behind.  (The color scheme in
Figure~\ref{f:financial_network_analysis_visualization} is also matched to 
hub-based eigenvector centrality.)

The \navy{authority-based eigenvector centrality}\index{Eigenvector
centrality} of $\gG$ is defined as the $e \in \RR^n_+$ 
solving
\begin{equation}\label{eq:eicena0}
    e = \frac{1}{r(A)} A^\top e.
\end{equation}
The difference between~\eqref{eq:eicena0} and~\eqref{eq:eicen} is just transposition of $A$.
(Transposes do not affect the spectral radius of a matrix.)
Element-by-element, this is
\begin{equation}\label{eq:eicena}
    e_j = \frac{1}{r(A)} \sum_{i \in \natset{n}} a_{ij} e_i
    \qquad  \text{for all } j \in \natset{n}.
\end{equation}
We see $e_j$ will be high if many nodes with high authority
rankings link to $j$.

The middle right panel of Figure~\ref{f:financial_network_analysis_centrality}
shows the authority-based eigenvector centrality ranking for the international
credit network shown in
Figure~\ref{f:financial_network_analysis_visualization}.  Highly ranked
countries are those that attract large inflows of credit, or credit inflows
from other major players.  The US clearly dominates the rankings as a target
of interbank credit.  

\begin{Exercise}
    Assume that $A$ is strongly connected.  Show that authority-based eigenvector
    centrality is uniquely defined up to a positive scaling constant and
    equal to the dominant \emph{left} eigenvector of $A$.
\end{Exercise}

\subsubsection{Katz Centrality}\label{sss:hbkc}

Eigenvector centrality can be problematic.  Although the definition in
\eqref{eq:eicen} makes sense when $A$ is strongly connected (so that, by the
Perron--Frobenius theorem, $r(A) > 0$), strong connectedness fails in many real
world networks. We will see examples of this in \S\ref{ss:mutmod}, for
production networks defined by input-output matrices.

In addition, while strong connectedness yields strict positivity of the
dominant eigenvector, many vertices can be assigned a zero ranking when it
fails (see, e.g., Exercise~\ref{ex:sinkev}).   This zero ranking often runs
counter to our intuition when we examine specific networks.

Considerations such as these encourage use of an alternative notion of
centrality for networks called Katz centrality, originally due to
\cite{katz1953new}, which is positive under weaker conditions and uniquely
defined up to a tuning parameter.  Fixing $\beta$ in $(0, 1/r(A))$, the
\navy{hub-based Katz centrality}\index{Katz centrality} of weighted digraph
$\gG$ with adjacency matrix $A$, at parameter $\beta$, is defined as the
vector $\kappa := \kappa(\beta, A) \in \RR^n_+$ that solves
\begin{equation}\label{eq:katz}
    \kappa_i =  \beta \sum_{j \in \natset{n}} a_{ij} \kappa_j + 1
    \qquad  \text{for all } i \in \natset{n}.
\end{equation}
The intuition is very similar to that provided for eigenvector centrality: 
high centrality is conferred on $i$ when it is linked to by
vertices that themselves have high centrality.  The difference between
\eqref{eq:katz} and \eqref{eq:eicen} is just in the additive constant $1$.

\begin{Exercise}
    Show that, under the stated condition $0 < \beta < 1/r(A)$, hub-based Katz
    centrality is always finite and
    uniquely defined by
    \begin{equation}\label{eq:katzhub}
        \kappa 
        = (I - \beta A)^{-1} \1
        = \sum_{\ell \geq 0} (\beta A)^\ell \1,
    \end{equation}
    where $\1$ is a column vector of ones.
\end{Exercise}

\begin{Answer}
    When $\beta < 1/r(A)$ we have $r(\beta A) < 1$.  Hence,
     we can express~\eqref{eq:katz} as
    $\kappa = \1 + \beta A \kappa$ and employ the Theorem~\ref{t:nsl} to
    obtain the stated result.
\end{Answer}

\begin{Exercise}
    We know from the Perron--Frobenius theorem that the eigenvector centrality
    measure will be everywhere positive when the digraph is strongly
    connected.  A condition weaker than strong connectivity is that every
    vertex has positive out-degree.  Show that the Katz measure of centrality
    is strictly positive on each vertex under this condition.
\end{Exercise}

The attenuation parameter $\beta$ is used to ensure that $\kappa$ is finite
and uniquely defined under the condition $0 < \beta < 1/r(A)$.  It can be
proved that, when the graph is strongly connected, hub-based (resp.,
authority-based) Katz centrality converges to the hub-based (resp.,
authority-based) eigenvector centrality as $\beta \uparrow
1/r(A)$.\footnote{See, for example, \cite{benzi2015limiting}.} This is why, in
the bottom two panels of Figure~\ref{f:financial_network_analysis_centrality},
the hub-based (resp., authority-based) Katz centrality ranking is seen to be
close to its eigenvector-based counterpart.

When $r(A)<1$, we use $\beta=1$ as the default for Katz centrality computations.

\begin{Exercise}\label{ex:ha}
    Compute the hub-based Katz centrality rankings for the simple digraph
    in Figure~\ref{f:hub_vs_authorith} when $\beta=1$.   You should obtain $\kappa = (5, 4,
    5, 1)$.  Hence, the source vertex (vertex 1) obtains equal highest rank
    and the sink vertex (vertex 4) obtains the lowest rank.
\end{Exercise}

Analogously, the \navy{authority-based Katz centrality}\index{Katz centrality}
 of $\gG$ is defined as
the $\kappa \in \RR^n_+$ that solves
\begin{equation}\label{eq:katza}
    \kappa_j =  \beta \sum_{i \in \natset{n}} a_{ij} \kappa_i + 1
    \qquad  \text{for all } j \in \natset{n}.
\end{equation}

\begin{Exercise}
    Show that, under the restriction $0 < \beta < 1/r(A)$, 
    the unique solution to \eqref{eq:katza} is given by 
    \begin{equation}\label{eq:katzav}
        \kappa = (I - \beta A^\top)^{-1} \1
        \quad \iff \quad
        \kappa^\top = \1^\top (I - \beta A)^{-1}.
    \end{equation}
    (Verify the stated equivalence.)
\end{Exercise}

\begin{Exercise}\label{ex:ha2}
    Compute the authority-based Katz centrality rankings for the digraph
    in Figure~\ref{f:hub_vs_authorith} when $\beta=1$.   You should obtain
    $\kappa = (1, 6, 4, 4)$.  Notice that the source vertex now has the lowest
    rank. This is due to the fact that hubs are devalued relative to authorities.
\end{Exercise}

\subsection{Scale-Free Networks}\index{Scale-free networks}

What kinds of properties do large, complex networks typically possess?  One of
the most striking facts about complex networks is that many exhibit the
\navy{scale-free} property, which means, loosely speaking, that the number of
connections possessed by each vertex in the network follows a power law.  The
scale-free property is remarkable because it holds for a wide variety of
networks, from social networks to citation, sales, financial and production
networks, each of which is generated by different underlying
mechanisms. Nonetheless, they share this specific statistical structure.

We begin this section by defining the degree distribution and then discuss its properties,
including possible power law behavior.

\subsubsection{Empirical Degree Distributions}

Let $\gG = (V, E)$ be a digraph.
Assuming without loss of generality that $V = \natset{n}$ for some $n \in
\NN$, the \navy{in-degree distribution}\index{Degree distribution} of $G$ is
the sequence $(\phi_{in}(k))_{k=0}^n$ defined by
\begin{equation}\label{eq:degdi}
    \phi_{in}(k) = \frac{\sum_{v \in V} \1\{ i_d(v)=k\} }{n}
    \qquad (k = 0, \ldots, n),
\end{equation}
where $i_d(v)$ is the in-degree of vertex $v$. In other words, the in-degree
distribution evaluated at $k$ is the fraction of nodes in the network that
have in-degree $k$.   In Python, when $\gG$ is
expressed as a Networkx \texttt{DiGraph} called \texttt{G} and
\mintinline{python}{import numpy as np} has been executed, the in-degree
distribution can be calculated via
\begin{minted}{python}
def in_degree_dist(G):
    n = G.number_of_nodes()
    iG = np.array([G.in_degree(v) for v in G.nodes()])
    phi = [np.mean(iG == k) for k in range(n+1)]
    return phi
\end{minted}

The \navy{out-degree distribution} is defined analogously, replacing $i_d$ with
$o_d$ in \eqref{eq:degdi}, and denoted by $(\phi_{out}(k))_{k=0}^n$.

Recall that a digraph $\gG = (V, E)$ is called undirected if $(u, v) \in E$
implies $(v, u) \in E$. If $\gG$ is undirected, then $i_d(v) = o_d(v)$ for all
$v \in V$.
In this case we usually write $\phi$ instead of $\phi_{in}$ or $\phi_{out}$ and refer
simply to the \navy{degree-distribution} of the digraph.

A \navy{scale-free network}\index{Scale-free network} is a network whose
degree distribution obeys a power law, in the sense that there exist positive
constants $c$ and $\gamma$ with 
\begin{equation}\label{eq:apla}
    \phi(k) \approx c k^{-\gamma}
    \quad \text{for large } k.
\end{equation}
Here $\phi(k)$ can refer to in-degree or the out-degree (or both), depending on our
interest.  In view of the discussion in \S\ref{sss:dpls}, this can be
identified with the idea that the degree distribution is Pareto-tailed with tail
index $\alpha = \gamma -1$.

Although we omit formal tests, the degree distribution for the commercial
aircraft international trade network shown in
Figure~\ref{f:commercial_aircraft_2019_1} on
page~\pageref{f:commercial_aircraft_2019_1} is approximately scale-free.
Figure~\ref{f:commercial_aircraft_2019_2} illustrates this by plotting the
degree distribution  alongside $f(x) = c x^{-\gamma}$ with $c=0.2$ and $\gamma
= 1.1$.  (In this calculation of the degree distribution, performed by the Networkx function
\texttt{degree\_histogram}, directions are ignored and the network is treated as
an undirected graph.)

\begin{figure}
   \centering
   \scalebox{0.65}{\includegraphics[trim = 0mm 0mm 0mm 0mm, clip]{ 
       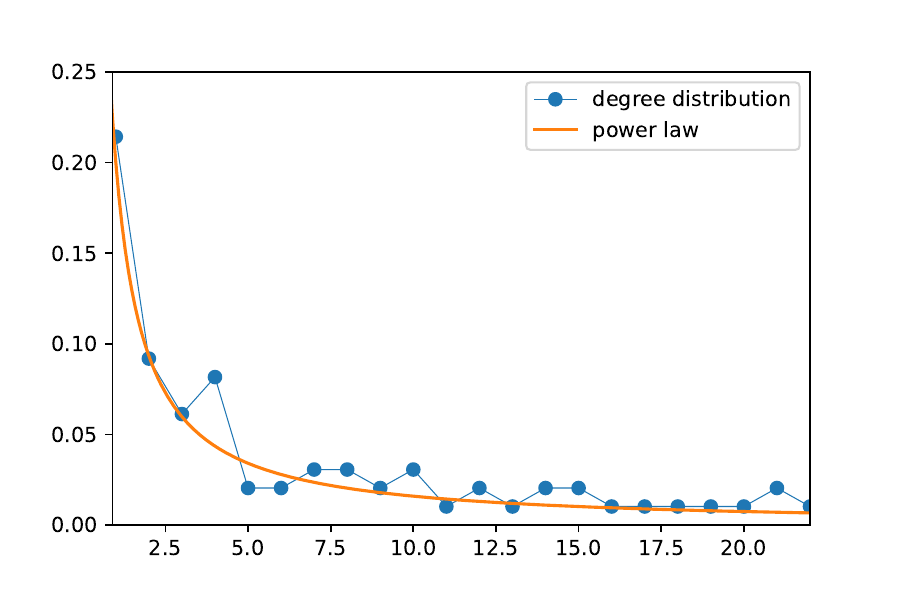}}
   \caption{\label{f:commercial_aircraft_2019_2} Degree distribution for international aircraft trade}
\end{figure}

Attention was drawn to the scale-free nature of many networks by
\cite{barabasi1999emergence}. They found, for example, that the in-degree and out-degree
distributions for internet pages connected by hyperlinks both follow power
laws.   In subsequent years, many networks have been found to have the
scale-free property, up to a first approximation, including networks of
followers on Twitter \citep{pearce2017triangle, punel2018using}, other social
networks \citep{rybski2009scaling} and academic collaboration networks (e.g.,
papers plus citations). 

Within economics and finance, \cite{carvalho2014micro} shows that the weighted
out-degree distribution for US input-output data (discussed further in
Chapter~\ref{c:prod}) obeys a power law, as does the Katz centrality measure.
\cite{carvalho2021supply} document power law tails for the in-degree
(suppliers) and out-degree (customers) distributions in a Japanese network of
interacting firms.  Scale-free degree distributions have also been observed in
a number of financial and inter-bank credit networks \citep{kim2007characteristics, ou2007power,
de2011analysis}.

In many cases, the scale-free property of a given network has significant
implications for economic outcomes and welfare.  For example, a power law in
input-output networks often typically indicates dominance by a small number of
very large sectors or firms.  This in turn affects both the dynamism of
industry and the likelihood of aggregate instability caused by firm-level
shocks.  We explore some of these issues in Chapter~\ref{c:prod}.

\subsubsection{Random Graphs}

One way to explore the implications of different dynamics for the degree
distribution of graphs is to specify a law for generating graphs randomly and
then examine the degree distribution that results.  This methodology 
leads to insights on the kinds of mechanisms that can generate scale-free networks.

We begin with one of the most popular and elementary ways of randomly
generating an undirected graph, originally examined by \cite{erdos1960evolution}. The 
process to generate a graph $\gG = (V, E)$ is 
\begin{enumerate}
    \item fix an integer $n \in \NN$ and a $p \in (0, 1)$,
    \item view $V := \natset{n}$ as a collection of vertices, 
    \item let $E = \{\emptyset\}$, and
    \item for each $(i, j) \in V \times V$ with $i \neq j$, add the undirected edge $\{i,j\}$
        to the set of edges $E$ with probability $p$.
\end{enumerate}

In the last step additions are independent---each time, we flip an unbiased
{\sc iid} coin with head probability $p$ and add the edge if the coin comes up
heads.  

The Python code below provides a function that can be called to randomly generate
an undirected graph using this procedure.
It applies the \texttt{combinations}
function from the \texttt{itertools} library, which, for the call
\texttt{combinations(A, k)}, returns a list of all subsets of $A$ of size $k$.
For example,
\begin{minted}{python}
import itertools
letters = 'a', 'b', 'c'
list(itertools.combinations(letters, 2))    
\end{minted}
returns \texttt{[('a', 'b'), ('a', 'c'), ('b', 'c')]}.

We use \texttt{combinations} to produce the set of all possible edges and then
add them to the graph with probability $p$:

\begin{minted}{python}
def erdos_renyi_graph(n=100, p=0.5, seed=1234):
    "Returns an Erdos-Renyi random graph."
    np.random.seed(seed)
    edges = itertools.combinations(range(n), 2)
    G = nx.Graph()
    
    for e in edges:
        if np.random.rand() < p:
            G.add_edge(*e)
    return G
\end{minted}

(The code presented here is a simplified version of functionality provided by
the library Networkx.  It is written for clarity rather than efficiency.  More
efficient versions can be found both in Networkx and in Julia's Graphs.jl library.)

The left hand side of Figure~\ref{f:rand_graph_experiments_1} shows one
instance of a graph that was generated by the \texttt{erdos\_renyi\_graph} function, with
$n=100$ and $p=0.05$. Lighter colors on a node indicate higher degree
(more connections).  The right hand side shows the degree distribution, which
exhibits a bell-shaped curve typical for Erdos--Renyi random graphs.
In fact one can show (see, e.g., \cite{bollobas1999linear} or
\cite{durrett2007random}) that the degree distribution is binomial, with
\begin{equation*}
    \phi(k) = \binom{n-1}{k} p^k (1-p)^{n-1-k}
    \qquad (k = 0, \ldots, n-1).
\end{equation*}

\begin{figure}
   \centering
   \scalebox{0.6}{\includegraphics[trim = 0mm 0mm 0mm 0mm, clip]{
   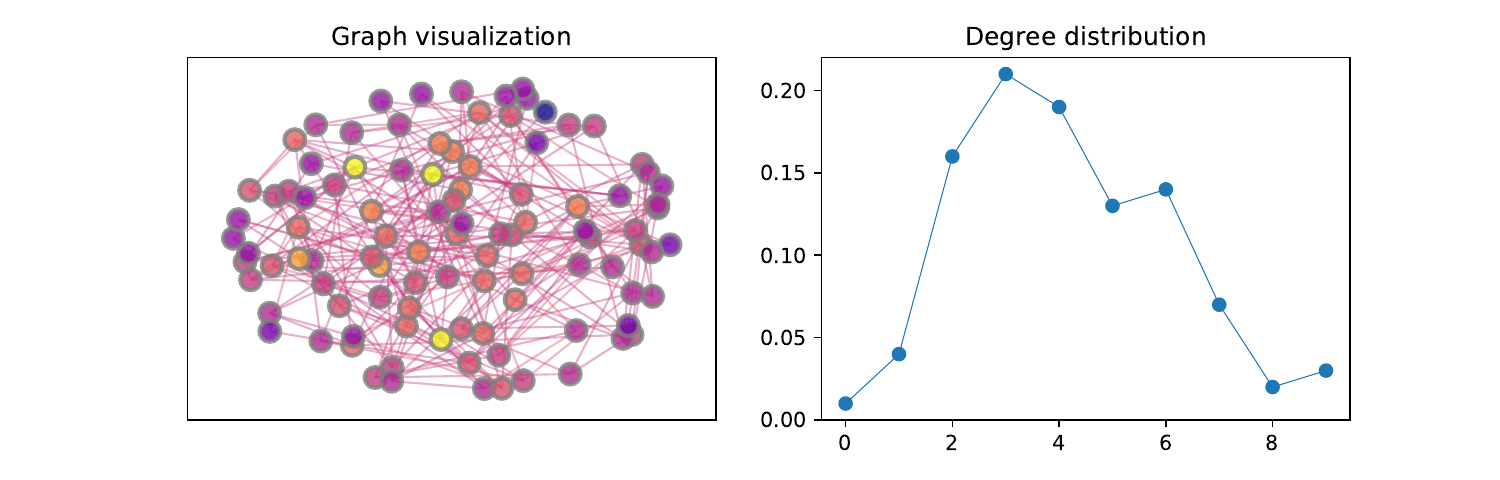}}
   \caption{\label{f:rand_graph_experiments_1} An instance of an Erdos--Renyi random graph}
\end{figure}

\subsubsection{Preferential Attachment}

Clearly Erdos--Renyi random graphs fail to replicate the heavy right hand tail
of the degree distribution observed in many networks.  In response to this,
\cite{barabasi1999emergence} proposed a mechanism for randomly generating
graphs that feature the scale-free property.

The stochastic mechanism they proposed is called \navy{preferential
attachment}\index{Preferential attachment}.  In essence, each time a new
vertex is added to an undirected graph, it is attached by edges to $m$ of the
existing vertices, where the probability of vertex $v$ being selected is
proportional to the degree of $v$.  \cite{barabasi1999emergence} showed that
the resulting degree distribution exhibits a Pareto tail in the limit, as the
number of vertices converges to $+\infty$.   A careful proof can be found in
Chapter~4 of \cite{durrett2007random}.

Although we omit details of the proof, we can see the power law emerge in
simulations.  For example, Figure~\ref{f:rand_graph_experiments_2} shows a
random graph with 100 nodes generated by Networkx's
\texttt{barabasi\_albert\_graph} function.  The number of attachments $m$ is
set to 5. The simulated degree distribution on the right hand side of
Figure~\ref{f:rand_graph_experiments_2} already exhibits a long right tail.

\begin{figure}
   \centering
   \scalebox{0.6}{\includegraphics[trim = 0mm 0mm 0mm 0mm, clip]{
   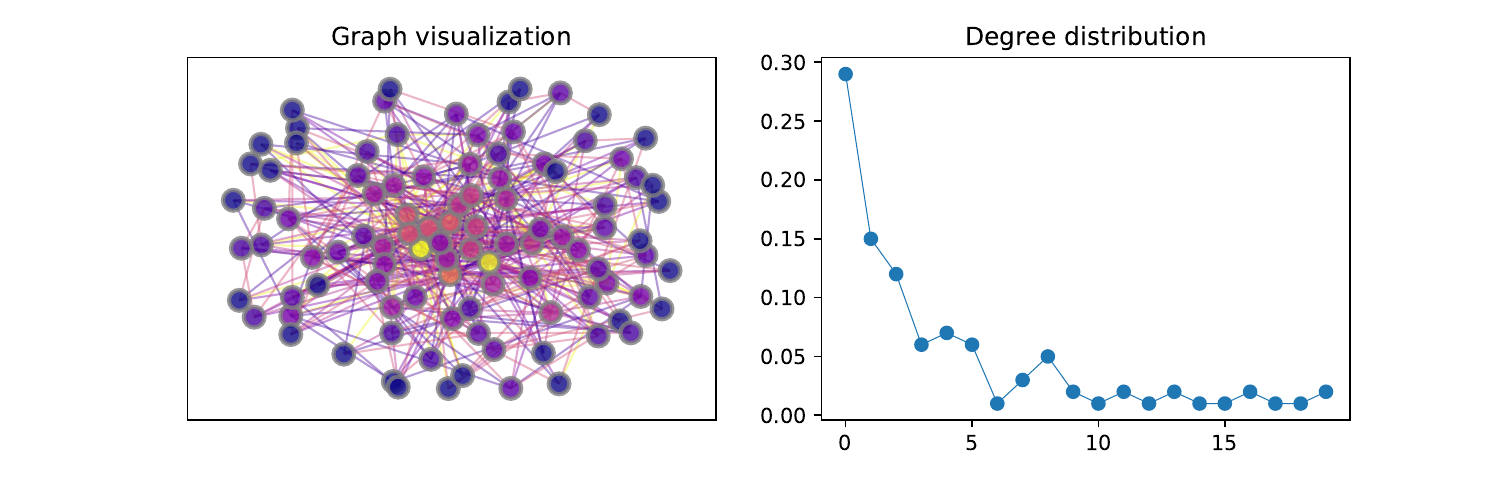}}
   \caption{\label{f:rand_graph_experiments_2} An instance of a preferential attachment random graph}
\end{figure}

The preferential attachment model is popular not just because it replicates
the scale-free property of many real-world networks, but also because its
mechanism is simple and plausible.  For example, in citation networks, we can 
imagine that a well-cited paper is more likely to attract additional citations
than a poorly-cited paper.  Similar intuition can be applied to an individual on a
social network, where the number of links is measured in terms of the number of followers.

\section{Chapter Notes}\label{s:cnni}

The Perron--Frobenius theorem is due to Oskar Perron (1880--1975) and
Ferdinand Georg Frobenius (1849--1917).  The main results were proved by
1912.  As early as 1915, D\'enes K\"onig (1884--1944) saw the connection
       between the Perron--Frobenius theorem and graph theory, and provided an
       alternative proof using bipartite graphs. Some of the history is
       discussed in \cite{schrijver2005history}.

We have already mentioned the textbooks on economic and social networks by
\cite{jackson2010social}, \cite{easley2010networks} and
\cite{borgatti2018analyzing}, as well as the handbook by
\cite{bramoulle2016oxford}.   \cite{jackson2014networks} gives a survey of the
literature.  Within the realm of network science, the high
level texts by \cite{newman2018networks}, \cite{menczer2020first} and
\cite{coscia2021atlas} are excellent.

One good text on graphs and graph-theoretic algorithms is
\cite{kepner2011graph}.  \cite{ballester2006s} provide an interpretation of
Katz centrality (which they call Bonacich centrality) in terms of Nash
equilibria of quadratic games. \cite{du2015competitive} show how PageRank can
be obtained as a competitive equilibrium of an economic problem.
\cite{calvo2009peer} develop a model in which the outcomes for agents embedded
in a network are proportional to the Katz centrality.
\cite{elliott2019network} show that, in a setting where agents can create
nonrival, heterogeneous public goods, an important set of efficient solutions
are characterized by contributions being proportional to agents' eigenvector
centralities in the network.  

\cite{kumamoto2018power} provide a detailed survey of power laws in
economics and social sciences, including a discussion of the preferential
attachment model of \cite{barabasi1999emergence}.  \cite{newman2005power} is
also highly readable.  The textbook of \cite{durrett2007random} is rigorous,
carefully written and contains interesting motivational background, as well as
an extensive citation list for studies of scale-free networks.

It should be clear from the symbol $\approx$ in \eqref{eq:apla} that the
definition of scale-free networks is not entirely rigorous.  Moreover, when
connecting the definition to observed networks, we cannot obtain complete
clarity by taking a limit, as we did when we defined power laws in
\S\ref{ss:pow}, since the number of vertices is always finite. This
imprecision in the definition has led to heated debate (see, e.g.,
\cite{holme2019rare}).  Given the preponderance of positive empirical studies,
we take the view that, up to a reasonable degree of approximation, the
scale-free property is remarkably widespread.

In \S\ref{sss:quadgame} we briefly mentioned network games, social networks
and key players.  These topics deserve more attention than we have been able
to provide.  An excellent overview is given in \cite{zenou2016key}.
\cite{amarasinghe2020key} apply these ideas to problems in economic
development.  Valuable related papers include 
\cite{allouch2015private}, \cite{belhaj2016efficient},
\cite{demange2017optimal}, \cite{belhaj2019group},
\cite{galeotti2020targeting}.

Another topic we reluctantly omitted in order to keep the textbook short is
endogenous network formation in economic environments.  Influential papers in
this field include \cite{bala2000noncooperative}, \cite{watts2001dynamic},
\cite{graham2017econometric}, \cite{galeotti2010law}, \cite{hojman2008core},
and \cite{jackson1996strategic}.

Finally, \cite{candogan2012optimal} study the profit maximization problem
for a monopolist who sells items to participants in a social network.  The
main idea is that, in certain settings, the monopolist will find it profitable
to offer discounts to key players in the network. \cite{atalay2011network}
argue that in-degrees observed in US buyer-supplier networks have lighter
tails than a power law, and supply a model that better fits their data.

\chapter{Production}\label{c:prod}

In this chapter we study production in multisector environments.  The basic
framework is input-output analysis, which was initiated by Wassily Leontief
(1905--1999) and popularized in~\cite{leontief1941structure}. Input-output
analysis is routinely used to organize national accounts and study
inter-industry relationships. In 1973, Leontief received the Nobel Prize in
Economic Sciences for his work on input-output systems.

Input-output analysis is currently being incorporated into modern theories of
trade, growth, shock propagation and aggregate fluctuations in multisector
models (\S\ref{s:cnprod} provides a detailed list of references). One of the
reasons for renewed interest is that the introduction of concepts from network
analysis and graph theory has yielded new insights.  This chapter provides
an introduction to the main ideas.

\section{Multisector Models}\label{ss:mutmod}

In this section we introduce the basic input output model, explain the network
interpretation of the model, and connect traditional questions, such as the
relative impact of demand shocks across sectors, to network topology and
network centrality.

\subsection{Production Networks}

We begin with the foundational concepts of input-output tables and how they
relate to production networks.  To simplify the exposition, we ignore imports
and exports in what follows.  (References for the general case are discussed
in \S\ref{s:cnprod}.)

\subsubsection{Input-Output Analysis}\label{sss:prodnet}

Agencies tasked with gathering national and regional production accounts (such
as the US Bureau of Economic Analysis) compile input-output data based on
the structure set out by \cite{leontief1941structure}.  Firms are divided
across $n$ sectors, each of which produces a single homogeneous good.  These
sectors are organized into an input-output table, a highly simplified example
of which is 

\begin{center}
    \begin{tabular}{l|ccc}
                  & sector 1 & sector 2 & sector 3 \\
        \hline 
         sector 1 & $a_{11}$ & $a_{12}$ & $a_{13}$ \\
         sector 2 & $a_{21}$ & $a_{22}$ & $a_{23}$ \\
         sector 3 & $a_{31}$ & $a_{32}$ & $a_{33}$ \\
    \end{tabular}
\end{center}

Entries $a_{ij}$  are called the \navy{input-output coefficients}; 
\begin{equation*}
    a_{ij} = 
    \frac{\text{value of sector $j$'s inputs purchased from sector $i$}}
        {\text{total sales of sector $j$}}.
\end{equation*}

Thus, $a_{ij}$ is large if sector $i$ is an important supplier of intermediate
goods to sector $j$. The sum of the $j$-th column of
the table gives the value of all inputs to sector $j$.  The $i$-th row shows
how intensively each sector uses good $i$ as an intermediate good.

The \navy{production coefficient matrix} $A = (a_{ij})$ induces a weighted
digraph $\gG = (V, E, w)$, where $V = \natset{n}$ is the list of
sectors  and
\begin{equation*}
    E := \setntn{(i, j) \in V \times V}{a_{ij} > 0}
\end{equation*}
is the edge set.  The values $a_{ij}$ show backward linkages across sectors. 

Given $i \in V$, the set $\oO(i)$ of direct successors of $i$ is all sectors
to which $i$ supplies a positive quantity of output.  The set $\iI(i)$ is all
sectors that supply a positive quantity to $i$.

Figure~\ref{f:input_output_analysis_15} illustrates the weighted digraph associated
with the 15 sector version of the input-output tables provided by the Bureau
of Economic Analysis for the year
2019.  The data source is the US Bureau of Economic Analysis's
    2019 Input-Output Accounts Data.\footnote{We obtain input expenditures and total
         sales for each sector from the Make-Use Tables. The figure was
     created using Python's Networkx library.} An arrow from $i$ to $j$
     indicates a positive weight $a_{ij}$.  Weights are indicated by
     the widths of the arrows, which are proportional to the corresponding input-output
     coefficients.  The sector codes are provided in Table~\ref{tab:codes15}.
     The size of vertices is proportional to their share of total sales across
     all sectors.

A quick look at Figure~\ref{f:input_output_analysis_15} shows that
manufacturing (ma) is an important supplier for many sectors, including
construction (co) and agriculture (ag).  Similarly, the financial sector (fi) and 
professional services (pr) supply services to a broad range of sectors.
On the other hand, education (ed) is relatively downstream and only
a minor supplier of intermediate goods to other sectors.

The color scheme for the nodes is by hub-based eigenvector centrality, with
hotter colors indicating higher centrality.  Later, in~\S\ref{ss:dshocks} we will
give an interpretation of hub-based eigenvector centrality for this setting
that connects to relative impact of demand shocks.

\begin{figure}
   \begin{center}
    \scalebox{1.0}{\includegraphics[trim = 25mm 25mm 25mm 25mm, clip]{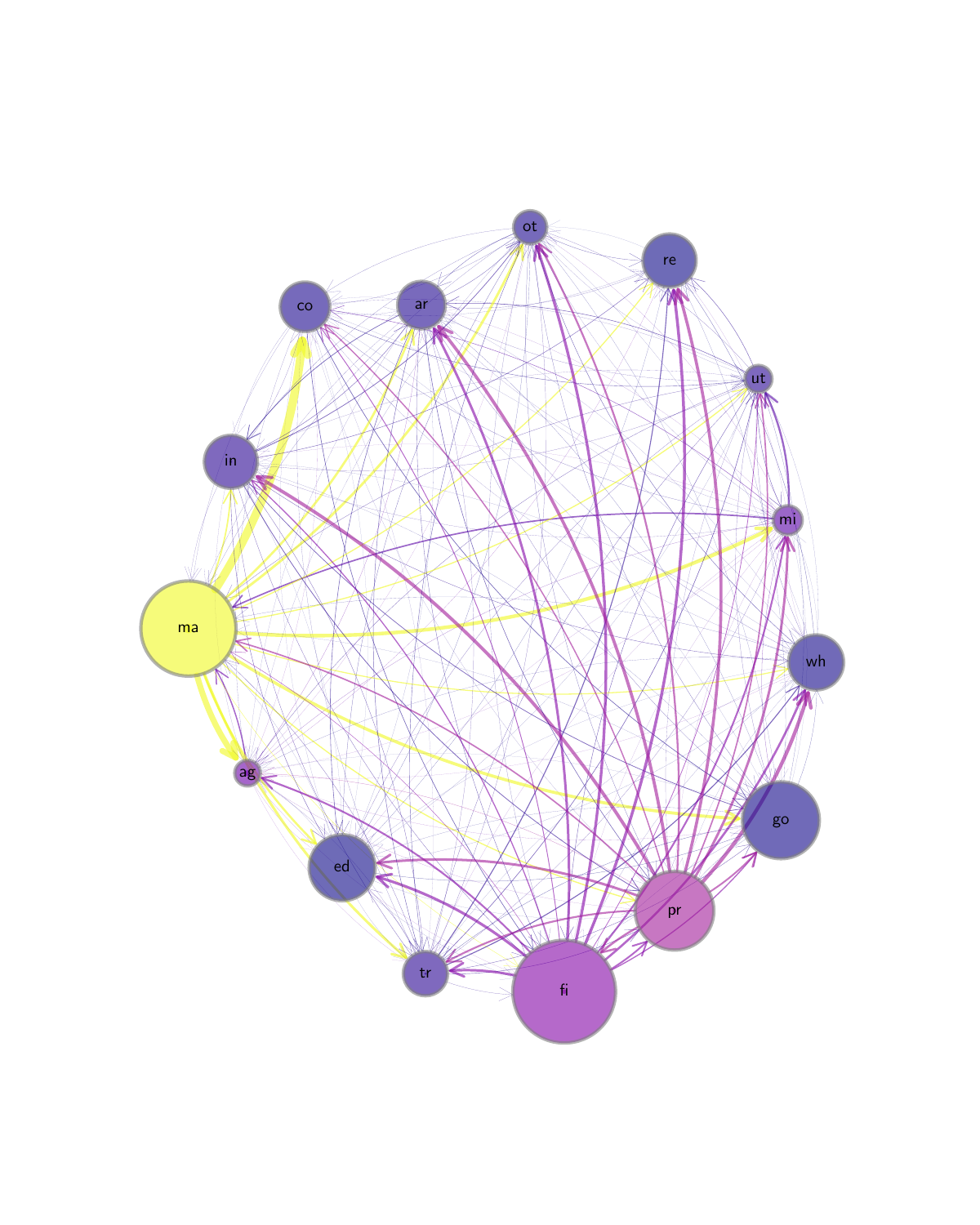}} \caption{\label{f:input_output_analysis_15} Backward linkages for 15 US sectors in 2019}
   \end{center}
\end{figure}

\begin{table}
    \centering
    \caption{\label{tab:codes15}Sector codes for the 15 good case}
    \begin{tabular}{ll}
        \hline 
         Label & Sector  \\ 
         \hline
         ag & Agriculture, forestry, fishing, and hunting \\
         mi &  Mining \\
         ut & Utilities \\
         co & Construction \\
         ma & Manufacturing \\
         wh & Wholesale trade \\
         re & Retail trade \\
         tr & Transportation and warehousing \\
         in & Information \\
         fi & Finance, insurance, real estate, rental, and leasing \\
         pr & Professional and business services \\
         ed & Educational services, health care, and social assistance \\
         ar & Arts, entertainment, accommodation, and food services \\
         ot & Other services, except government \\
         go & Government \\
        \hline 
    \end{tabular}
\end{table}

\subsubsection{Connectedness}

We will gain insights into input-output networks by applying some of the
graph-theoretic notions studied in Chapter~\ref{c:networks}.   One elementary
property we can investigate is connectedness.  We can imagine that demand and
productivity shocks diffuse more widely through a given production network
when the network is relatively connected.  Conversely, the impact of a demand
shock occurring within an absorbing set will be isolated to sectors in that
set.

The 15 sector network in Figure~\ref{f:input_output_analysis_15} is strongly
connected. Checking this visually is hard, so instead we use a graph-theoretic
algorithm that finds strongly connected components from
\href{https://quantecon.org/}{QuantEcon}'s \texttt{DiGraph} class.  
(This class is convenient for the current problem because instances are
created directly from the adjacency matrix.) Examining
the attributes of this class when the weights are given by the 15 sector
input-output model confirms its strong connectedness.  The same class can be used
to verify that the network is also aperiodic. Hence, the input-output
matrix $A$ is primitive.  This fact will be used in computations below.

\subsubsection{Disaggregation}

Figure~\ref{f:input_output_analysis_71} repeats the graphical representation
for the more disaggregated 71 sector case.  Sector codes are provided in
Table~\ref{tab:codes71}.  Input-output coefficients below $0.01$ were rounded
to zero to increase visual clarity.  As in the 15 sector case, the size of vertices and 
edges is proportional to share of sales and input-output coefficients
respectively.  Hotter colors indicate higher hub-based eigenvector centrality (which we
link to propagation of demand shocks in~\S\ref{ss:dshocks}).

\begin{figure}
   \begin{center}
    \scalebox{0.9}{\includegraphics[trim = 42mm 48mm 20mm 52mm, clip]{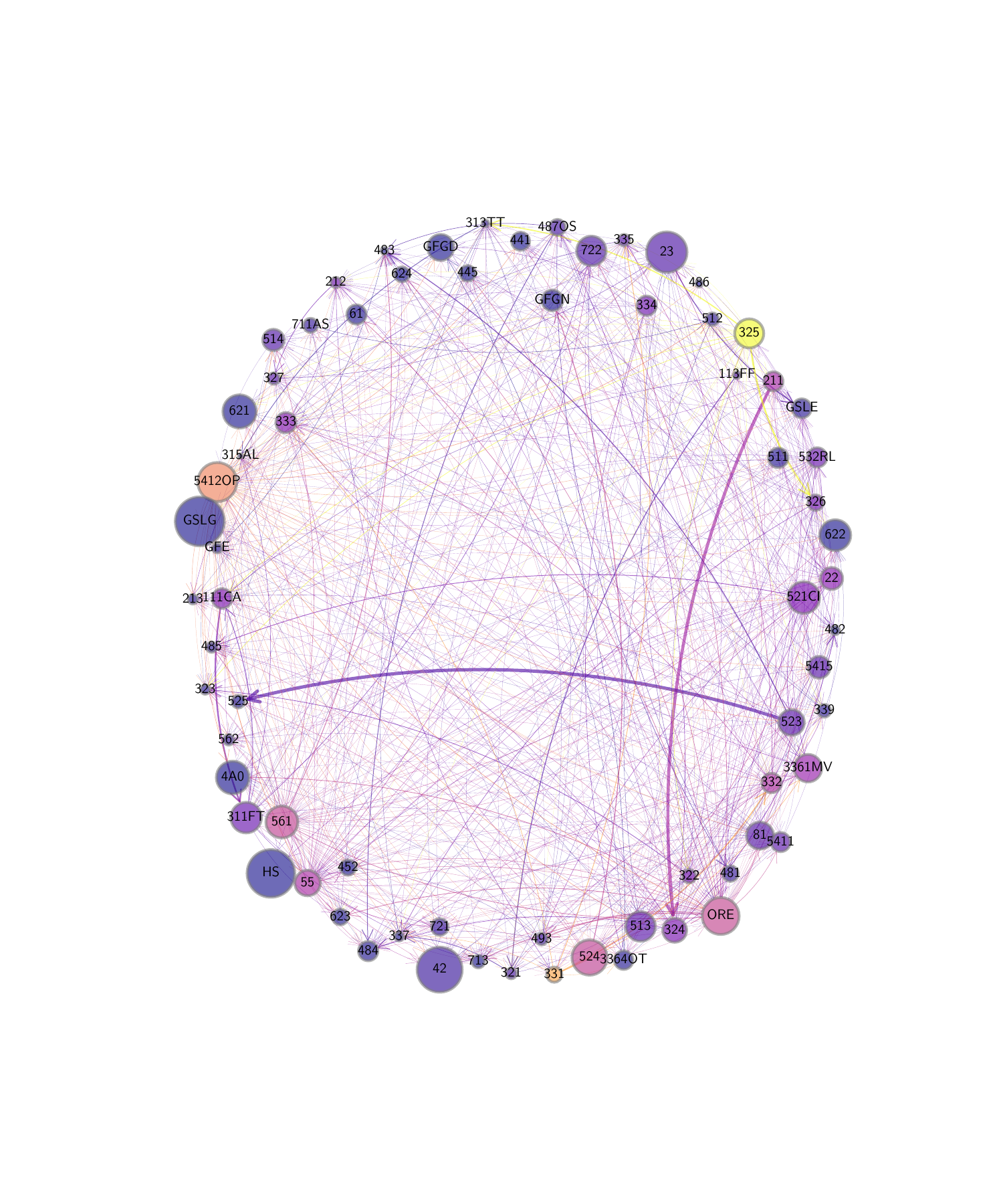}}
    \caption{\label{f:input_output_analysis_71} Network for 71 US sectors in 2019}
   \end{center}
\end{figure}

\begin{table}
    \centering
    \fontsize{9.5pt}{10.25pt}\selectfont
    \addtolength{\tabcolsep}{-2pt}
    \caption{\label{tab:codes71}Sector codes for the 71 good case}
    \centering
    \begin{tabular}{llll}
        \hline \hline 
        IO Code & Sector & IO Code & Sector  \\ \hline
        &  \\ 
        \vspace{0.3em}
        111CA  & Farms  & 486  & Pipeline
        transportation\\
        \vspace{0.3em}
        113FF  & Forestry, fishing & 487OS & Other transportation \\
        \vspace{0.3em}
        211  &  Oil and gas extraction & 493 & Warehousing and storage \\
        \vspace{0.3em}
        212  & Mining, except oil, gas & 511 & Publishing industries \\
        \vspace{0.3em}
        213  & Mining support activities & 512 & Motion picture and sound \\
        \vspace{0.3em}
        22  & Utilities & 513 & Broadcasting, telecommunications \\
        \vspace{0.3em}
        23  & Construction & 514 & Data processing, internet publishing \\
        \vspace{0.3em}
        321 & Wood products & 521CI & Reserve banks, credit intermediation \\
        \vspace{0.3em}
        327 & Nonmetallic mineral products & 523 & Securities and investments \\
        \vspace{0.3em}
        331 & Primary metals & 524  & Insurance carriers \\
        \vspace{0.3em}
        334 & Computer \& electronic products  & 525  & Funds, trusts, financial vehicles \\
        \vspace{0.3em}
        333 & Machinery & HS  &  Housing \\
        \vspace{0.3em}
        332 & Fabricated metal products & ORE  & Other real estate \\
        \vspace{0.3em}
        335 & Electrical equipment & 532RL  & Rental and leasing services \\
        \vspace{0.3em}
        337  & Furniture & 55 & Firm management \\
        \vspace{0.3em}
        3364OT  & Other transportation equipment & 5415 & Computer systems design \\
        \vspace{0.3em}
        3361MV & Motor vehicles, parts & 5412OP & Miscellaneous technical services \\
        \vspace{0.3em}
        339  &  Miscellaneous manufacturing & 5411  & Legal services \\
        \vspace{0.3em}
        311FT & Food, beverage, tobacco  & 561 & Administrative \\
        \vspace{0.3em}
        313TT  & Textile mills and products & 562 & Waste management \\
        \vspace{0.3em}
        315AL  & Apparel and leather & 61 & Educational services \\
        \vspace{0.3em}
        322  & Paper products & 621 & Ambulatory health care services \\
        \vspace{0.3em}
        323 & Printing & 622 & Hospitals \\
        \vspace{0.3em}
        324 & Petroleum and coal & 623 & Nursing and residential care facilities \\
        \vspace{0.3em}
        325 & Chemical products & 624  & Social assistance \\
        \vspace{0.3em}
        326 & Plastics, rubber & 711AS  & Arts, spectator sports, museums \\
        \vspace{0.3em}
        42 & Wholesale trade & 713  &  Amusements, gambling, recreation \\
        \vspace{0.3em}
        441 & Motor vehicle and parts dealers & 721  & Accommodation \\
        \vspace{0.3em}
        445 & Food and beverage stores & 722 & Food services and drinking places \\
        \vspace{0.3em}
        452 & General merchandise stores & 81  & Other services, except government \\
        \vspace{0.3em}
        4A0  & Other retail & GFGD  & Federal government (defense) \\
        \vspace{0.3em}
        481  & Air transportation & GSLE & State and local government enterprises \\
        \vspace{0.3em}
        482  &  Rail transportation & GFE & Federal government enterprises \\
        \vspace{0.3em}
        483  & Water transportation & GSLG & State and local government \\
        \vspace{0.3em}
        484  & Truck transportation & GFGN & Federal government (nondefense) \\
        \vspace{0.3em}
        485  & Passenger transportation & & \\ 
        &  \\ 
        \hline \hline
    \end{tabular}
\end{table}

Unlike the 15 sector case, the 71 sector 2019 input-output matrix is not
strongly connected.  This is because it contains sinks (sectors with
zero out-degree).  For example, according to the data, ``food and beverage
stores'' do not supply any intermediate inputs, although they do of course
supply products to final consumers.

\subsection{Equilibrium}\label{sss:ioeq}

Equilibrium in Leontief models involves tracing the impact of final demand as
it flows backward through different sectors in the economy. To illustrate the
challenges this generates, consider the simplified network shown in
Figure~\ref{f:io_irreducible}.  Suppose sector~3 receives a positive demand
shock.  Meeting this demand will require greater output from its immediate
suppliers, which are sectors~2 and
4.  However, an increase in production in sector~2 requires more output from
    sector~1, which then requires more output from sector~3, where the
    initial shock occurred. This, in turn, requires more output from sectors~2
    and 4, and so on.  Thus, the chain of backward linkages leads to an
    infinite loop. Resolving this tail chasing problem requires some analysis.  

\begin{figure}
   \begin{center}
    \begin{tikzpicture}
  \node[circle, draw] (1) at (-1, 3) {1};
  \node[circle, draw] (2) at (-3, 0) {2};
  \node[circle, draw] (3) at (1, -2) {3};
  \node[circle, draw] (4) at (0, 0) {4};
  \draw[->, thick, black]
  (1) edge [bend left=50, right] node {$a_{12}$} (2)
  (2) edge [bend left=50, left] node {$a_{21}$} (1)
  (2) edge [bend right=20, below] node {$a_{23}$} (3)
  (3) edge [bend right=50, right] node {$a_{31}$} (1)
  (4) edge [bend right=10, left] node {$a_{43}$} (3)
  (2) edge [bend right=40, above] node {$a_{24}$} (4)
  (1) edge [loop above] node {$a_{11}$} (1);
\end{tikzpicture}
    \caption{\label{f:io_irreducible} A simple production network}
   \end{center}
\end{figure}
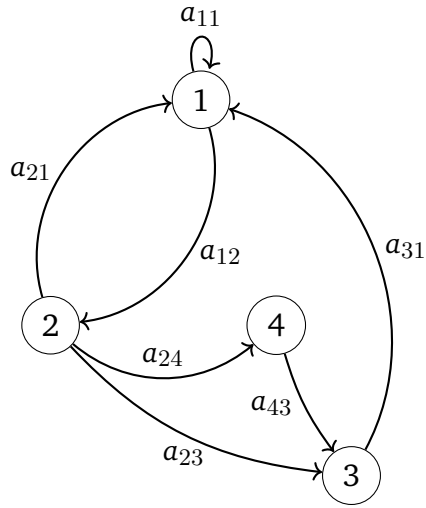

\subsubsection{Identities}

To start our search for equilibria, we set
\begin{itemize}
    \item $d_i :=$ final consumer demand for good $i$.
    \item $x_i :=$ total sales of sector $i$.
    \item $z_{ij} :=$ inter-industry sales from sector $i$ to sector $j$.
\end{itemize}
All numbers are understood to be in units of  national currency---dollars,
say.  For each sector $i$ we have the accounting identity
\begin{equation}\label{eq:sales}
    x_i = \sum_{j=1}^n z_{ij} + d_i,
\end{equation}
which states that total sales are divided between sales to other industries
and sales to final consumers.

Notice that
\begin{equation}\label{eq:leona}
    \frac{z_{ij}}{x_j}
    = \text{dollar value of inputs from $i$ per dollar output from $j$}
    = a_{ij} ,
\end{equation}
where the values $a_{ij}$ are the the input-output coefficients
discussed in \S\ref{sss:prodnet}. 
Using the coefficients, \eqref{eq:sales} can be rewritten as
\begin{equation}\label{eq:iorcon}
    x_i = \sum_{j=1}^n a_{ij} x_j + d_i,
    \qquad i = 1, \ldots n.
\end{equation}

The first term on the right hand side is the amount of good $i$ required as
inputs when the output vector is $x := (x_i)_{i=1}^n$.
We can combine the $n$ equations in
\eqref{eq:iorcon} into the linear system
\begin{equation}\label{eq:nnlinsys}
    x = A x + d.
\end{equation}

\subsubsection{Existence and Uniqueness}\label{sss:prodeu}

So far we have used no more than accounting identities and definitions.
However, we would also like to use \eqref{eq:nnlinsys} to determine output
vector $x$ given demand vector $d$, taking $A$ as fixed.  As a first step, we
seek conditions under which nonnegative solutions to~\eqref{eq:nnlinsys} exist
and are unique.

The \navy{value added}\index{Value added} of sector $j$ is defined as sales
minus spending on intermediate goods, or
\begin{equation*}
    v_j := x_j -\sum_{i=1}^n z_{ij} .
\end{equation*}

\begin{assumption}\label{a:pva}
    The input-output adjacency matrix $A$ obeys
    \begin{equation}
        \eta_j := \sum_{i=1}^n a_{ij} < 1
        \quad \text{for all } j \in \natset{n}.
    \end{equation}
\end{assumption}

\begin{Exercise}\label{ex:aimi}
    Prove that Assumption~\ref{a:pva} holds whenever value added is strictly
    positive in each sector.
\end{Exercise}

\begin{Answer}
    Fix $j \in \natset{n}$. Since $a_{ij} = z_{ij}/x_j$, we have $\eta_j = \frac{\sum_{i=1}^n
    z_{ij}}{x_j}$. Hence, if $v_j > 0$, then $\eta_j < 1$.
\end{Answer}

Exercise~\ref{ex:aimi} shows that Assumption~\ref{a:pva} is very mild.  For
example, in a competitive equilibrium, where firms make zero profits, positive
value added means that payments to factors of production other than
intermediate goods (labor, land, etc.) are strictly positive.  

\begin{Exercise}\label{ex:eara}
    Let $\eta(A) := \max_{j \in \natset{n}} \eta_j$.  Prove that $r(A) \leq
    \eta(A)<1$ whenever Assumption~\ref{a:pva} holds. 
\end{Exercise}

\begin{Answer}
    It follows easily from Assumption~\ref{a:pva} that $\eta(A) < 1$.  Moreover, since
    $A \geq 0$, the results in \S\ref{sss:someimp} imply that $r(A)$ is
    dominated by the maximum of the column sums of $A$.  But this is precisely
    $\eta(A)$.  Hence $r(A) \leq \eta(A)<1$. 
\end{Answer}

\begin{proposition}\label{p:usolnn}
    If Assumption~\ref{a:pva} holds, then, for each $d \geq 0$, the production
    system~\eqref{eq:nnlinsys} has the unique nonnegative output solution
    \begin{equation}\label{eq:leoninv}
        x^* = L d 
        \quad \text{where } 
        L := (I - A)^{-1}.
    \end{equation}
\end{proposition}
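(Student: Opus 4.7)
The plan is to reduce the claim to the Neumann series lemma (Theorem~\ref{t:nsl}), leveraging the preparatory observation in Exercise~\ref{ex:eara} that Assumption~\ref{a:pva} forces $r(A) \leq \eta(A) < 1$. This single spectral bound is what unlocks both existence/uniqueness and the explicit closed-form expression.

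First, I would rewrite the production identity $x = Ax + d$ as $(I - A) x = d$. Since Exercise~\ref{ex:eara} gives $r(A) < 1$, Theorem~\ref{t:nsl} tells us that $I - A$ is nonsingular and that
\begin{equation*}
    L := (I - A)^{-1} = \sum_{m=0}^\infty A^m.
\end{equation*}
Hence the linear system has the unique solution $x^* = L d$ in $\RR^n$, which establishes uniqueness over all of $\RR^n$ (not merely over the nonnegative cone).

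Second, I would verify nonnegativity. Because $A \geq 0$, each power $A^m$ is nonnegative, so the partial sums $\sum_{m=0}^M A^m$ are nonnegative, and thus their element-by-element limit $L$ is nonnegative as well. Given $d \geq 0$, the product $x^* = L d$ is then a nonnegative combination of nonnegative quantities, so $x^* \geq 0$.

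I do not expect a serious obstacle here: the real work has been done in Exercise~\ref{ex:eara} (getting $r(A) < 1$ from the column-sum condition) and in the Neumann series lemma itself. The only thing worth being careful about is to make clear that uniqueness is automatic once $I - A$ is nonsingular---one does not need a separate argument for uniqueness within the nonnegative cone---and that nonnegativity of $L$ follows from the power series representation rather than directly from $A \geq 0$ (which by itself does not imply nonnegativity of $(I-A)^{-1}$ without a spectral-radius condition).
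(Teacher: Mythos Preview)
Your proposal is correct and essentially identical to the paper's own proof: both invoke Exercise~\ref{ex:eara} to obtain $r(A)<1$, apply the Neumann series lemma for existence and uniqueness of $x^* = (I-A)^{-1} d$, and deduce $x^* \geq 0$ from the power series representation $L = \sum_{m \geq 0} A^m$ together with $A \geq 0$ and $d \geq 0$.
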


\begin{proof}
    By Exercise~\ref{ex:eara} and Assumption~\ref{a:pva} we have $r(A) < 1$.
    Hence the Neumann series lemma (NSL) 
    implies $x^*$ in~\eqref{eq:leoninv} is the unique solution in $\RR^n$. 
    Regarding nonnegativity, since $A$ is nonnegative, so is $A^i$ for all
    $i$.  Hence $x^* \geq 0$, by the power series 
    representation $L = \sum_{i=0}^\infty A^i$ provided by the NSL.
\end{proof}

The matrix $L = (\ell_{ij})$ in~\eqref{eq:leoninv} is called the \navy{Leontief
inverse}\index{Leontief inverse} associated with the coefficient matrix $A$.
We discuss its interpretation in~\S\ref{ss:dshocks}.

\begin{Exercise}
    A demand vector is called nontrivial if $d \not= 0$.  Let $d$ be
    nontrivial and suppose that $r(A) < 1$.  Show that, in equilibrium,
    every sector is active (i.e., $x^* \gg 0$) when $A$ is irreducible.
\end{Exercise}

\begin{Answer}
    Let $(A, d)$ be as stated.  When $A$ is irreducible, $L :=
    \sum_{i=1}^\infty A^i \gg 0$ and $x^* = L d$.  Since $d$ is nontrivial,
    $x^* \gg 0$ follows from $L \gg 0$ and the definition of matrix
    multiplication.
\end{Answer}

\begin{Exercise}
    A \emph{closed} input-output system is one where $d=0$.
    A nontrivial solution of a closed system $x = Ax$ is a
    nonzero $x \in \RR^n_+$ such that $Ax^* = x^*$.  Let $A$ be irreducible.
    Show that no nontrivial solution exists when $r(A) < 1$.  Show that a
    nontrivial solution exists and is unique up to constant multiples when
    $r(A) = 1$.
\end{Exercise}

\begin{Answer}
    If $r(A) < 1$, then $I-A$ is nonsingular.  At the same
    time, for the nontrivial solution $x$, we have $(I-A)x = 0$.  Contradiction.
    If, on the other hand, $r(A)=1$, then, since $r(A)$ is an eigenvalue (by the
    Perron--Frobenius theorem), we have $A x = x$ for some $x \gg 0$.
    The uniqueness claim follows from the Perron--Frobenius theorem.
\end{Answer}

\begin{Exercise}
    Consider a closed input-output system defined by input matrix $A$.
    Let $A$ be primitive.  Show that every nontrivial solution is everywhere
    positive.  Show that no nontrivial solution exists when $r(A) > 1$.
\end{Exercise}

\subsubsection{Assumptions}

It is common to interpret the expression $x^* = (I - A)^{-1} d$
from~\eqref{eq:leoninv} as meaning that supply is driven by demand.  While
this is not a universal truth, it does have plausibility in some
settings, such as when analyzing demand shifts in the short run.  Changes in
demand lead to changes in inventories, which typically cause firms to modify
production quantities.  We investigate these ideas in depth in~\S\ref{ss:dshocks}.

Another assumption concerns the production function in each sector.  You might
recall from elementary microeconomics that the \navy{Leontief production
function} takes the form
\begin{equation}\label{eq:leoprod}
    x 
    = f(z_1, \ldots, z_n)
    = \min\{ \gamma_1 z_1, \ldots, \gamma_n z_n \}.
\end{equation}
Here $x$ is output in a given sector, $\{\gamma_i\}$ is a set of parameters
and $\{z_i\}$ is a set of inputs.  To understand why \eqref{eq:leoprod} is
called a Leontief production function, note that by~\eqref{eq:leona} we have
\begin{equation}\label{eq:leonai}
    x_j = \frac{z_{ij}}{a_{ij}} 
    \quad \text{for all } i \in \natset{n} \text{ such that } a_{ij}> 0.
\end{equation}
If we interpret $z/0 = \infty$ for all $z \geq 0$, then~\eqref{eq:leonai}
implies $x_j = \min_{i \in \natset{n}} z_{ij}/a_{ij}$.  This is a version of
\eqref{eq:leoprod} specialized to sector $j$.  Hence~\eqref{eq:leoprod}
arises naturally from Leontief input-output analysis.

A final comment on assumptions is that, while the Leontief model is too simple for some purposes,  it  serves as a useful
building block for more sophisticated
models.  We discuss one such model in~\S\ref{ss:eqmulm}.

\subsection{Demand Shocks}\label{ss:dshocks}

In this section we study  impacts of changes in demand via a power series
representation  $\sum_{i \geq 0} A^i$ of the Leontief inverse $L$.  We assume
throughout that $r(A) < 1$, so that the series and $L$ are finite and equal.

\subsubsection{Response to Demand Shocks}\label{sss:rds}

Consider the impact of a demand shock of
size $\Delta d$, so that demand shifts from $d_0$ to $d_1 = d_0 + \Delta d$.  The
equilibrium output vector shifts from $x_0 = L d_0$ to 
$x_1 = L d_1$.  Subtracting the first of these equations from
the second and expressing the result in terms of differences gives
$\Delta x = L \Delta d$. Using the geometric sum version of
the Leontief inverse yields
\begin{equation}
    \Delta  x
    = \Delta d
    + A (\Delta d)
    + A^2 (\Delta d)
    + \cdots
\end{equation}
The sums in this term show how the shock propagates backward through the production
network:  
\begin{enumerate}
    \item $\Delta d$ is the initial response in each sector,
    \item $A (\Delta d)$ is the response generated by the first round of
        backward linkages,
    \item $A^2 (\Delta d)$ is the response generated by the second round, and so on.
\end{enumerate}
The total response is the sum of  responses at all  rounds. 

We can summarize the above by stating that a typical element $\ell_{ij}$ of $L
= \sum_{m \geq 0} A^m$ shows the total impact on sector $i$ of a unit change
in demand for good $j$, after taking into account all direct and indirect
effects.  $L$ itself is reminiscent of a Keynesian multiplier: changes in demand
are multiplied by this matrix to generate final output. 


Figure~\ref{f:input_output_analysis_15_leo} helps visualize the Leontief
inverse computed from the 15 sector network.  Hotter colors indicate larger
values for $\ell_{ij}$, with $i$ on the vertical axis and $j$ on the
horizontal axis.  We see, for example, that an increase in demand in almost
any sector generates a rise in manufacturing output.

\begin{figure}
   \begin{center}
    \scalebox{0.65}{\includegraphics[trim = 6mm 0mm 6mm 6mm, clip]{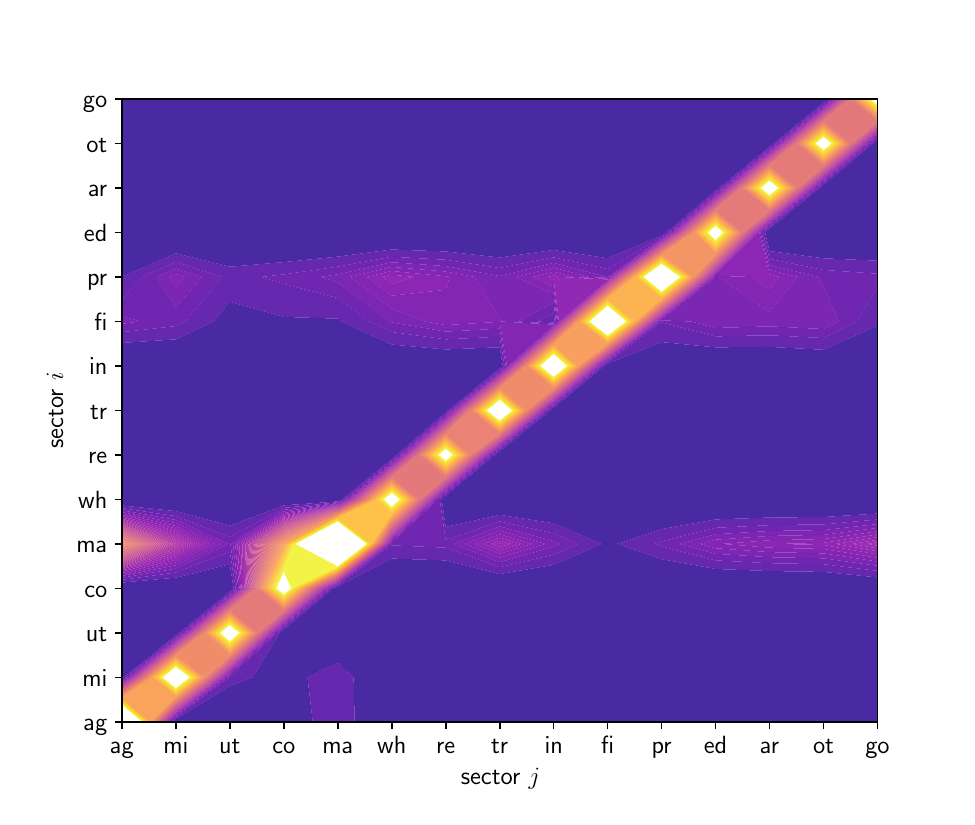}}
    \caption{\label{f:input_output_analysis_15_leo} The Leontief inverse $L$
    (hotter colors indicate larger values)}
   \end{center}
\end{figure}

\subsubsection{Shock Propagation}\label{sss:shockprop}

Figure~\ref{f:input_output_analysis_15_shocks} shows the impact of a given
vector of demand shocks $\Delta d$ on the 15 sector input-output model.  In
this simulation, each element of $\Delta d$ was drawn independently from a
uniform distribution.  The vector $\Delta d$ is shown visually in the panel
titled ``round 0,'' with hotter colors indicating larger values.  The shock
draw was relatively large in retail (re), agriculture (ag) and wholesale (wh).

\begin{figure}
   \begin{center}
    \scalebox{0.7}{\includegraphics{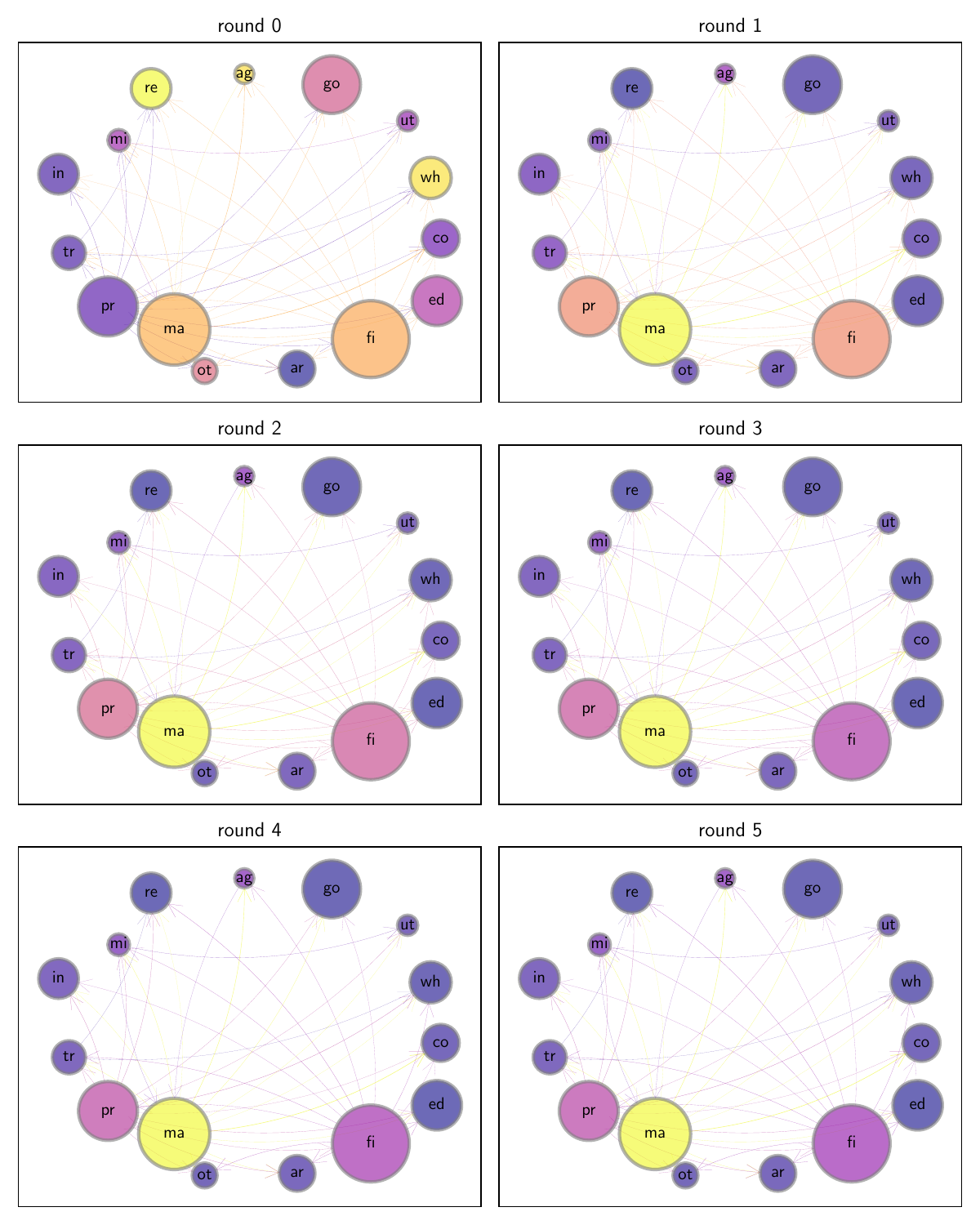}}
    \caption{\label{f:input_output_analysis_15_shocks} Propagation of demand shocks via backward linkages}
   \end{center}
\end{figure}

The remaining rounds then show the values $A (\Delta d)$, $A^2 (\Delta d)$, etc.,
with hotter colors indicating higher values.  In each round, to make the
within-round comparison between sectors clearer, values of the vector $A^i
(\Delta d)$ are rescaled into the $[0,1]$ interval before the color map is
applied.  

Note that, by round 4, the values of $A^i (\Delta d)$ have settled into a
fixed pattern.  (This is only up to a scaling constant, since values are
rescaled into $[0,1]$ as just discussed.)  Manufacturing is the most active
sector, while finance and professional services are also quite active.
In fact, if we repeat the simulation with a new draw for $\Delta d$, the 
pattern of active sectors quickly converges to exactly the same configuration.

We can explain this phenomenon using the Perron--Frobenius theorem.  Since $A$
is primitive (in the 15 sector case), we know that $r(A)^{-m} A^m$ converges
to $e \epsilon^\top$ as
$m \to \infty$, where $e$ and $\epsilon$ are the dominant left and right
eigenvectors respectively, normalized so that $\inner{\epsilon, e}=1$.  It follows that,
for large $m$, we have
\begin{equation}\label{eq:dshockk}
    A^m (\Delta d) \approx r(A)^m \inner{\epsilon, \Delta d} e.
\end{equation}
In other words, up to a scaling constant, the shock response $A^m (\Delta d)$
converges to the dominant right eigenvector, which is also the hub-based
eigenvector centrality measure.  

In Figure~\ref{f:input_output_analysis_15_shocks}, the scaling constant is not
visible because the values are rescaled to a fixed interval before the color
map is applied.  However, \eqref{eq:dshockk} shows us its value, as well as
the fact that the scaling constant converges to zero like $r(A)^m$.  Hence,
the dominant eigenpair $(r(A), e)$ gives us both the configuration of the
response to an arbitrary demand shock and the rate at which the response dies
out as we travel back through the linkages.

At this point, we recall that the sectors in
Figure~\ref{f:input_output_analysis_15} were colored according to hub-based
eigenvector centrality.  If you compare this figure to
Figure~\ref{f:input_output_analysis_15_shocks}, you will be above to confirm
that, at least for later rounds, the color schemes line up, as predicted by
the theory. Finance (fi) and manufacturing (ma) rank highly, as does the
professional services sector (pr), which includes consulting, accounting and
law.

\subsubsection{Eigenvector Centrality}

Let's look at hub-based eigenvector centrality more closely.  In a production
network, the hub property translates into being an important supplier.  Our
study of demand shocks in~\S\ref{ss:dshocks} highlighted the significance of
the eigenvector measure of hub-based centrality: if sector $i$ has high rank
under this measure, then it becomes active after a large variety of different
shocks.  Figure~\ref{f:input_output_analysis_15_ec} shows hub-based
eigenvector centrality as a bar graph for the 15 sector case. By this measure,
manufacturing is by far the most dominant sector in the US economy.

\begin{figure}
   \begin{center}
    \scalebox{0.64}{\includegraphics{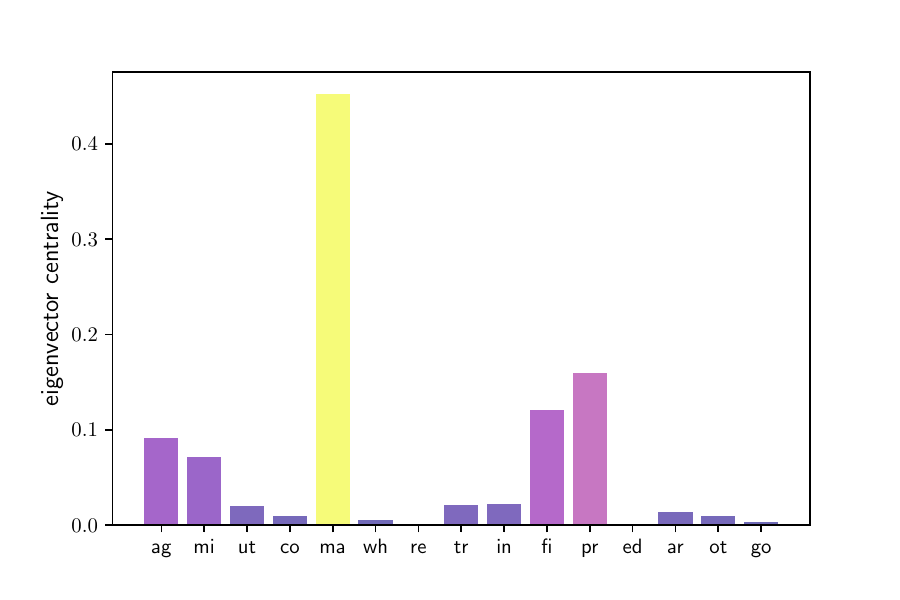}}
    \caption{\label{f:input_output_analysis_15_ec} Eigenvector centrality across US industrial sectors}
   \end{center}
\end{figure}

Reviewing the color scheme in Figure~\ref{f:input_output_analysis_71} based on
our current understanding of eigenvector centrality, we see that chemical
products (325) and primary metals (331) are both highly ranked, and hence a
wide range of demand shocks generate high activity in these sectors.

To provide some extra context, we show the analogous figure using Australian
2018 input-output data, collected by the Australian Bureau of Statistics.
     Node size is proportional to sales share and arrow width is proportional
     to the input-output coefficient.  The color map shows hub-based
     eigenvector centrality.  

By this measure, the highest ranked sector is 6901, which is ``professional,
scientific and technical services.'' This includes scientific research,
engineering, computer systems design, law, accountancy, advertising, market
research, and management consultancy.  The next highest sectors are
construction and electricity generation.  This is in contrast to the US, where
manufacturing sectors are at the top of the ranking.

\begin{figure}
   \begin{center}
    \scalebox{0.68}{\includegraphics[trim = 40mm 50mm 30mm 50mm, clip]{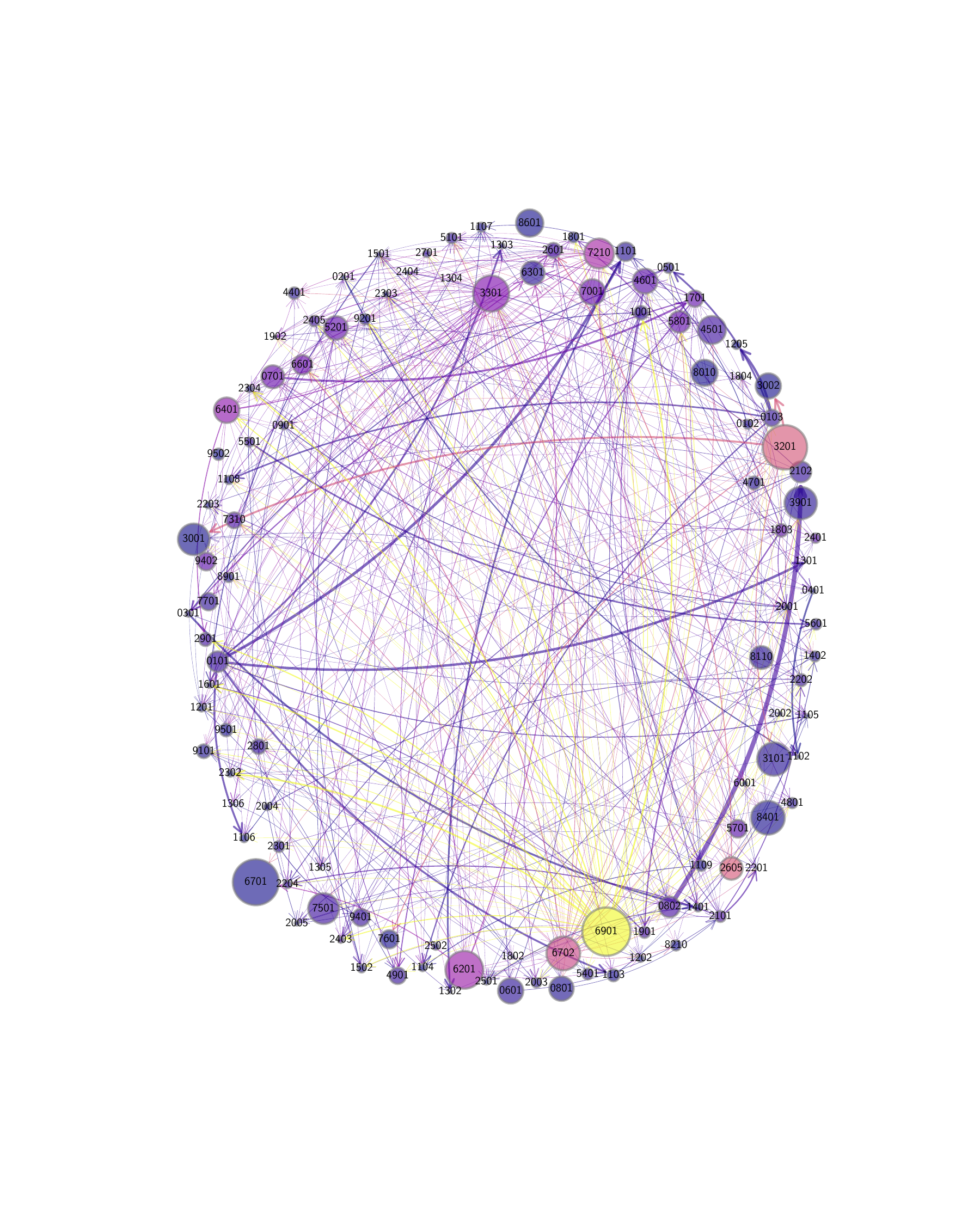}}
    \caption{\label{f:input_output_analysis_aus_114} Network for 114 Australian industry sectors in 2018}
   \end{center}
\end{figure}

\subsubsection{Output Multipliers}\label{sss:outmul}

One way to  rank sectors that has a long tradition in
input-output analysis is via output multipliers.  The \navy{output
multiplier}\index{Output multiplier} of
sector $j$, denoted below by $\mu_j$, is usually defined as the ``total
sector-wide impact of an extra dollar of demand in sector $j$,'' where total
means taking into account backward linkages.  This measure has historically
been of interest to policy makers considering  impacts of fiscal stimulus.

Recalling from~\S\ref{sss:rds} that $\ell_{ij}$ 
shows the total impact on sector $i$ of a unit change
in demand for good $j$, we come to the definition
\begin{equation*}
    \mu_j = \sum_{i=1}^n  \ell_{ij}
    \qquad (j \in \natset{n}).
\end{equation*}
In vector notation this is $\mu^\top = \1^\top L$ or,
\begin{equation}\label{eq:mudefio}
    \mu^\top = \1^\top (I - A)^{-1}.
\end{equation}
Comparing this with \eqref{eq:katzav}, we see that the vector of output
multipliers is equal to the authority-based Katz centrality measure (with
the parameter $\beta$ defaulting to unity).

The connection between the two measures makes sense: high authority-based
centrality means that a sector has many inward links, and that those links are
from other important sectors.  Loosely speaking, such a sector is an important
buyer of intermediate inputs. A sector highly ranked by this measure that
receives a demand shock will cause a large impact on the whole production
network.

Figure~\ref{f:input_output_analysis_15_omult} shows the size of output
multipiers across 15 US industrial sectors, calculated from the same
input-output data as previous 15 sector figures using~\eqref{eq:mudefio}.  The
highest ranks are assigned to manufacturing, agriculture and construction.

\begin{figure}
   \begin{center}
    \scalebox{0.64}{\includegraphics[trim = 0mm 0mm 0mm 0mm, clip]{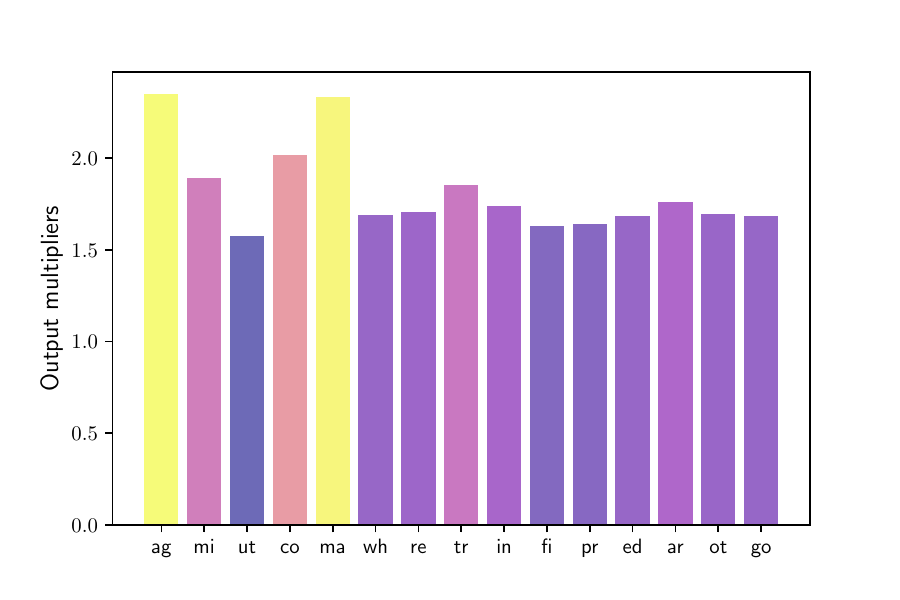}} \caption{\label{f:input_output_analysis_15_omult} Output multipliers across 15 US industrial sectors}
   \end{center}
\end{figure}

\subsection{Forward Linkages}\label{ss:forlink}

Several economic questions connect to relative ``upstreamness'' of a sector or
production good.  For example, \cite{olabisi2020input} finds that upstreamness
is related to sectoral volatility, while \cite{antras2012measuring} examine
the relationship between upstreamness and tendency to export.  Tariff changes
tend to have different aggregate effects when applied to upstream rather than
downstream industries \citep{martin2020downstream}.  Finally, since WWII, many
developing countries have systematically supported and encouraged upstream
industries \citep{liu2019industrial}.

In order to study upstreamness, we first introduce the Ghosh model for forward
linkages, which uses a rearrangement of terms from the original Leontief model.

\subsubsection{The Ghosh Model}

Recall that $a_{ij}=z_{ij}/x_j = $  the dollar value of inputs from $i$ per
dollar of sales from $j$.  Consider now the related quantities
\begin{equation}\label{eq:forlink}
    f_{ij} := 
    \frac{z_{ij}}{x_i}
    = \text{value of inputs from $i$ to $j$ per dollar output from $i$}.
\end{equation}
Let $F := (f_{ij})_{i, j \in \natset{n}}$.  The matrix $F$ is called the
\navy{direct-output} matrix or the \navy{Ghosh matrix}.  Element $f_{ij}$ can
be interpreted as the size of the ``forward linkage'' from $i$ to $j$. 
Analogous to $A$, the matrix $F$ can be viewed as a weight function over
output sectors and visualized as in
Figure~\ref{f:input_output_analysis_15_fwd}.  This digraph uses the same data
source as Figure~\ref{f:input_output_analysis_15}.

\begin{figure}
   \begin{center}
    \scalebox{0.92}{\includegraphics[trim = 20mm 20mm 4mm 20mm, clip]{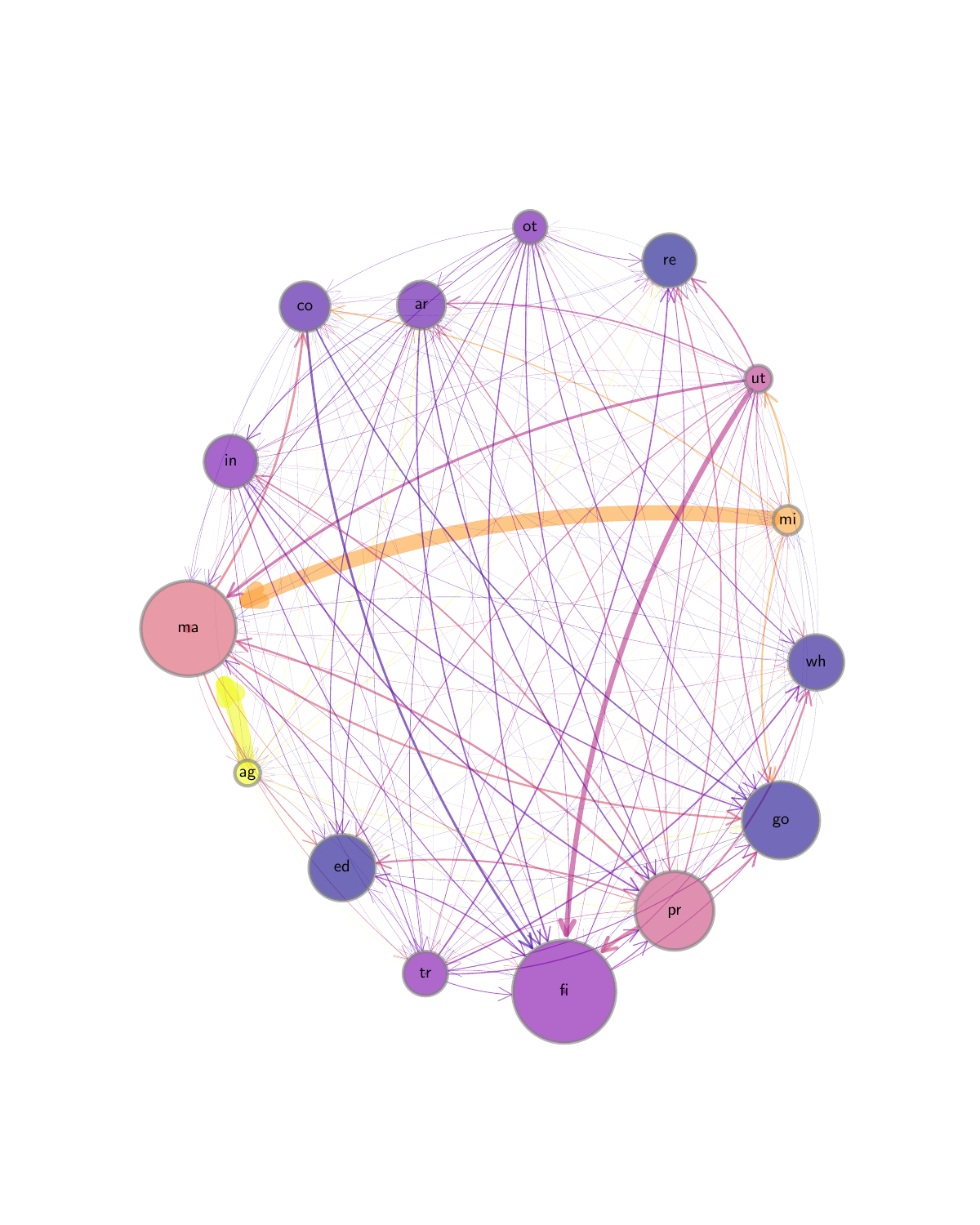}}
    \caption{\label{f:input_output_analysis_15_fwd} Forward linkages and upstreamness over US industrial sectors}
   \end{center}
\end{figure}

\begin{Exercise}
    Prove that $A$ and $F$ are similar matrices (see \S\ref{sss:sim}) when $x \gg 0$.
\end{Exercise}

Let $v_j$ be value added in sector $j$ (i.e., payments to factors of
production other than intermediate goods). We have
\begin{equation}\label{eq:vaid}
    x_j = \sum_{i=1}^n z_{ij} + v_j
    \qquad (j \in \natset{n}).
\end{equation}
This states that (under perfect competition), the revenue of sector $j$ is
divided between spending on intermediate goods, which is the first term
$\sum_{i=1}^n z_{ij}$, and payments to other factors of production
(value added).

Using the forward linkages, we can rewrite \eqref{eq:vaid}
as $x_j = \sum_i f_{ij} x_i + v_j$ for all $j$ or, in matrix form
\begin{equation}
    x^\top = x^\top F + v^\top.
\end{equation}
Taking transposes and solving under the assumption $r(F) < 1$ gives
\begin{equation}\label{eq:vaid2}
    x^* = (I - F^\top)^{-1} v.
\end{equation}
We can think of the solution $x^*$ in \eqref{eq:vaid2} as the amount of output
necessary to acquire a given amount of value added.
Since payments of value added are made to underlying factors of production,
the Ghosh model is also called a ``supply-side input output model''.

\begin{Exercise}\label{ex:rarf}
    In \S\ref{sss:prodeu} we argued that $r(A) < 1$ will almost always hold.  This
    carries over to $r(F)$, since $r(A) = r(F)$ whenever $x \gg 0$.  Provide a
    proof of the last statement.
\end{Exercise}

\begin{Answer}
    Let $\lambda$ be an eigenvalue of $A$ and let $e$ be the corresponding
    eigenvector. Then, for all $i \in \natset{n}$, we have
    \begin{equation*}
        \sum_j a_{ij} e_j = \lambda e_i
        \quad \iff \quad
        \sum_j \frac{z_{ij}}{x_j} e_j = \lambda e_i
        \quad \iff \quad
        \sum_j f_{ij} \frac{e_j}{x_j} = \lambda \frac{e_i}{x_i},
    \end{equation*}
    where we have used the fact that $x \gg 0$.  It follows that $\lambda$ is an
    eigenvalue of $F$.  The same logic runs in reverse, so $A$ and $F$
    share eigenvalues.  Hence $r(A)=r(F)$.
\end{Answer}

We omit a discussion of the relative merits of supply- and demand-driven
input-output models.  Our main interest in forward linkages is due to their
connection to the topic of ranking sectors by relative upstreamness.

\subsubsection{Upstreamness}\label{sss:upness}

Which industries are relatively upstream?  One proposed measure of upstreamness can
be found in \cite{antras2012measuring}.  With $f_{ij}$ as defined in
\eqref{eq:forlink}, the upstreamness $u_i$ of sector $i$ is defined
recursively by
\begin{equation}\label{eq:defup}
    u_i = 1 + \sum_{j=1}^n f_{ij} u_j.
\end{equation}
The recursive definition of the vector $u$ in \eqref{eq:defup} stems from the
idea that those sectors selling a large share of their output to upstream
industries should be upstream themselves.  

We can write \eqref{eq:defup} in vector form as $u = \1 + F u$ and solve for
$u$ as
\begin{equation}\label{eq:defup2}
    u = (I - F)^{-1} \1.
\end{equation}
A unique nonnegativity solution exists provided that $r(F) < 1$.
We expect this to hold in general, due to the findings in
Exercise~\ref{ex:rarf}.  

Maintaining the convention $\beta=1$, we see that the upstreamness
measure~\eqref{eq:defup2} proposed by \cite{antras2012measuring} is in fact
the hub-based Katz centrality measure \eqref{eq:katzhub} for the production
network with weights allocated by the forward linkage matrix $F$.

Figure~\ref{f:input_output_analysis_15_up} shows the result of computing $u$
via~\eqref{eq:defup2}, plotted as a bar
graph, for the 15 sector input-output network. Consistent with expectations,
the primary commodity producers (agriculture and mining) are the most
upstream, while retail, education and health services are typical downstream
sectors.

\begin{figure}
   \begin{center}
    \scalebox{0.64}{\includegraphics{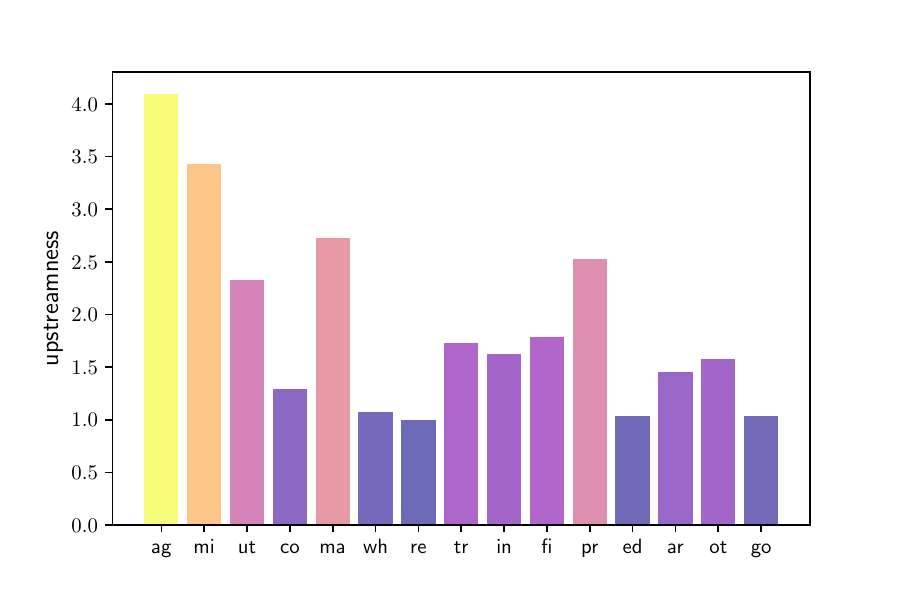}}
    \caption{\label{f:input_output_analysis_15_up} Relative upstreamness of US industrial sectors}
   \end{center}
\end{figure}

The nodes in Figure~\ref{f:input_output_analysis_15_fwd} are also colorized by
upstreamness.

\section{General Equilibrium}

One limitation of the Leontief input-output analysis from \S\ref{ss:mutmod}
is that demand is fixed and exogenous.  In this section we embed Leontief's
model in an equilibrium setting where output and prices are determined by a
combination of supply and demand.  One objective is to understand how an
input-output structure interacts with firm-level shocks to shape aggregate
volatility.   

\subsection{Supply and Demand}\label{ss:eqmulm}

Our first step is to introduce and solve a multisector general
equilibrium model based on \cite{acemoglu2012network} and
\cite{carvalho2019production}.

\subsubsection{Production and Prices}

As in the Leontief economy, there are $n$ sectors, also called industries,
each of which produces one good.  Real output in sector $j$ is given by 
\begin{equation}\label{eq:mscdp}
    y_j = s_j \ell_j^\alpha \prod_{i=1}^n q_{ij}^{a_{ij}}.
\end{equation}
Here 
\begin{itemize}
    \item $s_j$ is a sector-specific shock (independent across sectors),
    \item $\ell_j$ is labor input to sector $j$,
    \item $q_{ij}$ is the amount of good $i$ used in the production of good
        $j$, and
    \item $\alpha$ and $a_{ij}$ take values in $(0, 1)$ and
        satisfy $\alpha + \sum_i a_{ij}=1$ for all $j \in \natset{n}$.
\end{itemize}
The last condition implies constant returns to scale (CRS) in each
sector.\footnote{In order to be
consistent with traditional input-output notation (see~\S\ref{ss:mutmod}), we
transpose $i$ and $j$ relative to sources such as \cite{acemoglu2012network}
and \cite{carvalho2019production}.  This is just a matter of convention.}

\begin{Exercise}\label{ex:caral}
    Let $A = (a_{ij})$ be the $n \times n$ matrix of \navy{technical coefficients} from the
    Cobb--Douglas production function in~\eqref{eq:mscdp}.  Using the
    stated assumptions and the results in \S\ref{sss:someimp}, show that $r(A)
    < 1$.\footnote{Later, in \S\ref{s:mst}, we use additional spectral theory
    to prove the exact result $r(A)=1-\alpha$.}
\end{Exercise}

\begin{Exercise}\label{ex:1malpha}
    Prove that $\sum_i \sum_j a_{ij}^{(m)} = n (1-\alpha)^m$ for all $m \in
    \NN$, where $a_{ij}^{(m)}$ is the $(i,j)$-th element of $A^m$.
\end{Exercise}

\begin{Answer}
    We need to show that $\1^\top A^m \1 = n(1-\alpha)^m$ for any $m$.  We prove
    this by induction, noting that $\1^\top A = (1-\alpha) \1^\top$ by the CRS
    assumption.  It follows immediately that $\1^\top A \1 = n (1-\alpha)$.
    Now suppose also that $\1^\top A^m \1 = n(1-\alpha)^m$ holds.  Then
    \begin{equation*}
        \1^\top A^{m+1} \1
        = \1^\top A A^m \1
        = (1-\alpha) \1^\top A^m \1
        = n (1-\alpha)^{m+1},
    \end{equation*}
    where the last step is by the induction hypothesis.
\end{Answer}

Firms are price takers.  With $p_j$ being the price of good $j$, a firm in
sector $j$ maximizes profits
\begin{equation}\label{eq:profitsms}
    \pi_j := p_j y_j - w \ell_j - \sum_i p_i q_{ij}
\end{equation}
with respect to the $n+1$ controls $\ell_j$ and $q_{1j}, \ldots, q_{nj}$.

\begin{Exercise}
    Show that, when prices and wages are taken as given, the unique global
    maximizers of \eqref{eq:profitsms} are 
    \begin{equation}\label{eq:msmax}
        \ell_j = \alpha \frac{p_j y_j}{w}
        \quad \text{and} \quad
        q_{ij} = a_{ij} \frac{p_j y_j}{p_i}
        \qquad (i, j \in \natset{n}).
    \end{equation}
\end{Exercise}

\begin{Answer}
    Inserting~\eqref{eq:mscdp} into \eqref{eq:profitsms} and differentiating with
    respect to $\ell_j$ and $q_{ij}$ leads to the first order conditions given
    in \eqref{eq:msmax}.  It can be shown that these local maximizers are
    global maximizers, although we omit the details.   
\end{Answer}

\begin{remark}\label{r:tceio}
    From~\eqref{eq:msmax} we have $a_{ij} = (p_i q_{ij})/(p_j
    y_j)$,  which states that the $i,j$-th technical coefficient is 
    the dollar value of inputs from $i$ per dollar of sales from $j$.
    This coincides with the definition of $a_{ij}$ from the discussion of
    input-output tables in \S\ref{sss:prodnet}. Hence, in the current setting,
    the (unobservable) technical coefficient matrix equals the (observable)
    input-output coefficient matrix defined in \S\ref{ss:mutmod}.
\end{remark}

Substituting the maximizers~\eqref{eq:msmax} into the production function gives
\begin{equation}\label{eq:mspwf}
    y_j = c s_j \left( \frac{p_j y_j}{w} \right)^\alpha
    \prod_{i=1}^n \left( \frac{p_j y_j}{p_i} \right)^{a_{ij}},
\end{equation}
where $c$ is a positive constant depending only on parameters.  

\begin{Exercise}\label{ex:umspw}
    Using~\eqref{eq:mspwf}, show that 
    \begin{equation*}
        \rho_j = \sum_i a_{ij} \rho_i - \epsilon_j
        \quad \text{where }
        \rho_j := \ln \frac{p_j}{w}
        \quad \text{and} \quad
        \epsilon_j := \ln (c s_j).
    \end{equation*}
\end{Exercise}
Let $\rho$ and $\epsilon$ be the column vectors $(\rho_i)_{i=1}^n$ and
$(\epsilon_i)_{i=1}^n$ of normalized prices and log shocks from
Exericse~\ref{ex:umspw}. Collecting the equations stated there leads to
$\rho^\top = \rho^\top A - \epsilon^\top$, or
\begin{equation}\label{eq:msirho}
    \rho = A^\top \rho - \epsilon.
\end{equation}

\begin{Exercise}
    Prove that
    \begin{equation}
        \rho_j = - \sum_i \epsilon_i \ell_{ij} 
        \quad \text{where }
        L := (\ell_{ij}) := (I - A)^{-1}.
    \end{equation}
    Why is $L$ well defined?
\end{Exercise}

\begin{Answer}
    By Exercise~\ref{ex:caral} we have $r(A)<1$.  Hence 
    $L = (I - A)^{-1}$ is well defined
    and, moreover, we can solve~\eqref{eq:msirho} using the Neumann series lemma,
    yielding $\rho = - (I - A^\top)^{-1} \epsilon$.  Since the
    inverse of the transpose is the transpose of the inverse, we can write this as
    $\rho = - L^\top \epsilon$.  Unpacking gives the
    equation stated in the exercise.
\end{Answer}

As in Chapter~\ref{c:prod}, the matrix $L$ is the Leontief
inverse\index{Leontief inverse} generated by $A$.

\subsubsection{Consumpion}

Wages are paid to a representative household who chooses consumption  to
maximize utility $\sum_i \ln c_i$.  In equilibrium, profits are zero, so the
only income accruing to the household consists of wage income. The
household supplies one unit of labor inelastically.  Hence, the budget
constraint is $\sum_i p_i c_i = w$.  

\begin{Exercise}
    Show that the unique utility maximizer is the vector $(c_1, \ldots, c_n)$
    that satisfies $p_i c_i = w/n$ for all $i \in \natset{n}$.  (Equal amounts are
    spent on each good.)
\end{Exercise}

\subsubsection{Aggregate Output}

In this economy, aggregate value added (defined in \S\ref{ss:forlink}) is
equal to the wage bill.  This quantity is identified with real aggregate
output and referred to as GDP.  The \navy{Domar weight}\index{Domar weight} of
each sector is defined as its sales as a fraction of GDP:
\begin{equation*}
    h_i := \frac{p_i y_i}{w} .
\end{equation*}
From the closed economy market clearing condition $y_i = c_i + \sum_j q_{ij}$ and
the optimality conditions we obtain
\begin{equation}\label{eq:mseqcon}
    y_i = \frac{w}{n p_i} 
    + \sum_j a_{ij} \frac{p_j y_j}{p_i}.
\end{equation}

\begin{Exercise}\label{ex:dmrleo}
    Letting $L = (\ell_{ij})$ be the Leontief inverse and
    using~\eqref{eq:mseqcon}, show that  Domar weights satisfy
    \begin{equation*}
        h_i = \frac{1}{n} \sum_j \ell_{ij} 
        \qquad \text{for all } i \in \natset{n}.
    \end{equation*}
\end{Exercise}

\begin{Answer}
    Equation~\eqref{eq:mseqcon} can be expressed as $h_i = n^{-1} + \sum_j a_{ij}
    h_j$.  Letting $h = (h_i)$ be a column vector in $\RR^n$ and letting $\1$
    be a column vector of ones, these $n$ equations become $h = n^{-1} \1 + A h$.
    Since $r(A) <1$, the unique solution is $h = n^{-1} (I-A)^{-1} \1 =
    n^{-1} L \1$.  Unpacking the vector equation gives the stated result.
\end{Answer}

\begin{Exercise}\label{ex:sboh}
    Prove that $\sum_{i=1}^n h_i = 1/\alpha$.
\end{Exercise}

\begin{Answer}
    From the result in Exercise~\ref{ex:1malpha}, we have
    \begin{equation*}
        \1^\top h 
        = \frac{1}{n} \1^\top \sum_{m \geq 0} A^m \1
        = \frac{1}{n} \sum_{m \geq 0} \1^\top A^m \1
        = \frac{1}{n} n \sum_{m \geq 0} (1-\alpha)^m .
    \end{equation*}
    The last expression evaluates to $1/\alpha$, as was to be shown.
\end{Answer}

From the results of Exercise~\ref{ex:umspw} we obtain
$\ln w = \ln p_j + \sum_i \epsilon_i \ell_{ij}$.  Setting $g := \ln w$
and summing yields
\begin{equation*}
    n g = \sum_j  \ln p_j + \sum_i \epsilon_i \sum_j \ell_{ij} .
\end{equation*}
Normalizing prices so that $\sum_i \ln p_i = 0$, this simplifies to
\begin{equation}\label{eq:msagout}
    g = \sum_i \epsilon_i h_i .
\end{equation}
Thus, log GDP is the inner product of sectoral shocks and the Domar weights.

\subsection{The Granular Hypothesis}\label{ss:granor} 

We have just constructed a multisector model of production and output.  We
plan to use this model to study shock propagation and aggregate fluctuations.
Before doing so, however, we provide a relatively simple and network-free
discussion of shock propagation.  The first step is to connect the propagation
of shocks to the firm size distribution.  Later, in \S\ref{ss:avr}, we will
see how these ideas relate to the general equilibrium model and the
topology of the production network.

\subsubsection{Aggregate vs Idiosyncratic Shocks}

Some fluctuations in aggregate variables such as GDP growth and the
unemployment rate can be tied directly to large exogenous changes in the
aggregate environment.  One obvious example is the jump in the US unemployment
rate from 3.5\% to 14.8\% between February and April 2020, which was initiated
by the onset of the COVID pandemic and resulting economic shutdown.

Other significant fluctuations lack clear macro-level causes.  For example,
researchers offer mixed explanations for the 1990 US recession, including
``technology shocks,'' ``consumption shocks'' and loss of ``confidence''
\citep{cochrane1994shocks}.  However, these explanations are either difficult
to verify on the basis of observable outcomes or require exogenous shifts in
variables that should probably be treated as endogenous.

One way to account for at least some of the variability observed in output
growth across most countries is on the basis of firm-level and sector-specific
productivity and supply shocks.  Examples of sector-specific shocks include 
\begin{enumerate}
    \item the spread of African Swine Fever to China in 2018,
    \item the Great East Japan Earthquake of 2011 and resulting tsunami,
        which triggered meltdowns at three reactors in the Fukushima Daiichi
        Nuclear Power Plant, and
    \item the destruction of Asahi Kasei Microdevices' large scale IC factory
        in Miyazaki Prefecture in October 2020.
\end{enumerate}

In the discussion below, we investigate the extent to which 
firm-level shocks can drive fluctuations in aggregate productivity.

\subsubsection{The Case of Many Small Firms}\label{sss:gdpgrowth}

It has been argued that idiosyncratic, firm-level shocks can  account only for
a very small fraction of aggregate volatility (see, e.g.,
\cite{dupor1999aggregation}).  The logical heart of this argument is the  dampening
effect of averaging over independent random variables.  To illustrate the main
idea, we follow a simple model of production without linkages across firms
by \cite{gabaix2011granular}.  

Suppose there are $n$ firms, with the size of the $i$-th firm, measured by
sales, denoted by $S_i$.  Since all sales fulfill final demand, GDP is given
by $Y := \sum_{i=1}^n S_i$.  We use primes for next period values and $\Delta$
for first differences (e.g., $\Delta S_i = S_i' - S_i$).  We assume that firm
growth $\Delta S_i / S_i$ is equal to $\sigma_F \epsilon_i$, where
$\{\epsilon_i\}$ is a collection of {\sc iid} random variables corresponding
to firm-level idiosyncratic shocks.  We also assume that $\var(\epsilon_i) =
1$, so that $\sigma_F$ represents firm-level growth volatility.  

GDP growth is then
\begin{equation*}
    G 
      := \frac{\Delta Y}{Y} 
      = \frac{\sum_{i=1}^n \Delta S_i}{Y} 
      = \sigma_F \sum_{i=1}^n \frac{S_i}{Y} \epsilon_i.
\end{equation*}

\begin{Exercise}
    Treating the current firm size distribution $\{S_i\}$ and hence GDP as given,
    show that, under the stated assumptions, the standard deviation of GDP
    growth $\sigma_G := (\var G)^{1/2}$ is
    \begin{equation}\label{eq:gsdg}
        \sigma_G 
            = \sigma_F H_n
            \quad \text{where } \;
            H_n := \left( 
                    \sum_{i=1}^n \left( \frac{S_i}{Y} \right)^2 
                   \right)^{1/2}.
    \end{equation}
\end{Exercise}

\begin{Answer}
    This expression follows easily from the definition of variance and the
    independence of firm-level shocks, which allows us to pass the variance
    through the sum.
\end{Answer}

If, say, all firms are equal size, so that $n S_i = Y$, this means that
$\sigma_G = \sigma_F / \sqrt{n}$, so volatility at the aggregate level is very
small when the number of firms is large.  For example, if the number of firms
$n$ is $10^6$, which roughly matches US data, then 
\begin{equation}\label{eq:gonf}
    \frac{\sigma_G}{\sigma_F }
    = H_n = \left( \frac{1}{10^6} \right)^{1/2} = 10^{-3} = 0.001.
\end{equation}
Hence firm-level volatility accounts for only 0.1\% of aggregate volatility.

To be more concrete, \cite{gabaix2011granular} calculates $\sigma_F = 12$,
which means that, by \eqref{eq:gonf}, $\sigma_G = 0.012$\%.  But the
volatility of GDP growth is actually far higher.  Indeed,
Figure~\ref{f:gdp_growth} reports that, for the US, $\sigma_G$ is
approximately $2\%$, which is two orders of magnitude greater.  The core
message is that, under the stated assumptions, firm-level shocks explain only
a tiny part of aggregate volatility. 

\begin{figure}
    \centering
    \scalebox{0.74}{\includegraphics{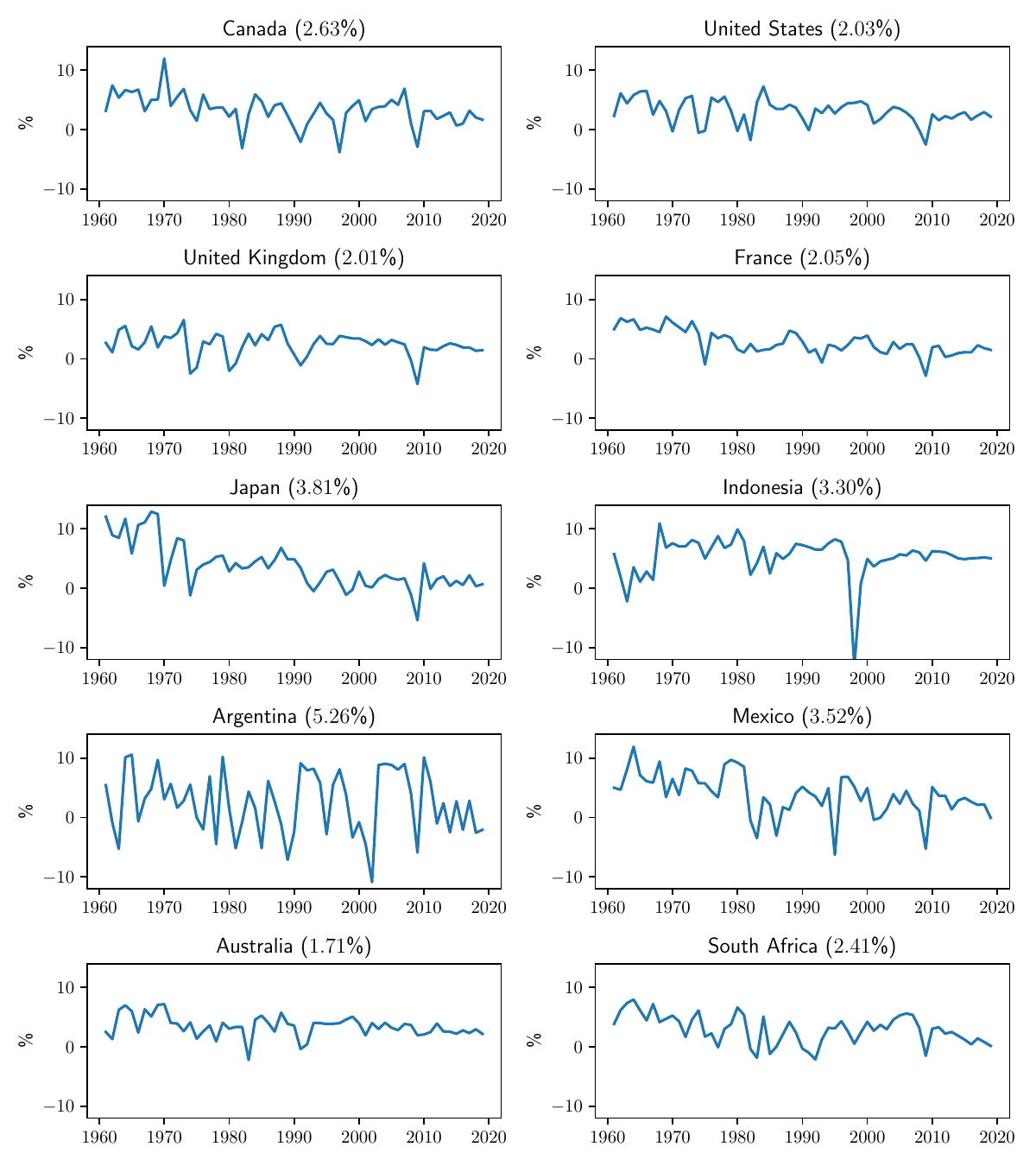}}
    \caption{\label{f:gdp_growth} GDP growth rates and std.\ deviations (in
    parentheses) for 10 countries}
\end{figure}

\subsubsection{The Effect of Heavy Tails}\label{sss:efht}

There are some obvious problems with the line of reasoning used in
\S\ref{sss:gdpgrowth}.  One is that firms are assumed to be of identical size.
In reality, most firms are small to medium, while a relative few are enormous.
For example, in the US, a small number of giants dominate technology,
electronics and retail. 

\cite{gabaix2011granular} emphasized that we can get closer to actual GDP
volatility by more thoughtful specification of the firm size distribution.
Altering the distribution $\{S_i\}_{i=1}^n$  changes the value $H_n$ in
\eqref{eq:gonf}, which is called the \navy{Herfindahl index}\index{Herfindahl
index}.  This index is often applied to a group of firms in a sector to
measure industry concentration.  For given aggregate output $Y$, the
Herfindahl index is minimized when $S_i = S_j$ for all $i, j$.  This is the
case we considered above.  The index is maximized at $H_n=1$ when a single
firm dominates all sales.  By~\eqref{eq:gonf}, a larger Herfindahl index will
increase $\sigma_G$ relative to $\sigma_F$, which allows firm-level shocks to
account for more of aggregate volatility.

Calculation of $H_n$ is challenging because the entire firm size distribution
$\{S_i\}_{i=1}^n$ is difficult to observe.  Nonetheless, we can estimate $H_n$
by (a) estimating a population probability distribution that fits the
empirical distribution $\{S_i\}_{i=1}^n$ and (b) using analysis or Monte Carlo
simulations to calculate typical values of $H_n$.

For step (a), \cite{gabaix2011granular} cites the study of
\cite{axtell2001zipf}, which finds the firm size distribution to be Pareto
with tail index $1.059$.  If we repeatedly draw $\{S_i\}_{i=1}^n$ from a
Pareto distribution with $\alpha = 1.059$ and $n=10^6$, record the value of
$H_n$ after each draw and then take the median value as our estimate, we
obtain $H_n \approx 0.88$.  In other words, under the Pareto assumption just
stated, firm-level volatility accounts for almost 90\% of aggregate
volatility. In essence, this means that, to explain aggregate volatility, we
need to look no further than firm-level shocks.

\subsubsection{Sensitivity Analysis}\label{sss:senan}

The finding in the previous paragraph is quite striking.  How seriously should
we take it?

One issue is that the figure $H_n \approx 0.88$ is not robust to small
changes in assumptions.  For example, the regression in
Figure~\ref{f:empirical_powerlaw_firms_forbes} suggests that we take $1.32$ as
our estimate for the tail index $\alpha$, rather than Axtell's value of
$1.059$.  If we rerun the same calculation with $\alpha=1.32$, the estimated
value of $H_n$ falls to $0.018$.  In other words, firm-level shocks account
for only 18\% of aggregate volatility.

Another issue is that the large value for $H_n$ obtained under Axtell's
parameterization is very sensitive to the parametric family chosen for the
firm size distribution.  The next exercise illustrates.

\begin{Exercise}
    Figure~\ref{f:empirical_powerlaw_firms_forbes} suggests only that the far
    right tail of the firm size distribution obeys a Pareto law.
    In fact, some authors argue that the lognormal distribution
    provides a better fit than the Pareto distribution (\cite{kondo2020heavy}
    provide a recent discussion).  So suppose now that $\{S_i\}$ is $n$ {\sc
        iid} draws from the $LN(\mu, \sigma^2)$ distribution (as given in
        Example~\ref{eg:lognorm}), where $\mu, \sigma$ are
        parameters.\footnote{In
        other words, each $S_i$ is an independent copy of the random variable
        $S := \exp(\mu + \sigma Z)$, where $Z$ is standard normal.}    Implement and run
    Algorithm~\ref{algo:genherf}.  Set $m=10^3$ and $n=10^6$.  Choose $\mu$
    and $\sigma$ so that the mean and median of the $LN(\mu, \sigma^2)$
    distribution agree with that of the
    standard Pareto distribution with tail index $\alpha$, which are
    $\alpha/(\alpha-1)$ and $2^{1/\alpha}$ respectively.  As in
    \cite{gabaix2011granular}, set $\alpha=1.059$.  What estimate do you
    obtain for $H_n$?  How much of aggregate volatility is explained?
\end{Exercise}

\begin{algorithm}
    \For{$j$ in $1, \ldots, m$ } 
    {
        generate $n$ independent draws $\{S_i^j \}$ from the $LN(\mu, \sigma^2)$
        distribution  \;
        compute the Herfindahl index $H_n^j$ corresponding to $\{S_i^j \}$ \; 
    }
    set $H_n$ equal to the median value of $\{H^j_n\}_{j=1}^m$ \;
    \Return{$H_n$ }
    \caption{\label{algo:genherf} Generate an estimate of $H_n$ under log-normality}
\end{algorithm}

\subsection{Network Structure and Shock Propagation}\label{ss:avr}

The sensitivity analysis in \S\ref{sss:senan} suggests we should be skeptical
of the claim that firm-level shocks explain most aggregate-level shocks that
we observe.  This means that either micro-level shocks account for only a
small fraction of aggregate volatility or, alternatively, that the model is
too simple, and micro-level shocks are amplified through some other mechanism.

An obvious way to explore further is to allow linkages between firms, in the
sense that the inputs for some firms are outputs for others. Such an
extension opens up the possibility that shocks propagate through the network.
This seems plausible even for the sector-specific shocks listed above, such as
the Great East Japan Earthquake. Although the initial impact was focused on
electricity generation, the flow-on effects for other sectors were rapid and
substantial \citep{carvalho2021supply}.

To investigate more deeply, we connect our discussion of the granular
hypothesis back to the multisector models with linkages studied above,
allowing us to study flow-on and multiplier effects across industries.

\subsubsection{Industry Concentration and Shocks}\label{sss:ics}

From~\eqref{eq:msagout} and the independence of sectoral shocks, the standard
deviation $\sigma_g$ of log GDP is given by
\begin{equation}\label{eq:mssg}
    \sigma_g = \sigma H_n
        \quad \text{where } \;
        H_n := \left( \sum_{i=1}^n h_i^2 \right)^{1/2}.
\end{equation}
where $\sigma$ is the standard deviation of each $\epsilon_i$.

Note that the expression for aggregate volatility takes the same form
as \eqref{eq:gsdg} from our discussion of the granular hypothesis in
\S\ref{ss:granor}, where $H_n$ was called the Herfindahl
index\index{Herfindahl index}.  Once again, this index is the critical
determinant of how much firm-level volatility passes through to aggregate
volatility.  In particular, as discussed in \S\ref{sss:efht}, independent
firm-level shocks cannot explain aggregate volatility unless $H_n$ is large,
which in turn requires that the components of the vector $h$ are relatively
concentrated in a single or small number of sectors.

To investigate an extreme case, we recall from Exercise~\ref{ex:sboh} that
$\sum_{i=1}^n h_i = 1/\alpha$.  The Herfindahl index is $H_n := \| h\|$, where
$\| \cdot \|$ is the Euclidean norm.  

\begin{Exercise}
    Show that the minimizer of $\| h \|$ given $\sum_{i=1}^n h_i = 1/\alpha$
    is the constant vector where $h_i = 1/(\alpha n)$ for all $i$.
\end{Exercise}

\begin{Answer}
    Let $h^* := \1 / (\alpha n)$.  The claim is that $h^*$ is the minimizer of
    $\| h \|$ on $\RR^n_+$ under the constraint $\sum_{i=1}^n h_i = 1/\alpha$.
    Squaring the objective function and substituting the constraint into the
    objective by taking $h_n = 1/\alpha - h_1 - \cdots - h_{n-1}$, we are led to
    the equivalent problem of finding the minimizer of 
    \begin{equation*}
        f(h_1, \ldots, h_{n-1})
        := h_1^2 + \cdots + h_{n-1}^2 + 
        \left( \frac{1}{\alpha} - h_1 - \cdots - h_{n-1} \right)^2.
    \end{equation*}
    Since $f$ is convex, any local minimizer is
    a global minimizer.  Moreover, the first order conditions give 
    \begin{equation*}
        h_i = \left( \frac{1}{\alpha} - h_1 - \cdots - h_{n-1} \right)
            = h_n.
    \end{equation*}
    for all $i$.  Hence, the solution vector is constant over $i$.  Letting $c$ be
    this constant and using the constraint gives $n c = 1/\alpha$.  The claim
    follows.
\end{Answer}

Under this configuration of sector shares, 
\begin{equation*}
    H_n 
    = \frac{1}{n \alpha} \| \1 \|
    = \frac{1}{n \alpha} \sqrt{n}
    = O \left( \frac{1}{\sqrt{n}} \right).
\end{equation*}
Hence, by \eqref{eq:mssg}, we have $\sigma_g = O ( n^{-1/2} )$.
This is the classic diversification result.  The standard deviation of log GDP
goes to zero like $n^{-1/2}$, as in the identical firm size case
in~\S\ref{sss:gdpgrowth}.

Now let's consider the other extreme:

\begin{Exercise}
    Show that the maximum of $H_n = \|h\|$ under the constraint
    $\sum_{i=1}^n h_i = 1/\alpha$ is $1/\alpha$, attained by setting $h_k =
    1/\alpha$ for some $k$ and $h_j = 0$ for other indices.
\end{Exercise}

\begin{Answer}
    Let $h^*$ be an $n$-vector with $h^*_k = 1/\alpha$ for some $k$ and $h^*_j = 0$
    for other indices.  Clearly $\| h^* \| = 1/\alpha$.  Hence it suffices to
    show that, for any $h \in \RR^n_+$ with $\sum_{i=1}^n h_i = 1/\alpha$, we
    have $\| h \| \leq 1/\alpha$.

    Fix $h \in \RR^n_+$ with $\sum_{i=1}^n h_i = 1/\alpha$.  Since we are
    splitting $1/\alpha$ into $n$ parts, we can express $h$ as $
    (w_1/\alpha, \ldots, w_n/\alpha) $, where $0 \leq w_i \leq 1$ and $\sum_i w_i = 1$.
    By Jensen's inequality (Exercise~\ref{ex:jendiscrete}), we have 
    \begin{equation*}
        \sum_i \left( \frac{w_i}{\alpha } \right)^2
        \leq \left( \sum_i \frac{w_i}{\alpha} \right)^2 
        = \frac{1}{\alpha^2}.
    \end{equation*}
    Taking the square root gives $\| h \| \leq 1/\alpha$, as was to be shown.
\end{Answer}

This is the extreme case of zero diversification. By \eqref{eq:msagout}, log
GDP is then
\begin{equation*}
    g 
    = \sum_i \epsilon_i h_i 
    = \frac{1}{\alpha} \epsilon_k .
\end{equation*}
The volatility of log GDP is constant in $n$, rather than declining in the number of
sectors.  In other words, idiosyncratic and aggregate shocks are identical.

\subsubsection{The Role of Network Topology} 

In the previous section we looked at two extreme cases, neither of which is
realistic.  Now we look at intermediate cases.  In doing so, we note an
interesting new feature: unlike the analysis in \S\ref{ss:granor}, where the
sector shares were chosen from some fixed distribution, the Herfindahl index
is now determined by the network structure of production.  

To see this, we can use the results of Exercise~\ref{ex:dmrleo} to obtain
\begin{equation*}
    h = \frac{1}{n} L \1 = \frac{1}{n} \sum_{m \geq 0} A^m \1.
\end{equation*}
Recalling our discussion in \S\ref{sss:hbkc}, we see that the vector of Domar
weights is just a rescaling of the vector of hub-based Katz centrality
rankings for the input-output matrix.    Thus, the propagation of
sector-specific productivity shocks up to the aggregate level depends on the
distribution of Katz centrality across sectors.  The more ``unequal'' is this
distribution, the larger is the pass through.

Figure~\ref{f:authority_symmetric_nodes} shows some examples of
different network configurations, each of which is associated with a different
Katz centrality vector.  Not surprisingly, the symmetric network has
a constant centrality vector, so that all sectors have equal centrality.  This
is the maximum diversification case, as discussed in \S\ref{sss:ics}. 
The other two cases have nonconstant centrality vectors and hence greater
aggregate volatility.\footnote{The captions in Figure~\ref{f:authority_symmetric_nodes}
refer to these two cases as ``star networks,'' in line with recent usage in
multisector production models.  In graph theory, a star network is an \emph{un}directed graph
that (a) is strongly connected and (b) has only one node with degree greater than
one.}

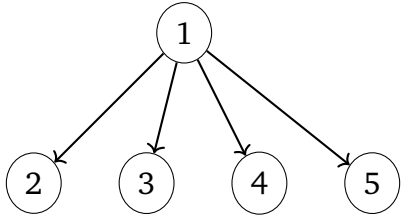
\begin{figure}
  \begin{center}
      \begin{tikzpicture}
  \node[ellipse, draw] (2) at (0, 0) {2};
  \node[ellipse, draw] (3) at (1.5, 0) {3};
  \node[ellipse, draw] (4) at (3, 0) {4};
  \node[ellipse, draw] (5) at (4.5, 0) {5};
  \node[ellipse, draw] (1) at (2, 2) {1};
  \draw[->, thick, black]
  (1) edge [bend left=0, above] node {} (2)
  (1) edge [bend left=0, above] node {} (3)
  (1) edge [bend left=0, below] node {}(4)
  (1) edge [bend left=0, right] node {} (5);
\end{tikzpicture}
  \end{center}
  \caption{\label{f:hub} Star network with single hub}
\end{figure}

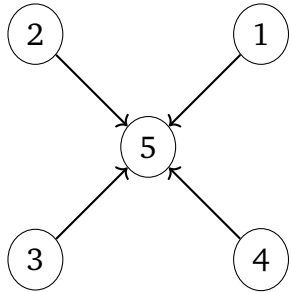
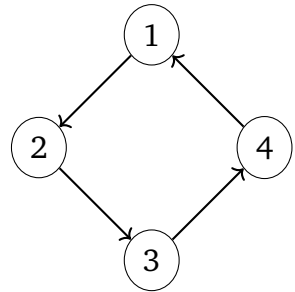
\begin{figure*}[t!]
    \centering
    \begin{subfigure}[t]{0.5\textwidth}
        \centering
          \begin{tikzpicture}
  \node[ellipse, draw] (1) at (1.5, 1.5) {1};
  \node[ellipse, draw] (2) at (-1.5, 1.5) {2};
  \node[ellipse, draw] (3) at (-1.5, -1.5) {3};
  \node[ellipse, draw] (4) at (1.5, -1.5) {4};
  \node[ellipse, draw] (5) at (0, 0) {5};
  \draw[->, thick, black]
  (1) edge [bend left=0, above] node {} (5)
  (2) edge [bend left=0, above] node {} (5)
  (3) edge [bend left=0, below] node {}(5)
  (4) edge [bend left=0, right] node {} (5);
\end{tikzpicture}
          \caption{Star network with single authority}
    \end{subfigure}%
    ~ 
    \begin{subfigure}[t]{0.5\textwidth}
        \centering
          \begin{tikzpicture}
  \node[ellipse, draw] (1) at (0, 1.5) {1};
  \node[ellipse, draw] (2) at (-1.5, 0) {2};
  \node[ellipse, draw] (3) at (0, -1.5) {3};
  \node[ellipse, draw] (4) at (1.5, 0) {4};
  \draw[->, thick, black]
  (1) edge [bend left=0, above] node {} (2)
  (2) edge [bend left=0, above] node {} (3)
  (3) edge [bend left=0, below] node {}(4)
  (4) edge [bend left=0, below] node {}(1);
\end{tikzpicture}
          \caption{Symmetric network}
    \end{subfigure}
    \vspace{1em}
    \caption{\label{f:authority_symmetric_nodes} Symmetric and asymmetric networks}
\end{figure*}

\begin{Exercise}\label{ex:netophk}
    Let the nonzero input output coefficient $a_{ij}$ shown by arrows in
    Figure~\ref{f:authority_symmetric_nodes} all have equal value $0.2$.  Show
    computationally that the hub-based Katz centrality vectors for the hub and
    star networks are
    \begin{equation*}
        \kappa_h = (1.8, 1, 1, 1, 1)
        \quad \text{and} \quad
        \kappa_s = (1.2, 1.2, 1.2, 1.2, 1)
    \end{equation*}
    respectively (for nodes $1, \ldots, 5$).
\end{Exercise}

The results of Exercise~\ref{ex:netophk} show that the hub network has the
more unequal Katz centrality vector.  Not surprisingly, the source node in the
hub has a high hub-based centrality ranking.  Productivity shocks affecting
this sector have a large effect on aggregate GDP.

Figure~\ref{f:input_output_analysis_15_katz} shows hub-based Katz centrality
computed from the 15 sector input-output data for 2019.  We can see that
productivity shocks in manufacturing will have a significantly larger impact
on aggregate output than shocks in say, retail or education.

\begin{figure}
   \begin{center}
    \scalebox{0.64}{\includegraphics{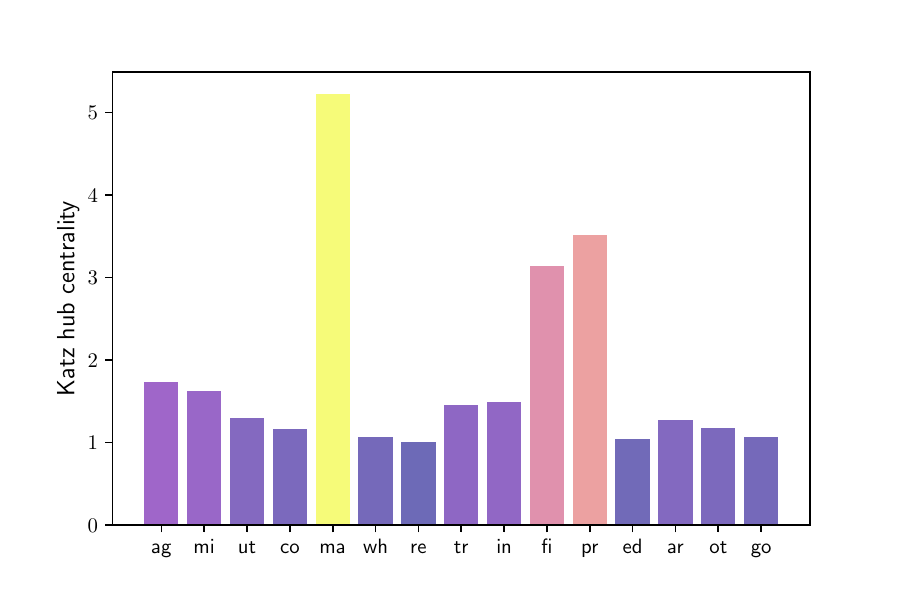}}
    \caption{\label{f:input_output_analysis_15_katz} Hub-based Katz centrality
    of across 15 US industrial sectors}
   \end{center}
\end{figure}

\subsubsection{Supply Shocks vs Demand Shocks}

In the previous section we saw that the degree to which a productivity shock
propagates through the economy depends on the hub-based centrality ranking of
the relevant sector.  This is intuitive.  Sectors that act like hubs
supply many sectors, so changes in productivity in these sectors will have
large flow-on effects.

This is in contrast with what we learned in~\S\ref{sss:outmul}, where a high
authority-based centrality measure lead to high shock propagation.  The
difference can be explained by the source of the shocks:
in~\S\ref{sss:outmul}, we were concerned with the impact of demand shocks.
Demand shocks to sector $i$ have large flow-on effects when many sectors
purchase inputs from sector $i$.  Hence authority-based centrality measures
are appropriate for studying this case.

\section{More Spectral Theory}\label{s:mst}

This is a relatively technical section, which analyzes aspects of vector
dynamics and spectral theory in more depth. It is aimed at readers who are
interested in further mathematical discussion of the theory treated above. The
ideas studied here are applied to production problems to generate additional
insights into existence and uniqueness of equilibria, as well as later topics
such as convergence of distributions for Markov models on networks.

\subsection{Vector Norms}\label{ss:vecnorms}

In this section we learn about abstract vector norms on $\RR^n$ and provide
several examples.  Later we will see how the different norms are related and how
they can be useful for some kinds of network analysis.

\subsubsection{Norms}\label{sss:norms}

A function $\| \cdot \| \colon \RR^n \to \RR$ is called a
\navy{norm}\index{Norm} on $\RR^n$ if, for any $\alpha \in \RR$ and $u, v \in \RR^n$, 
\begin{multicols}{2}
    \begin{enumerate}
        \item[(a)] $\| u \| \geq 0$
        \item[(b)] $\| u \| =0 \iff u=0$
        \item[(c)] $\| \alpha u \| = |\alpha| \| u\|$ and
        \item[(d)] $\| u + v \| \leq \| u \| + \| v \|$
        \item[] (nonnegativity) 
        \item[] (positive definiteness) 
        \item[] (positive homogeneity) 
        \item[] (triangle inequality) 
    \end{enumerate}
\end{multicols}

The Euclidean norm is a norm on $\RR^n$, as suggested by its name.     

\begin{example}\label{eg:ell1norm}
    The \navy{$\ell_1$ norm} of a vector $u \in \RR^n$ is defined by 
    \begin{equation}\label{eq:l1normfd}
        u = (u_1, \ldots, u_n) \mapsto \| u \|_1 := \sum_{i=1}^n |u_i|.
    \end{equation}
    In machine learning applications, $\| \cdot \|_1$ is sometimes called the
    ``Manhattan norm,'' and $d_1 (u, v) := \| u - v \|_1$ is called the
    ``Manhattan distance'' or ``taxicab distance'' between vectors $u$ and
    $v$. We will refer to it more simply as the \navy{$\ell_1$ distance} or
    \navy{$\ell_1$ deviation}.
\end{example}

\begin{Exercise}
    Verify that the $\ell_1$ norm on $\RR^n$ satisfies (a)--(d) above.
\end{Exercise}

The $\ell_1$ norm and the Euclidean norm are special cases of the so-called
\navy{$\ell_p$ norm}, which is defined for $p \geq 1$ by 
\begin{equation}\label{eq:lpnormfd}
    u = (u_1, \ldots, u_n) \mapsto 
    \| u \|_p := \left( \sum_{i=1}^n |u_i|^p \right)^{1/p}.
\end{equation}
It can be shown that $u \mapsto \| u \|_p$ is a norm for all $p \geq 1$, as
suggested by the name  (see, e.g., \cite{kreyszig1978introductory}). For this
norm, the subadditivity in (d) is called \navy{Minkowski's
inequality}\index{Minkowski's inequaltiy}.

Since the Euclidean case is obtained by setting $p=2$, the Euclidean norm is
also called the $\ell_2$ norm, and we write $\| \cdot \|_2$ rather than $\|
\cdot \|$ when extra clarity is required.

\begin{Exercise}\label{ex:ellinftyfd}
    Prove that $u \mapsto \| u \|_\infty := \max_{i=1}^n |u_i|$ is also a norm on $\RR^n$.
\end{Exercise}

(The symbol $\| u \|_\infty$ is used because, $\forall \, u \in \RR^n$, we have
$\| u \|_p \to \| u \|_\infty$ as $p \to \infty$.)  This norm is called the
\navy{supremum norm}\index{Supremum norm}.

\begin{Exercise}
    The so-called $\ell_0$ ``norm'' $\| u \|_0 := \sum_{i=1}^n \1 \{u_i \not=
    0\}$, routinely used in data science applications, is \emph{not} in fact a norm on
    $\RR^n$.  Prove this.
\end{Exercise}

\begin{Answer}
    For $\alpha > 0$ we always have $\| \alpha u \|_0 = \| u \|_0$, which
    violates positive homogeneity.
\end{Answer}

\subsubsection{Equivalence of Vector Norms}\label{sss:eqvecnorms}

When $u$ and $(u_m) := (u_m)_{m \in \NN}$ are all elements of $\RR^n$, we say
that $(u_m)$ \navy{converges}\index{Convergence of vectors} to $u$ and write
$u_m \to u$ if
\begin{equation*}
    \| u_m - u \| \to 0 
    \text{ as } m \to \infty \text{ for some norm } \| \cdot \|
    \text{ on } \RR^n.
\end{equation*}

It might seem that this definition is imprecise.  Don't we need to clarify
that the convergence is with respect to a particular norm?

In fact we do not.  This is because
any two norms $\| \cdot \|_a$ and $\| \cdot \|_b$ on
$\RR^n$ are \navy{equivalent}\index{Equivalence}, in the sense that
there exist finite constants $M, N$ such that
\begin{equation}\label{eq:eqvecnorms}
    M \|u\|_a \leq \| u\|_b \leq N \| u \|_a \quad \text{for all } u \in \RR^n.
\end{equation}

\begin{Exercise}
    Let us write $\| \cdot \|_a \sim \| \cdot \|_b$ if there exist finite $M, N$ such
    that \eqref{eq:eqvecnorms} holds.  Prove that $\sim$ is
    an equivalence relation on the set of norms on $\RR^n$.
\end{Exercise}

\begin{Exercise}\label{ex:cciseq}
    Let $\| \cdot \|_a$ and $\| \cdot \|_b$ be any two norms on $\RR^n$.
    Given a point $u$ in $\RR^n$ and a sequence $(u_m)$
    in $\RR^n$, use~\eqref{eq:eqvecnorms} to confirm that 
        $\| u_m - u \|_a \to 0$ implies $\| u_m - u \|_b \to 0$ as $m \to \infty$.
\end{Exercise}


Another way to understand $u_m \to u$ is via \navy{pointwise
convergence}\index{Pointwise convergence}: each element of the vector sequence
$u_m$ converges to the corresponding element of $u$.  Pointwise and norm
convergence are equivalent, as the next result makes clear.

\begin{lemma}\label{l:eqconvec}
    Fix $(u_m) \subset \RR^n$, $u \in \RR^n$  and norm
    $\| \cdot \|$ on $\RR^n$. Taking $m \to \infty$, the following
    statements are equivalent:
    \begin{enumerate}
        \item $\| u_m - u \| \to 0$,
        \item $\inner{a, u_m} \to \inner{a, u}$ for all $a \in \RR^n$, and
        \item $(u_m)$ converges pointwise to $u$.
    \end{enumerate}
\end{lemma}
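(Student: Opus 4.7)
The plan is to establish the cycle of implications (i) $\Rightarrow$ (ii) $\Rightarrow$ (iii) $\Rightarrow$ (i), exploiting the equivalence of vector norms (Exercise~\ref{ex:cciseq}) so that the arbitrary norm $\| \cdot \|$ appearing in (i) can be freely swapped with whichever norm makes each step convenient, namely $\| \cdot \|_2$ for (i) $\Rightarrow$ (ii) and $\| \cdot \|_\infty$ for (iii) $\Rightarrow$ (i).

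For (i) $\Rightarrow$ (ii), I would invoke the Cauchy--Schwarz inequality to obtain
\begin{equation*}
    |\inner{a, u_m} - \inner{a, u}| = |\inner{a, u_m - u}| \leq \|a\|_2 \, \|u_m - u\|_2
    \qquad (a \in \RR^n).
\end{equation*}
Equivalence of norms converts $\| u_m - u \| \to 0$ into $\| u_m - u \|_2 \to 0$, and (ii) then follows from the bound above.

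For (ii) $\Rightarrow$ (iii), the key observation is that testing (ii) against the $i$-th standard basis vector $e_i$ of $\RR^n$ reduces $\inner{e_i, u_m} \to \inner{e_i, u}$ to convergence of the $i$-th coordinate of $u_m$ to that of $u$; running this over all $i \in \natset{n}$ yields pointwise convergence. For (iii) $\Rightarrow$ (i), pointwise convergence of coordinates gives $\| u_m - u \|_\infty = \max_i |(u_m)_i - u_i| \to 0$, and a final appeal to equivalence of norms delivers $\| u_m - u \| \to 0$ in the original norm, closing the cycle.

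There is no serious obstacle; the only point demanding care is that the norm in (i) is an arbitrary norm on $\RR^n$, so equivalence of norms must be cited at both the first and last stages to avoid accidentally proving the lemma only for $\| \cdot \|_2$ or $\| \cdot \|_\infty$.
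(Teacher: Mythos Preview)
Your proposal is correct and follows essentially the same cycle (i) $\Rightarrow$ (ii) $\Rightarrow$ (iii) $\Rightarrow$ (i) as the paper, including the use of norm equivalence at the endpoints and specialization to canonical basis vectors for (ii) $\Rightarrow$ (iii). The only cosmetic difference is the choice of auxiliary norm: the paper uses the $\ell_1$ bound $|\inner{a, u_m - u}| \leq \|u_m - u\|_1 \max_i |a_i|$ for (i) $\Rightarrow$ (ii) and $\ell_1$ again for (iii) $\Rightarrow$ (i), whereas you use Cauchy--Schwarz with $\ell_2$ and then $\ell_\infty$ respectively; either choice works equally well.
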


\begin{Exercise}
    Prove Lemma~\ref{l:eqconvec}.
\end{Exercise}

\begin{Answer}
    Let $\| \cdot \|$ be a norm on $\RR^n$, let $(u_m)$ be a sequence in
    $\RR^n$ and let $u$ be a point in $\RR^n$.  To show
    that (i) implies (ii) we fix $a \in \RR^n$ and observe that
    \begin{equation*}
        \text{for all } m \in \NN, \;\;
        |\inner{a, u_m} - \inner{a, u}| 
            = |\inner{a, u_m -u}| 
            \leq \| u_m - u \|_1 \max_i | a_i |.
    \end{equation*}
    Hence convergence in $\ell_1$ norm implies (ii).  In view of
    Exercise~\ref{ex:cciseq}, convergence in $\| \cdot \|$ implies convergence
    in $\ell_1$, so (i) $\implies$ (ii) is confirmed.

    To show that (ii) implies (iii) at the $j$-th
    component, we just specialize $a$ to the $j$-th canonical basis vector.
    Finally, to show that (iii) implies (i), we first note that, by the
    equivalence of norms, it is enough to show that pointwise convergence
    implies $\ell_1$ convergence; that is,
    \begin{equation*}
        \text{(iii) }
        \implies
        \| u_m - u \|_1 
        = \sum_{j \in \natset{n}} |\inner{\delta_j, u_m - u}|
        \to 0,
    \end{equation*}
    where $\delta_j$ is the $j$-th canonical basis vector.
    To prove that this sum converges to zero it suffices to show that every element of
    the sum converges to zero (see \S\ref{sss:ocon}), which is true by (iii).
\end{Answer}

\begin{Exercise}
    Using Lemma~\ref{l:eqconvec}, provide a simple proof of the
    fact that convergence in $\RR^n$ is preserved under addition and
    scalar multiplication:  if $u_m \to x$ and $v_m \to y$ in $\RR^n$, while
    $\alpha_m \to \alpha$ in $\RR$, then $u_m + v_m \to x + y$ and $\alpha_m
    u_m \to \alpha x$.
\end{Exercise}

\begin{Answer}
    Consider the claim that $u_m \to x$ and $v_m \to y$ in $\RR^n$ implies
    $u_m + v_m
    \to x + y$.  We know this is true in the scalar case $n=1$.  Moreover,
    Lemma~\ref{l:eqconvec} tells us that convergence in $\RR^n$ holds 
    if and only if it holds componentwise---which is just the scalar case.
    The rest of the proof is similar.
\end{Answer}

\subsection{Matrix Norms}\label{ss:matnorip}

In some applications, the number of vertices $n$ of a given digraph is
measured in millions or billions.  This means that the adjacency matrix $A$ is
enormous. To control complexity, $A$ must often be replaced by a sparse
approximation $A_s$.  It is natural to require that $A$ and $A_s$ are close.
But how should we define this?

More generally, how can we impose a metric on the set of matrices to determine
similarity or distance between them? One option is to follow the example of
vectors on $\RR^n$ and introduce a norm on $\matset{n}{k}$.  With such a norm
$\| \cdot \|$ in hand, we can regard $A$ and $A_s$ as close when $\| A - A_s
\|$ is small.  

For this and other reasons, we now introduce the notion of a matrix norm.

\subsubsection{Definition}

Analogous to vectors on $\RR^n$, we will call a function $\| \cdot \|$ from
$\matset{n}{k}$ to $\RR_+$ a \navy{norm}\index{Matrix norm} if for any $A, B
\in \matset{n}{k}$,
\begin{multicols}{2}
    \begin{enumerate}
        \item[(a)] $\| A \| \geq 0$
        \item[(b)] $\| A \| =0 \iff u=0$
        \item[(c)] $\| \alpha A \| = |\alpha| \| A\|$ and
        \item[(d)] $\| A + B \| \leq \| A \| + \| B \|$
        \item[] (nonnegativity) 
        \item[] (positive definiteness) 
        \item[] (positive homogeneity) 
        \item[] (triangle inequality) 
    \end{enumerate}
\end{multicols}
The distance between two matrices $A, B$ is then specified as $\| A - B \|$.

Unlike vectors, matrices have a product operation defined over all conformable
matrix pairs.  We want matrix norms to interact with this product in a
predictable way.  For example, it is helpful for analysis when a matrix norm
$\| \cdot \|$ is \navy{submultiplicative}\index{Submultiplicative norm},
meaning that
\begin{equation}\label{eq:smnm}
    \| A B \| \leq \| A \| \cdot \| B \|
    \;\;
    \text{ for all conformable matrices } A, B.
\end{equation}
A useful implication of \eqref{eq:smnm} is that $\|A^i\| \leq \|A \|^i$ for
any $i \in \NN$ and $A \in \matset{n}{n}$, where $A^i$ is the $i$-th
power of $A$.

\subsubsection{The Frobenius Norm}\label{sss:frobnorm}

One way to construct a norm on matrix space $\matset{n}{k}$ is to first introduce
the \navy{Frobenius inner product}\index{Frobenius inner product}
of matrices $A = (a_{ij}), B=(b_{ij})$ as
\begin{equation}\label{eq:frobin}
    \inner{A, B}_F 
        := \sum_{i=1}^n \sum_{j=1}^n a_{ij} b_{ij}
        = \trace(A B^\top)
        = \trace(B A^\top).
\end{equation}
From this inner product, the \navy{Frobenius norm}\index{Frobenius norm} of $A
\in \matset{n}{k}$ is defined as
\begin{equation}\label{eq:frobnorm}
    \| A \|_F 
    :=  \inner{A, A}^{1/2}_F .
\end{equation}
In essence, the Frobenius norm converts an $n \times k$ matrix into an $nk$
vector and computes the Euclidean norm.

The Frobenius norm is submultiplicative.   The next exercise asks you to prove
this in one special case.

\begin{Exercise}\label{ex:smcsch}
    Suppose that $A$ is a row vector and $B$ is a column vector.  
    Show that \eqref{eq:smnm} holds in this case when $\| \cdot \|$ is the Frobenius norm.
\end{Exercise}

\begin{Answer}
    The Frobenius norm reduces to the Euclidean norm for column and row vectors, so 
    \eqref{eq:smnm} requires that $| \inner{A, B}_F| \leq
    \| A \|_F \|B\|_F$.  This bound certainly holds: it is the
    Cauchy--Schwarz inequality.
\end{Answer}

\subsubsection{The Operator Norm}\label{sss:onorm}

Another important matrix norm is the \navy{operator norm}\index{Operator
norm}, defined at $A \in \matset{n}{k}$ by
\begin{equation}
    \label{eq:defsn}
    \| A \| 
    := \sup \setntn{ \| A u \| }{ u \in \RR^k \text{ and } \| u \| = 1},
\end{equation}
where the norm $\| \cdot \|$ on the right hand size of \eqref{eq:defsn} is the
Euclidean norm on $\RR^n$.

\begin{example}\label{eg:normdiag}
    If $A = \diag(a_i)$, then, for any $u \in \RR^n$ we have
    $\|Au\|^2 = \sum_i (a_i u_i)^2$.  To maximize this value subject to $\sum_i u_i^2
    = 1$, we pick $j$ such that $a_j^2 \geq a_i^2$ for all $i$ and
    set $u_i = \1\{i=j\}$.  The maximized value of $\| Au\|$ is then 
    \begin{equation*}
        \| A\| = \sqrt{a_j^2} = \max_{i \in \natset{n}} \, |a_i|.
    \end{equation*}
\end{example}

\begin{Exercise}\label{ex:sures}
    Show that $\| A \|$ equals the supremum of $\| A u \|/ \| u \|$ over all
    $u \not= 0$.
\end{Exercise}

\begin{Answer}
    Let    
    \begin{equation*}
        a :=
        \sup_{u \not= 0} f(u) 
        \quad \text{where} \quad
        f(u) := \frac{\| A u \|}{ \| u \|} 
        \qquad \text{and let } 
        b := \sup_{\| u \| = 1} \| A u \|
    \end{equation*}
    Evidently $a \geq b$ because the supremum is over a larger domain.  To see
    the reverse fix $\epsilon > 0$ and let $u$ be a nonzero vector such that
    $f(u) > a - \epsilon$. Let $\alpha := 1 / \| u\|$ and let $u_b := \alpha u$.  Then 
    \begin{equation*}
        b 
        \geq 
        \| A u_b \|
        = \frac{ \| A u_b \| }{ \| u_b \|}
        = \frac{ \| \alpha A u \| }{ \| \alpha u \|}
        = \frac{\alpha}{\alpha} \frac{\| A u \|}{ \| u \|}
        = f(u) > a - \epsilon
    \end{equation*}
    Since $\epsilon$ was arbitrary we have $b \geq a$.
\end{Answer}

\begin{Exercise}
    It is immediate from the definition of the operator norm that 
    \begin{equation}
        \label{eq:smnv}
        \| A u \| \leq \| A \| \cdot \| u \|
        \qquad \forall \, u \in \RR^n.
    \end{equation}
    Using this fact, prove that the operator norm is submultiplicative.
\end{Exercise}

\begin{Answer}
    Let $A$ and $B$ be elements of $\matset{n}{k}$ and $\matset{n}{j}$ respectively.
    Fix $v \in \RR^n$.  
    Since $\| A u \| \leq \| A \| \cdot \| u \|$ for any vector $u$, we have
        $\| A B v \| 
        \leq \| A \| \cdot \| B v \|
        \leq \| A \| \cdot \| B \| \cdot \| v \|$, from which~\eqref{eq:smnm}
        easily follows.
\end{Answer}

\begin{Exercise}\label{ex:srsn}
    Let $\| \cdot \|$ be the operator norm on $\matset{n}{n}$. Show that, for
    each $A \in \matset{n}{n}$, we have
    \begin{enumerate}
        \item $\| A \|^2 = r(A^{\top} A)$,
        \item $r(A)^k \leq \| A^k \|$ for all $k \in \NN$, and
        \item $\| A^{\top} \| = \| A \|$.
    \end{enumerate}
\end{Exercise}

\subsubsection{Other Matrix Norms}

Two other useful matrix norms are the $\ell_1$ and $\ell_\infty$ norms given
by
\begin{equation*}
    \| A \|_1 := \sum_{i=1}^n \sum_{j=1}^k |a_{ij}|
    \quad \text{and} \quad
    \| A \|_\infty := \max_{i \in \natset{n}, \, j \in \natset{k}} |a_{ij}|.
\end{equation*}

\begin{Exercise}
    Prove that both are norms on $\matset{n}{k}$.
\end{Exercise}

\begin{Exercise}
    Prove that both norms are submultiplicative.
\end{Exercise}

\subsubsection{Equivalence of Matrix Norms}\label{sss:eqmatnorms}

In \S\ref{sss:eqvecnorms} we saw that all norms on $\RR^n$ are equivalent.
Exactly the same result holds true for matrix norms: any two norms
$\| \cdot \|_a$ and $\| \cdot \|_b$ on $\matset{n}{k}$ are
\navy{equivalent}\index{Equivalence}, in the sense that there exist finite
constants $M, N$ such that
\begin{equation}\label{eq:eqmatnorms}
    M \|A\|_a \leq \| A\|_b \leq N \| A \|_a \quad \text{for all } A \in \matset{n}{k}.
\end{equation}
A proof can be found (for abstract finite-dimensional vector space) in \cite{bollobas1999linear}.

Analogous to the vector case, given $A$ and $(A_m) := (A_m)_{m \in \NN}$ in
$\matset{n}{k}$, we say that $(A_m)$ \navy{converges}\index{Convergence of
matrices} to $A$ and write $A_m \to A$ if $\| A_m - A \| \to 0$ as $m \to
\infty$, where $\| \cdot \|$ is a matrix norm.  Once again, we do not need to
clarify the norm due to the equivalence property.
Also, norm convergence is equivalent to pointwise convergence:

\begin{Exercise}\label{ex:mnormpw}
    Prove that, given $A$ and $(A_m)$ in
    $\matset{n}{k}$, we have $A_m \to A$ as $m \to \infty$ if and only if 
    every element $a_{ij}^m$ of $A_m$ converges to the corresponding element
    $a_{ij}$ of $A$ 
\end{Exercise}

\begin{Exercise}\label{ex:rrmas}
    Prove the following result for matrices $A, B, C$ and matrix sequences
    $(A_m), (B_m)$, taking $m \to \infty$ and assuming sizes are conformable:
    \begin{enumerate}
        \item If $B_m \to A$ and $A_m - B_m \to 0$, then $A_m \to A$.
        \item If $A_m \to A$, then $B A_m C \to B A C$.
    \end{enumerate}
\end{Exercise}

\begin{Answer}
    Let $(A_m)$ and $(B_m)$ have the stated properties.  Regarding (i), we use the triangle
    inequality to obtain
    \begin{equation*}
        \| A_m - A \|
        = \| A_m - B_m + B_m - A \|
        \leq \| A_m - B_m \| +  \| B_m - A \|.
    \end{equation*}
    Both terms on the right converge to zero, which completes the proof.

    Regarding (ii), we use the submultiplicative property to obtain 
    $ \| B A_m C - B A C \| \leq \| B \| \| A_m - A \| \| C \| \to 0$.
\end{Answer}

\begin{example}
    Convergence of the Perron projection
    in \eqref{eq:patocon} of the Perron--Frobenius theorem was defined using
    pointwise convergence.  By Exercise~\ref{ex:mnormpw}, norm convergence
    also holds.  One of the advantages of working with norms is that we can
    give rates of convergence for norm deviation.  This idea is discussed
    further in \S\ref{sss:pfrate}.
\end{example}

\begin{Exercise}\label{ex:mnormqd}
    Given $A$ and $(A_m)$ in
    $\matset{n}{n}$, prove that $A_m \to A$ as $m \to \infty$ if and only if 
    $s^\top A s$ for every $s \in \RR^n$.
\end{Exercise}

\subsection{Iteration in Matrix Space}\label{sss:mnspec}

Results such as Proposition~\ref{p:accesspos} on page~\pageref{p:accesspos}
already showed us the significance of powers of adjacency matrices. The
Perron-Frobenius theorem revealed connections between spectral radii, dominant
eigenvectors and powers of positive matrices.   In this section we investigate
powers of matrices in more depth.

\subsubsection{Gelfand's Formula}

One very general connection between matrix powers and spectral radii
is \navy{Gelfand's formula}\index{Gelfand's formula} for the spectral radius:

\begin{theorem}\label{t:gf}
    For any matrix norm $\| \cdot \|$ and $A \in \matset{n}{n}$, we have
    \begin{equation}\label{eq:gelform}
        r(A) = \lim_{k \to \infty} \| A^k \|^{1/k}.
    \end{equation}
\end{theorem}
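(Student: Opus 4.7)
The plan is to establish Gelfand's formula by sandwiching the limit with two opposite bounds: $r(A) \le \liminf_{k\to\infty} \|A^k\|^{1/k}$ and $\limsup_{k\to\infty} \|A^k\|^{1/k} \le r(A)$. Since all matrix norms on $\matset{n}{n}$ are equivalent (see \S\ref{sss:eqmatnorms}), and $\|A^k\|_a^{1/k}$ and $\|A^k\|_b^{1/k}$ differ by a factor of the form $C^{1/k} \to 1$, it suffices to prove the formula for a single convenient norm; I would use the operator norm throughout and transfer the conclusion at the end.

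For the lower bound, the key is Exercise~\ref{ex:srsn}(ii), which gives $r(A)^k \le \|A^k\|$ in the operator norm. Taking $k$-th roots immediately yields $r(A) \le \|A^k\|^{1/k}$ for every $k$, hence $r(A) \le \liminf_k \|A^k\|^{1/k}$. For a general norm $\|\cdot\|_a$, equivalence provides a constant $M>0$ such that $\|A^k\|_a \ge M \|A^k\|_{\text{op}} \ge M\, r(A)^k$, and since $M^{1/k}\to 1$ the same conclusion holds.

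For the upper bound, I would fix an arbitrary $s > r(A)$ and set $B := A/s$. By Exercise~\ref{ex:tausigma}, the eigenvalues of $B$ are precisely $\lambda/s$ with $\lambda\in\sigma(A)$, so $r(B) = r(A)/s < 1$. The Neumann series lemma (Theorem~\ref{t:nsl}) then asserts that $\sum_{m\ge 0} B^m$ converges elementwise; in particular each scalar series of matrix entries converges, which forces $B^k \to 0$ componentwise, hence in any matrix norm by Exercise~\ref{ex:mnormpw}. Equivalently, $\|A^k\|/s^k \to 0$, so for all sufficiently large $k$ we have $\|A^k\|^{1/k} < s$, giving $\limsup_k \|A^k\|^{1/k} \le s$. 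Letting $s \downarrow r(A)$ finishes the upper bound, and combining with the lower bound yields both existence of the limit and the identity \eqref{eq:gelform}.

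The main obstacle is the upper bound, because one must extract a quantitative decay rate from what the NSL only gives qualitatively: convergence of the series $\sum_m B^m$ is an elementwise statement, and we need it to imply norm convergence $\|B^k\|\to 0$. The bridge is the equivalence of componentwise convergence and norm convergence on $\matset{n}{n}$ (Exercise~\ref{ex:mnormpw}); this step, together with the freedom to take $s$ arbitrarily close to $r(A)$, is what drives the sharp bound. Nothing else in the argument is delicate: the lower bound is essentially bookkeeping, and the equivalence of matrix norms lets us forget which specific norm appears on the left of \eqref{eq:gelform}.
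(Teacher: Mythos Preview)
Your argument is correct and follows the standard route to Gelfand's formula: the lower bound via $r(A)^k \le \|A^k\|$ (Exercise~\ref{ex:srsn}(ii)) and the upper bound by rescaling to $B=A/s$ with $r(B)<1$, invoking the Neumann series lemma to force $B^k\to 0$, and then letting $s\downarrow r(A)$. The passage from elementwise convergence of $\sum_m B^m$ to $\|B^k\|\to 0$ via Exercise~\ref{ex:mnormpw} is exactly the right bridge, and your treatment of norm equivalence is sound.

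There is, however, nothing in the paper to compare against: the text does not supply its own proof of Theorem~\ref{t:gf} but instead defers to \cite{bollobas1999linear} and \cite{kreyszig1978introductory}. The norm-independence step you fold into your argument is in fact the content of the exercise immediately following the theorem statement, so on that point you and the paper proceed identically.
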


Proofs can be found in \cite{bollobas1999linear} or \cite{kreyszig1978introductory}.

\begin{Exercise}
    The references above show that the limit \eqref{eq:gelform} always exists.
    Using this fact, prove that every choice of norm over $\matset{n}{n}$
    yields the same value for this limit.
\end{Exercise}

\begin{Answer}
    Let $\| \cdot \|_a$ and $\| \cdot \|_b$ be two norms on $\matset{n}{n}$. 
    By the results in~\S\ref{sss:eqmatnorms}, these norms are equivalent, so
    there exist constants $M, N$ such that 
    $\|A^k \|_a \leq M \| A^k \|_b \leq N \| A^k  \|_a$ for all $k \in \NN$.
    \begin{equation*}
        \fore
        \|A^k \|_a^{1/k} \leq M^{1/k} \| A^k \|_b^{1/k} \leq N^{1/k} \| A^k \|_a^{1/k} 
    \end{equation*}
    for all $k \in \NN$.  Taking $k \to \infty$, we see that the definition of the
    spectral radius is independent of the choice of norm.
\end{Answer}

The next exercise shows that $r(A)<1$ implies $\|A^k\| \to 0$ at a geometric
rate.

\begin{Exercise}\label{ex:igfo}
    Using \eqref{eq:gelform}, show that $r(A) < 1$ implies the existence of a
    constant $\delta < 1$ and an $M < \infty$ such that $\| A^k \| \leq
    \delta^k M$ for all $k \in \NN$.
\end{Exercise}

\begin{Answer}
    Since $r(A) < 1$, we can find a constant $K$ and an $\epsilon > 0$ such
    that $k \geq K$ implies $\| A^k \| < (1 - \epsilon)^k$.  Setting $M :=
    \max_{k \leq K} \| A^k \|$ and $\delta := 1-\epsilon$ produces the desired
    constants.
\end{Answer}

\begin{Exercise}
    Consider the dynamic system $x_t = A x_{t-1} + d$ with $x_0$ given, where
    each $x_t$ and $d$ are vectors in $\RR^n$ and $A$ is $n \times n$.  (If
    you like, you can think of this process as orders flowing backwards
    through a production network.)  Show that, when $r(A)<1$, the sequence
    $(x_t)_{t \geq 0}$ converges to $x^* := (I - A)^{-1} d$, independent of
    the choice of $x_0$.
\end{Exercise}

\begin{Answer}
    Suppose $r(A) < 1$.
    Iterating backwards on $x_t = A x_{t-1} + d$ yields $x_t = d + Ad + \cdots
    + A^{t-1} d + A^t x_0$.  By the Neumann series lemma, we have $x^* =
    \sum_{t \geq 0} A^t d$, so 
    \begin{equation*}
        x^* - x_t 
        = \sum_{j > t} A^j d - A^t x_0.
    \end{equation*}
    Hence, with $\|\cdot \|$ as both the Euclidean vector norm and the matrix
    operator norm, we have
    \begin{equation*}
        \| x^* - x_t \|
        \leq \left\| \sum_{j > t} A^j d - A^t x_0 \right\|
        \leq \sum_{j > t} \|  A^j \| \| d \| - \| A^t \| \| x_0 \|.
    \end{equation*}
    Using $r(A) < 1$ again, it now follows from Exercise~\ref{ex:igfo} that $\|
    x^* - x_t \| \to 0$ as $t \to \infty$.
\end{Answer}

\begin{Exercise}
    In \S\ref{ss:eqmulm} we studied a production coefficient matrix of the form
    $A = (a_{ij})$ in $\matset{n}{n}$ where each $a_{ij}$ takes values in
    $(0,\infty)$ and, for each $j$, $\alpha + \sum_i a_{ij}=1$ for some $\alpha > 0$.
    We can calculate $r(A)$ using the following strategy.
    In Exercise~\ref{ex:1malpha} we saw that 
        $\sum_i \sum_j a_{ij}^{(m)} = n (1-\alpha)^m$ for
            all $m \in \NN$, where $a_{ij}^{(m)}$ is the $(i,j)$-th element of $A^m$.
    Using the fact that $\| B \|_1 := \sum_i \sum_j |b_{ij}|$ is a matrix
            norm, apply Gelfand's formula to obtain $r(A) = 1-\alpha$.
\end{Exercise}

\begin{Answer}
    We have $\| A^m \|_1 = n (1-\alpha)^m$ and so
    $\| A^m \|_1^{1/m} = n^{1/m} (1-\alpha)$.  Taking $m \to \infty$ gives
    $r(A)=1-\alpha$.
\end{Answer}

\subsubsection{A Local Spectral Radius Theorem}\label{sss:lsr}

The next theorem is a ``local'' version of Gelfand's formula that relies on
positivity.  It replaces matrix norms with vector norms, which
are easier to compute.

\begin{theorem}\label{t:local_sr}
    Fix $A \in \matset{n}{n}$ and let $\| \cdot \|$ be any norm on $\RR^n$.
    If $A \geq 0$ and $x \gg 0$, then
    \begin{equation}\label{eq:local_sr}
        \| A^m x\|^{1/m} \to r(A)
        \qquad (m \to \infty).
    \end{equation}
\end{theorem}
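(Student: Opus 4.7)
The plan is to establish the limit by showing matching upper and lower bounds on $\|A^m x\|^{1/m}$, both equal to $r(A)$. By the equivalence of norms on $\RR^n$ established in \S\ref{sss:eqvecnorms}, it suffices to prove \eqref{eq:local_sr} for one convenient norm; I will work with the $\ell_1$ norm and recover the general case at the end.

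For the upper bound, I would use the submultiplicative property of the operator norm together with Gelfand's formula (Theorem~\ref{t:gf}). Specifically, from \eqref{eq:smnv} we have $\|A^m x\| \leq \|A^m\| \cdot \|x\|$, where the norm on matrices is the operator norm and the norm on vectors is Euclidean. Taking $m$-th roots gives $\|A^m x\|^{1/m} \leq \|A^m\|^{1/m} \|x\|^{1/m}$. Since $\|x\|^{1/m} \to 1$ and $\|A^m\|^{1/m} \to r(A)$ by Gelfand's formula, we obtain $\limsup_{m \to \infty} \|A^m x\|^{1/m} \leq r(A)$. After translating between norms via equivalence, the same bound holds in $\ell_1$.

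The lower bound is where the hypotheses $A \geq 0$ and $x \gg 0$ enter decisively. By the Perron--Frobenius theorem (Theorem~\ref{t:pf}), there exists a nonzero nonnegative left eigenvector $\epsilon \in \RR_+^n$ with $\epsilon^\top A = r(A)\,\epsilon^\top$, and hence $\epsilon^\top A^m = r(A)^m \epsilon^\top$ for all $m$. Apply this to $x$:
\begin{equation*}
    r(A)^m \inner{\epsilon, x} = \inner{\epsilon, A^m x}.
\end{equation*}
Since $A^m x \geq 0$ (products of nonnegative matrices with a nonnegative vector), we can bound the right side by $\inner{\epsilon, A^m x} \leq \|\epsilon\|_\infty \|A^m x\|_1$. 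The key observation is that $\inner{\epsilon, x} > 0$: because $x \gg 0$ and $\epsilon$ is nonnegative and nonzero, at least one term in the inner product is strictly positive. Rearranging gives
\begin{equation*}
    \|A^m x\|_1^{1/m} \geq r(A) \left( \frac{\inner{\epsilon, x}}{\|\epsilon\|_\infty} \right)^{1/m},
\end{equation*}
and the bracketed factor tends to $1$ as $m \to \infty$ because its base is a fixed positive constant. Thus $\liminf_{m \to \infty} \|A^m x\|_1^{1/m} \geq r(A)$.

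Combining the two bounds yields $\|A^m x\|_1^{1/m} \to r(A)$, and equivalence of norms on $\RR^n$ upgrades this to convergence in any norm. The main obstacle is the lower bound, and specifically the need for strict positivity of $\inner{\epsilon, x}$; the hypothesis $x \gg 0$ is exactly what guarantees this, regardless of which components of $\epsilon$ happen to vanish (an issue that would arise when $A$ is merely nonnegative rather than irreducible, since then $\epsilon$ need not be everywhere positive).
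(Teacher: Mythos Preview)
Your proof is correct. The paper does not actually supply a self-contained proof of this theorem; it cites \cite{krasnoselskiipositive} for the general case and only asks, in an exercise, for a proof under the extra hypothesis that $A$ is primitive. That exercise's solution relies on the convergence $r(A)^{-m} A^m \to e\,\epsilon^\top$ from the primitive part of the Perron--Frobenius theorem, which is unavailable when $A$ is merely nonnegative.

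Your argument is therefore more general than anything the paper spells out. The key idea you isolate---that the nonnegative left Perron eigenvector $\epsilon$ from Theorem~\ref{t:pf} yields $r(A)^m \inner{\epsilon, x} = \inner{\epsilon, A^m x} \leq \|\epsilon\|_\infty \|A^m x\|_1$, and that $x \gg 0$ forces $\inner{\epsilon, x} > 0$ even when $\epsilon$ has zero entries---is exactly what delivers the lower bound without irreducibility or primitivity. Paired with Gelfand's formula for the upper bound and norm equivalence for the passage between norms, this is clean and complete. The exercise's primitive-case approach is shorter when it applies, but yours is the one that actually proves the stated theorem in full.
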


Theorem~\ref{t:local_sr} tells us that,  eventually, for any positive $x$, the
norm of the vector $A^m x$ grows at rate $r(A)$.  A proof can be found in
\cite{krasnoselskiipositive}.\footnote{A direct proof of a generalized version
    of Theorem~\ref{t:local_sr} is provided in Theorem~B1 of
\cite{borovivcka2020necessary}.}

\begin{example}
    In \S\ref{ss:dshocks} we studied how the impact of a given demand shock
    $\Delta d$ flows backward through a production network, with $A^m (\Delta
    d)$ giving the impact on sectors at $m$ steps (backward linkages).  When
    $\Delta d \gg 0$ and $r(A) < 1$,  Theorem~\ref{t:local_sr}
    tells us that $\| A^m \Delta d\| = O(r(A)^m)$.  If we set the norm to $\|
    \cdot \|_\infty$, this tells us that 
    the maximal impact of demand shocks through backward linkages fades at rate $r(A)$.
\end{example}

\begin{Exercise}
    Prove Theorem~\ref{t:local_sr} in the case where $A$ is primitive.
\end{Exercise}

\begin{Answer}
    Let $\| \cdot \|$ be a norm on $\RR^n$.  From the Perron--Frobenius
    theorem (Theorem~\ref{t:pf}), when $A$ is primitive, $\| r(A)^{-m} A^m x
    \| \to c$ as $m \to \infty$, where $c > 0$ whenever $x \gg 0$. Hence
    $\lim_{m \to \infty} \| A^m x \|^{1/m} = \lim_{m \to \infty} r(A) c^{1/m}
    = r(A)$.
\end{Answer}

\subsubsection{Convergence to the Perron Projection}\label{sss:pfrate}

The local spectral radius theorem assumes $A \geq 0$.  Now we further
strengthen this assumption by requiring that $A$ is primitive.  In this case,
$r(A)^{-m} A^m$ converges to the Perron projection as $m \to \infty$ (see
\eqref{eq:patocon}).  We want  \emph{rates} of convergence in high-dimensional settings.

We fix $A \in \matset{n}{n}$ and label the eigenvalues so that
$|\lambda_{i+1}| \leq |\lambda_i|$ for all $i$.  Note that $|\lambda_1| =
\lambda_1 = r(A)$.  Let $E := e \, \epsilon^\top$ be the Perron projection.

\begin{proposition}
    If $A$ is diagonalizable and primitive, then $\alpha := |\lambda_2/\lambda_1| < 1$ and
    \begin{equation}\label{eq:rconpp}
        \| r(A)^{-m} A^m - E  \| = O \left( \alpha^m \right).
    \end{equation}
\end{proposition}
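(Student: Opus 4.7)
The plan is to exploit the diagonalizability of $A$ via the spectral representation from Exercise~\ref{ex:aspecrep}, and then bound the non-dominant terms using the spectral gap guaranteed by primitivity.

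First I would establish $\alpha < 1$. Since $A$ is primitive, part~(i) of the Perron--Frobenius theorem statement for primitive matrices tells us that $|\lambda| < r(A)$ strictly for every eigenvalue $\lambda \neq r(A)$. Ordering the eigenvalues so that $|\lambda_1| \geq |\lambda_2| \geq \cdots \geq |\lambda_n|$, we have $\lambda_1 = r(A)$ and $|\lambda_2| < \lambda_1$, so $\alpha = |\lambda_2/\lambda_1| < 1$.

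Next I would apply the spectral representation. By Exercise~\ref{ex:aspecrep}, with $(e_i)_{i=1}^n$ the right eigenvectors and $(\epsilon_i)_{i=1}^n$ the left eigenvectors normalized so that $\inner{\epsilon_i, e_j} = \1\{i=j\}$, diagonalizability gives
\begin{equation*}
    A^m = \sum_{i=1}^n \lambda_i^m e_i \epsilon_i^\top
    \qquad (m \in \NN).
\end{equation*}
Setting $e_1 = e$ and $\epsilon_1 = \epsilon$, so that $E = e_1 \epsilon_1^\top$, and dividing by $r(A)^m = \lambda_1^m$,
\begin{equation*}
    r(A)^{-m} A^m - E
    = \sum_{i=2}^n \left(\frac{\lambda_i}{\lambda_1}\right)^m e_i \epsilon_i^\top.
\end{equation*}

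Finally I would take norms. Fix any submultiplicative matrix norm $\|\cdot\|$ (using equivalence of matrix norms, \S\ref{sss:eqmatnorms}, to pass to the chosen one in the statement). By the triangle inequality,
\begin{equation*}
    \| r(A)^{-m} A^m - E \|
    \leq \sum_{i=2}^n |\lambda_i/\lambda_1|^m \, \| e_i \epsilon_i^\top \|
    \leq \alpha^m \sum_{i=2}^n \| e_i \epsilon_i^\top \|,
\end{equation*}
since $|\lambda_i/\lambda_1| \leq \alpha$ for $i \geq 2$. The constant $C := \sum_{i=2}^n \| e_i \epsilon_i^\top \|$ is finite and independent of $m$, so $\| r(A)^{-m} A^m - E \| \leq C \alpha^m$, which is $O(\alpha^m)$. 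There is no serious obstacle: the only subtlety is confirming that the spectral expansion isolates $E$ cleanly as the $i=1$ term, which follows from the simplicity of $r(A)$ under primitivity and the biorthogonality relation of Exercise~\ref{ex:lei2}.
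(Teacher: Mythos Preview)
Your proof is correct and follows essentially the same approach as the paper: both use the spectral representation $A^m = \sum_i \lambda_i^m e_i \epsilon_i^\top$ from Exercise~\ref{ex:aspecrep}, subtract off the dominant term $E = e_1\epsilon_1^\top$, and bound the remaining sum via the triangle inequality and the spectral gap guaranteed by primitivity. The only cosmetic difference is that you establish $\alpha < 1$ before the expansion while the paper does so afterward.
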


Thus, an upper bound on the rate of convergence to the Perron projection is
determined by the  modulus of  the ratio of the first two eigenvalues.

\begin{proof}
    We saw in~\eqref{eq:aspecrep} that $A^m =
    \sum_{i=1}^n \lambda_i^m e_i \epsilon_i^\top$ for all $m \in
    \NN$.  From this spectral representation we obtain
    \begin{equation*}
        r(A)^{-m} A^m - e_1 \, \epsilon_1^\top 
            = r(A)^{-m} \left( A^m - r(A)^m e_1 \, \epsilon_1^\top \right)
            = \sum_{i=2}^n \theta_i^m  e_i
            \epsilon_i^\top
    \end{equation*}
    when $\theta_i := \lambda_i/r(A)$.  Let $\| \cdot \|$ be the operator norm on
    $\matset{n}{n}$.  The triangle inequality gives
    \begin{equation*}
        \| r(A)^{-m} A^m - e_1 \, \epsilon_1^\top  \|
        \leq \sum_{i=2}^n | \theta_i| ^m  \| e_i \epsilon_i^\top \|
        \leq | \theta_2| ^m  \sum_{i=2}^n  \| e_i \epsilon_i^\top \|.
    \end{equation*}
    Since $A$ is primitive,  the Perron--Frobenius theorem tells us that
    $|\lambda_2| < r(A)$. Hence $\alpha := |\theta_2| < 1$.  The proof is now
    complete.
\end{proof}

\subsection{Exact Stability Conditions}

The Neumann series lemma tells us that the linear system $x = Ax + d$ has a
unique solution whenever $r(A) < 1$.  We also saw that, in the input-output
model, where $A$ is the adjacency matrix, the condition $r(A)<1$ holds
whenever every sector has positive value added (Assumption~\ref{a:pva} and
Exercise~\ref{ex:eara}).  Hence we have sufficient conditions for stability.

This analysis, while important, leaves open the question of how tight the
conditions are and what happens when they fail.  For example, we might ask
\begin{enumerate}
    \item To obtain $r(A) < 1$, is it necessary that each sector has positive
        value added?  Or can we obtain the same result under weaker
        conditions?
    \item What happens when $r(A) < 1$ fails?  Do we always lose existence of
        a solution, or uniqueness, or both?
\end{enumerate}
In \S\ref{sss:srsm} and \S\ref{ss:nslconv} below we address these two questions.

\subsubsection{Spectral Radii of Substochastic Matrices}\label{sss:srsm}

To reiterate, the results in \S\ref{sss:pwconst} relied on the assumption that
every sector has positive value added, which in turn gave us the property
$r(A) < 1$ for the adjacency matrix of the input-output production network.
Positive value added in every sector is not necessary, however. Here we
investigate a weaker condition that is exactly necessary and sufficient for
$r(A)<1$.  This weaker condition is very helpful for understanding other kinds
of networks too, including financial networks, as discussed in
\S\ref{s:finnet}.

To begin, recall that a matrix $P \in \matset{n}{n}$ is called stochastic if
$P \geq 0$ and $P \1 = \1$.   Similarly, $P \in \matset{n}{n}$ is called
\navy{substochastic}\index{Substochastic matrix} if $P \geq 0$ and $P \1 \leq
\1$.    Thus, a substochastic matrix is a nonnegative matrix with less
than unit row sums.  Clearly the transpose $Q^\top$ of a nonnegative matrix $Q$
is substochastic if and only if $Q$ has less than unit column sums.

A natural example of a substochastic matrix is the transpose $A^\top$ of an
adjacency matrix $A$ of an input-output network. Indeed, such an $A$ is
nonnegative and, for $j$-th column sum we have
\begin{equation*}
   \sum_i  a_{ij} 
    =  \frac{\sum_i z_{ij}}{x_j}
    = \frac{\text{spending on inputs by sector $j$}}{\text{total sales of sector $j$}}.
\end{equation*}
Hence $\sum_i  a_{ij} \leq 1$, which says that spending on intermediate goods
and services by a given industry does not exceed total sales revenue, is a
necessary condition for nonnegative profits in sector $j$.
When this holds for all $j$, the adjacency matrix has less than unit column
sums.

From Lemma~\ref{l:rscsbounds} on page~\pageref{l:rscsbounds}, we see that any
substochastic matrix $P$ has 
\begin{equation}\label{eq:rprs}
    r(P) \leq \max_i \rsum_i(P) \leq 1.
\end{equation}
We wish to know when we can tighten this to $r(P) < 1$.

From \eqref{eq:rprs}, one obvious sufficient condition is that $\rsum_i(P) <
1$ for all $i$.  This is, in essence, how we used Assumption~\ref{a:pva}
(every sector has positive value added) in the input-output model.  Under that
condition we have $\sum_i  a_{ij} < 1$ for all $j$, which says all column sums
are strictly less than one.  Hence
\begin{equation*}
    \max_i \csum_i(A) < 1
    \; \iff \;
    \max_i \rsum_i(A^\top) < 1
    \; \implies \;
    r(A^\top) < 1
    \; \iff \;
    r(A) < 1,
\end{equation*}
where the middle implication is by \eqref{eq:rprs} and the last equivalence is
by $r(A)=r(A^\top)$.

Now we provide a weaker---in fact necessary and sufficient---condition for
$r(P)<1$, based on network structure.  To do so, we define an $n \times n$
substochastic matrix $P = (p_{ij})$ to be \navy{weakly chained substochastic}
if, for all $m \in \natset{n}$, there exists an $i \in \natset{n}$ such that
$m \to i$ and $\sum_j p_{ij} < 1$.  Here accessibility of $i$
from $m$ is in terms of the induced weighted digraph.\footnote{See
    \S\ref{sss:gd} for the definition of the induced weighted digraph.  By
    Proposition~\ref{p:accesspos}, accessibility of $i$ from $m$ is equivalent
to existence of a $k$ with $p^k_{mi} > 0$, where $p^k_{mi}$ is the $m,i$-th
element of $P^k$.}

\begin{Exercise}\label{ex:awcs}
    Let $A = (a_{ij}) \in \matset{n}{n}$ be nonnegative.  Show that $A^\top$
    is weakly chained substochastic if and only if $A$ has less than unit column
    sums and, for each $m \in \natset{n}$, there exists an $i \in \natset{n}$
    such that $i \to m$ under (the digraph induced by) $A$ and $\sum_k a_{ki} < 1$.
\end{Exercise}

\begin{Answer}
    We prove only that the stated conditions on $A$ imply that $A^\top$ is
    weakly chained substochastic, since the proof of the reverse implication
    is very similar.  We set $a'_{ij}:=a_{ji}$, so that $A^\top = (a'_{ij})$.  

    Let $A$ have the stated properties.  Since $A$ has less than unit column
    sums, $A^\top$ has less than unit rows, so $A^\top$ is substochastic.  

    Now fix $m \in \natset{n}$ and take
    $i \in \natset{n}$ such that $i \to m$ under $A$ and $\sum_k a_{ki} < 1$.
    By Exercise~\ref{ex:rar} on page~\pageref{ex:rar}, 
    $i \to m$ under $A$ is equivalent to $m \to i$ under $A^\top$.
    Moreover $\sum_k a_{ki} < 1$ is equivalent to $\sum_k a'_{ik} < 1$.
    Hence $A^\top$ is weakly chained substochastic.
\end{Answer}

\begin{proposition}\label{p:wcs}
    For a substochastic matrix $P$, we have
    \begin{equation*}
        r(P) < 1 
        \quad \iff \quad
        \text{$P$ is weakly chained substochastic}.
    \end{equation*}
\end{proposition}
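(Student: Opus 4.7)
The plan is to prove both directions by a combination of block triangular reasoning and the Perron--Frobenius theorem (Theorem~\ref{t:pf}), exploiting the correspondence between accessibility in the induced digraph and positivity of entries of $P^k$ provided by Proposition~\ref{p:accesspos}.

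For the $(\Rightarrow)$ direction, I will argue by contrapositive. Suppose $P$ is not weakly chained substochastic, so there exists $m$ such that every $i$ with $m \to i$ satisfies $\sum_j p_{ij} = 1$. Let $T := \{i \in \natset{n} : m \to i\}$, which contains $m$. The key observation is that $T$ is absorbing: if $i \in T$ and $p_{ij} > 0$, then $(i,j)$ is an edge of the induced digraph, so $m \to i \to j$, giving $j \in T$. Reordering the vertices so that $T$ appears first, $P$ takes the block triangular form
\begin{equation*}
    P = \begin{pmatrix} P_T & 0 \\ B & C \end{pmatrix},
\end{equation*}
where $P_T$ is the $T \times T$ submatrix. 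Since every row of $P_T$ has sum $1$ and $P_T \geq 0$, the matrix $P_T$ is stochastic, so by Exercise~\ref{ex:sm_sr1} we have $r(P_T) = 1$. But the eigenvalues of a block triangular matrix are the union of the eigenvalues of the diagonal blocks, so $r(P) \geq r(P_T) = 1$, contradicting $r(P) < 1$.

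For the $(\Leftarrow)$ direction, assume $P$ is weakly chained substochastic. From \eqref{eq:rprs} we already have $r(P) \leq 1$, so it suffices to rule out $r(P) = 1$. Suppose to the contrary that $r(P) = 1$. By the Perron--Frobenius theorem, there is a nonzero $e \in \RR^n_+$ with $P e = e$. Let $M := \max_i e_i$ and $S := \{i : e_i = M\}$, so $S \neq \emptyset$. For each $i \in S$,
\begin{equation*}
    M = e_i = \sum_j p_{ij} e_j \leq M \sum_j p_{ij} \leq M,
\end{equation*}
which forces $\sum_j p_{ij} = 1$ and, moreover, $p_{ij} > 0 \Rightarrow e_j = M$, i.e., $j \in S$. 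By induction on walk length, $i \in S$ and $i \to k$ together imply $k \in S$. Pick any $m \in S$. The weakly chained property yields some $i$ with $m \to i$ and $\sum_j p_{ij} < 1$. But then $i \in S$, forcing $\sum_j p_{ij} = 1$, a contradiction.

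The main obstacle, and where I would need to be most careful, is the block triangular step in the forward direction: one must verify both that $T$ is absorbing (so the off-diagonal block in the appropriate corner is zero) and that passing to the $T \times T$ submatrix really gives a bona fide stochastic matrix whose spectral radius dominates from below via the block structure. Everything else is a clean application of the Perron--Frobenius theorem combined with the ``tight inequality forces equality along the support'' argument standard for nonnegative eigenvectors.
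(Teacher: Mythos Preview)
Your argument is correct in both directions. The paper itself does not supply a proof of this proposition; it simply cites Corollary~2.6 of \cite{azimzadeh2019fast}. So there is nothing to compare against in the text proper.

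That said, your proof is entirely self-contained using tools already developed in the book (Perron--Frobenius, the row-sum bound \eqref{eq:rprs}, and the graph/matrix correspondence of Proposition~\ref{p:accesspos}), which is a genuine improvement over deferring to an external reference. The forward direction via the absorbing set $T$ and block-triangularization is clean; note only that the reordering is a simultaneous row-and-column permutation, hence a similarity, so eigenvalues are preserved---you allude to this but it is worth stating explicitly. In the backward direction, the ``maximum-coordinate set is absorbing with unit row sums'' argument is the standard Perron--Frobenius trick and is executed correctly; the key point that $M > 0$ (needed to divide through) follows from $e \geq 0$ and $e \neq 0$, which you implicitly use.
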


A proof can be found in Corollary~2.6 of \cite{azimzadeh2019fast}.

Now we return to the input output model.  Let's agree to say that sector $i$
is an \navy{upstream supplier} to sector $j$ if $i \to j$  in the input-output
network.  By Proposition~\ref{p:accesspos} on page~\pageref{p:accesspos}, an
equivalent statement is that there exists a $k \in \NN$ such that $a^k_{ij} >
0$.

\begin{Exercise}
    Let $A$ be the adjacency matrix of an input-output network and assume that
    value added is nonnegative in each sector.  Using Proposition~\ref{p:wcs},
    show that $r(A) < 1$ if and only if each sector in the network
    has an upstream supplier with positive value added.
\end{Exercise}

\begin{Answer}
    Let $A$ be the adjacency matrix of an input-output network such that value
    added is nonnegative in each sector.  In what follows, we write $a'_{ij}$ for
    the $i,j$-th element of $A^\top$, so that $a'_{ij} = a_{ji}$.  

    Let's say that $A$ has property U if,
    for each sector in the network, there exists an upstream supplier with
    positive value added.  Property U is equivalent to the statement that,
    for all $m \in \natset{n}$, there is an $i \in \natset{n}$ with $i \to m$
    under $A$ and $\sum_k a_{ki} < 1$.  By Exercise~\ref{ex:awcs}, this is
    equivalent to the statement that $A^\top$ is weakly chained substochastic. 
    Since $A^\top$ is substochastic, this is, in turn equivalent to $r(A^\top)
    < 1$. But $r(A)=r(A^\top)$, so property U is equivalent
    to $r(A)<1$.
\end{Answer}

\subsubsection{A Converse to the Neumann Series Lemma}\label{ss:nslconv}

Since the Neumann series lemma is a foundational result with many
economic applications, we want to know what happens when the conditions of
the lemma fail.  Here is a partial converse:

\begin{theorem}\label{t:nslconv}
    Fix $A \in \matset{n}{n}$ with $A \geq 0$.  If $A$ is irreducible, then the
    following statements are equivalent:
    \begin{enumerate}
        \item $r(A) < 1$.
        \item $x = A x + b$ has a unique everywhere positive solution for all $b \geq 0$
             with $b \not= 0$.
        \item $x = A x + b$ has a nonnegative solution for at least one $b
            \geq 0$ with $b \not= 0$.
        \item There exists an $x \gg 0$ such that $Ax \ll x$.
    \end{enumerate}
\end{theorem}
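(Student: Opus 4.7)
The plan is to prove the equivalences by establishing the cycle (i) $\Rightarrow$ (ii) $\Rightarrow$ (iii) $\Rightarrow$ (i), and then separately (i) $\Leftrightarrow$ (iv). The central tool throughout is the Perron--Frobenius theorem in its irreducible form, which guarantees strictly positive left and right eigenvectors $\epsilon$ and $e$ for $r(A)$, together with the Neumann series lemma.

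For (i) $\Rightarrow$ (ii): I would apply Theorem~\ref{t:nsl} to conclude that $I-A$ is nonsingular and $(I-A)^{-1} = \sum_{m \geq 0} A^m$. Irreducibility of $A$ then yields $(I-A)^{-1} \gg 0$. Hence for each $b \geq 0$ with $b \neq 0$, the unique solution $x^* = (I-A)^{-1} b$ is everywhere positive. The step (ii) $\Rightarrow$ (iii) is immediate.

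For (iii) $\Rightarrow$ (i), the key idea is to pair the equation with the dominant left eigenvector. Let $x \geq 0$ satisfy $x = Ax + b$ with $b \geq 0$, $b \neq 0$. Note $x \neq 0$ (else $b=0$). By Theorem~\ref{t:pf} applied to the irreducible matrix $A$, there exists $\epsilon \gg 0$ with $\epsilon^\top A = r(A)\, \epsilon^\top$. Left-multiplying $x = Ax + b$ by $\epsilon^\top$ gives
\begin{equation*}
    \epsilon^\top x = r(A)\, \epsilon^\top x + \epsilon^\top b,
    \quad \text{so} \quad
    (1 - r(A))\, \epsilon^\top x = \epsilon^\top b.
\end{equation*}
Since $\epsilon \gg 0$ with $x \geq 0$, $x \neq 0$, we have $\epsilon^\top x > 0$; and $\epsilon^\top b > 0$ similarly. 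Dividing yields $r(A) < 1$.

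For the (i) $\Leftrightarrow$ (iv) part, the forward direction is easy: if $r(A) < 1$, take the dominant right eigenvector $e \gg 0$ guaranteed by Theorem~\ref{t:pf}; then $Ae = r(A)\, e \ll e$ because $r(A) < 1$ and $e \gg 0$. For the reverse direction, suppose $x \gg 0$ with $Ax \ll x$. Using the same left eigenvector $\epsilon \gg 0$ as above, $\epsilon^\top A x = r(A)\, \epsilon^\top x$, while $\epsilon^\top (x - Ax) > 0$ since $\epsilon \gg 0$ and $x - Ax \gg 0$. Combining gives $(1-r(A))\, \epsilon^\top x > 0$, and since $\epsilon^\top x > 0$ we conclude $r(A) < 1$.

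The main obstacle I anticipate is not technical but conceptual: ensuring that at each step the hypothesis of irreducibility is correctly deployed to promote weak inequalities to strict ones (for instance, in (i) $\Rightarrow$ (ii), in going from $\sum_m A^m \geq 0$ to $\sum_m A^m \gg 0$, and in (iii) $\Rightarrow$ (i), in guaranteeing $\epsilon^\top b > 0$ when only $b \geq 0$, $b \neq 0$ is assumed). Once the strict positivity of the Perron eigenvectors is invoked, each implication reduces to a one-line inner-product computation.
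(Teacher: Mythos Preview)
Your proof is correct and matches the paper's argument almost exactly for (i) $\Rightarrow$ (ii), (ii) $\Rightarrow$ (iii), (iii) $\Rightarrow$ (i), and (i) $\Rightarrow$ (iv). The one genuine difference is in (iv) $\Rightarrow$ (i): the paper does not use the left eigenvector pairing here. Instead it picks $\lambda < 1$ with $\lambda x \geq Ax$, iterates to $\lambda^k x \geq A^k x$, and then invokes the local spectral radius result $\|A^k x\|^{1/k} \to r(A)$ (Theorem~\ref{t:local_sr}) to conclude $r(A) \leq \lambda < 1$. Your argument is more elementary---it reuses the same inner-product-with-$\epsilon$ trick already deployed in (iii) $\Rightarrow$ (i) and avoids the local spectral radius theorem altogether. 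The paper's route, on the other hand, is more self-contained in the sense that it does not appeal to the strict positivity of the left Perron eigenvector a second time, and it illustrates a broader technique (bounding $r(A)$ via growth rates of orbits). Both are short and clean; yours is arguably the more economical choice given the tools already in play.
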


\begin{remark}
    If $A$ is irreducible and one of (and hence all of) items (i)--(iv) are true,
    then, by the Neumann series lemma, the unique solution is $x^* := \sum_{m \geq
    0} A^m b$.  
\end{remark}

\begin{remark}
    Statement (iii) is obviously weaker than statement (ii).  It is
    important, however, in the case where $r(A) < 1$ fails.  In this setting,
    from the negation of (iii),
    we can conclude that there is \emph{not even one} nontrivial $b$ in $\RR^n_+$
    such that a nonnegative solution to $x=Ax+b$ exists.
\end{remark}

\begin{proof}[Proof of Theorem~\ref{t:nslconv}] We show (i) $\iff$
        (iv) and then (i) $\iff$ (ii) $\iff$ (iii).

    ((i) $\Rightarrow$ (iv)). 
    For $x$ in (iv) we can use the
    Perron--Frobenius theorem to obtain a real eigenvector $e$ satisfying $Ae
    = r(A) e \ll e$ and $e \gg 0$. 

    ((iv) $\Rightarrow$ (i)). Fix $x \gg 0$ such that $x \gg Ax$.  Through
    positive scaling, we can assume that $\|x\|=1$.  Choose 
    $\lambda < 1$ such that $\lambda x \geq Ax$.  Iterating on this
    inequality gives, for all $k \in \NN$, 
    \begin{equation*}
        \lambda^k x \geq A^k x 
        \quad \implies \quad
        \lambda^k = \lambda^k \|x\|  \geq \|A^k x\|
         \quad \iff \quad
        \| A^k x \|^{1/k} \leq \lambda   
    \end{equation*}
    Hence, by the local spectral radius result in Theorem~\ref{t:local_sr},
    $r(A) \leq \lambda < 1$.

    ((i) $\Rightarrow$ (ii)). Existence of a unique solution $x^* = \sum_{i \geq 0} A^i b$
    follows from the NSL.  Positivity follows from irreducibility of $A$,
    since $b \not= 0$ and $\sum_i A^i \gg 0$.  

    ((ii) $\Rightarrow$ (iii)). Obvious.  

    ((iii) $\Rightarrow$ (i)). Suppose there is a $b \geq 0$ with $b \not= 0$ and
    an $x \geq 0$ such that $x = Ax + b$.  By the Perron--Frobenius theorem,
    we can select a left eigenvector $e$ such that $e \gg 0$ and $e^\top A=
    r(A) e^\top$.  For this $e$ we have
    \begin{equation*}
        e^\top x 
        = e^\top Ax + e^\top b 
        = r(A) e^\top x + e^\top b .
    \end{equation*}
    Since $e \gg 0$ and $b \not= 0$,  we must have $e^\top b > 0$.  In
    addition, $x \not= 0$ because $b \not= 0$ and $x = Ax+b$, so $e^\top x >
    0$.  Therefore $r(A)$ satisfies $(1-r(A)) \alpha = \beta$ for positive
    constants $\alpha, \beta$.  Hence $r(A) < 1$.
\end{proof}

\begin{Exercise}
    For the production system $x = Ax + d$, what do we require on $A$ for the
    condition $r(A) < 1$ to be necessary for existence of a nonnegative output
    solution $x^*$, for each nontrivial $d$?
\end{Exercise}

\begin{Exercise}
    Irreducibility cannot be dropped from Theorem~\ref{t:nslconv}.  Provide an
    example demonstrating that, without irreducibility, we can have $r(A) \geq
    1$ for some $A \geq 0$ and yet find a nonzero $b \geq 0$ and an $x \geq 0$
    such that $x = A x + b$.
\end{Exercise}

\begin{Answer}
    Let $A = \diag(a, 1)$ where $0 < a < 1$.  Clearly $r(A) = 1$.  Let $b^\top = (1,
    0)$ and let $x^\top = (1/(1-a), 0)$.  Simple algebra shows that $x = Ax + b$.  
\end{Answer}

\section{Chapter Notes}\label{s:cnprod}

High quality foundational textbooks on input-output analysis and multisector
production networks include \cite{nikaido1968convex}, \cite{miller2009input}
and \cite{antras2020global}.  References on production networks and aggregate
shocks include \cite{acemoglu2012network}, \cite{antras2012measuring},
\cite{di2014firms}, \cite{carvalho2014micro}, \cite{barrot2016input},
\cite{baqaee2018cascading}, \cite{carvalho2019production},
\cite{acemoglu2020endogenous}, \cite{miranda2021production} and
\cite{carvalho2021supply}.

For other network-centric analysis of multisector models, see, for example,
\cite{bernard2019origins}, who use buyer-supplier relationship data from
Belgium to investigate the origins of firm size heterogeneity when firms are
interconnected in a production network. \cite{dew2022tail} studies tail risk
and aggregate fluctuations in a nonlinear production network.  
\cite{herskovic2018networks} analyzes asset pricing implications of production
networks.  \cite{cai2018interfirm} consider the effect of interfirm
relationships on business performance.

\chapter{Optimal Flows}\label{c:ofd}

Up until now we have analyzed problems where network structure is either fixed
or generated by some specified random process.  In this chapter, we
investigate networks where connections are determined endogenously via
equilibrium or optimality conditions.  In the process, we cover some of the
most powerful methods available for solving optimization problems in networks
and beyond, with applications ranging from traditional
graph and network problems, such as trade, matching, and communication,
through to machine learning, econometrics and finance.

\section{Shortest Paths}\label{ss:shp} 

As a preliminary step, we study the
\navy{shortest path}\index{Shortest path} problem---a topic that has
applications in production, network design, artificial intelligence,
transportation  and many other fields.  The solution method we adopt also
happens to be one of the clearest illustrations of Bellman's principle of
optimality, which is one of the cornerstones of optimization theory and modern economic analysis.

\subsection{Definition and Examples}

We start proceedings by introducing simple examples.  
(In the next section we will formalize the problem and consider solution
methods.)

\subsubsection{Set Up}\label{sss:spsu}

Consider a firm that wishes to ship a container from $A$ to $G$ at minimum
cost, where $A$ and $G$ are vertices of the weighted digraph $\gG$ shown in
Figure~\ref{f:graph}.  Arrows (edges) indicate paths that can be used for
freight, while weights indicate costs of traversing them.  In this context,
the weight function is also called the \navy{cost function}, and we denote it
by $c$.  For example, $c(A, B)$ is the cost of traveling from vertex $A$ to
vertex $B$.

Since this graph is small, we can find the minimum cost path visually. A quick
scan shows that the minimum attainable cost  is 8.  Two paths realize this
cost: $(A, C, F, G)$ and $(A, D, F, G)$, as shown in Figure~\ref{f:graph2}
and Figure~\ref{f:graph3}, respectively.  

\begin{figure}
    \centering
    \begin{tikzpicture}
  \node[circle, draw] (A) at (0, 2) {A};
  \node[circle, draw] (B) at (3, 0.2) {B};
  \node[circle, draw] (C) at (-3, 0.2) {C};
  \node[circle, draw] (D) at (0, 0) {D};
  \node[circle, draw] (E) at (1.5, -0.8) {E};  
  \node[circle, draw] (F) at (-1.8, -1.3) {F};   
  \node[circle, draw] (G) at (0, -2.5) {G};
  \draw[->, thick, black]
  (A) edge [bend right=0] node [midway, fill=white] {$1$} (B)
  (A) edge [bend left=0] node [midway, fill=white] {$5$} (C)
  (A) edge [bend left=0] node [midway, fill=white] {$3$} (D)
  (B) edge [bend left=0] node [midway, fill=white] {$9$} (D)
  (B) edge [bend left=0] node [midway, fill=white] {$6$} (E)
  (C) edge [bend left=0] node [midway, fill=white] {$2$} (F)
  (D) edge [bend left=0] node [midway, fill=white] {$4$} (F)
  (D) edge [bend left=0] node [midway, fill=white] {$8$} (G)
  (E) edge [bend left=0] node [midway, fill=white] {$4$} (G)
  (F) edge [bend left=0] node [midway, fill=white] {$1$} (G);
\end{tikzpicture}
    \caption{\label{f:graph} Graph for the shortest path problem}
\end{figure}
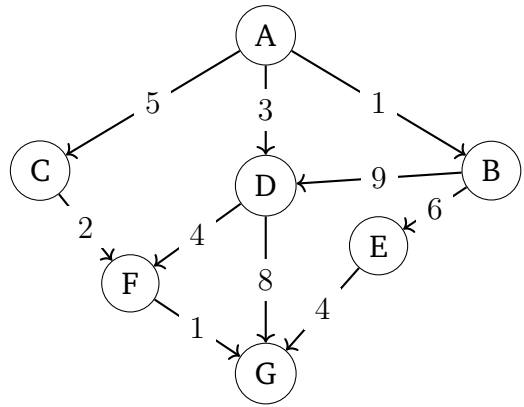

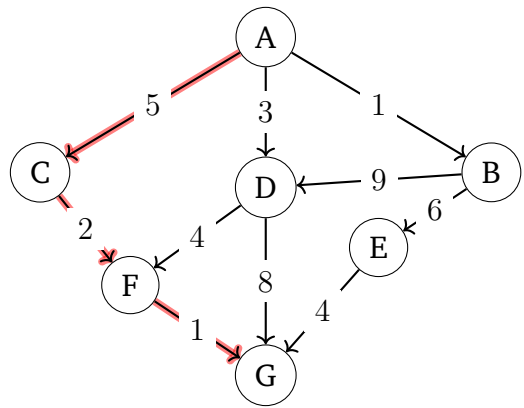
\begin{figure}
    \centering
    \begin{tikzpicture}
  \node[circle, draw] (A) at (0, 2) {A};
  \node[circle, draw] (B) at (3, 0.2) {B};
  \node[circle, draw] (C) at (-3, 0.2) {C};
  \node[circle, draw] (D) at (0, 0) {D};
  \node[circle, draw] (E) at (1.5, -0.8) {E};  
  \node[circle, draw] (F) at (-1.8, -1.3) {F};   
  \node[circle, draw] (G) at (0, -2.5) {G};
  \draw[-{>[scale=0.5]}, thick, red, line width=3pt, opacity=0.5]
  (A) edge [bend left=0] node [midway, fill=white] {} (C)
  (C) edge [bend left=0] node [midway, fill=white] {} (F)
  (F) edge [bend left=0] node [midway, fill=white] {} (G);
  \draw[->, thick, black]
  (A) edge [bend right=0] node [midway, fill=white] {$1$} (B)
  (A) edge [bend left=0] node [midway, fill=white] {$5$} (C)
  (A) edge [bend left=0] node [midway, fill=white] {$3$} (D)
  (B) edge [bend left=0] node [midway, fill=white] {$9$} (D)
  (B) edge [bend left=0] node [midway, fill=white] {$6$} (E)
  (C) edge [bend left=0] node [midway, fill=white] {$2$} (F)
  (D) edge [bend left=0] node [midway, fill=white] {$4$} (F)
  (D) edge [bend left=0] node [midway, fill=white] {$8$} (G)
  (E) edge [bend left=0] node [midway, fill=white] {$4$} (G)
  (F) edge [bend left=0] node [midway, fill=white] {$1$} (G);
\end{tikzpicture}
    \caption{\label{f:graph2} Solution 1}
\end{figure}

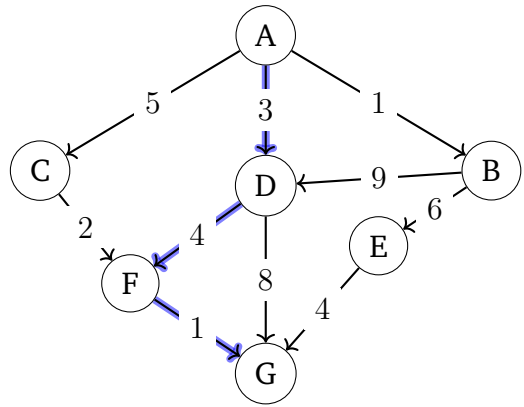
\begin{figure}
    \centering
    \begin{tikzpicture}
  \node[circle, draw] (A) at (0, 2) {A};
  \node[circle, draw] (B) at (3, 0.2) {B};
  \node[circle, draw] (C) at (-3, 0.2) {C};
  \node[circle, draw] (D) at (0, 0) {D};
  \node[circle, draw] (E) at (1.5, -0.8) {E};  
  \node[circle, draw] (F) at (-1.8, -1.3) {F};   
  \node[circle, draw] (G) at (0, -2.5) {G};
  \draw[-{>[scale=0.5]}, thick, blue, line width=3pt, opacity=0.5]
  (A) edge [bend left=0] node [midway, fill=white] {$3$} (D)
  (D) edge [bend left=0] node [midway, fill=white] {$4$} (F)
  (F) edge [bend left=0] node [midway, fill=white] {$1$} (G);
  \draw[->, thick, black]
  (A) edge [bend right=0] node [midway, fill=white] {$1$} (B)
  (A) edge [bend left=0] node [midway, fill=white] {$5$} (C)
  (A) edge [bend left=0] node [midway, fill=white] {$3$} (D)
  (B) edge [bend left=0] node [midway, fill=white] {$9$} (D)
  (B) edge [bend left=0] node [midway, fill=white] {$6$} (E)
  (C) edge [bend left=0] node [midway, fill=white] {$2$} (F)
  (D) edge [bend left=0] node [midway, fill=white] {$4$} (F)
  (D) edge [bend left=0] node [midway, fill=white] {$8$} (G)
  (E) edge [bend left=0] node [midway, fill=white] {$4$} (G)
  (F) edge [bend left=0] node [midway, fill=white] {$1$} (G);
\end{tikzpicture}
    \caption{\label{f:graph3} Solution 2}
\end{figure}

\subsubsection{A Recursive View}\label{sss:arecv}

Let's now consider a systematic solution that can be applied to larger graphs.
Let $q^*(x)$ denote the \navy{minimum cost-to-go}\index{Minimum cost-to-go}
from vertex $x$.  That is, $q^*(x)$ is the total cost of traveling from $x$ to
$G$ \emph{if} we take the best route.    Its values are shown at each vertex
in Figure~\ref{f:graph4}.  We can represent $q^*$ in vector form via
\begin{equation}
    \label{eq:vf_ssp}
    (q^*(A), q^*(B), q^*(C), q^*(D), q^*(E), q^*(F), q^*(G))
    = (8, 10, 3, 5, 4, 1, 0) \in \RR^7.
\end{equation}

\begin{figure}
    \centering
    \begin{tikzpicture}
  \node[circle, draw, label={[red]above:$8$}] (A) at (0, 2) {A};
  \node[circle, draw, label={[red]right:$10$}] (B) at (3, 0.2) {B};
  \node[circle, draw, label={[red]left:$3$}] (C) at (-3, 0.2) {C};
  \node[circle, draw, label={[red]left:$5$}] (D) at (0, 0) {D};
  \node[circle, draw, label={[red]right:$4$}] (E) at (1.5, -0.8) {E};  
  \node[circle, draw, label={[red]left:$1$}] (F) at (-1.8, -1.3) {F};   
  \node[circle, draw, label={[red]right:$0$}] (G) at (0, -2.5) {G};
  \draw[->, thick, black]
  (A) edge [bend right=0] node [midway, fill=white] {$1$} (B)
  (A) edge [bend left=0] node [midway, fill=white] {$5$} (C)
  (A) edge [bend left=0] node [midway, fill=white] {$3$} (D)
  (B) edge [bend left=0] node [midway, fill=white] {$9$} (D)
  (B) edge [bend left=0] node [midway, fill=white] {$6$} (E)
  (C) edge [bend left=0] node [midway, fill=white] {$2$} (F)
  (D) edge [bend left=0] node [midway, fill=white] {$4$} (F)
  (D) edge [bend left=0] node [midway, fill=white] {$8$} (G)
  (E) edge [bend left=0] node [midway, fill=white] {$4$} (G)
  (F) edge [bend left=0] node [midway, fill=white] {$1$} (G);
\end{tikzpicture}
    \caption{\label{f:graph4} The cost-to-go function, with $q^*(x)$ indicated
    by red digits at each $x$}
\end{figure}
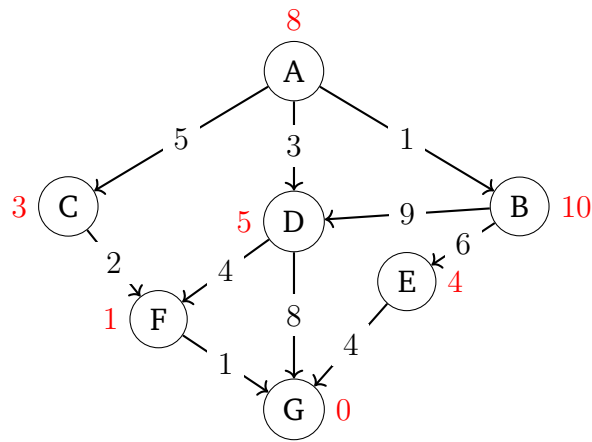

As is clear from studying Figure~\ref{f:graph4}, once 
$q^*$ is known, the least cost path can be computed as follows: Start at $A$
and, from then on, at arbitrary vertex $x$, move to any $y$ that solves
\begin{equation}
    \label{eq:spbc}
    \min_{y \in \oO(x)} \{ c(x, y) + q^*(y) \}.
\end{equation}
Here $\oO(x) = \setntn{y \in V}{(x, y) \in E}$ is the set of direct successors
of $x$, as defined in \S\ref{ss:uwdg}, while
$c(x, y)$ is the cost of traveling from $x$ to $y$.  In other words,
to minimize the cost-to-go, we choose the next path to minimize current traveling cost
plus cost-to-go from the resulting location.  

Thus, if we know $q^*(x)$ at each $x$, then finding
the best path reduces to the simple two stage optimization problem in
\eqref{eq:spbc}.  

But now another problem arises: how to find $q^*$ in more complicated cases,
where the graph is large?  One approach to this
problem is to exploit the fact that
\begin{equation}
    \label{eq:besp}
    q^*(x) = \min_{y \in \oO(x)} \{ c(x, y) + q^*(y) \}
\end{equation}
must hold for every vertex $x$ in the graph apart from $G$ (where $q^*(G)=0$).  

Take the time to convince yourself that, for our example, the function $q^*$
satisfies~\eqref{eq:besp}.  In particular, check that \eqref{eq:besp} holds at
each nonterminal $x$ in Figure~\ref{f:graph4}.

We can understand \eqref{eq:besp}, which is usually called the \navy{Bellman
equation}\index{Bellman equation}, as a restriction on $q^*$ that helps us
identify this vector.  The main difficulty with our plan is that the Bellman
equation is nonlinear in the unknown function $q^*$.   Our strategy will be to
convert this nonlinear equation into a fixed point problem, so that fixed
point theory can be applied.  

We do this in the context of a more general version of the shortest path
problem.

\subsection{Bellman's Method}\label{ss:belm}

We consider a generic shortest path problem on a \navy{flow
network}\index{Flow network}, which consists of a weighted digraph $\gG = (V,
E, c)$ with a sink $d \in V$ called the \navy{destination} and a weight
function $c \colon E \to (0, \infty)$ that associates a positive cost to
each edge $(x,y) \in E$.  We consider how to find the shortest (i.e., minimum
cost) path from $x$ to $d$ for every $x \in V$.  For this problem to make
sense we impose

\begin{assumption}\label{a:expath}
    For each $x \in V$, there exists a directed path from $x$ to $d$.
\end{assumption}

To make Assumption~\ref{a:expath} hold at $x=d$ and facilitate neat proofs, we
add a self-loop at $d$ (i.e., we append $(d, d)$ to $E$) and extend $c$ to
this self-loop by setting $c(d, d) = 0$.  This just means that ``terminating
at $d$'' is the same as ``staying at $d$,''  since no more cost accrues after
arrival at the destination.  Figure~\ref{f:shortest_paths_5} illustrates in
the context of the previous example.

\begin{figure}
   \centering
   \begin{tikzpicture}
  \node[circle, draw] (A) at (0, 2) {A};
  \node[circle, draw] (B) at (3, 0.2) {B};
  \node[circle, draw] (C) at (-3, 0.2) {C};
  \node[circle, draw] (D) at (0, 0) {D};
  \node[circle, draw] (E) at (1.5, -0.8) {E};  
  \node[circle, draw] (F) at (-1.8, -1.3) {F};   
  \node[circle, draw] (G) at (0, -2.5) {G};
  \draw[->, thick, black]
  (A) edge [bend right=0] node [midway, fill=white] {$1$} (B)
  (A) edge [bend left=0] node [midway, fill=white] {$5$} (C)
  (A) edge [bend left=0] node [midway, fill=white] {$3$} (D)
  (B) edge [bend left=0] node [midway, fill=white] {$9$} (D)
  (B) edge [bend left=0] node [midway, fill=white] {$6$} (E)
  (C) edge [bend left=0] node [midway, fill=white] {$2$} (F)
  (D) edge [bend left=0] node [midway, fill=white] {$4$} (F)
  (D) edge [bend left=0] node [midway, fill=white] {$8$} (G)
  (E) edge [bend left=0] node [midway, fill=white] {$4$} (G)
  (F) edge [bend left=0] node [midway, fill=white] {$1$} (G)
  (G) edge [loop below] node {$0$} (G);
\end{tikzpicture}
   \caption{\label{f:shortest_paths_5} Addition of a self loop to the destination}
\end{figure}
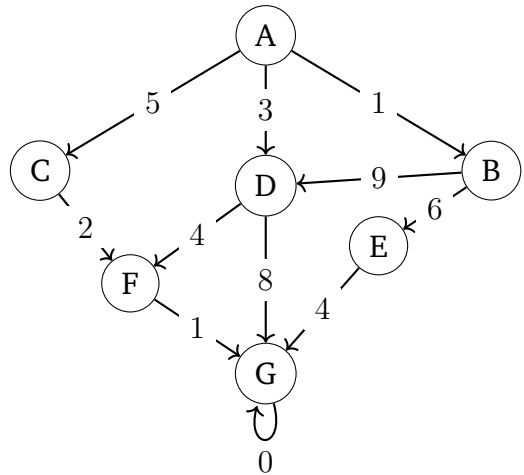

We will use the intuition from the previous section, based around the Bellman
equation, to construct a solution method.  
In what follows we take $|V|=n+1$.

\subsubsection{Policies}

Instead of optimal paths, it turns out to be more convenient to aim for
optimal \emph{policies}.  In general, a policy is a specification of how to
act in every state.  In the present setting, a \navy{policy}\index{Policy
function} is a map $\sigma \colon V \to V$ with $\sigma(x) = y$ understood as
meaning ``when at vertex $x$, go to $y$.'' A policy is called \navy{feasible}
if $\sigma(x) \in \oO(x)$ for all $x \in V$.

For any feasible policy $\sigma$ and $x \in V$, the \navy{trajectory} of $x$
under $\sigma$ is the path from $x$ to the destination indicated by the
feasible policy.  In other words, it is the sequence $(p_0, p_1, p_2, \ldots)$
defined by $p_0 = x$ and $p_i = \sigma(p_{i-1})$  for all $i \in \NN$.  It can
also be expressed as $(\sigma^i(x)) := (\sigma^i(x))_{i \geq 0}$, where
$\sigma^i$ is the $i$-th composition of $\sigma$ with itself.

Let $\Sigma$ be the set of all policies that are feasible and do not cycle, in
the sense that there exists no $x \in V \setminus \{d\}$ such that
$\sigma^i(x)=x$ for some $i \in \NN$.  The policies in $\Sigma$ have the
property that every trajectory they generate reaches $d$ in finite time (and
necessarily stays there, by our assumptions on $d$).  The last statement
implies that 
\begin{equation}\label{eq:tracd}
    \sigma^i(x) = d  \text{ for all } i \geq n \text{ and all } x \in V.
\end{equation}

\begin{Exercise}
    Convergence to $d$ in \eqref{eq:tracd} assumes that any
    trajectory reaching $d$ eventually will reach it in $n$ steps or less.
    Explain why this is the case.
\end{Exercise}

\begin{Answer}
    If $(\sigma^i(x))$ fails to reach $d$ in $n$ steps, then, since $|V|=n+1$, there exists a vertex
    $y \in V$ that appears twice in $(x, \sigma(x), \ldots, \sigma^n(x))$.
    For this $y$ we have $\sigma^i(y) = y$ for some $i \leq n$.  The cycle from
    $y$ back to itself does not contain $d$ and repeats forever.
    Hence $(\sigma^i(x))$ never reaches $d$.
\end{Answer}

Given $q \in \RR^V_+$, we call $\sigma \in \Sigma$ \navy{$q$-greedy} if  
\begin{equation*}
    \sigma(x) \in 
    \argmin_{y \in \oO(x)} \{c(x, y) + q(y)\} \quad \text{for all } x \in V.
\end{equation*}
In essence, a greedy policy treats $q$ as the minimum cost-to-go function
and picks out an optimal path under that assumption.  

Using our new terminology, we can rephrase the discussion in \S\ref{sss:arecv}
as follows: the shortest path problem can be solved by finding the true
minimum cost-to-go function $q^*$ and then following a $q^*$-greedy policy.
In the remainder of this section, we prove this claim more carefully.

\subsubsection{Cost of Policies}

We need to be able to assess the cost of any given policy.  To this end,
for each $x \in V$ and $\sigma \in \Sigma$, let $q_\sigma(x)$ denote the cost of
following $\sigma$ from $x$.  That is,
\begin{equation}\label{eq:qsdef}
    q_\sigma(x) 
    = \sum_{i=0}^\infty c(\sigma^i(x), \sigma^{i+1}(x))
    = \sum_{i=0}^{n-1} c(\sigma^i(x), \sigma^{i+1}(x)).
\end{equation}
This second equality holds because $\sigma^i(x) = d$ for all $i
\geq n$ and $c(d, d) =0$.  The function $q_\sigma \in \RR^V_+$ is called the
\navy{cost-to-go} under $\sigma$.

It will be helpful in what follows to design an operator such that $q_\sigma$
is a fixed point.  For this purpose we let $U$ be all $q \in \RR_+^V$ with
$q(d)=0$ and define $T_\sigma \colon U \to U$ by
\begin{equation*}
    (T_\sigma \, q)(x) = c(x, \sigma(x)) + q(\sigma(x))
    \qquad (x \in V).
\end{equation*}
Here and below, with $k \in \NN$, the expression $T_\sigma^k$ indicates the
$k$-th composition of $T_\sigma$ with itself (i.e., $T_\sigma$ is applied $k$ times).

\begin{Exercise}
    Prove that $T_\sigma$ is a self-map on $U$ for all $\sigma \in \Sigma$.
\end{Exercise}

\begin{Answer}
    Fix $q \in U$ and $\sigma \in \Sigma$.  
    Nonnegativity of $T_\sigma \, q$ is obvious.  Also, 
    $(T_\sigma q)(d) = c(d, d) + q(d) = 0$, where the first equality is by $\sigma(d)=d$
    and the second is by $q(d)=0$ and $c(d, d)=0$. Hence $T_\sigma \, q \in U$,
    as required.
\end{Answer}

\begin{proposition}\label{p:sptsp}
    For each $\sigma \in \Sigma$, the function $q_\sigma$ is the unique fixed
    point of $T_\sigma$ in $U$ and $T^k_\sigma \, q = q_\sigma$ for all $k \geq
    n$ and all $q \in U$.
\end{proposition}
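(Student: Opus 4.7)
The plan is to unfold the iterated operator $T_\sigma^k$ explicitly, then use the finite-termination property \eqref{eq:tracd} of policies in $\Sigma$ to collapse the iterate onto $q_\sigma$ once $k \geq n$; uniqueness and the fixed-point property will both follow cheaply from that collapse.

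First I would prove by induction on $k$ that, for every $q \in U$, $\sigma \in \Sigma$, and $x \in V$,
\begin{equation*}
    (T_\sigma^k q)(x) = \sum_{i=0}^{k-1} c(\sigma^i(x), \sigma^{i+1}(x)) + q(\sigma^k(x)).
\end{equation*}
The base case $k=1$ is just the definition of $T_\sigma$, and the inductive step is an immediate application of $T_\sigma$ to the hypothesis, using $\sigma^{k+1}(x) = \sigma(\sigma^k(x))$ and the telescoping of the cost sum.

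Next I would invoke \eqref{eq:tracd}, which tells us that $\sigma^i(x) = d$ for all $i \geq n$ and all $x \in V$. Together with $c(d,d) = 0$ and $q(d) = 0$ for $q \in U$, this kills the tail of the sum as well as the residual term $q(\sigma^k(x))$, giving
\begin{equation*}
    (T_\sigma^k q)(x) = \sum_{i=0}^{n-1} c(\sigma^i(x), \sigma^{i+1}(x)) = q_\sigma(x)
    \qquad \text{whenever } k \geq n,
\end{equation*}
where the second equality is \eqref{eq:qsdef}. This proves the claim $T_\sigma^k q = q_\sigma$ for all $k \geq n$ and all $q \in U$.

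Finally, the fixed-point and uniqueness claims are one-line corollaries. For the fixed-point property, apply the identity above with $q = q_\sigma$ at both $k = n$ and $k = n+1$ to get $T_\sigma q_\sigma = T_\sigma(T_\sigma^n q_\sigma) = T_\sigma^{n+1} q_\sigma = q_\sigma$. For uniqueness, if $\tilde q \in U$ satisfies $T_\sigma \tilde q = \tilde q$, then by iterating $\tilde q = T_\sigma^n \tilde q$, and by the collapse identity this equals $q_\sigma$. There is no real obstacle here; the only thing to be careful about is verifying that $q_\sigma$ genuinely lies in $U$ (nonnegativity is immediate from positivity of $c$, and $q_\sigma(d) = 0$ because all summands vanish when $x=d$), so that the uniqueness statement applies to it.
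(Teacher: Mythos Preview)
Your proof is correct and follows essentially the same approach as the paper: unfold $T_\sigma^k q$ into the partial cost sum plus a residual $q(\sigma^k(x))$, then use the termination property \eqref{eq:tracd} together with $c(d,d)=0$ and $q(d)=0$ to collapse onto $q_\sigma$ for $k \geq n$. The only cosmetic difference is that the paper offloads the fixed-point and uniqueness step to the general result of Exercise~\ref{ex:ufnflow}, whereas you write out that short argument directly.
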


\begin{proof}
    Fix $\sigma \in \Sigma$ and $q \in U$.  For each $x \in V$ we have
    \begin{equation*}
        (T^2_\sigma \, q)(x)
         = c(x, \sigma(x)) + (T_\sigma q)(\sigma(x)) 
         = c(x, \sigma(x)) + c(\sigma(x), \sigma^2(x)) + q(\sigma^2(x)).
    \end{equation*}
    More generally, for $k \geq n$, we have
    \begin{equation*}
        (T_\sigma^k \, q)(x) 
         = \sum_{i=0}^{k-1} c(\sigma^i(x), \sigma^{i+1}(x)) + q(\sigma^k(x))
         = \sum_{i=0}^{n-1} c(\sigma^i(x), \sigma^{i+1}(x)).
    \end{equation*}
    The second equality holds because $\sigma^k(x)=d$ for all $k \geq n$
    and $q(d)=0$.  Hence $(T_\sigma^k \, q)(x) = q_\sigma(x)$ by \eqref{eq:qsdef}.
    The fact that $q_\sigma$ is the unique fixed point of $T_\sigma$ now
    follows from Exercise~\ref{ex:ufnflow} in the appendix (page~\pageref{ex:ufnflow}).
\end{proof}

\subsubsection{Optimality}

The \navy{minimum cost-to-go} function $q^*$ is defined by
\begin{equation*}
    q^*(x) = \min_{\sigma \in \Sigma} q_\sigma(x)
    \qquad (x \in V).
\end{equation*}
A policy $\sigma^* \in \Sigma$ is called \navy{optimal}\index{Optimal policy} if 
it attains the minimum in this expression, so that $q^* = q_{\sigma^*}$ on
$V$. Note that the definition of $q^*$ matches our intuitive definition from
\S\ref{sss:arecv}, in the sense that $q^*(x)$ is, for each $x \in V$, the
minimum cost of traveling from $x$ to the destination $d$.

Our main aims now are to
\begin{enumerate}
    \item obtain a method for calculating $q^*$ and
    \item verify that a $q^*$-greedy policy is in fact optimal, as suggested
        in our discussion of greedy policies above.
\end{enumerate}

Regarding the first step, we claim that $q^*$ satisfies the Bellman
equation~\eqref{eq:besp}.  To prove this claim we introduce the \navy{Bellman
operator}\index{Bellman operator} $T$ via
\begin{equation}
    \label{eq:bespt}
    (Tq)(x) = \min_{y \in \oO(x)} \{ c(x, y) + q(y) \}
    \qquad (x \in V).
\end{equation}
By construction, $q^*$ satisfies the Bellman equation~\eqref{eq:besp} if and only
if $Tq^* = q^*$.

\begin{Exercise}\label{ex:ttsor0}
    Show that $T$ is a self-map on $U$ and, moreover, $T q \leq T_\sigma \, q$
    for all $q \in U$ and $\sigma \in \Sigma$.
\end{Exercise}

\begin{Exercise}\label{ex:ttsor}
    Show that $T$ and $T_\sigma$ are both order-preserving on $\RR^V_+$ with respect to the
    pointwise partial order $\leq$.  Prove that $T^k q \leq T_\sigma^k q$ for all $q
    \in U$, $\sigma \in \Sigma$ and $k \in \NN$.
\end{Exercise}

\begin{Exercise}\label{ex:qg}
    Fix $q \in U$ and let $\sigma$ be a $q$-greedy policy.  Show that $T q =
    T_\sigma q$.
\end{Exercise}

The next result is central.  It confirms that the minimum cost-to-go function
satisfies the Bellman equation and also provides us with a means to compute
it: pick any $q$ in $U$ and then iterate with $T$.

\begin{proposition}
    The function $q^*$ is the unique fixed point of $T$ in $U$ and, in
    addition, $T^k q \to q^*$ as $k \to \infty$ for all $q \in U$.
\end{proposition}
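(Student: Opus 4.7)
My plan is to prove the proposition in three stages. First, I will establish that $T^k q \to q^*$ for every $q \in U$; in fact, I will show that $T^k q$ equals $q^*$ exactly once $k$ is sufficiently large. Second, I will deduce that $q^*$ is a fixed point of $T$ by specializing the convergence to $q = q^*$. Third, I will obtain uniqueness by applying the convergence result to an arbitrary fixed point. The central technical tool will be a representation of $T^k q$ as a minimum over length-$k$ walks in $\gG$ with a terminal penalty $q(x_k)$.

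For the convergence step, I will first prove by induction on $k$ that
\[
(T^k q)(x) = \min \left\{ \sum_{i=0}^{k-1} c(x_i, x_{i+1}) + q(x_k) \right\},
\]
where the minimum is taken over length-$k$ walks $(x_0, \ldots, x_k)$ in $\gG$ starting at $x_0 = x$. The analysis then splits according to whether the walk reaches $d$. Any walk that enters $d$ must remain there (since $\oO(d) = \{d\}$ with $c(d,d) = 0$), so its accumulated cost equals that of some $x$-to-$d$ path and is therefore at least $q^*(x)$; and for $k \geq n$ the shortest such path, suitably padded with self-loops at $d$, achieves exactly $q^*(x)$. Walks that avoid $d$ altogether use only edges of weight at least $c_{\min}$, where $c_{\min} > 0$ is the minimum weight over non-$(d,d)$ edges of $\gG$, so their cost is at least $k c_{\min}$. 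For $k$ exceeding both $n$ and $\max_x q^*(x) / c_{\min}$, the ``avoiding-$d$'' alternative is strictly dominated, and hence $T^k q = q^*$ exactly.

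Given this finite-time convergence, both remaining conclusions drop out. Specializing to $q = q^*$ gives $T^k q^* = q^*$ for all large $k$; applying $T$ once more yields $Tq^* = T(T^k q^*) = T^{k+1} q^* = q^*$, so $q^*$ is a fixed point. For uniqueness, any $\hat q \in U$ with $T\hat q = \hat q$ satisfies $\hat q = T^k \hat q$ for all $k$, so convergence forces $\hat q = q^*$. The main obstacle I anticipate is the walk-representation induction itself: the step from $k$ to $k+1$ requires cleanly interchanging ``$\min$ over $y \in \oO(x)$'' with ``$\min$ over length-$k$ walks from $y$,'' and one must account carefully for walks that happen to enter $d$ before step $k$---there the cost-free self-loop at $d$ ensures that the terminal position is still $d$ and that the $q(x_k)$ term vanishes, so no contribution is missed.
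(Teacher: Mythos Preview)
Your proof is correct and follows essentially the same route as the paper: both establish finite-time convergence $T^k q = q^*$ for all large $k$ by splitting the implicit walks into those that reach $d$ (bounded below by $q^*(x)$, and achieving $q^*(x)$ once $k \geq n$) versus those that avoid $d$ (cost $\geq k c_{\min}$ with $c_{\min} > 0$), then invoke Exercise~\ref{ex:ufnflow} or its direct analogue for the fixed-point and uniqueness conclusions. The only cosmetic difference is that the paper obtains the upper bound $T^k q \leq q^*$ via the policy machinery ($T^k q \leq T_{\sigma^*}^k q = q_{\sigma^*} = q^*$), whereas you construct the optimal padded walk explicitly; these are two phrasings of the same observation.
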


\begin{proof}
    In view of Exercise~\ref{ex:ufnflow} in the appendix
    (page~\pageref{ex:ufnflow}), it suffices to
    verify existence of an $m \in \NN$ that $T^k q = q^*$ for all $k \geq m$
    and all $q \in U$.  To this end, let $\gamma$ be the minimum of $c(x,y)$ over
    all $(x, y) \in E$ except $(d, d)$.  Since $c$ is positive on such edges
    and $E$ is finite, $\gamma > 0$.

    Fix $q \in U$.  We claim first that $T^k q \geq q^*$ for sufficiently
    large $k$.  To see this, fix $k \in \NN$ and iterate with $T$ to get
    \begin{equation*}
        (T^k q)(x) = c(x, p_1) + c(p_1, p_2) + \cdots + c(p_{k-1}, p_k) + q(p_k)
    \end{equation*}
    for some path $(x, p_1, \ldots, p_k)$.  If this path leads to $d$,
    then $(T^k q)(x) \geq q^*(x)$ when $k \geq n$, since $q^*(x)$ is the 
    minimum cost of reaching $d$ from $x$.\footnote{Recall that $q^*(x)$ is
        the minimum cost-to-go under a policy leading to $d$.
        What if the path $(x, p_1, \ldots, p_k)$ cannot be realized as the
        trajectory of any policy?  This will only be the case if the path
        contains a cycle.  If so we can find a shorter path leading to $d$ 
        by dropping cycles.  The cost of this path is greater that $q^*(x)$,
        so the inequality $(T^k q)(x) \geq q^*(x)$ still holds.}
        If not then $(T^k q)(x) \geq k \gamma$, so
    $(T^k q)(x) \geq q^*(x)$ for $k$ such that $k \gamma \geq \max_{x \in V}
    q^* (x)$.

    For the reverse inequality, fix $k \geq n$ and observe that, by
    Proposition~\ref{p:sptsp} and the inequality from
    Exercise~\ref{ex:ttsor},
    \begin{equation*}
        T^k q \leq T_{\sigma^*}^k q = q_{\sigma^*} = q^*.
    \end{equation*}
    We have now shown that $T^k q = q^*$ for sufficiently large $k$, as
    required.
\end{proof}

We now have a means to compute the minimum cost-to-go function (by iterating
with $T$, starting at any $q \in U$) and, in addition, a way to verify the
following key result.

\begin{theorem}
    A policy $\sigma \in \Sigma$ is optimal if and only if $\sigma$ is
    $q^*$-greedy.
\end{theorem}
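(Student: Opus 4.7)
The plan is to prove both directions by leveraging the two fixed point results just established: that $q_\sigma$ is the unique fixed point of $T_\sigma$ in $U$ (Proposition~\ref{p:sptsp}) and that $q^*$ is the unique fixed point of $T$ in $U$. First I would observe that $q^* \in U$, since $q^* \geq 0$ and $q^*(d)=0$ (the self-loop at $d$ with zero cost means every trajectory starting at $d$ has zero total cost, so $q_\sigma(d)=0$ for every $\sigma \in \Sigma$, and hence $q^*(d) = 0$). This allows us to apply the operators $T$ and $T_\sigma$ to $q^*$ freely.

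For the ($\Leftarrow$) direction, suppose $\sigma \in \Sigma$ is $q^*$-greedy. By Exercise~\ref{ex:qg}, this yields $T_\sigma \, q^* = T q^*$. Since $q^*$ is a fixed point of $T$, we get $T_\sigma \, q^* = q^*$, so $q^*$ is a fixed point of $T_\sigma$ in $U$. By the uniqueness of the fixed point of $T_\sigma$ in $U$, we conclude $q_\sigma = q^*$, so $\sigma$ is optimal.

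For the ($\Rightarrow$) direction, suppose $\sigma$ is optimal, so that $q_\sigma = q^*$. Then
\[
    T_\sigma \, q^* \;=\; T_\sigma \, q_\sigma \;=\; q_\sigma \;=\; q^*,
\]
using that $q_\sigma$ is the fixed point of $T_\sigma$. By Exercise~\ref{ex:ttsor0}, $T q^* \leq T_\sigma \, q^*$, and since $T q^* = q^* = T_\sigma \, q^*$, we have equality pointwise: $(T q^*)(x) = c(x,\sigma(x)) + q^*(\sigma(x))$ for every $x \in V$. This says $\sigma(x)$ attains the minimum defining $(Tq^*)(x)$, which is exactly the definition of $\sigma$ being $q^*$-greedy.

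The arguments are short once one has the two fixed point propositions in hand, so I do not anticipate any genuine obstacle; the only delicate point to check is the membership $q^* \in U$, so that the uniqueness results can be invoked. The key conceptual payoff is that optimality of a policy has been reduced to a purely pointwise condition on its relationship to $q^*$, namely greediness.
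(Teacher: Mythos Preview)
Your proof is correct and follows essentially the same approach as the paper: both pivot on the equivalence ``$\sigma$ is $q^*$-greedy $\iff$ $T_\sigma\,q^* = q^*$'' and then invoke uniqueness of the fixed point of $T_\sigma$ to identify $q^*$ with $q_\sigma$. The paper compresses the argument into a single chain of equivalences, while you split it into two directions and cite Exercises~\ref{ex:qg} and~\ref{ex:ttsor0}; in particular, your ($\Rightarrow$) direction routes through the inequality $Tq^* \leq T_\sigma q^*$, which is correct but slightly more than needed, since $T_\sigma q^* = q^* = Tq^*$ already gives the equality directly without appealing to the order relation.
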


\begin{proof}
    By the definition of greedy policies,
    \begin{equation*}
        \sigma \text{ is $q^*$-greedy}
        \quad \iff \quad 
            c(x, \sigma(x)) + q^*(\sigma(x)) 
            = \min_{y \in \oO(x)} \{ c(x, y) + q^*(y) \},
            \quad \forall \, x \in V.
    \end{equation*}
    Since $q^*$ satisfies the Bellman equation, we then have
    \begin{equation*}
        \sigma \text{ is $q^*$-greedy}
        \quad \iff \quad 
        c(x, \sigma(x)) + q^*(\sigma(x)) 
        = q^*(x),
            \quad \forall \, x \in V.
    \end{equation*}
    The right-hand side is equivalent to $T_\sigma \, q^* = q^*$.  At the same
    time, $T_\sigma$ has only one fixed point in $U$, which is $q_\sigma$.
    Hence $q_\sigma = q^*$. Hence, by this chain of logic and the definition of optimality,
    \begin{equation*}
        \sigma \text{ is $q^*$-greedy}
        \quad \iff \quad
        q^* = q_\sigma
        \quad \iff \quad
        \text{ $\sigma$ is optimal}.
        \qedhere
    \end{equation*}
\end{proof}

\subsubsection{An Implementation in Julia}

Let's use the ideas set out above to solve the original shortest path problem,
concerning shipping, which we introduced in \S\ref{sss:spsu}.  We will
implement in Julia.

Our first step is to set up the cost function, which we store as an array
called \texttt{c}.  We identify the vertices $A$, \ldots, $G$, with the integers $1,
\ldots, 7$.  We set \mintinline{julia}{c[i, j] = Inf} when no edge exists from \texttt{i} to \texttt{j}, so
that such a path is never chosen when evaluating the 
Bellman operator, as defined in \eqref{eq:besp}.  When an edge does exist, we
enter the cost shown in Figure~\ref{f:graph}.

\begin{minted}{julia}
c = fill(Inf, (7, 7))

c[1, 2], c[1, 3], c[1, 4] = 1, 5, 3
c[2, 4], c[2, 5] = 9, 6
c[3, 6] = 2
c[4, 6] = 4
c[5, 7] = 4
c[6, 7] = 1
c[7, 7] = 0
\end{minted}

Next we define the Bellman operator:

\begin{minted}{julia}
function T(q)
    Tq = similar(q)
    n = length(q)
    for x in 1:n
        Tq[x] = minimum(c[x, :] + q[:]) 
    end
    return Tq
end
\end{minted}

Now we arbitrarily set $q \equiv 0$, generate the sequence of iterations $Tq, T^2 q, T^3 q$ and plot them:

\begin{minted}{julia}
using PyPlot
fig, ax = plt.subplots()

n = 7
q = zeros(n)
ax.plot(1:n, q)

for i in 1:3
    new_q = T(q)
    ax.plot(1:n, new_q, "-o", alpha=0.7)
    q = new_q
end

\end{minted}

\begin{figure}
   \centering
   \scalebox{0.6}{\includegraphics[trim = 0mm 0mm 0mm 0mm,
   clip]{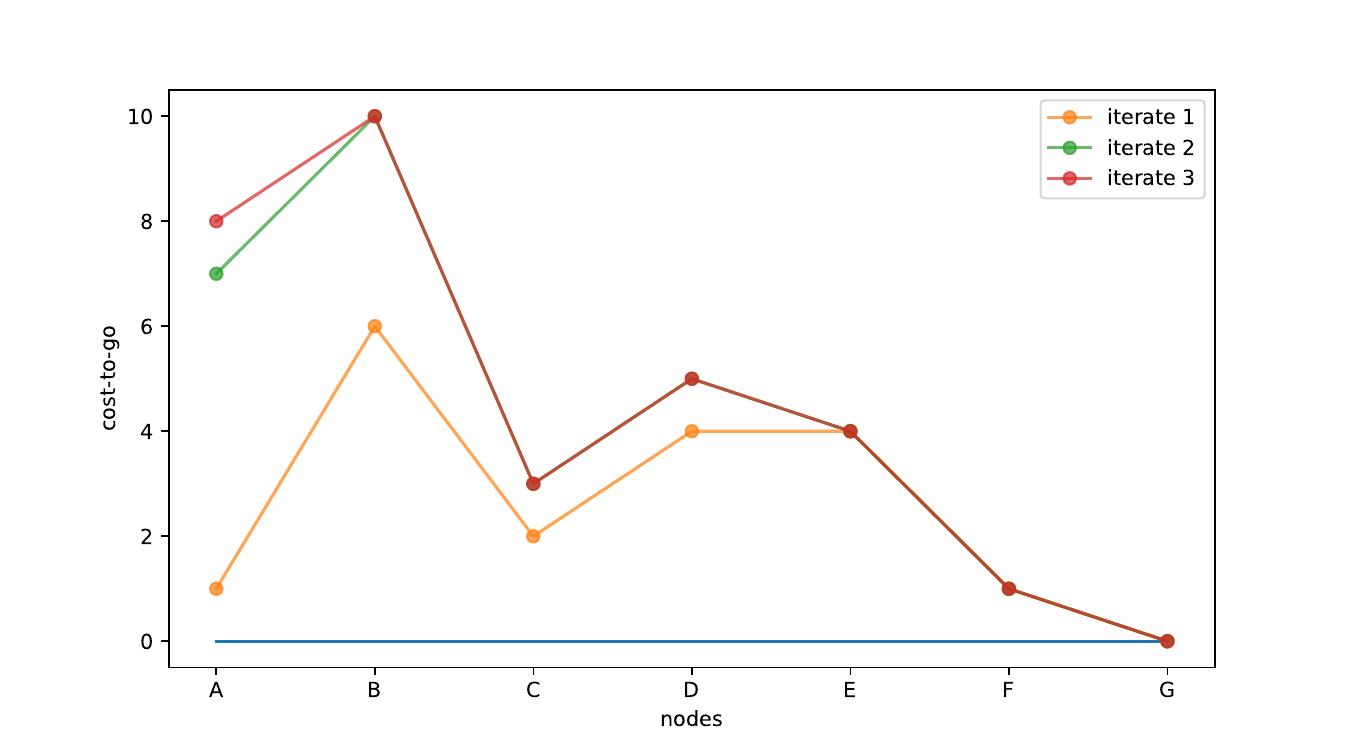}} \caption{\label{f:shortest_path_iter_1} Shortest path Bellman iteration}
\end{figure}

After adding some labels, the output looks like the image in
Figure~\ref{f:shortest_path_iter_1}.  Notice that, 
by $T^3 q$, we have already converged on $q^*$.  You can confirm this by 
checking that the values of $T^3 q$ line up with those we obtained manually in
Figure~\ref{f:graph4}.

\subsection{Betweenness Centrality}

In \S\ref{ss:netcen} we discussed a range of centrality measures for networks,
including degree, eigenvector and Katz centrality.  Aside from these, there is
another well-known centrality measure, called betweenness centrality, that builds on the
notion of shortest paths and is particularly popular in analysis of social and
peer networks.  

Formally, for a given graph $\gG = (V, E)$, directed or undirected, the
\navy{betweenness centrality} of vertex $v \in V$ is
\begin{equation*}
    b(v) := \sum_{x,y \in V \setminus \{v\} }\frac{ |S(x, v, y)|}{|S(x,y)|},
\end{equation*}
where $S(x,y)$ is the set of all shortest paths from $x$ to $y$ and $S(x, v,
y)$ is the set of all shortest paths from $x$ to $y$ that pass through $v$.
(As usual, $|A|$ is the cardinality of $A$.)  Thus, $b(v)$ is proportional to
the probability that, for a randomly selected pair of nodes $x, y$, a randomly
selected shortest path from $x$ to $y$ contains $v$.
Thus, the measure will be high for nodes that ``lie between'' a large number
of node pairs.

For example, Networkx stores a graph called \texttt{florentine\_families\_graph} that
details the marriage relationships between 15 prominent Florentine families
during the 15th Century.  The data can be accessed via
\begin{minted}{python}
import networkx as nx
G = nx.florentine_families_graph()    
\end{minted}
The network is shown in Figure~\ref{f:betweenness_centrality_1}, which was
created using
\begin{minted}{python}
nx.draw_networkx(G, [params])    
\end{minted}
where \texttt{[params]} stands for
parameters listing node size, node color and other features.
For this graph, node size and node color are scaled by betweenness centrality,
which is calculated via
\begin{minted}{python}
nx.betweenness_centrality(G)    
\end{minted}
Although this graph is very simple, the output helps to illustrate the
prominent role of the Medici family, consistent with their great wealth and
influence in Florence and beyond.

\begin{figure}
   \centering
   \scalebox{0.5}{\includegraphics[trim = 20mm 40mm 0mm 40mm, clip]{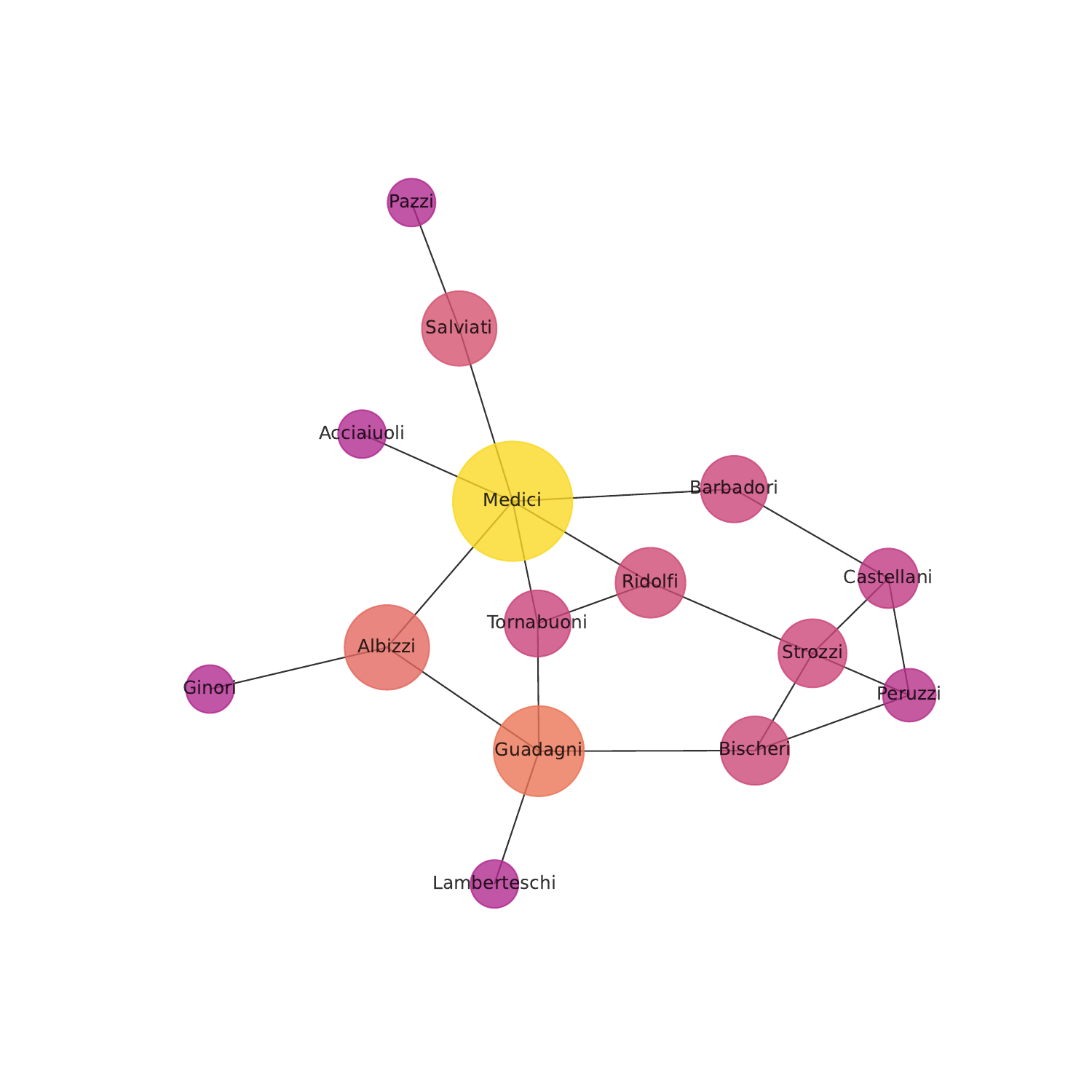}}
   \caption{\label{f:betweenness_centrality_1} Betweenness centrality (by
   color and node size) for the Florentine families}
\end{figure}

\section{Linear Programming and Duality}\label{ss:spe}

Our study of shortest paths in \S\ref{ss:shp} used a relatively specialized
optimization method.  In this section we cover more general results in
optimization and duality, which will then be applied to endogenous networks,
optimal transport and optimal flows.  Part of the strategy is to take
challenging optimization problems and regularize them, often by some form of
convexification.  

Before diving into theory, we use \S\ref{ss:wahdp} below to provide motivation
via one very specific application. This application involves what is typically
called \emph{matching} in economics and \emph{linear assignment} in
mathematics. We illustrate the key ideas in the context of matching workers
to jobs. Later, when we discuss how to solve the problem, we will see the
power of linear programming, convexification and duality.

\subsection{Linear Assignment}\label{ss:wahdp}

Behold the town of Springfield.  A local employer is shutting down and 40
workers stand idled.  Fortunately for these workers, Springfield
lies within a political battleground state and, as a result, the mayor
receives backing to attract a new employer.  She succeeds by
promising a certain firm that the 40 workers will be retrained for the new
skills they require.  For mathematical convenience, let us suppose that there
are exactly 40 new positions, each with distinct skill requirements.

The team set up by the mayor records the individual skills of the 40 workers,
along with the requirements of the new positions,
and estimates the cost $c(i, j)$ of retraining individual $i$ for position
$j$.  The team's challenge is to minimize the total cost of retraining.  In
particular, they wish to solve
\begin{equation}\label{eq:matchcost}
    \min_{\sigma \in \pP} \sum_{i=1}^{40} c(i, \sigma(i)),
\end{equation}
where $\pP$ is the set of all permutations (i.e., bijective self-maps) on
the integers $1, \ldots, 40$.  Figure~\ref{f:worker_job_matching} illustrates
one possible permutation.

\begin{figure*}
   \begin{center}
    \begin{tikzpicture}
  \node[] (w0) at (-1, 1) {workers};
  \node[circle, draw] (w1) at (0.5, 1) {$1$};
  \node[circle, draw] (w2) at (1.5, 1) {$2$};
  \node[circle, draw] (w3) at (2.5, 1) {$3$};
  \node[] (w4) at (3.5, 1) {$\cdots$};
  \node[circle, draw] (w5) at (4.5, 1) {$40$};
  \node[] (j0) at (-1, -1) {jobs};
  \node[circle, draw] (j1) at (0.5, -1) {$1$};
  \node[circle, draw] (j2) at (1.5, -1) {$2$};
  \node[circle, draw] (j3) at (2.5, -1) {$3$};
  \node[] (j4) at (3.5, -1) {$\cdots$};
  \node[circle, draw] (j5) at (4.5, -1) {$40$};
  \draw[->, thick, black]
  (w1) edge [bend left=0, above, -{Stealth[scale=1]}, line width=1pt] node {} (j2)
  (w2) edge [bend left=0, above, -{Stealth[scale=1]}, line width=1pt] node {} (j3)
  (w3) edge [bend left=0, above, -{Stealth[scale=1]}, line width=1pt] node {} (j1)
  (w4) edge [bend left=0, above, -{Stealth[scale=1]}, line width=1pt] node {} (j5)
  (w5) edge [bend left=0, above, -{Stealth[scale=1]}, line width=1pt] node {} (j4);
\end{tikzpicture}
    \caption{\label{f:worker_job_matching} One possible assignment (i.e., permutation of $\natset{40}$)}
   \end{center}
\end{figure*}
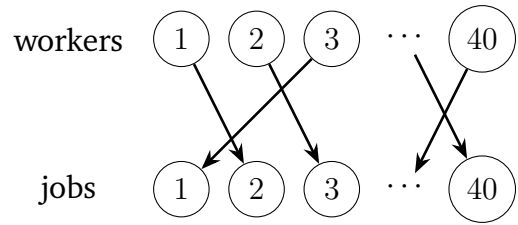

The problem is discrete, so first order conditions are unavailable.  Unsure
of how to proceed but possessing a powerful computer, the team 
instructs its workstation to step through all possible permutations and record
the one that leads to the lowest total retraining cost.  The instruction set
is given in Algorithm~\ref{algo:searchperm}. 

\begin{algorithm}
    set $m = +\infty$ \;  
    \For{$\sigma$ in $\pP$ } 
    {
        set $t(\sigma) = \sum_{i=1}^{40} c(i, \sigma(i))$ \;
        \If{$t(\sigma) < m$}
        {
            set $m = t(\sigma)$ \;
            set $\sigma^* = \sigma$ \;
        }
    }
    \Return{$\sigma^*$ }
    \caption{\label{algo:searchperm} Minimizing total cost by testing all permutations }
\end{algorithm}

After five days of constant execution, the workstation is still running and the
mayor grows impatient. The team starts to calculate how long execution will
take.  The main determinant is the size of the set $\pP$.  Elementary
combinatorics tells us that the number of permutations of a set of size $n$ is
$n!$.  Quick calculations show that $40!$ exceeds $8 \times 10^{47}$.  A
helpful team member points out that this is much less than the
number of possible AES-256 password keys (approximately $10^{77}$).  The mayor is not appeased 
and demands a runtime estimate.

Further calculations reveal the following: if, for each $\sigma$ in $\pP$,
the workstation can evaluate the cost
$t(\sigma) = \sum_{i=1}^{40} c(i, \sigma(i))$ in $10^{-10}$-th of a second (which is
extremely optimistic), then the total run time would be
\begin{equation*}
    10^{-10} \times 8 \times 10^{47} 
    = 8 \times 10^{37} \text{ seconds }
    \approx 2.5 \times 10^{30} \text{ years}.
\end{equation*}
Another helpful team member provides perspective by noting that the sun will
expand into a red giant and vaporize planet Earth in less than $10^{10}$
years. 

The great computational cost of solving this problem by direct calculations is
an example of what is often called the \emph{curse of dimensionality}. This
phrase, coined by Richard Bellman (1920-1984) during his fundamental research
into dynamic optimization, refers to the exponential increase in processor
cycles needed to solve computational problems to a given level of accuracy as
the number of dimensions increases.  The matching problem we have just
described is high-dimensional because the choice variable $\sigma$, a
permutation in $\pP$, is naturally associated with the vector $(\sigma(1),
\ldots, \sigma(40))$.  This, in turn, is a point in 40-dimensional vector
space.\footnote{Readers familiar with high performance computing techniques
    might suggest that the curse of dimensionality is no cause for concern for
    the matching problem,  since the search algorithm be parallelized.
    Unfortunately, even the best parallelization methods cannot save the
    workstation from being vaporized by the sun with the calculation
    incomplete.  The best-case scenario is that adding another execution
    thread doubles effective computations per second and hence halves
execution time.   However, even with $10^{10}$ such threads (an enormous
number), the execution time would still be $2.5 \times 10^{20}$ years.}

Fortunately, clever algorithms for this matching problem have been found and,
for problems such as this one, useful approximations to the optimal allocation
can be calculated relatively quickly.  For example,
\cite{dantzig1951application} showed how such problems can be cast as a
\emph{linear program}, which we discuss below.  Further progress has been made
in recent years by adding regularization terms to the optimization problem
that admit derivative-based methods. 

In the sections that follow, we place matching problems in a more general
setting and show how they can be solved efficiently.  Our first step is to
review the basics of linear programming.

\subsection{Linear Programming}\label{ss:lp} 

A linear program is an optimization problem with a linear objective function
and linear constraints.  If your prior belief is that all interesting problems
are nonlinear, then let us reassure you that linear programming is
applicable to a \emph{vast} array of interesting applications.  One of these
is, somewhat surprisingly, the assignment problem in \S\ref{ss:wahdp}, as 
famously demonstrated by the American mathematician George Bernard Dantzig
(1914--2005) in \cite{dantzig1951application}.  Other applications include
optimal flows on networks and optimal transport problems, which in turn have
diverse applications in economics, finance, engineering and machine learning.

We explain the key ideas below, beginning with an introduction to linear
programming.

\subsubsection{A Firm Problem}

One way to define linear programs is in terms of what 
open source and commercial solvers for linear programming problems handle.
Typically, for these solvers, the problem takes the form 
\begin{align}
    \label{eq:linprogsolver}
    & \min_{x \in \RR^n} \, c^\top x  \\
    \label{eq:linprogsolverc}
    \text{ subject to }  & A_1 x = b_1, \; A_2 x \leq b_2, \text{ and } 
    d_\ell \leq x \leq d_u.
\end{align}
Here each $A_i$ is a matrix with $n$ columns and $c, b_1, b_2, d_\ell$ and $d_u$ are
conformable column vectors.

To illustrate, let's consider a simple example, which concerns a firm that fabricates
products labeled $1, \ldots, n$.  To make each product requires machine hours and labor.
Product $i$ requires $m_i$ machine hours and $\ell_i$ labor hours per unit of
output, as shown in the table below
\begin{center}
    \begin{tabular}{l|c|c|c}
     product  & machine hours & labor hours & unit price \\
     \hline 
     1        & $m_1$         &    $\ell_1$ &   $p_1$ \\
     $\vdots$        & $\vdots$         &    $\vdots$ &   $\vdots$ \\
     $n$        & $m_n$         &    $\ell_n$ &   $p_n$  \\
     \hline
    \end{tabular}
\end{center}
A total of $M$ machine hours and $L$ labor hours are available.  Letting $q_i$
denote output of product $i$, the firm's problem
is 
\begin{equation*}
    \max_{q_1, \ldots, q_n} \pi(q_1, \ldots, q_n) 
    := \sum_{i=1}^n p_i q_i - w L - r M 
\end{equation*}
subject to
\begin{equation}\label{eq:lpfc}
      \sum_{i=1}^n m_i q_i \leq M ,
      \quad
      \sum_{i=1}^n \ell_i q_i \leq L  \text{ and }
      q_1, \ldots, q_n \geq 0.
\end{equation}
Here $\pi$ is profits and $w$ and $r$ are the wage rate and rental rate of
capital.  We are taking these values as fixed, along with $L$ and $M$, so
choosing the $q \in \RR^n_+$ that maximizes profits is the same as choosing
the $q$ that maximizes revenue $\sum_{i=1}^n p_i q_i$.   This will be our
objective in what follows.

(Why are total labor supply and machine hours held fixed here?  We might
think of this problem as one of designing a daily or weekly production plan,
which optimally allocates existing resources, given current prices.)

Figure~\ref{f:linear_programming_1} shows an illustration of a simple case
with two products.  Consistent with Exercise~\ref{ex:conpoly}, the feasible
set is a polyhedron, shaded in blue. The green lines are contour lines of the
revenue function $(q_1, q_2) \mapsto p_1 q_1 + p_2 q_2$, with $p_1=3$ and
$p_2=4$.  By inspecting this problem visually, and recognizing that the
contour lines are increasing as we move to the northeast, it is clear that
the maximum is obtained at the extreme point indicated in the figure.

\begin{figure}
   \centering
   \scalebox{0.65}{\includegraphics[trim = 0mm 0mm 0mm 0mm, clip]{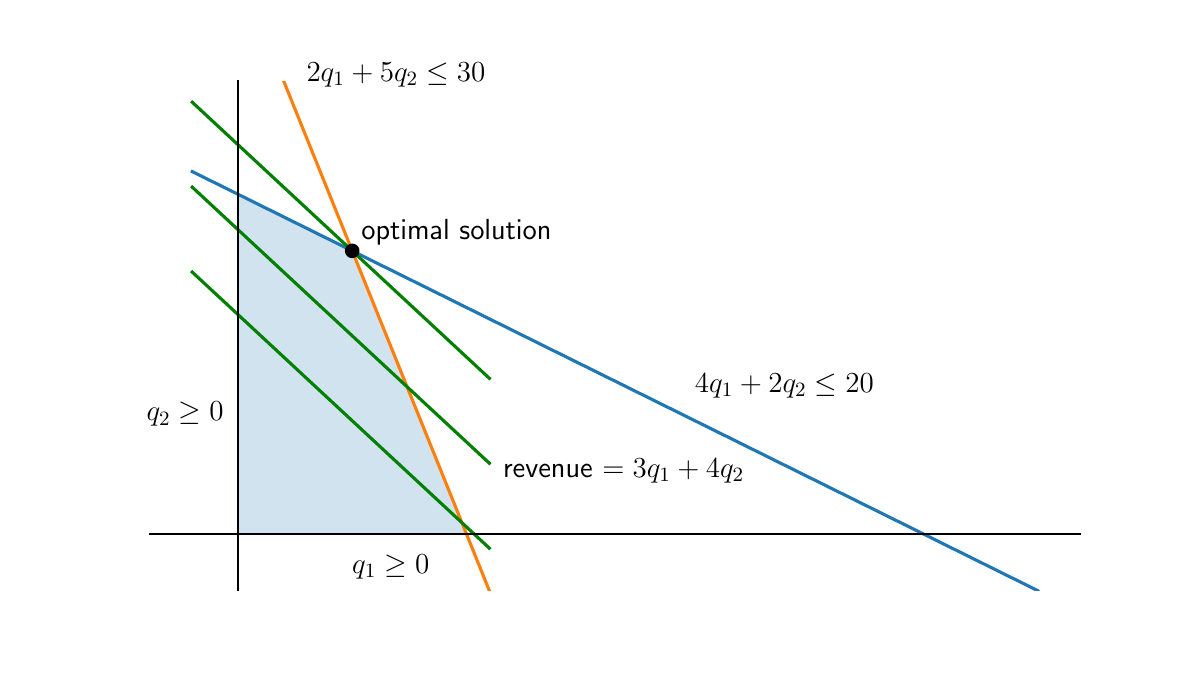}}
   \caption{\label{f:linear_programming_1} Revenue maximizing quantities}
\end{figure}

Maximizing revenue is equivalent to minimizing $\sum_i (- p_i) q_i$, so with
$c = (-p_1, -p_2)$ we have a linear programming problem to which we can apply
a solver. 

\subsubsection{A Python Implementation}

Let's look at one option for solving this problem with Python, via SciPy's
open source solver \texttt{linprog}. For the simple two-product firm problem,
we ignore
the unnecessary equality constraint in \eqref{eq:linprogsolverc} and set
\begin{equation*}
    A_2 =  
    \begin{pmatrix}
        m_1 & m_2 \\
        \ell_1 & \ell_2 
    \end{pmatrix}
    \quad \text{and} \quad
    b_2 =
    \begin{pmatrix}
        M \\
        L
    \end{pmatrix}.
\end{equation*}
The bound $0 \leq q$ is imposed by default so we do not need to specify $d_\ell$
and $d_u$.

We apply the numbers in Figure~\ref{f:linear_programming_1}, so the
maximization version of the problem is 
\begin{equation}\label{eq:fsimpp}
    \max_{q_1, q_2} 
    3q_1 + 4 q_2
    \; \st \;
    q_1 \geq 0, \; q_2 \geq 0, \;\;
    2q_1 + 5q_2 \leq 30
    \text{ and }
    4q_1 + 2q_2 \leq 20.
\end{equation}

Now we set up primitives and call \texttt{linprog} as follows:

\begin{minted}{python}
A = ((2, 5),
     (4, 2))
b = (30, 20)
c = (-3, -4)  # minus in order to minimize

from scipy.optimize import linprog
result = linprog(c, A_ub=A, b_ub=b)
print(result.x)    
\end{minted}

The output is \texttt{[2.5, 5.0]}, indicating that we should set $q_1=2.5$ and
$q_2=5.0$.  The result is obtained via the simplex algorithm, discussed in
\S\ref{sss:sial}.  

\begin{Exercise}
    As a way to cross-check the solver's output, 
    try to derive the solution $q=(2.5, 5.0)$ in a more intuitive way, 
    from examining Figure~\ref{f:linear_programming_1}.
\end{Exercise}

\begin{Answer}
    Since prices are positive (more output means more revenue), we expect that
    both inequality constraints (see \eqref{eq:lpfc}) will hold with equality.
    Reading from the figure, the equalities are $2q_1 + 5 q_2 = 30$ and $4q_1 + 2 q_2 =
    20$.  Solving simultaneously leads to $q=(2.5, 5.0)$.
\end{Answer}

\begin{Exercise}
    Consider the same problem with the same parameters, but 
    suppose now that, in addition to the previous constraints, output of $q_2$
    is bounded above by $4$.  Use \texttt{linprog} or another numerical linear
    solver to obtain the new solution.
\end{Exercise}

\subsubsection{A Julia Implementation}

When solving linear programs, one option is to use a domain specific modeling
language to set out the objective and constraints in the optimization problem.
In Python this can be accomplished using the open source libraries such as 
Pyomo and Google's OR-Tools. In Julia we can use JuMP.  

The following code illustrates the Julia case, using JuMP, applied to the firm
problem \eqref{eq:fsimpp}.

\begin{minted}{julia}
using JuMP
using GLPK

m = Model()
set_optimizer(m, GLPK.Optimizer)

@variable(m, q1 >= 0)
@variable(m, q2 >= 0)
@constraint(m, 2q1 + 5q2 <= 30)
@constraint(m, 4q1 + 2q2 <= 20)
@objective(m, Max, 3q1 + 4q2)

optimize!(m)    
\end{minted}

Notice how the JuMP modeling language allows us to write objectives and
constraints as expressions, such as \mintinline{julia}{2q1 + 5q2 <= 30}.  This brings the
code closer to the mathematics and makes it highly readable.

If we now run \mintinline{julia}{value.(q1)} and \mintinline{julia}{value.(q2)}, we get 2.5 and 5.0,
respectively, which are the same as our previous solution.

\subsubsection{Standard Linear Programs}\label{sss:slps}

The programming problem in \eqref{eq:linprogsolver}--\eqref{eq:linprogsolverc} is convenient for
applications but somewhat cumbersome for theory. A more canonical version of
the linear programming problem is
\begin{equation}\label{eq:linprog}
    P := \min_{x \in \RR^n} \, c^\top x  
    \; \text{ subject to } \;
    A x = b  \text{ and } x \geq 0.
\end{equation}
Here $x$ is interpreted as a column vector in $\RR^n$, $c$ is
also a column vector in $\RR^n$, $A$ is $m \times n$, and $b$ is $m \times
1$.  A linear program in the form of \eqref{eq:linprog} is said to be in
\navy{standard equality form}\index{Standard form}.
In preparation for our discussion of duality below, we also call
\eqref{eq:linprog} the \navy{primal problem}.

\begin{Exercise}\label{ex:conpoly}
    Prove that the \navy{feasible set} $F = \setntn{x \in \RR^n}{Ax=b \text{
    and } x \geq 0}$ for the linear program \eqref{eq:linprog} is a polyhedron.
\end{Exercise}

\begin{Answer}
    The equality constraint $Ax = b$ can be replaced by the two inequality
    constraints $Ax \leq b$ and $-Ax \leq -b$.  The constraint $x \geq 0$ can be
    replaced by $-x \leq 0$.  If we unpack these matrix inequalities, row by row,
    we obtain a collection of constraints, each of which has the form $h^\top x
    \leq g$ for suitable $h \in \RR^n$ and $g \in \RR$.  The claim now follows
    from the definition of a polyhedron on page~\pageref{eq:polyh}.
\end{Answer}

Standard equality form is more general than it first appears. In fact the
original formulation \eqref{eq:linprogsolver}--\eqref{eq:linprogsolverc} can be manipulated into standard
equality form via a sequence of transformations. Hence, when treating theory
below, we can specialize to standard equality form without losing generality.

Although we omit full details on the set of necessary transformations (which
can be found in \cite{bertsimas1997introduction} and many other sources),
let's gain some understanding by converting the firm optimization problem into
standard equality form. To simplify notation, we address this task when $n=3$,
although the general case is almost identical.

As above, we switch to minimization of a linear constraint by using the fact
that maximizing revenue is equivalent to minimizing $\sum_i (- p_i) q_i$. 
Next, we need to convert the two inequality constraints \eqref{eq:lpfc} into equality
constraints.  We do this by introducing \navy{slack variables}\index{Slack
variables} $s_m$ and $s_\ell$ and rewriting the constraints as 
\begin{equation*}
      \sum_{i=1}^3 m_i q_i + s_m = M ,
      \quad
      \sum_{i=1}^3 \ell_i q_i + s_\ell = L  
      \quad \text{and} \quad
      q_1, q_2, q_3, s_m, s_\ell \geq 0.
\end{equation*}
Indeed, we can see that requiring $\sum_{i=1}^3 m_i q_i + s_m = M$ and $s_m
\geq 0$ is the same as imposing $\sum_{i=1}^3 m_i q_i \leq M$, and the same
logic extends to the labor constraint.

Setting $x := (q_1, q_2, q_3, s_m, s_\ell)$, we can now express the problem as
\begin{equation*}
    \min_x c^\top x 
    \quad \text{where} \quad
    c^\top := (-p_1, -p_2, -p_3, 0, 0)
\end{equation*}
subject to
\begin{equation*}
    \begin{pmatrix}
        m_1 & m_2 & m_3 & 1 & 0 \\
        \ell_1 & \ell_2 & \ell_3 & 0 & 1 
    \end{pmatrix}
    \begin{pmatrix}
        q_1 \\
        q_2 \\
        q_3 \\
        s_m \\
        s_\ell
    \end{pmatrix}
    =
    \begin{pmatrix}
        M \\
        L
    \end{pmatrix}
      \quad \text{and} \quad
      x \geq 0.
\end{equation*}
This is a linear program in standard equality form.

\subsubsection{Duality for Linear Programs}

One of the most important facts concerning linear programming is that strong
duality always holds.  Let us state the key results.  
The \navy{dual problem}\index{Dual problem} corresponding to the standard equality
form linear program \eqref{eq:linprog} is
\begin{equation}\label{eq:linprogdual}
    D = \max_{\theta \in \RR^m} \, b^\top \theta
    \; \text{ subject to } \;
    A^\top \theta \leq c.
\end{equation}

Readers who have covered \S\ref{ss:ld} in the appendix will be able to see the
origins of this expression.  In particular, by formula \eqref{eq:lagdual} in
the appendix, the dual problem corresponding to the standard equality form
linear program \eqref{eq:linprog} can be expressed as 
\begin{equation}
    D =
    \max_{\theta \in \RR^m}
    \min_{x \in E} \; L(x, \theta)
    \; \text{ where } \;
    L(x, \theta) 
        := c^\top x + \theta^\top (b - A x)
\end{equation}
and $E = \RR^n_+$.  (We can also treat the inequality $x \geq 0$ via a
multiplier but this turns out to be unnecessary.)
Now observe that
\begin{equation*}
    \min_{x \in E} \; L(x, \theta)
    =
    b^\top \theta + \min_{x \geq 0} x^\top (c - A^\top \theta) 
    = 
    \begin{cases}
        b^\top \theta & \text{ if } A^\top \theta \leq c \\
        -\infty & \text{ otherwise }
    \end{cases}.
\end{equation*}
Since the dual problem is to maximize this expression over $\theta \in
\RR^n$, we see immediately that a $\theta$ violating $A^\top \theta \leq c$
will never be chosen.  Hence the dual to the primal problem~\eqref{eq:linprog} is
\eqref{eq:linprogdual}.

\subsubsection{Strong Duality}

As shown in \S\ref{sss:theo} of the appendix, the inequality $D \leq P$ always holds,
where $P$ is the primal value in \eqref{eq:linprog}.  This is called
\navy{weak duality}\index{Weak duality}.  If $P = D$,  then \navy{strong
duality}\index{Strong duality} is said to hold.  Unlike weak duality, strong
duality requires conditions on the primitives.

The next theorem states that, for linear programs, strong duality holds whenever
a solution exists.  A proof can be obtained either through Dantzig's simplex
method or via Farkas' Lemma.  See, for example, Ch.~4 of
\cite{bertsimas1997introduction} or Ch.~6 of \cite{matousek2007understanding}.

\begin{theorem}[Strong duality for linear programs]\label{t:strongdual}
    The primal problem~\eqref{eq:linprog} has a finite minimizer $x^*$ if and only
    if the dual problem~\eqref{eq:linprogdual} has a finite maximizer $\theta^*$. 
    If these solutions exist, then $c^\top x^* = b^\top \theta^*$.  
\end{theorem}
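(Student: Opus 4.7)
The plan is to prove strong duality by first establishing weak duality and then using a theorem of the alternative (Farkas' Lemma) to convert a primal optimizer into a dual optimizer with matching value. The reverse implication will follow by a symmetry argument after noting that the dual can itself be cast in standard equality form, with its dual being equivalent to the original primal.

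First I would establish weak duality by direct computation: if $x$ is primal feasible ($Ax = b$, $x \geq 0$) and $\theta$ is dual feasible ($A^\top \theta \leq c$), then
\[
b^\top \theta = (Ax)^\top \theta = x^\top (A^\top \theta) \leq x^\top c = c^\top x,
\]
the inequality using $x \geq 0$ componentwise together with $A^\top \theta \leq c$. Taking supremum on the left and infimum on the right gives $D \leq P$, so finiteness on either side controls the other.

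Next, assuming the primal admits an optimizer $x^*$ with value $P = c^\top x^*$, I would produce a dual feasible $\theta_\epsilon$ with $b^\top \theta_\epsilon > P - \epsilon$ for each $\epsilon > 0$. By optimality of $x^*$, the system $Ax = b$, $c^\top x \leq P - \epsilon$, $x \geq 0$ has no solution. Applying the mixed-form Farkas Lemma to this infeasible system produces a free multiplier $y \in \RR^m$ (for the equalities $Ax = b$) and a nonnegative multiplier $\lambda \in \RR_+$ (for $c^\top x \leq P - \epsilon$), not all zero, such that $A^\top y + \lambda c \geq 0$ and $b^\top y + \lambda(P - \epsilon) < 0$. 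The critical step is to verify $\lambda > 0$: if $\lambda = 0$, then $A^\top y \geq 0$ and $b^\top y < 0$, but feasibility of $x^*$ gives $b^\top y = (x^*)^\top (A^\top y) \geq 0$, a contradiction. With $\lambda > 0$ secured, setting $\theta_\epsilon := -y/\lambda$ yields a dual feasible vector (from $A^\top(-y/\lambda) \leq c$) satisfying $b^\top \theta_\epsilon = -b^\top y / \lambda > P - \epsilon$. Combined with weak duality, $P - \epsilon < b^\top \theta_\epsilon \leq D \leq P$, so $D = P$ on letting $\epsilon \downarrow 0$. Attainment of a maximizer then follows because a finite supremum of a linear function on the polyhedron $\{\theta : A^\top \theta \leq c\}$ is achieved at a vertex, a standard fact from the polyhedral theory underlying the simplex method.

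For the reverse direction, I would convert the dual into standard equality form by splitting $\theta = \theta^+ - \theta^-$ with $\theta^\pm \geq 0$ and introducing a slack variable for $A^\top \theta \leq c$; the dual of this reformulation turns out to be (after elementary manipulation) equivalent to the original primal. Applying the forward direction to this setup then shows that existence of $\theta^*$ forces existence of $x^*$ with $c^\top x^* = b^\top \theta^*$. The main obstacle throughout is the Farkas step: stating the right variant of the alternative theorem, packaging the infeasible primal system correctly, and verifying $\lambda > 0$; the remaining manipulations---weak duality, attainment on a polyhedron, and the symmetric reduction---are routine.
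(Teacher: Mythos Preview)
The paper does not actually supply a proof of this theorem: it states that ``A proof can be obtained either through Dantzig's simplex method or via Farkas' Lemma'' and refers the reader to \cite{bertsimas1997introduction} and \cite{matousek2007understanding}. Your proposal follows the Farkas route the paper mentions, and the argument is essentially sound: the weak duality computation, the infeasibility-of-the-improved-primal system, the Farkas alternative producing $(y,\lambda)$, the verification $\lambda>0$ using primal feasibility of $x^*$, and the symmetric reduction for the reverse direction are all correct and standard.

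One small inaccuracy: you write that the finite dual supremum ``is achieved at a vertex'' of $\{\theta:A^\top\theta\le c\}$. This polyhedron need not have any vertices (it contains a line whenever $A$ lacks full row rank). The correct statement, which is what you actually need, is that a linear objective bounded on a nonempty polyhedron always attains its optimum---this follows from the Minkowski--Weyl decomposition or from termination of the simplex method, and does not require extreme points to exist. With that phrasing corrected, your proof goes through. An alternative that avoids the $\epsilon$-limit and the separate attainment step is to apply Farkas directly to the dual-side system $A^\top\theta\le c$, $b^\top\theta\ge P$ and show it is feasible; but your route is equally valid.
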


Strong duality of linear programs has many important roles.  Some of these are
algorithmic: duality can be used to devise efficient solution methods for
linear programming problems.  Another way that duality matters for economists
is that it provides deep links between optimality and competitive equilibria,
as we show below.

\subsubsection{Complementary Slackness}

We say that $x^* \geq 0$ and $\theta^* \in \RR^m$ satisfy
the \navy{complementary slackness}\index{Complementary slackness} conditions
for the linear program \eqref{eq:linprog} when
\begin{align}
    \theta_i^* 
        \left(
            b_i - \sum_{j=1}^n a_{ij} x^*_j 
        \right) 
    & = 0
    \quad \text{for all } i \in \natset{m}
    \label{eq:lpcs1}
    \\
    x^*_j 
        \left(
            c_j - \sum_{i=1}^m a_{ij} \theta^*_i
        \right)
    & = 0
    \quad \text{for all } j \in \natset{n}.
    \label{eq:lpcs2}
\end{align}
While it is possible to derive these expressions from the complementary slackness
in the KKT conditions in \S\ref{sss:kkt}, a better
approach is to connect them directly to the saddle point condition.  

To see how this works, suppose that $x^* \geq 0$ is feasible for the primal
problem and $\theta^* \in \RR^m$ is feasible for the dual problem.  If $(x^*,
\theta^*)$ is a saddle point of $L$, then the complementary
slackness conditions \eqref{eq:lpcs1}--\eqref{eq:lpcs2} must hold.  Indeed,
\eqref{eq:lpcs1} is trivial when $x^*$ is feasible, since $Ax^* = b$. At the
same time, \eqref{eq:lpcs2} must be true because violation implies that 
\begin{equation*}
    x^*_j 
        \left(
            c_j - \sum_{i=1}^m a_{ij} \theta^*_i
        \right) > 0
    \quad \text{for some } j \in \natset{n},
\end{equation*}
due to dual feasibility (i.e., $A^\top \theta^* \leq c$) and $x^* \geq 0$.
But then $(x^*, \theta^*)$ is not a saddle point of $L$, since changing the
$j$-th element of $x^*$ to zero strictly reduces the Lagrangian
\begin{equation*}
    L(x, \theta) 
        = c^\top x + \theta^\top (b - A x)
        = x^\top (c  - A^\top \theta) + \theta^\top b .
\end{equation*}

By sharpening this argument, it is possible to show that, for linear
programs, the complementary slackness conditions are exact necessary and
sufficient conditions for a saddle point of the Lagrangian.  This leads to the
next theorem.

\begin{theorem}\label{t:csolp}
    If $x^* \geq 0$ is feasible for the primal problem and $\theta^*$ is
    feasible for the dual problem, then the following statements are
    equivalent:
    \begin{enumerate}
        \item $x^*$ is optimal for the primal problem and $\theta^*$ is
            optimal for the dual problem.
        \item The pair $(x^*, \theta^*)$ obeys the complementary slackness
            conditions~\eqref{eq:lpcs1}--\eqref{eq:lpcs2}.
    \end{enumerate}
\end{theorem}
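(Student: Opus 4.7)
The plan is to use strong duality (Theorem~\ref{t:strongdual}) together with the algebraic identity, valid under primal feasibility,
\begin{equation*}
    c^\top x^* - b^\top \theta^*
    = c^\top x^* - \theta^{*\top} A x^*
    = x^{*\top} (c - A^\top \theta^*),
\end{equation*}
where I used $A x^* = b$. This single identity will drive both implications, since the sign structure forced by primal and dual feasibility turns this scalar into a sum of nonnegative terms, each of which is a complementary slackness product.

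For (ii) $\Rightarrow$ (i), I would note that \eqref{eq:lpcs1} is automatic from $A x^* = b$, so the real content is \eqref{eq:lpcs2}. Summing \eqref{eq:lpcs2} over $j$ yields $x^{*\top}(c - A^\top \theta^*) = 0$, and the displayed identity then gives $c^\top x^* = b^\top \theta^*$. Combining with weak duality $b^\top \theta^* \leq D \leq P \leq c^\top x^*$ forces equality throughout, so $x^*$ attains the primal minimum and $\theta^*$ attains the dual maximum.

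For (i) $\Rightarrow$ (ii), I would invoke strong duality to assert $c^\top x^* = b^\top \theta^*$, so the displayed identity gives $x^{*\top}(c - A^\top \theta^*) = 0$. Dual feasibility means $c - A^\top \theta^* \geq 0$ componentwise, while $x^* \geq 0$, so the sum $\sum_j x^*_j (c_j - (A^\top \theta^*)_j)$ is a sum of nonnegative terms equal to zero; each term must therefore vanish, which is exactly \eqref{eq:lpcs2}. As before, \eqref{eq:lpcs1} is immediate from $A x^* = b$.

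There is no real obstacle here, since both directions reduce to the same two-line identity and the only nontrivial ingredient is Theorem~\ref{t:strongdual}, which is cited. The one small thing to be careful about is to apply strong duality in the correct direction in each implication: in (i) $\Rightarrow$ (ii) it is used to equate the two optimal values, whereas in (ii) $\Rightarrow$ (i) it is weak duality that is needed, so strong duality is not actually invoked there. Keeping this clean should make the proof short and transparent.
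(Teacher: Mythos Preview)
Your argument is correct. The paper does not actually give its own proof of this theorem: it sketches an interpretation via saddle points of the Lagrangian (showing that a saddle point forces \eqref{eq:lpcs2}, since otherwise lowering the offending $x^*_j$ to zero strictly decreases $L$) and then refers the reader to Chapter~4 of \cite{bertsimas1997introduction} for a full proof.

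Your route is the direct ``zero duality gap'' argument, driven entirely by the identity
\[
    c^\top x^* - b^\top \theta^* = x^{*\top}(c - A^\top \theta^*)
\]
under primal feasibility $Ax^* = b$. This is more self-contained than the paper's saddle-point sketch and is in fact the standard proof found in \cite{bertsimas1997introduction}. The saddle-point framing buys a conceptual link to general Lagrangian duality (\S\ref{ss:ld}), while your approach makes the logical structure sharper: (ii) $\Rightarrow$ (i) needs only weak duality, and strong duality enters only in (i) $\Rightarrow$ (ii), exactly as you note. Both approaches ultimately rest on Theorem~\ref{t:strongdual}.
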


Chapter~4 of \cite{bertsimas1997introduction} provides more discussion and a
full proof of Theorem~\ref{t:csolp}.  Below, we illustrate some of the
elegant connections between complementary slackness and equilibria in
competitive economic environments.

\subsubsection{The Simplex Algorithm}\label{sss:sial}

Linear programming is challenging in high-dimensional settings, partly because
the linear objective function implies that solutions are not interior.  The
first efficient solution methods for linear programs appeared in the 1930s and 1940s,
starting with the simplex method of Kantorovich and Dantzig.  As the simplex
method began to prove its worth, linear programming grew into a technique of
enormous practical importance. Operations research, communication and
production systems began to make heavy use of linear programs.  

The simplex algorithm makes use of the following result, which is proved in
Theorem~2.7 of \cite{bertsimas1997introduction}.

\begin{theorem}\label{t:lpeoe}
    If the linear program~\eqref{eq:linprog} has an optimal solution, then it
    also has an optimal solution that is an extreme point of the feasible set.
\end{theorem}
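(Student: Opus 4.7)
The plan is to start from any optimal solution $x^*$ and, if it is not already an extreme point, construct from it a new optimal solution with strictly fewer positive components. Iterating, since the number of positive components is a nonnegative integer bounded by $n$, this process terminates at an optimal solution that turns out to be a basic feasible solution, and hence an extreme point of $F$.

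To execute the plan, let $x^*$ be any optimal solution of~\eqref{eq:linprog}, let $J := \{j \in \natset{n} : x^*_j > 0\}$ be its support, and let $A_J$ denote the submatrix of $A$ formed by the columns indexed by $J$. The first case is that the columns in $A_J$ are linearly independent; I would then appeal to the standard equivalence between basic feasible solutions and extreme points of $F$ (Theorem~2.3 of \cite{bertsimas1997introduction}) to conclude that $x^*$ is itself an extreme point, and we are done. The second case is that the columns of $A_J$ are linearly dependent, so there exists a nonzero $w \in \RR^{|J|}$ with $A_J w = 0$. Extending $w$ by zeros on $\natset{n} \setminus J$ yields a nonzero $d \in \RR^n$ with $Ad = 0$ and $\mathrm{supp}(d) \subset J$.

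Next I would use $d$ to build the perturbation. Since $x^*_j > 0$ for all $j \in J$, there is an $\epsilon_0 > 0$ such that $x^* + \epsilon d \geq 0$ and $x^* - \epsilon d \geq 0$ for all $|\epsilon| \leq \epsilon_0$; together with $A(x^* \pm \epsilon d) = b$, this means both perturbations lie in $F$. Optimality of $x^*$ then forces $c^\top d = 0$: otherwise one sign of $\epsilon$ would strictly decrease $c^\top x^*$, contradicting primal optimality. Having ensured that moving along $\pm d$ preserves the objective value, I would then push $\epsilon$ along the sign of $-d$ (or $d$, whichever direction has some negative component) until the first index $j \in J$ satisfies $x^*_j + \epsilon d_j = 0$. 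The resulting point $\tilde x := x^* + \epsilon d$ is feasible, optimal, and has $|\{j : \tilde x_j > 0\}| < |J|$.

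The main obstacles are essentially bookkeeping rather than deep, but two points deserve care. First, I must verify that the step size $\epsilon$ used to zero out a component is finite and produces a nonnegative vector; this requires choosing the correct sign of $d$ and taking $\epsilon = \min_{j \in J,\, d_j \neq 0, \mathrm{sgn}(d_j) = \pm} x^*_j/|d_j|$, which is well-defined because $d$ is nonzero on $J$. Second, the proof hinges on the bridge between ``basic feasible solution'' (linearly independent support columns) and ``extreme point of $F$''; I would either cite this equivalence as a known fact for polyhedra in standard equality form or, if the textbook prefers self-containedness, prove it directly using Exercise~\ref{ex:conpoly} together with the observation that any convex decomposition $\tilde x = \lambda y + (1-\lambda) z$ with $y, z \in F$ forces $y_j = z_j = 0$ off $J$ and then $y = z = \tilde x$ by linear independence of $A_J$.
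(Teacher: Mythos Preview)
Your argument is correct and is precisely the standard reduction-of-support proof: perturb along a null direction $d$ supported on the active indices, use optimality to force $c^\top d = 0$, then ratio-test to kill a positive component, and iterate; the terminal point is a basic feasible solution, hence an extreme point via the equivalence you cite. The paper itself does not give a proof of this theorem at all---it simply states the result and refers the reader to Theorem~2.7 of \cite{bertsimas1997introduction}---so there is nothing in-paper to compare against; your proposal is essentially the argument found at that reference.
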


(An extreme point\index{Extreme point} of a polyhedron was defined in \S\ref{sss:cc}.  The
feasible set was shown to be a polyhedron in Exercise~\ref{ex:conpoly}.)

The simplex algorithm makes use of Theorem~\ref{t:lpeoe} by walking along
edges of the polyhedron that forms the feasible set, from one extreme point to
another, seeking at each step a new extreme point (which coincides with a
vertex of the polyhedron) that strictly lowers the value of the objective
function.  Details can be found in \cite{bertsimas1997introduction} and
\cite{matousek2007understanding}.

\section{Optimal Transport}\label{s:ot}

Next we turn to optimal flows across networks.  One simple---but
computationally nontrivial---example of a network flow problem is the linear
assignment problem we analyzed in \S\ref{ss:wahdp}. There, the vertices are
either workers or jobs and the edges are assignments, chosen optimally to
minimize aggregate cost.  More general network flow problems extend these
ideas, allowing endogenous formation of networks in more sophisticated
settings. The general structure is that vertices are given, while edges and
weights are chosen to optimize some criterion. There are natural applications
of these ideas in trade, transportation and communication, as well as less
obvious applications within economics, finance, statistics and machine
learning.

In our study of network flows, we begin with the optimal transport problem,
which is the most important special case of the general network flow problem.
(In fact, as we show in \S\ref{sss:redmfot}, there exists a technique by which
the general network flow problem can always be reduced to an optimal transport
problem.)

\subsection{The Monge-Kantorovich Problem}\label{ss:ot} 

Optimal transport is a classic problem dating back to the work of Gaspard
Monge (1746--1818), who studied, among many other things, the transport of earth for 
construction of forts. This simple-sounding problem---how to transport a pile
of given shape into a new pile of given shape at minimum cost---can,
after normalization, be identified with the problem of least cost
transformation of one distribution into another distribution.  
Figure~\ref{f:ot_figs_1} gives a visualization of transforming one
distribution into another in one dimension (although the cost function is not
specified).

\begin{figure}
   \centering
   \scalebox{0.5}{\includegraphics[trim = 0mm 0mm 0mm 0mm, clip]{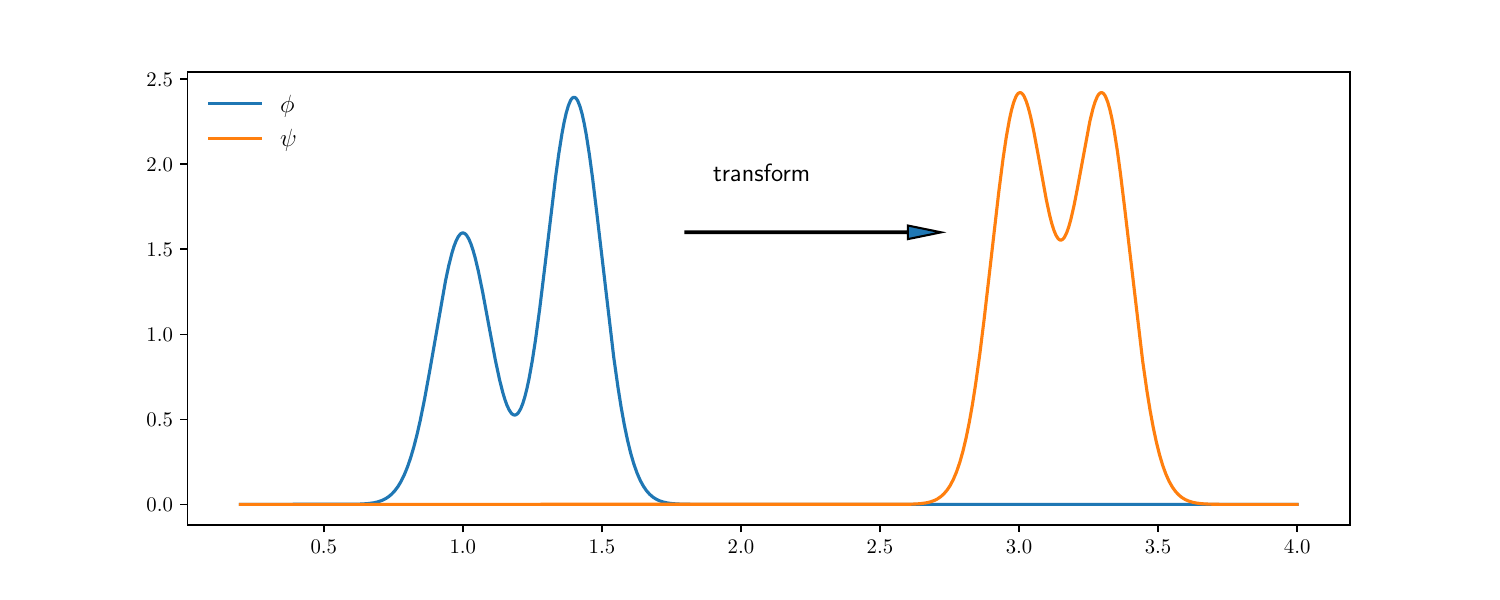}}
   \caption{\label{f:ot_figs_1} Transforming distribution $\phi$ into distribution $\psi$}
\end{figure}

It turns out that, by varying the notion of cost,
this transportation problem provides a highly flexible method for comparing
the distance between two distributions, with the essential
idea being that distributions are regarded as ``close'' if one can be
transformed into the other at low cost.  The resulting distance metric finds
wide-ranging and important applications in statistics, machine
learning and various branches of applied mathematics.\footnote{For example, in
    image processing, two images might be regarded as close if the cost of
    transforming one into the other by altering individual pixels is small.
    Even now, in image processing and some branches of machine learning, one
    of the metrics over the set of probability distributions arising from
    optimal transport methods is referred to as ``earth mover's distance'' in
honor of the work of Monge.}

In economics, optimal transport has important applications in transportation
and trade networks, as well as in matching problems, econometrics, finance and
so on (see, e.g., \cite{galichon2018optimal}).  As such, it is not surprising that economists have
contributed a great deal to the optimal transport problem, with deep and
fundamental work being accomplished by the great Russian mathematical
economist Leonid Kantorovich (1912--1986), as well as the Dutch economist
Tjalling Koopmans (1910--1985), who shared the Nobel Prize with Kantorovich in
1975 for their work on linear programming and optimal transport.\footnote{The
    optimal transport problem continues to attract the interest of many
    brilliant mathematicians and economists, with two recent Fields Medals
    being awarded for work on optimal transport.  The first was awarded to
    Cedric Villani in 2010, while the second was to Alessio Figalli in 2018.
    Note that the Field Medal is only awarded every four years (unlike the
Nobel Prize, which is annual).}

We start our discussion with a straightforward presentation of the
mathematics.  Then, in \S\ref{ss:otce}, we will connect the mathematics to
economic problems, and show the deep connections between transport, duality,
complementary slackness and competitive equilibria.

\subsubsection{Monge's Formulation}\label{sss:monge}

We start with the classical problem of Monge, which is simple to explain. We
take as given two finite sets $\Xsf$ and $\Ysf$, paired with distributions
$\phi \in \dD(\Xsf)$ and $\psi \in \dD(\Ysf)$.  Elements of $\Xsf$ and $\Ysf$
are called \navy{locations}.  To avoid tedious side cases, we assume
throughout that $\phi$ and $\psi$ are strictly positive on their domains. In
addition, we are supplied with a cost function $c \colon \Xsf \times \Ysf \to
\RR_+$.  Our goal is to ``transport'' $\phi$ into $\psi$ at minimum cost.
That is, we seek to solve
\begin{equation}\label{eq:mongeob}
    \min_T \sum_x c(x, T(x))
\end{equation}
over the set of all maps $T$ from $\Xsf$ onto $\Ysf$ satisfying
\begin{equation}\label{eq:mongecon}
    \sum_x \phi(x)  \1\{T(x) = y\} = \psi(y)
    \quad \text{for all } y \in \Ysf.
\end{equation}
The constraint says $T$ must be such that, for each target location $y$, the
sum of all probability mass sent to $y$ is equal to the specified quantity
$\psi(y)$.  The symbol $\sum_x$ is short for $\sum_{x \in \Xsf}$. In this
context, $T$ is often called a \navy{Monge map}\index{Monge map}.

\begin{Exercise}
    While we required $T$ to map $\Xsf$ \emph{onto} $\Ysf$, meaning that every
    $y \in \Ysf$ has some preimage, this condition is already implied by
    \eqref{eq:mongecon}.  Explain why.
\end{Exercise}

\begin{Answer}
    Fix $y \in \Ysf$.
    The distributions $\phi$ and $\psi$ are assumed to be everywhere positive, so
    if \eqref{eq:mongecon} holds for $T$, then $\phi(x)  \1\{T(x) = y\} > 0$
    for some location $x$.  Hence, there exists an $x \in \Xsf$ such that $T(x) = y$.
\end{Answer}

The problem is easily illustrated in the current discrete setting.
Figure~\ref{f:optimal_transport_unsplitting} gives a visualization when
locations are enumerated as $\Xsf = \{x_1, \ldots, x_7\}$ and $\Ysf = \{y_1,
y_2, y_3, y_4\}$, with both $\Xsf$ and $\Ysf$ being subsets of $\RR^2$.  For
simplicity, $\phi(x_i)$ is written as $\phi_i$ and similarly for $\psi(y_j)$.
Vertex size is proportional to probability mass assigned to the vertex.  The
edges represent one feasible Monge map.  

\begin{figure*}
   \begin{center}
    \begin{tikzpicture}
  \node[circle, draw, scale=1.5588, red] (1) at (1, 1) {$\psi_1$};
  \node[circle, draw, scale=1.2726, red] (2) at (3, 3) {$\psi_2$};
  \node[circle, draw, scale=1.5588, red] (3) at (5, 0) {$\psi_3$};
  \node[circle, draw, scale=1.2726, red] (4) at (7, 2) {$\psi_4$};
  \node[circle, draw, scale=1.2726, blue] (01) at (1.5, 2.5) {${\varphi}_1$};
  \node[circle, draw, scale=0.9, blue] (02) at (0, 0) {${\varphi}_2$};
  \node[circle, draw, scale=1.2726, blue] (03) at (3.2, 0) {${\varphi}_3$};
  \node[circle, draw, scale=0.9, blue] (04) at (6.5, 0.5) {${\varphi}_4$};
  \node[circle, draw, scale=0.9, blue] (05) at (8, 3) {${\varphi}_5$};
  \node[circle, draw, scale=0.9, blue] (06) at (6, 3) {${\varphi}_6$};
  \node[circle, draw, scale=1.2726, blue] (07) at (4.2, 1.8) {${\varphi}_7$};

  \draw[->, thick, blue]
  (01) edge [bend left=0, left, -{Stealth[scale=1]}, line width=2pt] node {}(1)
  (02) edge [bend left=0, below, -{Stealth[scale=1]}, line width=1pt] node {} (1)
  (03) edge [bend left=0, below, -{Stealth[scale=1]}, line width=2pt] node {} (3)
  (04) edge [bend left=0, below, -{Stealth[scale=1]}, line width=1pt] node {} (3)
  (05) edge [bend left=0, below, -{Stealth[scale=1]}, line width=1pt] node {} (4)
  (06) edge [bend left=0, below, -{Stealth[scale=1]}, line width=1pt] node {} (4)
  (07) edge [bend left=0, below, -{Stealth[scale=1]}, line width=2pt] node {} (2);
\end{tikzpicture}
    \caption{\label{f:optimal_transport_unsplitting} A Monge map transporting $\phi$ to $\psi$}
   \end{center}
\end{figure*}
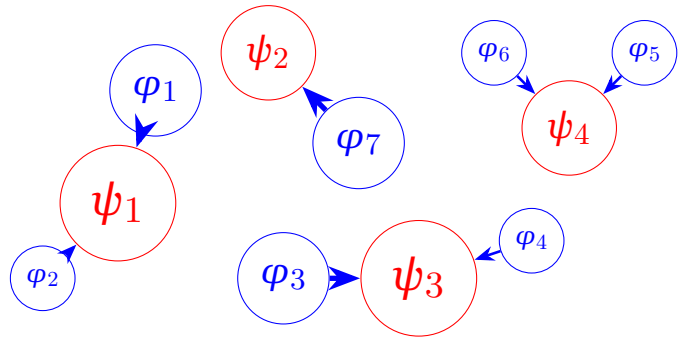

Discreteness and lack of convexity in the constraint \eqref{eq:mongecon} imply
that the Monge problem is, in general, hard to solve.  Truly fundamental
progress had to wait until Kantorovich showed how convexification can
greatly simplify the problem. The convexification process requires shifting
the problem to a higher dimensional space, but the cost of higher dimensions
is outweighed by the regularization provided by convexity and lack of
discreteness. We study the Kantorovich formulation in \S\ref{sss:mkp}.

\begin{Exercise}
    Another issue with the Monge formulation of optimal transport is that
    existence of a solution can easily fail.  Provide an example where no
    Monge map exists in the setting where $\Xsf$ and $\Ysf$ are finite.
\end{Exercise}

\begin{Answer}
    One scenario where no Monge map exists is when $|\Ysf| > |\Xsf|$.  For
    example, suppose $\Xsf = \{x_1\}$ and $\Ysf = \{y_1, y_2\}$, with
    $\phi(x_1) = 1$ and $\psi(y_i) \in (0,1)$ for $i=1,2$.  Either $x_1$ is mapped
    to $y_1$ or it is mapped to $y_2$.  In either case, the Monge map condition
    \eqref{eq:mongecon} fails for both $y_1$ and $y_2$.
\end{Answer}

\subsubsection{Assignment as Optimal Transport}\label{sss:asla}

The linear assignment problem studied in \S\ref{ss:wahdp}, with cost $c(i, j)$ of
training worker $i$ for job $j$, is a special case of optimal
transport.  All we have to do is set $\Xsf = \Ysf = \natset{n}$ and take
$\phi$ and $\psi$ to be discrete uniform distributions on $\natset{n}$.

\begin{Exercise}
    Show that, in this setting, $T$ is a Monge map if and only if $T$ is a
    bijection from $\natset{n}$ to itself.
\end{Exercise}

\begin{Answer}
    Let $T$ be a self-map on $\natset{n}$.
    If $T$ is a Monge map, then, by the definition in \eqref{eq:mongecon}, we must have
    $\sum_i (1/n)  \1\{T(i) = j\} = 1/n$ for all $j$.
    If $T$ is not a bijection, then
    there exist indices $i, k, j$ such that $i\not= k$ and $T(i)=T(k)=j$.  This clearly
    violates the previous equality.

    Conversely, if $T$ is a bijection on $\natset{n}$, then $T$ 
    satisfies $\sum_i (1/n)  \1\{T(i) = j\} = 1/n$ for all $j$.  Hence $T$ is
    a Monge map.
\end{Answer}

Since $T$ must be a bijection on $\natset{n}$, which is also a permutation of
$\natset{n}$, the objective of the optimal transport problem under the current
configuration is 
\begin{equation}\label{eq:matchcost2}
    \min_{T \in \pP} \sum_{i=1}^n c(i, T(i)),
\end{equation}
where $\pP$ is the set of all permutations of $\natset{n}$.
This is the same optimization problem as the linear assignment problem in
\eqref{eq:matchcost}.

\subsubsection{Kantorovich's Relaxation of the Monge Problem}\label{sss:mkp}

The basic idea in Kantorovich's relaxation of the Monge problem is to allow
mass located at arbitrary $x \in \Xsf$ to be mapped to multiple locations in
$\Ysf$, rather than just one.  This means that we are no longer seeking a
function $T$, since, by definition, a function can map a given point to only
one image.  Instead, we seek a ``transport plan'' that sends some
fraction $\pi(x, y)$ of the mass at $x$ to $y$.  The plan is constrained by
the requirement that, for all $x$ and $y$, 
\begin{enumerate}
    \item total probability mass sent to $y$ is $\psi(y)$ and
    \item total probability mass sent from $x$ is $\phi(x)$.
\end{enumerate}
These constraints on this transport plan mean that it takes the form of a
``coupling,'' which we define below.

Figure~\ref{f:optimal_transport_splitting_experiment} illustrates a feasible transport
plan in the discrete setting.  As with
Figure~\ref{f:optimal_transport_unsplitting}, $\phi(x_i)$ is written as
$\phi_i$ and similarly for $\psi(y_j)$, while vertex size is proportional to
probability mass.  Unlike the Monge setting of
Figure~\ref{f:optimal_transport_unsplitting}, the mass at each vertex $\phi_i$
can be shared across multiple $\psi_j$, as long as the constraints are
respected.

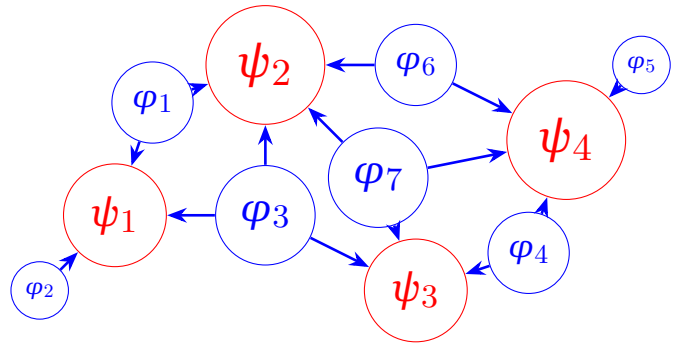
\begin{figure*}
   \begin{center}
    \begin{tikzpicture}
  \node[circle, draw, scale=1.3856, red] (1) at (1, 1) {$\psi_1$};
  \node[circle, draw, scale=1.6, red] (2) at (3, 3) {$\psi_2$};
  \node[circle, draw, scale=1.3856, red] (3) at (5, 0) {$\psi_3$};
  \node[circle, draw, scale=1.6, red] (4) at (7, 2) {$\psi_4$};
  \node[circle, draw, scale=1.1312, blue] (01) at (1.5, 2.5) {${\varphi}_1$};
  \node[circle, draw, scale=0.8, blue] (02) at (0, 0) {${\varphi}_2$};
  \node[circle, draw, scale=1.3856, blue] (03) at (3, 1) {${\varphi}_3$};
  \node[circle, draw, scale=1.1312, blue] (04) at (6.5, 0.5) {${\varphi}_4$};
  \node[circle, draw, scale=0.8, blue] (05) at (8, 3) {${\varphi}_5$};
  \node[circle, draw, scale=1.1312, blue] (06) at (5, 3) {${\varphi}_6$};
  \node[circle, draw, scale=1.3856, blue] (07) at (4.5, 1.5) {${\varphi}_7$};

  \draw[->, thick, blue]
  (01) edge [bend left=0, left, -{Stealth[scale=1]}, line width=1pt] node {}(1)
  (01) edge [bend left=0, below, -{Stealth[scale=1]}, line width=1pt] node {} (2)
  (02) edge [bend left=0, below, -{Stealth[scale=1]}, line width=1pt] node {} (1)
  (03) edge [bend left=0, below, -{Stealth[scale=1]}, line width=1pt] node {} (1)
  (03) edge [bend left=0, below, -{Stealth[scale=1]}, line width=1pt] node {} (2)
  (03) edge [bend left=0, below, -{Stealth[scale=1]}, line width=1pt] node {} (3)
  (04) edge [bend left=0, below, -{Stealth[scale=1]}, line width=1pt] node {} (3)
  (04) edge [bend left=0, below, -{Stealth[scale=1]}, line width=1pt] node {} (4)
  (05) edge [bend left=0, below, -{Stealth[scale=1]}, line width=1pt] node {} (4)
  (06) edge [bend left=0, below, -{Stealth[scale=1]}, line width=1pt] node {} (2)
  (06) edge [bend left=0, below, -{Stealth[scale=1]}, line width=1pt] node {} (4)
  (07) edge [bend left=0, below, -{Stealth[scale=1]}, line width=1pt] node {} (2)
  (07) edge [bend left=0, below, -{Stealth[scale=1]}, line width=1pt] node {} (3)
  (07) edge [bend left=0, below, -{Stealth[scale=1]}, line width=1pt] node {} (4);
\end{tikzpicture}
    \caption{\label{f:optimal_transport_splitting_experiment} Kantorovich relaxation of the Monge problem}
   \end{center}
\end{figure*}


Let's write the constraints more carefully. We recall that, in probability theory, a coupling
is a joint distribution with specific marginals.  More precisely, given $\phi$
in  $\dD(\Xsf)$ and $\psi$ in $\dD(\Ysf)$, a \navy{coupling}\index{coupling}
of $(\phi, \psi)$ is an element $\pi$ of $\dD(\Xsf \times \Ysf)$ with
marginals $\phi$ and $\psi$.  This restriction on marginals means that
\begin{align}
    \label{eq:romar0}
    \sum_y  \pi(x, y) 
    & = \phi(x)
    \quad \text{for all } x \in \Xsf 
    \quad \text{ and}
    \\
    \label{eq:romar1}
    \sum_x  \pi(x, y) 
    & = \psi(y)
    \quad \text{for all } y \in \Ysf
\end{align}
The constraints in \eqref{eq:romar0}--\eqref{eq:romar1} require that
\begin{enumerate}
    \item for any $x \in \Xsf$, the total amount of probability mass flowing
        out of $x$ is $\phi(x)$ and
    \item for any $y \in \Ysf$, the total amount of probability mass flowing
        into $y$ is $\psi(y)$.
\end{enumerate}

In the present setting, a coupling is also called a \navy{transport
plan}\index{Transport plan}.

\begin{Exercise}
    The constraints \eqref{eq:romar0}--\eqref{eq:romar1}, which define a
    coupling $\pi$ of $\phi$ and $\psi$, generalize the Monge constraint in
    \eqref{eq:mongecon}.  To see this, let $T$ be a map satisfying
    \eqref{eq:mongecon} and set 
    \begin{equation*}
        \pi(x, y) = \phi(x) \1\{T(x)=y\} 
    \end{equation*}
    on $\Xsf \times \Ysf$, so that $\pi$ is the joint distribution that puts
    all mass on the image of $T$.  Prove that
    \eqref{eq:romar0}--\eqref{eq:romar1} both hold.
\end{Exercise}

Let $\Pi(\phi, \psi)$ be the set of all couplings of
$\psi$ and $\phi$.    Taking $\phi, \psi$ and the cost function $c$ as given,
the general \navy{Monge--Kantorovich problem}, also called the \navy{optimal
transport problem}, is to solve
\begin{equation}\label{eq:ot0}
    P :=
    \min_\pi \inner{c, \pi}_F
    \quad \text{subject to} \quad
    \pi \in \Pi(\phi, \psi).
\end{equation}
where
\begin{equation*}
    \inner{c, \pi}_F := \sum_x \sum_y c(x,y) \pi(x, y)
\end{equation*}
is the \navy{Frobenius inner product}\index{Inner product!Frobenius} of $c$
and $\pi$, treated as $|\Xsf| \times |\Ysf|$ matrices.
The sum in $\inner{c, \pi}_F$ measures the total cost of transporting $\phi$
into $\psi$ under the plan $\pi$.  There is linearity embedded in this cost
formulation, since doubling the amount sent from $x$ to $y$ scales the
associated cost at rate $c(x, y)$.

We call any $\pi$ solving~\eqref{eq:ot0} an \navy{optimal plan}\index{Optimal
plan}.  Since we are maximizing over a finite set, at least one such plan
exists.

\begin{remark}
    The problem \eqref{eq:ot0} is sometimes expressed in terms of random
    variables, as follows.  In this setting, a coupling $\pi$ in
    $\Pi(\phi, \psi)$ is identified with a pair of random elements $(X, Y)$ such that 
    $X \eqdist \phi$ and $Y \eqdist \psi$.  
    We can then write
    \begin{equation*}
        P = \min_{(X,Y)} \EE \, c(X, Y)
        \quad \text{subject to} \quad
        (X, Y) \in \dD(\Xsf \times \Ysf)
        \text{ with } X \eqdist \phi \text{ and } Y \eqdist \psi.
    \end{equation*}

\end{remark}

\begin{Exercise}\label{ex:otconvexcon}
    One of the most important features of the Kantorovich relaxation is that,
    for given $\phi$ and $\psi$, the constraint set is convex.  
    To verify this, we let $n = |\Xsf|$ and $m=|\Ysf|$, associate each $x \in \Xsf$ with some $i
    \in \natset{n}$, associate each $y \in \Ysf$ with some $j \in \natset{m}$, and
    treat $c$ and $\pi$ as $n \times m$ matrices, with typical elements $c_{ij}$
    and $\pi_{ij}$.  The
    constraints are 
    \begin{equation}\label{eq:conpkv}
        \pi \1_m = \phi  
        \quad \text{and} \quad
        \pi^\top \1_n = \psi  ,
    \end{equation}
    where $\1_k$ is a $k \times 1$ vector of ones.  With this notation, prove
    that the set $\Pi(\phi, \psi)$ of $\pi \in \matset{n}{m}$ satisfying the
            constraints is convex, in the sense that 
            \begin{equation*}
                \pi, \hat \pi \in \Pi(\phi, \psi)
                \text{ and }
                \alpha \in [0, 1]
                \; \implies \;
                \alpha \pi + (1 - \alpha) \hat \pi \in \Pi(\phi, \psi).
            \end{equation*}
\end{Exercise}

\subsubsection{Optimal Transport as a Linear Program}\label{sss:otplp}

With some relatively simple manipulations, the general optimal transport
problem can be mapped into a standard equality form\index{Standard form}
linear program.  This provides two significant benefits.  First, we can apply
duality theory, which yields important insights.  Second, on the computational
side, we can use linear program solvers to calculate optimal plans.

To map the optimal transport problem into a linear program, we need to convert
matrices into vectors.  Will use the $\vecop$ operator, which
takes an arbitrary $A \in \matset{n}{m}$ and maps it to a vector in $\RR^{nm}$
by stacking its columns vertically.  For example,
\begin{equation*}
    \vecop
    \begin{pmatrix}
        a_{11} & a_{12} 
        \\
        a_{21} & a_{22} 
    \end{pmatrix}
    = 
    \begin{pmatrix}
        a_{11} \\
        a_{21} \\
        a_{12} \\
        a_{22}
    \end{pmatrix}
    .
\end{equation*}

In this section we adopt the notational conventions in
Exercise~\ref{ex:otconvexcon}. The objective function $\inner{c, \pi}_F$ for the
optimal transport problem can now be expressed as $\vecop(c)^\top \vecop(\pi)$. 

To rewrite the constraints in terms of $\vecop(\pi)$, we use the Kronecker
product, which is denoted by $\otimes$ and defined as follows.  Suppose $A$ is
an $m \times s$ matrix with entries $(a_{ij})$ and that $B$ is an $n \times t$
matrix.  The \navy{Kronecker product}\index{Kronecker product} $A \otimes B$
of $A$ and $B$ is the $mn \times st$ matrix defined, in block matrix form, by
\begin{equation*}
    A \otimes B = 
    \begin{pmatrix}
    a_{11}B & a_{12}B & \dots & a_{1s}B \\ 
    a_{21}B & a_{22}B & \dots & a_{2s}B \\ 
      &   & \vdots &   \\ 
    a_{m1}B & a_{m2}B & \dots & a_{ms}B \\ 
    \end{pmatrix}.
\end{equation*}
It can be shown that Kronecker products and the $\vecop$ operator are
connected by the following relationship:  for conformable matrices $A$, $B$
and $M$, we have 
\begin{equation}\label{eq:krovec}
    \vecop(A M B) = (B^\top \otimes A) \vecop(M).
\end{equation}
Using \eqref{eq:krovec} and the symbol $I_k$ for the $k \times k$ identity
matrix, we can rewrite the first constraint in \eqref{eq:conpkv} as
\begin{equation}\label{eq:phikv}
    \phi 
    = I_n \pi \1_m 
    = \vecop(I_n \pi \1_m )
    = (\1_m^\top \otimes I_n) \vecop( \pi ).
\end{equation}

\begin{Exercise}
    Show that the second constraint in \eqref{eq:conpkv} can be expressed as 
    \begin{equation}\label{eq:psikv}
        \psi 
        = (I_m \otimes \1_n^\top) \vecop( \pi ).
    \end{equation}
\end{Exercise}

\begin{Answer}
    Applying \eqref{eq:krovec} on page~\pageref{eq:krovec}, we have
    \begin{equation*}
        \psi 
        = \vecop(\psi^\top)
        = \vecop(\1_n^\top \pi I_m  )
        = (I_m \otimes \1_n^\top) \vecop( \pi ).
    \end{equation*}
\end{Answer}

Now, using block matrix notation and setting
\begin{equation*}
    A := 
    \begin{pmatrix}
        \1_m^\top \otimes I_n \\
        I_m \otimes \1_n^\top
    \end{pmatrix}
    \quad \text{and} \quad
    b :=
    \begin{pmatrix}
        \phi \\
        \psi
    \end{pmatrix},
\end{equation*}
the optimal transport problem can be expressed as the standard equality form
linear program 
\begin{equation}\label{eq:otplp}
    \min_x \vecop(c)^\top x
    \quad
    \text{ over } x \in \RR^{nm}_+
    \text{ such that }
    A x = b.
\end{equation}
Finally, for a given solution $x$, the transport plan is recovered by
inverting $x = vec(\pi)$.

\subsubsection{Implementation}\label{sss:iotpl}

Listing \ref{l:otshm} is a function that implements the above steps, given flat (one-dimensional)
arrays \texttt{phi} and \texttt{psi} representing the distributions over the
source and target locations, plus a two-dimensional array \texttt{c}
representing transport costs.  (The \texttt{method} argument
\textcolor{red}{\texttt{highs-ipm}} tells \texttt{linprog} to use a particular
interior point method, details of which can be found in the \texttt{linprog}
documentation. Simplex and other methods give similar results.) 

\begin{listing}
\begin{minted}{python}
import numpy as np
from scipy.optimize import linprog

def ot_solver(phi, psi, c, method='highs-ipm'):
    """
    Solve the OT problem associated with distributions phi, psi 
    and cost matrix c.

    Parameters
    ----------
    phi : 1-D array
        Distribution over the source locations.
    psi : 1-D array
        Distribution over the target locations.
    c : 2-D array
        Cost matrix.
    """
    n, m = len(phi), len(psi)

    # vectorize c
    c_vec = c.reshape((m * n, 1), order='F') 

    # Construct A and b
    A1 = np.kron(np.ones((1, m)), np.identity(n))
    A2 = np.kron(np.identity(m), np.ones((1, n)))
    A  = np.vstack((A1, A2))
    b  = np.hstack((phi, psi))

    # Call solver
    res = linprog(c_vec, A_eq=A, b_eq=b, method=method)

    # Invert the vec operation to get the solution as a matrix
    pi = res.x.reshape((n, m), order='F')
    return pi
\end{minted}
\caption{\label{l:otshm} Function to solve a transport problem via linear programming}
\end{listing}

Notice that in Listing~\ref{l:otshm}, the reshape order is specified to
\texttt{F}.  This tells NumPy to reshape with Fortran \navy{column-major} order,
which coincides with the definition of the $\vecop$ operator described in
\S\ref{sss:otplp}.  (The Python vectorize operation defaults to
\navy{row-major} order, which concatenates rows rather than stacking columns.
In contrast, Julia uses column-major by default.)

Let's call this function for the very simple problem
\begin{equation*}
    \phi =
    \begin{pmatrix}
        0.5 \\
        0.5
    \end{pmatrix},
    \quad
    \psi =
    \begin{pmatrix}
        1 \\
        0
    \end{pmatrix}
    \quad \text{and} \quad
    c =
    \begin{pmatrix}
        1 & 1 \\
        1 & 1
    \end{pmatrix}
\end{equation*}
With these primitives, all mass from $\phi_1$ and $\phi_2$ should be sent to
$\psi_1$. To implement this problem we set
\begin{minted}{python}
phi = np.array((0.5, 0.5))
psi = np.array((1, 0))
c = np.ones((2, 2))
\end{minted}
and then call \texttt{ot\_solver} via
\begin{minted}{python}
ot_solver(phi, psi, c)
\end{minted}
The output is as expected:
\begin{minted}{python}
array([[0.5, 0. ],
       [0.5, 0. ]])    
\end{minted}

\subsubsection{Python Optimal Transport}\label{sss:imot}

In the case of Python, the steps above have been automated by the Python
Optimal Transport package, due to \cite{flamary2021pot}.  For the simple
problem from \S\ref{sss:iotpl} we run
\begin{minted}{python}
import ot
ot.emd(phi, psi, c)   # Use simplex method via the emd solver
\end{minted}
The output is again equal to
\begin{minted}{python}
array([[0.5, 0. ],
       [0.5, 0. ]])    
\end{minted}

Figure~\ref{f:ot_large_scale_1} shows an example of an optimal transport
problem solved using the Python Optimal Transport package.  The interpretation
is similar to Figure~\ref{f:optimal_transport_splitting_experiment}, although
the number of vertices is larger. In addition, the edges show the optimal
transport configuration, in the sense that $\pi^*$, the optimal transport
plan, is treated as the adjacency matrix for the graph.  (The figure shows the
unweighted graph, with an arrow drawn from $\phi_i$ to $\psi_j$ whenever
$\pi^*_{ij} > 0$.) The optimal transport plan is obtained by converting the
transport problem into a linear program, as just described, and applying the
simplex method. Although there are 32 nodes of each type, the problem is
solved by the simplex routine in less than one millisecond.

\begin{figure}
   \centering
   \scalebox{0.92}{\includegraphics[trim = 0mm 30mm 0mm 40mm, clip]{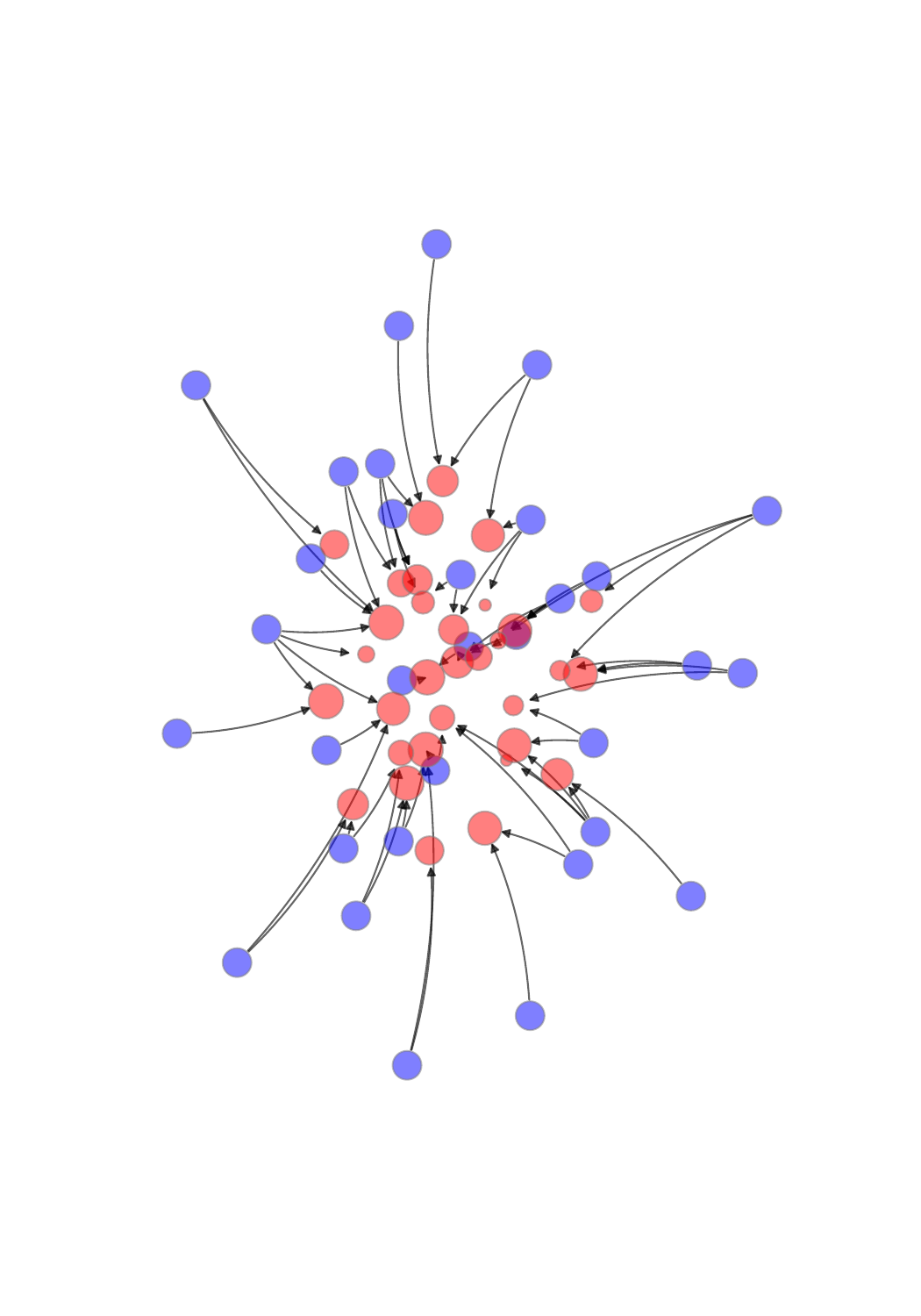}}
   \caption{\label{f:ot_large_scale_1} An optimal transport problem solved by linear programming}
\end{figure}

\subsubsection{Kantorovich Relaxation and Linear Assignment}

We showed in \S\ref{sss:asla} that the linear assignment problem studied in
\S\ref{ss:wahdp} is a special case of optimal transport, where $\phi$ and
$\psi$ become discrete uniform distributions on $\natset{n}$.  Moreover, as we
have just seen, Kantorovich's relaxation method allows us to apply linear
programming. This leads to fast solutions.

In the discussion in \S\ref{ss:wahdp} we used $n=40$.  Let's start here with
$n=4$, to illustrate the method, and then try with $n=40$.
The matrix $c(i,j)$ of costs, which is the only other primitive, will be
generated randomly as an independent array of uniform random variables: Then we
apply the Python Optimal Transport (POT) library, as in \S\ref{sss:imot}.

Here is our set up:
\begin{minted}{python}
import numpy as np
import ot

n = 4
phi = np.ones(n) 
psi = np.ones(n)
\end{minted}

We have broken the rule that $\phi$ and $\psi$ should sum to one.  This could
be fixed easily by using \texttt{np.ones(n)/n} instead of \texttt{np.ones(n)}, but the POT library
does not care (as long as \texttt{np.sum(phi)} equals \texttt{np.sum(psi)})
and, moreover, the idea of putting unit mass everywhere is natural, since each
element of $\phi$ represents one worker, and each element of $\psi$ represents
one job.

Now we build the cost matrix:
\begin{minted}{python}
c = np.random.uniform(size=(n, n))
\end{minted}
The output is 
\begin{minted}{python}
array([[0.03841, 0.32896, 0.55989, 0.41682],
       [0.91527, 0.24566, 0.26022, 0.64510],
       [0.96275, 0.44089, 0.79274, 0.93065],
       [0.40454, 0.87307, 0.43555, 0.54903]])
\end{minted}
(For example, the cost of retraining worker 1 for job 2 is 0.32896.) Finally, we call the solver:
\begin{minted}{python}
ot.emd(phi, psi, c)    
\end{minted}
The output is
\begin{minted}{python}
array([[1., 0., 0., 0.],
       [0., 0., 1., 0.],
       [0., 1., 0., 0.],
       [0., 0., 0., 1.]])    
\end{minted}

This is a permutation matrix, which provides another way to express a
permutation of $\natset{n}$.  The first row tells us that worker 1 is assigned
to job 1, the second tells us that worker 2 is assigned to job 3, and so on.

If we now set $n=40$ and rerun the code, the line \texttt{ot.emd(phi, psi,
c)}, which calls the simplex-based solver, runs in less than 1 millisecond on
a mid-range laptop.
This is a remarkable improvement on the $2.5 \times 10^{30}$ year estimate for
the brute force solver we obtained in \S\ref{ss:wahdp}.

\subsubsection{Tight Relaxation}

Notice that the solution we obtained for the linear assignment problem using
the simplex method does not split mass, as permitted by the Kantorovich
relaxation.  For example, we do not send half of a worker to one job and the
other half to another.  This is convenient but why does it hold?

While we omit the details, the basic idea is that a general Kantorovich
transport plan is a bistochastic matrix, and all such matrices can be
formed as convex combinations of permutation matrices. (This is called
Birkhoff's Theorem.) In other words, the permutation matrices are extreme
points of the set of bistochastic matrices.   Moreover,
Theorem~\ref{t:lpeoe} tells us that any optimizer of a linear program will be
an extreme point---in this case, a permutation matrix.

\subsection{Kantorovich Duality}\label{ss:kd}\index{Kantorovich duality}

One of the greatest achievements of Kantorovich was to show that the optimal
transport problem can be connected to a dual problem, and how that dual
problem can be used to characterize solutions. This work anticipated much of
the later development of duality theory for arbitrary linear programs.  

Throughout this section, in stating the main results, we use the notation
\begin{equation*}
    \inner{f, \phi} = \sum_x f(x) \phi(x) 
    \quad \text{for }
    f \in \RR^\Xsf
    \text{ and }
    \phi \in \dD(\Xsf).
\end{equation*}
This is just the usual inner product, when we think of $f$ and $\phi$ as
vectors in $\RR^{|\Xsf|}$.   Also, given a cost function $c$ on $\Xsf \times
\Ysf$, let $\fF_c$ be all pairs $(w, p)$ in $\RR^\Xsf \times \RR^\Ysf$ such
that 
\begin{equation}\label{eq:fcdef}
    p(y) \leq c(x, y) + w(x) \text{ on } \Xsf \times \Ysf.
\end{equation}

One part of Kantorovich's duality results runs as follows.

\begin{theorem}\label{t:kdual0}
    For all $\phi \in \dD(\Xsf)$ and $\psi \in \dD(\Ysf)$, we have $P = D$,
    where
    \begin{equation}\label{eq:kdual0}
        D
        := \max_{(w, p)}
        \left\{
            \inner{p, \psi} - \inner{w,  \phi}
        \right\}
        \quad \text{subject to} \quad
        (w, p) \in \fF_c.
    \end{equation}
\end{theorem}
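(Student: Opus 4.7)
The plan is to reduce this to the linear programming duality already developed in Theorem~\ref{t:strongdual}. In \S\ref{sss:otplp} we cast the optimal transport problem as the standard equality form linear program $\min_{x \geq 0} \vecop(c)^\top x$ subject to $A x = b$, where $A = \begin{pmatrix} \1_m^\top \otimes I_n \\ I_m \otimes \1_n^\top \end{pmatrix}$ and $b = (\phi, \psi)$ written as a column. So the main work is to compute the dual of this linear program explicitly and check that it coincides with the problem in \eqref{eq:kdual0}.

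First I would write the dual variable as $\theta = (\alpha, \beta)$ with $\alpha \in \RR^n$ (associated with the $\phi$-constraints) and $\beta \in \RR^m$ (associated with the $\psi$-constraints). Then the dual objective $b^\top \theta$ becomes $\inner{\alpha, \phi} + \inner{\beta, \psi}$. Next I would unpack the dual constraint $A^\top \theta \leq \vecop(c)$: using the identities $(\1_m^\top \otimes I_n)^\top = \1_m \otimes I_n$ and $(I_m \otimes \1_n^\top)^\top = I_m \otimes \1_n$, the $(i,j)$-th entry of $A^\top \theta$ (in Fortran column-major order) is $\alpha_i + \beta_j$, so the constraint reads $\alpha_i + \beta_j \leq c_{ij}$ for all $i, j$. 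Setting $w := -\alpha$ and $p := \beta$, this becomes $p(y) - w(x) \leq c(x,y)$, i.e., $(w, p) \in \fF_c$, and the objective transforms to $\inner{p, \psi} - \inner{w, \phi}$. Hence the LP dual is exactly \eqref{eq:kdual0}.

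Finally, to invoke Theorem~\ref{t:strongdual} and conclude $P = D$, I need existence of a primal optimizer. The constraint set $\Pi(\phi,\psi)$ is nonempty (the product coupling $\pi(x,y) := \phi(x)\psi(y)$ lies in it) and compact as a closed bounded subset of $\RR^{nm}$, while $\pi \mapsto \inner{c,\pi}$ is continuous, so a minimizer exists. Theorem~\ref{t:strongdual} then yields a finite dual maximizer $\theta^* = (-w^*, p^*)$ with equal optimal values, which is exactly the claim $P = D$.

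The only subtlety is bookkeeping: making sure that the vec-ordering in \S\ref{sss:otplp} is consistent with the way the Kronecker products act on $\theta$, so that the scalar constraint really is $\alpha_i + \beta_j \leq c_{ij}$ at coordinate $(i,j)$. Once this is verified, the rest is mechanical and the result follows immediately from the strong duality theorem for linear programs.
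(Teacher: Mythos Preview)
Your proposal is correct and follows essentially the same route as the paper: cast the transport problem as the standard equality form linear program of \S\ref{sss:otplp}, compute its dual via Theorem~\ref{t:strongdual}, and identify the dual variables with $(-w,p)$ to recover \eqref{eq:kdual0}. Your explicit verification that a primal optimizer exists (via nonemptiness and compactness of $\Pi(\phi,\psi)$) is a nice touch that the paper leaves implicit, relying on the earlier remark that an optimal plan always exists.
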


Theorem~\ref{t:kdual0} can now be understood as a special case of the more
general result that strong duality holds for linear programs, which we stated
in Theorem~\ref{t:strongdual}.  In that spirit, let us verify
Theorem~\ref{t:kdual0} using Theorem~\ref{t:strongdual}, by working with the
linear programming formulation of the optimal transport problem provided in
\S\ref{sss:otplp}.  

To do this, we take that formulation, which is stated in \eqref{eq:otplp},
and apply the dual formula in \eqref{eq:linprogdual}, which yields
\begin{equation*}
    D = \max_{\theta \in \RR^{n+m}}
    \begin{pmatrix}
        \phi \\
        \psi
    \end{pmatrix}^\top  \theta
    \; \text{ subject to } \;
    \begin{pmatrix}
        \1_m^\top \otimes I_n \\
        I_m \otimes \1_n^\top
    \end{pmatrix}^\top 
    \theta \leq \vecop(c).
\end{equation*}
If we write the argument $\theta \in \RR^{n+m}$ as $(-w, p)$, so that we now
maximize over the two components $-w \in \RR^n$ and $p \in \RR^m$, as well as
transposing the constraint, we get
\begin{equation*}
    \max_{w, \, p}
    \left\{
        p^\top \psi - w^\top \phi
    \right\}
    \; \text{ subject to } \;
    p^\top (I_m \otimes \1_n^\top)
    - w^\top (\1_m^\top \otimes I_n)
    \leq \vecop(c)^\top
\end{equation*}
where $w \in \RR^n$ and $p \in \RR^m$.  By using the definition of the
Kronecker product and carefully writing out the individual terms, it can be
shown that the constraint in this expression is equivalent to requiring that
$p_j - w_i \leq c_{ij}$ for all $(i,j) \in \natset{n} \times \natset{m}$.
Recalling that $\Xsf$ has been mapped to $\natset{n}$ and 
$\Ysf$ has been mapped to $\natset{m}$, this is exactly the same restriction
as \eqref{eq:fcdef}.

At this point it is clear that \eqref{eq:kdual0} is nothing but the dual 
of the linear program formed from the optimal transport problem.
The claims in Theorem~\ref{t:kdual0} now follow directly from the strong
duality of linear programs (Theorem~\ref{t:strongdual}).

\begin{Exercise}\label{ex:wdotp}
    Show that 
    \begin{equation}\label{eq:wdotp}
        \inner{c, \pi} \geq \inner{p, \psi} - \inner{w, \phi}
        \text{ whenever }
        \pi \in \Pi(\phi, \psi)
        \text{ and } 
        (w, p) \in \fF_c.   
    \end{equation}
    Use this fact to provide a direct proof that weak duality holds for the
    optimal transport problem, in the sense that $D \leq P$.   (Here $P$ is
    defined in the primal problem \eqref{eq:ot0} and $D$ is defined in the
    dual problem \eqref{eq:kdual0}.)  
\end{Exercise}

\begin{Answer}
    Let $\pi$ be feasible for the primal problem and let $(w, p)$ be feasible for
    the dual.  By dual feasibility, we have $c(x, y) \geq p(y) - w(x)$ for all $x,
    y$, so
    \begin{equation*}
        \inner{\pi, c}
        \geq \sum_x \sum_y  \pi(x, y) [p(y) - w(x)]
        = \sum_x \sum_y  \pi(x, y) p(y) 
            - \sum_x \sum_y  \pi(x, y)w(x)
    \end{equation*}
    Rearranging and using primal feasibility now gives $\inner{\pi, c} \geq \inner{p, \psi} -
    \inner{w, \phi}$.   This proves the first claim.

    To see that $D \leq P$ follows from the last inequality, just fix $\pi \in
    \Pi(\phi, \psi)$ and maximize over all feasible dual pairs to obtain
    $\inner{\pi, c} \geq D$.   Now minimize over $\pi \in \Pi(\phi, \psi)$.
\end{Answer}

\subsection{Optimal Transport and Competitive Equilibria}\label{ss:otce}

The other major achievement of Kantorovich in the context of duality theory
for optimal transport was to connect optimality of transport plans with the
existence of functions $w, p$ from the dual problem such that a version of the
complementary slackness conditions holds.  Here we present this result, not in
the original direct formulation, but rather through the lens of a competitive
equilibrium problem.  In doing so, we illustrate some of the deep connections
between prices, decentralized equilibria and efficient allocations.

\subsubsection{The Advisor's Problem}

We imagine the following scenario.  Iron is mined at a finite collection of
sites, which we denote by $\Xsf$.  We identify an element $x \in \Xsf$ with a
point $(a, b) \in \RR^2$, which can be understood as the location of the mine
in question on a map.  At the wish of the queen, who seeks to defend the
empire from greedy rivals, this iron is converted to swords by blacksmiths.
There are a number of talented blacksmiths in this country, located at sites
given by $\Ysf$. As for $\Xsf$, each $y \in \Ysf$ indicates a point in $\RR^2$.
Henceforth, we refer to ``mine $x$'' rather than ``the mine at $x$'' and so
on.

Each month, mine $x$ produces $\phi(x)$ ounces of iron ore, while blacksmith
$y$ consumes $\psi(y)$ ounces.  We take these quantities as fixed.  We assume
that total supply equals total demand, so that $\sum_x \phi(x) = \sum_y
\psi(y)$.  For convenience, we normalize this sum to unity.  As a result,
$\phi$ and $\psi$ are elements of $\dD(\Xsf)$ and $\dD(\Ysf)$ respectively.

The cost of transporting from $x$ to $y$ is known and given by $c(x,y)$ per
ounce.  The king's chief advisor is tasked with allocating and transporting
iron from the mines to the blacksmiths, such that each blacksmith $y$ receives
their desired quantity $\psi(y)$, at minimum cost.  A small amount of thought
will convince you that the advisor's problem is a version of the optimal transport
problem \eqref{eq:ot0}.  We call this the primal problem in what follows.

Operating in the days before Kantorovich, Dantzig and the electronic
computer, the advisor employs a large team of bean counters, instructing them
to find the allocation with least cost by trying different combinations.
However, after a few days, she realizes the futility of the task.  (With
infinite divisibility, which corresponds to our mathematical model, the number
of allocations is infinite.  If we replace infinite divisibility with a finite
approximation, the scale can easily be as large as that of the matching
problem discussed in \S\ref{ss:wahdp}, with only a moderate number of mines
and blacksmiths.)

\subsubsection{The Guild's Problem}

At this point she has another idea.  There is a guild of traveling salesmen,
who buy goods in one town and sell them in another.  She seeks out the guild
master and asks him to bid for the project along the following lines.  The
guild will pay the queen's treasury $w(x)$ per ounce for iron ore at mine $x$.
It will then sell the iron at price $p(y)$ per ounce to the queen's
representative at blacksmith $y$.  The difference can be pocketed by the
guild, as long as all blacksmiths are provided with their desired quantities.
The guild master is asked to propose price functions $w$ and $p$.

The guild master sees at once that $p$ and $w$ must satisfy $p(y) - w(x) \leq
c(x, y)$ at each $x, y$, for otherwise the advisor, who is no ones fool, will
see immediately that money could be saved by organizing the transportation
herself.  Given this constraint, the guild master seeks to maximize aggregate
profits, which is $\sum_y p(y) \psi(y) - \sum_x w(x) \phi(x)$.  At this point
it will be clear to you that the problem of the guild master is exactly that
of Kantorovich's dual problem, as given in Theorem~\ref{t:kdual0}.

Since the advisor has given up on her team of bean counters, the guild master
employs them, and asks them to produce the optimal pair of prices.
The bean counters set to work, trying different combinations of prices that
satisfy the constraints.  However, without a systematic methodology to follow
or fast computers to turn to, their progress is slow.  The advisor begins to
fear that the coming war will be over before the guild master replies.

\subsubsection{Decentralized Equilibrium}

At this point, it occurs to the advisor that yet another approach exists:
privatize the mines, abolish the guild, and let the traveling salesmen, mine
owners and blacksmiths make individual choices in order to maximize their
profits.   Purchase and sales prices, as well as quantities transported from
each mine to each blacksmith, will be determined by the free market.  

Although the advisor predates Kantorovich, she reasons that competition will
prevent each salesman from profiteering, while the desire for profits will
encourage high levels of transportation and minimal waste.  It turns out that
this idea works amazingly well, in the sense that we now describe.

For the record, we define a \navy{competitive equilibrium}\index{Competitive
equilibrium} for this market as pair of price vectors $(w, p)$ in $\RR^\Xsf
\times \RR^\Ysf$ and a set of quantities $\pi \colon \Xsf \times \Ysf \to
\RR_+$ such that the following three conditions hold: For all $(x, y)$ in
$\Xsf \times \Ysf$,
\begin{align}
    & \sum_{v \in \Ysf}  \pi(x, v) = \phi(x) 
    \text{ and } 
    \sum_{u \in \Xsf} \pi(u, y) = \psi(y)
    \tag{RE}
    \\
    & p(y) \leq c(x, y) + w(x) 
    \tag{NA}
    \\ 
    & p(y) = c(x, y) + w(x) \text{ whenever } \pi(x, y) > 0.
    \tag{IC}
\end{align}

Condition (RE) is a resource constraint that builds in the assumption no ore is wasted or
disposed. Condition (NA) imposes no arbitrage.  If it is violated
along route $(x,y)$, then another salesman, of which we assume there are
many, will be able to gain business without suffering losses by offering a
slightly higher purchase prices at $x$ or a slightly lower sales prices at $y$.  Finally,
condition (IC) is an incentive constraint, which says that, whenever a route
is active (in the sense that a nonzero quantity is transported), prices are
such that the salesmen do not lose money.
  
We do not claim that a competitive equilibrium will hold immediately and at
every instant in time.  However, we reason, as the advisor does, that
competitive equilibrium has natural stability properties, as described in the
previous paragraph.  As such, we predict it as a likely outcome of
decentralized trade, provided that private property rights are enforced (e.g.,
bandits are eliminated from the routes) and noncompetitive behaviors are
prevented (e.g., collusion by mine owners is met by suitably painful
punishments).

Taking $c$, $\phi$ and $\psi$ as given, we can state the following key
theorem, which states that any competitive equilibrium simultaneously solves
both the advisor's quantity problem \emph{and} the guild master's price
problem. 

\begin{theorem}\label{t:optiffce}
    If prices $(w, p)$ and $\pi \in \matset{n}{m}$ form a competitive equilibrium,
    then 
    \begin{enumerate}
        \item $\pi$ is an optimal transport plan, solving
            the primal problem \eqref{eq:ot0}, and
        \item $(w, p)$ solves the Kantorovich dual problem~\eqref{eq:kdual0}.
    \end{enumerate}
\end{theorem}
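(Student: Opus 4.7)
The plan is to establish the theorem by showing that a competitive equilibrium $(w,p,\pi)$ is simultaneously feasible for the primal and dual problems, and that it attains equality in the weak-duality inequality of Exercise~\ref{ex:wdotp}. Once that equality is established, strong duality (Theorem~\ref{t:kdual0}) immediately forces $\pi$ to be a primal optimizer and $(w,p)$ to be a dual optimizer, giving both conclusions in a single stroke.

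First I would verify feasibility. The resource condition (RE) is exactly the pair of marginal constraints $\sum_y \pi(x,y) = \phi(x)$ and $\sum_x \pi(x,y) = \psi(y)$ that define a coupling, so $\pi \in \Pi(\phi,\psi)$. The no-arbitrage condition (NA) is literally the inequality $p(y) \leq c(x,y) + w(x)$ that defines $\fF_c$ in \eqref{eq:fcdef}, so $(w,p) \in \fF_c$. Hence both candidates are feasible in their respective problems.

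The key step is to compute $\inner{c,\pi}$ using the incentive condition (IC). For each pair $(x,y)$, either $\pi(x,y) = 0$, in which case $\pi(x,y) c(x,y) = 0 = \pi(x,y)(p(y) - w(x))$, or $\pi(x,y) > 0$, in which case (IC) gives $c(x,y) = p(y) - w(x)$ and therefore $\pi(x,y) c(x,y) = \pi(x,y)(p(y) - w(x))$. Summing over $(x,y)$ and using (RE) to reduce the marginals,
\begin{equation*}
    \inner{c,\pi}
    = \sum_{x,y} \pi(x,y)\bigl(p(y) - w(x)\bigr)
    = \sum_y p(y)\psi(y) - \sum_x w(x)\phi(x)
    = \inner{p,\psi} - \inner{w,\phi}.
\end{equation*}

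To finish, combine this equality with weak duality (Exercise~\ref{ex:wdotp}) and strong duality ($P=D$ from Theorem~\ref{t:kdual0}). For any feasible primal $\pi'$, weak duality gives $\inner{c,\pi'} \geq \inner{p,\psi} - \inner{w,\phi} = \inner{c,\pi}$, so $\pi$ minimizes the primal. Symmetrically, for any feasible dual $(w',p')$, weak duality gives $\inner{p',\psi} - \inner{w',\phi} \leq \inner{c,\pi} = \inner{p,\psi} - \inner{w,\phi}$, so $(w,p)$ maximizes the dual. I do not anticipate a serious obstacle here; the only care required is to confirm that the case $\pi(x,y) = 0$ is handled correctly when substituting (IC), since (IC) is only asserted on the support of $\pi$. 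Writing the sum as above over all $(x,y)$ sidesteps this issue cleanly.
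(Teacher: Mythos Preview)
Your proof is correct and follows essentially the same route as the paper's: establish feasibility from (RE) and (NA), derive the equality $\inner{c,\pi} = \inner{p,\psi} - \inner{w,\phi}$ from (IC), and then sandwich with weak duality to conclude optimality. The paper packages the last step via the abstract Exercise~\ref{ex:mmm}, whereas you argue it directly; your version is in fact slightly cleaner, since your final paragraph uses only weak duality (Exercise~\ref{ex:wdotp}) and does not actually need the strong duality you cite from Theorem~\ref{t:kdual0}.
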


To prove the theorem we will use the results from the next exercise.

\begin{Exercise}\label{ex:mmm}
    Let $A$ and $B$ be nonempty sets.  Let $f$ and $g$ be real-valued on $A$
    and $B$ such that $f(a) \geq g(b)$ for all $(a, b) \in A \times B$ and,
    in addition, $\min_{a \in A} f(a) = \max_{b \in B} g(b)$.  Prove the
    following statement: If there exist $\bar a \in A$ and $\bar b \in B$ such
    that $f(\bar a) = g(\bar b)$, then $\bar a$ is a minimizer of $f$ on $A$
    and $\bar b$ is a maximizer of $g$ on $B$.
\end{Exercise}

\begin{Answer}
    From $\min_{a \in A} f(a) = \max_{b \in B} g(b)$ we have $f(a) \geq
    g(b)$ for all $(a, b) \in A \times B$.  Taking $(\bar a, \bar b)$ 
    with $f(\bar a) = g(\bar b)$, we have $f(\bar a) = g(\bar b) \leq f(a)$
    for any given $a \in A$.  In particular, 
    $\bar a$ minimizes $f$ on $A$.  The argument
    for $\bar b$ is similar.
\end{Answer}

We will use Exercise~\ref{ex:mmm} in the following way.  Let $A = \Pi(\phi,
\psi)$ and $B = \fF_c$.  Let $f$ be the value of the primal and $g$ be the
value of the dual.  By \eqref{eq:wdotp}, the ordering $f(\pi) \geq g(w, p)$
holds over all feasible $\pi \in A$ and $(w,p) \in B$ pairs.    By strong
duality, we also have $\min_{\pi \in A} f(\pi) = \max_{(w,p) \in B} g(w, p)$.
Hence we need only show that, when $(w, p)$ and $\pi$ form a competitive
equilibrium, we have $f(\pi) = g(w,p)$.

\begin{proof}[Proof of Theorem~\ref{t:optiffce}]
    Suppose that $(w, p)$ and $\pi$ form a competitive equilibrium.
    From (RE) we know that $\pi$ is feasible for the primal problem.
    From (NA) we know that $(w, p)$ is feasible for the dual.  Since the
    equality in the (IC) condition holds when $\pi(x, y) > 0$, we can multiply
    both sides of this equality by $\pi(x, y)$ and sum over all $x,y$ to
    obtain
    \begin{equation}\label{eq:cgf}
        \sum_{x,y} c(x, y) \pi(x, y) 
        = \sum_y p(y) \psi(y) - \sum_x w(x) \psi(x).
    \end{equation}
    The result of Exercise~\ref{ex:mmm} now applies, so $\pi$ attains the minimum
    in the primal problem and $(w, p)$ attains the maximum in the dual.
\end{proof}

We also have the following converse:

\begin{theorem}
    If $\pi$ is an optimal transport plan, then there exists a pair $(w, p)
    \in \RR^\Xsf \times \RR^\Ysf$ such that the quantities determined by $\pi$
    and the prices in $(w, p)$ form a competitive equilibrium.
\end{theorem}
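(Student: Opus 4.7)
The plan is to construct the required prices $(w,p)$ directly from the Kantorovich dual problem and then verify each of the three equilibrium conditions in turn. Condition (RE) is essentially free: since $\pi$ is by assumption an optimal transport plan, it lies in $\Pi(\phi,\psi)$, so its marginals are $\phi$ and $\psi$, which is exactly (RE).

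For (NA), I would invoke Theorem~\ref{t:strongdual} (strong duality for linear programs), applied to the LP reformulation \eqref{eq:otplp} of the optimal transport problem. Since a finite primal minimizer $\pi$ exists by hypothesis, a finite dual maximizer exists as well, and by the correspondence worked out just after \eqref{eq:kdual0} this maximizer can be written as a pair $(w,p) \in \fF_c$. Feasibility in $\fF_c$ is, by definition, the inequality $p(y) \le c(x,y) + w(x)$ on all of $\Xsf \times \Ysf$, which is precisely (NA). Moreover, Theorem~\ref{t:kdual0} gives us the equality $\inner{c,\pi} = \inner{p,\psi} - \inner{w,\phi}$.

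For (IC), I would run a complementary slackness argument using the dual feasibility and the strong duality identity above. Using the marginal constraints $\sum_y \pi(x,y) = \phi(x)$ and $\sum_x \pi(x,y) = \psi(y)$, we may rewrite
\begin{equation*}
\inner{c,\pi} - \bigl(\inner{p,\psi} - \inner{w,\phi}\bigr)
= \sum_{x,y} \pi(x,y)\bigl[\,c(x,y) - p(y) + w(x)\,\bigr].
\end{equation*}
By (NA) each bracket is nonnegative, and since $\pi \geq 0$ each summand is nonnegative; but by strong duality the sum equals zero. Hence every summand vanishes, so $\pi(x,y) > 0$ forces $p(y) = c(x,y) + w(x)$, which is exactly (IC).

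I do not expect any serious obstacle. The only subtle point is making sure to extract \emph{existence} of a dual optimizer (not merely $P=D$), which is where Theorem~\ref{t:strongdual} rather than just Theorem~\ref{t:kdual0} does the work. Everything else is a direct manipulation of the marginal identities together with the sign structure of (NA).
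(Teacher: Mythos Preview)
Your proposal is correct and essentially identical to the paper's proof: both verify (RE) from feasibility of $\pi$, obtain a dual optimizer $(w,p)\in\fF_c$ via strong duality to get (NA), and then deduce (IC) from the identity $\inner{c,\pi}=\inner{p,\psi}-\inner{w,\phi}$ by expanding it through the marginals as a sum of nonnegative terms that must therefore vanish individually. Your emphasis on invoking Theorem~\ref{t:strongdual} for \emph{existence} of the dual optimizer is a fair point, though the paper's Theorem~\ref{t:kdual0} already builds this in by writing $D$ as a $\max$ and deriving it from Theorem~\ref{t:strongdual}.
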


\begin{proof}
    Let $\pi$ be an optimal plan.  To be optimal, $\pi$ must be feasible, so
    $\pi \in \Pi(\phi, \psi)$, which implies that (RE) holds.

    By Kantorovich's duality theorem (Theorem~\ref{t:kdual0}, we can obtain $(w, p) \in \fF_c$ such
    that~\eqref{eq:cgf} holds.    Since $(w, p) \in \fF_c$ (NA) holds.  From
    (NA) we have $c(x, y) + w(x) - p(y) \geq 0$ for all $x,y$.  From this
    and~\eqref{eq:cgf} we see that (IC) must be valid.  We conclude that $\pi$ and $(w, p)$ form
    a competitive equilibrium.
\end{proof}

\subsection{The General Flow Problem}

We now describe a general network flow problem that can be used to analyze a
large range of applications, from international trade to communication
and assignment.  This general problem includes optimal transport as a special
case.

Once we have introduced the problem, we show two results.  First, the problem
can easily be formulated as a linear program and solved using standard linear
programming methods.  Second, even though optimal transport is a strict subset
of the general flow problem, every general flow problem can be solved using a
combination of optimal transport and shortest path methods.

\subsubsection{Problem Statement}\label{sss:probnf}

We are interested in flow of a good or service across a network with $n$
vertices.  This network can be understood as a weighted directed graph $(V, E,
c)$. To simplify notation, we label the nodes from $1$ to $n$ and let $V =
\natset{n}$.    Existence of an edge $e = (i, j) \in E$ with weight $c(i, j)$
indicates that the good can be shipped from $i$ to $j$ at cost $c(i, j)$. We
recall from \S\ref{ss:uwdg} that $\iI(i)$ is the set of direct predecessors of
vertex $i$ (all $u \in V$ such that $(u, i) \in E$) and $\oO(i)$ is the set of
direct successors (all $j \in V$ such that $(i, j) \in E$).

A classic example is the famous Silk Road of antiquity, part of which is
illustrated in Figure~\ref{f:sr}.   Silk was produced in eastern cities such
as Loyang and Changan, and then transported westward to satisfy final demand
in Rome, Constantinople and Alexandria.  Towns such as Yarkand acted as trade
hubs.  Rather than covering the whole route, traders typically traveled
backward and forward between one pair of hubs, where they knew the language
and customs.\footnote{Our use of the Silk Road as an example of a network flow
    problem is not original. \cite{galichon2018optimal} provides a highly
readable treatment in the context of optimal transport.}

\begin{figure}
   \centering
   \scalebox{0.5}{\includegraphics{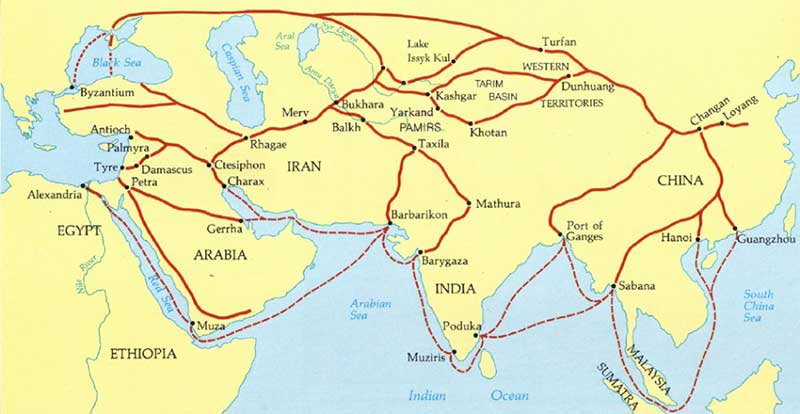}}
   \caption{\label{f:sr} The Silk Road}
\end{figure}

Returning to the model, we allow for both \navy{initial supply} of and
\navy{final demand} for the good at every node (although one or both could be
zero).  Let $s(i)$ and $d(i)$ be supply and demand at node $i$ respectively.
Aggregate supply and demand over the network are assumed to be equal, so that
\begin{equation}\label{eq:asead}
    \sum_{i \in V} s(i) = \sum_{i \in V} d(i) .
\end{equation}
This can be understood as an equilibrium condition: prices have adjusted to
equalize initial supply and final demand in aggregate.  We assume throughout
that the vectors $s$ and $d$ are nonnegative with at least one positive
element.

Let $q(i, j)$ be the amount of the good shipped from node $i$ to node $j$ for
all $i, j \in V$.  
The minimum cost network flow problem is to minimize \navy{total shipping
    cost}
\begin{equation}\label{eq:flot}
    \sum_{i \in V} \sum_{j \in V} c(i,j) q(i, j),
\end{equation}
subject to the restriction that $q \geq 0$ and  
\begin{equation}\label{eq:nflowsd}
    s(i) + \sum_{v \in \iI(i)} q(v, i)
    = d(i) + \sum_{j \in \oO(i)} q(i, j)
    \quad \text{for all } i \in V.
\end{equation}
The left hand side of \eqref{eq:nflowsd} is total supply to node $i$ (initial
supply plus inflow from other nodes), while the right hand side is 
total demand (final demand plus outflow to
other nodes).

\begin{Exercise}
    Although we presented them separately, the node-by-node restriction
    \eqref{eq:nflowsd} implies the aggregate restriction \eqref{eq:asead}.
    Explain why this is the case.
\end{Exercise}

\begin{Answer}
   Since any outflow from some node $i$ is matched by equal inflow into some
   node $j$, summing both sides of \eqref{eq:nflowsd} across all $i \in V$
   yields \eqref{eq:asead}.
\end{Answer}

\subsubsection{Optimality}

There are several ways to transform the network flow problem into a linear
program.  We follow the presentation in \cite{bertsimas1997introduction}.
We take $m = |E|$ to be the total number of edges and enumerate them (in any
convenient way) as $e_1, \ldots, e_m$.  Let's say that $e_k$ \navy{leaves}
node $i$ if $e_k = (i, j)$ for some $j \in \natset{n}$, and that $e_k$
\navy{enters} node $i$ if $e_k = (\ell, i)$ for some $\ell \in \natset{n}$.
Then we define the $n \times m$ \navy{node-edge incidence
matrix}\index{Incidence matrix} $A$ by 
\begin{equation*}
   A = (a_{i k})
   \quad \text{with} \quad
   a_{i k}
   :=
   \begin{cases}
       \; 1 & \text{if $e_k$ leaves $i$} \\
       -1   & \text{if $e_k$ enters $i$} \\
       \; 0 & \text{otherwise}.
   \end{cases}
\end{equation*}

\begin{example}\label{eg:vsmcf}
    Consider the very simple minimum cost flow problem in
    Figure~\ref{f:optimal_flow_1}.
    The shipment costs $c(i,j)$ are listed next to each existing edge.  Initial
    supply is 10 at node 1 and zero elsewhere.  Final demand is 10 at node 4 and
    zero elsewhere.  We enumerate the edges as
    \begin{equation}\label{eq:vsel}
        E 
        = \{e_1, \ldots, e_4\}
        = \{(1, 2), (1, 4), (2, 3), (3, 4)\}.
    \end{equation}
    The node-edge incidence matrix is
    \begin{equation*}
        A =
        \begin{pmatrix}
            1 & 1 & 0 & 0    \\
            -1 & 0 & 1 & 0    \\
            0 & 0 & -1 & 1    \\
            0 & -1 & 0 & -1
        \end{pmatrix}.
    \end{equation*}
\end{example}

\begin{figure}
    \centering
    \begin{tikzpicture}
  \node[circle, draw, label={[black]above:$s_1=10$}, blue] (1) at (0, 2) {$1$};
  \node[circle, draw] (2) at (0, 0) {$2$};
  \node[circle, draw] (3) at (0, -2) {$3$};
  \node[circle, draw, label={[black]right:$d_4=10$}, red] (4) at (3, 0) {$4$};
  \draw[->, thick, black]
  (1) edge [bend right=0, above] node  {$4$} (4)
  (1) edge [bend left=0, left] node {$1$} (2)
  (3) edge [bend left=0, below] node {$1$} (4)
  (2) edge [bend left=0, left] node {$1$} (3);
\end{tikzpicture}
    \caption{\label{f:optimal_flow_1} A simple network flow problem}
\end{figure}
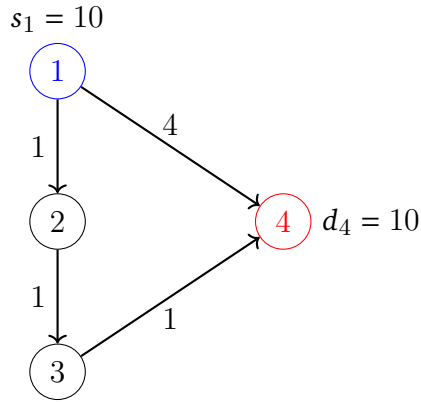

Now, returning to the general case, we rearrange $q$ and $c$ into $m \times 1$
vectors $(q_k)$ and $(c_k)$, where $q_k$ is the amount shipped along edge $k$
and $c_k$ is the cost.  For example, if $e_k = (i,j)$, then $q_k = q(i, j)$
and $c_k = c(i,j)$. In addition, we set $b$ to be the vector in $\RR^n$ with
$i$-th element $s(i) - d(i)$, which is net exogenous supply at node $i$.

\begin{Exercise}
    In this set up, show that \eqref{eq:nflowsd} is equivalent to $A q = b$ in
    the special case of Example~\ref{eg:vsmcf}.
\end{Exercise}

\begin{Exercise}\label{ex:immq}
    Let $(Aq)(i)$ be the $i$-th row of $Aq$.  Show that 
    \begin{equation}\label{eq:immq}
        (Aq)(i) = \sum_{j \in \oO(i)} q(i, j) - \sum_{v \in \iI(i)} q(v, i).
    \end{equation}
\end{Exercise}

\begin{Answer}
    Verifying the claim is just a matter of working with the definition of
    $A$.  Fixing $i \in \natset{n}$, we have
    \begin{equation*}
        (Aq)(i) 
         = \sum_{k=1}^m a_{ik} q_k 
         = \sum_{k=1}^m \1\{e_k \text{ leaves } i \} q_k 
                - \sum_{k=1}^m \1\{e_k \text{ points to } i \} q_k,
    \end{equation*}
    This is equal to $\sum_{j \in \oO(i)} q(i, j) - \sum_{v \in \iI(i)} q(v, i)$, as
    was to be shown.
\end{Answer}

Equation~\eqref{eq:immq} tells us that the $i$-th row of $Aq$ give us the net
outflow from node $i$ under the transport plan $q$. Now, with $\inner{c, q} :=
\sum_{k=1}^m c_k q_k$, the minimum cost network flow problem can now be
expressed as
\begin{equation}\label{eq:nflowlp}
    \min \inner{c, q} 
    \quad \st \quad
    q \geq 0
    \; \text{ and } \;
    A q = b.
\end{equation}
This is a linear program in standard equality form, to which we can apply any
linear programming solver.  For Example~\ref{eg:vsmcf}, we run the following:

\begin{minted}{python}
A = (( 1,  1,  0,  0),
     (-1,  0,  1,  0),
     ( 0,  0, -1,  1),
     ( 0, -1,  0, -1))

b = (10, 0, 0, -10)
c = (1, 4, 1, 1)

result = linprog(c, A_eq=A, b_eq=b, method='highs-ipm')
print(result.x)
\end{minted}

The output is \texttt{[10.  0. 10. 10.]}.  
 Recalling the order of the
paths in \eqref{eq:vsel}, this means that the optimal transport plan is
$q(1,4)=0$ and $q(1,2)=q(2,3)=q(3,4)=10$, as our intuition suggests.

\begin{Exercise}
    Some network flow problems have \navy{capacity constraints}, which can be
    modeled as a map $g \colon E \to [0, \infty]$, along with the restriction
    $q(e) \leq g(e)$ for all $e \in E$.  (If $g(e) =
    +\infty$, there is no capacity constraint over shipping on edge $e$.)
    Formulate this as a linear program and modify the code above, which solves
    Example~\ref{eg:vsmcf}, to include
    the capacity constraint $g(1,2)=5$.  Solve for the optimal plan.
\end{Exercise}

\begin{Answer}
    To the code that solves the original version of Example~\ref{eg:vsmcf}, we
    need to add
    \begin{minted}{python}
    bounds = ((0, 5),
              (0, None),
              (0, None),
              (0, None))    
    \end{minted}
    and then change the function call to 
    \begin{minted}{python}
    result = linprog(c, A_eq=A, b_eq=b, method='highs-ipm', bounds=bounds)
    print(result.x)    
    \end{minted}

    The output is \texttt{[5. 5. 5. 5.]}, which also agrees with our intuition.
\end{Answer}

\begin{Exercise}\label{ex:otsc}
    Explain how the generic optimal transport problem treated in \S\ref{ss:ot}
    is a special case of the minimum cost network flow problem.
\end{Exercise}

\subsubsection{Reduction to Optimal Transport}\label{sss:redmfot}

In Exercise~\ref{ex:otsc}, we saw how every optimal transport problem is a
special kind of minimum cost network flow problem.  There is a sense in which
the converse is also true.  In particular, we can use optimal transport
methods to solve any network flow problem, provided that we first modify the
network flow problem via an application of shortest paths.

To explain how this works, we take the abstract network flow problem described
in \S\ref{sss:probnf}, on the weighted digraph $(V, E, c)$, with $V =
\natset{n}$, initial supply vector $s \in \RR^n_+$ and final demand vector $d
\in \RR^n_+$.  For the purposes of this section, we agree to call a node $i$
with $s(i) - d(i) > 0$ a \navy{net supplier}. A node $i$ with $d(i) - s(i) >
0$ will be called a \navy{net consumer}.  Nodes with $s(i) = d(i)$ will be
called \navy{trading stations}.  

\begin{example}
    In the left hand side of Figure~\ref{f:optimal_flow_vs_transportation_1},
    nodes 1 and 2 are net suppliers, 3 is a trading station and 4 and 5 are
    net consumers.
\end{example}

\begin{example}
    In the Silk Road application, Rome would be a net consumer, where final
    demand is large and positive, while initial supply is zero.  A city such
    as Yarkand should probably be modeled as a trading station, with $s(i)=d(i)
    = 0$.
\end{example}

The idea behind the reduction is to treat the net supplier nodes as source
locations and the net consumer nodes as target locations in an optimal
transport problem.  The next step is to compute the shortest path (if there
are multiple, pick any one) from each net supplier $i$ to each net consumer
$j$.  Let $\rho(i, j)$ denote this path, represented as a sequence of edges in
$E$.  The cost of traversing $\rho(i, j)$ is 
\begin{equation*}
    \hat c(i, j)  
    := \sum_{k=1}^m c_k \1\{e_k \in \rho(i, j)\}.
\end{equation*}
Now the trading
stations are eliminated and we solve the optimal transport problem with
\begin{itemize}
    \item $\Xsf = $ the set of net suppliers,
    \item $\Ysf = $ the set of net consumers,
    \item $\phi(i) = s(i) - d(i)$ on $\Xsf$,
    \item $\psi(j) = d(j) - s(j)$ on $\Ysf$, and
    \item cost function $\hat c(i, j)$ as defined above.\footnote{If no path exists from $i$ to
            $j$ then we set $\hat c(i, j) = \infty$.  Such settings
            can be handled in linear programming solvers by adding capacity
            constraints.  See, for example, \cite{peyre2019computational},
            Section~10.3.}
\end{itemize}
After we find the optimal transport plan $\pi$, the network minimum cost flow
$q_k$ along arbitrary edge $e_k \in E$ is recovered by setting
\begin{equation*}
    q_k = \sum_{i \in \Xsf} \sum_{j \in \Ysf} \pi(i, j)
        \1\{e_k \in \text{ the shortest path from $i$ to $j$}\}.
\end{equation*}

\begin{Exercise}
    Recalling our assumptions on $s$ and $d$, prove that $\sum_{i \in \Xsf}
    \phi(i) = \sum_{j \in \Ysf} \psi(j)$ and that this sum is nonzero.
\end{Exercise}

\begin{remark}
    We have not imposed $\sum_i \phi(i) = \sum_j \psi(j) = 1$, as required for
    the standard formulation of the optimal transport problem.  But this
    normalization is only used for convenience in exposition and most solvers
    do not require it.
\end{remark}

Figure~\ref{f:optimal_flow_vs_transportation_1} illustrates the method.
Trading station 3 is eliminated after the shortest paths are computed.

\begin{figure}
   \centering
   \begin{tikzpicture}
  \node[circle, draw, blue] (1) at (-1.5, 1) {$1$};
  \node[circle, draw, blue] (2) at (-1.5, -1) {$2$};
  \node[circle, draw] (3) at (0, 0) {$3$};
  \node[circle, draw, red] (4) at (1.5, 1) {$4$};
  \node[circle, draw, red] (5) at (1.5, -1) {$5$};
  \node[circle, draw, blue] (6) at (5, 1) {$1$};
  \node[circle, draw, blue] (7) at (5, -1) {$2$};
  \node[circle, draw, red] (8) at (8, 1) {$4$};
  \node[circle, draw, red] (9) at (8, -1) {$5$};
  \draw[->, thick, black]
  (1) edge [bend right=0, left] node  {$1$} (2)
  (1) edge [bend left=0, above] node {$5$} (3)
  (2) edge [bend left=0, below] node {$1$} (3)
  (3) edge [bend left=0, above] node {$1$} (4)
  (3) edge [bend left=0, above] node {$1$} (5)
  (6) edge [bend left=0, below] node [pos=0.15] {$3$} (9)
  (6) edge [bend left=0, above] node {$3$} (8)
  (7) edge [bend left=0, below] node [pos=0.85] {$2$} (8)
  (7) edge [bend left=0, below] node {$2$} (9);
\end{tikzpicture}
   \caption{\label{f:optimal_flow_vs_transportation_1} 
       Reducing minimum cost optimal flow to optimal transport}
\end{figure}
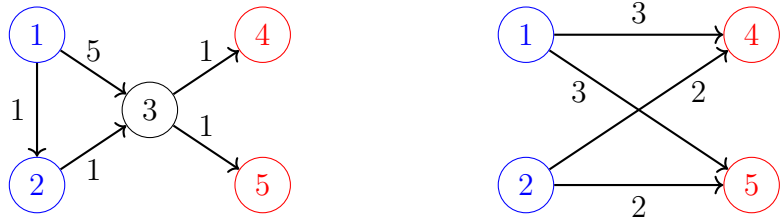

\section{Chapter Notes}

Our treatment of shortest paths can be understood as a simplified version of
both the Bellman--Ford algorithm and Dijkstra's algorithm, which are routinely
used to solve large shortest path planning problems. Our approach is intended
to emphasize recursive solution methods, which are valuable for analyzing a
vast range of economic problems, from intertemporal modeling (see, e.g.,
\cite{stokey1989recursive} or \cite{ljungqvist8012recursive}) to production
chains \citep{kikuchi2021coase}.

For a more in-depth treatment of linear programming, we recommend the
excellent textbooks by \cite{bertsimas1997introduction} and
\cite{matousek2007understanding}. For bedtime reading, \cite{cook2011pursuit}
provides an entertaining introduction to some of the main ideas and
applications to network problems, including a review of computation.

While \S\ref{ss:wahdp} provided a rather whimsical introduction to matching
and assignment problems, such problems have great real world importance.
Examples include assigning kidney donors to recipients, mothers to maternity
wards, doctors to hospitals, students to schools, delivery drivers to orders
and autonomous vehicles to riders.\footnote{The latter problem occurs in the
field of Autonomous Mobility on Demand (AMoD). See, for example,
\cite{ascenzi2021control} or \cite{simonetto2019real}.} A brief history of
assignment, matching problems and combinatorial optimization can be found in
\cite{schrijver2005history}. \cite{greinecker2021pairwise} study existence of
stable matchings in a setting with many agents.

\cite{villani2008optimal} and \cite{vershik2013long} provide extensive
historical background on the optimal transport problem. \cite{vershik2013long}
mentions some of the problems that Kantorovich faced, as a Soviet
mathematician working in the time of Stalin and Khrushchev, given that his
main duality theorem for optimal transport can be seen as a proof that
competitive market equilibria attain the maximal transport plan.
  
In \S\ref{ss:kd}, we mentioned that Kantorovich's work anticipated much of the
later development of duality theory for arbitrary linear programs.  In fact,
according to \cite{vershik2013long}, Kantorovich anticipated much of the general
theory of linear programming itself, including providing a version of the
simplex algorithm later rediscovered and extended by Dantzig.  

Optimal transport has a remarkably wide variety of applications, spread across
economics, econometrics, finance, statistics, artificial intelligence, machine
learning and other fields.  Within economics, \cite{galichon2018optimal}
provides an excellent overview. \cite{fajgelbaum2020optimal} consider optimal
transport in spacial equilibria. \cite{beiglbock2022bachelier} review some of
the major milestones of modern finance theory and show their connections via
optimal transport.  Connections to machine learning are surveyed in
\cite{kolouri2017optimal}.

The computational theory of optimal transport is now a major field.  A high
quality exposition can be found in \cite{peyre2019computational}.
\cite{blanchet2018computation} use computational optimal transport to solve
for Cournot--Nash equilibria in mean-field type games.

\chapter{Markov Chains and Networks}\label{c:mcs}

Markov chains evolving on finite sets are a foundational class of stochastic
processes, routinely employed in quantitative modeling within economics,
finance, operations research and social science.  A Markov chain is most
easily understood as a weighted digraph, with graph-theoretic properties such
as connectedness and periodicity being key determinants of dynamics.

\section{Markov Chains as Digraphs}\label{s:amcs}

We begin with fundamental definitions and then investigate dynamics.

\subsection{Markov Models}\label{sss:reps}

A \navy{finite Markov model}\index{Markov model} is a weighted directed graph
$\mM = (S, E, p)$, where $S$ is the (finite) set of vertices, $E$ is the set
of edges and $p$ is the weight function, with the additional restriction that 
\begin{equation}\label{eq:sto}
    \sum_{y \, \in \, \oO(x)} p(x, y) = 1
    \quad \text{for all } x \in S.
\end{equation}
Figure~\ref{f:rich_poor} on page~\pageref{f:rich_poor}  presented an example
of such a digraph.  

The set of vertices $S$ of a finite Markov model $\mM = (S, E, p)$ is also called
the \navy{state space}\index{State space} of the model, and vertices are
called \navy{states}.  The two standard interpretations are
\begin{enumerate}
    \item $S$ is a set of possible states
        for some random element (the state) and the weight $p(x, y)$ represents the
        probability that the state moves from $x$ to $y$ in one step.
    \item $S$ is a set of possible values for some measurement over a large
        population (e.g., hours worked per week measured across a large
        cross-section of households) and $p(x, y)$ is the fraction of agents
        that transition from state $x$ to state $y$ in one unit of time.
\end{enumerate}
These two perspectives are related in ways that we explore below.

\subsubsection{Transition Matrices}

If $\mM$ is a finite Markov model, then the restriction~\eqref{eq:sto} is
equivalent to the statement that the adjacency matrix associated with $\mM$ is
stochastic (see \S\ref{sss:stochmat} for the definition). Identifying $S =
\{\text{poor}, \text{middle}, \text{rich}\}$ with $\{1, 2, 3\}$, the adjacency
matrix for this weighted digraph is
\begin{equation}\label{eq:pia}
    P_a = 
    \begin{pmatrix}
        0.9 & 0.1 & 0.0 \\
        0.4 & 0.4 & 0.2 \\
        0.1 & 0.1 & 0.8 
    \end{pmatrix}
\end{equation}
Since $P_a \geq 0$ and rows sum to unity, $P_a$ is stochastic as required.
In the context of finite Markov models, the adjacency matrix of $\mM$ is also called the
\navy{transition matrix}\index{Transition matrix}.

Regarding notation, when $S$ has typical elements $x, y$, it turns out to be
convenient to write elements of the transition matrix $P$ as $P(x, y)$ rather
than $P_{ij}$ or similar.  We can think of $P$ as extending the weight
function $p$ from $E$ to the set of all $(x,y)$ pairs in $S \times S$,
assigning zero whenever $(x,y) \notin E$. As such, 
 for every possible choice of $(x,y)$, the value
$P(x,y)$ represents the
probability of transitioning from $x$ to $y$ in one step.  

The requirement that $P$ is stochastic can now be written as $P \geq 0$ and
\begin{equation}\label{eq:psum1}
    \sum_{y \in S}  P(x, y) = 1 \text{ for all } x \in S.
\end{equation}
The restriction in \eqref{eq:psum1} just says that the
state space is ``complete:'' after arriving at $x \in S$, the state must now
move to some $y \in S$.

Using notation from \S\ref{ss:spt}, to say that $P$ is stochastic is the same
as requiring that each row of $P$ is in $\dD(S)$.  

\begin{example}\label{eg:quah}
    A Markov model is estimated in the
    international growth dynamics study of \cite{quah1993empirical}.  The state is
    real GDP per capita in a given country relative to the world average.  Quah
    discretizes the possible values to 0--1/4, 1/4--1/2, 1/2--1, 1--2 and
    2--$\infty$, calling these states 1 to 5 respectively.  The transitions are
    over a one year period.  Estimated one step transition probabilities are
    represented as a weighted digraph in Figure~\ref{f:quah_graph}, where
    \begin{itemize}
        \item $S = \{1, \ldots, 5\}$ is the state space
        \item the set of edges $E$ is represented by arrows and
        \item transition probabilities are identified with weights attached to
            these edges.
    \end{itemize}
\end{example}

\begin{figure}
   \begin{center}
       \scalebox{1.0}{\begin{tikzpicture}
  \node[ellipse, draw] (1) at (0, 0) {1};
  \node[ellipse, draw] (2) at (2, 0) {2};
  \node[ellipse, draw] (3) at (4, 0) {3};
  \node[ellipse, draw] (4) at (6, 0) {4};
  \node[ellipse, draw] (5) at (8, 0) {5};
  \draw[->, thick, black]
  (1) edge [bend left=20, above] node {$0.03$} (2)
  (2) edge [bend left=20, below] node {$0.05$} (1)
  (2) edge [bend left=20, above] node {$0.03$} (3)
  (3) edge [bend left=20, below] node {$0.04$} (2)
  (3) edge [bend left=20, above] node {$0.04$} (4)
  (4) edge [bend left=20, below] node {$0.04$} (3)
  (4) edge [bend left=20, above] node {$0.02$} (5)
  (5) edge [bend left=20, below] node {$0.01$} (4)
  (1) edge [loop above] node {$0.97$} (1)
  (2) edge [loop above] node {$0.92$}  (2) 
  (3) edge [loop above] node {$0.92$}  (3)
  (4) edge [loop above] node {$0.94$}  (4)
  (5) edge [loop above] node {$0.99$}  (5);
\end{tikzpicture}}
   \end{center}
   \caption{\label{f:quah_graph} Cross-country GDP dynamics as a digraph}
\end{figure}
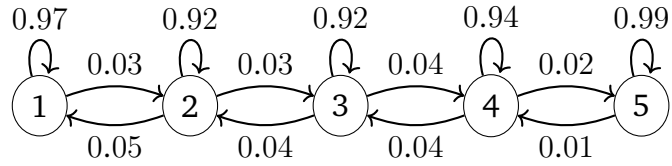

The transition matrix for the Markov model in Example~\ref{eg:quah} is
\begin{equation}
    \label{eq:pq}
    P_Q =
    \left(
      \begin{array}{ccccc}
        0.97 & 0.03 & 0.00 & 0.00 & 0.00 \\
        0.05 & 0.92 & 0.03 & 0.00 & 0.00 \\
        0.00 & 0.04 & 0.92 & 0.04 & 0.00 \\
        0.00 & 0.00 & 0.04 & 0.94 & 0.02 \\
        0.00 & 0.00 & 0.00 & 0.01 & 0.99
      \end{array}
    \right)
\end{equation}
Note the large values on the principal diagonal of $P_Q$.  These indicate strong
persistence:  the state stays constant from period to period with high
probability.

\cite{quah1993empirical} estimated $P_Q$ by maximum likelihood, pooling
transitions over the years 1960--1984.  (In this case maximum likelihood
estimation amounts to recording the relative frequency of transitions between states.)  
Figure~\ref{f:quah_graph_2} shows how the numbers change if we repeat the
exercise using World Bank GDP data from 1985--2019.  The numbers are quite
stable relative to the earlier estimate.  Below we will examine how long run
predictions are affected.

\begin{figure}
   \begin{center}
       \scalebox{1.0}{\begin{tikzpicture}
  \node[ellipse, draw] (1) at (0, 0) {1};
  \node[ellipse, draw] (2) at (2, 0) {2};
  \node[ellipse, draw] (3) at (4, 0) {3};
  \node[ellipse, draw] (4) at (6, 0) {4};
  \node[ellipse, draw] (5) at (8, 0) {5};
  \draw[->, thick, black]
  (1) edge [bend left=20, above] node {$0.03$} (2)
  (2) edge [bend left=20, below] node {$0.05$} (1)
  (2) edge [bend left=20, above] node {$0.06$} (3)
  (3) edge [bend left=20, below] node {$0.05$} (2)
  (3) edge [bend left=20, above] node {$0.05$} (4)
  (4) edge [bend left=20, below] node {$0.06$} (3)
  (4) edge [bend left=20, above] node {$0.04$} (5)
  (5) edge [bend left=20, below] node {$0.01$} (4)
  (1) edge [loop above] node {$0.97$} (1)
  (2) edge [loop above] node {$0.89$}  (2) 
  (3) edge [loop above] node {$0.90$}  (3)
  (4) edge [loop above] node {$0.90$}  (4)
  (5) edge [loop above] node {$0.99$}  (5);
\end{tikzpicture}}
   \end{center}
   \caption{\label{f:quah_graph_2} Cross-country GDP dynamics as a digraph, updated data}
\end{figure}
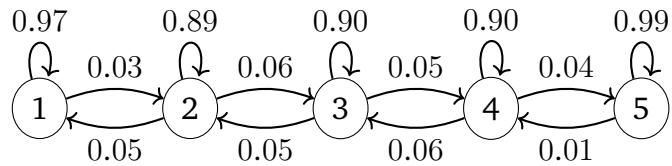

As another example,~\cite{benhabib2019wealth} estimate the following
transition matrix for intergenerational social mobility:

\begin{equation}
    \label{eq:piben}
    P_B :=
    \begin{pmatrix}
      0.222 & 0.222 & 0.215 & 0.187 & 0.081 & 0.038 & 0.029 & 0.006\\
      0.221 & 0.22 & 0.215 & 0.188 & 0.082 & 0.039 & 0.029 & 0.006\\
      0.207 & 0.209 & 0.21 & 0.194 & 0.09 & 0.046 & 0.036 & 0.008\\
      0.198 & 0.201 & 0.207 & 0.198 & 0.095 & 0.052 & 0.04 & 0.009\\
      0.175 & 0.178 & 0.197 & 0.207 & 0.11 & 0.067 & 0.054 & 0.012\\
      0.182 & 0.184 & 0.2 & 0.205 & 0.106 & 0.062 & 0.05 & 0.011\\
      0.123 & 0.125 & 0.166 & 0.216 & 0.141 & 0.114 & 0.094 & 0.021\\
      0.084 & 0.084 & 0.142 & 0.228 & 0.17 & 0.143 & 0.121 & 0.028
    \end{pmatrix}
\end{equation}

Here the states are percentiles of the wealth distribution. In particular,
with the states represented by $1, 2, \ldots, 8$, the corresponding
percentiles are
\begin{center}
    0--20\%,
    20--40\%,
    40--60\%,
    60--80\%,
    80--90\%,
    90--95\%,
    95--99\%,
    99--100\%
\end{center}
Transition probabilities are estimated from US 2007--2009 Survey of Consumer
Finances data. Relative to the highly persistent matrix $P_Q$, less weight on
the principle diagonal suggests more mixing---the influence of initial
conditions is relatively short-lived.

Additional insight about the dynamics can be obtained from a contour plot of
the matrix $P_B$, as in Figure~\ref{f:markov_matrix_visualization}.  Here
$P_B$ is plotted as a heat map after rotating it 90 degrees anticlockwise. The
rotation is so that the dynamics are comparable to the 45 degree diagrams
often used to understand discrete time dynamic systems.  A vertical line from
state $x$ corresponds to the next period conditional distribution $P(x,
\cdot)$.

In this case, we see that, for example, lower states are quite persistent,
whereas households in the highest state tend to fall back towards the middle.

\begin{figure}
   \begin{center}
       \scalebox{0.6}{\includegraphics[trim = 0mm 5mm 0mm 0mm, clip]{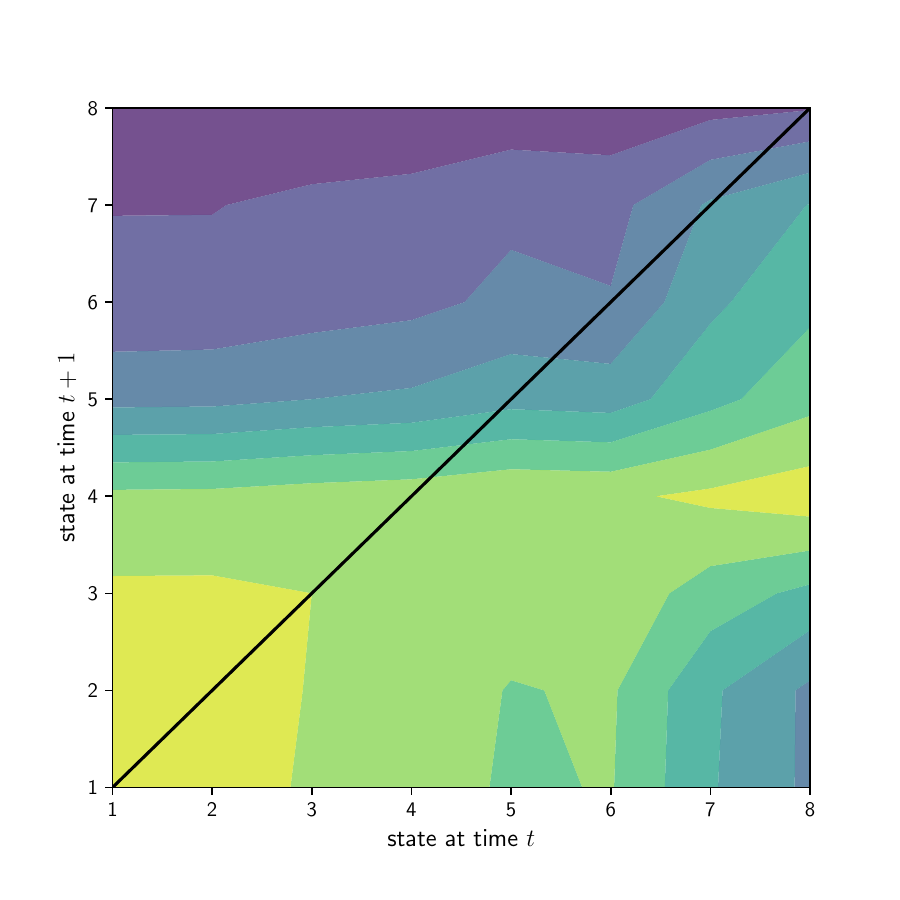}}
   \end{center}
   \caption{\label{f:markov_matrix_visualization} Contour plot of transition matrix $P_B$}
\end{figure}

\begin{Exercise}\label{ex:absorb}
    Let $\mM$ be a finite Markov model with state space $S$ and transition
    matrix $P$.  Show that $U \subset S$ is absorbing (see \S\ref{sss:coms})
    for the digraph $\mM$ if and only if 
    \begin{equation}
        \sum_{y \in U} P(x, y) = 1
        \quad \text{for all} \quad
        x \in U.
    \end{equation}
\end{Exercise}

\subsubsection{Markov Chains}\label{sss:mcs}

Consider a finite Markov model $\mM$ with state space $S$ and transition matrix $P$.
As before, $P(x, y)$ indicates the probability of transitioning from $x$ to
$y$ in one step.  Another way to say this is that, when in state $x$, we
update to a new state by choosing it from $S$ via the distribution $P(x,
\cdot)$.   The resulting stochastic process is called a Markov chain.

We can state this more formally as follows. Let $(X_t)_{t \in \ZZ_+}$ be a
sequence of random variables taking values in $S$.  We say that $(X_t)$ is a
\navy{Markov chain}\index{Markov chain} on $S$ if there exists a stochastic
matrix $P$ on $S$ such that
\begin{equation}
    \label{eq:mcdef}
    \PP \{ X_{t+1} = y \mid X_0, X_1, \ldots, X_t \} = P(X_t, y)
    \quad \text{for all} \quad
    t \geq 0, \; y \in S.
\end{equation}
To simplify terminology, we also call $(X_t)$
\navy{$P$-Markov}\index{$P$-Markov} when it satisfies \eqref{eq:mcdef} .  We
call either $X_0$ or its distribution $\psi_0$ the \navy{initial
condition}\index{Initial condition} of $(X_t)$ depending on context.

The definition of a Markov chain says two things:
\begin{enumerate}
    \item When updating to $X_{t+1}$ from $X_t$, earlier states are not
        required.
    \item The matrix $P$ encodes all of the information required to perform
        the update, given $X_t$.
\end{enumerate}

One way to think about Markov chains is algorithmically: Let $P$ be a
stochastic matrix and let $\psi_0$ be an element of $\dD(S)$.  Now generate $(X_t)$
via Algorithm~\ref{algo:mc}.  The resulting sequence is $P$-Markov with
initial condition $\psi_0$.

\begin{algorithm}
    \DontPrintSemicolon
  set $t=0$ and draw $X_t$ from $\psi_0$ \;
  \While{$t < \infty$}
  {
      draw $X_{t+1}$ from the distribution $P(X_t, \cdot)$   \;
      let $t = t + 1$ \;
  }
  \caption{\label{algo:mc} Generation of $P$-Markov $(X_t)$ with initial
  condition $\psi_0$}
\end{algorithm}

\subsubsection{Simulation}

For both simulation and theory, it is useful to be able to translate 
Algorithm~\ref{algo:mc} into a stochastic difference equation governing the
evolution of $(X_t)_{t \geq 0}$.  We now outline
the procedure, which uses inverse transform sampling (see \S\ref{sss:its}).
For simplicity, we assume that $S = \natset{n}$, with typical elements
$i,j$.   The basic idea is to apply the inverse transform method to each row
of $P$ and then sample by drawing a uniform random variable at each update.

To this end, we set
\begin{equation*}
    F(i, u) := \sum_{j=1}^{n} j \1\{q(i, j-1)  < u \leq q(i, j)\}
    \qquad
    (i \in S, \; u \in (0, 1)),
\end{equation*}
where, for each $i, j  \in S$, the value $q(i, j)$ is defined recursively by 
\begin{equation*}
    q(i, j) := q(i, j-1) + P(i, j)  \quad \text{with } q(i, 0) = 0.
\end{equation*}
Let $U(0, 1)$ represent the uniform distribution on $(0, 1)$ and take
\begin{equation}
    \label{eq:msrs}
    X_{t+1} = F(X_t, U_{t+1})
    \quad \text{where } \;
    (U_t ) \iidsim U(0, 1).
\end{equation}
If $X_0$ is an independently drawn random variable with distribution $\psi_0$
on $S$, then $(X_t)$ is $P$-Markov on $S$ with initial condition $\psi_0$, as
Exercise~\ref{ex:showppc} asks you to show.

\begin{Exercise}\label{ex:showppc}
    Conditional on $X_t = i$, show that, for given $j \in S$,
    \begin{enumerate}
        \item $X_{t+1} = j$ if and only if $U_{t+1}$ lies in the interval
            $(q(i, j-1), q(i, j)]$.
        \item This event has probability $P(i, j)$.
    \end{enumerate}
    Conclude that $X_{t+1}$ in~\eqref{eq:msrs} is a draw from $P(i, \cdot)$.
\end{Exercise}

\begin{Answer}
    Point (i) is immediate from the definition of $F$.  Regarding
    (ii), from $q(i, j) = q(i, j-1) + P(i, j)$ we have $P(i, j) = q(i,
    j) - q(i, j-1)$, which is the length of the interval $(q(i, j-1), q(i, j)]$.
    The probability that $U_{t+1}$ falls in this interval is its length, which
    we just agreed is $P(i,j)$.  The proof is now complete.
\end{Answer}

Each subfigure in Figure~\ref{f:benhabib_mobility_mixing} shows  realizations of two Markov
chains, both generated using the stochastic difference equation
\eqref{eq:msrs}.  The sequences are generated each with its  own independent
sequence of draws $(U_t)$.  The underlying transition matrices are $P_B$
from~\eqref{eq:piben} in the top panel and $P_Q$ from~\eqref{eq:pq} in the
bottom panel.  In both panels, one chain starts from the lowest state and
the other from the highest.  Notice that time series generated by $P_B$ mix faster
than those generated by $P_Q$: the difference in initial states is not a
strong predictor of outcomes after an initial ``burn in'' period.  We discuss
mixing and its connection to stability below.

\begin{figure}
   \begin{center}
       \scalebox{0.7}{\includegraphics{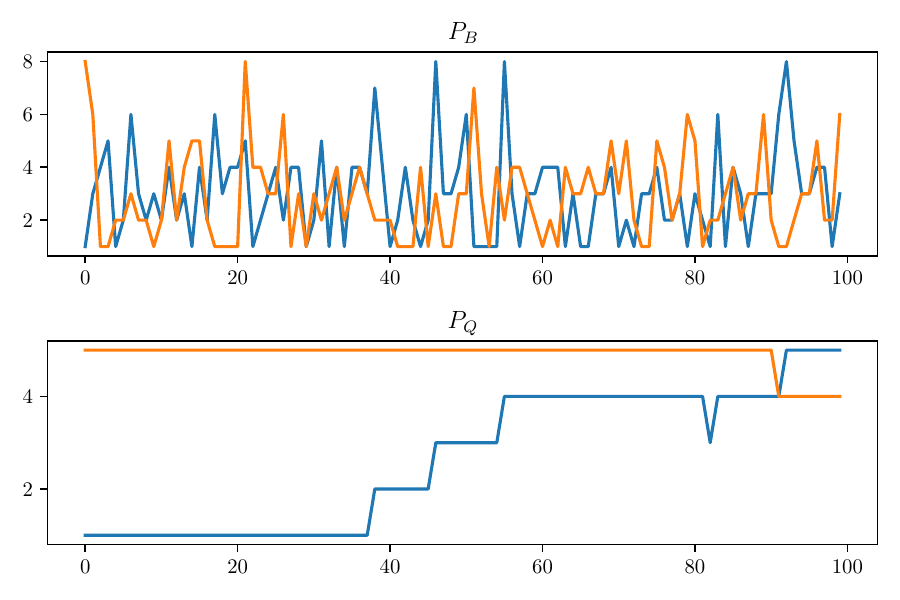}}
   \end{center}
   \caption{\label{f:benhabib_mobility_mixing} Wealth percentile over time}
\end{figure}

\subsubsection{Higher Order Transition Matrices}

Given a finite Markov model $\mM$ with state space $S$ and transition matrix
$P$, define $(P^k)_{k \in \NN}$ by $P^{k+1} = P P^k$ for all $k$, with the
understanding that $P^0 = I =$ the identity matrix. In other words, for each
$k$, the matrix $P^k$ is the $k$-th power of $P$.  If we spell out the matrix
product $P^{k+1} = P P^k$ element-by-element, we get
\begin{equation}
    \label{eq:updatepi}
    P^{k+1}(x,y) := \sum_z  P(x,z) P^k(z,y)
    \qquad (x, y \in S, \; k \in \NN).
\end{equation}

\begin{Exercise}
    Prove that $P^k$ is a stochastic matrix on $S$ for all $k \in \NN$. 
\end{Exercise}

\begin{Answer}
    $P^k$ is stochastic for all $k \in \NN$ by induction and the fact that the
    set of stochastic matrices is closed under multiplication
    (see~\ref{sss:stochmat}).
\end{Answer}

In this context, $P^k$ is called the \navy{$k$-step transition matrix}
corresponding to $P$.  The $k$-step transition matrix has the following
interpretation: If $(X_t)$ is $P$-Markov, then, for any $t, k \in \NN$ and $x,
y \in S$, 
\begin{equation}
    \label{eq:pik}
    P^k(x, y) = \PP\{X_{t + k} = y \given X_t = x\}.
\end{equation}
In other words, $P^k$ provides the $k$-step transition probabilities for the
$P$-Markov chain $(X_t)$, as suggested by its name.

This claim can be verified by induction.  Fix $t \in \NN$ and $x, y \in S$.
The claim is true by definition when $k=1$.  Suppose the claim is also true at
$k$ and now consider the case $k+1$.  By the law of total probability, we have
\begin{equation*}
    \PP\{X_{t+k+1} = y \given X_t = x \}
    = \sum_z \PP\{X_{t+k+1} = y \given X_{t+k} = z\} \PP\{X_{t+k} = z \given X_t = x\}.
\end{equation*}
The induction hypothesis allows us to use~\eqref{eq:pik}, so 
the last equation becomes
\begin{equation*}
    \PP\{X_{t+k+1} = y \given X_t = x \}
    = \sum_z P(z, y) P^k(x, z)  
    = P^{k+1} (x, y).
\end{equation*}
This completes our proof by induction.

A useful identity for the higher order Markov matrices is
\begin{equation}
    \label{eq:chapkol}
    P^{j+k}(x,y) = \sum_z P^k(x,z) P^j(z,y)
    \qquad ((x, y) \in S \times S)
\end{equation}
which holds for any $j, k$ in $\NN$.  This is called the
\navy{Chapman--Kolmogorov equation}\index{Chapman--Kolmogorov equation}. Note
that
\begin{itemize}
    \item \eqref{eq:updatepi} is a special case of \eqref{eq:chapkol} and
    \item \eqref{eq:chapkol} is a special case of \eqref{eq:accip} on
        page~\pageref{eq:accip}, written with different notation.
\end{itemize}

To provide probabilistic intuition for the validity of the Chapman--Kolmogorov
equation, let $X_0 = x$ and let $y \in S$ be given.  Using the law of total
probability again, we have
\begin{equation*}
    \PP\{X_{j+k} = y \given X_0 = x\}
     = \sum_z \PP\{X_{j+k} = y \given X_0 = x, \, X_k = z\} 
        \PP\{X_k = z \given X_0 = x\}
\end{equation*}
By Markov property~\eqref{eq:mcdef}, the future and past are independent given
the present, so 
\begin{equation*}
    \sum_z \PP\{X_{j+k} = y \given X_0 = x, \, X_k = z\}
     = \sum_z \PP\{X_{j+k} = y \given X_k = z\} .
\end{equation*}
As a result of this fact and~\eqref{eq:pik}, the equation before this one can
be rewritten
as~\eqref{eq:chapkol}.

\subsection{Distribution Dynamics}\label{ss:dd}

Let $\mM$ be a finite Markov model with state space $S$ and transition matrix
$P$.  Let $(X_t)$ be $P$-Markov and, for each $t \geq 0$, let $\psi_t \in \dD(S)$ be defined by 
\begin{equation*}
    \psi_t
    := \PP\{X_t = \cdot\}
    = \text{ the distribution of $X_t$}.    
\end{equation*}
The vector $\psi_t$ is called the \navy{marginal distribution}\index{Marginal
distribution} of $X_t$.  While $(X_t)$ is random, the sequence $(\psi_t)$ is
deterministic.  In this section we investigate its dynamics.

\subsubsection{Updating Marginal Distributions}\label{sss:lmarg}

The key idea for this section is that there is a simple link between
successive marginal distributions: by the law of total probability, we have
\begin{equation*}
  \PP \{X_{t+1} = y \}
     = \sum_x \PP \{ X_{t+1} = y \, | \, X_t = x \}
                 \cdot \PP \{ X_t = x \},
\end{equation*}
which can be rewritten as
\begin{equation}
    \label{eq:flbm}
      \psi_{t+1}(y) = \sum_x P(x,y) \psi_t(x)
      \qquad (y \in S). 
\end{equation}
This fundamental expression tells how to update marginal distributions
given the transition matrix $P$.

When each $\psi_t$ is interpreted as a row vector, we can write~\eqref{eq:flbm} as
\begin{equation}
    \label{eq:bauprl}
    \psi_{t+1} = \psi_t P.
\end{equation}
(Henceforth, in expressions involving matrix algebra, distributions are
row vectors unless otherwise stated).  Some authors refer to~\eqref{eq:bauprl}
as the \navy{forward equation} associated with $P$, while $\psi \mapsto \psi
P$ is called the \navy{forward operator}, by analogy with the
Kolmogorov forward equation from continuous time.

Think of \eqref{eq:bauprl} as a difference equation in distribution space.
Iterating backwards, 
\begin{equation*}
    \psi_t 
    = \psi_{t-1} P 
    = (\psi_{t-2} P) P
    = \psi_{t-2} P^2
    = \cdots
    = \psi_0 P^t. 
\end{equation*}
Given any $\psi_0 \in \dD(S)$ and $t \in \NN$, we have
\begin{equation*}
    \psi_0 P^t 
    = \text{ the distribution of $X_t$ given $X_0 \eqdist \psi_0$}.
\end{equation*}

\begin{example}\label{eg:quahpred}
    As an illustration, let's take the matrix $P_Q$ estimated by
    \cite{quah1993empirical} using 1960--1984 data and use $P_Q^t$ to update the
    1985 distribution $t = 2019 - 1985 = 34$ times, in order to obtain a
         prediction for the cross-country income distribution in 2019.
    Figure~\ref{f:quah_gdppc_prediction} shows how this prediction fares
    compared to the realized 2019 distribution, calculated from World Bank GDP
    data.\footnote{Example~\ref{eg:quahpred} is intended as an 
    illustration of the mechanics of updating distributions. While the
    methodology in \cite{quah1993empirical} is thought provoking, it struggles to make plausible
    long run predictions about something as complex as the cross-country income
    distribution.  Indeed, the Kullback--Leibler deviation between the predicted and
    realized 2019 distributions is actually larger than the Kullback--Leibler
    deviation between the 1985 and 2019 distributions.  Evidently, a naive
    estimate ``nothing will change''  model predicts better than Quah's.
    Most of the focus in this text 
        is on short term forecasts and scenarios, where the system is
        approximately stationary after suitable transformations, rather than heroic long term
        predictions where nonstationary change is hard to quantify.}
\end{example}

\begin{figure}
   \centering
   \scalebox{0.7}{\includegraphics{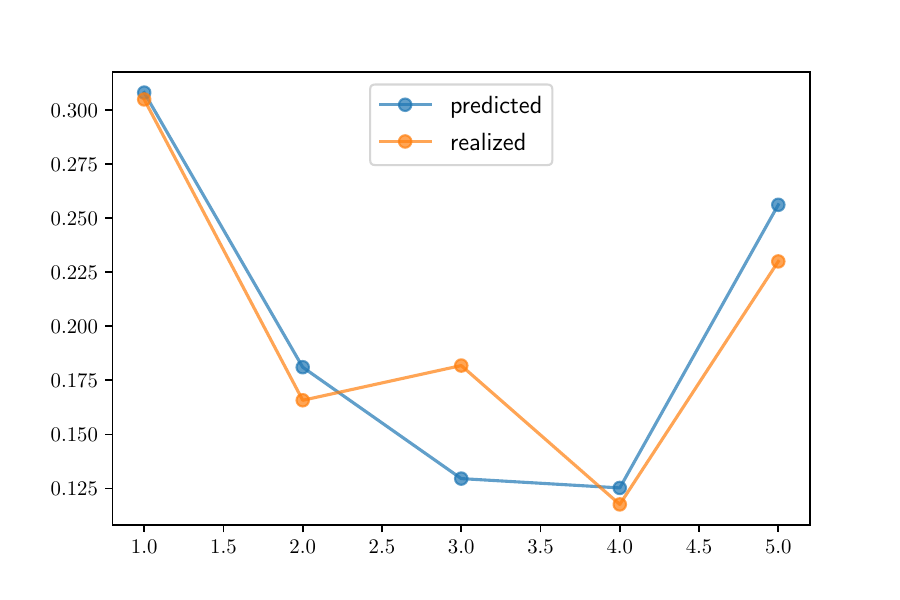}}
   \caption{\label{f:quah_gdppc_prediction} Predicted vs realized cross-country income distributions for 2019}
\end{figure}

\subsubsection{Trajectories in the Long Run}\label{sss:trajlr}

One of the main sub-fields of Markov chain analysis is asymptotics of
distribution sequences.  This topic turns out to be important for network
theory as well.  In \S\ref{ss:stabil} we will investigate asymptotics in
depth.  In this section we build some intuition via simulations.

Figure~\ref{f:simplex_2} shows the trajectory $(\psi P_a^t)$ when $S = \{1, 2,
3\}$, $\psi = (0, 0, 1)$, and $P_a$ is the transition matrix displayed in
\eqref{eq:pia}.  The blue triangle is the unit simplex in $\RR^3_+$, consisting of
all row vectors $\psi \in \RR^3$ such that $\psi \geq 0$ and $\psi \1 = 1$.
The unit simplex can be identified with $\dD(S)$ when $S = \{1, 2, 3\}$. Red
dots in the figure are distributions in the sequence $(\psi P_a^t)$ for $t=0,
\ldots, 20$.
Figure~\ref{f:simplex_3} shows distribution dynamics for
$P_a$ that start from initial condition $\psi = (0, 1/2, 1/2)$.  

It seems that both of the sequences are converging.  This turns out to be the
case---the black dot in the figures is the limit of both sequences and,
moreover, this point is a stationary distribution for $P_a$, as defined in
\S\ref{sss:stochmat}.  In fact we can---and will---also show that no other
stationary distribution exists, and that $\psi P_a^t$ converges to the
stationary distribution regardless of the choice of $\psi \in \dD(S)$.
This is due to certain properties of $P_a$, related to connectivity and
aperiodicity.

\begin{figure}
   \begin{center}
         \scalebox{0.45}{\includegraphics[trim = 0mm 15mm 0mm 0mm, clip]{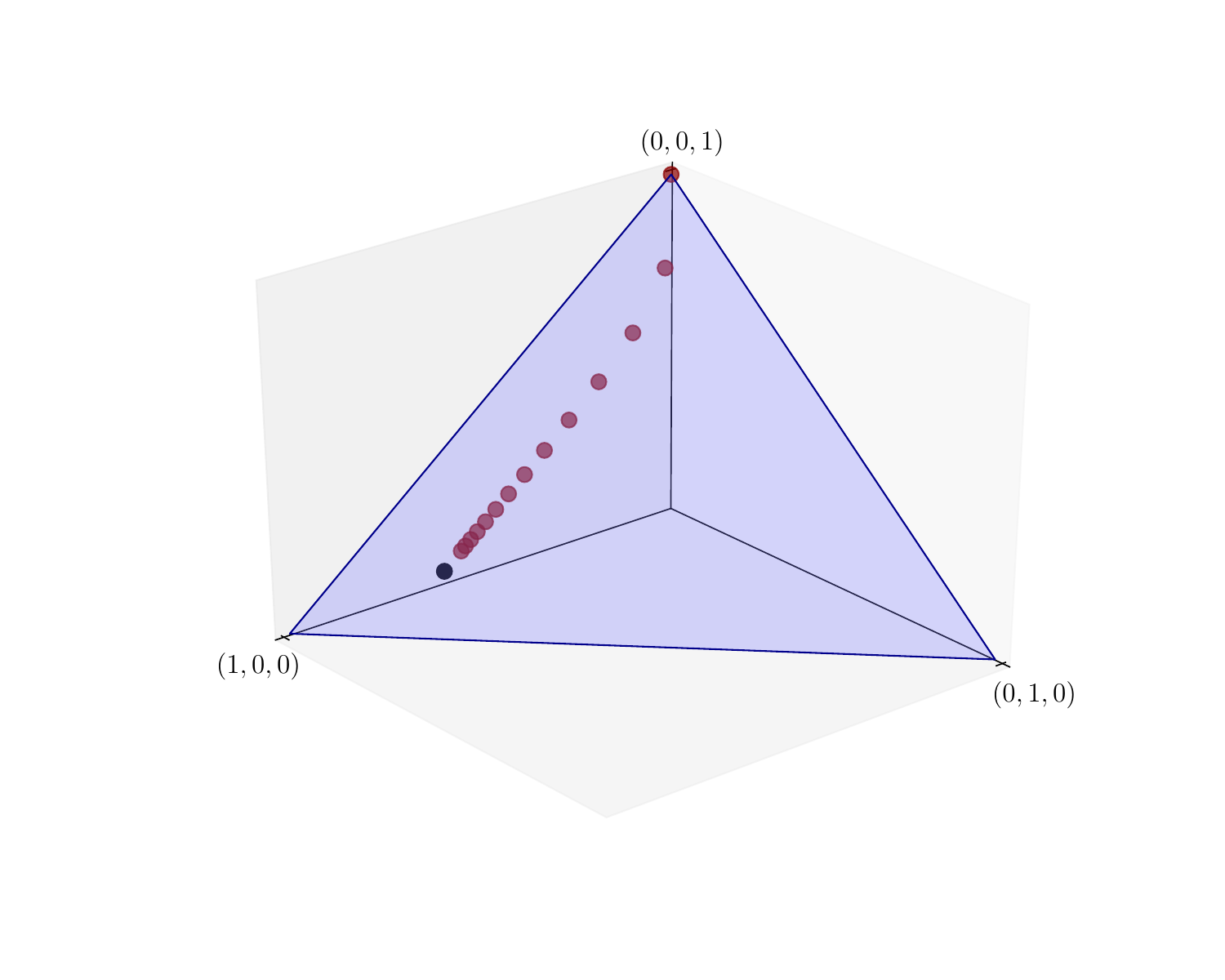}}
   \end{center}
   \caption{\label{f:simplex_2} A trajectory from $\psi_0 = (0, 0, 1)$}
\end{figure}

\begin{figure}
   \begin{center}
       \scalebox{0.45}{\includegraphics[trim = 0mm 15mm 0mm 0mm, clip]{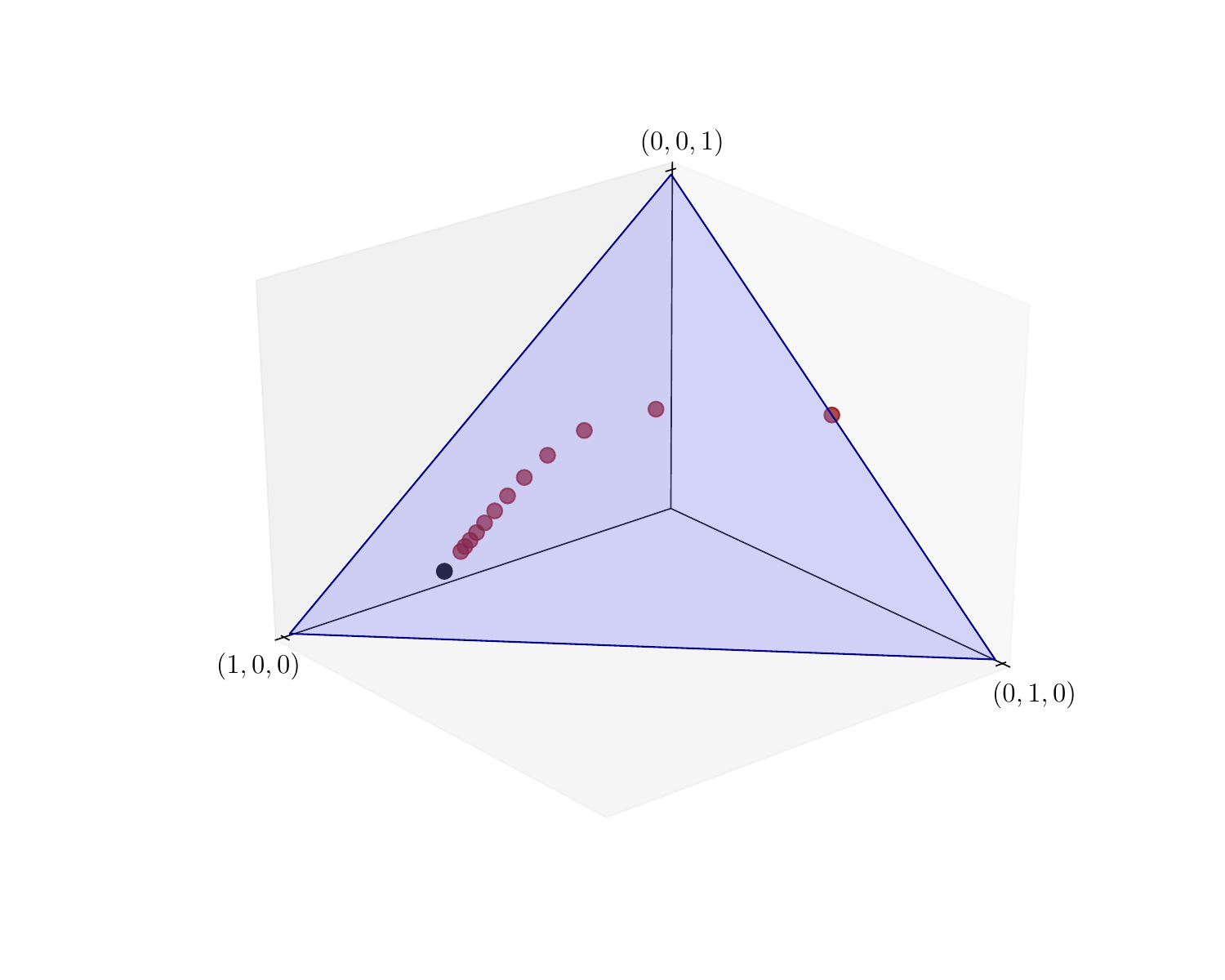}}
   \end{center}
   \caption{\label{f:simplex_3} A trajectory from $\psi_0 = (0, 1/2, 1/2)$}
\end{figure}


Figure~\ref{f:benhabib_mobility_dists} provides  another view of a distribution
sequence, this time generated from the matrix $P_B$ in \eqref{eq:piben}.  The initial
condition $\psi_0$ was uniform.  Each distribution shown was
calculated as $\psi P_Q^t$ for different values of $t$. The distribution
across wealth classes converges rapidly to what appears to be a long run
limit. Below we confirm that this is so, and that the limit is independent of
the initial condition.  The rapid rate of convergence is due to the high level
of mixing in the transition dynamics.

\begin{figure}
   \centering
   \scalebox{.6}{\includegraphics{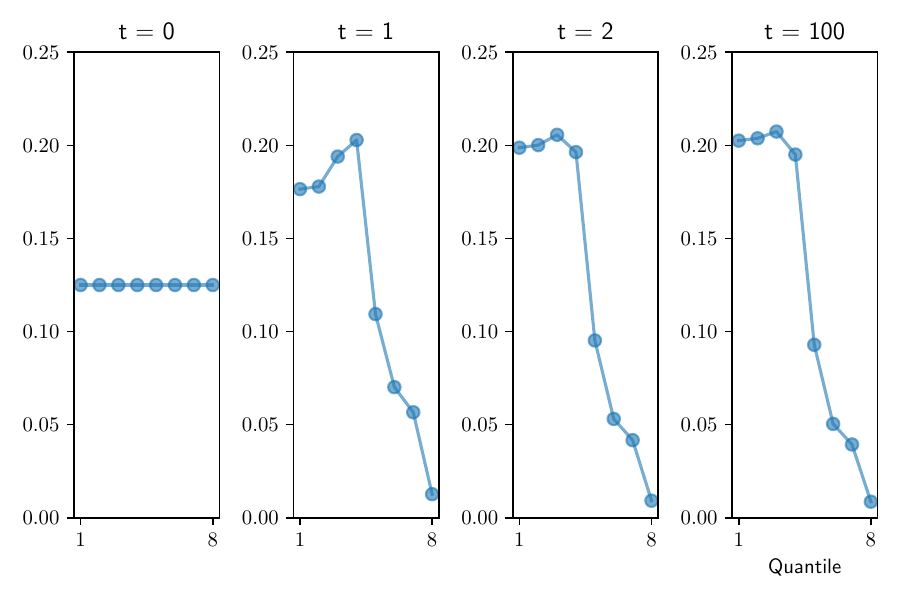}}
   \caption{\label{f:benhabib_mobility_dists} Distribution projections from $P_B$}
\end{figure}

\subsection{Stationarity}\label{ss:station}

In this section we examine stationary distributions and their properties.
As we will see, stationary distributions can be regarded as 
steady states for the evolution of the sequence of marginal distributions.
(Later, when we study ergodicity, stationary distributions will acquire another
important interpretation.)

\subsubsection{Stationary Distributions}\label{sss:statdist}\index{Stationary distribution}

Recall from \S\ref{sss:stochmat} that if $P$ is a stochastic matrix and 
$\psi \in \RR^n_+$ is a row vector such that $\psi \1 = 1$ and $\psi P =
\psi$, then $\psi$ is called stationary for $P$.  So now let $\mM$ be a finite Markov model with state space $S$ and transition matrix $P$.
Translating to the notation of Markov chains, a distribution $\psi^* \in
\dD(S)$ is \navy{stationary}\index{Stationary} for $\mM$ if 
\begin{equation*}
    \psi^*(y) = \sum_{x \in S} P(x,y) \psi^*(x)
    \quad \text{for all } 
    y \in S.
\end{equation*}
We also write this expression as $\psi^* = \psi^* P$ and understand $\psi^*$
as a fixed point of the map $\psi \mapsto \psi P$ that updates distributions
(cf., Equation \eqref{eq:bauprl}).

If $X_t \eqdist \psi^*$, then, for any $j \in \NN$, the fixed point
property implies that $X_{t+j} \eqdist \psi^* P^j = \psi^*$.  
Hence
\begin{equation*}
    X_t \eqdist \psi^* \implies  X_{t+j} \eqdist \psi^* \; \text{ for all } j \in \NN.
\end{equation*}

\begin{example}\label{eg:empl}
    Suppose workers are hired at rate $\alpha$ and fired at rate
    $\beta$, transitioning between unemployment and employment according to 
    \begin{equation}\label{eq:pwmat}
        P_w = 
        \begin{pmatrix}
            1-\alpha & \alpha \\
            \beta & 1-\beta 
        \end{pmatrix}.
    \end{equation}
    In \S\ref{sss:wdii} we showed $\alpha + \beta > 0$ implies $\psi = (\beta
    / (\alpha + \beta), \alpha / (\alpha + \beta))$ is a dominant left
    eigenvector for $P_w$.  Since $r(P_w)=1$, the Perron--Frobenius Theorem
    tells us that $\psi$ is
    stationary for $P_w$.  One implication is that, if the distribution of
    workers across the two employment states is given by $\psi$ and updates
    obey the dynamics encoded in $P_w$, then no further change is observed in
    the unemployment rate.
\end{example}

\begin{example}
    The black dot in each of Figures~\ref{f:simplex_2}--\ref{f:simplex_3},
    indicating the limit of the marginal sequences, is a stationary
    distribution for the Markov matrix $P_a$ displayed in~\eqref{eq:pia},
    page~\pageref{eq:pia}.  We discuss its calculation in \S\ref{sss:mccomp}.
\end{example}

\begin{Exercise}
    Let $\mM$ be a finite Markov model with state space $|S|=n$ and transition
    matrix $P$.  Let $\psi
    \equiv 1/n$ be the uniform distribution on $S$.  Prove that $\psi$ is
    stationary for $P$ if and only if $P$ is \navy{doubly stochastic} (i.e.,
    has unit column sums as well as unit row sums).
\end{Exercise}

In Figures~\ref{f:simplex_2}--\ref{f:simplex_3} all trajectories converge
towards the stationary distribution.  Not all stationary distributions have
this ``attractor'' property, and in general there can be many stationary
distributions.  The next example illustrates.

\begin{Exercise}\label{ex:msdi}
    Let $\mM = (S, E, p)$ be a finite Markov model with $(x, y) \in E$ if and
    only if $x=y$.
    Describe the implied weight function and corresponding transition matrix.  Show
    that every distribution in $\dD(S)$ is stationary for $\mM$.
\end{Exercise}

\begin{Answer}
    Let $\mM$ be as described. In this setting, the requirement that the
    transition matrix $P$ is stochastic implies that $p(x,x) = 1$ and $P(x, y)
    = \1\{x = y\}$. Thus, $P = I$, the $n \times n$ identity matrix. Every
    distribution is stationary because $\psi I = \psi$ for all $\psi \in
    \dD(S)$.
\end{Answer}

\subsubsection{Existence and Uniqueness}\label{sss:statexist}

From the Perron--Frobenius Theorem we easily obtain the following fundamental result.

\begin{theorem}[Existence and Uniqueness of Stationary Distributions]\label{t:kb}
    Every finite Markov model $\mM = (S, E, p)$ has at least one stationary
    distribution $\psi^*$ in $\dD(S)$.  If the digraph $(S, E)$ is strongly connected, 
    then $\psi^*$ is unique and everywhere positive on $S$.
\end{theorem}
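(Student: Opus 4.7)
The plan is to derive both parts as direct consequences of the Perron--Frobenius theorem (Theorem~\ref{t:pf}) applied to the transition matrix $P$ of $\mM$, after first verifying that $r(P)=1$.

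First I would establish that $r(P) = 1$. Since $P$ is stochastic, $P \1 = \1$, so $1$ is an eigenvalue of $P$ with right eigenvector $\1$; hence $r(P) \geq 1$. The reverse inequality $r(P) \leq 1$ follows immediately from the row-sum bound in Lemma~\ref{l:rscsbounds}, since every row of $P$ sums to $1$. Thus $r(P)=1$.

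For existence, I would apply the first conclusion of the Perron--Frobenius theorem to $P$: since $P \geq 0$, there exists a nonzero, nonnegative left eigenvector $\epsilon \in \RR^n_+$ with $\epsilon^\top P = r(P) \epsilon^\top = \epsilon^\top$. Since $\epsilon$ is nonzero and nonnegative, $\epsilon \1 > 0$, so I can define $\psi^* := \epsilon / (\epsilon \1) \in \dD(S)$. Then $\psi^* P = \psi^*$ and $\psi^* \1 = 1$, so $\psi^*$ is a stationary distribution.

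For uniqueness and positivity when $(S, E)$ is strongly connected, I would invoke Theorem~\ref{t:sconir} to conclude that the adjacency matrix of $\mM$ -- which is precisely $P$ -- is irreducible. The irreducibility clauses of the Perron--Frobenius theorem then yield two facts: (i) the dominant left eigenvector is everywhere strictly positive, so $\psi^* \gg 0$, and (ii) $r(P) = 1$ is a simple eigenvalue, hence its eigenspace is one-dimensional. Consequently, any stationary distribution must be a scalar multiple of $\psi^*$, and the normalization $\psi \1 = 1$ pins down that scalar, giving uniqueness.

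No step strikes me as a serious obstacle; the main thing to get right is the clean bookkeeping that connects the three objects at play -- the weighted digraph $\mM$, its adjacency matrix $P$, and the Perron--Frobenius eigenvector -- and in particular invoking Theorem~\ref{t:sconir} at exactly the right point to translate strong connectedness of $(S,E)$ into irreducibility of $P$, so that the stronger conclusions of Theorem~\ref{t:pf} become available.
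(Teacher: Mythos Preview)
Your proposal is correct and follows essentially the same approach as the paper. The paper's proof cites Exercise~\ref{ex:sm_sr1} for existence (whose solution does exactly what you do: $r(P)=1$ via Lemma~\ref{l:rscsbounds}, then Perron--Frobenius for the left eigenvector, then normalize), and then invokes Theorem~\ref{t:sconir} followed by the irreducibility clauses of Perron--Frobenius for uniqueness and positivity---precisely as you outline.
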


\begin{proof}
    Let $\mM$ be a finite Markov model.
    Since the corresponding adjacency matrix $P$ is stochastic, existence follows from
    Exercise~\ref{ex:sm_sr1} in \S\ref{sss:stochmat}.  By
    Theorem~\ref{t:sconir} on page~\pageref{t:sconir}, strong connectedness of
    $\mM$ implies irreducibility of $P$.  When $P$ is irreducible,
    uniqueness and everywhere positivity of the stationary distribution
    follows from the Perron--Frobenius Theorem (page~\pageref{t:pf}).
\end{proof}

The basic idea behind the uniqueness part of Theorem~\ref{t:kb} is as
follows:  Suppose to the contrary that $\mM$ is strongly connected and yet two
distinct stationary distributions $\psi^*$ and $\psi^{**}$ exist in $\pP(S)$.
Since a $P$-Markov chain started at $\psi^*$ always has marginal distribution
$\psi^*$ and likewise for $\psi^{**}$, different initial conditions lead to
different long run outcomes.  This contradicts strong connectedness in the
following way: strong connectedness implies that both chains traverse the
whole state space.  Moreover, being Markov chains, they forget the past once
they arrive at any state, so the starting draws should eventually be
irrelevant.  

(Actually, the story in the last paragraph is incomplete.  Initial
conditions \emph{can} determine long run outcomes under strong connectedness
in one sense: a ``periodic'' Markov model can traverse the whole space but
\emph{only at specific times} that depend on the starting location.  If we
rule out such periodicity, we get results that are even stronger than
Theorem~\ref{t:kb}.  This topic is treated in \S\ref{ss:stabil}.)




\subsubsection{Brouwer's Theorem}\label{ss:brouwer}

Another way to prove the existence claim in Theorem~\ref{t:kb} is via the
famous fixed point theorem of L.\ E.\ J.\ Brouwer (1881--1966).  

\begin{theorem}[Brouwer]\label{t:brouwer}
    If $C$ is a convex compact subset of $\RR^n$
    and $G$ is a continuous self-map on $C$, then $G$ has at least one
    fixed point in $C$.
\end{theorem}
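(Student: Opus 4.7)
The plan is to prove Brouwer's theorem via Sperner's Lemma, a combinatorial statement about colored triangulations of simplices. This route is elementary, requires only compactness and continuity arguments beyond the combinatorics, and fits the discrete flavor of the surrounding material.

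First, I would reduce the problem to the case where $C$ is the standard $n$-simplex $\Delta^n := \{x \in \RR^{n+1}_+ : \sum_i x_i = 1\}$. For this, I use the fact that any nonempty closed convex set $C \subset \RR^m$ is a retract of any larger simplex $\Delta \supset C$ via the nearest-point projection $\pi \colon \Delta \to C$, which is well defined and continuous because $C$ is closed and convex. Replacing $G$ by $\tilde G := G \circ \pi$ produces a continuous self-map on $\Delta$ whose fixed points lie in $C$ and coincide with the fixed points of $G$. (A homeomorphism is then used to pass between $\Delta$ and $\Delta^n$.)

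Second, I would state and prove Sperner's Lemma: for any simplicial subdivision $T$ of $\Delta^n$, any labeling of the vertices of $T$ by $\{0, 1, \ldots, n\}$ satisfying the boundary condition that a vertex lying on the face spanned by $\{e_{i_1}, \ldots, e_{i_k}\}$ receives a label in $\{i_1, \ldots, i_k\}$ admits at least one fully-labeled small simplex. The proof is by induction on $n$, using a parity argument that counts ``doors'' between adjacent subsimplices. With Sperner's Lemma in hand, I would construct, for $\tilde G \colon \Delta^n \to \Delta^n$, a Sperner labeling of each subdivision by assigning to each vertex $v$ an index $\ell(v) \in \{i : \tilde G(v)_i \leq v_i,\ v_i > 0\}$; this set is nonempty because both $v$ and $\tilde G(v)$ lie on the simplex. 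Applying Sperner's Lemma to a sequence of subdivisions with mesh tending to zero yields a sequence of fully-labeled simplices; compactness extracts a convergent subsequence collapsing to a point $x^* \in \Delta^n$. Continuity of $\tilde G$ then forces $\tilde G(x^*)_i \leq x^*_i$ for every $i$, and since both vectors sum to one this gives $\tilde G(x^*) = x^*$.

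The main obstacle will be the combinatorial proof of Sperner's Lemma, whose inductive step requires a careful parity-counting over $(n-1)$-faces shared between adjacent subsimplices and is the only step not reducible to standard convex analysis or compactness. The auxiliary pieces --- constructing the retraction onto $C$, checking the Sperner boundary condition for the labeling $\ell$, and extracting a convergent subsequence of shrinking simplices --- are all routine in $\RR^n$.
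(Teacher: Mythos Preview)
Your outline is a correct and standard route to Brouwer's theorem via Sperner's Lemma: the reduction to the simplex via nearest-point projection is valid (the projection onto a closed convex set is nonexpansive, hence continuous, and any fixed point of $G\circ\pi$ must land in $C$ and fix $G$); the Sperner labeling $\ell(v)\in\{i:\tilde G(v)_i\le v_i,\ v_i>0\}$ is nonempty by the pigeonhole argument you indicate and automatically satisfies the boundary condition; and the mesh-to-zero compactness extraction is the usual closing step.

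However, there is nothing to compare against: the paper does not actually prove Theorem~\ref{t:brouwer}. It states the result, remarks that ``the proof in higher dimensions is challenging,'' and defers to \cite{aliprantis1999border}. The only argument given in the text is the one-dimensional case, assigned as an exercise and handled via the intermediate value theorem applied to $f(x)=Gx-x$ on $[0,1]$. Your proposal therefore goes well beyond what the paper attempts, supplying a complete combinatorial proof where the paper offers only a citation.
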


The proof of Theorem~\ref{t:brouwer} in one dimension is not difficult,
while the proof in higher dimensions is challenging. See, for example,
\cite{aliprantis1999border}.

\begin{Exercise}
    Prove Brouwer's fixed point theorem for the case $C=[0,1]$ by
    applying the intermediate value theorem.
\end{Exercise}

\begin{Answer}
    Let $G$ be a continuous function from $[0,1]$ to itself and set $f(x) := Gx - x$.
    Since $G$ is a self-map on $[0,1]$ we have $f(0) = G0 \geq 0$ and $f(1) =
    G1 - 1 \leq 0$, so $f$ is a continuous function on $[0,1]$ satisfying $f(0)
    \geq 0$ and $f(1) \leq 0$.  Existence of an $x$ satisfying $f(x)=0$
    follows from the intermediate value theorem.  The same $x$ is a fixed
    point of $G$.
\end{Answer}

One advantage of Brouwer's fixed point theorem is that its conditions are
quite weak.  One disadvantage is that it provides only existence, without
uniqueness or stability.  Figure~\ref{f:three_fixed_points}
provides an example of how multiple fixed points can coexist under the
conditions of the theorem.\footnote{There are many useful extensions to Brouwer's theorem, such as one for
set-valued functions due to~\cite{kakutani1941generalization}. These
results have many applications in economics (see,
for example, \cite{nash1950equilibrium}).
Another paper, due to~\cite{schawder1930fixpunktsatz}, extends Brouwer's result to
infinite dimensional spaces.}

\begin{Exercise}
    Prove the first part of Theorem~\ref{t:kb} (existence of a stationary
    distribution) using Brouwer's fixed point theorem. 
\end{Exercise}

\begin{Answer}
    Let $P$ be a Markov matrix on
    finite set $S$. Suppose in
    particular that $S$ has $d$ elements, so we can identify functions in
    $\RR^S$ with vectors in $\RR^d$ and $\dD(S)$ with the unit simplex
    in $\RR^d$. The set $\dD(S)$ is a closed, bounded and convex subset of
    $\RR^d$ that $P$ maps into itself.  As a linear matrix operation, the map
    $\psi \mapsto \psi P$ is continuous.  Existence of a fixed point now
    follows from Brouwer's fixed point theorem (p.~\pageref{t:brouwer}).
\end{Answer}

\subsubsection{Computation}\label{sss:mccomp}

Let's consider how to compute stationary distributions.  While
$\psi^* P = \psi^*$ is a finite set of linear equations that we might hope to solve
directly for $\psi^*$, there are problems with this idea. For example,
$\psi^* = 0$ is a solution that fails to lie in $\dD(S)$.

To restrict the solution to $\dD(S)$ we proceed as follows: Suppose
$|S|=n$ and note that row vector $\psi \in \dD(S)$ is
stationary if and only if $\psi(I - P) = 0$, where $I$ is the $n \times n$
identity matrix.  Let $\1_n$ be the $1 \times n$ row vector $(1,\ldots,1)$.
Let $\1_{n \times n}$ be the $n \times n$ matrix of ones.  

\begin{Exercise}\label{ex:statiffc}
    Consider the linear system
    \begin{equation}
        \label{eq:hsbi}
        \1_n = \psi (I - P + \1_{n \times n})
    \end{equation}
    where $\psi$ is a row vector.  Show that
    \begin{enumerate}
        \item every solution $\psi$ of \eqref{eq:hsbi} lies in $\dD(S)$ and
        \item $\psi$ is stationary for $P$ if and only if \eqref{eq:hsbi}
            holds.
    \end{enumerate}
\end{Exercise}

Solving the linear system~\eqref{eq:hsbi} produces a stationary distribution
in any setting where the stationary distribution is unique.  When this is not the case,
problems can arise.

\begin{Exercise}
    Give a counterexample to the claim that $(I - P + \1_{n \times n})$ is
    always nonsingular when $P$ is a stochastic matrix.
\end{Exercise}

There are also graph-theoretic algorithms for computing all of the stationary
distributions of an arbitrary stochastic matrix.  (The Python and Julia
libraries \texttt{QuantEcon.py} and \texttt{QuantEcon.jl} have efficient
implementations of this type.)


\section{Asymptotics}

In this section we turn to long run properties of Markov chains, including
ergodicity.  With these ideas in hand, we will then investigate the evolution
of information over social networks.

\subsection{Ergodicity}\label{ss:ergo} 

Let's begin with the fascinating and important topic of ergodicity. The simplest
way to understand ergodicity is as an extension of the law of large numbers
from {\sc iid} sequences to more general settings, where the {\sc iid}
property holds only in a limiting sense.

\subsubsection{Harmonic Functions}\label{sss:hfuncs}

Let $\mM$ be a finite Markov model with state space $S$ and transition
matrix $P$. Fix $h \in \RR^S$ and let 
\begin{equation}\label{eq:pexp}
    Ph(x) := \sum_{y \in S} P(x, y) h(y)
    \qquad (x \in S).
\end{equation}
If $h$ is understood as a column vector in $\RR^{|S|}$, then $Ph(x)$ is just
element $x$ of the vector $Ph$. The map $h \mapsto Ph$ is sometimes called the
``conditional expectation operator,'' since, given that $P(x, y) =
\PP\{X_{t+1}=y \given X_t = x\}$, we have
\begin{equation*}
    \sum_{y \in S} P(x, y) h(y)  
    = \EE [ h(X_{t+1}) \given X_t = x ]
\end{equation*}

A function $h \in \RR^S$ is called \navy{$P$-harmonic}\index{Harmonic} if $P h
= h$ pointwise on $S$.  Thus, $P$-harmonic functions are fixed
points of the conditional expectation operator $h \mapsto Ph$.

If $h$ is $P$-harmonic and $(X_t)$ is $P$-Markov, then
\begin{equation}\label{eq:harmogale}
    \EE \, [ h(X_{t+1}) \given X_t ]
    = (P h)(X_t)
    = h(X_t).
\end{equation}
(A stochastic process with this property---i.e., that the current value is the best
predictor of next period's value---is called a \emph{martingale}\index{Martingale}.
Martingales are one of the foundational concepts in probability, statistics
and finance.)

\begin{example}\label{eg:abshar}
    Let $\mM$ be a finite Markov model with state space $S$ and transition
    matrix $P$. If $A \subset S$ and $A^c := S \setminus A$ are both absorbing
    for $\mM$, then $\1_A$ and $\1_{A^c}$ are both $P$-harmonic.  To see this,
    we apply Exercise~\ref{ex:absorb} to obtain
    \begin{equation*}
        (P \1_A)(x) 
        = \sum_{y \in S} P(x, y) \1_A(y)
        = \sum_{y \in A} P(x, y) 
        = \begin{cases}
            1 & \text{ if } x \in A
            \\
            0 & \text{ if } x \in A^c.
        \end{cases}
    \end{equation*}
    In other words, $P \1_A = \1_A$. A similar argument gives $P \1_{A^c} =
    \1_{A^c}$.
\end{example}

\begin{Exercise}
    Let $\mM$ be a finite Markov model with state space $S$ and transition
    matrix $P$. Show that every constant function in
    $\RR^S$ is $P$-harmonic. 
\end{Exercise}

\subsubsection{Definition and Implications}\label{sss:ergdef}

Let $\mM$ be a finite Markov model with state space $S$ and transition matrix
$P$. We know that every constant function in $\RR^S$ is $P$-harmonic.  We call
$\mM$ \navy{ergodic}\index{Ergodic} if the
only $P$-harmonic functions in $\RR^S$ are the constant functions.

\begin{example}\label{eg:mciid0}
    If $P(x, y) = \phi(y)$ for all $x, y$ in $S$, where $\phi$ is some fixed
    distribution on $S$, then $P$ generates an {\sc
    iid} Markov chain, since the next period draw has no dependence on the
    current state.  Any model $\mM$ with an adjacency matrix of this type
    is ergodic. Indeed, if $P$ has this property and $h
    \in \RR^S$ is $P$-harmonic, then, for any $x \in S$, we have
    \begin{equation*}
        h(x) = (Ph)(x) = \sum_y P(x, y) h(y) = \sum_y \phi(y) h(y)
    \end{equation*}
    Hence, $h$ is constant.  This proves that any $P$-harmonic function is
    a constant function.
\end{example}

\begin{example}\label{eg:empl2}
    A laborer is either unemployed (state $1$) or employed (state $2$).  In
    state $1$ he is hired with probability $\alpha$.  In state $2$ he is fired
    with probability $\beta$.  The corresponding Markov model $\mM$ has state and
    transition matrix given by 
    \begin{equation*}
        S = \{1, 2\}
        \quad \text{and} \quad
        P_w =
        \begin{pmatrix}
            1-\alpha & \alpha \\
            \beta & 1-\beta 
        \end{pmatrix}.
    \end{equation*}
    The statement $P_w h = h$ becomes
    \begin{equation*}
        \begin{pmatrix}
            1-\alpha & \alpha \\
            \beta & 1-\beta 
        \end{pmatrix}
        \begin{pmatrix}
            h(1) \\
            h(2)
        \end{pmatrix}
        =
        \begin{pmatrix}
            h(1) \\
            h(2)
        \end{pmatrix}
    \end{equation*}
    The first row is $(1-\alpha) h(1) + \alpha h(2) = h(1)$, or $\alpha h(1) =
    \alpha h(2)$.  Thus, $\mM$ is ergodic whenever $\alpha > 0$.
    By a similar argument, $\mM$ is ergodic whenever $\beta > 0$.
\end{example}

\begin{example}\label{eg:abpo}
    Let $\mM$ be any finite Markov model with state space $S$ and transition
    matrix $P$.  It is immediate from
    Example~\ref{eg:abshar} that if $S$ can be partitioned into two
    nonempty absorbing sets, then $\mM$ is not ergodic.  Hence, the poverty
    trap model in Figure~\ref{f:poverty_trap_2} is not ergodic.
    Similarly, if $\alpha=\beta=0$ in the matrix $P_w$ discussed in
    Example~\ref{eg:empl2}, then $P_w = I$ and each state is a disjoint
    absorbing set.  Ergodicity fails.
\end{example}

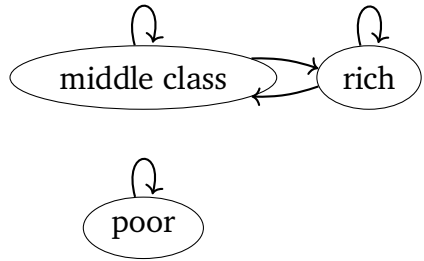
\begin{figure}
   \centering
   \begin{tikzpicture}
  \node[ellipse, draw] (0) at (0, 2) {middle class};
  \node[ellipse, draw] (1) at (3, 2) {rich};
  \node[ellipse, draw] (2) at (0, 0) {poor};
  \draw[->, thick, black]
  (0) edge [bend left=10, above] node {} (1)
  (1) edge [bend left=10, below] node {} (0)
  (0) edge [loop above] node {} (0);
  \draw[->, thick, black]
  (1) edge [loop above] node {} (1);
  \draw[->, thick, black]
  (2) edge [loop above] node {} (2);
\end{tikzpicture}  
   \caption{\label{f:poverty_trap_2} A poverty trap}
\end{figure}

The examples discussed above suggest that, for a finite Markov model $\mM$,
ergodicity depends on the connectivity properties of the digraph. The next
result confirms this.

\begin{proposition}\label{p:iimerg0}
    If $\mM$ is strongly connected, then $\mM$ is ergodic.
\end{proposition}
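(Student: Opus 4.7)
The plan is to reduce ergodicity to the simplicity of the Perron root of $P$, via the graph-theoretic equivalence already established.

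First I would observe that the constant function $\1$ is always $P$-harmonic because $P$ is stochastic: $P\1 = \1$. So the one-dimensional subspace of constant functions sits inside the eigenspace of $P$ corresponding to the eigenvalue $1$. Ergodicity of $\mM$ is therefore equivalent to the claim that this eigenspace is exactly one-dimensional, i.e.\ that $1$ is a simple eigenvalue of $P$ with eigenspace spanned by $\1$.

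Next I would invoke Theorem~\ref{t:sconir}, which tells us that strong connectedness of $\mM$ is equivalent to irreducibility of the adjacency (transition) matrix $P$. Combined with Lemma~\ref{l:rscsbounds} (or the computation in Exercise~\ref{ex:sm_sr1}), any stochastic matrix has $r(P)=1$. The Perron--Frobenius theorem (Theorem~\ref{t:pf}) then applies to the irreducible nonnegative matrix $P$ and yields that $r(P)=1$ is a \emph{simple} eigenvalue, so its eigenspace in $\CC^{|S|}$ is one-dimensional.

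To finish, I would show that this one-dimensional complex eigenspace, when intersected with $\RR^S$, coincides with the constant functions. Since $P$ is real and $1$ is a real eigenvalue, any eigenvector's real and imaginary parts are themselves real eigenvectors, so the real eigenspace is also one-dimensional. It already contains the nonzero vector $\1$, hence it equals $\mathrm{span}(\1)$. Therefore every $P$-harmonic $h\in\RR^S$ is a scalar multiple of $\1$, i.e.\ constant, and $\mM$ is ergodic.

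The only real subtlety — and the step I would flag as the main obstacle if one wants a fully self-contained argument — is the passage from ``simple eigenvalue'' (algebraic multiplicity one) to ``one-dimensional eigenspace'' (geometric multiplicity one). This inequality $\dim\ker(P-I)\leq \mathrm{alg.\ mult.}(1)=1$ is standard linear algebra and does not require diagonalizability of $P$, so the reduction goes through without additional hypotheses.
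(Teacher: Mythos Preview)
Your proof is correct, but the paper proves this proposition by a different, more elementary route. Rather than invoking Perron--Frobenius, the paper gives a direct maximum-principle argument: take any $P$-harmonic $h$, let $x$ be a point where $h$ attains its maximum $m$, and suppose some $y$ has $h(y)<m$. Strong connectedness gives a $k$ with $P^k(x,y)>0$, and then the identity $h=P^k h$ forces $m=h(x)=\sum_z P^k(x,z)h(z)<m$, a contradiction. So $h$ must be constant.

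Your approach is cleaner conceptually---it identifies ergodicity with simplicity of the eigenvalue $1$ and reads the result straight off Theorem~\ref{t:pf}---and it makes the link to the spectral theory developed earlier in the text explicit. The paper's argument, on the other hand, is self-contained and avoids the (nontrivial) Perron--Frobenius machinery entirely; it also generalizes more readily to infinite state spaces where the finite-dimensional spectral statement is not available. Both are perfectly valid here.
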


\begin{proof}
    Let $\mM$ be a strongly connected Markov model with state space $S$ and
    transition matrix $P$.  Let $h$ be $P$-harmonic and let $x \in
    S$ be the maximizer of $h$ on $S$.  Let $m = h(x)$.  Suppose there exists
    a $y$ in $S$ with $h(y) < m$.  Since $\mM$ is strongly connected, we
    can choose a $k \in \NN$ such that $P^k(x, y) > 0$.  Since $h$ is
    $P$-harmonic, we have $h = P^k h$, and hence
    \begin{equation*}
        m 
        = h(x) 
        = \sum_z P^k(x, z) h(z) 
        = P^k(x, y) h(y) + \sum_{z \not= y} P^k(x, z) h(z) 
        < m.
    \end{equation*}
    This contradiction shows that $h$ is constant.  Hence $\mM$ is ergodic.
\end{proof}

The implication in Proposition~\ref{p:iimerg0} is one way, as the next
exercise asks you to confirm.

\begin{Exercise}
    Provide an example of a finite Markov model that is ergodic but not
    strongly connected.  
\end{Exercise}

\begin{Answer}
    Let $\mM$ be a finite Markov model with states $S = \{1, 2\}$ and edge set
    $E = \{(1,2)\}$.  Thus, the chain immediately moves to state 2 and stays
    there forever.  The corresponding transition matrix is
    \begin{equation*}
        P 
        =
        \begin{pmatrix}
            0 & 1 \\
            0 & 1
        \end{pmatrix}
    \end{equation*}
    If $h$ is $P$-harmonic, then $h(x) = Ph(x) = h(2)$.  Hence $h$ is
    constant.  This shows that $P$ is ergodic.  At the same time $(S, E)$ is
    not strongly connected.
\end{Answer}


\subsubsection{Implications of Ergodicity}

One of the most important results in probability theory is the \navy{law of large
numbers}\index{Law of large numbers} (LLN).  In the finite state setting, the classical version of this
theorem states that
\begin{equation*}
       \PP \left\{
           \lim_{k \to \infty} \frac{1}{k} \sum_{t=0}^{k-1} h(X_t)
                   = \sum_{x \in S} h(x) \phi(x)
               \right\} = 1
\end{equation*}
when $(X_t)_{t \geq 0}$ is an {\sc iid} sequence of random variables with
common distribution $\phi \in \dD(S)$ and $h$ is an arbitrary element of
$\RR^S$.  

This version of the LLN is classical in the sense that the {\sc iid}
assumption is imposed.  It turns out that the {\sc iid} assumption can be
weakened to allow for a degree of dependence between observations, which leads
us to ask whether or not the LLN holds for Markov chains as well.

The answer to this question is yes, provided that dependence between
observations dies out fast enough.  One candidate for this condition is
ergodicity.  In fact, it turns out that ergodicity is the exact necessary and
sufficient condition required to extend the law of large numbers to Markov
chains.  The next theorem gives details.

\begin{theorem}\label{t:merg0}
    Let $\mM$ be any finite Markov model with state space $S$ and transition
    matrix $P$.
    The following statements are equivalent:
    \begin{enumerate}
        \item $\mM$ is ergodic.
        \item $\mM$ has a unique stationary distribution $\psi^*$ and,
            for any $P$-Markov chain $(X_t)$ and any $h \in \RR^S$,
            \begin{equation}\label{eq:erglln}
               \PP \left\{
                   \lim_{k \to \infty} \frac{1}{k} \sum_{t=0}^{k-1} h(X_t)
                   = \sum_{x \in S} h(x) \psi^*(x)
               \right\} = 1.
            \end{equation}
    \end{enumerate}
\end{theorem}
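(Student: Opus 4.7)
My plan is to handle the easy direction (ii) $\Rightarrow$ (i) directly, then to establish the uniqueness part of (i) $\Rightarrow$ (ii) via a Cesàro-averaging argument that leans on ergodicity, and finally to derive the LLN by reducing the time average of $h(X_t)$ to a bounded-increment martingale through the Poisson equation $(I-P)g = h$.

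For (ii) $\Rightarrow$ (i), I would argue as follows. Given a $P$-harmonic $h$, iterating gives $h = P^t h$ for every $t$, so for any starting state $x$,
\[
h(x) = \frac{1}{k}\sum_{t=0}^{k-1} (P^t h)(x)
     = \EE\left[\frac{1}{k}\sum_{t=0}^{k-1} h(X_t) \,\given\, X_0 = x\right].
\]
The LLN in (ii) tells us the random variable inside the expectation converges almost surely to $\inner{h, \psi^*}$, and since $h$ is bounded, dominated convergence would then yield $h(x) = \inner{h, \psi^*}$ for every $x$, so $h$ is constant and $\mM$ is ergodic.

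For (i) $\Rightarrow$ (ii), I would first attack uniqueness by forming the Cesàro averages $\bar P_k := \frac{1}{k}\sum_{t=0}^{k-1} P^t$. Each $\bar P_k$ is stochastic and lies in a compact subset of $\matset{n}{n}$, so I can extract a subsequential limit $Q$; the identity $P\bar P_k - \bar P_k = \frac{1}{k}(P^k - I) \to 0$ then forces $PQ = QP = Q$. For every column vector $h$, the vector $Qh$ is $P$-harmonic and therefore constant by ergodicity; this pins down $Q = \1 \psi^*$ for some $\psi^* \in \dD(S)$, and $QP = Q$ gives $\psi^* P = \psi^*$. Any stationary $\phi$ satisfies $\phi \bar P_k = \phi$ for every $k$, so $\phi = \phi Q = (\phi\1)\psi^* = \psi^*$. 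Since every subsequential limit equals $\1\psi^*$, one actually gets $\bar P_k \to \1\psi^*$.

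For the LLN, I would first reduce to the case $\inner{h, \psi^*} = 0$ by subtracting the mean. In finite dimensions, $\mathrm{range}(I - P)$ is the orthogonal complement of $\ker(I - P^\top)$, and by the uniqueness just established this kernel is one-dimensional (spanned by $\psi^*$). Hence the Poisson equation $(I-P)g = h$ admits a solution $g \in \RR^S$. Using $h(X_t) = g(X_t) - \EE[g(X_{t+1}) \given X_t]$ and summing telescopically,
\[
\sum_{t=0}^{k-1} h(X_t) = g(X_0) - g(X_k) + M_k,
\]
where $M_k := \sum_{t=1}^k (g(X_t) - \EE[g(X_t) \given X_{t-1}])$ is a martingale with bounded increments. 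After dividing by $k$, the first two terms vanish because $g$ is bounded on the finite state space, and $M_k/k \to 0$ almost surely should then follow from $L^2$-convergence of $\sum_t (M_t - M_{t-1})/t$ (whose summands have summable variances) together with Kronecker's lemma.

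The hard part will be this final step: upgrading the Cesàro convergence $\bar P_k \to \1\psi^*$ (which is only an LLN in mean) to almost-sure convergence along individual sample paths. Everything up to solvability of the Poisson equation is clean linear algebra, relying only on uniqueness plus the fundamental theorem of linear algebra; the substantive analytic work is the SLLN for bounded-increment martingales, which is not part of the earlier material in this text and would either need to be imported as a black box or proved via Azuma--Hoeffding and Borel--Cantelli.
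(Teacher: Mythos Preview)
The paper does not actually prove this theorem: immediately after the statement it writes ``The proof of Theorem~\ref{t:merg0} is given in Chapter~17 of \cite{meyn2009markov}.  We skip the proof but provide intuition through the following examples.'' So there is no in-text argument to compare yours against.

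That said, your proposal is a correct and standard route. The (ii) $\Rightarrow$ (i) direction is clean. For (i) $\Rightarrow$ (ii), the Ces\`aro-average argument for uniqueness and for $\bar P_k \to \1\psi^*$ is right, and the Poisson-equation plus martingale decomposition is the usual way to upgrade to a strong law. One small tightening: where you write ``by the uniqueness just established this kernel is one-dimensional,'' the most direct justification uses ergodicity rather than uniqueness---$\ker(I-P)$ is exactly the set of $P$-harmonic functions, hence equals $\Span(\1)$ by ergodicity, so $\rank(I-P)=n-1$ and therefore $\dim\ker(I-P^\top)=1$ by row rank equals column rank. Uniqueness of the stationary \emph{distribution} alone does not immediately rule out additional signed vectors in $\ker(I-P^\top)$ without this rank argument. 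Your closing acknowledgement is accurate: the bounded-increment martingale SLLN (via Kolmogorov's $L^2$ criterion and Kronecker, or via Azuma--Hoeffding and Borel--Cantelli) is the one ingredient not developed in the text and must be imported.
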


The proof of Theorem~\ref{t:merg0} is given in Chapter~17 of
\cite{meyn2009markov}.  We skip the proof but provide intuition through the
following examples.

\begin{example}\label{eg:mciid}
    Recall the {\sc iid} case from Example~\ref{eg:mciid0}.  We showed that
    $\mM$ is ergodic.  By Theorem~\ref{t:merg0}, the convergence
    in~\eqref{eq:erglln} holds with $\psi^* = \phi$. This is consistent with
    the LLN for {\sc iid} sequences.
\end{example}

\begin{example}\label{eg:failerg}
    Let $\mM$ be a finite Markov model with $S = \{1, 2\}$ and $P = I$, the identity matrix.  
    Markov chains generated by $P$ are constant.  Since every $h
    \in \RR^2$ satisfies $Ph=h$, we see that $\mM$ is not ergodic.  This means
    that the LLN result in~\eqref{eq:erglln} fails.
    But this is exactly what we would expect, since a constant chain $(X_t)$
    implies $\frac{1}{k} \sum_{t=0}^{k-1} X_t = X_0$ for all $k$.  In particular,
    if $X_0$ is drawn from a nondegenerate distribution, then the sample mean
    does not converge to any constant value.
\end{example}

\begin{example}\label{eg:failerg2}
    Consider again the poverty trap model in
    Figure~\ref{f:poverty_trap_2}.  Say that $h(x)$ is earnings in state $x$, and
    that $h(\text{poor}) = 1$,  $h(\text{middle}) = 2$ and $h(\text{rich}) = 3$.
    Households that start with $X_0 = \text{poor}$ will always be poor,
    so $\frac{1}{k} \sum_{t=0}^{k-1} h(X_t) = 1$ for all $k$.  Households that
    start with $X_0$ in $\{ \text{middle, rich} \}$ remain in this absorbing set
    forever, so $\frac{1}{k} \sum_{t=0}^{k-1} h(X_t) \geq 2$ for all $k$.
    In particular, the limit of the sum depends on the initial condition.
    This violates part (ii) of Theorem~\ref{t:merg0}, which states that the limit
    is independent of the distribution of $X_0$.
\end{example}

\subsubsection{Reinterpreting the Stationary Distribution}

Ergodicity has many useful implications.  One is a new
\emph{interpretation} for the stationary distribution. To illustrate this,
let $\mM$ be an ergodic Markov model with state space $S$ and transition matrix
$P$.  Let $(X_t)$ be a $P$-chain and let
\begin{equation*}
    \hat \psi_k (y) := \frac{1}{k} \sum_{t = 1}^k \1\{X_t = y\}
    \qquad (y \in S).
\end{equation*}
The value $\hat \psi_k (y)$ measures the fraction of time that the $P$-chain
spends in state $y$ over the time interval $1, \ldots, k$. Under ergodicity,
for fixed $y \in S$, we can use~\eqref{eq:erglln} with $h(x) = \1\{ x = y \}$
to obtain
\begin{equation}
    \label{eq:ergllnprob}
    \hat \psi_k (y) 
    \to \sum_{x \in S} \1\{x = y\} \psi^*(x) 
    = \psi^*(y).
\end{equation}
Turning \eqref{eq:ergllnprob} around, we see that
\begin{equation}\label{eq:erginterp}
    \psi^*(y) 
    \approx \text{ the fraction of time that any $P$-chain $(X_t)$ spends in state $y$}.
\end{equation}

Figure~\ref{f:benhabib_ergodicity_1} illustrates this idea for a simulated
Markov chain $(X_t)$ generated from the matrix $P_B$ introduced
in~\eqref{eq:piben}. The figure compares $\hat \psi_k$ and $\psi^*$ for
different values of $k$. As $k \to \infty$, the convergence claimed
in~\eqref{eq:ergllnprob} occurs.

\begin{figure}
   \centering
   \scalebox{0.6}{\includegraphics{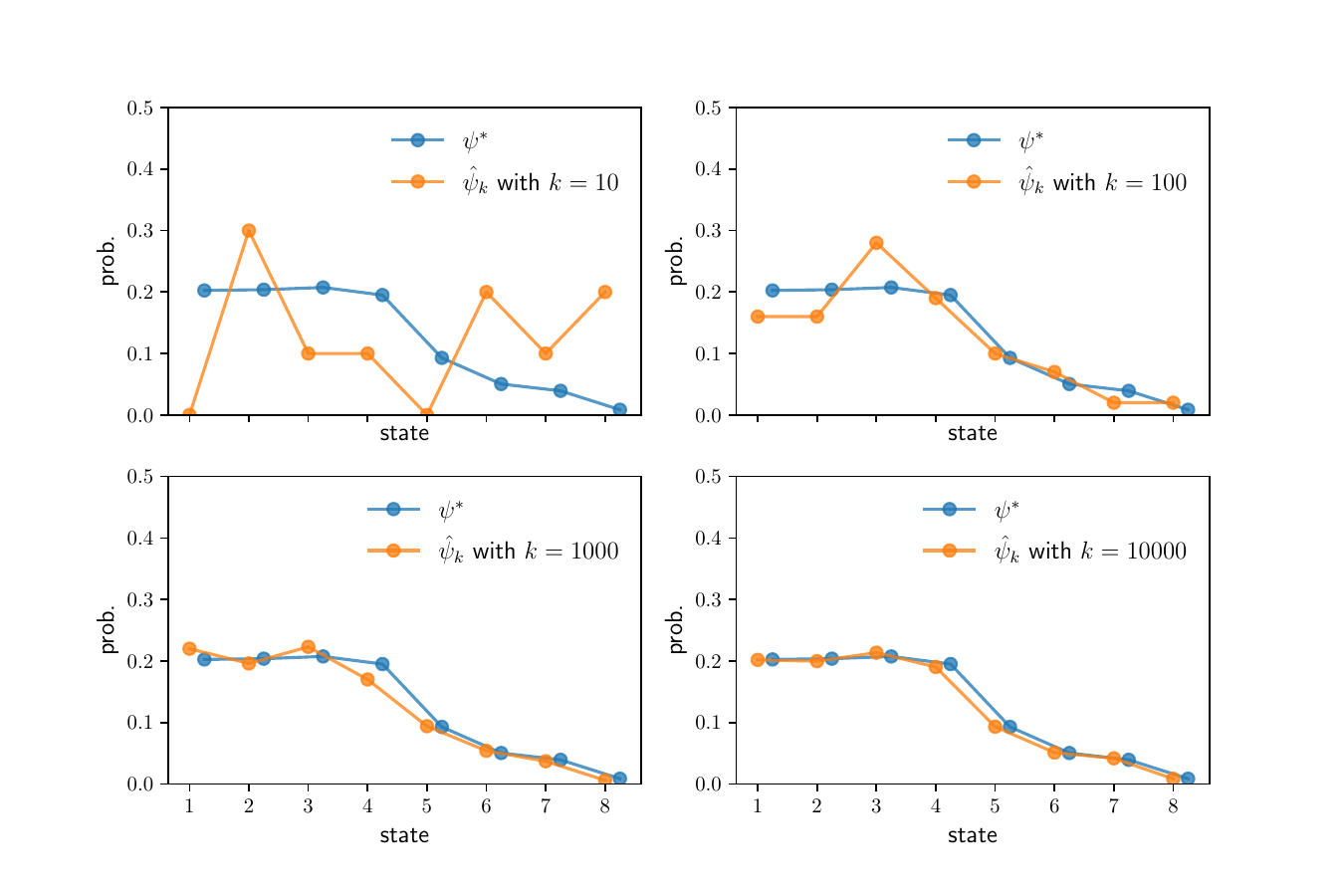}}
   \caption{\label{f:benhabib_ergodicity_1} Convergence of the empirical distribution to $\psi^*$}
\end{figure}

Of course, we must remember that \eqref{eq:erginterp} is only valid under
ergodicity.  For example, if $P = I$, the identity, then every distribution is
stationary, every $P$-Markov chain is constant, and \eqref{eq:erginterp} does
not hold.

Notice that, in view of Theorem~\ref{t:merg0}, the convergence
in~\eqref{eq:erglln} occurs for \emph{any} initial condition.  This gives us,
under ergodicity, a way of computing the stationary distribution via
simulation (and~\eqref{eq:ergllnprob}).  While this method is slower than
algebraic techniques (see, e.g., \S\ref{sss:mccomp}) for small problems, it
can be the only feasible option when $|S|$ is large.

\subsection{Aperiodicity and Stability}\label{ss:stabil}

In \S\ref{ss:ergo} we discussed sample path properties of Markov chains,
finding that strong connectivity is sufficient for stability of time series
averages.  Next we turn to marginal distributions of the chain and their long
run properties, which we examined informally in \S\ref{sss:trajlr}. We will
see that, to guarantee convergence of these sequences, we also need a
condition that governs periodicity.

\subsubsection{Convergence of Marginals}

Let $\mM$ be a finite Markov model with state space $S$ and transition matrix
$P$.  We call $\mM$ \navy{globally stable}\index{Globally stable} if there is
only one stationary distribution $\psi^*$ in $\dD(S)$ and, moreover, 
\begin{equation}\label{eq:finitemcgs}
    \lim_{t \to \infty} \psi P^t = \psi^* 
    \quad \text{for all } \psi \in \dD(S).
\end{equation}
In other words, marginal distributions of every $P$-Markov chain converge
to the unique stationary distribution of the model.

There is a useful connection between global stability and absorbing sets.
Intuitively, if $\mM$ is globally stable and has an absorbing set $A$ that can
be reached from any $x \in S$, then any Markov chain generated by these
dynamics will eventually arrive in $A$ and never leave.  Hence the stationary
distribution must put all of its mass on $A$.  The next exercise asks you to
confirm this.

\begin{Exercise}\label{ex:gsaa}
    Let $\mM$ be a finite Markov model with state space $S$ and transition matrix
    $P$. Let $A$ be a nonempty absorbing subset of $S$.  Assume that
    \begin{enumerate}
        \item[(a)] $\mM$ is globally stable with unique stationary distribution
            $\psi^*$ and
        \item[(b)] for each $x \in A^c := S \setminus A$, there exists an $a \in A$
            such that $a$ is accessible from $x$, 
    \end{enumerate}
    Show that, under the stated conditions, there exists an $\epsilon > 0$ such
    that $\sum_{y \in A^c} P^n(x, y) \} \leq 1 - \epsilon$ for all $x \in S$, where $n=|S|$.
    Using this fact, show in addition that $\sum_{y \in A} \psi^*(y) = 1$.  
\end{Exercise}

\begin{Answer}
    Let $\mM$ and $A$ have the stated properties. Fix $x \in S$.  Consider a
    $P$-chain $(X_t)$ that starts at $x$.  By property (b), there exists an $a \in
    A$ and a $k \leq n := |S|$ such that $\epsilon_x := \PP\{X_k = a\}$ is
    strictly positive.  Once $(X_t)$ enters $A$ it never leaves, so $X_k = a$
    implies $X_n \in A$.  Hence $\PP\{X_n \in A\} \geq \epsilon_x$.  With
    $\epsilon := \min_{x \in S} \epsilon_x > 0$, we then have $\PP\{X_n \in A\} \geq \epsilon$
    for any initial condition $x$. Another way to state this is $\sum_{y \in A^c}
    P^n(x, y) \} \leq 1 - \epsilon$ for all $x \in S$, which proves the first
    claim.

    Regarding the second claim, fix $\psi \in \dD(S)$ and let $\psi_t = \psi P^t$
    for all $t$.  Observe that, for fixed $m \in \NN$, we have $\psi_{(m+1)n} = P^n
    \psi_{mn}$.  As a result, for $y \in A^c$, we have
    \begin{equation*}
        \psi_{(m+1)n}(y) 
        = \sum_{x \in S} P^n(x,y) \psi_{mn}(x)
        = \sum_{x \in A^c} P^n(x,y) \psi_{mn}(x).
    \end{equation*}
    Summing over $y$ gives
    \begin{equation*}
        \sum_{y \in A^c} \psi_{(m+1)n}(y) 
        = \sum_{y \in A^c}\sum_{x \in A^c} P^n(x,y) \psi_{mn}(x)
        = \sum_{x \in A^c} 
            \left[ \sum_{y \in A^c} P^n(x,y) \right] \psi_{mn}(x).
    \end{equation*}
    Let $\eta_t := \sum_{x \in A^c} \psi_t(x)$ be the amount of
    probability mass on $A^c$ at time $t$.
    Using the first claim and the definition of $\eta_t$ now gives
    $\eta_{(m+1)n} \leq (1-\epsilon) \eta_{mn}$.  Hence $\eta_{mn} \to 0$ as $m
    \to \infty$.  At the same time, $\psi_{mn} \to \psi^*$ as $m \to \infty$,
    so 
    \begin{equation*}
        \sum_{x \in A^c} \psi^*(x)
        = \lim_{m \to \infty} \sum_{x \in A^c} \psi_{mn}(x)
        = \lim_{m \to \infty} \eta_{mn} = 0.
    \end{equation*}
    The second claim is now verified.
\end{Answer}

\subsubsection{A Key Theorem}

From Theorem~\ref{t:kb} we saw that strong connectedness is sufficient for
uniqueness of the stationary distribution. One might hope that strong
connectedness is enough for global stability too, but this is not true.  To
see why, suppose for example that $S = \{0,1\}$ and $E = \{(0, 1), (1, 0)\}$.
The transition matrix is 
\begin{equation}
    \label{eq:perp}
     P_d =
     \begin{pmatrix}
       0 & 1 \\
       1 & 0
     \end{pmatrix}.
\end{equation}
While this model is strongly connected, global
stability fails.  Indeed, if $(X_t)$ is $P_d$-Markov and starts at $0$,
then $(X_t)$ will visit state 1 on odd dates and state 0 on even dates.
That is, $P^t_d \, \delta_0 = \delta_{t \text{ mod } 2}$.  This sequence
does not converge.

The issue with $P_d$ is that, even though the chain traverses the whole state
space, the distribution of $X_t$ will affect that of $X_{t+j}$ for all $j$ due
to periodicity.  This causes stability to fail. If, however, we rule out
periodicity, then we have enough for stability to hold.  This line of
reasoning leads to the following famous theorem.

\begin{theorem}\label{t:crmst}
    Let $\mM$ be a finite Markov model.  If $\mM$ is strongly connected and
    aperiodic, then $\mM$ is globally stable.
\end{theorem}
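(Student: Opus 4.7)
The plan is to reduce global stability to a direct application of the Perron--Frobenius theorem, using the bridge provided by Theorem~\ref{t:scaperpr}. First I would observe that strong connectedness and aperiodicity of $\mM$ together imply, by Theorem~\ref{t:scaperpr}, that the transition matrix $P$ is primitive. This is the key structural fact: it hands us access to the strongest conclusion of Perron--Frobenius, namely convergence of $r(P)^{-m} P^m$ to the Perron projection.

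Next I would identify the relevant Perron eigenvectors. Since $P$ is stochastic, $P\1 = \1$, so $\1$ is a right eigenvector at eigenvalue $1$. From Lemma~\ref{l:rscsbounds} (or Exercise~\ref{ex:sm_sr1}) we have $r(P) = 1$, so $\1$ is in fact the dominant right eigenvector. Theorem~\ref{t:kb} combined with strong connectedness gives a unique stationary distribution $\psi^* \in \dD(S)$ that is strictly positive and satisfies $\psi^* P = \psi^*$, making $\psi^*$ the dominant left eigenvector. The standard Perron normalization $\langle \psi^*, \1 \rangle = 1$ is automatic because $\psi^*$ is a probability distribution.

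With these pieces in hand, the primitive case of Theorem~\ref{t:pf} yields
\begin{equation*}
    P^m \;=\; r(P)^{-m} P^m \;\longrightarrow\; \1 \, \psi^*
    \qquad (m \to \infty),
\end{equation*}
where $\1 \, \psi^*$ is the outer product of the column vector $\1$ with the row vector $\psi^*$. Given any initial distribution $\psi \in \dD(S)$, treated as a row vector, I would then multiply on the left and use $\psi \1 = 1$ to obtain
\begin{equation*}
    \psi P^m \;\longrightarrow\; \psi (\1 \, \psi^*) \;=\; (\psi \1)\, \psi^* \;=\; \psi^*,
\end{equation*}
which is exactly the global stability condition~\eqref{eq:finitemcgs}.

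The main obstacle is not conceptual but notational: one must be careful that the Perron convergence stated in \eqref{eq:patocon} as $e \, \epsilon^\top$ is translated correctly into the row-vector conventions used throughout \S\ref{ss:dd}, so that the right eigenvector $\1$ plays the role of $e$ and the stationary distribution $\psi^*$ plays the role of $\epsilon^\top$. Once that bookkeeping is done, the proof is essentially a one-line consequence of Theorem~\ref{t:scaperpr} combined with Theorem~\ref{t:pf}.
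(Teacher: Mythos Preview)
Your proposal is correct and follows essentially the same route as the paper's proof: use Theorem~\ref{t:scaperpr} to get primitivity of $P$, identify $\1$ and $\psi^*$ as the Perron right and left eigenvectors with the correct normalization, apply \eqref{eq:patocon} to obtain $P^t \to \1\,\psi^*$, and left-multiply by any $\psi \in \dD(S)$ using $\psi\1 = 1$. The paper's argument is identical in substance, so there is nothing to add.
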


\begin{proof}
    Let $\mM$ be a finite Markov model with state $S$ and transition matrix
    $P$.  Suppose $\mM$ is aperiodic and strongly connected.  Let $\psi^*$ be
    the unique stationary distribution of $\mM$.   By
    Theorem~\ref{t:scaperpr}, $P$ is primitive.  Hence we can apply the last
    part of the Perron--Frobenius Theorem (see page~\pageref{t:pf}). Using
    $r(P)=1$, this result tells us that $P^t \to e_r e_\ell^\top$ as $t \to
    \infty$, where $e_r$ and $e_\ell$ are the unique right and left
    eigenvectors of $P$ corresponding to the eigenvalue $r(P) = 1$ that also
    satisfy the normalization $\inner{e_\ell, e_r}=1$.  

    Now observe that $\psi^*$ obeys $\psi^* = \psi^* P$ and, in addition, $P
    \1 = \1$.  Hence $\psi^*$ and $\1$ are left and right eigenvectors
    corresponding to $r(P)=1$.  Moreover, $\inner{\psi^*, \1} = 1$.  Hence
    $e_r = \1$ and $e_\ell^\top = \psi^*$.

    Combining these facts leads to 
    \begin{equation}\label{eq:fppmm}
        \lim_{t \to \infty}
        P^t = P^* 
        \quad \text{where} \quad P^* := \1 \psi^* .
    \end{equation}
    If we pick any $\psi \in \dD(S)$, then, by~\eqref{eq:fppmm}, we get
    $\psi P^t \to \inner{\psi, \1} \psi^* = \psi^*$.  Hence global
    stability holds.
\end{proof}

\begin{example}
    The Markov models $P_Q$ and $P_B$ in \S\ref{sss:reps} are both aperiodic
    by the results in \S\ref{sss:aperiodicgraph}.  Being strongly connected,
    they are also globally stable.
\end{example}

The aperiodicity condition in Theorem~\ref{t:crmst} is, in general, not
stringent.  On the other hand, the strong connectedness requirement is
arguably quite strict.  Weaker conditions for global stability
are available, as we shall see in~\S\ref{ss:mdc}.

\subsubsection{Rates of Convergence: Spectral Gap}\label{sss:specgap}

While global stability is a very useful property, the implications are
qualitative rather than quantitative.  In practice, we usually want to know
something about the \emph{rate} of convergence.  There are several ways of
looking at this.   We cover the two most common in this section and the next.

As a first step, fixing a Markov model $\mM$ with state space $S$ and
transition matrix $P$, we use \eqref{eq:aspecrep} on page~\pageref{eq:aspecrep}
to write $P^t$ as 
\begin{equation}\label{eq:decmp}
    P^t = \sum_{i=1}^{n-1} \lambda_i^t e_i \epsilon_i^\top + \1 \psi^* ,
\end{equation}
where each $\lambda_i$ is an eigenvalue of $P$ and $e_i$ and $\epsilon_i$ are
the right and left eigenvectors corresponding to $\lambda_i$.  We have also
ordered the eigenvalues from smallest to largest, and used the
Perron--Frobenius Theorem to infer that $\lambda_n = r(P) = 1$, as well as the
arguments in the proof of Theorem~\ref{t:crmst} that showed $e_n = \1$ and
$\epsilon_n^\top = \psi^*$.

Premultiplying $P^t$ by arbitrary $\psi \in \dD(S)$ and rearranging now gives
\begin{equation}
    \psi P^t - \psi^* = \sum_{i=1}^{n-1} \lambda_i^t \psi e_i \epsilon_i^\top
\end{equation}
Recall that eigenvalues are ordered from smallest to largest and. Moreover, by the
Perron--Frobenius Theorem, $\lambda_i < 1$ for all $i < n$ when $P$ is
primitive (i.e., $\mM$ is strongly connected and aperiodic).  Hence, after taking
the Euclidean norm deviation, we obtain
\begin{equation}
    \| \psi P^t - \psi^* \| = O( \eta^t )
    \quad \text{where} \quad
    \eta := |\lambda_{n-1}| < 1.
\end{equation}
Thus, the rate of convergence is governed by the modulus of the
second largest eigenvalue.

The difference between the largest and second largest eigenvalue of a
nonnegative matrix is often called the \navy{spectral gap}\index{Spectral
gap}.  For this reason, we can also say that, for primitive stochastic matrices,
the rate of convergence is determined by the (nonzero) spectral gap.

\begin{example}
    When studying the worker model with hiring rate $\alpha$ and firing rate
    $\beta$ in \S\ref{sss:wdi}, we found that the eigenvalues of the
    transition matrix $P_w$ are $\lambda_1 = 1 - \alpha - \beta$ and
    $\lambda_2 = 1$.  Hence the spectral gap is $\alpha + \beta$ and the rate
    of convergence is $O((1-\alpha-\beta)^t)$.  High hiring and firing rates
    both produce faster convergence.  In essence, this is because higher
    hiring and firing rates mean workers do not stay in any state for long, so
    initial conditions die out faster.
\end{example}

\subsection{The Markov--Dobrushin Coefficient}\label{ss:mdc}

The rate of convergence of $\psi P^t$ to $\psi^*$ given in \S\ref{sss:specgap}
restricts the Euclidean distance between these vectors as a function of $t$.
There are, however, other metrics we could use in studying rates of
convergence, and sometimes these other metrics give more convenient results.
In fact, as we show in this section, a good choice of metric leads us to a
more general stability result than the (better known) aperiodicity-based
result in \S\ref{ss:stabil}.

\subsubsection{An Alternative Metric}

For the purposes of this section, 
For $\phi, \psi \in \dD(S)$, we set
\begin{equation*}
    \rho(\phi, \psi) 
    := \|\phi - \psi\|_1
    := \sum_{x \in S} | \phi(x) - \psi(x) |,
\end{equation*}
which is just the $\ell_1$ deviation between $\phi$ and $\psi$ (see
\S\ref{sss:norms}).

\begin{Exercise}\label{ex:borho}
    Show that, for any $\phi, \psi \in \dD(S)$, we have 
    \begin{enumerate}
        \item $\rho(\phi, \psi) \leq 2$.
        \item $\rho(\phi P, \psi P) \leq \rho(\phi, \psi)$ for any stochastic
            matrix $P$
    \end{enumerate}
\end{Exercise}

\begin{Answer}
    Regarding part (i), the triangle inequality, combined with the assumption
    that $\phi, \psi \in \dD(S)$, gives the bound 
    $$
        \sum_x | \phi(x) - \psi(x) | \leq
        \sum_x |\phi(x)| + \sum_x |\psi(x)| \leq 2.
    $$
    Regarding part (ii), if $P$ is a stochastic matrix, then
    \begin{equation*}
        \rho(\phi P, \psi P)
        = \sum_y
            \left| 
                \sum_x P(x, y) \phi(x) - \sum_x P(x, y) \psi(x) 
            \right|
        \leq \sum_y \sum_x
            P(x, y)
            \left| 
                \phi(x) - \psi(x) 
            \right|.
    \end{equation*}
    Swapping the order of summation and using $\sum_y P(x, y) = 1$ proves the
    claim.
\end{Answer}

Property (ii) is called the \navy{nonexpansiveness property} of stochastic
matrices under the $\ell_1$ deviation.  As we will see, we can tighten this
bound when $P$ satisfies certain properties.

As a first step we note that, for the $\ell_1$ deviation, given any stochastic
matrix $P$, we have
\begin{equation}\label{eq:bfmd}
    \rho(\phi P, \psi P)
    \leq (1 - \alpha(P)) \rho(\phi, \psi),
\end{equation}
where 
\begin{equation}\label{eq:mdc}
    \alpha(P) 
    := \min
        \left\{
            \sum_{y \in S} [P(x,y) \wedge P(x', y)]
            \; :\;
            (x, x') \in S \times S
        \right\}.
\end{equation}
Here $a \wedge b := \min\{a, b\}$ when $a, b \in \RR$.  We call $\alpha(P)$
the \navy{Markov--Dobrushin coefficient}\index{Markov--Dobrushin
coefficient} of $P$, although other names are also used in the literature. A
proof of the bound in \eqref{eq:bfmd} can be found in
\cite{stachurski2022economic} or
\cite{seneta2006markov}.\footnote{\cite{seneta2006markov} also discusses the history of 
Andrey Markov's work, which originated the kinds of contraction based
arguments used in this section.}

\begin{Exercise}
    Consider the stochastic matrices
    \begin{equation*}
        P_w =
        \begin{pmatrix}
            1-\alpha & \alpha \\
            \beta & 1-\beta 
        \end{pmatrix}.
    \end{equation*}
    Show that $\alpha(P_w) = 0$ if and only if $\alpha=\beta=0$ or
    $\alpha=\beta=1$.
\end{Exercise}

How should the Markov--Dobrushin coefficient be understood?  The coefficient
is large when the rows of $P$ are relatively similar.  For example, if rows
$P(x, \cdot)$ and $P(x', \cdot)$ are identical, the $\sum_{y \in S} [P(x,y)
\wedge P(x', y)] = 1$.  Similarity of rows is related to stability.  The next
exercise helps to illustrate.

\begin{Exercise}
    Let $P$ be such that all rows are identical and equal to $\phi \in
    \dD(S)$.  Prove that global stability holds in one step, in the sense that
    $\phi$ is the unique stationary distribution and $\psi P = \phi$ for all
    $\psi \in \dD(S)$.
\end{Exercise}

\begin{Exercise}
    Using \eqref{eq:bfmd}, show that, for any $\phi, \psi \in \dD(S)$, we have
    \begin{equation}\label{eq:bfmdt}
        \rho(\phi P^t, \psi P^t)
        \leq (1 - \alpha(P))^t \rho(\phi, \psi)
        \quad \text{for all } \;
        t \in \NN.
    \end{equation}
\end{Exercise}

\begin{Answer}
    Fix $\phi, \psi \in \dD(S)$. From \eqref{eq:bfmd} we know that
    \eqref{eq:bfmdt} is true when $t=1$.  Now suppose it is true at $t$.  Then,
    using the fact that \eqref{eq:bfmd} holds for any pair of distributions,
    \begin{equation*}
        \rho(\phi P^{t+1}, \psi P^{t+1})
            \leq (1 - \alpha(P)) \rho(\phi P^t, \psi P^t)
            \leq (1 - \alpha(P))^{t+1} \rho(\phi, \psi)
    \end{equation*}
    where the last step uses the induction hypothesis. Hence \eqref{eq:bfmdt} also
    holds at $t+1$, and, by induction, at all $t \in \NN$.
\end{Answer}

Since powers of stochastic matrices are again stochastic, and since
\eqref{eq:bfmd} is valid for any stochastic matrix, the bound in
\eqref{eq:bfmdt} can be generalized by replacing $P$ with $P^k$ for any given
$k \in \NN$, which gives (with $t$ replaced by $\tau$)
\begin{equation*}
    \rho(\phi P^{\tau k}, \psi P^{\tau k})
        \leq (1 - \alpha(P^k))^\tau \rho(\phi, \psi)
        \quad \text{for all } \;
        \tau \in \NN.
\end{equation*}
Now observe that, for any $t \in \NN$, we can write $t = \tau k + j$, where
$\tau \in \ZZ_+$ and $j \in \{0, \ldots, k-1\}$.  Fixing $t$ and choosing $j$ to make
this equality hold, we get
\begin{equation*}
    \rho(\phi P^t, \psi P^t)
    = \rho(\phi P^{\tau k + j}, \psi P^{\tau k + j})
    \leq \rho(\phi P^{\tau k}, \psi P^{\tau k})
        \leq (1 - \alpha(P^k))^\tau \rho(\phi, \psi)
\end{equation*}
where the second inequality is due to the nonexpansive property of stochastic
matrices (Exercise~\ref{ex:borho}).  Since $\tau$ is an integer satisfying
$\tau = \lfloor t/k \rfloor$, where $\lfloor \cdot
\rfloor$ is the floor function, we can now state the following.

\begin{theorem}\label{t:mbbk}
    Let $\mM$ be a finite Markov model with state space $S$ and transition
    matrix $P$.  For all $\phi, \psi \in \dD(S)$ and all $k, t \in \NN$, 
    we have
    \begin{equation}\label{eq:bfmdt2}
        \rho(\phi P^t, \psi P^t)
        \leq (1 - \alpha(P^k))^{\lfloor t/k \rfloor} \rho(\phi, \psi).
    \end{equation}
    In particular, if there exists a $k \in \NN$ such that $\alpha(P^k) > 0$,
    then $\mM$ is globally stable.
\end{theorem}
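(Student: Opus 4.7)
The first bound is essentially already established in the paragraphs immediately preceding the theorem, so my plan for that part is mainly to package it cleanly. Given $t, k \in \NN$, I would perform the division with remainder $t = \tau k + j$ where $\tau = \lfloor t/k \rfloor$ and $0 \leq j < k$. The strategy is to split $P^t$ as $P^{\tau k} P^j$, peel off the $P^j$ factor using the nonexpansiveness of stochastic matrices from Exercise~\ref{ex:borho}, and then iterate the contraction bound \eqref{eq:bfmd} applied to $P^k$ a total of $\tau$ times. This gives
\begin{equation*}
\rho(\phi P^t, \psi P^t) \leq \rho(\phi P^{\tau k}, \psi P^{\tau k}) \leq (1-\alpha(P^k))^{\tau} \rho(\phi, \psi),
\end{equation*}
which is \eqref{eq:bfmdt2}. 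No real obstacle here; it is just the chain of inequalities already assembled above.

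For the global stability claim, suppose $k \in \NN$ satisfies $\alpha(P^k) > 0$. First, existence of at least one stationary distribution $\psi^* \in \dD(S)$ is immediate from Theorem~\ref{t:kb} (or from the Perron--Frobenius discussion). The plan is then to use \eqref{eq:bfmdt2} to obtain both uniqueness and convergence in one stroke. Since $\alpha(P^k) \in (0,1]$ and $\lfloor t/k \rfloor \to \infty$ as $t \to \infty$, the prefactor $(1-\alpha(P^k))^{\lfloor t/k \rfloor}$ tends to zero, so for any $\phi, \psi \in \dD(S)$,
\begin{equation*}
\rho(\phi P^t, \psi P^t) \to 0 \quad (t \to \infty).
\end{equation*}

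Specializing $\psi$ to a stationary distribution $\psi^*$ and using $\psi^* P^t = \psi^*$ yields $\rho(\phi P^t, \psi^*) \to 0$ for every $\phi \in \dD(S)$, which is exactly \eqref{eq:finitemcgs}. Uniqueness of the stationary distribution is then a free corollary: any second stationary distribution $\tilde \psi^*$ would satisfy $\tilde \psi^* = \tilde \psi^* P^t \to \psi^*$, forcing $\tilde \psi^* = \psi^*$.

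The only subtle step is making sure convergence in $\rho$ (i.e., in $\ell_1$) is strong enough to conclude global stability as defined in \eqref{eq:finitemcgs}, but this is automatic since $\dD(S)$ sits inside finite-dimensional $\RR^{|S|}$ and all norms there are equivalent (Lemma~\ref{l:eqconvec} and the surrounding discussion). So there is no genuine obstacle; the main work is already done in establishing \eqref{eq:bfmd}, and the proof of Theorem~\ref{t:mbbk} amounts to bookkeeping with $t = \tau k + j$ plus one application of the triangle-type argument against a fixed point.
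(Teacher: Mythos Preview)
Your proposal is correct and follows essentially the same route as the paper: the division $t=\tau k+j$, nonexpansiveness of $P^j$ to drop the remainder, iteration of \eqref{eq:bfmd} for $P^k$, and then specialization of one argument to a stationary distribution (whose existence comes from Theorem~\ref{t:kb}) to deduce convergence and uniqueness. Your remark about norm equivalence is a harmless extra safeguard that the paper leaves implicit.
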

 
To see why the global stability implication stated in Theorem~\ref{t:mbbk}
holds, suppose there exists a $k \in \NN$ such that $\alpha(P^k)>0$.
Now substitute arbitrary $\psi \in \dD(S)$ and any stationary distribution
$\psi^*$ for $P$ into \eqref{eq:bfmdt2} to obtain
\begin{equation}\label{eq:mdt}
    \rho(\psi P^t, \psi^*)
    \leq (1 - \alpha(P^k))^{\lfloor t/k \rfloor} \rho(\psi, \psi^*)
    \leq 2 (1 - \alpha(P^k))^{\lfloor t/k \rfloor} 
\end{equation}
for all $t \in \NN$, where the second bound is due to Exercise~\ref{ex:borho}.

\begin{Exercise}
    In the preceding discussion, the distribution $\psi^*$ was taken to be any
    stationary distribution of $P$.  Using \eqref{eq:mdt}, prove that $P$ has
    only one stationary distribution when $\alpha(P^k) > 0$.
\end{Exercise}

One of the major advantages of Theorem~\ref{t:mbbk} is that strong
connectivity of $\mM$ is not required.  In the next section we will see an
example of a finite Markov model $\mM$ where (i) strong connectivity fails but
(ii) the conditions of Theorem~\ref{t:mbbk} are satisfied.\footnote{It can be
    shown that the condition $\alpha(P^k) > 0$ for some $k \in \NN$ is
    necessary as well as sufficient for global stability of a finite Markov
    model. Hence the conditions of Theorem~\ref{t:mbbk} are strictly weaker
    than strong connectedness plus aperiodicity, as used in
    Theorem~\ref{t:kb}.}

\begin{Exercise}
    Let $\mM$ be a finite Markov model with state space $S$ and transition
    matrix $P$.  Prove the existence of a $k \in \NN$ with $\alpha(P^k) > 0$
    whenever $\mM$ is strongly connected and aperiodic.
\end{Exercise}
     
\begin{Answer}
    If $\mM$ is strongly connected and aperiodic, then $P$ is primitive, in
    which case there exists a $k \in \NN$ with $P^k \gg 0$.  Clearly
    $\alpha(P^k) > 0$.
\end{Answer}

\subsubsection{Sufficient Conditions}

While the Markov--Dobrushin coefficient $\alpha(P^k)$ can be calculated for any
given $k$ on a computer by stepping through all pairs of rows in $P^k$, this
calculation is computationally intensive when $S$ is large.  
When applicable, the following lemma simplifies the problem  by
providing a sufficient condition for $\alpha(P^k) > 0$.

\begin{lemma}\label{l:kdw}
    Let $k$ be a natural number and let $\mM$ be a finite Markov model with
    state space $S$ and transition matrix $P$.  If there is a state $z \in
    S$ such that, for every $x \in S$, there exists a directed walk from $x$
    to $z$ of length $k$, then $\alpha(P^k) > 0$.
\end{lemma}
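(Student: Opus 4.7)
The plan is to translate the hypothesis about directed walks into a statement about positivity of entries of $P^k$, and then observe that this positivity forces the Markov--Dobrushin coefficient to be bounded below by a strictly positive quantity.

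First, I would invoke Proposition~\ref{p:accesspos} applied to the weighted digraph induced by $P$ (whose adjacency matrix is $P$ itself, by the discussion in \S\ref{sss:stochmat} and \S\ref{ss:wdig}). That proposition tells us that the existence of a directed walk of length $k$ from $x$ to $z$ is equivalent to $P^k(x,z) > 0$. Hence the hypothesis of the lemma can be restated as: there exists $z \in S$ such that $P^k(x, z) > 0$ for every $x \in S$.

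Next, I would exploit this uniform positivity in the definition \eqref{eq:mdc} of the Markov--Dobrushin coefficient. For any pair $(x, x') \in S \times S$, the sum $\sum_{y \in S} [P^k(x,y) \wedge P^k(x',y)]$ contains the particular term $P^k(x, z) \wedge P^k(x', z)$, which is strictly positive since both factors are. Therefore
\begin{equation*}
    \sum_{y \in S} [P^k(x,y) \wedge P^k(x', y)]
    \geq P^k(x, z) \wedge P^k(x', z) > 0
    \qquad \text{for all } (x, x') \in S \times S.
\end{equation*}

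Finally, since $S \times S$ is finite, the minimum of this quantity over all pairs $(x, x')$ is attained, and each term being strictly positive forces the minimum itself to be strictly positive. Hence $\alpha(P^k) > 0$, as required. There is no real obstacle here; the only subtlety is making the translation between ``directed walk of length $k$'' in the induced digraph and positivity of $P^k(x, z)$ explicit, which is handled by citing Proposition~\ref{p:accesspos}.
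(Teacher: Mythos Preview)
Your proof is correct and matches the paper's argument almost exactly: the paper also invokes Proposition~\ref{p:accesspos} to get $P^k(x,z)>0$ for all $x$, then bounds $\sum_y [P^k(x,y)\wedge P^k(x',y)]$ below by $P^k(x,z)\wedge P^k(x',z)$. The only cosmetic difference is that the paper introduces $r:=\min_{x\in S}P^k(x,z)>0$ as a uniform lower bound, whereas you argue via finiteness of $S\times S$ that the minimum of a strictly positive function is positive; these are equivalent.
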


\begin{proof}
    Let $k \in \NN$ and $z \in S$ be such that, for every $x \in S$, there
    exists a directed walk from $x$ to $z$ of length $k$.  By
    Proposition~\ref{p:accesspos}, we then have $r := \min_{x \in S} P^k(x, z) > 0$.
    Since, for any $x, x' \in S$,
    \begin{equation*}
            \sum_{y \in S} [ P^k(x,y) \wedge P^k(x', y) ]
            \geq P^k(x,z) \wedge P^k(x', z) \geq r > 0,
    \end{equation*}
    strict positivity of $\alpha(P^k)$ now follows from the
    definition of the Markov--Dobrushin coefficient.  
\end{proof}

\begin{example}
    Consider the digraph in Figure~\ref{f:io_reducible_r}.
    This digraph is not strongly connected because 4 is not accessible from
    anywhere.  However, there exists a directed walk from any vertex to vertex 1
    in $k=2$ steps.  For example, from 2 we can choose $2 \to 1$ and then $1 \to
    1$, from 1 we can choose $1 \to 1$ and then $1 \to 1$, etc. Hence, if
    Figure~\ref{f:io_reducible_r} is the digraph of a finite Markov model with
    transition matrix $P$, then $\alpha(P^2) > 0$.
\end{example}

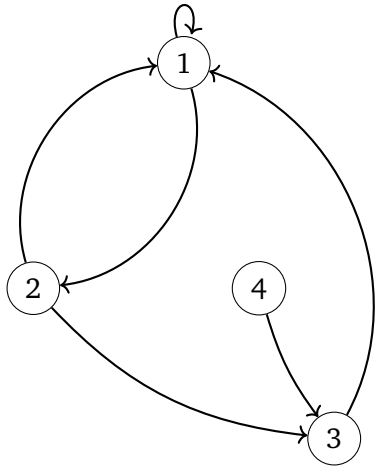
\begin{figure}
   \begin{center}
    {\small
\begin{tikzpicture}
  \node[circle, draw] (1) at (-1, 3) {1};
  \node[circle, draw] (2) at (-3, 0) {2};
  \node[circle, draw] (3) at (1, -2) {3};
  \node[circle, draw] (4) at (0, 0) {4};
  \draw[->, thick, black]
  (1) edge [bend left=50, right]  (2)
  (2) edge [bend left=50, left]  (1)
  (2) edge [bend right=20, below]  (3)
  (3) edge [bend right=50, right]  (1)
  (4) edge [bend right=10, left]  (3)
  (1) edge [loop above]  (1);
\end{tikzpicture}
}
    \caption{\label{f:io_reducible_r} A digraph with walk of length 2 from any node to 1}
   \end{center}
\end{figure}

\begin{example}
    Consider the Markov dynamics suggested in Figure~\ref{f:poverty_trap} on
    page~\pageref{f:poverty_trap}.  Although there are no weights, we can see
    that \texttt{poor} is accessible from every state in one step, 
    so $\mM$ must be globally stable.  In addition, \texttt{poor} is, by
    itself, an absorbing set. Hence, by Exericse~\ref{ex:gsaa}, for any choice
    of weights compatible with these edges, the stationary distribution will
    concentrate all its mass on \texttt{poor}.
\end{example}

\begin{remark}
    As was pointed out in proof of Lemma~\ref{l:kdw}, under the conditions of that
    lemma we have $P^k(x, z) > 0$ for all $x \in S$.  This means that $\alpha(P^k)
    > 0$ whenever $P^k$ has a strictly positive column.
\end{remark}

\subsubsection{Application: PageRank}\label{sss:pagerank}

In \S\ref{ss:netcen} we discussed centrality measures for networks.
Centrality measures provide a ranking of vertices in the network according to
their ``centrality'' or ``importance.''  One of the most important
applications of ranking of vertices in a network is ranking the importance of
web pages on the internet. Historically, the most prominent example of a
ranking mechanism for the internet is PageRank, which transformed Google from
a minor start up to a technology behemoth.
In this section we provide a simple introduction to the original form of
PageRank and connect it to previously discussed measures of centrality.

Consider a finite collection of web pages $W$ and let $L$ be the hyperlinks
between them. We understand $(W, L)$ as a digraph $\gG$, where $W$ is the
vertices and $L$ is the edges.  Let $A$ be the associated adjacency matrix, so
that $A(u, v) = 1$ if there is a link from $u$ to $v$ and zero otherwise. We
set $n = |W|$, so that $A$ is $n \times n$.

To start our analysis, we consider the case where $\gG$ is strongly connected,
such as the small network in Figure~\ref{f:transitions}.  Furthermore, we introduce a second matrix
$P$ in which each row of $A$ has been normalized so that it sums to one.
For the network in Figure~\ref{f:transitions}, this means that
\begin{equation*}
    A = 
    \begin{pmatrix}
        0 & 1 & 1 & 1 \\
        0 & 0 & 1 & 0 \\
        0 & 0 & 0 & 1 \\
        1 & 0 & 0 & 0 
    \end{pmatrix}
    \quad \text{and} \quad
    P = 
    \begin{pmatrix}
        0 & 1/3 & 1/3 & 1/3 \\
        0 & 0 & 1 & 0 \\
        0 & 0 & 0 & 1 \\
        1 & 0 & 0 & 0 
    \end{pmatrix}.
\end{equation*}
Now consider an internet surfer who, once per minute, randomly clicks on one of
the $k$ outgoing links on a page, each link selected with uniform probability
$1/k$.  The idea of PageRank is to assign to each page $u \in W$ a value
$g(u)$ equal to the fraction of time that this surfer spends on page $u$ over
the long run.  Intuitively, a high value for $g(u)$ indicates a heavily
visited and hence important site.

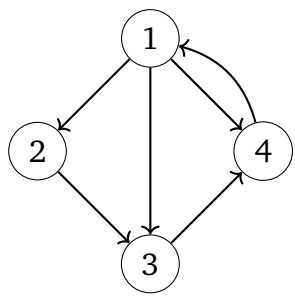
\begin{figure}
   \centering
    \begin{tikzpicture}
  \node[circle, draw] (1) at (0, 1.5) {1};
  \node[circle, draw] (2) at (-1.5, 0) {2};
  \node[circle, draw] (3) at (0, -1.5) {3};
  \node[circle, draw] (4) at (1.5, 0) {4};
  \draw[->, thick, black]
  (1) edge [bend left=0, above] node {} (2)
  (1) edge [bend left=0, above] node {} (3)
  (1) edge [bend left=0, below] node {} (4)
  (2) edge [bend left=0, below] node {} (3)
  (3) edge [bend left=0, below] node {} (4)
  (4) edge [bend right=30, above] node {} (1);
\end{tikzpicture}
    \caption{\label{f:transitions} A digraph with walk of length 2 from any node to 1}
\end{figure}

The vector $g$ is easy to compute, given our knowledge of Markov chains. Let
$\mM$ be the finite Markov model associated with the random surfer, with state
space $W$ and adjacency matrix given by $P$.  Since $\mM$ is strongly
connected, the ergodicity theorem (page~\pageref{t:merg0}) tells us that $P$
has a unique stationary distribution $\psi^*$, and that the fraction of time
the surfer spends at page $u$ is equal to the probability assigned to $u$
under the stationary distribution (see, in particular, \eqref{eq:erginterp}).
Hence $g = \psi^*$.

As $g$ is stationary and $r(P)=1$, we can write $g = (1/r(P)) g P$. Taking
transposes gives $g^\top = (1/r(P)) P^\top g^\top$.  Comparing with
\eqref{eq:eicena0} on page~\pageref{eq:eicena0}, we see that, for this simple
case, the PageRank vector $g$ is just the authority-based eigenvector
centrality measure of $\mM$.  Thus, PageRank gives high ranking to pages with
many inbound links, attaching high value to inbound links from pages that are
themselves highly ranked.

There are two problems with the preceding analysis.  First, we assumed that
the internet is strongly connected, which is clearly violated in practice (we
need only one page with no outbound links).  Second, internet users sometimes
select pages without using hyperlinks, by manually entering the URL.

The PageRank solution to this problem is to replace $P$ with the so-called
\navy{Google matrix}
\begin{equation*}
    G := \delta P + (1 - \delta) \frac{1}{n} \1,
\end{equation*}
where $\1$ is the $n \times n$ matrix of ones. 
The value $\delta \in (0, 1)$ is called the \navy{damping factor}.

\begin{Exercise}
    Prove that $G$ is a stochastic matrix for all $\delta \in (0, 1)$.
\end{Exercise}

The Markov dynamics embedded in the stochastic matrix $G$ can be understood as
follows: The surfer begins by flipping a coin with heads probability
$\delta$.  (See Exercise~\ref{ex:concond} and its solution for the connection
between convex combinations and coin flips.)  If the coin is heads then the
surfer randomly selects and follows one of the links on the current page. If
not then the surfer randomly selects and moves to any page on the internet.

For given $\delta$, the PageRank vector for this setting is adjusted to be the
stationary distribution of the Google matrix $G$.  

\begin{Exercise}\label{ex:goosc}
    Verify that the digraph associated with the transition probabilities in
    $G$ is always strongly connected (assuming, as above, that $\delta \in (0, 1)$).  
\end{Exercise}

\begin{Answer}
    All elements of $G$ are strictly positive, so a directed edge exists
    between every pair of pages $u, v \in W$.  This clearly implies strong
    connectedness.
\end{Answer}

As a result of Exercise~\ref{ex:goosc}, we can always interpret the stationary
of $G$ as telling us the fraction of time that the surfer spends on each page
in the long run.

\begin{Exercise}
    Use \eqref{eq:mdt} to obtain a rate of convergence of $\psi G^t$ to the
    adjusted PageRank vector $g^*$ (i.e., the unique stationary distribution $g^*$
    of $G$), where $\psi$ is an arbitrary initial distribution on $W$.  (Set
    $k=1$.)
\end{Exercise}

\begin{Answer}
	For any $u, u', v \in W$, we have $G(u, v) \wedge G(u', v) \geq 1-\delta$.  Hence
	$\alpha(G)  \geq 1-\delta$.  Therefore, by \eqref{eq:mdt}, we have
	\begin{equation*}
	    \rho(\psi G^t, g^*) \leq 2 \delta^t .
	\end{equation*}
\end{Answer}

\subsection{Information and Social Networks}

In recent years, the way that opinions spread across social networks has become a
major topic of concern in many countries around the globe.  A well-known mathematical model of this phenomenon is \navy{De Groot
learning}\index{De Groot learning}, which was originally proposed in
\cite{degroot1974reaching}.  This mechanism has linear properties that make it
relatively easy to analyze (although large and complex underlying networks can
cause significant challenges).

In De Groot learning, a group of agents, labeled from $1$ to $n$, is connected
by a social or information network of some type.  Connections are indicated by a
\navy{trust matrix}\index{Trust matrix} $T$, where, informally,
\begin{equation*}
    T(i, j) = \text{amount that $i$ trusts the opinion of $j$}.
\end{equation*}
In other words, $T(i,j)$ is large if agent $i$ puts a large positive weight on
the opinion of agent $j$.  The matrix $T$ is assumed to be stochastic.

We can view the trust matrix as an adjacency matrix for a weighed digraph
$\sS$ with vertex set $V := \natset{n}$ and edges
\begin{equation*}
    E = \{(i, j) \in V \times V \,:\, T(i,j) > 0\}.
\end{equation*}

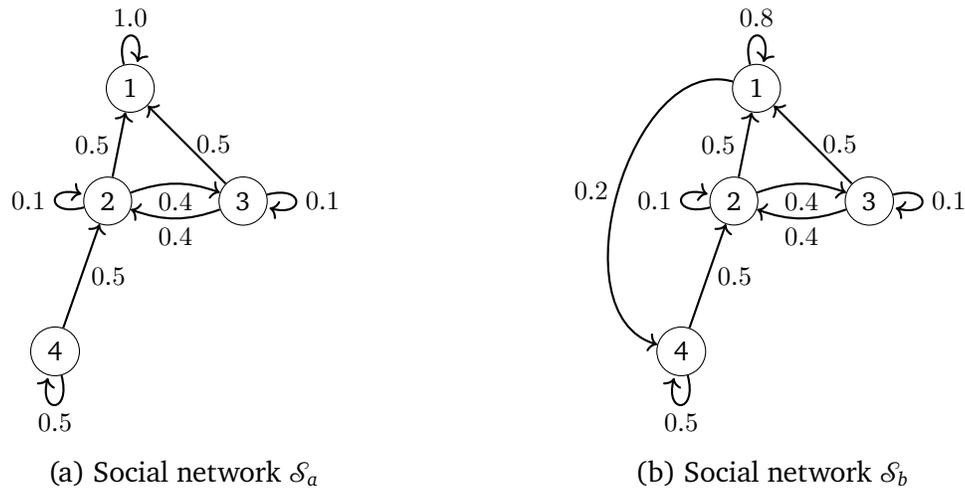
\begin{figure}
    \begin{subfigure}{.5\textwidth}
         \centering
         {\footnotesize
\begin{tikzpicture}
  \node[circle, draw] (1) at (0, 1.5) {1};
  \node[circle, draw] (2) at (-0.3, 0) {2};
  \node[circle, draw] (3) at (1.5, 0) {3};
  \node[circle, draw] (4) at (-1, -2) {4};
  \draw[->, thick, black]
  (4) edge [bend left=0, right] node {$0.5$} (2)
  (2) edge [bend left=20, below] node {$0.4$} (3)
  (3) edge [bend left=20, below] node {$0.4$} (2)
  (2) edge [bend left=0, left] node {$0.5$} (1)
  (3) edge [bend left=0, right] node {$0.5$} (1)
  (4) edge [loop below] node {$0.5$} (4)
  (3) edge [loop right] node {$0.1$} (3)
  (2) edge [loop left] node {$0.1$} (2)
  (1) edge [loop above] node {$1.0$} (1);
\end{tikzpicture}
}
         \caption{Social network $\sS_a$}
         \label{f:degroot_1}
    \end{subfigure}
    \begin{subfigure}{.5\textwidth}
         \centering
         {\footnotesize
\begin{tikzpicture}
  \node[circle, draw] (1) at (0, 1.5) {1};
  \node[circle, draw] (2) at (-0.3, 0) {2};
  \node[circle, draw] (3) at (1.5, 0) {3};
  \node[circle, draw] (4) at (-1, -2) {4};
  \draw[->, thick, black]
  (1) edge [bend right=90, left] node {$0.2$} (4)
  (4) edge [bend left=0, right] node {$0.5$} (2)
  (2) edge [bend left=20, below] node {$0.4$} (3)
  (3) edge [bend left=20, below] node {$0.4$} (2)
  (2) edge [bend left=0, left] node {$0.5$} (1)
  (3) edge [bend left=0, right] node {$0.5$} (1)
  (4) edge [loop below] node {$0.5$} (4)
  (3) edge [loop right] node {$0.1$} (3)
  (2) edge [loop left] node {$0.1$} (2)
  (1) edge [loop above] node {$0.8$} (1);
\end{tikzpicture}
}
         \caption{Social network $\sS_b$}
         \label{f:degroot_2}
     \end{subfigure}
     \caption{Two social networks}
     \label{f:degroot}
\end{figure}

Figure~\ref{f:degroot} shows two social networks $\sS_a$ and $\sS_b$ with trust matrices given by 
\begin{equation*}
    T_a = 
    \begin{pmatrix}
        1 & 0 & 0 & 0 \\
        0.5 & 0.1 & 0.4 & 0 \\
        0.5 & 0.4 & 0.1 & 0 \\
        0 & 0.5 & 0 & 0.5 
    \end{pmatrix} 
    \quad \text{and} \quad
    T_b = 
    \begin{pmatrix}
        0.8 & 0 & 0 & 0.2 \\
        0.5 & 0.1 & 0.4 & 0 \\
        0.5 & 0.4 & 0.1 & 0 \\
        0 & 0.5 & 0 & 0.5 
    \end{pmatrix}
\end{equation*}
respectively.  In network A, agent 1 places no trust in anyone's opinion but
his own.  In network B, he places at least some trust in the opinion of agent
4. Below we show how these differences matter for the dynamics of beliefs.

\subsubsection{Learning}

At time zero, all agents have an initial subjective belief concerning the
validity of a given statement.  Belief takes values in $[0, 1]$, with 1
indicating complete (subjective) certainty that the statement is true.  Let
$b_0(i)$ be the belief of agent $i$ at time zero.  

An agent updates beliefs sequentially based on the beliefs of others, weighted by the amount
of trust placed in their opinion.  Specifically, agent $i$ updates her belief
after one unit of time to $\sum_{j =1}^n T(i, j) b_0(j)$.   More generally,
at time $t+1$, beliefs update to
\begin{equation}
    b_{t+1}(i) = \sum_{j =1}^n T(i, j) b_t(j)
    \qquad (i \in V).
\end{equation}
In matrix notation this is $b_{t+1} = T b_t$, where each $b_t$ is treated as a
column vector.

(Notice that this update rule is similar but not identical to the marginal
distribution updating rule for Markov chains (the forward equation) discussed
on page~\pageref{sss:lmarg}.  Here we are postmultiplying by a column vector
rather than premultiplying be a row vector.)

\begin{Exercise}
    If some subgroup of agents $U \subset V$ is an absorbing set
    for the digraph $\sS$, indicating that members of this group place no
    trust in outsiders, then the initial beliefs $\{b_0(i)\}_{i \in U^c}$ of
    the outsiders (members of $U^c = V \setminus U$) have no influence on the
    beliefs of the insiders (members of $U$) at any point in time.  Prove
    that this is true.
\end{Exercise}


\subsubsection{Consensus}

A social network $\sS$ is said to lead to \navy{consensus}\index{Consensus} if $|b_t(i) -
b_t(j)| \to 0$ as $t \to \infty$ for all $i, j \in V$. Consensus
implies that all agents eventually share the same belief.  An important
question is, what conditions on the network lead to a consensus outcome?

\begin{proposition}\label{p:consen}
    If there exists a $k \in \NN$ such that $\alpha(T^k) > 0$, then consensus
    is obtained.  In particular,
    \begin{equation}
        |b_t(i) - b_t(j)|
        \leq 2 (1 - \alpha(T^k))^{\lfloor t/k \rfloor}
        \quad \text{for all $t \in \NN$ and $i, j \in V$}.
    \end{equation}
\end{proposition}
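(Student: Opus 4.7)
The plan is to reduce the belief-consensus statement to the Markov–Dobrushin contraction bound in Theorem~\ref{t:mbbk}, by recognizing that the $i$-th and $j$-th rows of $T^t$ can be written as $\delta_i T^t$ and $\delta_j T^t$, where $\delta_i$ and $\delta_j$ are point masses in $\dD(V)$.

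First I would iterate the update rule $b_{t+1}=T b_t$ to obtain $b_t = T^t b_0$, giving the pointwise expression
\begin{equation*}
    b_t(i) - b_t(j)
    = \sum_{\ell \in V} \bigl[ T^t(i,\ell) - T^t(j,\ell) \bigr] \, b_0(\ell).
\end{equation*}
Since $b_0(\ell) \in [0,1]$, the triangle inequality yields
\begin{equation*}
    |b_t(i) - b_t(j)|
    \;\leq\; \sum_{\ell \in V} \bigl| T^t(i,\ell) - T^t(j,\ell) \bigr|
    \;=\; \rho\bigl( T^t(i,\cdot), \, T^t(j,\cdot) \bigr),
\end{equation*}
where $T^t(i,\cdot)$ denotes the $i$-th row of $T^t$, viewed as an element of $\dD(V)$.

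Next I would observe that the $i$-th row of $T^t$ is exactly $\delta_i T^t$, the forward image of the point mass at $i$ under $t$ applications of $T$. Applying Theorem~\ref{t:mbbk} with $\phi = \delta_i$ and $\psi = \delta_j$, together with the universal bound $\rho(\delta_i, \delta_j) \leq 2$ from Exercise~\ref{ex:borho}, gives
\begin{equation*}
    \rho\bigl(T^t(i,\cdot),\, T^t(j,\cdot)\bigr)
    \;=\; \rho(\delta_i T^t, \delta_j T^t)
    \;\leq\; (1 - \alpha(T^k))^{\lfloor t/k \rfloor} \, \rho(\delta_i, \delta_j)
    \;\leq\; 2(1 - \alpha(T^k))^{\lfloor t/k \rfloor}.
\end{equation*}
Chaining this with the previous inequality yields the quantitative bound in the proposition, and since $\alpha(T^k) > 0$ implies $(1-\alpha(T^k))^{\lfloor t/k \rfloor} \to 0$, consensus follows.

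There is no real obstacle: the entire argument is essentially a bookkeeping exercise that converts a column-vector statement about beliefs into a row-vector statement about the $\ell_1$ distance between rows of $T^t$, so Theorem~\ref{t:mbbk} can be invoked directly. The only subtlety worth flagging is that $T$ acts on beliefs from the left while the Markov–Dobrushin machinery is formulated for right-action on distributions; the identity $\delta_i T^t = T^t(i,\cdot)$ bridges the two viewpoints cleanly.
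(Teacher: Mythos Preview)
Your proof is correct and follows essentially the same approach as the paper's: both recognize that $b_t(i) = \delta_i T^t b_0$, bound $|b_t(i)-b_t(j)|$ by $\rho(\delta_i T^t,\delta_j T^t)$ using $b_0(\ell)\in[0,1]$, and then invoke the Markov--Dobrushin contraction bound of Theorem~\ref{t:mbbk} together with $\rho(\delta_i,\delta_j)\leq 2$. The only cosmetic difference is that the paper first proves the bound for arbitrary $\phi,\psi\in\dD(V)$ and then specializes to point masses, whereas you specialize immediately.
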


\begin{proof}
    Fix $i,j \in V$ and $t \in \NN$. Let $b = b_0$.  For any $\phi, \psi \in
    \dD(V)$, an application of the triangle inequality gives
    \begin{equation*}
        | \phi T^t b - \psi T^t b|
        = \left|
                \sum_j (\phi T^t)(j) b(j)
                - \sum_j (\psi T^t)(j) b(j)
            \right| 
        \leq \sum_j
            \left|
                (\phi T^t)(j) 
                - (\psi T^t)(j) 
            \right| ,
    \end{equation*}
    where we have used the fact that $|b(j)| = b(j) \leq 1$.
    Applying the definition of the $\ell_1$ deviation and \eqref{eq:bfmdt2}, 
    we obtain the bound
    \begin{equation}\label{eq:tbbond}
        | \phi T^t b - \psi T^t b|
        \leq
        (1 - \alpha(T^k))^{\lfloor t/k \rfloor} \rho(\phi, \psi) 
        \leq 2  (1 - \alpha(T^k))^{\lfloor t/k \rfloor} .
    \end{equation}
    Since this bound is valid for any choice of $\phi, \psi \in \dD(V)$, we
    can specialize to $\phi = \delta_i$ and $\psi =
    \delta_j$ and, applying $T^t b = b_t$, get
    \begin{equation*}
        |b_t(i) - b_t(j)| 
        = | \delta_i T^t b - \delta_j T^t b| 
        \leq 2  (1 - \alpha(T^k))^{\lfloor t/k \rfloor} .
        \qedhere
    \end{equation*}
\end{proof}

Proposition~\ref{p:consen} can be applied to the two social networks $\sS_a$
and $\sS_b$ in Figure~\ref{f:degroot}.  For example, in network $\sS_a$, for
every node $i$, there exists a walk of length 2 from $i$ to node $1$.
Hence, by Lemma~\ref{l:kdw}, we have $\alpha(T^2) > 0$.  In network $\sS_b$
the same is true.  

\begin{Exercise}
    Let $\sS$ be a social network with trust (and adjacency) matrix $T$.  Use
    Proposition~\ref{p:consen} to show that $\sS$ leads to consensus whenever
    $\sS$ is strongly connected and aperiodic.
\end{Exercise}

\begin{Answer}
    If $\sS$ is strongly connected and aperiodic, then the adjacency matrix $T$ is
    primitive, so there exists a $k \in \NN$ such that $T^k \gg 0$.  Hence
    $\alpha(T^k) > 0$, and Proposition~\ref{p:consen} applies.
\end{Answer}

\subsubsection{Influence of Authorities}

Now let's consider what beliefs converge to when consensus emerges. In
particular, we are interested in discovering whose opinions are most
influential under De Groot learning, for a given trust matrix.

To answer the question, let $\sS$ be a given social network with trust matrix
$T$.   Suppose that $\alpha(T^k) > 0$ for some $k \in \NN$.  By
Theorem~\ref{t:mbbk}, the network $\sS$ is globally stable when viewed as a
finite Markov model.  Let $\psi^*$ be the unique stationary distribution, so
that $\psi^* = \psi^* T$.

Applying \eqref{eq:tbbond} with $\phi = \delta_i$ and $\psi = \psi^*$ yields
\begin{equation*}
    \left| 
        b_t(i) - b^* 
    \right|
    \leq
    (1 - \alpha(T^k))^{\lfloor t/k \rfloor} \rho(\phi, \psi) 
    \leq 2  (1 - \alpha(T^k))^{\lfloor t/k \rfloor} 
\end{equation*}
where 
\begin{equation*}
    b^* := \sum_{j \in V} \psi^*(j)b_0(j) .
\end{equation*}
We conclude that the belief of every agent converges geometrically to $b^*$,
which is a weighted average of the initial beliefs of all agents. In
particular, those agents with high weighting under the stationary distribution
have a large influence on these equilibrium beliefs.

We can interpret this through notions of centrality.  Since $r(T)=1$, we have
$(\psi^*)^\top = (1/r(T)) T^\top (\psi^*)^\top$, so $\psi^*$ is the
authority-based eigenvector centrality measure on $\sS$.  Thus, the influence
of each agent on long run beliefs is proportional to their authority-based
eigenvector centrality.  This makes sense because such agents are
highly trusted by many agents who are themselves highly
trusted.

\begin{Exercise}
    For network $\sS_a$ in Figure~\ref{f:degroot}, show that $b^* = b_0(1)$.
    That is, all agents' beliefs converge to the belief of agent 1.
\end{Exercise}

\begin{Exercise}
    Using a computer, show that the stationary distribution for $\sS_b$,
    rounded to two decimal places, is $\psi^* = (0.56, 0.15, 0.07, 0.22)$.  
    (Notice how the relatively slight change in network structure from $\sS_a$
    to $\sS_b$ substantially reduces the influence of agent 1.)
\end{Exercise}

\section{Chapter Notes}

High quality treatments of finite-state Markov dynamics include
\cite{norris1998markov}, \cite{privault2013understanding} and
\cite{haggstrom2002finite}.   For the general state case see
\cite{meyn2009markov}.

A review of De Groot learning is available in \cite{jackson2010social}.  Some
interesting extensions related to the ``wisdom of crowds'' phenomenon are
provided in \cite{golub2010naive}.  \cite{acemoglu2021misinformation} study
misinformation and echo chambers in information networks.
\cite{board2021learning} analyze learning dynamics in continuous time on large
social networks.  \cite{shiller2020narrative} provides an interesting
discussion of how ideas spread across social networks and shape economic
outcomes.

\chapter{Nonlinear Interactions}\label{c:fpms}

Much of what makes network analysis interesting is how ramifications of choices flow across 
networks.  In general, decisions made at a given node $i$
affect responses of neighboring nodes and, through them, neighbors of
neighboring nodes, and so on.  As these consequences flow across the network,
they in turn affect choices at $i$.  This is a tail-chasing
scenario, which can be unraveled through fixed point theory.

In some network settings, such as the input-output model in \S\ref{ss:mutmod},
interactions are linear and fixed point problems reduce a
system of linear equations.  In other settings, interactions are inherently
nonlinear and, as a result, we need more sophisticated fixed point theory.

This chapter is dedicated to the study of networks with nonlinear
interactions.  We begin with relevant fixed point theory and then apply it to
a sequence of problems that arise in analysis of economic networks, including
production models with supply constraints and financial networks.

\section{Fixed Point Theory}

Let $S$ be any set.  Recall from \S\ref{sss:fpfd} that, given a self-map $G$
on $S$, a point $x \in S$ is called a fixed point\index{Fixed point} of $G$ if
$Gx = x$.  (A self-map\index{Self-map} on $S$ is a function $G$ from $S$ to itself.  When
working with self-maps it is common to abbreviate $G(x)$ to $Gx$.)  
In this chapter, we will say that $G$ is \navy{globally stable} on $S$ if $G$
has a unique fixed point $x^* \in S$ and $G^k x \to x^*$ as $k \to \infty$ for
all $x \in S$.  In other words, under this property, the fixed point is not
only unique but also globally attracting under iteration of $G$.

We have already discussed fixed points, indirectly or directly, in multiple contexts:
\begin{itemize}
    \item In Chapter~\ref{c:prod} we studied the equation 
        $x = A x + d$, where $x$ is an output vector, $A$ is a matrix of
        coefficients and $d$ is a demand vector.  A solution $x$ to this equation
        can also be thought of as a fixed point of the affine map $Fx = Ax + d$.
    \item In Chapter~\ref{c:mcs} we learned that a stationary distribution of
        a finite Markov model with state space $S$ and adjacency matrix $P$ is
        a $\psi \in \dD(S)$ with $\psi = \psi P$.  In other words, $\psi$ is a
        fixed point of $\psi \mapsto \psi P$ in $\dD(S)$.
    \item In Chapter~\ref{c:ofd} we studied the Bellman equation $q(x) =
        \min_{y \in \oO(x)} \{ c(x, y) + q(y) \}$ and introduced an
        operator the Bellman operator $T$, with the property that its fixed points 
        exactly coincide with solutions to the Bellman equation.
\end{itemize}

In each case, when we introduced these fixed point problems, we immediately
needed to consider questions of existence and uniqueness of fixed points.  Now we address
these same issues more systematically in an abstract setting.

\subsection{Contraction Mappings}\label{ss:bcmt}

In Chapter~\ref{c:prod} we studied solutions of the
system $x = Ax + b$ that are fixed points of the affine map $Fx = Ax + b$ on
$\RR^n$ studied in Example~\ref{eg:bcam}. The Neumann series lemma
on page~\pageref{t:nsl} is, in
essence, a statement about existence and uniqueness of fixed points of this
map. Here we investigate another fixed point theorem, due to Stefan Banach
(1892--1945), that can be thought of as extending the Neumann series lemma to
nonlinear systems.

\subsubsection{Contractions}

Let $S$ be a nonempty subset of $\RR^n$.  A self-map $F$ on $S$ is called
\navy{contracting}\index{Contracting} or a \navy{contraction of modulus
$\lambda$} on $S$ if there exists a $\lambda < 1$ and a norm $\| \cdot \|$ on
$\RR^n$ such that
\begin{equation}
    \label{eq:uc}
    \| Fu - Fv \| \leq \lambda \| u - v \| \quad \text{for all} \quad u, v \in S.
\end{equation}

\begin{Exercise}\label{ex:sciufp}
    Let $F$ be a contraction of modulus $\lambda$ on $S$.
    Show that 
    \begin{enumerate}
        \item $F$ is continuous on $S$ and
        \item $F$ has at most one fixed point on $S$. 
    \end{enumerate}
\end{Exercise}

\begin{example}\label{eg:bcam}
    Let $S = \RR^n$, paired with the Euclidean norm $\| \cdot \|$.  Let
    $Fx = Ax + b$, where $A \in \matset{n}{n}$ and $b \in \RR^n$.  If $\| A \|
    < 1$, where $\| \cdot \|$ is the operator norm on $\matset{n}{n}$, then 
    $F$ is a contraction of modulus $\| A \|$, since, for any $x, y \in S$,
    \begin{equation*}
        \| Ax + b - Ay - b \|
        = \| A(x - y) \| \leq \|A \| \| x - y \|.
    \end{equation*}
\end{example}

The next example uses a similar idea but is based on a different norm.

\begin{example}
    In~\eqref{eq:msirho} we studied the system $\rho = A^\top
    \rho - \epsilon$, where $A = (a_{ij})$ is a matrix of 
    coefficients satisfying $\sum_i a_{ij} = 1 - \alpha$ for some $\alpha \in
    (0,1)$, the vector $\epsilon$ is given and $\rho$ is unknown.
    Solutions can be viewed as a fixed points of the map $F \colon \RR^n
    \to \RR^n$ defined by $F p = A^\top p - \epsilon$.  Under the $\| \cdot
    \|_\infty$ norm, $F$ is a contraction of modulus $1-\alpha$ on $\RR^n$.
    Indeed,  for any $p, q \in \RR^n$,
    \begin{equation*}
        \| Fp - Fq \|_\infty
        = \max_j \left| \sum_{i=1}^n a_{ij} (p_i - q_i) \right|
        \leq \max_j \sum_{i=1}^n  a_{ij} \left|  p_i - q_i \right|.
    \end{equation*}
    Since $|p_i - q_i| \leq \| p - q \|_\infty$, we obtain
    \begin{equation*}
        \| Fp - Fq \|_\infty
        \leq \max_j \sum_{i=1}^n  a_{ij} \| p - q \|_\infty
        = (1- \alpha) \| p - q \|_\infty.
    \end{equation*}
\end{example}

Consider again Example~\ref{eg:bcam}. For the affine map $Fx = Ax + b$, the
condition $\| A \| < 1$ used to obtain contraction is stronger than the
condition $r(A) < 1$ used to obtain a unique fixed point in the Neumann series
lemma (see Exercise~\ref{ex:srsn} on page~\pageref{ex:srsn}).  Furthermore, the
Neumann series lemma provides a geometric series representation of the fixed
point. On the other hand, as we now show, the contraction property can be used
to obtain unique fixed points when the map in question is not affine.

\subsubsection{Banach's Theorem}\label{sss:bcmts}

The fundamental importance of contractions stems from the following theorem.

\begin{theorem}[Banach's contraction mapping theorem]\label{t:bfpt} If $S$ is
    closed in $\RR^n$ and $F$ is a contraction of modulus $\lambda$ on $S$,
    then $F$ has a unique fixed point $u^*$ in $S$ and 
    \begin{equation}\label{eq:banachrate}
        \| F^n u - u^* \| \leq \lambda^n \| u - u^* \|
        \quad \text{for all } n \in \NN \text{ and } u \in S.
    \end{equation}
    In particular, $F$ is globally stable on $S$.
\end{theorem}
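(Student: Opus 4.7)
The plan is to exploit the contraction inequality in \eqref{eq:uc} to show that, starting from any $u_0 \in S$, the iterates $u_n := F^n u_0$ form a Cauchy sequence, and then use completeness of $\RR^n$ together with closedness of $S$ to obtain a limit point in $S$. This limit will be the desired fixed point. Uniqueness has already been handled in Exercise~\ref{ex:sciufp}(ii), so the remaining tasks are existence, identification of the limit as a fixed point, and the geometric rate bound \eqref{eq:banachrate}.

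First I would fix any $u_0 \in S$ and set $u_n := F^n u_0$. A single application of \eqref{eq:uc} yields $\|u_{n+1} - u_n\| \leq \lambda \|u_n - u_{n-1}\|$, and iterating gives $\|u_{n+1} - u_n\| \leq \lambda^n \|u_1 - u_0\|$. Then, for any $m > n$, the triangle inequality and the geometric series bound give
\begin{equation*}
    \| u_m - u_n \|
    \leq \sum_{k=n}^{m-1} \| u_{k+1} - u_k \|
    \leq \|u_1 - u_0\| \sum_{k=n}^{m-1} \lambda^k
    \leq \frac{\lambda^n}{1-\lambda} \|u_1 - u_0\|.
\end{equation*}
Since $\lambda < 1$, the right-hand side tends to zero as $n \to \infty$, so $(u_n)$ is Cauchy in $\RR^n$. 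Because $\RR^n$ is complete, $(u_n)$ converges to some $u^* \in \RR^n$, and since $S$ is closed and each $u_n \in S$, we have $u^* \in S$.

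Next I would verify that $u^*$ is a fixed point. By Exercise~\ref{ex:sciufp}(i), $F$ is continuous on $S$, so taking limits in $u_{n+1} = F u_n$ gives $u^* = F u^*$. Uniqueness is then immediate from Exercise~\ref{ex:sciufp}(ii). For the rate bound, I would fix $u \in S$ and apply \eqref{eq:uc} iteratively together with the fixed-point identity: $\|F u - u^*\| = \|F u - F u^*\| \leq \lambda \|u - u^*\|$, and a straightforward induction extends this to $\|F^n u - u^*\| \leq \lambda^n \|u - u^*\|$, which yields \eqref{eq:banachrate} and, since $\lambda^n \to 0$, global stability on $S$.

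I do not anticipate any genuine obstacle: the only subtle point is making sure the limit lies in $S$ (which is where the closedness hypothesis is used) and that continuity justifies passing to the limit inside $F$. Both are routine, so the main care is just bookkeeping with the norm inequalities.
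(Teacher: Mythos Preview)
Your proof is correct and follows essentially the same approach as the paper's: the paper also fixes $u_0$, shows via the contraction inequality that the iterates form a Cauchy sequence (Exercises~\ref{ex:bctqb}--\ref{ex:bctic}), uses closedness of $S$ to place the limit in $S$, invokes continuity (via Lemma~\ref{l:clifp} and Exercise~\ref{ex:sciufp}) to identify it as a fixed point, and then obtains \eqref{eq:banachrate} by iterating \eqref{eq:uc} with $v = u^*$.
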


We complete a proof of Theorem~\ref{t:bfpt} in stages.

\begin{Exercise}\label{ex:bctqb}
    Let $S$ and $F$ have the properties stated in Theorem~\ref{t:bfpt}.  Fix
    $u_0 \in S$ and let $u_m := F^m u_0$.  Show that
    \begin{equation*}
        \| u_m - u_k \| \leq \sum_{i=m}^{k-1} \lambda^i \| u_0 - u_1 \|
    \end{equation*}
    holds for all $m, k \in \NN$ with $ m < k$.
\end{Exercise}

\begin{Exercise}\label{ex:bctic}
    Using the results in Exercise~\ref{ex:bctqb}, prove that $(u_m)$ is a
    Cauchy sequence in $\RR^n$ (see \S\ref{sss:ocon} for notes on the Cauchy
    property). 
\end{Exercise}

\begin{Answer}
    From the bound in Exercise~\ref{ex:bctqb}, we obtain
    \begin{equation*}
        \| u_m - u_k \| \leq \frac{\lambda^m}{1 - \lambda} \lambda^i \| u_0 - u_1 \|
         \qquad (m,k \in \NN \text{ with } m < k).
    \end{equation*}
    Hence $(u_m)$ is Cauchy, as claimed.
\end{Answer}

\begin{Exercise}
    Using Exercise~\ref{ex:bctic}, argue that $(u_m) $ hence has a limit $u^*
    \in \RR^n$.  Prove that $u^* \in S$.
\end{Exercise}

\begin{proof}[Proof of Theorem~\ref{t:bfpt}]
    In the exercises we proved existence of a point $u^* \in S$ such
    that $F^m u \to u^*$.  The fact that $u^*$ is a fixed point of $F$ now follows
    from Lemma~\ref{l:clifp} on page~\pageref{l:clifp} and
    Exercise~\ref{ex:sciufp}.  Uniqueness is implied by
    Exercise~\ref{ex:sciufp}.  The bound \eqref{eq:banachrate} follows from
    iteration on the contraction inequality~\eqref{eq:uc} while setting
    $v=u^*$.
\end{proof}

\subsubsection{Eventual Contractions}

Let $S$ be a nonempty subset of $\RR^n$. A self-map $F$ on $S$ is called
\navy{eventually contracting}\index{Eventually contracting} if there exists a
$k \in \NN$ such that $F^k$ is a contraction on $S$.  Significantly, most of
the conclusions of Banach's theorem carry over to the case where $F$ is
eventually contracting.

\begin{theorem}\label{t:bfpt2}
    Let $F$ be a self-map on $S \subset \RR^n$.  If $S$ is closed and $F$ is eventually
    contracting, then $F$ is globally stable on $S$.
\end{theorem}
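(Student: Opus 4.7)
The plan is to reduce Theorem~\ref{t:bfpt2} to Banach's contraction mapping theorem (Theorem~\ref{t:bfpt}) by exploiting the iterate $F^k$ that is assumed to be a contraction on the closed set $S$. The first step is to apply Theorem~\ref{t:bfpt} directly to $F^k$, which yields a unique fixed point $x^* \in S$ of $F^k$ together with the convergence $F^{km} u \to x^*$ for every $u \in S$.

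Next I would argue that $x^*$ is in fact a fixed point of $F$ itself. The trick is to observe that $F x^*$ is another fixed point of $F^k$, because $F^k(F x^*) = F(F^k x^*) = F x^*$, and then to invoke uniqueness of the fixed point of $F^k$ from Theorem~\ref{t:bfpt} to conclude $F x^* = x^*$. Uniqueness of $x^*$ as a fixed point of $F$ is then immediate: any fixed point of $F$ is automatically a fixed point of $F^k$, hence equal to $x^*$.

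The step I expect to require the most care is upgrading the convergence $F^{km} u \to x^*$ (along the subsequence of multiples of $k$) to the full convergence $F^n u \to x^*$ as $n \to \infty$. My plan is to split the natural numbers into $k$ residue classes modulo $k$: for each $r \in \{0, 1, \ldots, k-1\}$, write $n = qk + r$ and note that
\begin{equation*}
    F^n u = F^{qk}(F^r u).
\end{equation*}
Since $F^r u$ lies in $S$, Theorem~\ref{t:bfpt} applied to $F^k$ gives $F^{qk}(F^r u) \to x^*$ as $q \to \infty$. Thus each of the finitely many subsequences $(F^{qk+r} u)_{q \geq 0}$ converges to the same limit $x^*$, which forces the full sequence $(F^n u)_{n \geq 0}$ to converge to $x^*$ as well. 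Combined with uniqueness, this establishes global stability of $F$ on $S$.

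A minor technical point worth verifying along the way is that $F^r u \in S$ for each $r$, which simply uses the self-map property of $F$ inductively. Beyond this, no further machinery is required: the whole argument is a clean reduction to Banach's theorem plus a residue-class decomposition to handle the non-multiple-of-$k$ iterates.
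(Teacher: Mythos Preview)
Your proof is correct and follows essentially the same route as the paper: apply Banach's theorem to $F^k$, show the resulting fixed point is fixed by $F$, and then use a residue-class (division-with-remainder) decomposition $n = qk + r$ to upgrade subsequential convergence to full convergence. The one minor variation is in the step showing $Fx^* = x^*$: you use the clean uniqueness argument ($Fx^*$ is a fixed point of $F^k$, hence equals $x^*$), whereas the paper instead writes $\|Fu^* - u^*\| = \|F^{nk}(Fu^*) - u^*\|$ and lets $n \to \infty$ via the convergence part of Banach's theorem. Both arguments exploit the commutation $F \circ F^k = F^k \circ F$ and are equivalent in spirit; yours is the more standard textbook formulation.
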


\begin{Exercise}
    Prove Theorem~\ref{t:bfpt2}.\footnote{Hint: Theorem~\ref{t:bfpt} is
    self-improving: it implies this seemingly stronger result.  The proof is
    not trivial but see if you can get it started.  You might like to note
    that $F^k$ has a unique fixed point $u^*$ in $S$.  (Why?)
    Now consider the fact that $\| F u^*- u^* \| = \| F F^{nk} u^*- u^*\|$ for
    all $n \in \NN$.}
\end{Exercise}

\begin{Answer}
    Let $S$ be complete, let $F$ be a self-map on $S$ and let $F^k$ be a
    uniform contraction.  Let $u^*$ be the unique fixed point of $F^k$.  Fix
    $\epsilon > 0$.  We can choose $n$ such that $\|F^{nk} F u^* - u^* \| <
    \epsilon$.  But then 
    \begin{equation*}
        \| F u^*- u^* \| = \| F F^{nk} u^*- u^*\| = \|F^{nk} F u^* - u^* \|  < \epsilon.
    \end{equation*}
    Since $\epsilon$ was arbitrary we have $\| F u^*- u^*\| = 0$, implying
    that $u^*$ is a fixed point of $F$.  

    Regarding convergence, fix $u \in S$.  Given $n \in \NN$, there exist
    integers $j(n)$ and $i(n)$ such that $n = j(n) k + i(n)$, and $j(n) \to
    \infty$ as $n \to \infty$.  Hence
    \begin{equation*}
        \| F^n u - u^* \|
        = \| F^{j(n)k + i(n)} u - u^* \|
        = \| F^{j(n)k} F^{i(n)} u - u^* \| \to 0
        \qquad (n \to \infty),
    \end{equation*}
    by the assumptions on $F^k$.  Convergence implies uniqueness of the fixed
    point (why?).
\end{Answer}

There is a close connection between Theorem~\ref{t:bfpt2} and the Neumann
series lemma (NSL).  If $S=\RR^n$ and $F x = A x + b$ with $r(A) < 1$, then 
the NSL implies a unique fixed point.  We can also obtain this result from 
Theorem~\ref{t:bfpt2}, since, for any $k \in \NN$,
\begin{equation*}
    \| F^k x - F^k y \|
    = \| A^k x - A^k y \|
    = \| A^k (x - y) \| 
    \leq \|A^k \| \| x - y \|.
\end{equation*}
As $r(A) < 1$, we can choose $k$ such that $\| A^k \| < 1$
(see \S\ref{sss:mnspec}).  Hence $F$ is eventually contracting and
Theorem~\ref{t:bfpt2} applies. 

As mentioned above, contractions and eventual contractions have much wider
scope than the NSL, since they can also be applied in nonlinear settings. At
the same time, the NSL is preferred when its conditions hold, since it also
gives inverse and power series representations of the fixed point.

\subsubsection{A Condition for Eventual Contractions}

The result below provides a useful test for the eventual
contraction property. (In the statement, the absolute value of a vector is
defined pointwise, as in \S\ref{sss:poov}.)

\begin{proposition}\label{p:cec}
    Let $F$ be a self-map on $S \subset \RR^n$ such that, for some $n \times
    n$ matrix $A$,
    \begin{equation*}
        |F x - F y| \leq A | x - y |
        \quad \text{for all } \, x, y \in S.
    \end{equation*}
    If, in addition, $A \geq 0$ and $r(A) <
    1$, then $F$ is eventually contracting on $S$ with respect to the Euclidean
    norm.
\end{proposition}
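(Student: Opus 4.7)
The plan is to show that an appropriate iterate $F^k$ satisfies a Lipschitz bound of the form $\|F^k x - F^k y\| \leq \|A^k\| \, \|x-y\|$ in the Euclidean norm, and then to invoke Gelfand's formula to force this Lipschitz constant below one.

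First I would iterate the pointwise bound. By hypothesis, $|Fx - Fy| \leq A|x-y|$, and since $A \geq 0$ the map $v \mapsto Av$ is order-preserving (Exercise~\ref{ex:nnmatop}). So if the bound $|F^k x - F^k y| \leq A^k |x-y|$ holds at stage $k$, applying $A$ to both sides of this inequality gives
\begin{equation*}
    |F^{k+1} x - F^{k+1} y| \leq A|F^k x - F^k y| \leq A \cdot A^k |x-y| = A^{k+1}|x-y|.
\end{equation*}
A straightforward induction therefore yields $|F^k x - F^k y| \leq A^k |x-y|$ for every $k \in \NN$ and every $x,y \in S$.

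Next I would convert this componentwise inequality into a norm inequality. The key point is a small monotonicity lemma: whenever $0 \leq u \leq w$ componentwise in $\RR^n$, one has $\|u\| \leq \|w\|$ in the Euclidean norm, because $\|u\|^2 = \sum_i u_i^2 \leq \sum_i w_i^2 = \|w\|^2$. Combined with the fact that $\||v|\| = \|v\|$ for the Euclidean norm, this gives
\begin{equation*}
    \|F^k x - F^k y\|
    = \| \, |F^k x - F^k y| \, \|
    \leq \| A^k |x-y| \|
    \leq \|A^k\| \cdot \||x-y|\|
    = \|A^k\| \cdot \|x-y\|,
\end{equation*}
where in the penultimate step $\|A^k\|$ denotes the operator norm and we used the defining inequality \eqref{eq:smnv}.

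Finally, to conclude eventual contractiveness, I would invoke the fact that $r(A)<1$. By Exercise~\ref{ex:igfo} (a direct consequence of Gelfand's formula), there exist $M<\infty$ and $\delta<1$ with $\|A^k\| \leq M \delta^k$ for all $k$. Hence for $k$ large enough $\|A^k\| < 1$, so $F^k$ is a contraction on $S$ with respect to the Euclidean norm, and $F$ is eventually contracting as claimed. The only mildly subtle step is the monotonicity lemma linking the componentwise order to the Euclidean norm; everything else is routine iteration plus an appeal to Gelfand.
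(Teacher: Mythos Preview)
Your proof is correct and follows essentially the same route as the paper's: the same induction to obtain $|F^k x - F^k y| \leq A^k|x-y|$, the same passage from the pointwise bound to a Euclidean-norm bound via $\||u|\|=\|u\|$ and the monotonicity lemma, and the same appeal to Gelfand's formula (through Exercise~\ref{ex:igfo}) to force $\|A^k\|<1$ for large $k$.
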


\begin{proof}
    Our first claim is that, under the conditions of the proposition, 
    \begin{equation}\label{eq:fk}
        |F^k x - F^k y| \leq A^k | x - y |
        \quad \text{for all $k \in \NN$ and $x, y \in S$}.  
    \end{equation}
    This is true at $k=1$ by
    assumption.  If it is true at $k-1$, then
    \begin{equation}
        |F^k x - F^k y| 
        \leq A | F^{k-1} x - F^{k-1} y |
        \leq A A^{k-1} | x -  y |,
    \end{equation}
    where the second inequality uses the induction hypothesis and $A \geq 0$
    (so that $u \leq v$ implies $Au \leq Av$).  Hence \eqref{eq:fk} holds.

    It follows from the definition of the Euclidean norm that $\| |u| \|
    = \| u\|$ for any vector $u$.  Also, for the same norm, 
    $|u| \leq |v|$ implies $\| u \| \leq \| v\|$.  Hence, 
        for all $k \in \NN$ and $x, y \in S$,  
    \begin{equation*}
        \| F^k x - F^k y \| 
        \leq \| A^k |x - y| \|
        \leq \| A^k \|_o \| x - y \|.
    \end{equation*}
    In the second inequality, we used $\| \cdot \|_o$ for the operator norm,
    combined with the fact that $\| A u \| \leq \|A\|_o \|u\|$ always holds, as
    discussed in \S\ref{sss:onorm}.

    By Gelfand's lemma (see in particular Exercise~\ref{ex:igfo}
    on page~\pageref{ex:igfo}), we obtain existence of a $\lambda \in (0,1)$
    and a $k \in \NN$ with $\|A^k\|_o \leq \lambda < 1$.  Hence, for this $k$,
    \begin{equation*}
        \| F^k x - F^k y \| \leq \lambda \|x - y\|.
    \end{equation*}
    Since $\lambda$ does not depend on $x$ or $y$, we have shown that $F$ is
    an eventual contraction on $S$ with respect to the Euclidean norm.
\end{proof}

\subsection{Shortest Paths Revisited} 

Consider again the shortest path problem introduced in \S\ref{ss:shp}. One
modification that sometimes appears in applications is the addition of
discounting during travel between vertices.  For example, if the vertices are
international ports and travel takes place by sea, then port-to-port travel
time is measured in weeks or even months. It is natural to apply time
discounting to future costs associated with that travel, to implement the idea
that paying a given dollar amount in the future is preferable to paying it
now.

\begin{Exercise}
    Suppose it is possible to borrow and lend risk-free at a positive interest
    rate $r$.  Explain why it is always preferable to have \$100 now than \$100 in
    a year's time in this setting.
\end{Exercise}

\begin{Answer}
    If the risk-free real interest rate $r$ is positive, then \$100 received
    now can be converted with probability one into $(1+r) 100 > 100$ dollars in one year.
\end{Answer}

Recall that, without discounting, the Bellman equation for the shortest path
problem takes the form $q(x) = \min_{y \in \oO(x)} \{ c(x, y) + q(y) \}$
for all $x \in V$, where $c$ is the cost function $V$ is the set of vertices
and $q$ is a candidate for the cost-to-go function.   We showed that the
minimum cost-to-go function $q^*$ satisfies the Bellman equation and is the
unique fixed point of the Bellman operator.  

The Bellman equation neatly divides the problem into current costs, embedded
in the term $c(x, y)$, and future costs embedded in $q(y)$.  To add
discounting, we need only discount $q(y)$.  We do this by multiplying it by a
\navy{discount factor} $\beta \in (0, 1)$.  The Bellman equation is then $q(x)
= \min_{y \in \oO(x)} \{ c(x, y) + \beta q(y) \}$ for all $x \in V$ and the
Bellman operator is 
\begin{equation}\label{eq:bespb}
     Tq(x) = \min_{y \in \oO(x)} \{ c(x, y) + \beta q(y) \}
     \qquad (x \in V).
\end{equation}

In \S\ref{ss:belm}, without discounting, we had to work hard to show that the
Bellman operator has a unique fixed point in $U$, the set of all $q \colon
V \to \RR_+$ with $q(d)=0$.  With discounting the proof is easier, since
we can leverage the Banach contraction mapping theorem.

In what follows, we identify the vertices in $V$ with integers $1, \ldots, n$,
where $d$ is identified with $n$.  We then understand $U$ as all nonnegative
vectors $q$ in $\RR^n$ with $q(n) = 0$.  (We continue to write $q(x)$ for the
$x$-th element of the vector $q$, but now $x$ is in $\natset{n}$.)

\begin{Exercise}
    Prove that $U$ is a closed subset of $\RR^n$.
\end{Exercise}

\begin{Answer}
    Take $q_k \to q$ where $(q_k)$ is a sequence of $n$-vectors contained in $U$.
    By Exercise~\ref{ex:clic} on page~\pageref{ex:clic}, since $q_k \geq 0$
    for all $k$, we must have $q \geq 0$. It remains only to show that
    $q(n)=0$.  As $q_k \in U$ for all $k$, we have $q_k(n)=0$ for all $k$.
    By Lemma~\ref{l:eqconvec}, we also have $q_k(n) \to q(n)$.  Hence
    $q(n)=0$.
\end{Answer}

\begin{Exercise}
    Prove that $T$ is order-preserving on $U$ with respect to the pointwise order.
\end{Exercise}

\begin{Answer}
    We need to show that if $p, q \in U$ and $p \leq q$, then $Tp \leq Tq$.
    This follows easily from the definition of $T$ in \eqref{eq:bespb}.
\end{Answer}

\begin{Exercise}\label{ex:tqa}
    Prove that, for any $q \in U$ and $\alpha \in \RR_+$, we have $T(q +
    \alpha \1) = Tq + \beta \alpha \1$.
\end{Exercise}

\begin{Answer}
    Fix $q \in U$ $\alpha \in \RR_+$ and $x \in V$.  By definition, 
    \begin{equation*}
         T(q + \alpha \1) (x) 
         = \min_{y \in \oO(x)} \{ c(x, y) + \beta q(y) + \alpha \beta \}
         = Tq(x) + \alpha \beta.
    \end{equation*}
    Hence $T(q + \alpha \1) = Tq + \beta \alpha \1$ as claimed.
\end{Answer}

Now let $\| \cdot \|_\infty$ be the supremum norm on $\RR^n$ (see
\S\ref{ss:vecnorms}).  We claim that $T$ is a contraction on $U$ of modulus
$\beta$.  To see that this is so, fix $p, q \in U$ and observe that, pointwise,
\begin{equation*}
    Tq 
    = T(p + q - p)
    \leq T(p + \|q - p\|_\infty \1)
    \leq Tp + \beta \|q - p\|_\infty \1,
\end{equation*}
where the first inequality is by the order-preserving of $T$ and the second follows
from Exercise~\ref{ex:tqa}.  Hence
\begin{equation*}
    Tq - Tp \leq  \beta \|q - p\|_\infty \1.
\end{equation*}
Reversing the roles of $p$ and $q$ gives the reverse inequality.  Hence
\begin{equation*}
    |Tq (x) - Tp(x) | \leq  \beta \|q - p\|_\infty 
\end{equation*}
for all $x \in \natset{n}$.  Taking the maximum on the left hand side yields
$\|Tq - Tp \|_\infty \leq  \beta \|q - p\|_\infty$, which shows that $T$ is a
contraction of modulus $\beta$.  Hence Banach's theorem applies and a unique
fixed point exists.

\subsection{Supply Constraints} 

While the input-output model from \S\ref{sss:ioeq} has many useful
applications, its linear structure can be a liability.  One
natural objection to linearity is supply constraints: if sector $j$ doubles
its orders from sector $i$, we cannot always expect that sector $i$ will
quickly meet this jump in demand.

In this sector we investigate the impact of supply constraints on equilibrium.
These constraints introduce nonlinear relationships between nodes that
affect equilibria and make analysis more challenging.

\subsubsection{Production with Constraints}\label{sss:pwconst}

We recall from \S\ref{sss:ioeq} that $d_i$ is final demand for good $i$, $x_i$ is
total sales of sector $i$, $z_{ij} $ is inter-industry sales from sector $i$
to sector $j$, and $a_{ij} = z_{ij}/x_j $ is dollar value of inputs from $i$
per dollar output from $j$. 

Departing from our previous formulation of equilibrium in the input-output
model, suppose that, in the short run, the total output value of sector $i$ is
constrained by positive constant $\bar x(i)$.   Holding prices fixed (in the
short run), this means that sector $i$ has a capacity constraint in terms of
unit output.  For the purposes of our model, the vector of capacity
constraints $\bar x := (\bar x(i))_{i=1}^n$ can be any vector in $\RR^n_+$.

For each sector $i$, we modify the accounting identity \eqref{eq:sales} from
page~\pageref{eq:sales} to 
\begin{equation}\label{eq:csales}
    x_i = \min 
    \left\{ \sum_{j=1}^n o_{ij} + d_i, \; \bar x(i) \right\},
\end{equation}
where $o_{ij}$ is the value of orders from sector $i$ made by sector $j$.
Thus, if the capacity constraint in sector $i$ is not binding, then 
output is the sum of orders from other sectors and orders from final
consumers.  If $\bar x(i)$ is less than this sum, however, then sector $i$
produces to capacity $\bar x(i)$.

An equilibrium for this model is one where all orders are
met, subject to capacity constraints.  The fact that orders are met means that
$o_{ij} = z_{ij} = a_{ij} x_j$.  Substituting this equality into
\eqref{eq:csales} and rewriting as a vector equality, \eqref{eq:csales} can
equivalently be formulated as 
\begin{equation}\label{eq:cvsales}
    x = G x
    \quad \text{where} \quad
    G x := (A x + d) \wedge \bar x.
\end{equation}

The following exercise is key to solving the fixed point problem
\eqref{eq:cvsales}.

\begin{Exercise}\label{ex:gkgk}
    Prove that, for any $x, y \in \RR^n_+$ and $k \in \NN$, we have
    \begin{equation}\label{eq:gky}
        |G x - G y| \leq A |x - y|.
    \end{equation}
\end{Exercise}

\begin{Answer}
    Fix $x, y \in \RR^n_+$ and $k \in \NN$.  By the inequalities in
    \S\ref{sss:sets}, applied pointwise to vectors, we have
    \begin{equation*}
        |G x - G y|
        = | (A  x + d) \wedge \bar x - (A G y + d) \wedge \bar x|
        \leq | A x  + d - (A  y + d)|.
    \end{equation*}
    This proves the claim because, by Exercise~\ref{ex:bmk},
    \begin{equation*}
        | A x  + d - (A y + d)|
        = | A ( x-y)|
        \leq A |x-y|.
    \end{equation*}
\end{Answer}

We are now ready to prove existence of a unique fixed point under the
assumption that every sector has positive value added. 

\begin{proposition}\label{p:gac}
    If Assumption~\ref{a:pva} holds, then $G$ is globally stable in $\RR^n_+$.
    In particular, the constrained production model has
    a unique equilibrium $x^* \in \RR^n_+$.
\end{proposition}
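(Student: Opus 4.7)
The plan is to invoke Proposition~\ref{p:cec} together with Theorem~\ref{t:bfpt2}, so that the bulk of the work has already been done in the contraction machinery developed earlier in the chapter.

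First, I would verify that $G$ is a self-map on $\RR^n_+$. Since $A \geq 0$, $d \geq 0$ and $\bar x \geq 0$, for any $x \in \RR^n_+$ both $Ax + d$ and $\bar x$ lie in $\RR^n_+$, hence so does their pointwise minimum $Gx$. I would also remark that $\RR^n_+$ is a closed subset of $\RR^n$, which is needed for Theorem~\ref{t:bfpt2}.

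Second, I would collect the Lipschitz-type bound. Exercise~\ref{ex:gkgk} already provides the inequality $|Gx - Gy| \leq A|x-y|$ for all $x, y \in \RR^n_+$, so there is nothing further to prove here. Next, under Assumption~\ref{a:pva}, Exercise~\ref{ex:eara} yields $r(A) \leq \eta(A) < 1$. Since $A \geq 0$ and $r(A) < 1$, Proposition~\ref{p:cec} applies and tells us that $G$ is eventually contracting on $\RR^n_+$ with respect to the Euclidean norm.

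Finally, applying Theorem~\ref{t:bfpt2} to the closed set $\RR^n_+$ and the eventually contracting self-map $G$ gives global stability: $G$ has a unique fixed point $x^* \in \RR^n_+$ and $G^k x \to x^*$ for every $x \in \RR^n_+$. By \eqref{eq:cvsales}, this fixed point is precisely the unique equilibrium of the constrained production model.

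There is no real obstacle in this argument, since the heavy lifting is already encapsulated in Proposition~\ref{p:cec} and Theorem~\ref{t:bfpt2}; the only mildly delicate point is confirming that the min with $\bar x$ does not spoil the spectral bound, but that is exactly what Exercise~\ref{ex:gkgk} guarantees via the elementary inequality $|a \wedge c - b \wedge c| \leq |a-b|$ applied coordinatewise.
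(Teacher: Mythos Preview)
Your proof is correct and follows essentially the same route as the paper: invoke Exercise~\ref{ex:eara} for $r(A)<1$, Exercise~\ref{ex:gkgk} for the bound $|Gx-Gy|\le A|x-y|$, then Proposition~\ref{p:cec} to obtain an eventual contraction. You are in fact slightly more thorough than the paper, which leaves the self-map property, the closedness of $\RR^n_+$, and the explicit appeal to Theorem~\ref{t:bfpt2} implicit.
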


\begin{proof}
    As shown in Exercise~\ref{ex:eara}, Assumption~\ref{a:pva} yields $r(A) <
    1$.  Moreover, $A \geq 0$.  Hence, by Exercise~\ref{ex:gkgk} and Proposition~\ref{p:cec}, $G$ is
    eventually contracting on $\RR_+$.  In consequence, a unique 
    equilibrium exists.
\end{proof}

\begin{remark}
    In Proposition~\ref{p:gac}, the weaker conditions on production discussed in
    \S\ref{sss:srsm} can be used in place of Assumption~\ref{a:pva}, which
    requires positive value added in every sector.  As explained in
    \S\ref{sss:srsm}, for $r(A) < 1$ it is enough that value added is
    nonnegative in each sector and, in addition, every sector has an upstream
    supplier with positive value added.
\end{remark}

\subsection{Fixed Points and Monotonicity}

Banach's fixed point theorem and its extensions are foundations of many
central results in pure and applied mathematics.  For our purposes, however,
we need to search a little further, since not all mappings generated by
network models have the contraction property.  In this section, we investigate
two fixed point results that drop contractivity in favor of monotonicity.

\subsubsection{Existence}

Without contractivity, one needs to work harder to obtain even existence of
fixed points, let alone uniqueness and convergence.  This is especially true
if the map in question fails to be continuous.  If, however, the map is order
preserving, then existence can often be obtained via some variation on the
Knaster--Tarski fixed point theorem.

Here we present a version of this existence result that is optimized to our
setting, while avoiding unnecessary excursions into order theory.
In stating the theorem, we recall that a \navy{closed order interval} in $\RR^n$
is a set of the form
\begin{equation*}
    [a, b] := \setntn{x \in \RR^n}{a \leq x \leq b}
\end{equation*}
where $a$ and $b$ are vectors in $\RR^n$.  Also, we call $(x_k) \subset \RR^n$
\navy{increasing} (resp., \navy{decreasing}) if $x_k \leq x_{k+1}$ (resp.,
$x_k \geq x_{k+1}$) for all $k$.

\begin{Exercise}\label{ex:opab}
    Let $[a, b]$ be a closed order interval in $\RR^n$ and let $G$ be an
    order-preserving self-map on $[a,b]$. Prove the following:
    \begin{enumerate}
        \item $(G^k a)$ is increasing and $(G^k b)$ is decreasing.
        \item If $x$ is a fixed point of $G$ in $[a,b]$, then $G^k a \leq x
            \leq G^k b$ for all $k \in \NN$.
    \end{enumerate}
\end{Exercise}

\begin{Answer}
    Since $G$ is a self-map on $[a,b]$, we have $Ga \in [a, b]$ and hence $a
    \leq G a$.  As $G$ is order preserving, applying $G$ to this inequality
    yields $G a \leq G^2 a$.  Continuing in this way (or using induction)
    proves that $(G^k a)$ is increasing.  The proof for $(G^k b)$ is similar.

    If $Gx = x$ for some $x \in [a,b]$, then, since $a \leq x$, we have $Ga
    \leq Gx = x$.  Iterating on this inequality gives $G^k a \leq x$ for all
    $k$.
\end{Answer}

For a self-map $G$ on $S \subset \RR^n$, we say that $x^*$ is a \navy{least
fixed point} (resp., \navy{greatest fixed point}) of $G$ on $S$ if $x^*$ is a
fixed point of $G$ in $S$ and $x^* \leq x$ (resp., $x \leq x^*$) for every
fixed point $x$ of $G$ in $S$.  Finally, we say that $G$ is 
\begin{itemize}
    \item \navy{continuous from below} if $x_k \uparrow x$ in $S$
        implies $G x_k \uparrow G x$ in $S$.
    \item \navy{continuous from above} if $x_k \downarrow x$ in $S$
        implies $G x_k \downarrow G x$ in $S$.
\end{itemize}
Here $x_k \uparrow x$ means that $(x_k)$ is increasing and $x_k \to x$. The
definition of $x_k \downarrow x$ is analogous.  In the next theorem, $S := [a,
b]$ is a closed order interval in $\RR^n$ and $G$ is a self-map on $S$.  

\begin{theorem}\label{t:ktk}
    If $G$ is order-preserving on $S$, then $G$ has 
    a least fixed point $x^*$ and a greatest fixed point $x^{**}$ in $S$.
    Moreover, 
    \begin{enumerate}
        \item if $G$ is continuous from below, then $G^k a \uparrow x^*$
            and
        \item if $G$ is continuous from above, then $G^k b \downarrow
            x^{**}$.
    \end{enumerate}
\end{theorem}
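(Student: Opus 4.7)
By Exercise~\ref{ex:opab}, the iterate sequences $(G^k a)$ and $(G^k b)$ are monotone (increasing and decreasing respectively) and every fixed point $x \in S$ satisfies $G^k a \leq x \leq G^k b$ for all $k$. Since both sequences lie in the bounded order interval $[a,b]$ and are monotone in each coordinate, Lemma~\ref{l:eqconvec} guarantees they converge componentwise to some limits $\alpha, \beta \in [a,b]$, with $\alpha \leq x \leq \beta$ for every fixed point $x$. These limits are the natural candidates for $x^*$ and $x^{**}$, but without continuity they need not be fixed points, so existence must be proved by a different route first.

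For existence without any continuity hypothesis, my plan is a Knaster--Tarski style argument adapted to $\RR^n$. Define $U := \{x \in S : Gx \leq x\}$; this is nonempty because $Gb \in [a,b]$ forces $b \in U$. Let $x^* \in \RR^n$ be the componentwise infimum of $U$, which lies in $[a,b]$ because $U \subset [a,b]$ is bounded in each coordinate. For any $u \in U$, order-preservation gives $G x^* \leq G u \leq u$, so $G x^*$ is a lower bound for $U$ and hence $G x^* \leq x^*$, showing $x^* \in U$. Applying $G$ once more yields $G(Gx^*) \leq G x^*$, so $G x^* \in U$, which gives $x^* \leq G x^*$; combining these two inequalities produces $G x^* = x^*$. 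Any fixed point $y$ lies in $U$, so $x^* \leq y$, confirming $x^*$ is the least fixed point. The dual argument with $L := \{x \in S : x \leq Gx\}$ and componentwise supremum produces the greatest fixed point $x^{**}$.

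For claim (i), I use continuity from below. As noted above, $G^k a \uparrow \alpha$ for some $\alpha \in S$. Continuity from below yields $G(G^k a) \uparrow G \alpha$; but the sequence $G(G^k a) = G^{k+1} a$ is the tail of $(G^k a)$ and so also converges to $\alpha$. By uniqueness of limits, $G \alpha = \alpha$. Since $x^*$ is the least fixed point, $x^* \leq \alpha$; conversely, $G^k a \leq x^*$ for all $k$ by Exercise~\ref{ex:opab}(ii), so passing to the limit gives $\alpha \leq x^*$. Hence $\alpha = x^*$ and $G^k a \uparrow x^*$. Claim (ii) follows by the symmetric argument applied to $(G^k b)$ under continuity from above.

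\textbf{Main obstacle.} The substantive step is the existence argument, since contraction-based tools are unavailable and continuity is not assumed. The key technical point to verify carefully is that the componentwise infimum of $U$ genuinely serves as the order-theoretic greatest lower bound in $\RR^n$ and interacts with $G$ as required; everything else (monotone convergence of the iterates, identification of $\alpha$ with $x^*$) is routine once existence is in hand.
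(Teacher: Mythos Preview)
Your proof is correct. For claims (i) and (ii) you follow essentially the same route as the paper: monotone bounded iterates converge componentwise (via Lemma~\ref{l:eqconvec}), continuity from below/above makes the limit a fixed point, and the sandwich from Exercise~\ref{ex:opab}(ii) identifies it with the least/greatest fixed point.

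The genuine difference is in the existence step. The paper simply invokes the Knaster--Tarski theorem as a black box, noting that $[a,b]$ is a complete lattice and citing \cite{davey2002introduction}. You instead supply a self-contained Knaster--Tarski argument tailored to $\RR^n$: taking the componentwise infimum of $U = \{x \in S : Gx \leq x\}$ and verifying directly that this infimum is the least fixed point. Your version is more elementary in that it avoids the lattice-theoretic machinery and works entirely with the concrete order structure of $\RR^n$; the paper's version is shorter but requires the reader to accept (or look up) the abstract result. Both are valid, and your explicit argument is arguably better suited to a textbook that otherwise develops fixed point theory from first principles.
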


\begin{remark}
    As alluded to above, fixed point results for order preserving maps can be
    obtained in more general settings than the ones used in
    Theorem~\ref{t:ktk} (see, e.g., \cite{davey2002introduction}, Theorem~2.35).  Theorem~\ref{t:ktk}
    is sufficient for our purposes, given our focus on finite networks.  
\end{remark}

\begin{proof}[Proof of Theorem~\ref{t:ktk}]
    Under the stated conditions, existence of least and greatest fixed points
    $x^*$ and $x^{**}$ follow from the Knaster--Tarski fixed point theorem.
    (This holds because $[a, b]$ is a complete lattice.  For a definition of complete
    lattices and a proof of the Knaster--Tarski theorem, see, e.g.,
    \cite{davey2002introduction}.)

    Regarding claim (i), suppose that
    $G$ is continuous from below and consider the sequence $(x_k) := (G^k a)_{k
        \geq 1}$.  Since $G$ is order-preserving (and applying
        Exercise~\ref{ex:opab}), this sequence is increasing and bounded above
        by $x^*$.  Since bounded monotone sequences in $\RR$ converge, each individual
        component of the vector sequence $x^k$ converges in $\RR$.  Hence, by
        Lemma~\ref{l:eqconvec}, the vector sequence $x^k$ converges in $\RR^n$ to some
        $\bar x \in [a, x^*]$.  Finally, by continuity from below, we have
    \begin{equation*}
        \bar x 
        = \lim_k G^k a
        = \lim_k G^{k+1} a
        = G \lim_k G^k a
        = G \bar x,
    \end{equation*}
    so that $\bar x$ is a fixed point.

    We have now shown that $(G^k a)$ converges up to a fixed point $\bar x$ of
    $G$ satisfying $\bar x \leq x^*$.  Since $x^*$ is the least fixed point of
    $G$ in $S$, we also have $x^* \leq \bar x$.  Hence $\bar x = x^*$.

    The proof of claim (ii) is similar to that of claim (i) and hence omitted.
\end{proof}

\begin{remark}
    In the preceding theorem, $x^*$ and $x^{**}$ can be equal, in which
    case $G$ has only one fixed point in $S$.
\end{remark}

\begin{Exercise}\label{ex:cppe}
    Consider the map $G x = (A x + d) \wedge \bar x$ from the constrained
    production model.  In \S\ref{sss:pwconst}, we showed that $G$ has a
    unique fixed point in $\RR^n_+$ when $r(A) < 1$.
    Show now that $G$ has at least one fixed point in $\RR^n_+$, even when
    $r(A)< 1$ fails.  (Continue to assume that $A \geq 0$, $d \geq 0$ and
    $\bar x \geq 0$.)
\end{Exercise}

\begin{Answer}
    Clearly $G$ is a self-map on $S := [0, \bar x]$.  Since $A \geq 0$, we have $G x
    \leq G y$ for all $x, y \in S$.  From this it follows easily that $G$ is
    order-preserving.  Theorem~\ref{t:ktk} now guarantees
    existence of at least one fixed point.
\end{Answer}

\subsubsection{Du's Theorem}

Theorem~\ref{t:ktk} is useful because of its relatively weak assumptions. At
the same time, it fails to deliver uniqueness.  Hence its conclusions are
considerably weaker than the results we obtained from contractivity
assumptions in \S\ref{ss:bcmt}.

In order to recover uniqueness without imposing contractivity, we now consider
order-preserving maps that have additional shape properties.  In doing so, we
use the definition of concave and convex functions in \S\ref{sss:conconfun}.

\begin{theorem}[Du]\label{t:du}
    Let $G$ be an order-preserving self-map on order interval $S = [a, b]
    \subset \RR^n$.  In this setting, if either
    \begin{enumerate}
        \item $G$ is concave and $G a \gg a$ or
        \item $G$ is convex and $G b \ll b$,
    \end{enumerate}
    then $G$ is globally stable on $S$.
\end{theorem}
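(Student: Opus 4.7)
The plan is to reduce the theorem to two substantive claims: (a) $G$ has a unique fixed point in $S$, and (b) the monotone iterations $G^k a$ and $G^k b$ both converge to that fixed point. Given these, the sandwich $G^k a \leq G^k x \leq G^k b$, which holds for any $x \in [a,b]$ by order-preservation, yields $G^k x \to x^*$ for every initial $x$. Existence of a least fixed point $x^*$ and a greatest fixed point $x^{**}$ in $[a,b]$ with $a \leq x^* \leq x^{**} \leq b$ is immediate from Theorem \ref{t:ktk}, so (a) is just the statement $x^* = x^{**}$, and (b) amounts to verifying the continuity hypotheses of that same theorem. I would dispatch case (ii) either by a symmetric argument or, more economically, by reducing it to case (i) via the reflection $\tilde G(x) := a + b - G(a + b - x)$, which is order-preserving and concave with $\tilde G(a) = a + b - G(b) \gg a$, and whose fixed points coincide with those of $G$.

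For uniqueness in case (i), suppose for contradiction that $x^* < x^{**}$. Since $x^* \geq Ga \gg a$, we have $x^* - a \gg 0$; in particular the set
\[
\Lambda := \{s \in [0,1] : s x^{**} + (1-s) a \leq x^*\}
\]
contains a positive element, so $\lambda := \sup \Lambda > 0$. If $\lambda < 1$, applying $G$ to $\lambda x^{**} + (1-\lambda) a \leq x^*$ and using order-preservation, concavity, and the fixed-point identities $Gx^{**} = x^{**}$, $Gx^* = x^*$ yields
\[
x^* \;\geq\; G\bigl(\lambda x^{**} + (1-\lambda) a\bigr) \;\geq\; \lambda x^{**} + (1-\lambda) Ga.
\]
Thus $x^* - [\lambda x^{**} + (1-\lambda) a] \geq (1-\lambda)(Ga - a) \gg 0$, a strictly positive componentwise slack that lets me enlarge $\lambda$ to $\lambda + \epsilon$ for some $\epsilon > 0$ while preserving $(\lambda+\epsilon) x^{**} + (1-\lambda-\epsilon) a \leq x^*$. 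This contradicts maximality of $\lambda$, so $\lambda = 1$, giving $x^{**} \leq x^*$ and hence uniqueness.

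For claim (b), I intend to verify continuity of $G$ along monotone sequences directly from concavity and order-preservation. If $x_k \uparrow \bar x$ in $[a,b]$, then order-preservation gives $\limsup_k G x_k \leq G\bar x$, while for each $t \in (0,1)$ eventually $x_k \geq t \bar x + (1-t) a$ componentwise (trivially so where $\bar x_i = a_i$), so concavity yields $G x_k \geq t G \bar x + (1-t) G a$; sending $k \to \infty$ and then $t \uparrow 1$ delivers $\liminf_k G x_k \geq G \bar x$. A symmetric argument handles decreasing sequences, using $x^{**} \geq Ga \gg a$ to guarantee that the relevant limit lies strictly above $a$. The main obstacle will be the $\lambda$-argument of the second paragraph: the strict inequality $Ga \gg a$ must survive the concavity step to produce the strictly positive slack $(1-\lambda)(Ga - a)$ needed to enlarge $\lambda$, and it is this precise use of strictness that explains why the hypothesis cannot be weakened to $Ga \geq a$.
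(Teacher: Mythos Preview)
The paper does not actually prove this theorem; it simply cites \cite{du1990fixed} and invites the reader to consult that source. So there is no in-paper argument to compare against, and your proposal should be judged on its own merits.

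Your overall architecture is sound and in the spirit of Du's original argument: use Theorem~\ref{t:ktk} for existence of least and greatest fixed points, establish uniqueness via a Thompson-metric style $\lambda$-argument exploiting concavity and the strict slack $Ga \gg a$, then squeeze $G^k x$ between $G^k a$ and $G^k b$. The reflection $\tilde G(x) = a + b - G(a+b-x)$ is a clean way to reduce case (ii) to case (i), and your uniqueness paragraph is correct as written.

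One point deserves more care than ``a symmetric argument'': continuity from above is \emph{not} symmetric to the increasing case for concave maps. Concavity gives lower bounds, so for $x_k \downarrow m$ you cannot directly bound $Gx_k$ from above. The right move, which you implicitly gesture at by flagging $m \gg a$, is this: set $t_k := \min_i (m_i - a_i)/((x_k)_i - a_i) \in (0,1]$, so that $t_k x_k + (1-t_k)a \leq m$ and $t_k \to 1$; then order-preservation plus concavity give $t_k Gx_k + (1-t_k)Ga \leq Gm$, hence $\limsup_k Gx_k \leq Gm$. This works precisely because $m \geq x^{**} \geq Ga \gg a$, so the ratios are well defined. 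Note also that a concave order-preserving self-map on $[a,b]$ can fail to be continuous from above at $a$ itself, so you cannot invoke Theorem~\ref{t:ktk}(ii) verbatim; you should instead reproduce its two-line proof at the specific limit point $m$ of $(G^k b)$, where $m \gg a$ guarantees the needed continuity. With that step spelled out, your proof is complete.
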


A proof of Theorem~\ref{t:du} was obtained in an more abstract setting in
\cite{du1990fixed}.  Interested readers can consult that article for a
proof.

To illustrate how these results can be applied,
consider the constrained production model without assuming positive
value added, so that $r(A) < 1$ is not enforced.  
In Exercise~\ref{ex:cppe} we obtained existence.  With Theorem~\ref{t:du} in
hand, we can also show uniqueness whenever $d \gg 0$ and $\bar x \gg 0$.

Indeed, we have already seen that $G$ is a self-map on $S := [0, \bar x]$ and, 
when this last condition holds, we have $G 0 = d \wedge \bar x \gg 0$.
Hence the conclusions of Theorem~\ref{t:du} will hold if we can establish that
$G$ is concave.

\begin{Exercise}\label{ex:gic}
    Prove that $G$ is concave on $S$.  [Hint: Review \S\ref{sss:conconfun}.]
\end{Exercise}

\begin{Answer}
    By Exercise~\ref{ex:prescon}, the minimum of two concave functions is concave.
    Since $Fx = \bar x$ and $H x = Ax + b$ are both concave, the claim holds.
\end{Answer}

Here is a small extension of Du's theorem that will prove useful soon:

\begin{corollary}\label{c:duext}
    Let $G$ be an order-preserving self-map on $S = [a, b]$.
    If $G$ is concave and there exists an $\ell \in \NN$ such that $G^\ell a \gg
    a$, then $G$ is globally stable on $S$.
\end{corollary}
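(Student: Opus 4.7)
The plan is to reduce Corollary~\ref{c:duext} to Theorem~\ref{t:du} by applying the latter to the iterate $G^\ell$ and then transferring global stability of $G^\ell$ back to $G$. The argument proceeds in three stages.

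First I would verify that $G^\ell$ satisfies the hypotheses of Theorem~\ref{t:du} on the interval $S=[a,b]$. Order-preservation of $G^\ell$ is immediate from composition. Self-mapness of $G^\ell$ on $S$ follows by iterating the fact that $G(S)\subset S$. The key shape property is concavity of $G^\ell$, which I would prove by induction: if $G^k$ is concave and order-preserving, then for any $x,y\in S$ and $\lambda\in[0,1]$,
\begin{equation*}
G^{k+1}(\lambda x+(1-\lambda)y)=G\bigl(G^k(\lambda x+(1-\lambda)y)\bigr)\geq G\bigl(\lambda G^k x+(1-\lambda)G^k y\bigr)\geq \lambda G^{k+1}x+(1-\lambda)G^{k+1}y,
\end{equation*}
where the first inequality uses concavity of $G^k$ together with order-preservation of $G$, and the second uses concavity of $G$. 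Finally, the hypothesis $G^\ell a\gg a$ is exactly condition~(i) of Theorem~\ref{t:du} applied to $G^\ell$. Hence $G^\ell$ is globally stable on $S$, with some unique fixed point $x^*\in S$.

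Second, I would promote $x^*$ to a fixed point of $G$. Since $G^\ell(Gx^*)=G(G^\ell x^*)=Gx^*$, the point $Gx^*$ lies in $S$ and is also fixed by $G^\ell$; uniqueness then forces $Gx^*=x^*$. Any fixed point of $G$ in $S$ is automatically a fixed point of $G^\ell$, so uniqueness of fixed points of $G$ in $S$ follows from uniqueness for $G^\ell$.

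Third, I would establish $G^k x\to x^*$ for every $x\in S$. Writing $k=j\ell+r$ with $0\leq r<\ell$, we have $G^k x=(G^\ell)^j(G^r x)$. Since $G^r x\in S$ and $G^\ell$ is globally stable on $S$, the right-hand side converges to $x^*$ as $k\to\infty$ (equivalently, as $j\to\infty$, uniformly over the finitely many residues $r$). This completes the proof.

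The only step requiring genuine care is the concavity of $G^\ell$, since composition of concave functions is not in general concave; here the order-preserving hypothesis on $G$ is precisely what saves the induction. Everything else is a standard transfer-of-stability argument from $G^\ell$ to $G$, analogous to the passage from Theorem~\ref{t:bfpt} to Theorem~\ref{t:bfpt2}.
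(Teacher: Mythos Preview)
Your proof is correct and follows essentially the same strategy as the paper: apply Du's theorem to the iterate $G^\ell$ and then transfer global stability back to $G$. The only differences are in execution of the transfer step---the paper obtains convergence of $G^k x$ via a monotone sandwich between $G^k a$ and $G^k b$, and identifies the fixed point of $G$ by invoking Knaster--Tarski plus the convergence result, whereas you use the residue decomposition $k=j\ell+r$ (as in the eventual-contraction argument of Theorem~\ref{t:bfpt2}) and the commutation identity $G^\ell(Gx^*)=G(G^\ell x^*)$; both routes are clean and your commutation argument is arguably tidier since it avoids the separate appeal to Theorem~\ref{t:ktk}.
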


\begin{proof}
    Assume the conditions of Corollary~\ref{c:duext}.
    Since compositions of increasing concave operators are increasing and
    concave, Theorem~\ref{t:du} implies that $G^\ell$ is globally stable on $[a, b]$.
    Denote its fixed point by $\bar v$.
    Since $\{ G^m a \}_{m \in \NN}$ is increasing and since the subsequence
    $\{G^{m\ell} a\}_{m \in \NN}$ converges up to $\bar v$ as $m \to \infty$, we must have
    $G^m a \to \bar v$.  A similar argument gives $G^m b \to \bar
    v$.  For any $v \in [a, b]$ we have $G^m a \leq G^m v \leq G^m b$, so $G^m v
    \to \bar v$ as $m \to \infty$.

    The last step is to show that $\bar v$ is the unique fixed point of $G$.
    From Theorem~\ref{t:ktk}, we know that at least one fixed point
    exists.  Now suppose $v \in [a, b]$ is such a point.  Then $v = G^m v$ for
    all $m$.  At the same time, $G^m v \to \bar v$ by the results just
    established.  Hence $v = \bar v$.  The proof is now complete.
\end{proof}

\section{Financial Networks}\label{s:finnet}

Given the long history of crises in financial markets around the globe,
economists and business analysts have developed many tools for assessing the
credit-worthiness of banks and other financial institutions. After the major
financial crises of 2007-2008, originating in the subprime market in the US
and the sudden collapse of Lehmann Brothers, it became clear that the
financial health of individual institutions cannot be assessed in isolation.
Rather, it is essential to analyze solvency and credit-worthiness in terms of
the entire network of claims and liabilities within a highly interconnected
financial system.   In this section, we review financial crises and apply network
analysis to study how they evolve.

\subsection{Contagion}

Some financial crises have obvious causes external to the banking sector. A
prominent example is the hyperinflation that occurred in Weimar Germany around
1921--1923, which was driven by mass printing of bank notes to meet war
reparations imposed under the Treaty of Versailles.  Here the monetary
authority played the central role, while the actions of private banks were
more passive.

Other crises seem to form within the financial sector itself, driven by
interactions between banks, hedge funds, and asset markets.  In many cases,
the crisis follows a boom, where asset prices rise and economic growth is
strong.   Typically, the seeds of the crisis are laid during this boom phase,
when banks extend loans and firms raise capital on the basis of progressively
more speculative business plans.  At some point it becomes clear to investors
that these businesses will fail to meet expectations, leading to a rush for the
exit.  

The last phase of this cycle is painful for the financial sector, since
rapidly falling asset values force banks and other financial institutions to
generate short-term capital by liquidating long-term loans, typically with
large losses, as well as selling assets in the face of falling prices,
hoarding cash and refusing to roll over or extend short term loans to other
institutions in the financial sector.  The financial crisis of 2007--2008
provides a textbook example of these dynamics.

One key aspect of the financial crisis of 2007--2008, as well as other similar
crises, is \navy{contagion}, which refers to the way that financial stress
spreads across a network of financial institutions.  If one institution becomes
stressed, that stress will often spread to investors or counterparties to
which this institution is indebted.  The result of this process is not easy to
predict, since, like equilibrium in the production networks studied in
Chapter~\ref{c:prod}, there is a tail chasing problem: stress spreads from
institution $A$ to institutions $B$, $C$ and $D$, which may in turn increase
stress on $A$, and so on.

In this section we study financial contagion, beginning with a now-standard
model of default cascades.

\subsection{Default Cascades}\label{ss:defcas}

Default cascades are a form of financial contagion, in which default by a node
in a network leads to default by some of its counterparties, which then
spreads across the network.  Below we present a model of default cascades
and analyze its equilibria.

\subsubsection{Network Valuation}

Consider a financial network $\gG = (V, E, w)$, where $V = \natset{n}$ is a
list of $n$ financial institutions called \emph{banks}, with an edge $(i, j)
\in E$ indicating that $j$ has extended credit to $i$.  The size of that loan
is $w(i, j)$.  Thus, an edge points in the direction of a liability, as in
Figure~\ref{f:network_liabfin_trans} on
page~\pageref{f:network_liabfin_trans}: edge $(i, j)$ indicates a liability
for $i$ and an asset for $j$.  As in \S\ref{ss:uwdg},
the set of all direct predecessors of $i \in V$ will be
written as $\iI(i)$, while the set of all direct successors will be denoted
$\oO(i)$.

Banks in the network have both internal and external liabilities, as well as
internal and external assets.  Internal (i.e., interbank) liabilities and
assets are given by the weight function $w$, in the sense that $w(i,j)$ is a
liability for $i$, equal to the size of its loan from $j$, and also an asset
for $j$.  Positive weights indicate the presence of counterparty risk: when
$j$ holds an asset of book value $w(i, j)$, whether or not the loan is repaid
in full depends on the stress placed on bank $i$ and rules that govern
repayment in the event of insolvency.

We use the following notation for the primitives of the model: $x_i := \sum_{j
\in \oO(i)} w(i, j)$ is total interbank liabilities of bank $i$,
\begin{equation}\label{eq:defpi}
    \Pi_{ij} := 
    \begin{cases}
        w(i, j) / x_i & \text{ if } x_i > 0 \\
        0 & \text{ otherwise}
    \end{cases}
\end{equation}
is the matrix of relative interbank liabilities, $a_i$ is external assets held
by bank $i$ and $d_i$ is external liabilities.

When considering the interbank assets of bank $j$, we need to distinguish
between the book value $\sum_{i \in \iI(j)} w(i, j)$ of its claims on other
banks and the realized value in the face of partial or complete default by its
counterparties $\iI(j)$.  To this end, we introduce a clearing vector $p
\in \RR^n_+$, which is a list of proposed payments by each bank in the
network. In particular, $p_i$ is total payments made by bank $i$ to its
counterparties within the banking sector.  Under the choice of a particular
clearing vector, the actual payments received by bank $j$ on its internal loan
portfolio are $\sum_{i \in V} p_i \Pi_{ij}$.

The last statement is an assumption about the legal framework for the
banking sector.  It means that the actual payment $p_i \Pi_{ij}$ from $i$ to
$j$ is proportional to the amount that $i$ owes $j$, relative to its total
interbank obligations.  The idea is that all counterparties in the banking
sector have equal seniority, so that residual funds are spread across
claimants according to the relative size of the claims.

Let $\hat p_j$ be the amount of funds bank $j$ makes available to repay all of its
debts, both interbank and external.  This quantity is 
\begin{equation}\label{eq:qj}
   \hat p_j =
   \min
   \left\{
       a_j + \sum_{i \in V} p_i \Pi_{ij}, \; d_j + x_j
   \right\}.
\end{equation}
The right hand term inside the $\min$ operator is total debts of bank $j$.
The left hand side is the amount on hand to repay those debts, including
external assets and repayments by other banks.  The bank repays up
to---but not beyond---its ability to pay.

External liabilities are assumed to be senior to interbank liabilities, which
means that for bank $j$ we also have
\begin{equation}\label{eq:pj}
    p_j = \max\{\hat p_j - d_j, \; 0\}.
\end{equation}
Thus, interbank payments by $j$ are a remainder after external debts are
settled.  If these debts exceed the bank's ability to pay, the bank becomes
insolvent and pays nothing to internal creditors.  This is a form of limited
liability.

Combining \eqref{eq:qj} and \eqref{eq:pj} and rearranging slightly yields
\begin{equation*}
    p_j = \max
    \left\{
       \min
       \left\{
           a_j - d_j + \sum_{i \in V} p_i \Pi_{ij}, \; x_j
       \right\}, \; 0
    \right\}
\end{equation*}
Now let's take $p$, $a$, $d$ and $x$ as row vectors in $\RR^n$ and write this
collection of equations, indexed of $j$, in vector form.  With $\max$ and
$\min$ taken pointwise, and using the symbols $\vee$ and $\wedge$ for max and
min, we get 
\begin{equation}
    p = ((a - d + p\Pi) \wedge x) \vee 0.
\end{equation}
A solution to this equation is called an \navy{equilibrium clearing vector}
for the banking system.

\begin{remark}
    An equilibrium clearing vector captures impacts of contagion within
    the specified banking system, in
    the sense that it traces out the full network effects of interbank lending
    within the model.  We shall study this equilibrium, while also
    recognizing that the model
    is restrictive in the sense that it assumes a specific form for seniority
    and implicitly rules out some kinds of
    nonlinear phenomena.  We return to this
    theme in \S\ref{ss:echs}.
\end{remark}

\subsubsection{Existence and Uniqueness of Fixed Points}

In order to analyze existence and uniqueness of equilibria, we introduce the
operator $T \colon \RR^n \to \RR^n$ defined by 
\begin{equation}
    Tp = ((e + p\Pi) \wedge x) \vee 0,
\end{equation}
where $e := a - d$ represents net external assets.  Evidently $p
\in \RR^n_+$ is an equilibrium clearing vector if and only if it is a fixed
point of $T$.

\begin{Exercise}
    Prove that the operator $T$ is continuous on $\RR^n$.
\end{Exercise}

\begin{Answer}
    In essence, this holds because compositions of continuous functions are
    continuous.  Nonetheless, here is a more explicit proof:
    Recalling the inequalities in \S\ref{sss:sets}, applied pointwise to vectors, 
    we have, for any $p, q \in \RR^n$,
    \begin{equation*}
        |Tp - Tq|
         \leq | (e + p\Pi) \wedge x) - (e + q\Pi) \wedge x) |
         \leq | (p - q) \Pi |.
    \end{equation*}
    Since, for the Euclidean norm, $\| |u| \| = \| u\|$ and $u, v \geq 0$ with $u
    \leq v$ implies $\| u \| \leq \| v\|$, we then have $ \|Tp - Tq\| \leq \| (p -
    q) \Pi \| \leq \| p - q \| \| \Pi \|_o$.  It follows easily that if $\| p_n -
    p \| \to 0$, then $ \|Tp_n - Tp\| \to 0$ also holds.
\end{Answer}

Using this operator, establishing existence of at least one equilibrium
clearing vector is not problematic for this model, regardless of the values of
the primitives and configuration of the network:

\begin{Exercise}
    Show that the banking model described above always has at least one
    equilibrium clearing vector.  What else can you say about equilibria in
    this general case?
\end{Exercise}

\begin{Answer}
    Since $\Pi \geq 0$, we always have $p \Pi \leq q \Pi$ whenever $p \leq q$.  As
    a result, $p \mapsto a - d + p \Pi$ is order-preserving and hence so is $T$.
    Moreover, for any $p \in S := [0, x]$, we have $Tp \in S$.  Hence, by
    Theorem~\ref{t:ktk}, $T$ has a fixed point in $S$.

    What else can we say about equilibria in this setting?
    By the same theorem, $T$ has a least fixed point $p^*$ and a greatest fixed
    point $p^{**}$ in $S$.  Moreover, since $T$ is continuous, $T^k 0 \uparrow
    p^*$ and $T^k x \downarrow p^{**}$.
\end{Answer}

While existence is automatic in this model, uniqueness is not:

\begin{example}
    If $n=2$, $e=(0, 0)$ and $(1, 1)$ and $\Pi = \begin{pmatrix} 0 & 1 \\ 1 & 0
    \end{pmatrix}$, then $Tp = p$ is equivalent to
    \begin{equation*}
        \begin{pmatrix}
           p_1 \\
           p_2
        \end{pmatrix}
         = 
            \begin{pmatrix}
                p_2 \\
                p_1
            \end{pmatrix}
            \wedge
            \begin{pmatrix}
                1 \\
                1
            \end{pmatrix}.
    \end{equation*}
    Both $p = (1, 1)$ and $p=(0, 0)$ solve this equation.
\end{example}

There are several approaches to proving uniqueness of fixed points of $T$.
Here is one:

\begin{Exercise}\label{ex:ufptr}
    Prove that $T$ is globally stable on $S := [0, x]$ whenever $r(\Pi) <
    1$.\footnote{Here's a hint, if you get stuck: show that $r(\Pi) < 1$
        implies $T$ is an eventual contraction on $[0, x]$ using one of the
        propositions presented in
        this chapter.}
\end{Exercise}

\begin{Answer}
    Fix $p, q \in S$.  Using the inequalities for min and max in \S\ref{sss:sets},
    applied pointwise, we have
    \begin{equation*}
        |Tp - Tq| 
        \leq |e + p \Pi - (e + q\Pi)|
        = |(p-q) \Pi|
        \leq |p-q| \Pi
    \end{equation*}
    After transposing both sides of this equation, we see that, when $r(\Pi)<1$, the conditions of 
    of Proposition~\ref{p:cec} hold.  The result follows.  (If you prefer, instead
    of taking transposes, just use the proof of Proposition~\ref{p:cec} directly,
    modified slightly to use the fact that we are operating on row vectors.)
\end{Answer}

This leads us to the following result:

\begin{proposition}\label{p:fnw}
    Let $\gG$ be a financial network.  If, for each bank $i \in V$, there
    exists a bank $j \in V$ with $i \to j$ and such that $j$ has zero
    interbank liabilities, then $T$ is globally stable on $S$ and $\gG$
    has a unique equilibrium clearing vector.
\end{proposition}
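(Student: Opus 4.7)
The plan is to reduce the claim to the uniqueness result in Exercise~\ref{ex:ufptr} by verifying that the hypothesis forces $r(\Pi) < 1$. Once this is established, global stability of $T$ on $S = [0,x]$ and uniqueness of the equilibrium clearing vector follow immediately.

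First I would record the basic structure of $\Pi$. From the definition \eqref{eq:defpi}, each row sum is $\sum_j \Pi_{ij} = 1$ when $x_i > 0$ and $\sum_j \Pi_{ij} = 0$ when $x_i = 0$. In either case $\Pi \geq 0$ and the row sums are bounded by one, so $\Pi$ is substochastic. Moreover, since weights are strictly positive on edges of $\gG$, we have $\Pi_{ij} > 0$ if and only if $(i,j) \in E$, so the weighted digraph induced by $\Pi$ has the same edge set as $\gG$; in particular, the accessibility relation $i \to j$ is the same under $\Pi$ as under $\gG$, and $x_j = 0$ is equivalent to bank $j$ having zero interbank liabilities, i.e.\ $\sum_k \Pi_{jk} = 0 < 1$.

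Next I would apply Proposition~\ref{p:wcs}. The stated hypothesis says that for every $i \in V$ there exists $j \in V$ with $i \to j$ and with $j$ having zero interbank liabilities; by the preceding paragraph, this $j$ satisfies $\sum_k \Pi_{jk} = 0 < 1$. This is precisely the definition of $\Pi$ being weakly chained substochastic (with accessibility understood, as it must be, relative to the digraph induced by $\Pi$). Proposition~\ref{p:wcs} therefore yields $r(\Pi) < 1$.

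Finally, with $r(\Pi) < 1$ in hand, Exercise~\ref{ex:ufptr} gives that $T$ is globally stable on $S$, and in particular $T$ has a unique fixed point in $S$, which is the unique equilibrium clearing vector of $\gG$. The only subtlety worth flagging is the conventions issue in the middle step, namely making sure that "$i \to j$" in the statement of the proposition is interpreted on the same digraph as the one induced by $\Pi$; everything else is essentially bookkeeping that chains previously established results together.
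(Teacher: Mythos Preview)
Your proposal is correct and follows essentially the same route as the paper: verify that $\Pi$ is weakly chained substochastic, invoke Proposition~\ref{p:wcs} to obtain $r(\Pi)<1$, and then apply Exercise~\ref{ex:ufptr}. Your treatment of the digraph issue is in fact slightly cleaner---you establish directly that $\Pi_{ij}>0$ iff $(i,j)\in E$ (using that $w(i,j)>0$ forces $x_i>0$), so the induced digraphs coincide; the paper only argues the one direction needed, namely that edges of $\gG$ are edges of the digraph induced by $\Pi$.
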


\begin{proof}
    By construction, the matrix $\Pi$ is substochastic.  Suppose that $\Pi$
    is also weakly chained substochastic.  Then, by Proposition~\ref{p:wcs},
    we have  $r(\Pi)<1$ and, by Exercise~\ref{ex:ufptr}, $T$ is globally
    stable on $S$.  Hence the proof will be complete if we can show
    that, under the stated conditions, $\Pi$ is weakly chained substochastic.

    To see that this is so, let $\gG$ be a financial network that
    satisfies the conditions of Proposition~\ref{p:fnw}. Now fix $i \in V$.
    We know that there exists a bank $j \in V$ with $i \to j$ and $j$ has no
    interbank liabilities.  If $w(i,j) > 0$, then $\Pi_{ij} > 0$, so $i \to j$
    under digraph $\gG$ implies $i \to j$ in the digraph induced by the
    substochastic matrix $\Pi$.  Also, since $j$ has no interbank liabilities,
    we have $x_j = 0$, and hence row $j$ of $\Pi$ is identically zero.  In
    particular, $\sum_k \Pi_{jk} = 0$.  Hence $\Pi$ is weakly chained
    substochastic, as required.
\end{proof}

\begin{remark}
    The proof of Proposition~\ref{p:fnw} also shows that $T$ is eventually
    contracting, so we can compute the unique fixed point by taking the limit
    of $T^k p$ for any choice of $p \in S$.
\end{remark}

There are several possible assumptions about the structure of the bank
network $\gG$ that imply the conditions of Proposition~\ref{p:fnw}.  For
example, it would be enough that $\gG$ is strongly connected and has at least
one bank with zero interbank liabilities.  Below we investigate another
sufficient condition, related to cyclicality.

A digraph $\dD$  is called a \navy{directed acyclic graph} if $\dD$ contains
no cycles.  

\begin{Exercise}\label{ex:dag}
    Prove the following:  If $\dD$ is a directed acyclic graph, then, for any
    node $i$ in $\dD$, there exists a node $j$ such that $i \to j$ and $\oO(j)=0$.
\end{Exercise}
    
\begin{Answer}
    Let $\dD$ be a directed acyclic graph and fix $i$ in $\dD$.
    Suppose to the contrary that every node reachable from $i$ has positive
    out-degree.  In this case, we can construct a walk from $i$ of arbitrary
    length.  But $\dD$ has only finitely many nodes, so any such walk must
    eventually cycle.  Contradiction.
\end{Answer}

\begin{Exercise}
    Let $\gG$ be a financial network.  Show that $\gG$ has a unique
    equilibrium clearing vector whenever $\gG$ is a directed acyclic graph.
\end{Exercise}

\begin{Answer}
    Let $\gG = (V, E, w)$ be a financial network and a directed acyclic graph.  By
    Proposition~\ref{p:fnw}, it suffices to show that, for each bank $i \in V$,
    there exists a bank $j \in V$ with $i \to j$ and such that $j$ has zero
    interbank liabilities.  To this end, fix $i \in V$.  By the directed acyclic
    graph property, we know that there exists a $j \in V$ with $i \to j$ and
    $\oO(j)=0$.  But if $\oO(j)=0$, then $j$ has no interbank liabilities.  The claim
    follows.
\end{Answer}

\subsubsection{Nonnegative External Equity}\label{sss:nee}

In this section we investigate the special case in which 
\begin{enumerate}
    \item[(E1)] each bank has nonzero interbank debt, so that $\Pi$ is a stochastic
        matrix, and
    \item[(E2)] external net assets are nonnegative, in the sense that $e = a -
        d \geq 0$.
\end{enumerate}
In view of (E1), we cannot hope to use Proposition~\ref{p:fnw}, since 
we always have $r(\Pi)=1$ when $\Pi$ is stochastic.  Nonetheless, we can
still obtain global stability under certain restrictions on $e$ and the topology of
the network.  Here is a relatively straightforward example, which we will
later try to refine:

\begin{Exercise}\label{ex:fnee}
    Let $\gG$ be a financial network such that (E1)--(E2) hold.  Using Du's
    theorem (page~\pageref{t:du}), prove that $T$ has a unique fixed point in
    $S := [0, x]$ whenever $e \gg 0$.  
\end{Exercise}

\begin{Answer}
    Let $\gG$ be a financial network such that (E1)--(E2) hold.  Since $e \gg
    0$, we have 
    \begin{equation*}
        p \in S 
        \quad \implies \quad
        Tp := ((e + p\Pi) \wedge x) \vee 0 = (e + p\Pi) \wedge x .
    \end{equation*}
    By an argument identical to that employed for Exercise~\ref{ex:gic} on
    page~\pageref{ex:gic}, $T$ is a concave operator on $S = [0, x]$.  Evidently
    $T$ is order preserving.  Finally, by (E1), we have $x_i > 0$ for all $i$, so
    $x \gg 0$ and hence $T0 = e \wedge x \gg 0$.  It now follows directly from
    Du's theorem that $T$ has a unique fixed point in $S$.
\end{Answer}

The condition $e \gg 0$ is rather strict.  Fortunately, it turns out that we
can obtain global stability under significantly 
weaker conditions.  To this end, we will say that node $j$ in a
financial system $\gG$, is \emph{cash accessible} if there exists an $i \in V$
such that $i \to j$ and $e(i) > 0$.  In other words, $j$ is downstream in the
liability chain from at least one bank with positive net assets outside of the
banking sector.

\begin{Exercise}\label{ex:enca}
    Prove the following result: If (E1)--(E2) hold and every node in $\gG$
    is cash-accessible, then $T^k 0 \gg 0$ for some $k \in \NN$.  [This is a
    relatively challenging exercise.]  
\end{Exercise}

\begin{Answer}
    Let $\gG$ be such that every node is cash-accessible.  Set
    \begin{equation*}
        \delta 
        := \frac{1}{n^2} 
        \cdot \min 
        \left\{
            \setntn{x_i}{i \in V} \cup \setntn{e_i}{i \in V \, \st e_i > 0}
        \right\}.
    \end{equation*}
    Let $\hat e$ be defined by $\hat e_i = 1$ if $e_i > 0$ and zero
    otherwise.  We claim that, for all $m \leq n$, 
    \begin{equation}\label{eq:phikprop}
        T^m 0 
        \geq \delta (\hat e + \hat e \Pi + \cdots + \hat e \Pi^{m-1}).
    \end{equation}
    This holds at $m=1$ because $T 0 = e
    \wedge x \geq \delta \hat e$.  Now suppose \eqref{eq:phikprop} holds at some $m \leq n
    - 1$.  Then, since $T$ is order-preserving, we obtain
    \begin{align*}
        T^{m+1} 0 
        & \geq (\delta (\hat e + \hat e \Pi + \cdots + \hat e \Pi^{m-1}) \Pi +
        e) \wedge x
        \\
        & \geq (\delta (\hat e + \hat e \Pi + \cdots + \hat e \Pi^m) ) \wedge
        x
    \end{align*}
    Since $\hat e + \hat e \Pi + \cdots + \hat e \Pi^m \leq n^2 \1$, where
    $\1$ is a vector of ones, and since $(\delta n^2 \1) \leq x$
    by the definition of $\delta$, we have 
    $T^{m+1} 0 \geq \delta (\hat e + \hat e \Pi + \cdots + \hat e \Pi^m)$.
    This argument confirms that \eqref{eq:phikprop} holds for all $m \leq n$.

    We now claim that $T^n 0 \gg 0$.  In view of~\eqref{eq:phikprop}, it suffices
    to show that, for any $j \in V$, there exists a $k < n$ with 
        $(\hat e \Pi^k) (j) = \sum_{i \in V} \hat e_i \Pi^k_{ij} > 0$.
    Since every node in $S$ is cash accessible, we know there exists an $i
    \in V$ with $e_i > 0$ and $j$ is accessible from $i$.  For this $i$ we can 
    choose $k \in \NN$ with $k < n$ and $\Pi^k_{ij} =  \hat e_i \Pi^k_{ij} > 0$.  
    We conclude that $T^n 0 \gg 0$, as claimed.
\end{Answer}

With the result from Exercise~\ref{ex:enca} in hand, the next lemma is easy to
establish.

\begin{lemma}\label{l:cfpc}
    If (E1)--(E2) hold and every node in $\gG$ is cash accessible, then $T$
    is globally stable and $\gG$ has a unique clearing vector $p^*\gg 0$.
\end{lemma}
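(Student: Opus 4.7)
The strategy is to invoke Corollary~\ref{c:duext} (the iterated version of Du's theorem) applied to $T$ acting on the order interval $S = [0, x]$. The key observation is that Exercise~\ref{ex:enca} has already done the hard work, producing a $k \in \NN$ with $T^k 0 \gg 0$, so what remains is to verify the structural hypotheses of Corollary~\ref{c:duext} on this restricted domain.

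First I would show that $T$ maps $S$ into itself and admits a simpler form there. For $p \in S$, nonnegativity of $e$ (condition (E2)) and of $\Pi$ gives $e + p\Pi \geq 0$, so the outer $\vee 0$ is redundant and
\begin{equation*}
    Tp = (e + p\Pi) \wedge x
    \qquad (p \in S).
\end{equation*}
Since $e + p\Pi \geq 0$ and $(e + p\Pi) \wedge x \leq x$, we see $Tp \in S$. Order-preservation on $S$ follows from $\Pi \geq 0$ together with the fact that $u \mapsto u \wedge x$ is order-preserving pointwise. Concavity of $T$ on $S$ follows from the argument in Exercise~\ref{ex:gic}: both $p \mapsto e + p\Pi$ and $p \mapsto x$ are (affine, hence) concave, and the pointwise minimum of concave functions is concave.

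Next, I would invoke Exercise~\ref{ex:enca} to obtain a $k \in \NN$ with $T^k 0 \gg 0$, which supplies the hypothesis $G^\ell a \gg a$ in Corollary~\ref{c:duext} with $a = 0$ and $\ell = k$. Applying that corollary to $G = T$ on $S = [0, x]$ yields global stability of $T$ on $S$: a unique fixed point $p^* \in S$ exists and $T^m p \to p^*$ for every $p \in S$. Because any equilibrium clearing vector lies automatically in $S$ (each coordinate is bounded between $0$ and $x_i$ by the very definition of $T$), $p^*$ is the unique clearing vector of $\gG$.

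Finally, for strict positivity of $p^*$, I would use that $(T^m 0)_{m \geq 0}$ is increasing (since $0 \leq T0$ and $T$ is order-preserving) and converges up to $p^*$ by the corollary, so $p^* \geq T^k 0 \gg 0$. The only real obstacle in this whole plan is the prerequisite Exercise~\ref{ex:enca}, which is where cash-accessibility enters essentially — the network-theoretic condition is used to propagate strict positivity of $e$ (on a nonempty subset of nodes) along directed paths in $\Pi$ until, after at most $n$ iterates of $T$, every coordinate of $T^k 0$ is strictly positive.
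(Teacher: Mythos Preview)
Your proposal is correct and follows essentially the same approach as the paper: apply Corollary~\ref{c:duext} on $[0,x]$ after verifying that $T$ is an order-preserving concave self-map there and that $T^k 0 \gg 0$ via Exercise~\ref{ex:enca}. You are slightly more explicit than the paper in two places—spelling out why the $\vee 0$ drops under (E2) on $[0,x]$, and deducing $p^* \gg 0$ from monotone convergence of $(T^m 0)$—both of which are worthwhile additions, since the paper's proof leaves the strict positivity of $p^*$ implicit.
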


\begin{proof}
    Let $\gG$ be as described. By Corollary~\ref{c:duext},
    it suffices to show that $T$ is an order-preserving concave self-map on
    $[0,x]$ with $T^k 0 \gg 0$ for some $k \in \NN$.
    The solution to Exercise~\ref{ex:fnee} shows that 
    $T$ is order-preserving and concave.  
    The existence of a $k \in \NN$ with $T^k 0 \gg 0$ was verified in
    Exercise~\ref{ex:enca}.
\end{proof}

Stronger results are available, with a small amount of effort.  In fact, under
(E1)--(E2), there is a strong sense in which uniqueness of the fixed point is
obtained without any further assumptions, provided that we rule out an
ambiguity related to what happens when $e=0$.  That ambiguity is discussed in
the next exercise and further references can be found in \S\ref{s:cnfp}.

\begin{Exercise}\label{ex:amben}
    Conditions (E1) and (E2) cannot by themselves pin down outcomes for the
    extreme case where every firm in the network has zero net external assets.
    Illustrate this with an example.\footnote{Hint: every stochastic matrix has at
    least one stationary distribution.}
\end{Exercise}

\begin{Answer}
    Let $\psi$ be a stationary distribution for $\Pi$.  Suppose $\lambda$ is a
    constant in $[0,1]$, $p = \lambda \psi$ and $x = \psi$. Then $e=0$ implies $T
    p = (e + \lambda \psi \Pi) \wedge x = (\lambda \psi) \wedge \psi = \lambda
    \psi = p$.  Since $\lambda$ was arbitrary in $[0,1]$, there is a continuum of
    equilibria.
\end{Answer}

Although Exercise~\ref{ex:amben} suggests ambiguity about outcomes when $e =
0$, it is natural to adopt the convention that the equilibrium clearing vector
$p^*$ obeys $p^*=0$ whenever $e=0$.  If the entire banking sector has no zero
net assets, then no positive payment sequence can be initiated (without
outside capital).

In the next exercise, we say that $U \subset V$ is accessible from $i \in V$
if there exists a $j \in U$ such that $j$ is accessible from $i$.

\begin{Exercise}
    Let $P$ be the set of all nodes in $V$ that are cash accessible.   Let $A$
    be all $i$ in $P^c$ such that $P$ is accessible from $i$.  Let $N$ be all
    $i$ in $P^c$ such that $P$ is not accessible from $i$.  Note that $V = P
    \cup A \cup N$ and that these sets are disjoint.  Show that $N$ and $P$
    are both absorbing sets.
\end{Exercise}

\begin{Answer}
    The set $N$ is an absorbing set, since, by definition, $P$ is not
    accessible from $N$, and  $A$ cannot be accessible because otherwise $P$
    would also be accessible.  The set $P$ is also absorbing because, if $j
    \in P^c$ is accessible from some $i \in P$, then $j$ is cash accessible.
    But then $j \in P$, which is a contradiction.
\end{Answer}

\subsection{Equity-Cross Holdings}\label{ss:echs}

In this section we analyze the model of default cascades constructed by
\citep{elliott2014financial}.  The model differs from the one studied in
\S\ref{ss:defcas} in several ways.  One is that financial institutions are
linked by share cross-holdings: firm $i$ owns a fraction $c_{ij}$ of firm $j$
for $i, j \in V := \natset{n}$.  This implies that failure of firm $j$ reduces
the market value of firm $i$, which in turn reduces the market value of other
firms, and so on.  

The second---and ultimately more significant---difference is the introduction
of failure costs that add significant nonlinearities to the model.
Failure costs reinforce the impact of each firm failure, leading to greater
shock propagation across the network.  This feature ties into the intuitive
idea that, when many firms are financially stressed, a failure by one firm can
trigger a wave of bankruptcies.

\subsubsection{Book and Market Value}

We now describe the features of the model.
Let $C = (c_{ij})_{i, j \in V}$ be the matrix of fractional cross-holdings,
as mentioned above, with $0 \leq c_{ij} \leq 1$ for all $i,j$.

\begin{assumption}\label{a:cv}
    The matrix of cross-holdings satisfies $\sum_k c_{ki} < 1$ for all $i \in V$.
\end{assumption}

Assumption~\ref{a:cv} implies that firms are not solely owned by other firms
in the network: investors outside the
network hold at least some fraction of firm $i$ for all $i$.

The \navy{book value}\index{Book value} of firm $i$ is given by 
\begin{equation}\label{eq:bvwo}
    b_i = e_i + \sum_j c_{ij} b_j
    \qquad (i \in V).
\end{equation}
Here the first term $e_i \geq 0$ is external assets of firm $i$ and the second
represents the value of firm $i$'s cross-holdings.  In vector form, the last
equation becomes $b = e + C b$.

\begin{Exercise}
    Let $I$ be the $n \times n$ identity.  Prove that, under
    Assumption~\ref{a:cv}, the matrix $I - C$ is invertible and 
    the equation $b = e + C b$ has the unique solution
    \begin{equation}\label{eq:bv}
        b = (I - C)^{-1} e.
    \end{equation}
    Prove also that $b \geq 0$.
\end{Exercise}

\begin{Answer}
    Since $C \geq 0$, Lemma~\ref{l:rscsbounds}, we have $r(C) \leq \max_j
    \csum_j(C)$.  In view of Assumption~\ref{a:cv}, this maximum is strictly less
    than one.  Hence $r(C) < 1$ and, by the Neumann series lemma $I-C$ is
    invertible and $(I-C)^{-1} = \sum_{k \geq 0} C^k$.  The last equality implies
    $b \geq 0$ when $b = (I - C)^{-1} e$.
\end{Answer}

\begin{Exercise}
    Provide a weaker condition on $C$ such that $I - C$ is invertible.
\end{Exercise}

\begin{Answer}
    Consider the condition $\sum_{i \in V} c_{ij} \leq 1$ for all $j \in V$, with
    strict inequality for at least one $j$.  Since $C \geq 0$, this implies that
    $C^\top$ is weakly chained substochastic, by Exercise~\ref{ex:awcs}.
    Hence $r(C) = r(C^\top)<1$, by Proposition~\ref{p:wcs}.  Hence $I-C$ is
    invertible, by the Neumann series lemma.
\end{Answer}

There is a widely used argument that cross-holdings artificially inflate the
value of firms, in the sense that the sum of book values of firms exceeds
$\sum_i e_i$, the sum of underlying equity values.   The next exercise
illustrates:

\begin{Exercise}
    Show that $e \gg 0$ and $\min_{i,j} c_{ij} > 0$ implies $\sum_i b_i > \sum_i e_i$.
\end{Exercise}

\begin{Answer}
    We have
    \begin{equation*}
        \sum_i b_i 
        = \1' b
        = \1' e + \1' C e + \1' C^2 e + \cdots
        \geq \1' e + \1' C e.
    \end{equation*}
    Hence it suffices to show that $\1' C e > 0$.  This will be true if at least
    one column of $Ce$ has a nonzero entry.  Since $e \gg 0$, we require only that
    $c_{ij} > 0$ for some $i,j$, which is true by assumption.
\end{Answer}

Due to this artificial inflation, we distinguish between the book value $b_i$
of a firm and its \navy{market value} $\bar v_i$, which is defined as $r_i
b_i$ with $r_i := 1 - \sum_{k} c_{ki}$.  The value $r_i$, which gives the
share of firm $i$ held by outsider investors, is strictly
positive for all $i$ by Assumption~\ref{a:cv}.  With $R := \diag(r_1, \ldots, r_n)$, we
can write the vector of market values as $\bar v = R b$.  Substituting in
\eqref{eq:bv} gives
\begin{equation}\label{eq:bv2}
        \bar v := A e ,
        \quad \text{where} \quad
        A : =R (I - C)^{-1}.
\end{equation}

\subsubsection{Failure Costs}

So far the model is very straightforward, with the market value of firms being
linear in external assets $e$.  However, since bankruptcy proceedings are
expensive, it is reasonable to assume that firm failures are costly.  Moreover,
when the market value of a firm falls significantly, the firm will experience
difficulty raising short term funds, and will often need to cease revenue
generating activities and sell illiquid assets well below their potential
value.

We now introduce failure costs.  As before, $\bar v_i$ is market value without
failure costs, as determined in \eqref{eq:bv2}, while $v_i$ will represent
market value in the presence of failure costs.  Failure costs for firm $i$ are
modeled as a threshold function
\begin{equation*}
    f(v_i) = \beta \1\{v_i < \theta \bar v_i\}
    \qquad (i = 1, \ldots, n),
\end{equation*}
where $\theta \in (0, 1)$ and $\beta > 0$ are parameters.
Thus, costs are zero when $v_i$ is large and $-\beta$ when they fall below the
threshold $\theta \bar v_i$.  In particular, a discrete failure cost of
$-\beta$ is incurred when firm value falls far enough below the no-failure market
value $\bar v_i$.  The larger is $\theta$, more prone firms are to failure.

The book value of firm $i$ without failure costs was defined in \eqref{eq:bvwo}.
The book value of firm $i$ in the presence of failure costs is defined as
\begin{equation*}
    b_i = e_i + \sum_j c_{ij} b_j - f(v_i).
\end{equation*}
Written in vector form, with $f$ applied pointwise to the vector $v$,
we get $b = e + C b - f (v)$.  Solving for $b$ gives $b = (I-C)^{-1} (e
- f (v))$.  The corresponding market value is 
\begin{equation}\label{eq:sfv}
    v = R b = A (e - f(v)).
\end{equation}
Notice that, when no firms fail, we have $v_i = \bar v_i$, as expected.

\subsubsection{Equilibria}

Equation~\eqref{eq:sfv} is a nonlinear equation in $n$ unknowns. A vector $v
\in \RR^n$ solves \eqref{eq:sfv} if and only if it is a fixed point of the
operator $T \colon \RR^n \to \RR^n$ defined by
\begin{equation}\label{eq:tsfv}
    Tv = A (e - f(v)).
\end{equation}
In what follows we set $d := A(e - \beta \1)$ and $S := [d, \bar v]$.

\vspace{.3em}

\begin{proposition}\label{p:ec}
    $T$ is a self-map on $S$ with least fixed
    point $v^*$ and greatest fixed point $v^{**}$ in $S$.  Moreover, 
    \begin{enumerate}
        \item the sequence $(T^k d)_{k \in \NN}$ converges up to $v^*$ in a finite number of
            steps and
        \item the sequence $(T^k \bar v)_{k \in \NN}$ converges down to $v^{**}$ in a
            finite number of steps.
    \end{enumerate}
\end{proposition}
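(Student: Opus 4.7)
The plan is to prove the proposition in three stages: first establish the self-map and monotonicity properties of $T$ on $S$, then invoke Theorem \ref{t:ktk} for existence of the extremal fixed points, and finally exploit the discrete range of $f$ to upgrade monotone convergence to finite-step convergence.

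First I would observe that $A = R(I-C)^{-1} \geq 0$, since $R$ is diagonal and nonnegative, and $(I-C)^{-1} = \sum_{k \geq 0} C^k \geq 0$ by the Neumann series lemma (using Assumption \ref{a:cv}). Since $f(v) \in [0, \beta \1]$ pointwise for every $v \in \RR^n$, nonnegativity of $A$ gives
\begin{equation*}
    d = A(e - \beta \1) \leq A(e - f(v)) = Tv \leq Ae = \bar v,
\end{equation*}
so $T$ is a self-map on $S = [d, \bar v]$. Next, if $v \leq v'$ in $S$, then for each coordinate $i$ the indicator $\1\{v_i < \theta \bar v_i\} \geq \1\{v'_i < \theta \bar v_i\}$, hence $f(v) \geq f(v')$ pointwise, and so $Tv = A(e - f(v)) \leq A(e - f(v')) = Tv'$ because $A \geq 0$. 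Thus $T$ is order-preserving on the closed order interval $S$, and Theorem \ref{t:ktk} immediately delivers a least fixed point $v^*$ and a greatest fixed point $v^{**}$ in $S$, together with the facts (from Exercise \ref{ex:opab}) that $(T^k d)$ is increasing with $T^k d \leq v^*$ and $(T^k \bar v)$ is decreasing with $T^k \bar v \geq v^{**}$.

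The interesting step, and the one I expect to be the crux, is upgrading this monotone convergence to convergence in finitely many steps, because $T$ is not continuous (the threshold in $f$ creates jumps), so the continuity-from-below/above parts of Theorem \ref{t:ktk} do not directly apply. The key observation is that the range of $f$ is contained in the finite set $\{0, \beta\}^n$. Consider the sequence $(T^k d)_{k \geq 0}$: it is increasing, and since $f$ is coordinate-wise decreasing in its argument, the companion sequence $(f(T^k d))_{k \geq 0}$ is decreasing in the pointwise order within the finite set $\{0, \beta\}^n$. Any decreasing sequence in a finite partially ordered set stabilizes after finitely many steps, so there exists $K \in \NN$ with $f(T^k d) = f(T^K d) =: \hat f$ for all $k \geq K$. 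For such $k$,
\begin{equation*}
    T^{k+1} d = A(e - f(T^k d)) = A(e - \hat f),
\end{equation*}
which is constant in $k$. Hence $T^k d$ itself is constant for $k \geq K+1$, and this constant vector is a fixed point of $T$ lying in $S$ and bounded above by $v^*$. Since $v^*$ is the least fixed point, this constant equals $v^*$, proving (i).

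Claim (ii) follows by the symmetric argument: $(T^k \bar v)$ is decreasing, $(f(T^k \bar v))$ is increasing in $\{0, \beta\}^n$ and hence stabilizes at some finite $K'$, after which $T^k \bar v$ is constant and equal to a fixed point of $T$ bounded below by $v^{**}$, forcing equality. The main subtlety to verify carefully is simply that the stabilization of $(f(T^k d))$ really does force stabilization of $(T^k d)$ one step later, which follows from the explicit recursion $T^{k+1} d = A(e - f(T^k d))$; no continuity of $T$ is required.
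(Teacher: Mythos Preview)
Your proof is correct and follows essentially the same approach as the paper: establish that $T$ is an order-preserving self-map on $[d,\bar v]$, invoke Theorem~\ref{t:ktk} for the extremal fixed points, and then use finiteness of the range to upgrade monotone convergence to finite-step convergence. The only cosmetic difference is that the paper argues directly that $(T^k d)_{k\geq 1}$ lives in the finite set $\{A(e-\beta w):w\in\{0,1\}^n\}$ and hence an increasing sequence there must stabilize, whereas you track the auxiliary sequence $(f(T^k d))$ in $\{0,\beta\}^n$ and use the recursion $T^{k+1}d = A(e-f(T^k d))$ to transfer stabilization back; these are equivalent, and your explicit remark that continuity of $T$ is not needed is a worthwhile observation.
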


Note that $v^* = v^{**}$ is a possibility, in which case $T$ has a unique
fixed point in $S$.  

\begin{Exercise}\label{ex:pfo}
    Prove the first claim in Proposition~\ref{p:ec}.
\end{Exercise}

\begin{Answer}
    Observe that $v_i \leq v'_i$ implies $f(v_i) \geq f(v'_i)$.  As
    a result, the vector-valued map $v \mapsto - f(v)$ is order-preserving,
    and hence so is $T$.  Moreover, for $v \in [d, \bar v]$, we have 
    \begin{equation*}
        d = A(e - \beta \1) \leq Tv := A (e - f(v)) \leq A e = \bar v,
    \end{equation*}
    so $T$ is a self-map on $[d , \bar v]$.  It follows directly from
    Theorem~\ref{t:ktk} that $T$ has a least and greatest fixed point in $S$.
\end{Answer}

Why does $(T^k d)_{k \in \NN}$ converge up to $v^*$ in a finite number of
steps?  First, as you saw in the solution to Exercise~\ref{ex:pfo}, the map
$T$ is an order-preserving self-map on $[d, \bar v]$, so $Td \in [d, \bar v]$.
In particular, $d \leq Td$.  Iterating on this inequality and using the
order-preserving property gives $d \leq Td \leq T^2 d \leq \cdots$, so $(T^k
d)$ is indeed increasing.  Moreover, the range of $T$ is a finite set,
corresponding to all vectors of the form
\begin{equation*}
    u = A(e - \beta w),
\end{equation*}
where $w$ is an $n$-vector containing only zeros and ones.  Finiteness holds
because there are only finitely many binary sequences of length $n$.

\begin{Exercise}
    Given the above facts, prove that $(T^k d)_{k \in \NN}$ converges up to
    $v^*$ in a finite number of steps.
\end{Exercise}

\begin{Answer}
    First, $v_k := T^k d$ is increasing, as just discussed.  Second, this sequence
    can take only finitely many values, since $T$ has finite range.  As $(v_k)$ is
    increasing, it cannot cycle, so it must converge in finitely many steps.
    Let $v'$ be this limiting value and let $K$ be the number of steps required
    for $(v_k)$ to attain $v'$.  Since $v_k = v'$ for all $k \geq K$, we have
    \begin{equation*}
        T v' = T T^K v' = T^{K+1} v' = v',
    \end{equation*}
    so $v'$ is a fixed point of $T$ in $S$.  Moreover, if $v''$ is any other fixed
    point of $T$ in $S$, then $d \leq v''$ and hence, by the order-preserving
    property, $v_k = T^k d \leq T^k v'' = v''$ for all $k$.  Hence $v' \leq v''$.
    Thus, $v'$ is the least fixed point of $T$.
\end{Answer}

Similar logic can be applied to prove that $(T^k \bar v)_{k \in \NN}$
converges down to $v^{**}$ in a finite number of steps.

If we set $v^0 = \bar v$ and $v^{k+1} = T v^k$, we can consider the sequence
of valuations $(v^k)$ as a dynamic process, and the number of currently failing firms
$m^k := \sum_i \1\{v^k_i < \theta\}$ can be understood as tracking waves of
bankruptcies. Failures of firms in the first wave put stress on otherwise
healthy firms that have exposure to the failed firms, which in turn causes
further failures and so on.

\begin{Exercise}
    Prove that the sequence $(m_k)$ is monotone increasing.
\end{Exercise}

As discussed in Proposition~\ref{p:ec}, the sequence $(v^k)$, which can also
be written as $(T^k \bar v)$, is decreasing pointwise.  In other words, the
value of each firm is non-increasing. Hence, if $v^k_i <
\theta$ for some $k$, then $v^{k+j}_i < \theta$ for all $j \geq 0$.

Figure~\ref{f:fin_network_sims_1} illustrates a growing wave of failures
that can arise in a financial network.  Firms with lighter
colors have better balance sheets.  Black firms have failed.  The code for
generating this figure, along with details on parameters, can be found in the
code book.

\begin{figure}
   \centering 
   \scalebox{0.72}{\includegraphics[trim = 20mm 20mm 0mm 20mm, clip]{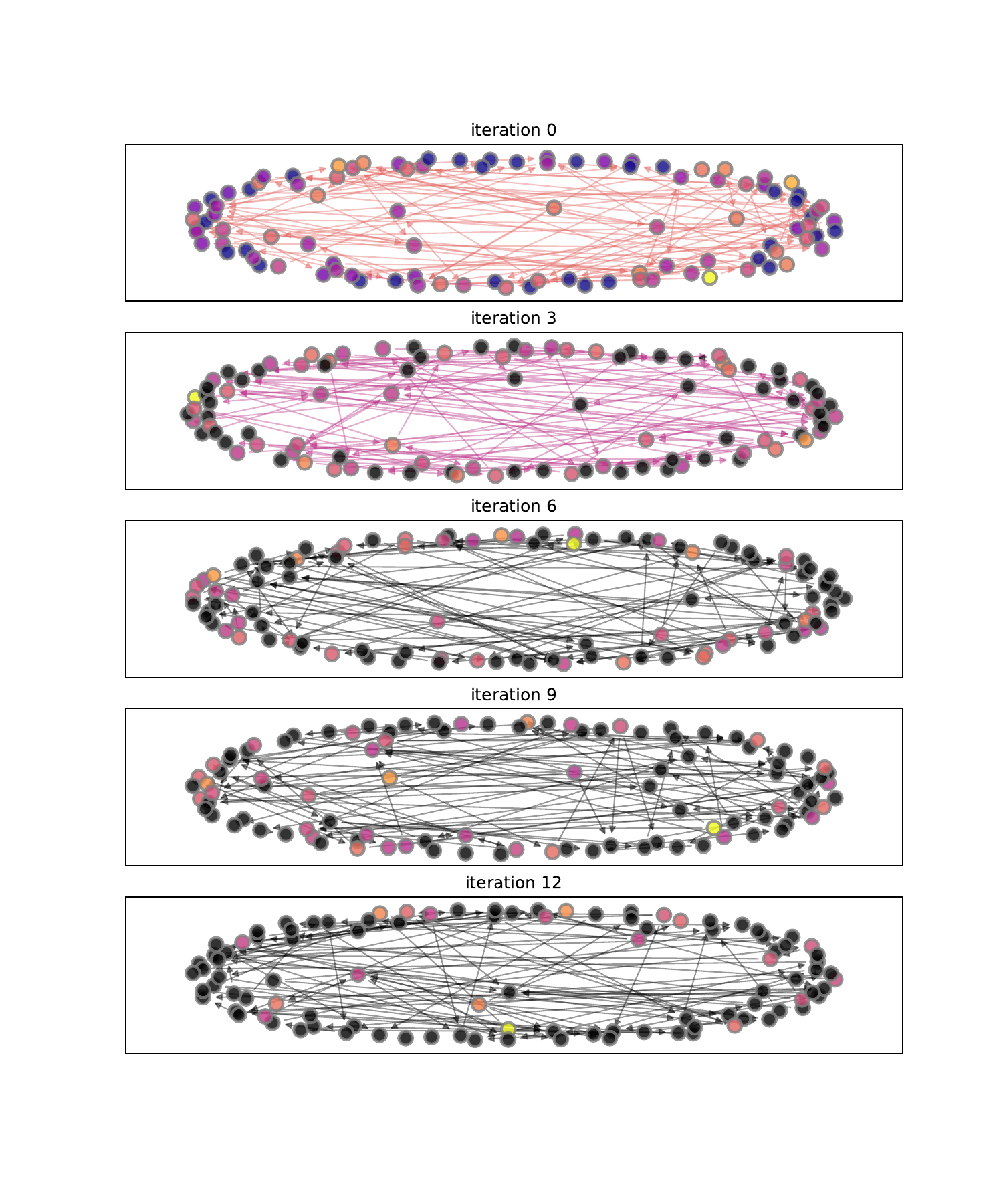}}
   \caption{\label{f:fin_network_sims_1} Waves of bankruptcies in a financial
   network}
\end{figure}

\section{Chapter Notes}\label{s:cnfp}

\cite{shin2010risk} gives an excellent overview of systemic risk and the
financial crisis of 2007--2008. 
\cite{battiston2012debtrank} study connectedness in financial networks and
introduce a measure of systemic impact called DebtRank.
\cite{bardoscia2015debtrank} provide a dynamic theory of instability related
to DebtRank.
\cite{demange2018contagion} provides a threat index for
contagion in financial networks related to Katz centrality.
\cite{bardoscia2019forward} analyze risks associated with
solvency contagion during crises.  \cite{jackson2019investment} study
investment in risky portfolios by banks in financial networks with debt and
equity interdependencies.  \cite{jackson2020credit} analyze optimal bailouts
in financial networks.  
\cite{amini2020clearing} offer an introduction to modeling of clearing
systems, analyzing equilibria for network payments and asset prices.
 \cite{jackson2021systemic}  provide a survey
of the relationship between financial networks and systemic risk.

The setting in \S\ref{sss:nee} was
investigated by \cite{eisenberg2001systemic}, one of the first papers on a
network approach to default cascades.  Additional details on stability
properties can be found in \cite{stachurski2022systemic}.  As already
mentioned, \S\ref{ss:echs} is based on \citep{elliott2014financial}, which
also includes an interesting discussion of how the level of integration across
a network affects equilibria.  \cite{klages2021optimal} discuss optimal
intervention in economic networks via influence maximization methods, using
\cite{elliott2014financial} as an example. \cite{acemoglu2021systemic} 
study how anticipation of future defaults can result in ``credit freezes."

A general discussion of contraction maps and related fixed point theory can be
found in \cite{goebel1990topics} and \cite{cheney2013analysis}.  For more on
fixed point methods for order preserving operators, see, for example,
\cite{guo2004partial}, \cite{zhang2012variational}, \cite{marinacci2019unique}
or \cite{deplano2020nonlinear},

\part{Appendices}

\chapter{Appendix}\label{c:ap}

\section{Math Review}\label{s:review}

This section provides a brief review of basic analysis and linear
algebra.  The material contained here should be covered in intermediate
mathematical economics courses or, if not, in math boot camp at the very start
of a graduate program.

(For those who want a slower treatment of the analysis section, we recommend
\cite{bartle2011introduction}, which is carefully constructed and beautifully
written.  High quality texts on linear algebra at the right level for this
course include \cite{jan}, \cite{meyer2000matrix}, \cite{aggarwal2020linear}
and \cite{cohen2021linear}.)

\subsection{Sets and Functions}\label{ss:sefun}

As a first step, let's clarify elementary terminology and notation.

\subsubsection{Sets}\label{sss:sets}

A \navy{set} is an arbitrary collection of objects.  Individual objects are
called \navy{elements} of the set.  We assume the reader is familiar with
basic set operations such as intersections and unions.  If $A$ is a finite
set, then $|A|$ is the number of elements in $A$.  Powers applied to sets
indicate \navy{Cartesian products}\index{Cartesian product}, so that 
\begin{equation*}
    A^2 := A \times A := \setntn{(a, a')}{a \in A,\; a' \in A},
    \;\; \text{etc.}
\end{equation*}
Throughout, $\wp(A)$ is the \navy{power set}\index{Power set} of $A$,
consisting of all subsets of $A$.  For example,
\begin{equation*}
    A = \{1, 2\}
    \quad \implies \quad
    \wp(A) = \{ \emptyset, \{1\}, \{2\}, A \}.
\end{equation*}

Let $\NN$ be the natural numbers, $\ZZ$ be the integers, $\QQ$ be the rational
numbers and $\RR$ be the reals\index{real numbers} (i.e., the union of the rational
and irrational numbers).  
For $x, y$ in $\RR$, we let
\begin{equation}
    \label{eq:vw}
    x \vee y := \max\{x, y\}
    \quad \text{and} \quad
    x \wedge y := \min\{x, y\}.
\end{equation}

Absolute value is $|x| := x \vee (-x)$.
For $n \in \NN$ we set $\natset{n} := \{1, \ldots, n\}$.

We make use of the following elementary facts:  For all $a, b, c \in \RR$,
\begin{itemize}
    \item $|a + b| \leq |a| + |b|$.
    \item $(a \wedge b) +  c = (a + c) \wedge (b + c)$ 
        and $(a \vee b) +  c = (a + c) \vee (b + c)$.
    \item $(a \vee b) \wedge c = (a \wedge c) \vee (b \wedge c)$
        and $(a \wedge b) \vee c = (a \vee c) \wedge (b \vee c)$.
    \item $|a \wedge c - b \wedge c | \leq |a - b|$.
    \item $|a \vee c - b \vee c | \leq |a - b|$.
\end{itemize}
The first item is called the \navy{triangle inequality}\index{Triangle
    inequality}. Also, if $a, b, c \in \RR_+$, then
\begin{equation}\label{eq:abc}
    (a + b) \wedge c  \leq (a \wedge c) + (b \wedge c).
\end{equation}

\begin{Exercise}\label{ex:awedg}
    Prove:  For all $a, b, c \in \RR_+$, we have
        $|a \wedge c - b \wedge c| \leq |a-b| \wedge c$.
\end{Exercise}

\begin{Answer}
    Fix $a, b \in \RR_+$ and $c \in \RR$. By \eqref{eq:abc}, we have
    \begin{equation*}
        a \wedge c 
        = (a - b + b) \wedge c
        \leq ( |a - b| + b) \wedge c
        \leq  |a - b| \wedge c + b \wedge c.
    \end{equation*}
    Thus, $a \wedge c - b \wedge c \leq |a-b| \wedge c$.  Reversing the roles of
    $a$ and $b$ gives $b \wedge c - a \wedge c \leq |a-b| \wedge c$.  This proves
    the claim in Exercise~\ref{ex:awedg}.
\end{Answer}

\subsubsection{Equivalence Classes}\label{sss:eqclass}

Let $S$ be any set.  A \navy{relation}\index{Relation} $\sim$ on $S$ is a nonempty subset of
$S \times S$.  It is customary to write $x \sim y$
rather than $(x, y) \in \, \sim$ to indicate that $(x,y)$ is in $\sim$.
A relation $\sim$ on $S$ is called an
\navy{equivalence relation} \index{Equivalence relation} if, for all $x, y, z
\in S$, we have
\begin{itemize}
    \item[] (\navy{reflexivity}\index{Reflexivity}) $x \sim x$,
    \item[] (\navy{symmetry}\index{Symmetry}) $x \sim y$ implies $y \sim x$ and
    \item[] (\navy{transitivity}\index{Transitivity}) $x \sim y$ and $y \sim z$ implies $x \sim z$.
\end{itemize}

Any equivalence relation on $S$ induces a partition of
$S$ into a collection of mutually disjoint subsets such that their union
exhausts $S$.  These subsets are called \navy{equivalence classes}. They can
be constructed by taking, for each $x \in S$, the set of all elements that are
equivalent to $x$.

\begin{example}\label{eg:eqrel}
    Let $S$ be the set of all people in the world. If $x \sim y$ indicates
    that $x$ and $y$ live in the same country, then $\sim$ is an equivalence
    relation.  (Check the axioms.) The equivalence classes are the population
    of each country.  The partition induced on $S$ is the set of these
    classes, which we can identify with the set of all countries in the world.
\end{example}

\subsubsection{Functions}\label{sss:functions}

A \navy{function}\index{Function} $f$ from set $A$ to set $B$ is a rule,
written $f \colon A \to B$ or $a \mapsto f(a)$, that associates each
element $a$ of $A$ with one and only one element $f(a)$ of $B$.  The set $A$ is
called the \navy{domain}\index{Domain} of $f$ and $B$ is called the
\navy{codomain}\index{codomain}.  The \navy{range}\index{Range} or
\navy{image} of $f$ is 
\begin{equation*}
    \range(f) := \setntn{b \in B}{b = f(a) \text{ for some } a \in A}.
\end{equation*}
A function $f \colon A \to B$ is called
\begin{itemize}
    \item \navy{one-to-one}\index{One-to-one function} if 
            $f(a) = f(a')$ implies $a = a'$,
    \item \navy{onto}\index{Onto function} if $\range(f) = B$, and
    \item a \navy{bijection}\index{Bijection} or \navy{one-to-one
        correspondence} if $f$ is both onto and one-to-one.
\end{itemize}

\begin{example}
    If $S$ is a nonempty set, then the \navy{identity map} on $S$ is the map
    $I \colon S \to S$ such that $I(x) = x$ for all $x \in S$.  The identity
    map is a bijection for any choice of $S$.
\end{example}

The left panel of Figure~\ref{f:func_types_1} shows a one-to-one function on
$(0,1)$.  This function is not onto, however. For example, there exists no $x
\in (0,1)$ with $f(x)=1/4$.  The right panel of Figure~\ref{f:func_types_1}
shows an onto function, with $\range(f) = (0,1)$.  This function is not
one-to-one, however.  For example, $f(1/4)=f(3/4)=3/4$.

\begin{figure}
    \centering
    \scalebox{0.62}{\includegraphics{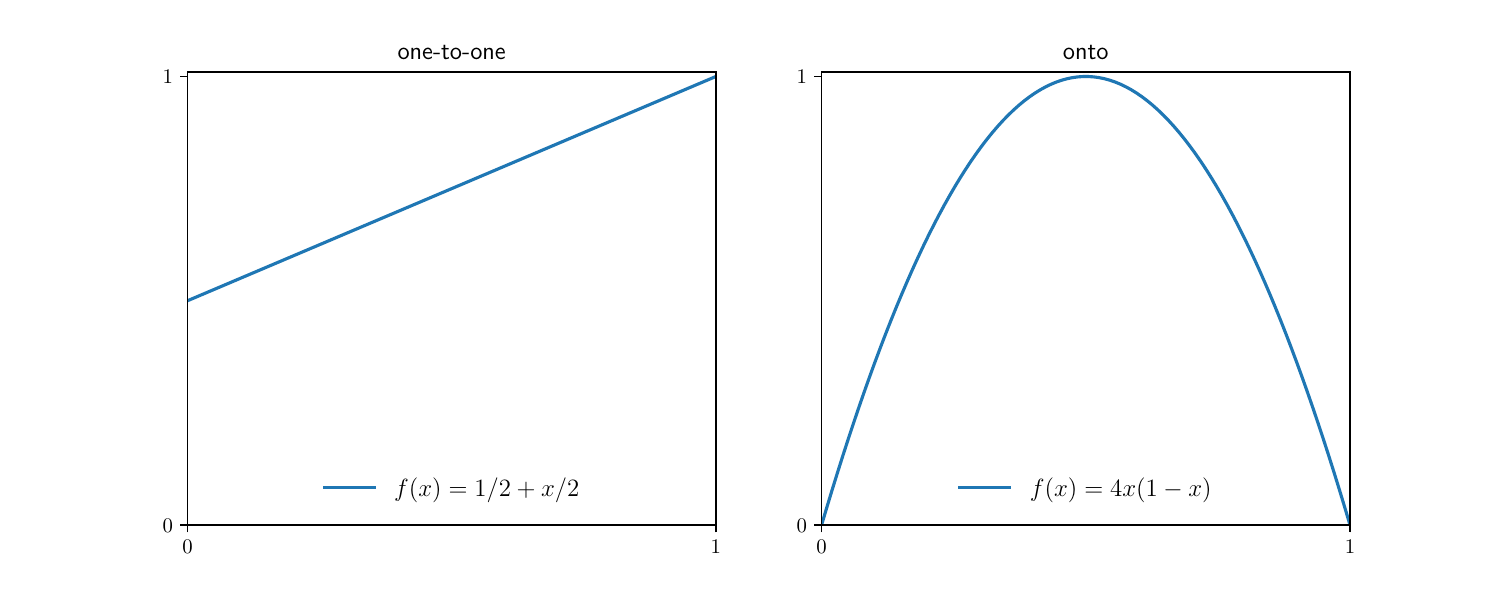}}
    \caption{\label{f:func_types_1} Different types of functions on $(0,1)$}
\end{figure}

The left panel of Figure~\ref{f:func_types_2} gives an example of a function
which is neither one-to-one nor onto.  The right panel of
Figure~\ref{f:func_types_2} gives an example of a bijection.

\begin{figure}
    \centering
    \scalebox{0.62}{\includegraphics{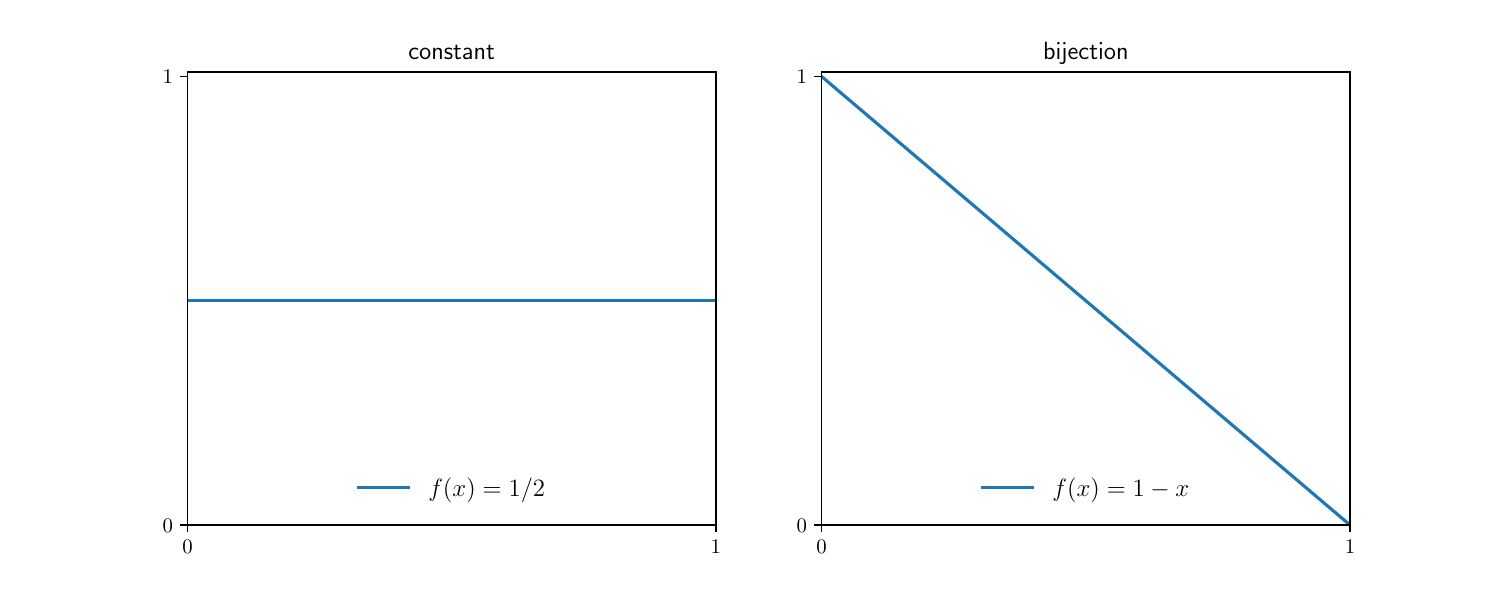}}
    \caption{\label{f:func_types_2} Some functions are bijections and some are not}
\end{figure}

\subsubsection{Inverse Functions}

One motivation for our interest in the basic properties of functions is that
we wish to solve inverse problems.  For an arbitrary nonempty set $S$ and
a function $f \colon S \to S$, an abstraction of an inverse problem is solving
$y=f(x)$ for $x \in S$.  

Prior to stating the next result, we recall that an \navy{inverse function}
for $f \colon S \to S$ is a function $g \colon S \to S$ such that $f \circ g =
g \circ f = I$, where $I$ is the identity map on $S$.  The inverse function of
$f$ is often written as $f^{-1}$.

\begin{lemma}\label{l:ibj}
    For $f \colon S \to S$, the following statements are equivalent:
    \begin{enumerate}
        \item $f$ is a bijection on $S$.
        \item For each $y \in S$, there exists a unique $x \in S$ such that
            $f(x)=y$.
        \item $f$ has an inverse on $S$.
    \end{enumerate}
\end{lemma}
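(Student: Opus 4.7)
The plan is to prove the equivalence by establishing the cycle (i) $\Rightarrow$ (ii) $\Rightarrow$ (iii) $\Rightarrow$ (i), so that each statement implies the next and all three are interderivable. Each individual implication is short and uses only the definitions just recalled in the excerpt (one-to-one, onto, inverse function, identity map).

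For (i) $\Rightarrow$ (ii), I would fix $y \in S$ and use the onto property of $f$ to extract some $x \in S$ with $f(x)=y$, then invoke one-to-one to conclude that any other $x'$ satisfying $f(x')=y$ must equal $x$. This gives both existence and uniqueness of the preimage for every $y$.

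For (ii) $\Rightarrow$ (iii), the key step is to define the candidate inverse $g \colon S \to S$ by setting $g(y)$ equal to the unique $x$ guaranteed by (ii). The uniqueness clause in (ii) is exactly what makes $g$ well-defined as a function (a rule assigning one and only one image). I would then verify the two identities $f \circ g = I$ and $g \circ f = I$ directly: the first is immediate from the defining property $f(g(y))=y$, and for the second, given $x \in S$, I would set $y := f(x)$ and note that by uniqueness in (ii), $g(y)$ must coincide with $x$, so $g(f(x)) = x$.

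For (iii) $\Rightarrow$ (i), I would assume $f \circ g = g \circ f = I$ and check the two components of bijectivity separately. Onto follows because for arbitrary $y \in S$, the element $x := g(y)$ satisfies $f(x) = f(g(y)) = y$, so $y \in \range(f)$. One-to-one follows by applying $g$ to both sides of $f(x)=f(x')$ and using $g \circ f = I$ to conclude $x = x'$. I do not expect any serious obstacle here; the only point requiring care is the well-definedness argument in (ii) $\Rightarrow$ (iii), where the uniqueness half of (ii) is essential and should not be glossed over.
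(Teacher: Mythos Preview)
Your proposal is correct and follows essentially the same cycle (i) $\Rightarrow$ (ii) $\Rightarrow$ (iii) $\Rightarrow$ (i) as the paper, with each implication argued in the same way. The only difference is cosmetic: you emphasize the well-definedness of $g$ in (ii) $\Rightarrow$ (iii) slightly more explicitly, but the substance is identical.
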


\begin{proof}
    ((i) $\implies$ (ii)) Fix $y \in S$.  Since $f$ is onto, there exists an
    $x \in S$ such that $f(x)=y$.  Since $f$ is one-to-one, there is at most
    one such $x$.

    ((ii) $\implies$ (iii)) Let $g \colon S \to S$ map each $y \in S$ into the
    unique $x \in S$ such that $f(x)=y$.  By the definition of $g$, for fixed
    $x \in S$, we have $g(f(x)) = x$.  Moreover, for each $y \in S$, the point
    $g(y)$ is the point that $f$ maps to $y$, so $f(g(y))=y$.

    ((iii) $\implies$ (i)) Let $g$ be the inverse of $f$.  To see that $f$ must be
    onto, pick any $y \in S$.  Since $f \circ g$ is the identity, we have
    $f(g(y))=y$.  Hence there exists a point $g(y)$ in $S$ that is mapped into
    $y \in S$.  To see that $f$ is one-to-one, fix $x,y \in S$.  If $f(x)=f(y)$,
    then $g(f(x))=g(f(y))$.  But $g \circ f$ is the identity, so $x=y$.
\end{proof}

Here is a nice logical exercise that turns out to be useful when we solve linear
inverse problems.

\begin{Exercise}\label{ex:lioneone}
    Let $S$ and $T$ be nonempty sets.  For $f \colon S \to T$, a function $g
    \colon T \to S$ is called a \navy{left inverse} of $f$ if $g \circ f =
    I$, where $I$ is the identity on $S$.  Prove that $f$ is one-to-one if
    and only if $f$ has a left inverse.
\end{Exercise}

\begin{Answer}
    Suppose first that $f$ is one-to-one.  We construct a left inverse $g$ as
    follows.  For $y \in \range(f)$, let $g(y)$ be the unique $x$ such that
    $f(x)=y$.  (Uniqueness is by the one-to-one property.)  For $y \notin
    \range(f)$, let $g(y)=\bar x$, where $\bar x$ is any point in $S$.  The
    function $g$ is a left inverse of $f$ because, for any $x \in S$, the
    point $y=f(x)$ is in $\range(f)$, and $g(y)=x$.  Hence $g(f(x))=x$.

    Suppose next that $f$ has a left inverse $g$.  Suppose further that $x$ and $x'$ are
    points in $S$ with $f(x)=f(x')$.  Then $g(f(x))=g(f(x'))$.  Since $g$ is a
    left inverse, this yields $x=x'$.  Hence $f$ is one-to-one.
\end{Answer}

\subsubsection{Real-Valued Functions}

If $S$ is any set and $f \colon S \to \RR$, we call $f$ a \navy{real-valued
function}\index{Function!real-valued}.    The set of all real-valued functions
on $S$ is denoted $\RR^S$.    When $S$ has $n$ elements, $\RR^S$ is the same
set as $\RR^n$ expressed in different notation.  The next lemma clarifies.

\begin{lemma}\label{l:rxrn}
    If $|S| =n$, then
    \begin{equation}
        \label{eq:isomph}
        \RR^S \; \ni  \;
        h 
        = (h(x_1), \ldots, h(x_n))
        \quad \longleftrightarrow \quad
            \begin{pmatrix}
                h_1 \\
                \vdots \\
                h_n
            \end{pmatrix}
            \; \in \; 
        \RR^n
    \end{equation}
    is a one-to-one correspondence between $\RR^n$ and the function space
    $\RR^S$.
\end{lemma}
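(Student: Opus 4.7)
The plan is to produce an explicit bijection and appeal to Lemma~\ref{l:ibj} (characterization of bijections via inverses). First I would fix an enumeration $S = \{x_1, \ldots, x_n\}$, which is possible precisely because $|S| = n$ (and note that distinctness of the $x_i$ is what makes the argument go through). Using this enumeration, define the evaluation map $\Phi \colon \RR^S \to \RR^n$ by $\Phi(h) := (h(x_1), \ldots, h(x_n))$, exactly as displayed in \eqref{eq:isomph}. The goal is then to exhibit a two-sided inverse for $\Phi$.

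Next I would construct the candidate inverse $\Psi \colon \RR^n \to \RR^S$ as follows: given $a = (a_1, \ldots, a_n) \in \RR^n$, let $\Psi(a)$ be the function on $S$ defined by $\Psi(a)(x_i) := a_i$ for each $i \in \natset{n}$. Because the $x_i$ are pairwise distinct and exhaust $S$, this prescription specifies one and only one value at each point of $S$, so $\Psi(a)$ is a well-defined element of $\RR^S$. This is the only genuine content of the proof, and is the step I would be most careful about: one must invoke that a function is determined by its values on its domain, together with distinctness of the enumeration, to see that $\Psi(a)$ is unambiguous.

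It then remains to verify $\Psi \circ \Phi = I_{\RR^S}$ and $\Phi \circ \Psi = I_{\RR^n}$. For the first, fix $h \in \RR^S$; then $\Psi(\Phi(h))$ is the function on $S$ whose value at $x_i$ equals the $i$-th coordinate of $\Phi(h)$, which is $h(x_i)$, so $\Psi(\Phi(h))$ and $h$ agree on every point of $S$ and are therefore equal in $\RR^S$. For the second, fix $a \in \RR^n$; then $\Phi(\Psi(a)) = (\Psi(a)(x_1), \ldots, \Psi(a)(x_n)) = (a_1, \ldots, a_n) = a$.

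With both compositions equal to the identity, Lemma~\ref{l:ibj} immediately yields that $\Phi$ is a bijection between $\RR^S$ and $\RR^n$, which is exactly the claimed one-to-one correspondence. I do not anticipate any real obstacle; the entire substance lies in the observation that a finite enumeration of $S$ lets us encode a function on $S$ as a tuple and vice versa, and the proof is essentially a bookkeeping verification.
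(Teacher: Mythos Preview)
Your proposal is correct and complete. The paper itself does not give a formal proof of this lemma; it simply follows the statement with the one-sentence remark that ``a function $h$ can be identified by the set of values that it takes on $S$, which is an $n$-tuple of real numbers,'' treating the result as self-evident. Your argument via an explicit inverse $\Psi$ and Lemma~\ref{l:ibj} is exactly the clean way to make that remark rigorous, so you have supplied more than the paper does.
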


The lemma just states that a function $h$ can be identified by the set of
values that it takes on $S$, which is an $n$-tuple of real numbers.  We use
this identification routinely in what follows. 

The \navy{indicator function}\index{Indicator} of logical statement $P$ is
denoted $\1\{P\}$ and takes value 1 (resp., 0) if $P$ is true (resp., false).

\begin{example}
    If $x, y \in \RR$, then
    \begin{equation*}
        \1\{x \leq y\} =
        \begin{cases}
            1 & \text{ if } x \leq y \\
            0 & \text{ otherwise} .
        \end{cases}
    \end{equation*}
\end{example}
If $A \subset S$, where $S$ is any set, then $\1_A(x) := \1\{x \in A\}$ for
all $x \in S$.

A nonempty set $S$ is called \navy{countable}\index{Countable} if it is finite
or can be placed in one-to-one correspondence with the natural numbers $\NN$.
In the second case we can enumerate $S$ by writing it as $\{x_1, x_2, \ldots
\}$.  Any nonempty set $S$ that fails to be countable is called
\navy{uncountable}\index{Uncountable}.  For example, $\ZZ$ and $\QQ$ are
countable, whereas $\RR$ and every nontrivial interval in $\RR$ are
uncountable.

In general, if $f$ and $g$ are real-valued functions defined on some common
set $S$ and $\alpha$ is a scalar, then $f + g$, $\alpha f$, $fg$, etc.,
are functions on $S$ defined by 
\begin{equation}
    \label{eq:arpo}
    (f + g)(x) = f(x) + g(x),
    \quad
    (\alpha f)(x) = \alpha f(x),
    \quad
    \text{etc.}
\end{equation}
for each $x \in S$.  Similarly, $f \vee g$ and $f \wedge g$ are functions
on $S$ defined by
\begin{equation}\label{eq:fvg}
    (f \vee g)(x) = f(x) \vee g(x)
    \;\; \text{ and } \;\;
    (f \wedge g)(x) = f(x) \wedge g(x).
\end{equation}
Figure~\ref{f:infsup} illustrates.

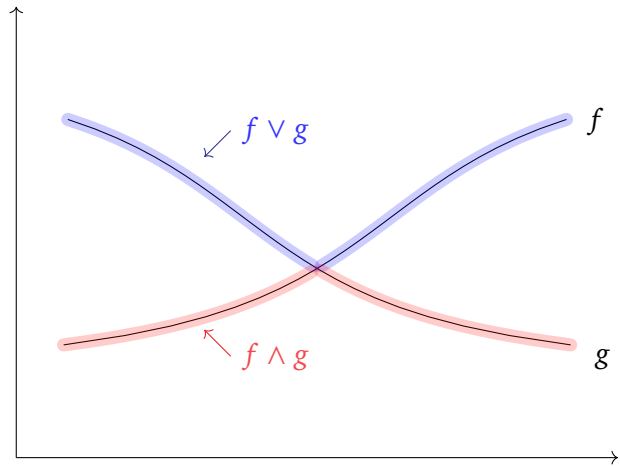
\begin{figure}
	\begin{center}
		\begin{tikzpicture}

\draw[->] (0,0) -- (8,0) node[anchor=north] { };

\draw[->] (0,0) -- (0,6) node[anchor=east] {};

\draw [black, domain=2:4] plot ({2 * tan(deg(\x)) + 5}, 1.5 * \x - 1.5);
\draw [black, domain=2:4] plot ({2 * tan(deg(-\x)) + 3}, 1.5 * \x - 1.5);

\draw [blue, 
       line width=5pt, 
       domain=2.7:4,
       line cap=round, 
       opacity=0.2] 
       plot ({2 * tan(deg(\x)) + 5}, 1.5 * \x - 1.5);
       
\draw [blue, 
		line width=5pt, 
		domain=2.7:4,
		line cap=round, 
		opacity=0.2] 
		plot ({2 * tan(deg(-\x)) + 3}, 1.5 * \x - 1.5);
       
\draw [red, 
		line width=5pt, 
		domain=2:2.66,
		line cap=round, 
		opacity=0.2] 
		plot ({2 * tan(deg(-\x)) + 3}, 1.5 * \x - 1.5);
		
\draw [red, 
		line width=5pt, 
		domain=2:2.65,
		line cap=round, 
		opacity=0.2] 
		plot ({2 * tan(deg(\x)) + 5}, 1.5 * \x - 1.5);

\draw(7.7,4.8) node[anchor=north] {$f$}
      (7.8,1.6) node[anchor=north] {$g$};
      
\draw [<-, Blue] (2.5,4)-- +(10pt,10pt) node[right, text=blue, opacity=0.8] {$f \vee g$};
\draw [<-, Red] (2.5,1.7)-- +(10pt,-10pt) node[right, text=Red, opacity=0.8] {$f \wedge g$};

\end{tikzpicture}
		\caption{\label{f:infsup} Functions $f \vee g$ and $f \wedge g$ when defined on a subset of $\RR$}
	\end{center}
\end{figure}

If $f \colon A \to B$ and $g \colon B \to C$, then \navy{$g \circ f$} is
called the \navy{composition}\index{Composition} of $f$ and $g$.  It is the
function mapping $a \in A$ to $g(f(a)) \in C$.

\subsubsection{Fixed Points}\label{sss:fpfd}

Let $S$ be any set.  A \navy{self-map}\index{Self-map} on $S$ is a function $G$ from $S$ to
itself.  When working with self-maps in arbitrary sets it is
common to write the image of $x$ under $G$ as $Gx$ rather than $G(x)$.  We
often follow this convention.  

Given a self-map $G$ on $S$, a point $x \in S$ is
called a \navy{fixed point}\index{Fixed point} of $G$ if $Gx = x$.  

\begin{example}
    Every point of arbitrary $S$ is fixed under the identity map $I \colon x \mapsto x$.
\end{example}

\begin{example}
    If $S = \NN$ and $Gx = x+1$, then $G$ has no fixed point.
\end{example}

Figure~\ref{f:three_fixed_points} shows another example, for a self-map $G$ on
$S = [0, 2]$.   Fixed points are numbers $x \in [0, 2]$ where $G$ meets the 45
degree line.  In this case there are three.

One of the most common techniques for solving systems of nonlinear equations
in applied mathematics---and quantitative economics---is to convert them into
fixed point problems and then apply fixed point theory.  We will see many
applications of this technique.

\begin{figure}
    \centering
    \scalebox{0.62}{\includegraphics{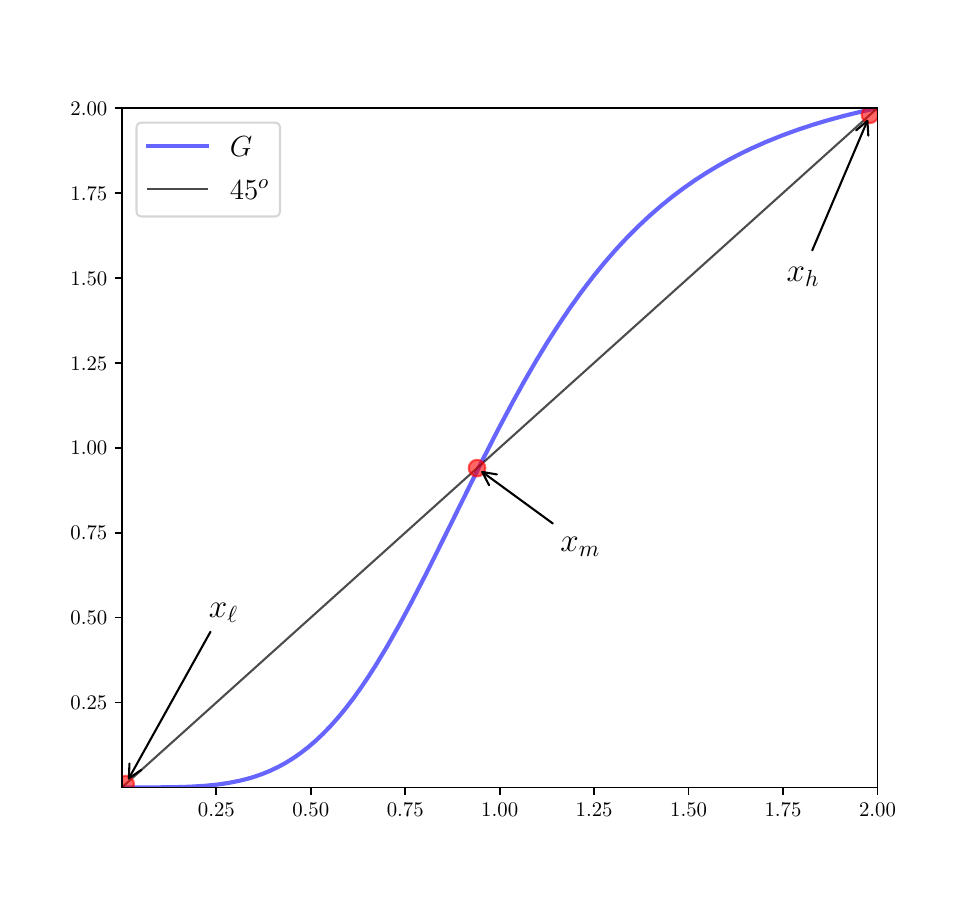}}
    \caption{\label{f:three_fixed_points} 
        Graph and fixed points of $G \colon x \mapsto 2.125/(1 + x^{-4})$ }
\end{figure}

\begin{Exercise}\label{ex:ufnflow}
    Let $S$ be any set and let $G$ be a self-map on $S$.  Suppose there exists
    an $\bar x \in S$ and an $m \in \NN$ such that $G^k x = \bar x$ for all $x
    \in S$ and $k \geq m$.  Prove that, under this condition, $\bar x$ is the
    unique fixed point of $G$ in $S$.
\end{Exercise}

\begin{Answer}
    Let $G$ and $S$ be as stated in the exercise. Regarding uniqueness,
    suppose that $G$ has two distinct fixed points $x$ and $y$ in $S$.  Since
    $G^m x = \bar x$ and $G^m y = \bar x$, we have $G^m x = G^m y$.  But $x$
    and $y$ are distinct fixed points, so $x = G^m x$ must be distinct from $y
    = G^m y$.  Contradiction.

    Regarding the claim that $\bar x$ is a fixed point, we recall that 
    $G^k x = \bar x$ for $k \geq m$.  Hence $G^m \bar x = \bar x$ and $G^{m+1}
    \bar x = \bar x$.  But then 
    \begin{equation*}
        G \bar x = G G^m \bar x = G^{m+1} \bar x = \bar x,
    \end{equation*}
    so $\bar x$ is a fixed point of $G$.
\end{Answer}

\subsubsection{Vectors}\label{sss:vectors}

An $n$-vector $x$ is a tuple of $n$ numbers $x = (x_1, \ldots, x_n)$ where
$x_i \in \RR$ for each $i \in \natset{n}$.  In general, $x$ is neither a row vector nor a
column vector---which coincides with the perspective of most 
scientific computing environments, where the basic vector structure is a
flat array.  When using matrix algebra, vectors are treated as
column vectors unless otherwise stated.

The two most fundamental vector operations are
vector addition and scalar multiplication.  These operations act
pointwise, so that, when $\alpha \in \RR$ and $x, y \in \RR^n$,  
\begin{equation*}
    x + y 
    = 
    \left(
    \begin{array}{c}
        x_1 \\
        x_2 \\
        \vdots \\
        x_n
    \end{array}
    \right)
    +
    \left(
    \begin{array}{c}
         y_1 \\
         y_2 \\
        \vdots \\
         y_n
    \end{array}
    \right)
    :=
    \left(
    \begin{array}{c}
        x_1 + y_1 \\
        x_2 + y_2 \\
        \vdots \\
        x_n + y_n
    \end{array}
    \right)
    \quad \text{and} \quad
    \alpha x 
    := 
    \left(
    \begin{array}{c}
        \alpha x_1 \\
        \alpha x_2 \\
        \vdots \\
        \alpha x_n
    \end{array}
    \right).
\end{equation*}

We let $\RR^n$ be the set of all $n$-vectors and $\matset{n}{k}$ be all $n
\times k$ matrices.  If $A$ is a matrix then $A^{\top}$ is its transpose.

\begin{itemize}
    \item The \navy{inner product}\index{Inner product} of $x, y \in \RR^n$ is defined as
            $\inner{ x, y} := \sum_{i=1}^n x_i y_i$.
    \item The \navy{Euclidean norm} of $x \in \RR^n$ is $\|x \| := \sqrt{\inner{x, x}}$.
\end{itemize}

The norm and inner product satisfy the \navy{triangle inequality} and
\navy{Cauchy--Schwarz inequality}, which state that, respectively,
\begin{equation*}
    \| x + y \| \leq \|x\| + \|y\|
    \quad \text{and} \quad
    |\inner{x, y}| \leq \|x \| \|y\|
    \quad \text{for all } 
    x,y \in \RR^n.
\end{equation*}

\subsubsection{Complex Numbers}\label{sss:complex}

We recall some elementary facts about $\CC$, the set of complex
numbers\index{Complex numbers}.  

Each element of $\CC$ can be understood as a point $(a, b) \in \RR^2$.  In
fact $\CC$ is just $\RR^2$ endowed with a special form of multiplication.  The
point $(a, b)$ is more often written as $a+ib$.  We elaborate below.

The first and second projections of $(a, b)$ are written as $\real (a, b) = a$ and $\imag (a, b) = b$
and called the \navy{real}\index{Real part} and
\navy{nonreal}\index{Nonreal} (or \navy{imaginary}\index{Imaginary}) part
respectively.  The symbol $i$ represents the point $(0, 1) \in \CC$.  As is
traditional, in the context of complex numbers, 
the complex number $(a, 0) \in \CC$ is often written more simply as 
$a$.  With addition and scalar multiplication defined pointwise, this means that, as expected, 
\begin{equation*}
    (a, b) 
    = (a, 0) + (0, b) 
    = (a, 0) + (0, 1) b
    = a + ib.
\end{equation*}
Let $z = (a, b)$.  The \navy{modulus}\index{Modulus} of $z$ is written
$|z|$ and defined as the Euclidean norm $(a^2+b^2)^{1/2}$ of the tuple $(a,b)$.
The two-dimensional Euclidean space is then endowed
with a new operation called \navy{multiplication}, which is defined by 
\begin{equation}\label{eq:complexmult}
    (a, b) (c, d) = (ac - bd, ad+bc).
\end{equation}
Note that, under this rule and our conventions, $i^2  = (0, 1)(0, 1) = (-1, 0) = -1$.

As in the real case, the exponential $\me^z$ is defined for $z \in \CC$ by
$\sum_{k \geq 0} z^k/(k!)$.  It can be shown that, under this extension, the
exponential function continues to enjoy the additive property $\me^{z_1 + z_2}
= \me^{z_1} \me^{z_2}$.  As a result, $\me^{a + ib} = \me^a \me^{ib}$.

Rather than providing its coordinates, another way to represent a vector $z =
(a, b) \in \RR^2$, and hence in $\CC$, is by providing a pair $(r, \phi)$ where $r >
0$ is understood as the length of the vector and $\phi \in [0, 2\pi)$ is the
angle.  This translates to Euclidean coordinates via
\begin{equation*}
    a + ib = (a, b) 
    = (r \cos \phi, r \sin \phi) = r(\cos \phi + i \sin \phi).
\end{equation*}

\begin{figure}
   \begin{center}
       \scalebox{0.45}{\includegraphics[trim = 0mm 0mm 0mm 15mm, clip]{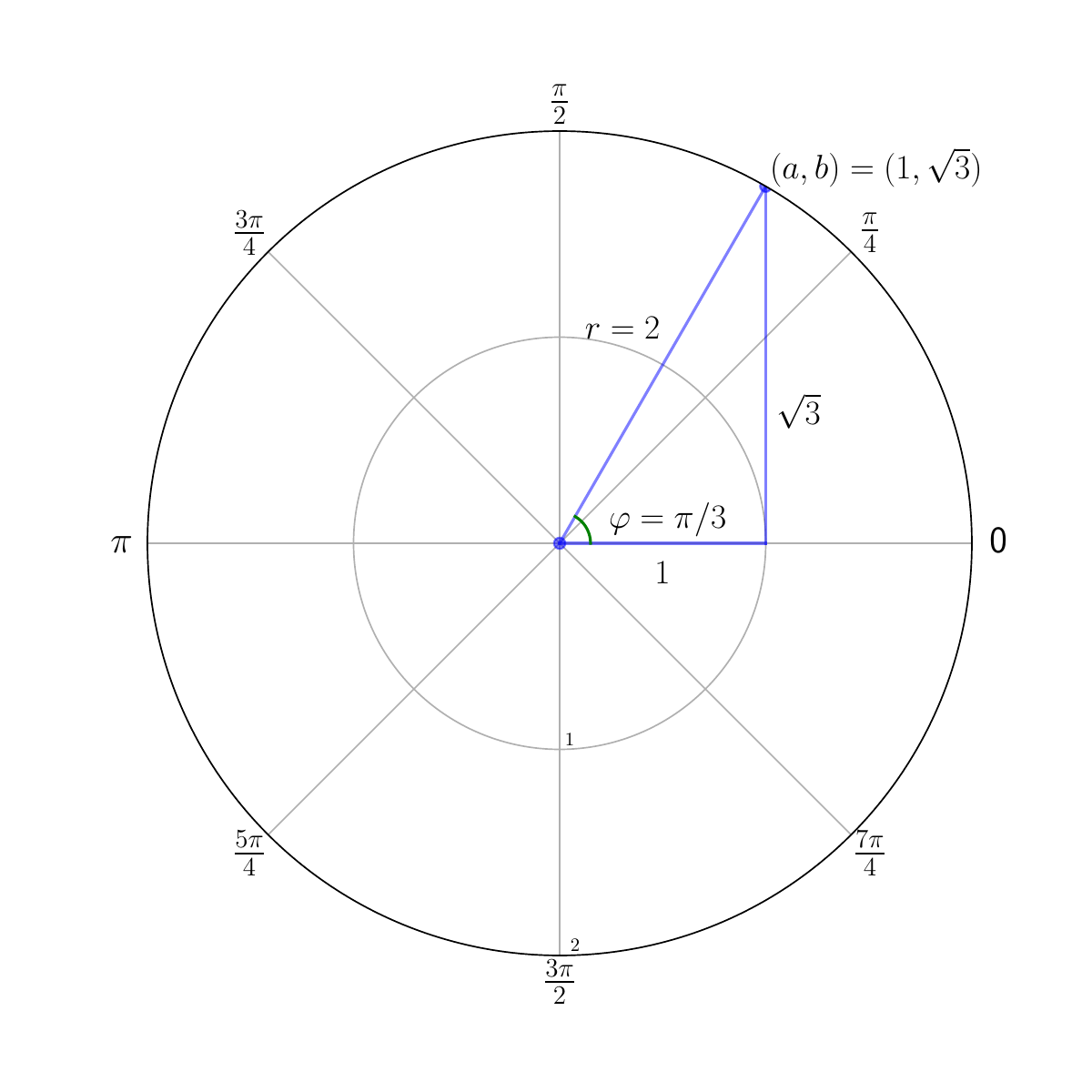}}
   \end{center}
   \caption{\label{f:complex_number} The complex number $(a, b) = r \me^{i \phi}$}
\end{figure}

The representation $(r, \phi)$ is called the \navy{polar
form}\index{Polar form} of the complex number.  By \navy{Euler's
formula}\index{Euler's formula} $\cos(\phi) + i \sin(\phi) = \me^{i \phi}$, 
we can also write
\begin{equation*}
    r(\cos \phi + i \sin \phi) = r \me^{i \phi}.
\end{equation*}
Figure~\ref{f:complex_number} translates 
$\CC \ni (1, \sqrt{3}) = 1 + i \sqrt{3}$  into polar coordinates $2 \me^{i (\pi/3)}$.

The advantage of these representations is clear when we multiply, since
the rule
\begin{equation}\label{eq:compexpr}
    r \me^{i \phi}
    \, s \me^{i \psi} = r  s \, \me^{i (\phi + \psi) }
\end{equation}
is easier to remember and apply than~\eqref{eq:complexmult}.  
Calculating the modulus is also easy, since, by the trigonometric formula
$\cos^2 \phi + \sin^2 \phi = 1$,
\begin{equation}\label{eq:modpol}
    | r \me^{i \phi}| 
    = | r(\cos(\phi) + i \sin(\phi)) | 
    = r \left(\cos^2 \phi + \sin^2 \phi \right)^{1/2} 
    = r.
\end{equation}

\begin{Exercise}
    Show that, for any $u, v \in \CC$ we have $|u v| = |u| |v|$.
\end{Exercise}

\begin{Answer}
    In polar form, we have $u = r \me^{i \phi}$ and $v = 
    s \me^{i \psi}$, so $uv = r  s \, \me^{i (\phi + \psi) }$.  
    Hence $|uv| = rs$ (see \eqref{eq:modpol}), which is equal to $|u| |v|$.
\end{Answer}

\subsection{Order}

Order structure is of great importance in economics---typically more so than
other fields such as physics or chemistry.  Here we review the basics.

\subsubsection{Partial Orders}\label{ss:posets}

It was mentioned in the preface that order-theoretic methods form a core part
of the text.  In this section we introduce some key concepts.

Let $P$ be a nonempty set.  A \navy{partial order}\index{Partial order} on $P$
is a relation $\preceq$ on $P \times P$ satisfying, for any $p, q, r$ in $P$,
\begin{multicols}{2}
    \begin{enumerate}
        \item[] $p \preceq p$, 
        \item[] $p \preceq q$ and $q \preceq p$ implies $p = q$ and
        \item[] $p \preceq q$ and $q \preceq r$ implies $p \preceq r$
        \item[] (reflexivity) 
        \item[] (antisymmetry) 
        \item[] (transitivity) 
    \end{enumerate}
\end{multicols}

When paired with a partial order $\preceq$, the set $P$ (or the pair $(P,
\preceq)$) is called a \navy{partially ordered set}\index{Partially ordered
set}.  

\begin{example}
    The usual order $\leq$ on $\RR$ is a partial order on $\RR$.
    Unlike other partial orders we consider, it 
    has the additional property that either $x \leq y$ or $y \leq x$ for
    every $x, y$ in $\RR$.  For this reason, $\leq$ is also called a
    \navy{total order} on $\RR$.
\end{example}

\begin{Exercise}
    Let $P$ be any set and consider the relation induced by equality, so that
    $p \preceq q$ if and only if $p = q$.   Show that this
    relation is a partial order on $P$.
\end{Exercise}

\begin{Exercise}
    Let $M$ be any set.  Show that $\subset$ is a partial order on $\wp(M)$,
    the set of all subsets of $M$. 
\end{Exercise}

\begin{example}[Pointwise order over functions]\label{eg:ppor}
    Let $S$ be any set. For $f, g$ in $\RR^S$, we write 
    \begin{equation*}
        f \leq g 
        \text{ if }
        f(x) \leq g(x) \text{ for all } x \in S.
    \end{equation*}
    This relation $\leq$ on $\RR^S$ is a partial order called the
    \navy{pointwise order}\index{Pointwise order} on $\RR^S$.
\end{example}

A subset $B$ of a partially ordered set $(P, \preceq)$ is called
\begin{itemize}
    \item \navy{increasing}\index{Increasing} if $x \in B$ and $x \preceq y$
        implies $y \in B$.
    \item \navy{decreasing}\index{Decreasing} if $x \in B$ and $y \preceq x$
        implies $y \in B$.
\end{itemize}

\begin{Exercise}
    Describe the set of increasing sets in $(\RR, \leq)$.
\end{Exercise}

\begin{example}[Pointwise order over vectors]
    For vectors $x = (x_1, \ldots, x_d)$ and $y = (y_1, \ldots, y_d)$, we write 
    \begin{itemize}
        \item $x \leq y$ if $x_i \leq y_i$ for all $i \in \natset{d}$ and
        \item $x \ll y$ if $x_i < y_i$ for all $i \in \natset{d}$.
    \end{itemize}
    The statements $x \geq y$ and $x \gg y$ are defined
    analogously.\footnote{The notation $x \leq y$ over vectors is standard,
        while $x \ll y$ is less so.  In some fields, $n \ll k$ is used as
        an abbreviation for ``$n$ is much smaller than $k$.''  Our usage
        lines up with most of the literature on partially ordered
    vector spaces.  See, e.g., \cite{zhang2012variational}.}
        The relation $\leq$ is a partial order on $\RR^n$, also called the \navy{pointwise
    order}\index{Pointwise order}.  (In fact, the present example is a special case
    of Example~\ref{eg:ppor} under the identification in Lemma~\ref{l:rxrn}
    (page~\pageref{l:rxrn}).) On the other hand, $\ll$ is not a partial order
    on $\RR^n$.  (Which axiom fails?)
\end{example}

\begin{Exercise}\label{ex:clic}
    Recall from Example~\ref{eg:clic} that limits in $\RR$ preserve weak
    inequalities. Prove that the same is true in $\RR^d$.  In particular, 
    show that, for vectors $a, b \in \RR^d$ and sequence $(x_n)$ in $\RR^d$,
    we have $a \leq x_n \leq b$ for all $n \in
    \NN$ and $x_n \to x$ implies $a \leq x \leq b$.
\end{Exercise}

\subsubsection{Pointwise Operations on Vectors}\label{sss:poov}

In this text, operations on real numbers such as $|\cdot|$ and $\vee$ are
applied to vectors pointwise.  For example, for vectors $a = (a_i)$ and $b =
(b_i)$ in $\RR^d$, we set
\begin{equation*}
    |a| = (|a_i|),
    \quad
    a \wedge b = (a_i \wedge b_i)
    \quad \text{and} \quad
    a \vee b = (a_i \vee b_i)
\end{equation*}
(The last two are special cases of \eqref{eq:fvg}.)

A small amount of thought will convince you that, with this convention plus
the pointwise order over vectors, the
order-theoretic inequalities and identities listed in \S\ref{sss:sets} also
hold for vectors in $\RR^d$.  (For example, $|a + b| \leq |a| + |b|$ for any
$a, b, c \in \RR^d$.)

\begin{Exercise}\label{ex:bmk}
    Prove:  If $B$ is $m \times k$ and $B \geq 0$, then $|B x| \leq B |x|$ for
    all $k \times 1$ column vectors $x$.
\end{Exercise}

\begin{Answer}
    Fix $B \in \matset{m}{k}$ with $b_{ij} \geq 0$ for all $i, j$.  Pick any $i
    \in \natset{m}$ and $x \in \RR^k$.  By the triangle inequality, we have
    $|\sum_j b_{ij} x_j| \leq \sum_j b_{ij} |x_j|$.  Stacking these inequalities
    yields $|B x| \leq B |x|$, as was to be shown.
\end{Answer}

\subsubsection{Monotonicity}\label{sss:mono}

Given two partially ordered sets $(P, \preceq)$ and $(Q,
\trianglelefteq)$, a function $G$ from $P$ to $Q$ is called
\navy{order-preserving}\index{Order-preserving} if 
\begin{equation}
    \label{eq:orpres}
    p, q \in P \text{ and } p \preceq q
    \quad \implies \quad
    Gp \trianglelefteq Gq.
\end{equation}

\begin{example}
    Let $\cC$ be all continuous functions from 
    $S = [a, b]$ to $\RR$ and let $\leq$ be the pointwise partial order on
    $\cC$.  Define 
    \begin{equation*}
        I \colon \cC \ni f \to \int_a^b f(x) dx \in \RR.
    \end{equation*}
    Since $f \leq g$ implies $\int_a^b f(x) dx \leq
    \int_a^b g(x) dx$, the integral map $I$ is order-preserving on $\cC$.
\end{example}

\begin{Exercise}
    Let $X$ be a random variable mapping $\Omega$ to finite $S$.  Define
    $\ell \colon \RR^S \to \RR$ by $\ell h = \EE h(X)$.  Show that $\ell$ is
    order-preserving when $\RR^S$ has the pointwise order.
\end{Exercise}

If $P = Q = \RR$ and $\preceq$ and $\trianglelefteq$ are both equal to $\leq$,
the standard order on $\RR$, then the order-preserving property reduces to the usual notion of
an \navy{increasing function}\index{Increasing function} (i.e., nondecreasing
function), and we will use the terms ``increasing'' and ``order-preserving''
interchangeably in this setting.\footnote{Other common terms for
    order-preserving in
the literature include ``monotone increasing,'' ``monotone''  and
``isotone.''} 

In addition, if $S = g$ maps $A \subset \RR$ into $\RR$, then we will call $g$
\begin{itemize}
    \item \navy{strictly increasing}\index{Strictly increasing} if $x < y$
        implies $g(x) < g(y)$, and
    \item \navy{strictly decreasing}\index{Strictly decreasing}
        if $x < y$ implies $g(x) > g(y)$.
\end{itemize}

\subsection{Convergence}\label{ss:convergence}

Let's now recall the basics of convergence and continuity.

Given any set $S$, an $S$-valued \navy{sequence}\index{Sequence} $(x_n) :=
(x_n)_{n \in \NN}$ is a function $n \mapsto x_n$ from $\NN$ to $S$.  If
$S=\RR$, we call $(x_n)$ a \navy{real-valued sequence}.  A
\navy{subsequence}\index{Subsequence} of $(x_n)_{n \in \NN}$  is a sequence of the form
$(x_{\sigma(n)})_{n \in \NN}$ 
where $\sigma$ is a strictly increasing function
    from $\NN$ to itself.   You can think of forming a subsequence from a
    sequence by deleting some of its elements---while still retaining
    infinitely many. 

In computer science and statistics, it is common to classify sequences
according to their asymptotic behavior.  Often this is done via \navy{big O
    notation}\index{Big O notation},  where, for a real-valued sequence
    $(x_n)$, we write \navy{$(x_n) = O(g_n)$} if there exists a nonnegative
    sequence $(g_n)$ and a constant $M < \infty$ such that $|x_n| \leq M g_n$
    for all $n \in \NN$.

\begin{Exercise}
    Let $x_n = -5n + n^2$ for all $n \in \NN$. Show that $(x_n) = O(n^2)$
    holds but $(x_n) = O(n)$ fails.
\end{Exercise}

\begin{Answer}
    Let $x_n= -5n + n^2$.  Then $|x_n| \leq 5n + n^2 \leq 6 n^2$.  Hence $(x_n) =
    O(n^2)$.  Regarding the second claim, suppose to the contrary that $(x_n) =
    O(n)$.  Then we can take an $M$ such that $|x_n| \leq M n$ for all $n \in \NN$. But
    then $x_n = -5n + n^2 \leq M n$ for all $n$.  Dividing by $n$ yields $n \leq 5 + M$
    for all $n$.  Contradiction.
\end{Answer}

\subsubsection{Metric Properties of the Real Line}\label{sss:ocon0}

The following definition is fundamental to what follows: a real-valued
sequence $(x_n)$ \navy{converges}\index{Convergence} to $x \in \RR$ and write
$x_n \to x$ if
\begin{equation*}
    \text{ 
        for each $\epsilon > 0$, there is an $N \in \NN$ such that
        $|x_n - x| < \epsilon$ whenever $n \geq N$.  
    }
\end{equation*}

\begin{example}
    If $x_n = 1-1/n$, then $x_n \to 1$.  Indeed, for any $\epsilon > 0$, the statement
    $|x_n - 1| < \epsilon$ is equivalent to $n > 1/\epsilon$.
    This clearly holds whenever $n$ is sufficiently large.
\end{example}

\begin{Exercise}
    Prove: If $a, b \in \RR$, $x_n \to a$ and $x_n \to b$, then $a=b$.
\end{Exercise}

Let's state some elementary limit laws that are used without comment
throughout.  (You can review the proofs in sources such as \cite{bartle2011introduction}).

\index{Bounded!sequence}
\index{Monotone!sequence}
A sequence $(x_n)$ is called \navy{bounded} if there is an $M \in \RR$ such
that $|x_n| \leq M$ for all $n \in \NN$. It is called 
\begin{itemize}
    \item \navy{monotone increasing} if $x_n \leq x_{n+1}$ for all $n \in \NN$, and   
    \item \navy{monotone decreasing} if $x_n \geq x_{n+1}$ for all $n$.
\end{itemize}
The sequence is called \navy{monotone} if it is either monotone increasing or 
decreasing.  The next theorem, concerning monotone sequences, is a deep result
about the structure of $\RR$.

\begin{theorem}
    \label{t:bmseq}
    A real-valued monotone sequence converges in $\RR$ if and only if it is
    bounded.
\end{theorem}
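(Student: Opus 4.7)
The plan is to prove the two implications separately, since they have quite different characters. The forward direction (``convergent implies bounded'') is the easy one and in fact does not even require monotonicity: if $x_n \to x$, I would take $\epsilon = 1$ in the definition of convergence to get an $N$ such that $|x_n - x| < 1$ for all $n \geq N$, hence $|x_n| \leq |x| + 1$ for those $n$. Then $M := \max\{|x_1|, \ldots, |x_{N-1}|, |x| + 1\}$ bounds the whole sequence. This is just a short calculation and I would dispose of it first.

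The reverse direction is the substantive one. I would handle the monotone increasing case and then deduce the decreasing case by passing to $(-x_n)$. Given a bounded monotone increasing sequence $(x_n)$, the key step is to set $x^* := \sup_n x_n$. This supremum exists as a real number precisely because of the least upper bound property of $\RR$ applied to the nonempty bounded set $\{x_n : n \in \NN\}$. I then claim $x_n \to x^*$: given $\epsilon > 0$, the number $x^* - \epsilon$ is strictly less than the supremum, so it fails to be an upper bound, which yields some index $N$ with $x_N > x^* - \epsilon$. Monotonicity then forces $x^* - \epsilon < x_N \leq x_n \leq x^*$ for all $n \geq N$, so $|x_n - x^*| < \epsilon$, which is exactly the definition of convergence.

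The main obstacle, as the text itself signals when it calls the result ``a deep result about the structure of $\RR$,'' is really invoking completeness at the right moment. The argument is not computationally hard, but one has to recognize that without the least upper bound property (e.g.\ working inside $\QQ$ instead of $\RR$), the theorem is false: a bounded monotone sequence of rationals can fail to have a rational limit, as with truncated decimal expansions of $\sqrt{2}$. So the conceptual content of the proof is entirely carried by the step that produces $x^*$; everything else is bookkeeping on $\epsilon$ and the monotone inequality chain.
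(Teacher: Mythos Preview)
Your proof is correct and is the standard argument: convergent sequences are bounded (the easy direction, not needing monotonicity), and a bounded monotone increasing sequence converges to its supremum, whose existence is guaranteed by the least upper bound property of $\RR$. The paper, however, does not actually prove this theorem; it states the result, offers one sentence of intuition about $\RR$ ``containing no gaps,'' and refers the reader to \cite{bartle2011introduction} for details. So there is no in-paper proof to compare against, but your argument is exactly the one found in the cited reference and is what the paper's remark about completeness is gesturing at.
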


The intuitive meaning of the ``if'' part of Theorem~\ref{t:bmseq} is that
monotone bounded sequences always converge to some point in $\RR$ because
$\RR$ contains no gaps.  This statement is closely related to the
``completeness'' property of the real line, which is discussed 
in \cite{bartle2011introduction} and many other texts on real analysis.

\index{Series}
Next let's consider \navy{series}.  Given a sequence $(x_n)$ in $\RR$, we
set
\begin{equation*}
    \sum_{n \geq 1} x_n := \lim_{N \to \infty} \sum_{n=1}^N x_n
    \text{ whenever the limit exists in $\RR$}.   
\end{equation*}
More generally, given arbitrary countable $S$ and
$g \in \RR^S$, we write
$\sum_{x \in S} g(x) = M$ if there exists an enumeration $(x_n)_{n \in
    \NN}$ of $S$ such that $\sum_{n \geq 1} |g(x_n)|$ is
    finite and, in addition, $\sum_{n \geq 1} g(x_n) = M$.\footnote{This
        definition is not ambiguous because every possible enumeration leads
    to the same value when the absolute sum is finite (see, e.g., the
rearrangement theorem in \cite{bartle2011introduction}, \S9.1).}

\begin{Exercise}\label{ex:stricttr}
    Show that, if $S$ is countable, $g \in \RR^S$, and there exist $x', x''
    \in S$ such that $g(x') > 0$ and $g(x'') < 0$, then $\left| \sum_{x \in S}
    g(x) \right| < \sum_{x \in S} |g(x)|$.\footnote{Hint: Start with the case
        $|S|=2$.  Argue that the case with $n$ elements follows from this case
        and the ordinary (weak) triangle inequality $| \sum_{x \in S} g(x) |
    \leq \sum_{x \in S} |g(x)|$.}
\end{Exercise}

\subsubsection{Metric Properties of Euclidean Space}\label{sss:ocon}

Now we review the metric properties of $\RR^d$, for some $d \in \NN$, when
distance between vectors $x, y \in \RR^d$ is understood in terms of Euclidean
norm deviation $\| x - y \|$.  The notion of convergence for real-valued sequences
extends naturally to this setting: sequence $(x_n)$ in $\RR^d$ is said to
\navy{converge}\index{Convergence} to $x \in \RR^d$ if
\begin{equation*}
    \text{ 
        for each $\epsilon > 0$, there is an $N \in \NN$ such that
        $\|x_n - x\| < \epsilon$ whenever $n \geq N$.  
    }
\end{equation*}
In this case we write $x_n \to x$.  Figure~\ref{f:euclidean_convergence_1}
shows a sequence converging to the origin in $\RR^3$, with colder colors being
later in the sequence.

\begin{figure}
   \begin{center}
       \scalebox{0.9}{\includegraphics[trim = 0mm 20mm 0mm 10mm, clip]{
           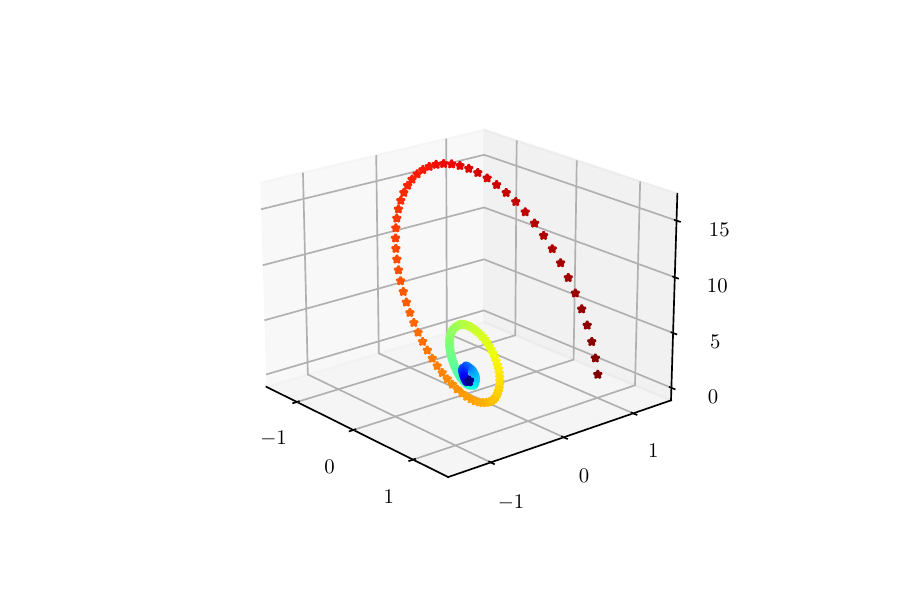}
       }
   \end{center}
   \caption{\label{f:euclidean_convergence_1} Convergence of a sequence to the
   origin in $\RR^3$}
\end{figure}

\begin{Exercise}
    Prove that limits in $\RR^d$ are unique.  In other words, show that, if $(x_n)$ is a
    sequence converging to $x \in \RR^d$ and $y \in \RR^d$, then $x=y$.
\end{Exercise}

Given any point $u \in \RR^d$ and $\epsilon > 0$, the
\navy{$\epsilon$-ball}\index{$\epsilon$-ball} around $u$ is the set
\begin{equation*}
    B_\epsilon(u) := \setntn{v \in \RR^d}{\| u - v \| < \epsilon} .
\end{equation*}
With this terminology, we can say that $(x_n)$ converges to $x \in \RR^d$ if
the sequence $(x_n)$ is eventually in any $\epsilon$-ball around $x$.

\begin{proposition}\label{p:rseqp}
    If $(x_n)$ and $(y_n)$ are sequences in $\RR^d$ with $x_n \to x$ and $y_n \to y$, then
    \begin{enumerate}
        \label{item:sll}
        \item $x_n + y_n \to x + y$ and $x_n y_n \to x y$
        \item $x_n \leq y_n$ for all $n \in \NN$ implies $x \leq y$
        \item $\alpha x_n \to \alpha x$ for any $\alpha \in \RR$ 
        \item $x_n \vee y_n \to x \vee y$ and $x_n \wedge y_n \to x \wedge y$.
    \end{enumerate}
\end{proposition}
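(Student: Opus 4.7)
The plan is to reduce each claim to the scalar case $d=1$, which is standard. The bridge is the equivalence
\[
x_n \to x \text{ in } \RR^d
\iff
x_n^{(i)} \to x^{(i)} \text{ in } \RR \text{ for every } i \in \natset{d},
\]
which I would establish first via the elementary bracketing
\[
|x_n^{(i)} - x^{(i)}| \;\leq\; \|x_n - x\| \;\leq\; \sqrt{d}\, \max_{j \in \natset{d}} |x_n^{(j)} - x^{(j)}|,
\]
both inequalities being immediate from the definition of the Euclidean norm. Taking limits on either side of this bracketing gives the stated equivalence.

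With this in hand, parts (i), (iii), and (iv) break cleanly. Addition in (i) actually needs no reduction: the triangle inequality in $\RR^d$ gives
\[
\|x_n + y_n - (x+y)\| \leq \|x_n - x\| + \|y_n - y\| \to 0,
\]
and likewise (iii) is immediate from $\|\alpha x_n - \alpha x\| = |\alpha|\,\|x_n - x\|$. Interpreting the product in (i) coordinatewise, I would apply the scalar limit law $a_n b_n \to ab$ to each component and invoke the equivalence above. For (iv), the inequality $|a \vee c - b \vee c| \leq |a - b|$ from \S\ref{sss:sets}, inserted with an intermediate term, gives coordinate by coordinate
\[
|x_n^{(i)} \vee y_n^{(i)} - x^{(i)} \vee y^{(i)}|
\leq |x_n^{(i)} - x^{(i)}| + |y_n^{(i)} - y^{(i)}|,
\]
so $(x_n \vee y_n)^{(i)} \to (x \vee y)^{(i)}$ for each $i$; the $\wedge$ case is identical. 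Applying the equivalence once more finishes both.

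Part (ii) reduces to the scalar statement that $a_n \leq b_n$ with $a_n \to a$ and $b_n \to b$ implies $a \leq b$, applied to each coordinate pair $(x_n^{(i)}, y_n^{(i)})$. The scalar version I would prove by contradiction: if $a > b$, set $\epsilon = (a-b)/2 > 0$, choose $N$ so that $|a_n - a| < \epsilon$ and $|b_n - b| < \epsilon$ for $n \geq N$, and observe that then $b_n < b + \epsilon = a - \epsilon < a_n$, contradicting $a_n \leq b_n$.

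There is no real obstacle here; the only delicate point is the equivalence between Euclidean and coordinatewise convergence, and once that is in place every remaining step is either a direct norm inequality or a one-line appeal to the corresponding scalar fact. The mild subtlety is just to be explicit about how $x_n y_n$ in (i) is being interpreted when $d > 1$ (componentwise product), since the statement as written is ambiguous in that case.
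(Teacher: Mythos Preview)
Your argument is correct and is the standard route: establish the equivalence of norm convergence and coordinatewise convergence, then reduce each claim to the scalar case. The paper, however, does not prove this proposition at all---it is stated in the appendix as a collection of elementary limit laws ``used without comment throughout,'' with a pointer to \cite{bartle2011introduction} for proofs. So there is nothing to compare against beyond noting that your approach is exactly what one would expect.

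One small remark: the coordinatewise--norm equivalence you establish at the outset is essentially the content of Lemma~\ref{l:eqconvec}, which the paper states (and leaves as an exercise) later in \S\ref{sss:eqvecnorms}. Your observation about the ambiguity of $x_n y_n$ when $d > 1$ is also well taken; the paper never defines a vector product, so the componentwise reading you adopt is the only one consistent with the surrounding pointwise conventions in \S\ref{sss:poov}.
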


A sequence $(x_n) \subset \RR^d$ is called \navy{Cauchy}\index{Cauchy
sequence} if, for all $\epsilon > 0$, there exists an $N \in \NN$ with $|x_n -
x_m| < \epsilon$ whenever $n, m \geq N$.  

\begin{Exercise}
    Let $d=1$ and suppose $x_n = 1/n$. Prove that $(x_n)$ is Cauchy.
\end{Exercise}

\begin{Exercise}
    Prove that every convergent sequence in $\RR^d$ is Cauchy.
\end{Exercise}

It is a fundamental result of analysis, stemming from axiomatic properties of
the reals, that the converse is also true:

\begin{theorem}\label{t:compreal}
    A sequence in $\RR^d$ converges to a point in $\RR^d$ if and only if it is Cauchy.
\end{theorem}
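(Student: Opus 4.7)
The proof splits cleanly into two directions, and the hard work lies entirely in the converse.

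\textbf{Forward direction (convergence $\Rightarrow$ Cauchy).} This is the routine half. Suppose $x_n \to x$ in $\RR^d$. Given $\epsilon > 0$, I would pick $N$ so large that $\|x_n - x\| < \epsilon/2$ for all $n \geq N$, and then apply the triangle inequality to $\|x_n - x_m\| \leq \|x_n - x\| + \|x - x_m\|$ to get $\|x_n - x_m\| < \epsilon$ for $n, m \geq N$.

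\textbf{Reverse direction (Cauchy $\Rightarrow$ convergence).} My plan is to reduce to the scalar case via coordinates and then invoke the monotone convergence Theorem~\ref{t:bmseq} to exploit the completeness of $\RR$. Write $x_n = (x_n^{(1)}, \ldots, x_n^{(d)})$. Since $|x_n^{(i)} - x_m^{(i)}| \leq \|x_n - x_m\|$ for each coordinate $i$, each scalar sequence $(x_n^{(i)})_{n \in \NN}$ is Cauchy in $\RR$. If I can show each such scalar Cauchy sequence converges in $\RR$ to some limit $x^{(i)}$, then setting $x := (x^{(1)}, \ldots, x^{(d)})$ and applying Lemma~\ref{l:eqconvec} (equivalence of norm and pointwise convergence) gives $x_n \to x$ in $\RR^d$.

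\textbf{The scalar Cauchy case (the main obstacle).} This is where I actually need to use completeness, since the theorem fails over $\QQ$. Let $(a_n)$ be a Cauchy sequence in $\RR$. The steps are:
\begin{enumerate}
    \item Show $(a_n)$ is bounded: picking $N$ so that $|a_n - a_N| < 1$ for $n \geq N$ gives $|a_n| \leq \max\{|a_1|, \ldots, |a_{N-1}|, |a_N|+1\}$.
    \item Construct a monotone bounded sequence from $(a_n)$. My preferred route is to set $b_k := \inf_{n \geq k} a_n$ and $c_k := \sup_{n \geq k} a_n$; since $(a_n)$ is bounded, both are well defined, with $(b_k)$ monotone increasing and $(c_k)$ monotone decreasing, and both bounded. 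By Theorem~\ref{t:bmseq}, both converge, say to $b$ and $c$ respectively with $b \leq c$.
    \item Use the Cauchy property to squeeze $b = c$: given $\epsilon > 0$, pick $N$ with $|a_n - a_m| < \epsilon$ for $n, m \geq N$, so $c_N - b_N \leq \epsilon$, forcing $c - b \leq \epsilon$, and since $\epsilon$ was arbitrary, $b = c$.
    \item Conclude $a_n \to b$: for $n \geq N$, we have $b_N \leq a_n \leq c_N$, and both bounds converge to the common value $b$, so $a_n \to b$.
\end{enumerate}

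The main obstacle is step 2--4 above: extracting convergence from the Cauchy property requires the order completeness of $\RR$, and the sup/inf construction is the cleanest way to route this through the one deep result the excerpt gives us (monotone convergence). Everything else is bookkeeping via the triangle inequality and Lemma~\ref{l:eqconvec}.
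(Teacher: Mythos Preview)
Your proof is correct. The paper does not actually prove this theorem; it simply states it as ``a fundamental result of analysis, stemming from axiomatic properties of the reals'' and moves on, so there is no paper proof to compare against.

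Your argument is the standard one and is well executed. The forward direction is routine. For the converse, the reduction to coordinates is clean, and the liminf/limsup sandwich via $b_k := \inf_{n \geq k} a_n$ and $c_k := \sup_{n \geq k} a_n$ correctly routes everything through Theorem~\ref{t:bmseq}, which is the only completeness-type result the appendix supplies. One small organizational wrinkle: Lemma~\ref{l:eqconvec}, which you cite to pass from coordinatewise convergence back to norm convergence, lives in Chapter~\ref{c:prod} rather than the appendix, so in the book's intended reading order it is not yet available when this theorem appears. The fix is trivial---just observe directly that $\|x_n - x\|^2 = \sum_{i=1}^d |x_n^{(i)} - x^{(i)}|^2 \to 0$ once each coordinate converges---and involves no circularity.
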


\subsubsection{Topology}

A point $u \in A \subset \RR^d$ is called \navy{interior}\index{Interior} to
$A$ if there exists an $\epsilon > 0$ such that $B_\epsilon(u) \subset A$.

\begin{Exercise}
    Let $d = 1$ so that $\| x - y \| = |x - y|$.   Show
    that $0.5$ is interior to $A := [0, 1)$ but $0$ is not.
    Show that $\QQ$, the set of rational numbers in $\RR$, contains no interior
    points.
\end{Exercise}

A subset $G$ of $\RR^d$ is called \navy{open}\index{Open set} in $\RR^d$ if
every $u \in G$ is interior to $G$.  
A subset $F$ of $\RR^d$ is called \navy{closed}\index{Closed set} if,
given any sequence $(x_n)$ satisfying $x_n \in F$ for all $n \in \NN$ and $x_n \to
x$ for some $x \in \RR^d$, the point $x$ is in $F$.  In other words, $F$ contains
the limit points of all convergent sequences that take values in $F$.

\begin{example}\label{eg:clic}
    Limits in $\RR$ preserve orders, so $a \leq x_n \leq b$ for all $n \in
    \NN$ and $x_n \to x$ implies $a \leq x \leq b$.  Thus, any closed interval
    $[a, b]$ in $\RR$ is closed in the standard (one dimensional Euclidean)
    metric.
\end{example}

\begin{Exercise}
    Prove that $G \subset \RR^d$ is open if and only if $G^c$ is closed.
\end{Exercise}

A subset $B$ of $\RR^d$ is called \navy{bounded}\index{Bounded} if there exists a
finite $M$ such that $\| b \| \leq M$ for all $b \in B$. A subset $K$ of
$\RR^d$ is called \navy{compact}\index{Compact} in $\RR^d$ if every sequence
in $K$ has a subsequence converging to some point in $K$. The 
Bolzano--Weierstrass theorem tells us that $K$ is compact if and only if $K$
is closed and bounded.

\subsubsection{Continuity in Vector Space}

If $A \subset \RR^d$, then $f \colon A \to \RR^k$ is called
\navy{continuous} at $x \in A$ if, for each sequence $(x_n) \subset A$ with
$x_n \to x$, we have $f(x_n) \to f(x)$ in $\RR^k$.  If $f$ is continuous at all $x \in
A$ then we call $f$ \navy{continuous on $A$}.

\begin{example}
    If $f(x)=x^2$ on $A = \RR$, then $f$ is continuous at all $x \in \RR$
    because,  by Proposition~\ref{p:rseqp}, $x_n \to x$ implies $x_n^2 = x_n
    \cdot x_n \to x \cdot x = x^2$.
\end{example}

More generally, every polynomial function is continuous on $\RR$.
The elementary functions $\sin$, $\cos$, $\exp$ and $\log$ are all continuous
on their domains.

\begin{Exercise}
    Prove: If $\alpha, \beta \in \RR$ and $f,g$ are continuous functions from
    $A \subset \RR^d$ to $\RR^k$, then so is $\alpha f+ \beta g$.
\end{Exercise}

\begin{Answer}
    If $x_n \to x$ in $\RR^d$ and $f$ and $g$ are continuous, then $f(x_n) \to f(x)$
    and $g(x_n) \to g(x)$ in $\RR^k$.  But then, by Proposition~\ref{p:rseqp},
    $\alpha f(x_n) + \beta g(x_n)$ converges to $\alpha f(x) + \beta g(x)$ in
    $\RR^k$, as was to be shown.
\end{Answer}

\begin{Exercise}
    Fix $a \in \RR^d$.  Prove that $f, g \colon \RR^d \to \RR^d$ defined by 
    $f(x) = x \wedge a$ and $g(x) = x \vee a$
    are both continuous functions on $\RR^d$.
\end{Exercise}

\begin{Answer}
    Here's the answer for $f$:  Take $(x_n)$ converging to $x$ in $\RR^d$.  Applying the
    inequalities in \ref{sss:sets} pointwise to vectors, we have
    \begin{equation*}
        0 \leq |f(x_n) - f(x)|
        = |x_n \wedge a - x \wedge a| \leq |x_n - x|.
    \end{equation*}
    Taking the Euclidean norm over these vectors and using $|u| \leq |v|$ implies
    $\| u \| \leq \| v\|$ yields $\| f(x_n) - f(x)\| \leq \|x_n - x\| \to 0$.
    Similar arguments can be applied to $g$.
\end{Answer}

The next lemma is helpful in locating fixed points.

\begin{lemma}\label{l:clifp}
    Let $F$ be a self-map on $S \subset \RR^d$.  If $F^m u \to u^*$ as $m \to
    \infty$ for some pair $u, u^* \in S$ and, in addition, $F$ is continuous
    at $u^*$, then $u^*$ is a fixed point of $F$.
\end{lemma}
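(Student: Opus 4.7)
My plan is to exploit continuity of $F$ at $u^*$ to pass to the limit in the recursion $F^{m+1} u = F(F^m u)$. The idea is that the sequence $(F^m u)$ converges to $u^*$ by hypothesis, so applying $F$ termwise should yield a sequence converging to $F u^*$; but this new sequence is just a shift of the original, which still converges to $u^*$. Uniqueness of limits in $\RR^d$ then forces $F u^* = u^*$.

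More concretely, I would set $x_m := F^m u$ for each $m \in \NN$, so that by assumption $x_m \to u^*$ in $\RR^d$. The first step is to observe that the tail sequence $(x_{m+1})_{m \in \NN}$ also converges to $u^*$, since it is obtained from $(x_m)$ by dropping the initial term. The second step is to rewrite $x_{m+1} = F(F^m u) = F(x_m)$, and apply the definition of continuity of $F$ at $u^*$: since $x_m \to u^*$ and each $x_m$ lies in $S$ (because $F$ is a self-map on $S$), we have $F(x_m) \to F(u^*)$. Combining the two facts, $F(u^*)$ and $u^*$ are both limits of $(x_{m+1})$, and by uniqueness of limits in $\RR^d$ (already noted in the excerpt just before Proposition~\ref{p:rseqp}) we conclude $F u^* = u^*$.

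There is essentially no obstacle here; the only subtle point to verify carefully is that continuity of $F$ at $u^*$ may be invoked with the sequence $(x_m) \subset S$, which requires noting that $F^m u \in S$ for all $m$ (this is immediate from the self-map assumption by induction) and that $u^* \in S$ is itself required to make sense of ``continuity at $u^*$''—this is given in the hypothesis. So the proof reduces to these three short lines.
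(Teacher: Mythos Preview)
Your proof is correct and follows essentially the same approach as the paper's: set $x_m := F^m u$, use continuity at $u^*$ to get $F(x_m) \to F(u^*)$, observe that $(F(x_m)) = (x_{m+1})$ is just a shift of the original sequence and hence also converges to $u^*$, and conclude by uniqueness of limits. Your additional remarks on why $x_m \in S$ and $u^* \in S$ are needed are accurate but not essential to the argument.
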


\begin{proof}
    Assume the hypotheses of Lemma~\ref{l:clifp} and let $u_m := F^m u$ for
    all $m \in \NN$.  By continuity and $u_m \to u^*$ we have $F u_m \to F
    u^*$. But the sequence $(F u_m)$ is just $(u_m)$ with the first element
    omitted, so, given that $u_m \to u^*$, we must have $F u_m \to u^*$.
    Since limits are unique, it follows that $u^* = F u^*$.
\end{proof}

\subsection{Linear Algebra}\label{ss:fdvec} 

Next we review fundamental concepts and definitions from linear algebra.

\subsubsection{Subspaces and Independence}

A subset $E$ of $\RR^n$ is called a \navy{linear subspace}\index{Subspace} of
$\RR^n$ if 
\begin{equation*}
    x, y \in E \text{ and } \alpha, \beta \in \RR
    \; \implies \; \alpha x + \beta y \in E. 
\end{equation*}
In other words, $E$ is \navy{closed} under the operations of
addition and scalar multiplication; that is, (i) $\alpha \in \RR$ and $x \in
E$ implies $\alpha x \in E$ and (ii) $x, y \in E$ implies $x+y \in E$.

\begin{Exercise}\label{ex:whichli}
    Fix $c \in \RR^n$ and $C \in \matset{n}{k}$.  Show that
    \begin{itemize}
        \item $H := \setntn{x \in \RR^n}{\inner{c, x} = 0}$ and
        \item $\range C := \setntn{y \in \RR^n}{y = Cx \text{ for some } x \in \RR^k}$
    \end{itemize}
    are linear subspaces of $\RR^n$.  Show that $S := \setntn{x \in
    \RR^n}{\inner{c, x} \geq 0}$ is not.
\end{Exercise}

A \navy{linear combination}\index{Linear combination} of vectors $v_1,\ldots,
v_k$ in $\RR^n$ is a vector of the form 
\begin{equation*}
    \alpha_1 v_1 + \cdots +  \alpha_k v_k \text{ where } (\alpha_1,\ldots, \alpha_k) \in \RR^k.     
\end{equation*}
The set of all linear combinations of elements of $F \subset \RR^n$ is called
the \navy{span}\index{Span} of $F$ and written as $\Span F$.  

\begin{example}\label{eg:colspace}
    The set $\range C$ in Exercise~\ref{ex:whichli} is the span of the columns of the
    matrix $C$, viewed as a set of vectors in $\RR^n$.   The set $\range C$ is
    also called the \navy{column space}\index{Column space} of $C$.
\end{example}

\begin{Exercise}
    Let $F$ be a nonempty subset of $\RR^n$.  Prove that
    \begin{enumerate}
        \item $\Span F$ is a linear subspace of $\RR^n$ and
        \item $\Span F$ is the intersection of all linear subspaces $S \subset
            \RR^n$ with $S \supset F$.
    \end{enumerate}
\end{Exercise}

Figure~\ref{f:span1} shows the linear subspace spanned by the three vectors
\begin{equation}\label{eq:uvw}
    u =
    \begin{pmatrix}
        3 \\
        4 \\
        1
    \end{pmatrix},
    \quad
    v =
    \begin{pmatrix}
        3 \\
        -4 \\
       0.2 
    \end{pmatrix},
    \quad
    \text{and }
    w =
    \begin{pmatrix}
        -3.5 \\
        3 \\
       -0.4 
    \end{pmatrix}
    .
\end{equation}
The subspace $H$ in which these vectors lie is, in fact, the set of all $x \in \RR^3$ such
that $\inner{x, c} = 0$, with $c=(-0.2, -0.1, 1)$.  This plane is a two-dimensional object.  While we make this terminology precise in
\S\ref{sss:bvd}, the key idea is that 
\begin{itemize}
    \item at least two vectors are required to span $H$ and 
    \item any additional vectors will not increase the span.
\end{itemize}

\begin{figure}
   \centering
   \scalebox{0.65}{\includegraphics[trim = 35mm 35mm 0mm 35mm, clip]{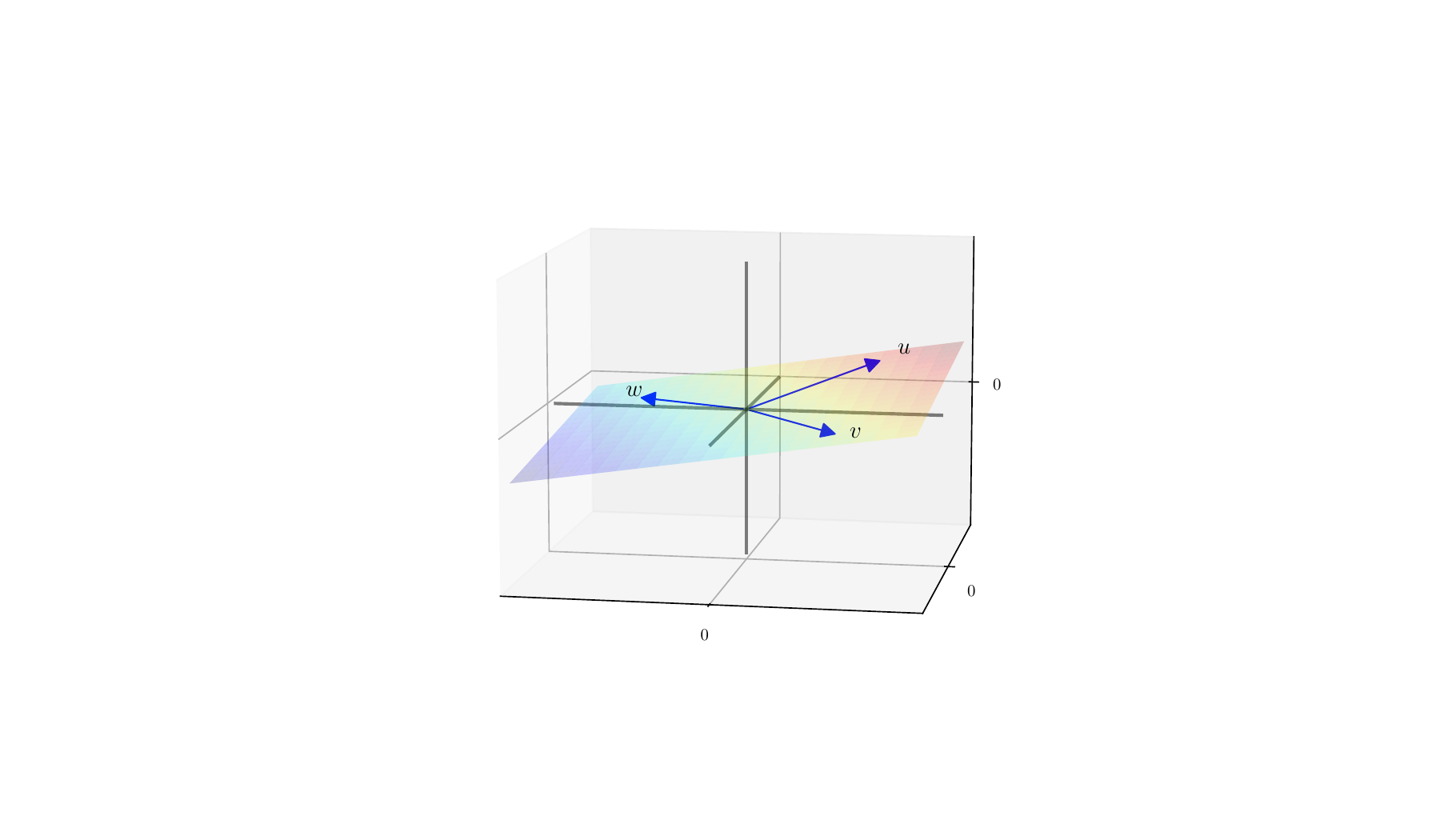}}
   \caption{\label{f:span1} The span of vectors $u, v, w$ in $\RR^3$}
\end{figure}

A finite set of vectors $F := \{v_1, \ldots, v_k\} \subset \RR^n$ is called \navy{linearly
independent}\index{Linearly independent} if, for real scalars $\alpha_1,
\ldots, \alpha_k$,
\begin{equation*}
    \alpha_1 v_1
    + \cdots +
    \alpha_k v_k
     = 0 
    \;  \implies  \;
    \alpha_1 = \cdots = \alpha_k = 0.
\end{equation*}
If $F$ is not linearly independent it is called \navy{linearly dependent}.  

\begin{Exercise}
    Show that $F$ is linearly dependent if and only if there exists a vector
    in $F$ that can be written as a linear combination of other vectors in
    $F$.
\end{Exercise}

\begin{Exercise}
    Prove the following:
    \begin{enumerate}
        \item Every subset of a linearly independent set in $\RR^n$ is linearly
            independent.\footnote{By the law of the excluded middle, the empty
            set must be linearly independent too.}
        \item Every finite superset of a linearly dependent set in $\RR^n$ is
            linearly dependent.
    \end{enumerate}
\end{Exercise}

\begin{Answer}
    Regarding the first claim, let $E = \{u_1, \ldots, u_k\}$ be linearly
    independent.  Suppose that $\{u_1, \ldots, u_m\}$ is linearly dependent for
    some $m < k$.  Then we can find a nonzero vector $(\alpha_1, \ldots,
    \alpha_m)$ such that $\sum_{i=1}^m \alpha_i u_i = 0$.  Setting $\alpha_i = 0$
    for $i$ in $\{m+1, \ldots, k\}$ yields $\sum_{i=1}^k \alpha_i u_i = 0$,
    contradicting independence of $E$.

    Regarding the second claim, suppose that $E \subset F$ and $E$ is linearly
    dependent.  If $F$ is linearly independent, then we have a violation of the
    first claim.  Hence $F$ is linearly dependent.
\end{Answer}

\begin{example}
    It is easy to check that the set $E := \{\delta_1, \ldots, \delta_n\}
    \subset \RR^n$ defined by
    \begin{equation*}
        \delta_1 := 
        \left(
        \begin{array}{c}
            1 \\
            0 \\
            \vdots \\
            0
        \end{array}
        \right),
        \quad 
        \delta_2 := 
        \left(
        \begin{array}{c}
            0 \\
            1 \\
            \vdots \\
            0
        \end{array}
        \right),
        \; 
        \cdots,
        \;
        \delta_n := 
        \left(
        \begin{array}{c}
            0 \\
            0 \\
            \vdots \\
            1
        \end{array}
        \right)
    \end{equation*}
    is linearly independent.  Its elements are called the \navy{canonical
    basis vectors}\index{Canonical basis vectors} of $\RR^n$.
\end{example}

The span of the canonical basis vectors in $\RR^n$ is
equal to all of $\RR^n$.  In particular, each $x \in \RR^n$ can be expressed
as 
\begin{equation}\label{eq:repcb}
    x = \sum_{i=1}^n \alpha_i  \delta_i
    \quad \text{where }
    \alpha_i := \inner{x, \delta_i}.
\end{equation}
On the other hand, we cannot omit an element of the basis $\{\delta_1,
\ldots, \delta_n\}$ 
and still span $\RR^n$.  The next theorem generalizes this idea.  

\begin{theorem}\label{t:indepspan}
    If $E = \{u_1, \ldots, u_n\}$ is a set of $n$ vectors in $\RR^n$, then
    $\Span E = \RR^n$ if and only if $E$ is linearly independent.
\end{theorem}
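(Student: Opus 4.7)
The plan is to prove both directions using a single auxiliary fact: any collection of $n+1$ vectors in $\RR^n$ is linearly dependent. Once this is in hand, both implications fall out with short arguments.

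First I would establish the auxiliary fact. Given $w_1, \ldots, w_{n+1}$ in $\RR^n$, the vector equation $\sum_{j=1}^{n+1} \alpha_j w_j = 0$ is a homogeneous system of $n$ scalar equations in $n+1$ unknowns. By standard results for such underdetermined systems (Gaussian elimination, or Theorem~\ref{t:nncase} referenced in the excerpt), it admits a nontrivial solution, so $\{w_1,\ldots,w_{n+1}\}$ is linearly dependent. I would probably record this as a small lemma before turning to the main statement.

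For the forward direction, suppose $E$ is linearly independent. Fix arbitrary $x \in \RR^n$. The collection $\{x, u_1, \ldots, u_n\}$ has $n+1$ elements, so by the auxiliary fact there exist scalars $(\alpha_0, \alpha_1, \ldots, \alpha_n)$, not all zero, with $\alpha_0 x + \sum_{i=1}^n \alpha_i u_i = 0$. If $\alpha_0 = 0$, then $\sum_i \alpha_i u_i = 0$ with some $\alpha_i$ nonzero, contradicting independence of $E$. Hence $\alpha_0 \neq 0$ and $x = -\sum_i (\alpha_i/\alpha_0) u_i \in \Span E$. Since $x$ was arbitrary, $\Span E = \RR^n$.

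For the reverse direction I would use the contrapositive. Suppose $E$ is linearly dependent; then (by a previous exercise) some $u_k$ is a linear combination of the remaining $u_i$, so $\Span E = \Span(E \setminus \{u_k\})$, a span of $n-1$ vectors. It then suffices to show that $n-1$ vectors cannot span $\RR^n$. Assume for contradiction that $\{v_1, \ldots, v_{n-1}\}$ spans $\RR^n$. Then each canonical basis vector admits a representation $\delta_j = \sum_{i=1}^{n-1} c_{ij} v_i$. Form the $n$ coefficient vectors $c_j := (c_{1j}, \ldots, c_{n-1,j}) \in \RR^{n-1}$. Applying the auxiliary fact with $n-1$ in place of $n$ (noting $n > n-1$), the vectors $c_1, \ldots, c_n$ are linearly dependent in $\RR^{n-1}$, so there exist $(\beta_1,\ldots,\beta_n) \neq 0$ with $\sum_j \beta_j c_j = 0$. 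Swapping the order of summation gives $\sum_j \beta_j \delta_j = \sum_i \bigl(\sum_j \beta_j c_{ij}\bigr) v_i = 0$, contradicting linear independence of the canonical basis.

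The main obstacle is really the auxiliary fact; everything else is bookkeeping. If the referenced result on underdetermined systems is not accessible at this point in the text, I would fall back on induction on $n$, where the inductive step uses the first nonzero coordinate of one of the vectors to perform an elimination that reduces to an $(n-1)$-dimensional problem. A secondary concern is that the downward induction inside the reverse direction also invokes the auxiliary fact at dimension $n-1$, so the lemma needs to be established for every dimension, not just dimension $n$; stating it that way from the start avoids circularity.
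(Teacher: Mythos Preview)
The paper does not actually prove this theorem; it simply defers to an external reference (``See, for example, \cite{jan}''). Your argument, by contrast, is self-contained and correct: the auxiliary lemma that more than $m$ vectors in $\RR^m$ must be dependent does all the work, and both implications follow cleanly from it as you describe. This is the standard elementary route and would serve well in place of the paper's citation.

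One small correction: Theorem~\ref{t:nncase} concerns square systems only, so it does not directly deliver the nontrivial solution to an $n \times (n+1)$ homogeneous system that your auxiliary lemma needs. Your parenthetical appeal to Gaussian elimination is the right justification, and your proposed inductive fallback (eliminate one coordinate to drop the dimension) is exactly how one makes that rigorous without importing later machinery. You are also right to flag that the lemma must be stated for every ambient dimension $m$, not just the fixed $n$ of the theorem, since the reverse direction invokes it at dimension $n-1$; stating it once for arbitrary $m$ and proving it by induction on $m$ resolves both uses simultaneously and avoids any circularity with the dimension theory that appears \emph{after} this theorem in the text.
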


See, for example, \cite{jan}.  Theorem~\ref{t:indepspan} captures the notion that linear
independence of a set of vectors means linear diversity, which allows the span
to be large.

\begin{Exercise}\label{ex:nullspace}
    The \navy{null space}\index{Null space} or \navy{kernel}\index{Kernel} of
    a matrix $A \in \matset{n}{k}$ is the set
    \begin{equation*}
        \kernel A := \setntn{x \in \RR^k}{Ax = 0}.
    \end{equation*}
    Show that (i) $\kernel A$ is a linear subspace of $\RR^k$ and (ii)
    $\kernel A = \{0\}$, where $0$ is the origin in $\RR^k$, if and only if
    the columns of $A$ form a linearly independent subset of $\RR^n$.
\end{Exercise}

\begin{Answer}
    Regarding (i), if $Ax = Ay = 0$, then $A(\alpha x + \beta y) = \alpha Ax +
    \beta Ay = 0$, so $\kernel A$ is a linear subspace.  
    Regarding (ii), suppose that $\kernel A = \{0\}$.  This means that the only $x
    \in \RR^k$ satisfying $Ax = 0$ is the zero vector, which is equivalent to
    linear independence of the columns of $A$.
\end{Answer}

\subsubsection{Basis Vectors and Dimension}\label{sss:bvd}

Let $V$ be a linear subspace of $\RR^n$.  A set $B \subset V$ is called a
\navy{basis}\index{Basis} for $V$ if $B$ is linearly independent and 
$\Span B = V$.

The key idea behind a basis is this: If $S = \{v_1, \ldots, v_k\}$ spans some linear subspace $V$,
then each element of $V$ can be written as a linear combination of elements of
$S$.  If $S$ is, in addition, linearly independent (and hence a basis), then 
this representation is \emph{unique}: for each $u \in V$, there is exactly one
$(\alpha_1, \ldots, \alpha_k) \in \RR^k$ such that
\begin{equation*}
    u = \alpha_1 v_1 + \cdots + \alpha_k v_k.
\end{equation*}
Indeed, if $u = \beta_1 v_1 + \cdots + \beta_k v_k$ is another representation,
then, subtracting this equality from the last, we have
\begin{equation*}
    (\alpha_1 - \beta_1) v_1 + \cdots + (\alpha_k - \beta_k) v_k = 0.
\end{equation*}
Because $S$ is assumed to be linearly independent, this yields $\alpha_i =
\beta_i$ for all $i \in \natset{k}$.

Not surprisingly, given their name, the canonical basis vectors $E := \{\delta_1,
\ldots, \delta_n\}$ serve as a basis for the whole space $\RR^n$. The
representation \eqref{eq:repcb}, with coefficients $\alpha_i = \inner{x,
\delta_i}$,
is unique.

\begin{theorem}\label{t:eqbase}
    If $V$ is a nonzero linear subspace of $\RR^n$, then
    \begin{enumerate}
        \item $V$ has at least one basis and
        \item every basis of $V$ has the same number of elements.
    \end{enumerate}
\end{theorem}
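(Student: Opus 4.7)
The plan is to tackle the two parts separately, with part (i) via an incremental construction and part (ii) via a Steinitz-style exchange argument.

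For part (i), I would construct a basis iteratively. Since $V \neq \{0\}$, pick any nonzero $v_1 \in V$; the singleton $\{v_1\}$ is linearly independent. Inductively, given a linearly independent set $\{v_1,\ldots,v_k\} \subset V$, if $\Span\{v_1,\ldots,v_k\} = V$ I stop and declare this set a basis. Otherwise, choose any $v_{k+1} \in V \setminus \Span\{v_1,\ldots,v_k\}$; a standard argument (splitting on whether $\alpha_{k+1} = 0$ in $\sum \alpha_i v_i = 0$) shows $\{v_1,\ldots,v_{k+1}\}$ remains linearly independent. The crucial step is to argue termination: no linearly independent subset of $\RR^n$ can have more than $n$ elements. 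This would follow from Theorem~\ref{t:indepspan}: if $\{v_1,\ldots,v_{n+1}\}$ were linearly independent, then by part of its proof (or by applying Theorem~\ref{t:indepspan} to the first $n$ vectors after possibly padding or invoking the span $= \RR^n$ direction) we get a contradiction, since $v_{n+1}$ would have to be a linear combination of $\{v_1,\ldots,v_n\}$.

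For part (ii), I would show that if $B_1 = \{u_1,\ldots,u_m\}$ and $B_2 = \{w_1,\ldots,w_k\}$ are both bases of $V$, then $m = k$. By symmetry, it suffices to rule out $m < k$. The approach is an exchange argument: I claim that for each $j \leq m$, after relabeling the $u_i$'s if needed, the set $\{w_1,\ldots,w_j, u_{j+1},\ldots,u_m\}$ still spans $V$. Inductively, assume this spans $V$. Then $w_{j+1}$ is a linear combination $w_{j+1} = \sum_{i \leq j} \beta_i w_i + \sum_{i > j} \gamma_i u_i$. Not all $\gamma_i$ can vanish—otherwise $\{w_1,\ldots,w_{j+1}\}$ would be linearly dependent, contradicting that $B_2$ is independent. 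Relabel so $\gamma_{j+1} \neq 0$ and solve for $u_{j+1}$ in terms of the other vectors plus $w_{j+1}$; this shows the swapped set still spans $V$. Iterating to $j = m$ yields that $\{w_1,\ldots,w_m\}$ spans $V$. But then $w_{m+1} \in \Span\{w_1,\ldots,w_m\}$, contradicting linear independence of $B_2$.

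The main obstacle is the exchange step: verifying carefully that at each stage some $\gamma_i$ with $i > j$ is nonzero (so that we can actually swap in a $u$ rather than rewriting $w_{j+1}$ purely in terms of earlier $w$'s). A secondary subtlety is invoking Theorem~\ref{t:indepspan} cleanly for the termination of part (i); that theorem is stated for exactly $n$ vectors in $\RR^n$, so I would either extract a short lemma from it (any $n{+}1$ vectors in $\RR^n$ are dependent) or, more directly, deduce the bound on independent sets via the exchange argument from part (ii) applied against the canonical basis $\{\delta_1,\ldots,\delta_n\}$ of $\RR^n$—though this requires being careful about the logical order of the two parts.
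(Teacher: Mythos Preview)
Your argument is correct and is essentially the standard textbook proof (incremental extension for existence, Steinitz exchange for invariance of cardinality). Note, however, that the paper does not actually prove this theorem: immediately after stating it, the text says ``Theorem~\ref{t:eqbase} is a relatively deep result. See, for example, \cite{jan}'' and moves on to define dimension. So there is no in-paper proof to compare against; your proposal simply supplies what the paper outsources to a reference, and the subtleties you flag (termination via a bound on independent sets in $\RR^n$, and the nonvanishing of some $\gamma_i$ in the exchange step) are exactly the points a careful write-up would need to address.
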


Theorem~\ref{t:eqbase} is a relatively deep result.  See, for example,
\cite{jan}.  The common number of bases in (ii) is called the \navy{dimension}
of $V$ and written as $\dim V$.  

The ``nonzero'' qualification in Theorem~\ref{t:eqbase} is included for the
case $V = \{0\}$, which is a linear subspace that lacks any basis.  It is
sensible, and standard, to agree that the linear subspace $\{0\} \subset
\RR^n$ has dimension zero.

Dimensionality is one measure of the ``size'' of a linear subspace.  To
illustrate, consider the system $Ax = b$ where $A \in \matset{n}{k}$, $x \in
\RR^k$ and $b \in \RR^n$.  Is there an $x$ that solves this system?  This will
be more likely if the column space of $A$ is large (see Example~\ref{eg:colspace}).  A
large span will be obtained when the columns are linearly ``diverse.'' In
other words, our hope is that there exists a large subset of the columns of
$A$ that is linearly independent, which in turn will be true when the span of
$A$ is high-dimensional.

To quantify these ideas, we define the \navy{rank}\index{Rank} of $A$ 
as
\begin{equation*}
    \rank A 
    := \dim(\range A)
    = \text{ dimension of the column space of } A.
\end{equation*}

\begin{theorem}\label{t:rank}
    For any matrix $A$, the following quantities are equal:
    \begin{enumerate}
        \item $\rank A$
        \item the maximal number of linearly independent columns of $A$, and
        \item the maximal number of linearly independent rows of $A$.
    \end{enumerate}
\end{theorem}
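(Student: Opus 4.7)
My plan is to establish the equalities in two stages: first the easier equivalence (i) $\iff$ (ii), then the deeper equality (ii) = (iii).

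For (i) $\iff$ (ii), I would argue that a maximal linearly independent subset of the columns of $A$ forms a basis for the column space $\range A$. Let the columns be $a_1, \ldots, a_k$ and let $\{a_{j_1}, \ldots, a_{j_r}\}$ be a maximal linearly independent subset. Maximality forces every remaining column $a_j$ to be a linear combination of $\{a_{j_1}, \ldots, a_{j_r}\}$, for otherwise adjoining $a_j$ would yield a larger linearly independent set. Hence $\{a_{j_1}, \ldots, a_{j_r}\}$ spans $\range A$ and, being linearly independent, is a basis, giving $\dim \range A = r$. This uses only definitions plus the span/independence facts from the excerpt.

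For (ii) = (iii), the strategy is a \emph{rank factorization}. Let $r$ denote the column rank of $A \in \matset{n}{k}$, and let $c_1, \ldots, c_r$ be columns of $A$ forming a basis of $\range A$. Form $C \in \matset{n}{r}$ whose columns are these vectors. Since every column $a_j$ of $A$ lies in $\Span\{c_1, \ldots, c_r\}$, the uniqueness of basis representations (noted right after Theorem~\ref{t:eqbase}) produces unique scalars $R_{ij}$ with $a_j = \sum_{i=1}^r R_{ij} c_i$. Assembling these into $R \in \matset{r}{k}$ yields the factorization $A = CR$.

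Reading $A = CR$ row by row shows that the $i$-th row of $A$ equals $\sum_{\ell=1}^r C_{i\ell} R_\ell$, where $R_\ell$ denotes the $\ell$-th row of $R$. Thus every row of $A$ lies in the span of the $r$ rows of $R$, so the row space of $A$ has dimension at most $r$, and its maximal linearly independent subset of rows has size at most $r$. That is, the row rank of $A$ is at most the column rank of $A$. Applying exactly the same argument to $A^\top$ (whose rows are the columns of $A$ and vice versa) reverses the inequality, giving equality of (ii) and (iii).

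The main obstacle is the factorization step: one must verify that the scalars $R_{ij}$ are well-defined (requiring linear independence of $\{c_1,\ldots,c_r\}$ for uniqueness and the basis/span property for existence), and then carefully extract from $A = CR$ that rows of $A$ are linear combinations of rows of $R$. Once this is in place, the symmetry argument via $A^\top$ is a one-line observation, and the two-sided inequality closes the proof.
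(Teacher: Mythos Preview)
Your proof is correct. The paper itself does not prove this theorem at all; it simply writes ``See, for example, Chapter~2 of \cite{aggarwal2020linear} or Chapter~5 of \cite{jan}'' and moves on. So there is no approach to compare against, and your rank-factorization argument (extract a column basis to write $A = CR$, observe that rows of $A$ lie in the row space of $R$, then apply the same bound to $A^\top$) is a complete, self-contained proof that the paper omits. The argument is standard and sound; the only minor point is that in showing (i) $=$ (ii) you are implicitly using the fact that all bases of $\range A$ have the same size (Theorem~\ref{t:eqbase}), which is already available in the text.
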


See, for example, Chapter~2 of \cite{aggarwal2020linear} or Chapter~5 of \cite{jan}.  In view of
Exercise~\ref{ex:mtbili}, for $A \in \matset{n}{k}$, we have $\rank A \leq k$.
If $\rank A = k$, then $A$ is said to have \navy{full column rank}.

In general, a linear subspace $V$ contains within itself many other linear
subspaces.  (For example, a plane passing through the origin contains many
lines that pass through the origin, each one of which is a linear subspace.)
However, there is no proper subspace of $V$ (i.e., no linear subspace
contained in and distinct from $V$) with the same dimension as $V$.  The next
theorem records this fact.

\begin{theorem}\label{t:dimorder}
    If $U, V$ are linear subspaces of $\RR^n$, then $U \subset V$ implies 
       $\dim U \leq \dim V$ with equality if and only if $U = V$. 
\end{theorem}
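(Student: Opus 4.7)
The plan is to reduce the theorem to two standard facts about linearly independent subsets of a finite-dimensional space: every such set has cardinality bounded by the dimension, and any such set attaining the dimension must be a basis. Both facts sit just beneath Theorem~\ref{t:eqbase}, and once they are in hand the statement follows by choosing a basis of $U$ and viewing it as a linearly independent subset of $V$.

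First I would dispose of the degenerate case $U=\{0\}$, for which $\dim U = 0 \leq \dim V$ trivially, and equality holds iff $V=\{0\}=U$. So assume $U$ is nonzero. By Theorem~\ref{t:eqbase}, $U$ has a basis $B_U = \{u_1,\ldots,u_k\}$ with $k = \dim U$. Because $U \subset V$, the set $B_U$ is a linearly independent subset of $V$. The inequality $\dim U \leq \dim V$ will then follow the moment I can assert that any linearly independent subset of a finite-dimensional space has at most $\dim V$ elements.

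For the equality case, suppose $\dim U = \dim V =: d$ and let $B_U = \{u_1,\ldots,u_d\}$ be a basis of $U$. Viewed in $V$, this is a linearly independent set of size $d = \dim V$. I would argue that such a set must already span $V$: if $\Span B_U \subsetneq V$, pick $w \in V \setminus \Span B_U$; then $\{u_1,\ldots,u_d,w\}$ is linearly independent in $V$ (any nontrivial dependence would express $w$ as a linear combination of the $u_i$'s, contradicting $w\notin \Span B_U$), which yields a linearly independent set of size $d+1$ in $V$, contradicting the bound from the previous paragraph. Hence $\Span B_U = V$, which combined with $\Span B_U = U$ gives $U = V$. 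The converse direction, $U = V \Rightarrow \dim U = \dim V$, is immediate from Theorem~\ref{t:eqbase}.

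The main obstacle is the lemma lurking in the background: a linearly independent subset of a $d$-dimensional subspace of $\RR^n$ has at most $d$ elements. The excerpt states Theorem~\ref{t:eqbase} (bases exist and all bases have the same cardinality) and Theorem~\ref{t:indepspan} (in $\RR^n$ itself, $n$ vectors span iff they are linearly independent), but the general comparison between independent sets and spanning sets is not explicitly isolated. The standard route is the Steinitz exchange lemma: given a spanning set $\{v_1,\ldots,v_d\}$ of $V$ and a linearly independent set $\{u_1,\ldots,u_m\}$ in $V$, one swaps the $u_i$'s into the spanning list one at a time, preserving the spanning property, and concludes $m \leq d$. I would either prove this exchange step by induction on $i$ (writing each $u_{i+1}$ as a combination of the current spanning set, isolating one $v_j$ with nonzero coefficient, and replacing $v_j$ by $u_{i+1}$) or cite it from a standard reference such as \cite{jan}. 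Once this lemma is established, the two paragraphs above complete the proof.
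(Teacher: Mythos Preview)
The paper does not actually supply a proof of Theorem~\ref{t:dimorder}; it is stated as a background fact in the appendix and then used immediately in Exercise~\ref{ex:mtbili}. So there is no ``paper's own proof'' to compare against.

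That said, your argument is correct and is exactly the standard route. You are right to flag that the key ingredient---any linearly independent subset of a $d$-dimensional space has at most $d$ elements---is not explicitly available from Theorems~\ref{t:eqbase} and~\ref{t:indepspan} as stated, and that the Steinitz exchange lemma is the clean way to supply it. Your treatment of the equality case (augmenting $B_U$ by some $w \in V \setminus \Span B_U$ to manufacture a too-large independent set) is also the usual argument and is sound. Since the paper simply cites such results out to \cite{jan}, citing the exchange lemma rather than reproving it would be consistent with the surrounding exposition.
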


\begin{Exercise}\label{ex:mtbili}
    Let $U$ be a linear subspace of $\RR^n$.
    Using the results given above, prove the following statements:
    \begin{enumerate}
        \item The only $n$-dimensional linear subspace of $\RR^n$ is $\RR^n$.
        \item If $A \subset U$ is finite and $|A| > \dim U$, then $A$ is linearly dependent.
    \end{enumerate}
\end{Exercise}

\begin{Answer}
    For part (i) just set $V = \RR^n$ in Theorem~\ref{t:dimorder}.
    Regarding part (ii), let $A$ and $U$ be as stated, with $|A| =: m > \dim U$.
    Suppose to the contrary that $A$ is linearly independent.  Then $A$ is a basis
    for $\Span A$ and, therefore, $\dim \Span A = m > \dim U$.  At the same time,
    since $A \subset U$ and $U$ is a linear subspace, we have $\Span A \subset U$.
    Hence, by Theorem~\ref{t:dimorder}, we have $m \leq \dim U$.  Contradiction.
\end{Answer}

Part (ii) is related to Figure~\ref{f:span1}.  The plane in that figure is 2-dimensional, 
as we confirm in \S\ref{ss:lmm} below. Any three vectors
lying in the plane are linearly dependent.

\subsubsection{Linear Maps Are Matrices}\label{ss:lmm}

We will see in the following chapters that many nonlinear dynamic models and
estimation problems can be expressed in terms of linear operations in
high-dimensional spaces.   We now state the definition of linear maps and
their connection to matrices.

A function $u \mapsto A u$ from $\RR^k$ to $\RR^n$ is called
\navy{linear}\index{Linear} if
\begin{equation*}
    A(\alpha u + \beta v)
    = \alpha A u + \beta A v
    \quad
    \text{ for all $\alpha, \beta \in \RR$ and all $u, v$ in $\RR^k$}.
\end{equation*}
In this context, $A$ is sometimes called an \navy{operator} rather than a
function, but the meaning is the same.

\begin{Exercise}
    Fix $s, t \in \RR$.  Show that $A \colon \RR \to \RR$ defined by $A u = s
    u + t$ is a linear function on $\RR$ if and only if $t=0$.
\end{Exercise}

\begin{Exercise}\label{ex:ilfil}
    Let $A \colon \RR^n \to \RR^n$ be a linear bijection.  By
    Lemma~\ref{l:ibj}, the operator $A$ has an inverse $A^{-1}$ mapping
    $\RR^n$ to itself.  Prove that $A^{-1}$ is linear.
\end{Exercise}

\begin{Answer}
    Here's a proof by contradiction:    Let $A$ be as stated. 
    Suppose to the contrary that $A^{-1}$ fails to be linear.   Then we can
    find $\alpha, \beta \in \RR$ and $x,y\in \RR^n$ such that $A^{-1} (\alpha
    x + \beta y)$ and $\alpha A^{-1}x + \beta A^{-1}y$ are distinct points.
    Since $A$ is a bijection, their images under $A$ are also distinct, so 
    \begin{equation*}
        \alpha x + \beta y \not= A (\alpha A^{-1}x + \beta A^{-1}y).
    \end{equation*}
    Linearity of $A$ leads to a contradiction.
\end{Answer}

One of the most striking and useful properties of linear operators is that the
one-to-one property and the onto property are equivalent when the domain and
codomain agree:

\begin{theorem}\label{t:ooo}
    Let $A \colon \RR^n \to \RR^n$ be linear.  The operator $A$ is a
    one-to-one function if and only if it is onto.
\end{theorem}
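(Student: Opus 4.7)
The plan is to chain together a sequence of equivalences, most of which are already established in the excerpt, so that both conditions (one-to-one and onto) reduce to the same statement about the rank (or equivalently, the dimension of $\range A$).

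First I would reduce the one-to-one condition to a statement about the kernel. Because $A$ is linear, $Ax = Ay$ is equivalent to $A(x - y) = 0$, i.e., $x - y \in \kernel A$. Thus $A$ is one-to-one if and only if $\kernel A = \{0\}$. By Exercise~\ref{ex:nullspace}(ii), this in turn is equivalent to the columns of $A$ being linearly independent. Since $A$ has $n$ columns sitting in $\RR^n$, Theorem~\ref{t:rank} then tells me this is equivalent to $\rank A = n$.

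Next I would reduce the onto condition to the same rank statement. By Exercise~\ref{ex:whichli}, $\range A$ is a linear subspace of $\RR^n$, and by definition $\rank A = \dim(\range A)$. Theorem~\ref{t:dimorder} (applied with $U = \range A \subset V = \RR^n$), together with Exercise~\ref{ex:mtbili}(i), gives that $\range A = \RR^n$ if and only if $\dim(\range A) = n$, i.e., $\rank A = n$. But $\range A = \RR^n$ is precisely the statement that $A$ is onto. Combining the two chains,
\begin{equation*}
A \text{ one-to-one}
\;\iff\; \kernel A = \{0\}
\;\iff\; \rank A = n
\;\iff\; \range A = \RR^n
\;\iff\; A \text{ onto}.
\end{equation*}

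The main obstacle, such as it is, is making sure each link in the chain uses only material already developed in the excerpt and that the finite-dimensionality of $\RR^n$ (with both domain and codomain of the same dimension) is used in the critical step, namely the passage from $\rank A = n$ to $\range A = \RR^n$ via Exercise~\ref{ex:mtbili}(i). Nothing here is computational; the proof is essentially bookkeeping of dimensions. If the chain above is viewed as too terse, I would expand the kernel step by verifying explicitly from linearity that $A(x) = A(y) \iff A(x-y) = 0$ and note that the columns of $A$ being linearly independent is the same as saying $Ac = 0$ forces $c = 0$, which is $\kernel A = \{0\}$.
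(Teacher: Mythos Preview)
Your proof is correct and self-contained within the material that precedes Theorem~\ref{t:ooo} in the appendix. The paper itself does not prove this theorem, instead referring the reader to \cite{jan} and \cite{stachurski2016primer}, so there is no in-text argument to compare against; your rank-based chain of equivalences is the standard approach and would serve perfectly well as the omitted proof.
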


The proof can be found in \cite{jan} or \cite{stachurski2016primer}.
Figure~\ref{f:func_types_3} illustrates in the one-dimensional case.  The
linear map $f(x) = \alpha x$ is onto if and only if it is one-to-one, which
occurs if and only if $\alpha \not= 0$.

\begin{figure}
   \centering
   \scalebox{0.6}{\includegraphics{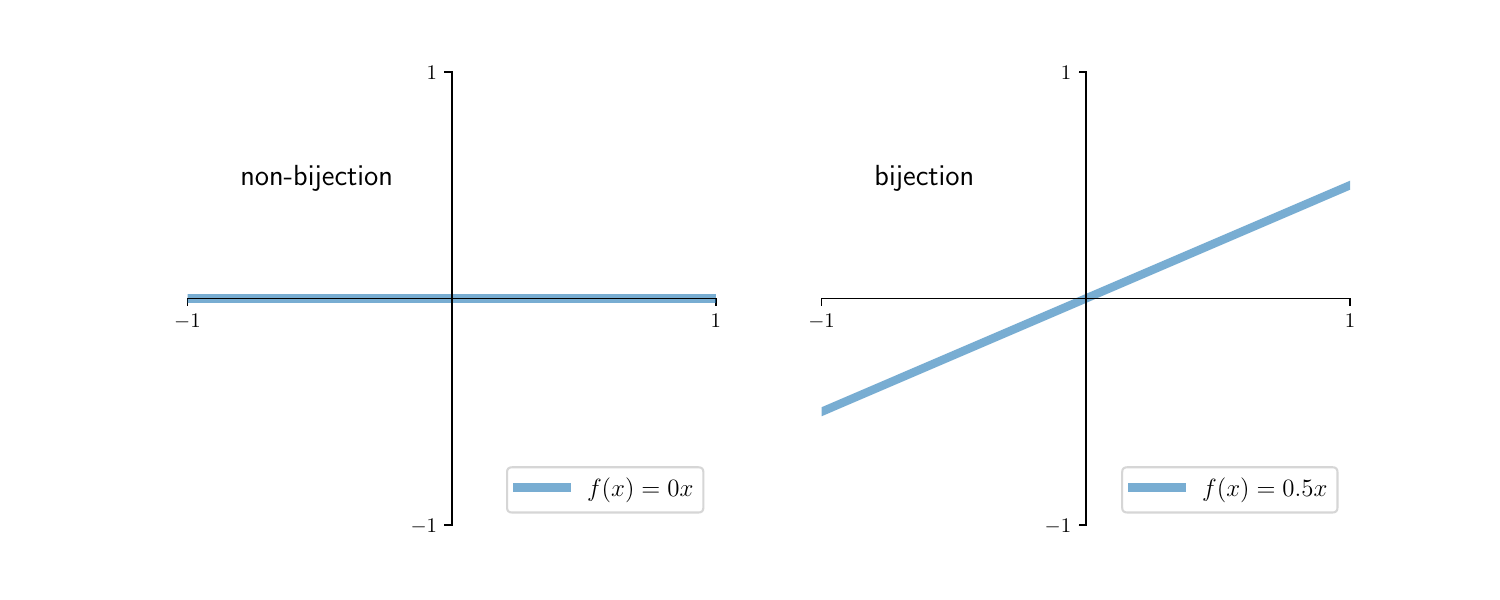}}
   \caption{\label{f:func_types_3} Equivalence of the onto and one-to-one properties}
\end{figure}

A matrix $A \in \matset{n}{k}$ is a linear function from $\RR^k$ to $\RR^n$
when identified with the mapping that sends a (column) vector $u$ in $\RR^k$
into the vector $Au$ in $\RR^n$.  In fact it is fundamental that, for
\emph{every} linear map $A \colon \RR^k \to \RR^n$, there exists a unique $M_A
\in \matset{n}{k}$ such that 
\begin{equation}
    \label{eq:matrep}
    A u = M_A u \quad \text{for all} \quad u \in \RR^k
\end{equation}
(see, e.g., \cite{kreyszig1978introductory}, \S2.9).  Thus, the set of linear
maps and the set of matrices are in one-to-one correspondence in the finite
dimensional setting.

\subsubsection{Linear Hyperplanes}\label{sss:linhy}

In Exercise~\ref{ex:whichli}, you confirmed that, for a given nonzero $c \in
\RR^n$, the subset of $\RR^n$ defined by $H := \setntn{x \in \RR^n}{\inner{c, x} = 0}$ is a linear
subspace of $\RR^n$.  Any set $H$ of this form is called a \navy{linear
hyperplane}\index{Linear hyperplane} in $\RR^n$.  The vector $c$ is called the
\navy{normal vector}\index{Normal vector} of the hyperplane.

\begin{theorem}[Rank-Nullity Theorem]\label{t:rnt}
    For each $A \in \matset{n}{k}$, we have
    \begin{equation*}
        \rank A + \dim (\kernel A) = k.
    \end{equation*}
\end{theorem}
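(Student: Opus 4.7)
The plan is to prove the rank-nullity identity by constructing an explicit basis for $\range A$ from a basis for $\kernel A$ extended to a basis for $\RR^k$. Concretely, I would let $d := \dim(\kernel A)$ and choose a basis $\{v_1, \ldots, v_d\}$ for $\kernel A$ (using Theorem~\ref{t:eqbase}, after handling the trivial case $d=0$ separately). I would then invoke the standard basis-extension fact---that any linearly independent subset of a finite-dimensional subspace can be extended to a basis of the ambient space---to obtain vectors $w_1, \ldots, w_{k-d}$ such that $B := \{v_1, \ldots, v_d, w_1, \ldots, w_{k-d}\}$ is a basis for $\RR^k$. The target is then to show that $W := \{A w_1, \ldots, A w_{k-d}\}$ is a basis for $\range A$, which would give $\rank A = k - d$ and hence the identity.

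First I would show $\Span W = \range A$. Any $y \in \range A$ has the form $y = Ax$ for some $x \in \RR^k$. Expanding $x = \sum_{i=1}^{d} \alpha_i v_i + \sum_{j=1}^{k-d} \beta_j w_j$ in the basis $B$ and applying linearity of $A$ (from \S\ref{ss:lmm}) together with $A v_i = 0$, I get $y = \sum_j \beta_j A w_j \in \Span W$. The reverse inclusion $\Span W \subset \range A$ is immediate since each $A w_j$ lies in $\range A$ and $\range A$ is a linear subspace (Exercise~\ref{ex:whichli}).

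Next I would verify that $W$ is linearly independent. Suppose $\sum_{j=1}^{k-d} \beta_j A w_j = 0$. By linearity, $A\bigl(\sum_j \beta_j w_j\bigr) = 0$, so $\sum_j \beta_j w_j \in \kernel A$. Since $\{v_1, \ldots, v_d\}$ is a basis for $\kernel A$, we may write $\sum_j \beta_j w_j = \sum_i \alpha_i v_i$ for some scalars $\alpha_i$, or equivalently $\sum_i (-\alpha_i) v_i + \sum_j \beta_j w_j = 0$. Linear independence of $B$ then forces every $\beta_j$ (and every $\alpha_i$) to vanish. Combined with the spanning step, this makes $W$ a basis for $\range A$ of cardinality $k-d$, so by Theorem~\ref{t:eqbase} we conclude $\rank A = k - d$.

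The main obstacle is really just the basis-extension step: the rest is a clean linear-algebraic bookkeeping exercise. The excerpt states Theorem~\ref{t:eqbase} (existence and invariance of bases) but does not explicitly record the extension lemma. I would therefore either cite it from the linear-algebra references in the text (e.g., \cite{jan}) or sketch it quickly: starting from $\{v_1, \ldots, v_d\}$, repeatedly adjoin any vector not in the current span; since each addition strictly increases the dimension of the span (Theorem~\ref{t:dimorder}) and $\dim \RR^k = k$, the process terminates at a basis after exactly $k - d$ steps. Edge cases ($d = 0$ or $d = k$) are handled by the same argument with obvious modifications, and both reduce to trivial verifications using Exercise~\ref{ex:nullspace}.
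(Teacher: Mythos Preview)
Your proof is correct and is the standard basis-extension argument for rank-nullity. However, the paper does not actually prove Theorem~\ref{t:rnt}; it simply states the result without proof (presumably deferring to the cited linear algebra texts such as \cite{jan}) and then moves directly to an example. So there is nothing to compare against: you have supplied a complete proof where the paper gives none.
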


\begin{example}
    If $A$ has linearly independent columns, then
    $\rank A =k$ and $A$ is said to have full column rank.
    Recall from Exercise~\ref{ex:nullspace} that this is precisely the setting
    where $\kernel A = \{0\}$.  Hence $ \dim (\kernel A) = 0$, and
    Theorem~\ref{t:rnt} holds.
\end{example}

In the discussion after Figure~\ref{f:span1}, we claimed that the linear
hyperplane has shown there is a two-dimensional subset of $\RR^3$.  The next
example generalizes this idea.

\begin{example}
    A linear hyperplane $H = \setntn{x \in \RR^n}{\inner{x, c} =
    0}$ with $c \not= 0$ has dimension $n-1$. 
    To see this, just set $A$ in Theorem~\ref{t:rnt} to $c$, treated as a $1
    \times n$ row vector.  Then $H = \kernel A$, and
    \begin{equation*}
        \dim H = \dim \kernel A = n - \rank A = n - 1.
    \end{equation*}
    (Here $\rank A = 1$ follows from Theorem~\ref{t:rank}.)
\end{example}

\subsubsection{Nonsingular Linear Systems}\label{sss:sls}

A crucial task in applied modeling is solving linear systems such as
$A x = b$, where $A$ is a matrix and $x$ and $b$ are conformable column
vectors.  The system can be underdetermined, overdetermined or exactly
determined (i.e., same number of equations as unknowns).  In this section we
consider the last case, where the theory is straightforward.

Let $A$ and $B$ be in $\matset{n}{n}$ and suppose that $A B = B A = I$.
Then $B$ is called the \navy{inverse}\index{Inverse} of $A$, written as
$A^{-1}$, while $A$ is said to be \navy{invertible}\index{Invertible} or
\navy{nonsingular}\index{Nonsingular}.

The next theorem states that, for square matrices, the property of having
either a left or a right inverse is equivalent to nonsingularity.

\begin{theorem}\label{t:lrilc}
    Given $A$ in $\matset{n}{n}$, the following statements are equivalent:
    \begin{enumerate}
        \item There exists an $L \in \matset{n}{n}$ such that $L A = I$.
        \item There exists an $R \in \matset{n}{n}$ such that $A R = I$.
    \end{enumerate}
    If one and hence both of these statements hold, then $A$ is nonsingular
    and $L = R = A^{-1}$.
\end{theorem}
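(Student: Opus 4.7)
The plan is to show that each of (i) and (ii) individually implies that $A$, viewed as a linear map $u \mapsto Au$ from $\RR^n$ to $\RR^n$, is a bijection, and then to extract both the inverse matrix and the identifications $L = R = A^{-1}$ by elementary manipulations. The two theorems doing the heavy lifting will be Theorem~\ref{t:ooo} (which upgrades one-to-one to bijectivity for linear self-maps on $\RR^n$) and Lemma~\ref{l:ibj} together with Exercise~\ref{ex:ilfil} (which guarantee that the set-theoretic inverse of a linear bijection exists and is itself linear, hence arises from a matrix via the correspondence recorded in \eqref{eq:matrep}).

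First I would assume (i), i.e., $LA = I$. From $Ax = Ay$ we get $x = LAx = LAy = y$, so $A$ is one-to-one. Theorem~\ref{t:ooo} then yields that $A$ is onto, and Lemma~\ref{l:ibj} supplies a function-theoretic inverse $A^{-1} \colon \RR^n \to \RR^n$ with $A A^{-1} = A^{-1} A = I$. By Exercise~\ref{ex:ilfil}, $A^{-1}$ is linear, so by the linear-maps-are-matrices correspondence there is a unique matrix in $\matset{n}{n}$, which I will also call $A^{-1}$, implementing this map. This immediately shows $A$ is nonsingular and establishes (ii) with $R := A^{-1}$. Post-multiplying $LA = I$ by $A^{-1}$ then gives $L = LA A^{-1} = A^{-1}$.

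Next I would assume (ii), i.e., $AR = I$. For any $b \in \RR^n$ the vector $x := Rb$ satisfies $Ax = ARb = b$, so $A$ is onto; invoking Theorem~\ref{t:ooo} in the reverse direction then gives that $A$ is one-to-one, hence bijective. The same argument as above produces the inverse matrix $A^{-1}$, establishes nonsingularity, and gives (i) with $L := A^{-1}$. Pre-multiplying $AR = I$ by $A^{-1}$ yields $R = A^{-1} A R = A^{-1}$.

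The potential obstacle is not logical but notational: one must be careful to distinguish the function-theoretic inverse handed over by Lemma~\ref{l:ibj} from the \emph{matrix} $A^{-1}$ whose existence is part of the conclusion. The bridge is precisely Exercise~\ref{ex:ilfil} plus the bijection \eqref{eq:matrep} between linear maps and matrices, together with the fact that composition of linear maps corresponds to matrix multiplication, so that the identities $A \circ A^{-1} = A^{-1} \circ A = I$ transfer intact to matrix identities $A A^{-1} = A^{-1} A = I$. Once this identification is made cleanly, the whole proof is a couple of lines of matrix algebra.
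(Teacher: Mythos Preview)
Your proof is correct and follows essentially the same route as the paper: both use Theorem~\ref{t:ooo}, Lemma~\ref{l:ibj}, and Exercise~\ref{ex:ilfil} to pass from a one-sided inverse to bijectivity and thence to a linear (matrix) inverse, with the identifications $L=R=A^{-1}$ extracted by the same post- and pre-multiplication tricks. The only cosmetic difference is in direction (ii): the paper observes that $AR=I$ makes $A$ a left inverse of $R$ and recycles the already-proved (i) argument on $R$, whereas you argue directly that $A$ is onto and invoke the reverse implication of Theorem~\ref{t:ooo}; both are equally valid.
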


\begin{proof}
    Fix $A$ in $\matset{n}{n}$ and suppose first that (i) holds.  This implies
    that $A$ has a left inverse.  In view of Exercise~\ref{ex:lioneone}, it
    must be that $x \mapsto Ax$ is one-to-one on $\RR^n$.   By
    Theorem~\ref{t:ooo}, this means that the same function is onto. Hence $x
    \mapsto Ax$ is a bijection, and therefore invertible (Lemma~\ref{l:ibj}).
    By Exercise~\ref{ex:ilfil}, the inverse function is linear and hence can
    be represented by a matrix.  We denote it $A^{-1}$.  For the left inverse
    $L$ we have $L=A^{-1}$, since $L = L (A A^{-1}) = (LA) A^{-1} = A^{-1}$.

    Now suppose there exists an $R \in \matset{n}{n}$ such that $A R = I$.
    Then, $A$ is the left inverse of $R$ and, by the previous argument, $R$ is
    invertible with $A = R^{-1}$.  Pre-multiplying both sides by $R$ gives $RA=I$,
    so $R$ is also a left inverse of $A$.  As we now know, this means that $A$
    is nonsingular and $R=A^{-1}$.

    In summary, if either (i) or (ii) holds, then $A$ is nonsingular, with
    left and right inverses both equal to $A^{-1}$.
\end{proof}

Consider the linear system $A x = b$, where $A \in \matset{n}{n}$ and $b \in
\RR^n$.  We seek a solution $x \in \RR^n$.   The next theorem provides
conditions under which such an $x$ is always uniquely identified.

\begin{theorem}\label{t:nncase}
    The following statements are equivalent:
    \begin{enumerate}
        \item For each $b \in \RR^n$, the equation $A x = b$
            has a unique solution.
        \item The columns of $A$ are linearly independent.
        \item The columns of $A$ form a basis of $\RR^n$.
        \item $\kernel A = \{0\}$.
        \item $\rank A = n$.
        \item $\det A$ is nonzero.
        \item $A$ is nonsingular.
    \end{enumerate}
\end{theorem}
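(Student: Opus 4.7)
The plan is to establish the seven equivalences by building a short cycle of implications, relying heavily on results already developed in the excerpt. Most of the work has essentially been done: Exercise~\ref{ex:nullspace}, Theorem~\ref{t:indepspan}, Theorem~\ref{t:ooo}, Theorem~\ref{t:lrilc}, and the rank--nullity theorem (Theorem~\ref{t:rnt}) each cover one link.

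First I would dispose of the purely ``column-vector'' statements (ii)--(v). By Exercise~\ref{ex:nullspace}, $\kernel A = \{0\}$ is equivalent to linear independence of the columns, giving (ii) $\iff$ (iv). Since $A$ has exactly $n$ columns, Theorem~\ref{t:indepspan} yields (ii) $\iff$ (iii): an independent family of $n$ vectors in $\RR^n$ automatically spans, and a spanning family of exactly $n$ vectors must be independent. Finally, (iii) $\iff$ (v) is immediate from the definition $\rank A = \dim(\range A)$ combined with Theorem~\ref{t:eqbase}(ii) (a basis has a fixed number of elements), or alternatively from the rank--nullity identity $\rank A + \dim(\kernel A) = n$ with (iv) in hand.

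Next I would connect these to the ``system'' statement (i). Existence of a solution of $Ax = b$ for every $b$ is exactly $\range A = \RR^n$, i.e., the columns span; uniqueness of the solution when one exists is exactly $\kernel A = \{0\}$, i.e., the columns are independent (this is the standard argument: if $Ax = Ay = b$ then $A(x-y) = 0$, and conversely a nonzero element of the kernel spoils uniqueness by translation). Together these give (i) $\iff$ (iii). For (i) $\iff$ (vii), the map $x \mapsto Ax$ being a bijection on $\RR^n$ is, by Lemma~\ref{l:ibj} and Exercise~\ref{ex:ilfil}, equivalent to the existence of a linear inverse, which by \S\ref{ss:lmm} is represented by a matrix; Theorem~\ref{t:lrilc} then shows this matrix serves as a genuine two-sided inverse, so $A$ is nonsingular. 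Note that Theorem~\ref{t:ooo} ensures we do not need to verify injectivity and surjectivity separately in this step, which streamlines the argument considerably.

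The remaining link is (vi) $\iff$ (vii), which I expect to be the main obstacle, because the excerpt introduces $\det$ only as notation and does not develop its algebraic theory. The forward direction is clean: if $A$ is nonsingular, then $\det(A)\det(A^{-1}) = \det(AA^{-1}) = \det(I) = 1$, so $\det A \ne 0$. The converse is the delicate half; I would argue it by the contrapositive, noting that $\det A = 0$ forces the columns of $A$ to be linearly dependent (otherwise we could row-reduce $A$ to the identity via operations whose effect on the determinant is controlled), which by the chain already established contradicts (vii). The cleanest path, assuming multiplicativity of $\det$ and the fact that elementary row operations change the determinant in a nonzero way, is to observe that $\det A \ne 0$ is equivalent to $A$ being reducible to $I$ by invertible elementary operations, and hence to invertibility of $A$. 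I would flag explicitly at this point that the proof relies on these standard determinant properties, cited from, e.g., \cite{jan}.
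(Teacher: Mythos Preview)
Your proposal is correct and matches the paper's approach: the text itself offers only a brief sketch after the theorem, saying that the key idea is the equivalence of (i) and (iii) via unique basis representation, with the remaining items being ``equivalent ways of saying that the columns of $A$ form a basis of $\RR^n$.'' Your write-up is in fact more detailed than the paper's own treatment, and you are right to flag (vi) as the step requiring outside input---the paper likewise defers the determinant theory to \cite{jan} and \cite{cohen2021linear} rather than developing it.
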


The symbol $\det A$ represents the \navy{determinant}\index{Determinant} of $A$. For
the definition, see, for example, \cite{jan} or \cite{cohen2021linear}.  We note only that,
for $n \times n$ matrices $A$ and $B$,
\begin{itemize}\label{item:detprop}
    \item $\det(A B) = \det(A) \det(B)$,
    \item $\det(\alpha A) = \alpha^n \det A$ for all $\alpha \in \RR$, and
    \item $\det(A^{-1}) = (\det A)^{-1}$ whenever $A$ is nonsingular.
\end{itemize}

Most of the equivalences in Theorem~\ref{t:nncase} can be established from the
results presented above. The key idea is that, under these equivalent
conditions, the columns of $A$ form a basis of $\RR^n$, and hence any $b \in
\RR^n$ can be expressed uniquely as a linear combination of elements of these
columns.  In other words, there exists a unique $x \in \RR^n$ with $Ax=b$.
The remaining points are just equivalent ways of saying that the columns of
$A$ form a basis of $\RR^n$.

\begin{Exercise}
    Let $A$ and $B$ be conformable in the sense that $AB$ is well defined.  Show that
    $\rank(AB) \leq \rank A$, with equality if and only if $B$ is
    nonsingular.  (Hint: Use Theorem~\ref{t:dimorder}.)
\end{Exercise}

\subsubsection{Orthogonality}\label{sss:ortho}

We recall that vectors $u,v$ in $\RR^n$
are called \navy{orthogonal}\index{Orthogonal} and we write \navy{$u \perp v$}
if $\inner{u,  v} = 0$.  For a linear subspace $L$ of $\RR^n$, we call $u \in
\RR^n$ \navy{orthogonal to $L$}  and write \navy{$u \perp L$} whenever $u
\perp v$ for all $v \in L$.  

\begin{Exercise}
    The \navy{orthogonal
    complement}\index{Orthogonal complement} of linear subspace $L$ is defined as
    $L^{\perp} := \setntn{v \in \RR^n}{v \perp L}$.
    Show that $L^\perp$ is a linear subspace of $\RR^n$. 
\end{Exercise}

\begin{Answer}
    Fix $u, v \in L^{\perp}$ and $\alpha, \beta \in \RR$.
    If $z \in L$, then
    \begin{equation*}
        \inner{\alpha u + \beta v, z }
        = \alpha \inner{ u, z} + \beta \inner{v, z}
         = \alpha \times 0  + \beta \times 0 = 0
    \end{equation*}
    Hence $\alpha u + \beta v \in L^{\perp}$, as was to be shown.
\end{Answer}

\begin{Exercise}
    Prove: for any linear subspace $L \subset \RR^n$, we have $L \cap L^{\perp}
    = \{0\}$.
\end{Exercise}

A set of vectors $O \subset \RR^n$ is called an \navy{orthogonal
set}\index{Orthogonal set} if any two distinct elements of $O$ are orthogonal.
For any orthogonal set $O$, the
\navy{Pythagorean law}\index{Pythagorean law} 
\begin{equation*}
    \left\| \sum_{u \in O} u \right\|^2 
    = \sum_{u \in O} \| u \|^2 
\end{equation*}
always holds.

\begin{Exercise}
    Prove: Orthogonality implies linear independence in the sense that
    \begin{equation*}
        O \subset \RR^n 
        \text{ is orthogonal and $0 \notin O$ }
        \; \implies \;
        \text{$O$ is linearly independent}.   
    \end{equation*}
\end{Exercise}

An orthogonal set $O \subset \RR^n$ is called an \navy{orthonormal
set}\index{Orthonormal!set} if $\| u \| = 1$ for all $u \in O$.  If $L$ is a
linear subspace of $\RR^n$, $O$ is orthonormal in $L$ and $\Span O = L$, then
$O$ is called an \navy{orthonormal basis}\index{Orthonormal basis} of $L$.
For example, the canonical basis $\{e_1, \ldots, e_n\}$ forms an orthonormal
basis of $\RR^n$.

\begin{Exercise}
    Explain why an orthonormal basis $O$ of a subspace $L$ is, in fact, a basis of
    $L$, in the sense of the definition in \S\ref{sss:bvd}.
\end{Exercise}

\begin{Answer}
    If $O$ is an orthonormal basis of $L$, then, by definition, $O$ spans $L$.
    In addition, the elements of $O$ are independent because they are
    orthogonal and nonzero.
\end{Answer}

A matrix $M$ is called an \navy{orthonormal matrix}\index{Orthonormal!matrix} if 
$M \in \matset{n}{n}$ for some $n \in \NN$ and, in addition, the columns of
$M$ form an orthonormal set in $\RR^n$.  These kinds of matrices will be
important to us when we analyze singular value decomposition.  Notice that, 
\begin{itemize}
    \item by definition, every orthonormal matrix is square, and
    \item the $n$ columns of an orthonormal matrix $M$ in $\RR^n$ form a basis of
        $\RR^n$, since they are nonzero and orthogonal.
\end{itemize}

The second point tells us that $M$ is nonsingular.  The next lemma summarizes
important properties of orthonormal matrices.

\begin{lemma}\label{l:porthms}
    Fix $M \in \matset{n}{n}$ and let $I$ be the $n \times n$ identity.  The
    following statements are equivalent:
    \begin{enumerate}
        \item $M$ is an orthonormal matrix.
        \item $M^\top M = I$.
        \item $M M^\top = I$.
        \item $M^\top = M^{-1}$.
    \end{enumerate}
\end{lemma}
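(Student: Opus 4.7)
My plan is to first reduce everything to (i) $\iff$ (ii) by direct computation, and then use Theorem~\ref{t:lrilc} on page~\pageref{t:lrilc} to handle the equivalence of (ii), (iii), and (iv). This keeps the argument short and avoids duplicating the left-inverse/right-inverse reasoning that has already been done in the text.

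For (i) $\iff$ (ii), I would write $M = (m_1, \ldots, m_n)$ where each $m_j \in \RR^n$ is the $j$-th column of $M$. The key observation is that the $(i,j)$ entry of $M^\top M$ is exactly $\inner{m_i, m_j}$. Hence $M^\top M = I$ if and only if $\inner{m_i, m_j} = \1\{i = j\}$ for all $i, j \in \natset{n}$, which is precisely the statement that $\{m_1, \ldots, m_n\}$ is an orthonormal set, i.e., that $M$ is an orthonormal matrix. This part is a one-line computation once the entry formula is stated.

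For the rest, I would argue (ii) $\iff$ (iv) $\iff$ (iii) as follows. If (ii) holds, then $M^\top$ is a left inverse of $M$, so by Theorem~\ref{t:lrilc}, $M$ is nonsingular and $M^\top = M^{-1}$, giving (iv). Conversely, (iv) trivially implies (ii) by multiplying $M^\top = M^{-1}$ on the right by $M$. For (iv) $\iff$ (iii), note that (iv) yields $M M^\top = M M^{-1} = I$; conversely, if (iii) holds, then $M^\top$ is a right inverse of $M$, and Theorem~\ref{t:lrilc} again delivers $M^\top = M^{-1}$.

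There is no real obstacle in this proof — the content is entirely bookkeeping, and the only thing to be careful about is citing Theorem~\ref{t:lrilc} in the right direction (it is the result that upgrades a one-sided inverse of a square matrix to a two-sided inverse, which is exactly what allows the passage between (ii) and (iii) without redoing any linear-algebraic work). The small subtlety worth flagging is that statement (i) requires $M$ to be square by definition, so the hypothesis $M \in \matset{n}{n}$ in the lemma is consistent with (i) rather than an additional assumption needed only for (ii)--(iv).
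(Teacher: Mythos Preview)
Your proof is correct and follows the same route as the paper: (i) $\iff$ (ii) by directly identifying the entries of $M^\top M$ with the inner products of columns, and then (ii) $\iff$ (iii) $\iff$ (iv) via Theorem~\ref{t:lrilc}. The paper compresses both parts into single sentences, but the underlying argument is identical.
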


The equivalence of (ii) and (iii) is quite striking.  It tells us that a
square matrix with orthonormal columns automatically has orthonormal rows.

\begin{proof}
    Clearly (i) and (ii) are equivalent, since they are two ways of writing
    the same thing.  Equivalence of (ii)--(iv) follows from
    Theorem~\ref{t:lrilc} on page~\pageref{t:lrilc}.
\end{proof}

\subsection{Convexity and Concavity}

Convexity and concavity are structures of enormous significance in economics and
finance, in terms of both computation and theory.  In this section we note the key
definitions and provide exercises that help build familiarity.

\subsubsection{Convexity and Polyhedra}\label{sss:cc}

Convexity plays a central role in optimization and fixed point theory.  As
usual, a subset $C$ of $\RR^n$ is called \navy{convex}\index{Convex set} if
\begin{equation*}
    u, v \in C \text{ and } \lambda \in [0, 1] 
    \; \implies \;
    \lambda u + (1-\lambda) v \in C.     
\end{equation*}

\begin{Exercise}
    Show that
    \begin{enumerate}
        \item the unit simplex in $\RR^n$ is a convex subset of $\RR^n$.
        \item For all $a \in \RR^n$ and $\epsilon > 0$, the
            sphere $B = \setntn{x \in \RR^n}{\|x - a\| < \epsilon}$ is convex.
        \item The intersection of an arbitrary number of convex sets in
            $\RR^n$ is again convex.
    \end{enumerate}
\end{Exercise}

\begin{Answer}
    For part (i), let $S=\{ x \in \RR^n_+: \1^{\top} x \leq 1 \}$ be the unit simplex,
    where $\1$ is a column vector of ones.
    Fix $u, v \in S$ and $\lambda \in [0, 1]$. We have
    \begin{equation*}
        \lambda u + (1 - \lambda) v \geq 0
    \end{equation*}
    and
    \begin{equation*}
        \1^{\top} (\lambda u + (1 - \lambda) v)
        = \lambda \1^{\top} u + (1 - \lambda) \1^{\top} v \\
        \leq \lambda + (1 - \lambda) = 1
    \end{equation*}
    Hence $S$ is a convex set.

    For part (ii), fix $u, v \in B$, $\lambda \in [0, 1]$ and $a \in \RR^n, \epsilon>0$. We have
    \begin{equation*}
        \| \lambda u + (1 - \lambda) v - a \| = \| \lambda (u - a) + (1 - \lambda) (v - a) \| \\
        \leq \lambda \| u - a \| + (1 - \lambda) \| v - a \| \\
        \leq \lambda \epsilon + (1 - \lambda) \epsilon = \epsilon
    \end{equation*}
    Hence $B$ is convex.

    Regarding part (iii), let $(S_i)$ be a convex set for $i=1, 2, \cdots, n$.
    Fix $u, v \in \cap^n_{i=1} S_i$ and $\lambda \in [0, 1]$. We have
    \begin{equation*}
        z = \lambda u + (1 - \lambda) v \in S_i \quad \text{ for all } i
    \end{equation*}
    This implies $z \in \cap^n_{i=1} S_i$ and hence $\cap^n_{i=1} S_i$ is convex.
\end{Answer}

In economic optimization problems, the convex sets within which we seek
extrema are usually polyhedra.  A \navy{polyhedron} in $\RR^n$ is a set of the form
\begin{equation}\label{eq:polyh}
    P = \setntn{x \in \RR^n}{A x \leq b}
    \quad \text{ for some $A \in \matset{k}{n}$ and $b \in \RR^k$}.
\end{equation}
Equivalently, $P$ is a polyhedron in $\RR^n$ if there exist scalars $b_1, \ldots, b_k$
and $n$-dimensional vectors $a_1, \ldots, a_k$ such that $x \in P$ if and only if
$a_i^\top x \leq b_i$ for $i=1, \ldots, k$.

\begin{example}
    A budget set of the form $B = \setntn{x \in \RR^n_+}{p^\top x \leq
    m}$ where $m \geq 0$ and $p \in \RR^n_+$ is a polyhedron.
    Here $p$ is a vector of prices and $x$ is a consumption bundle.
    Indeed, $B$ can be expressed as the set of all $x \in \RR^n$ such that
    $p^\top x \leq m$ and, for the nonnegativity constraints,
    $-\delta_i^\top x \leq 0$ for $i=1, \ldots, n$,
    where $\delta_i$ is the $i$-th canonical basis vector in $\RR^n$.
    This meets the definition of a polyhedron.
\end{example}

Given $b \in \RR$ and nonzero $c \in \RR^n$, 
\begin{itemize}
    \item $H_0 := \setntn{x \in \RR^n}{x^\top c = b}$ is called a
        \navy{hyperplane}\index{Hyperplane} in $\RR^n$, while
    \item $H_1 := \setntn{x \in \RR^n}{x^\top c \leq b}$ is called a
        \navy{halfspace}\index{Halfspace} in $\RR^n$.
\end{itemize}

Note our convention.  In \S\ref{sss:linhy} we defined \emph{linear}
hyperplanes, which correspond to $H_0$ when $b=0$.  Thus linear hyperplanes
are a special kind of hyperplane.  You will be able to confirm that the hyperplane
$H_0$ is a linear subspace of $\RR^n$ if and only if $b=0$.

\begin{Exercise}
    Show that $H_0$ and $H_1$ are both convex in $\RR^n$.
\end{Exercise}

\begin{Answer}
    Fix $u, v \in H_0$, $\lambda \in [0, 1]$, $b \in \RR$ and nonzero $c \in \RR^n$. Then
    \begin{equation*}
        (\lambda u + (1 - \lambda) v)^{\top} c 
        = \lambda u^{\top} c + (1 - \lambda) v^{\top} c 
        = \lambda b + (1 - \lambda) b = b
    \end{equation*}
    Thus $H_0$ is convex. The proof that $H_1$ is convex is similar.
\end{Answer}

It is immediate from the definition that $P \subset \RR^n$ is a polyhedron if
and only if $P$ is the intersection of $k$ halfspaces in $\RR^n$ for some $k
\in \NN$.   Figure~\ref{f:polyhedron1} helps illustrate this idea.

\begin{figure}
    \centering
    \scalebox{0.66}{\includegraphics{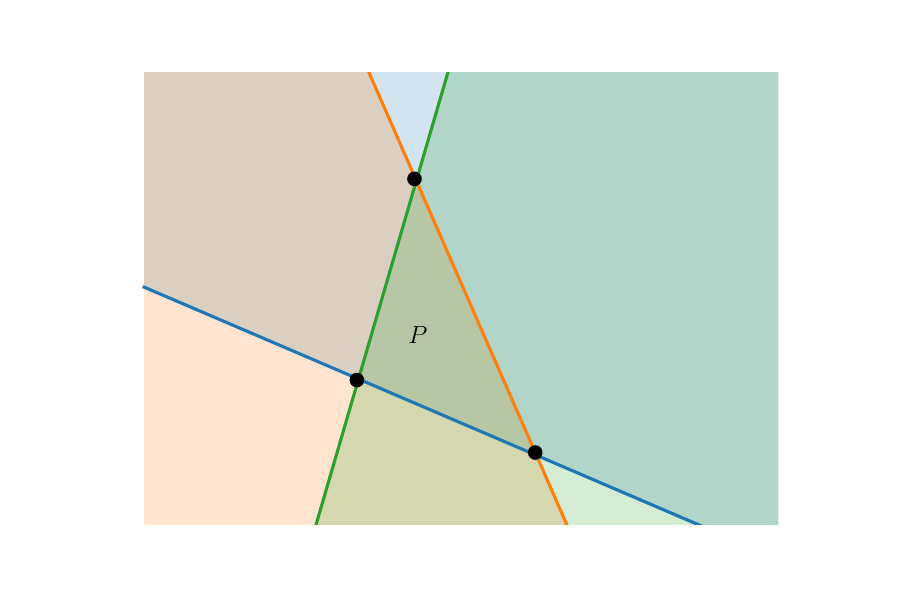}}
    \caption{\label{f:polyhedron1} A polyhedron $P$ represented as intersecting halfspaces}
\end{figure}

An \navy{extreme point}\index{Extreme point} of a polyhedron $P$ is a point $p
\in P$ that cannot be realized as the convex combination of other points in
$P$.  In other words, we cannot find two points $x, y \in P$ that are distinct
from $p$ and satisfy $\lambda x + (1-\lambda) y = p$ for some $\lambda \in [0,
1]$. The extreme points of $P$ in Figure~\ref{f:polyhedron1} are represented
  as black dots.

\begin{Exercise}\label{ex:polycon}
    Show that every polyhedron in $\RR^n$ is convex.
\end{Exercise}

A \navy{cone}\index{Cone} in $\RR^n$ is a set $C \subset \RR^n$ such that $x
\in C$ implies $\alpha x \in C$ for all $\alpha > 0$.

\begin{Exercise}\label{ex:concone}
    Show that (i) the intersection of any two cones is again a cone and (ii)
    a cone $C \subset \RR^n$ is convex (i.e., $C$ is a \navy{convex
    cone}\index{Convex cone}) if and only if it is closed under addition (i.e,
    $x, y \in C$ implies $x+y \in C$).
\end{Exercise}

\begin{Answer}
    Regarding part (i), let $C$ and $D$ be two cones. Fix $x \in C \cap D$ and
    $\alpha > 0$.  Since $x \in C$ and $x \in D$, and since both are cones, we
    have $\alpha x \in C \cap D$.  Hence $C \cap D$ is a cone.

    Regarding part (ii), suppose first that $C$ is a convex cone and fix $x, y \in
    C$.  Since $C$ is convex, it must be that $(1/2)(x+y) \in C$.  Since $C$ is a
    cone we can scale by $2$ without leaving $C$.  Hence $x+y \in C$.
    Suppose next that $C$ is a cone and closed under addition.  Fix $\alpha \in
    (0, 1)$ and $x,y \in C$.  Since $C$ is a cone, we have $\alpha x \in C$ and
    $(1-\alpha) y \in C$.  Since $C$ is closed under addition, it follows that 
    $\alpha x + (1-\alpha)y \in C$.  Hence $C$ is convex.
\end{Answer}

\begin{Exercise}
    The \navy{positive cone}\index{Positive cone} of $\RR^n$ is the set
    \begin{equation*}
        \RR^n_+ := \setntn{x \in \RR^n}{x \geq 0}.
    \end{equation*}
    Show that, for the partially ordered set $(\RR^n, \leq)$, the positive
    cone is an increasing subset of $\RR^n$.  Show in addition that $\RR^n_+$
    is a convex cone.
\end{Exercise}

\subsubsection{Convex and Concave Functions}\label{sss:conconfun}

A function $g$ from a convex subset $C$ of $\RR^n$
to $\RR^k$ is called \navy{convex}\index{Convex function} if 
\begin{equation*}
    g(\lambda u + (1-\lambda) v) \leq \lambda g(u) + (1-\lambda) g(v)
    \quad \text{whenever } u,v \in C \text{ and }
    0\leq \lambda \leq 1,   
\end{equation*}
and \navy{concave}\index{Concave function} if $-g$ is convex.
Concavity of $g$ is obviously equivalent to 
\begin{equation*}
    g(\lambda u + (1-\lambda) v) \geq \lambda g(u) + (1-\lambda) g(v)
    \quad \text{whenever } u,v \in C \text{ and }
    0\leq \lambda \leq 1.   
\end{equation*}

When $k=1$, the function $g$ is called, respectively, \navy{strictly convex}
or \navy{strictly concave} if, in addition, the inequalities become strict
whenever $u,v$ are distinct and $0 < \lambda < 1$.  

\begin{Exercise}
    Given an example of a function $g$ from $\RR^n$ to $\RR^n$ that is both
    convex and concave.
\end{Exercise}

\begin{Answer}
    Every linear function is both convex and concave.
\end{Answer}

These properties of functions are closely related to convexity of sets.
For example, the same function $g$ is convex if and only if its
\navy{epigraph}\index{Epigraph} 
\begin{equation*}
    \epi(g) := \setntn{(x, g(x)) \in \RR^{n+1}}{x \in C}
\end{equation*}
is convex.

\begin{Exercise}\label{ex:jendiscrete}
    If $C$ is a convex subset of $\RR^n$ and $g \colon \RR^n \to \RR$ is
    convex, then \navy{Jensen's inequality}\index{Jensen's inequality} states that,
    for any vectors $\{x_1, \ldots, x_k\} \subset C$ and 
    weights $\{\lambda_1, \ldots, \lambda_k\} \subset \RR$ with $0 \leq \lambda_i
    \leq 1$ and $\sum_i \lambda_i = 1$, we have
    \begin{equation*}
        g \left( \sum_{i=1}^k x_i \lambda_i \right)
        \leq 
        \sum_{i=1}^k  g \left( x_i \right) \lambda_i .
    \end{equation*}
    If $g$ is concave then the reverse inequality holds.  Prove Jensen's
    inequality for the case where $g$ is convex when $k=3$.
\end{Exercise}

In the next exercise, if $\{y_i\}$ is a finite collection of vectors in
$\RR^k$, then $\max_i y_i$ is the vector in $\RR^k$ obtained by taking the
maximum pointwise.  The minimum $\min_i y_i$ is defined in a similar way.

\begin{Exercise}\label{ex:prescon}
    Fix $m \in \NN$ and let $\{g_i\}_{i \in \natset{m}}$ be a collection of
    $\RR^k$-valued functions defined on a convex subset $C$ of $\RR^n$.
    Show that, 
    \begin{enumerate}
        \item if $g_i$ is convex for
            every $i$ in $\natset{m}$, then $g$ defined at each $x$ in $C$ by $g(x) :=
            \max_{i \in \natset{m}} g_i(x)$ is also convex on $C$ and
        \item if $g_i$ is concave for
            every $i$ in $\natset{m}$, then $g$ defined at each $x$ in $C$ by
            $g(x) := \min_{i \in \natset{m}} g_i(x)$ is also concave on $C$.
    \end{enumerate}
\end{Exercise}

\begin{Answer}
    It suffices to prove these results in the scalar case $n=k=1$, since the
    vector results are just element-by-element extensions of the scalar case.
    We focus on the setting where each $g_i$ is convex, as the concave case is
    similar.

    In general, for scalars $\{a_i\}_{i=1}^m$ and $\{b_i\}_{i=1}^m$, we always have 
    \begin{equation*}
        \max_i \{a_i + b_i\}
        \leq \max_i \{ a_i \} + \max_i \{b_i\}.
    \end{equation*}
    Hence, in the scalar case, when all $g_i$s are convex, we have, for any $\lambda \in
    [0,1]$ and $x, y \in C$,
    \begin{align*}
        \max_i \{ g_i(\lambda x + (1-\lambda) y)\}
        \leq \max_i \{ \lambda g_i(x) + (1-\lambda) g_i(y)\}
        \leq \lambda \max_i g_i(x) + (1-\lambda) \max_i g_i(y)
    \end{align*}
    This proves the scalar case  (and, by extension, the vector case).
\end{Answer}

\begin{Exercise}\label{ex:sumsconvex}
    Let $f$ and $g$ map $C$ to $\RR$, where $C \subset \RR^n$ is convex and so are $f$
    and $g$.  Show that (i) $f+g$ is convex and (ii) $f+g$ is strictly convex
    if, in addition, either $f$ or $g$ is strictly convex.
\end{Exercise}

Important examples of concave and convex scalar-valued functions include
certain kinds of quadratic forms.  For the following you should recall that a
symmetric $n \times n$ matrix $A$ is called 
\begin{itemize}
    \item \navy{positive semidefinite}\index{Positive semidefinite} if
        $x^{\top} A x \geq 0$ for any $x$ in $\RR^n$,
    \item \navy{positive definite}\index{Positive definite} if $x^{\top} A x >
        0$ for any nonzero $x$ in $\RR^n$,
    \item \navy{negative semidefinite}\index{Negative semidefinite} if
        $x^{\top} A x \leq 0$ for any $x$ in $\RR^n$, and 
    \item \navy{negative definite}\index{Negative definite} if $x^{\top} A x <
        0$ for any nonzero $x$ in $\RR^n$.
\end{itemize}
It's important to remember (but easy to forget) that symmetry is part of the
definition of these properties.  You probably remember that 
\begin{itemize}
    \item $A$ is positive definite (resp., semidefinite) if and only if all
        its eigenvalues are strictly positive (resp., nonnegative) and
    \item $A$ is positive definite (resp., semidefinite) $\implies$ its
        determinant is strictly positive (resp., nonnegative).
\end{itemize}
The second statement follows from the first, since the determinant of a matrix
equals the product of its eigenvalues.

The \navy{quadratic form} $q(x) = x^{\top} A x$ is 
\begin{itemize}
    \item convex if and only if $A$ is positive semidefinite,
    \item strictly convex if and only if $A$ is positive definite,
    \item concave if and only if $A$ is negative semidefinite, and
    \item strictly concave if and only if $A$ is negative definite.
\end{itemize}
See, for example, \cite{simon1994mathematics}.

\begin{Exercise}\label{ex:xpxpd}
    Let $X$ be $n \times k$.  Prove the following:
    \begin{enumerate}
        \item $X^\top X$ is positive semidefinite.
        \item If, in addition, $X$ has full column rank, then $X^\top X$ is positive definite.
    \end{enumerate}
\end{Exercise}

\subsection{Optimization}\label{ss:convexopt}

In this section we review some key topics in optimization theory.

\subsubsection{Definitions and Existence}\label{ss:optde}

A number $m$ contained in a subset $A$ of $\RR$ is called the
\navy{maximum}\index{Maximum} of $A$ and we write $m = \max A$ if 
$a \leq m$ for every $a \in A$.  It is called the
\navy{minimum}\index{Minimum} of $A$, written as $m=\min A$, if $m \in A$ and $a \geq m$ for every $a \in A$.  

Given an arbitrary set $D$ and a function $f \colon D \to \RR$, define
\begin{equation*}
    \max_{x \in D} f(x) := \max \setntn{f(x)}{x \in D}
    \quad \text{and} \quad
    \min_{x \in D} f(x) := \min \setntn{f(x)}{x \in D}.
\end{equation*}
A point $x^* \in D$ is called
\begin{itemize}
    \item a \navy{maximizer}\index{Maximizer} of $f$ on $D$ if 
        $x^* \in D$ and $f(x^*) \geq f(x)$ for all $x \in D$, and
    \item a \navy{minimizer}\index{Minimizer} of $f$ on $D$ if 
        $x^* \in D$ and $f(x^*) \leq f(x)$ for all $x \in D$. 
\end{itemize}
Equivalently, $x^* \in D$ is a maximizer of $f$ on $D$ if $f(x^*) = \max_{x
\in D} f(x)$, and a minimizer if   $f(x^*) = \min_{x \in D} f(x)$.
We define
\begin{equation*}
    \argmax_{x \in D} f(x) 
    := \setntn{x^* \in D}{f(x^*) \geq f(x) \text{ for all $x \in D$}}.
\end{equation*}
The set $\argmin_{x \in D} f(x)$ is defined analogously.

\begin{example}
    If $f(x) = x^2$ and $D=[-1, 1]$, then 
    $\argmax_{x \in D} f(x) = \{-1, 1\}$, while 
    $\argmin_{x \in D} f(x) = \{0\}$.  In the second case, where the solution
    set is a singleton, that is, a set with exactly one element, we write $\argmin_{x \in D} f(x) = 0$ as well.
\end{example}

\begin{Exercise}\label{eq:idmm}
    Let $f \colon D \to A \subset \RR$ be any given function.  Prove the
    following:
    \begin{enumerate}
        \item If $g \colon A \to \RR$ is a strictly increasing function, then
            $x^*$ is a maximizer of $f$ on $D$ if and only if $x^*$ is a
            maximizer of $g \circ f$ on $A$.
        \item If $g \colon A \to \RR$ is a strictly decreasing function, then
            $x^*$ is a maximizer of $f$ on $D$ if and only if $x^*$ is a
            minimizer of $g \circ f$ on $A$.
    \end{enumerate}
\end{Exercise}

\begin{Answer}
    Regarding part (i), let $f$ and $g$ be as stated.  Since $g$ is strictly
    increasing, we have
    \begin{equation*}
        f(x^*) \geq f(x), \;\; \forall \, x \in D
        \quad \iff \quad
        g(f(x^*)) \geq g(f(x)), \;\; \forall \, x \in D.
    \end{equation*}
    This proves the equivalence claimed in part (i).  (Note why the strictly
    increasing property of $g$ cannot be weakened here.)   The proof of (ii) is
    similar, with the second equality reversed.
\end{Answer}

One important special case of part (ii) is that $x^* \in D$ is a maximizer of
$f$ on $D$ if and only if $x^*$ is a minimizer of $-f$ on $D$.  Hence, any
maximization problem can be converted into a minimization problem and
vice-versa.

\subsubsection{Convexity and Extrema}

\S~\ref{ss:optde} discussed existence of optimizers.  In this section we
consider uniqueness.  The key observation is that, for convex and concave
functions, local optimizers are global optimizers. 

If $C \subset \RR^n$ and $f$ is a real-valued function on
$C$, then $u^* \in C$ is a \navy{local minimizer}\index{Local minimizer} of
$f$ on $C$ if there exists an open set $G$ in $C$ such that $u^* \in G$ and
$f(u^*) \leq f(u)$ whenever $u \in G$.  A \navy{local maximizer}\index{Local
maximizer} is defined analogously.

The next exercise highlights one of the most important facts concerning
computation of solutions to optimization problems in real-world applications.
It is especially valuable in high-dimensional settings, where optimization
problems can be very challenging.

\begin{Exercise}\label{ex:locglo}
    Show that, if $C \subset \RR^n$ is convex, $f$ is convex and $u^*$ is a
    local minimizer of $f$ on $C$, then $u^*$ is a minimizer of $f$ on $C$.
\end{Exercise}

\begin{Answer}
    Let $f$ and $C$ be as described and let 
    $u^*$ be a local minimizer.  Suppose, contrary to the claim in the
    exercise, that there exists a point $u$ in $C$ such that $f(u) < f(u^*)$.
    Then, by convexity, for each $\lambda$ in $[0, 1]$, we have
    \begin{equation*}
        f(\lambda u + (1-\lambda) u^*) \leq \lambda f(u) + (1-\lambda) f(u^*)
        < f(u^*).
    \end{equation*}
    Taking $\lambda \to 0$, we can find a point 
    $v := \lambda u + (1-\lambda) u^*$ arbitrarily close to $u^*$ such that
    $f(v) < f(u^*)$.  This contradicts the definition of a local minimizer.
\end{Answer}

Similarly, if $f$ is concave on $C$, then any local maximizer is a global
maximizer.

\subsubsection{Multivariate Quadratic Objectives}\label{sss:mqo}

\S~\ref{sss:quadgame} treated a one-dimensional quadratic optimization
problem.  Next we treat $n$-dimensional problems of the same type.

\begin{lemma}\label{l:quadopt}
    If $H \in \matset{n}{n}$ is positive definite, then, for any $b \in \RR$
    and $a \in \RR^n$,     
    \begin{equation*}
        u^* := H^{-1} a
        \text{ is the unique minimizer of }
        q(u) := u^\top H u - 2 u^\top a + b
        \text{ in } \RR^n.
    \end{equation*}
\end{lemma}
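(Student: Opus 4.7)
The plan is to complete the square. Writing $u^* := H^{-1} a$, I would verify the identity
\begin{equation*}
    q(u) = (u - u^*)^\top H (u - u^*) + b - a^\top H^{-1} a
\end{equation*}
by expanding the right-hand side and using symmetry of $H$ (which is part of the definition of positive definiteness in this text). Since the additive constant $b - a^\top H^{-1} a$ does not depend on $u$, minimizing $q$ reduces to minimizing $u \mapsto (u - u^*)^\top H (u - u^*)$ over $\RR^n$. Positive definiteness of $H$ tells us this quadratic form is nonnegative and equals zero only when $u - u^* = 0$, which delivers both existence and uniqueness of the minimizer in a single stroke.

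As an alternative, one could argue via strict convexity: the section on quadratic forms records that $u \mapsto u^\top H u$ is strictly convex when $H$ is positive definite, the map $u \mapsto -2 u^\top a + b$ is convex (being affine), so by Exercise~\ref{ex:sumsconvex} the sum $q$ is strictly convex on $\RR^n$. Strict convexity forces at most one local minimizer (Exercise~\ref{ex:locglo} plus strictness), and the first-order condition $\nabla q(u) = 2Hu - 2a = 0$ identifies the candidate as $u = H^{-1} a$. However, this route leaves open the question of whether the critical point is actually a minimizer (one still needs coercivity or the completed-square identity), so I prefer the first approach, which settles existence, uniqueness, and the explicit form of $u^*$ simultaneously.

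The main obstacle is essentially bookkeeping rather than substance: one must carefully expand $(u - H^{-1} a)^\top H (u - H^{-1} a)$ and use the symmetry $H = H^\top$ to combine the cross terms $u^\top H H^{-1} a$ and $a^\top H^{-1} H u$ into $2 u^\top a$. Once this identity is in hand, invoking the definition of positive definiteness (that $v^\top H v > 0$ for every nonzero $v \in \RR^n$) applied to $v = u - u^*$ closes the argument immediately.
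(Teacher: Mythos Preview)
Your completing-the-square argument is correct and fully rigorous. Interestingly, the paper takes precisely the route you describe as your \emph{alternative}: it splits the proof into two exercises, first using Exercise~\ref{ex:sumsconvex} to argue that $q$ is (strictly) convex so that any local minimizer is global, and then (Exercise~\ref{ex:tffmc}) computing $\nabla q(u) = 2Hu - 2a$ via the stated matrix calculus rules to identify $u^* = H^{-1}a$ as a critical point. Your worry that this second route ``leaves open'' whether the critical point is a minimizer is slightly overstated---for a differentiable convex function a vanishing gradient does force a global minimum---but you are right that the paper glosses over this step (it simply asserts ``it suffices to show that the derivative of $q$ at $u^*$ is zero'' without citing the convex-function fact that makes this sufficient). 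Your completed-square identity sidesteps all of that: it is more elementary, needs no calculus, and delivers existence, uniqueness, and the explicit minimum value $b - a^\top H^{-1} a$ in one line. The paper's approach, by contrast, is more modular and illustrates the general convexity machinery from \S\ref{sss:conconfun}.
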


\begin{Exercise}
    Prove that any local minimizer of $q$ in Lemma~\ref{l:quadopt} is also a
    global minimizer.  (Hint: Use Exercise~\ref{ex:sumsconvex}.)
\end{Exercise}

\begin{Exercise}\label{ex:tffmc}
    Complete the proof of Lemma~\ref{l:quadopt} by showing that $u^*$ is a
    local minimizer.  It suffices to show that the derivative of $q$ at $u^*$
    is zero.  Use the following facts from matrix calculus:
    \begin{equation*}
        a \in \RR^n \text{ and } H \in \matset{n}{n}
        \quad \implies \quad
        \frac{\diff}{\diff u} u^\top a = a
        \; \text{ and } \;
        \frac{\diff}{\diff u} u^\top H u = (H + H^\top)u.
    \end{equation*}
\end{Exercise}

\subsection{Lagrangian Duality}\label{ss:ld}

If you have studied undergraduate economics, you will have used Lagrangian
methods to solve constrained optimization problems.  The objective of this
section is to supply insight on when and why the method works, as well as to
highlight the connection between an original constrained problem and a
so-called ``dual problem.'' This connection yields deep insights in economics,
finance, statistical learning, artificial intelligence and many other fields.

\subsubsection{Theory}\label{sss:theo}

Let $E$ be a subset of $\RR^n$ and let $f$ map $E$ to $\RR$.  We aim to solve
\begin{equation}\label{eq:lagran}
    \min_{x \in E} f(x)  
    \; \text{ subject to } \;
    g(x) = 0  \text{ and } h(x) \leq 0.
\end{equation}
Here $g$ maps $E$ to $\RR^m$ and $h$ maps $E$ to $\RR^k$.  For example, the
zero symbol in the statement $g(x) = 0$ is a vector of zeros, and the meaning
is that $g_i(x) = 0$ in $\RR$ for $i$ in $1, \ldots, m$.  The interpretation of the
second constraint is analogous.

Any $x \in E$ that satisfies the constraints in \eqref{eq:lagran} is called
\navy{feasible}.  Let $F(g, h)$ be the set of all feasible $x \in E$.  A
feasible point that attains the minimum in \eqref{eq:lagran} is called
\navy{optimal} for---or a \navy{solution} to---the optimization problem.

Our first claim is that the constrained optimization
problem~\eqref{eq:lagran} is equivalent to the \emph{un}constrained problem
\begin{equation}\label{eq:lagran2}
    P =
    \min_{x \in E} \,
        \max_{\theta \in \Theta}
        L(x, \theta)
\end{equation}
where $\theta := (\lambda, \mu)$ and
\begin{equation}\label{eq:lagdef}
    L(x, \theta) 
    = L(x, \lambda, \mu) 
    := f(x)  + \sum_{i=1}^m \lambda_i g_i(x)  + \sum_{i=1}^k \mu_i h_i(x) .
\end{equation}
Here $\theta \in \Theta$ combines the vectors of \navy{Lagrange
multipliers}\index{Lagrange multiplier} $\lambda \in \RR^m$ and $\mu \in
\RR^k_+$, where $\Theta := \RR^m \times \RR^k_+$.
By \emph{equivalent to,} we mean that
\begin{enumerate}
    \item $\hat f(x) := \max_{\theta \in \Theta} L(x, \theta)$ satisfies
        $\hat f = f$ on $F(g, h)$,
    \item $\hat f = +\infty$ on the complement of $F(g, h)$ and
    \item together $\hat f$ and $f$ satisfy
        \begin{equation}\label{eq:maffp}
            P 
            := \min_{x \in E} \hat f(x)
            = \min_{x \in F(g, h)} \hat f(x)
            = \min_{x \in F(g, h)} f(x).
        \end{equation}
\end{enumerate}

The first equality in \eqref{eq:maffp} is by definition.  The second two
follow from (ii) and (i) respectively.  Hence we need only verify (i)--(ii).

\begin{Exercise}
    Show that (i)--(ii) both hold.  You can assume that extrema exist.
\end{Exercise}

\begin{Answer}
    Regarding the first claim, suppose that $g(x) = 0$ and $h(x) \leq 0$, so that
    $x \in F(g, h)$.  In this case, 
    \begin{equation}\label{eq:lagcs}
        \hat f(x) 
        = \max_{\theta \in \Theta} L(x, \theta)
        = f(x) 
        + \max_{\lambda \in \RR^m} \sum_i \lambda_i g_i(x) 
        + \max_{\mu \in \RR^k_+} \sum_i \mu_i h_i(x) 
        = f(x).
    \end{equation}
    Hence $\hat f = f$ on $F(g, h)$ and claim (i) is verified.

    Regarding claim (ii), suppose that $g_i(x)$ deviates from zero for some $i$.
    Then $\max_{\lambda \in \RR^m} \sum_i \lambda_i g_i(x) $ equals $+\infty$,
    so $\hat f(x) = +\infty$.  In addition, if $h_i(x) > 0$ for some $i$, then
    $\max_{\mu \in \RR^k_+} \sum_i \mu_i h_i(x)  = +\infty$, so, once again
    $\hat f(x) = +\infty$.  We have confirmed that $\hat f = +\infty$ whenever $x
    \notin F(g, h)$.   
\end{Answer}

The function $L$ in problem \eqref{eq:lagdef} is usually called the
\navy{Lagrangian}\index{Lagrangian}.
The $P$ in \eqref{eq:lagran2} stands for \navy{primal}\index{Primal}.
So far we have shown that the original constrained problem and the
primal problem are the same.  The next step is to 
pair the primal problem with its \navy{dual}\index{Dual} problem, which
is obtained by reversing the order of min and max in the primal:
\begin{equation}\label{eq:lagdual}
    D =
    \max_{\theta \in \Theta} \;
    \min_{x \in E} \; 
    L(x, \theta).
\end{equation}
The dual problem has two attractive features.   One is that, when minimizing
with respect to $x$, we do not need to concern ourselves with the constraints
on $x$ in the original formulation~\eqref{eq:lagran}.  The second is that,
since $L(x, \theta)$ is linear in $\theta$, and since we are minimizing a
family of these functions with respect to $x$, the minimizer is a concave
function (see page~\pageref{ex:prescon}).  Concavity assists maximization,
which is the next step in solving $D$.

\begin{Exercise}\label{ex:opriexdu}
    Show that $D \leq P$ always holds. 
\end{Exercise}

\begin{Answer}
    Pick any $x' \in E$ and $\theta \in \Theta$.  We have $L(x', \theta) 
    \geq \min_x L(x, \theta)$, so
    \begin{equation*}
        \max_\theta L(x', \theta) \geq \max_\theta \min_x L(x, \theta).
    \end{equation*}
    \begin{equation*}
        \fore
        \min_x \max_\theta L(x, \theta) \geq \max_\theta \min_x L(x, \theta).
    \end{equation*}
\end{Answer}

The result $D \leq P$ in Exercise~\ref{ex:opriexdu} is called \navy{weak
duality}\index{Weak duality}.  If $P = D$,  then \navy{strong
duality}\index{Strong duality} is said to hold.  Unlike weak duality, strong
duality requires conditions on the primitives.

Evidently strong duality holds if and only if it is valid to reverse the order
of the min and max operators in the definition of the primal (or the dual).
Results of this kind are called \emph{min-max theorems}.  Such theorems hold
at ``saddle points'' of the function $L$.  The next section explains.

\subsubsection{Saddle Points and Duality}\label{sss:saddu}

We seek necessary and sufficient conditions for strong duality, which will
lead to a characterization of minimizers for the original
constrained problem.  To do so, we again take $\theta := (\lambda, \mu)$ and
$\Theta := \RR^m \times \RR^k_+$, so that 
\begin{equation}\label{eq:pandd}
    P = \min_{x \in E} \max_{\theta \in \Theta} L(x, \theta)
    \quad \text{and} \quad
    D = \max_{\theta \in \Theta} \min_{x \in E} L(x, \theta).
\end{equation}
A pair $(x^*, \theta^*)$ in $E \times \Theta$ is called a \navy{saddle
point}\index{Saddle point} of $L$ if 
\begin{equation}\label{eq:saddle}
    L(x^*, \theta) 
    \leq L(x^*, \theta^*) 
    \leq L(x, \theta^*) 
    \quad \text{for all} \quad
    (x, \theta) \in E \times \Theta.
\end{equation}
A depiction of a
saddle point of a given bivariate function $(x, \theta) \mapsto L(x, \theta) \in
\RR$ is given in Figure~\ref{f:saddle_1}.  The left hand side of the top panel is a 3D
visualization, and the right hand side is a contour plot of the same function.
The saddle point is at the center.

\begin{figure}
   \centering
   \scalebox{0.65}{\includegraphics[trim = 0mm 10mm 0mm 0mm, clip]{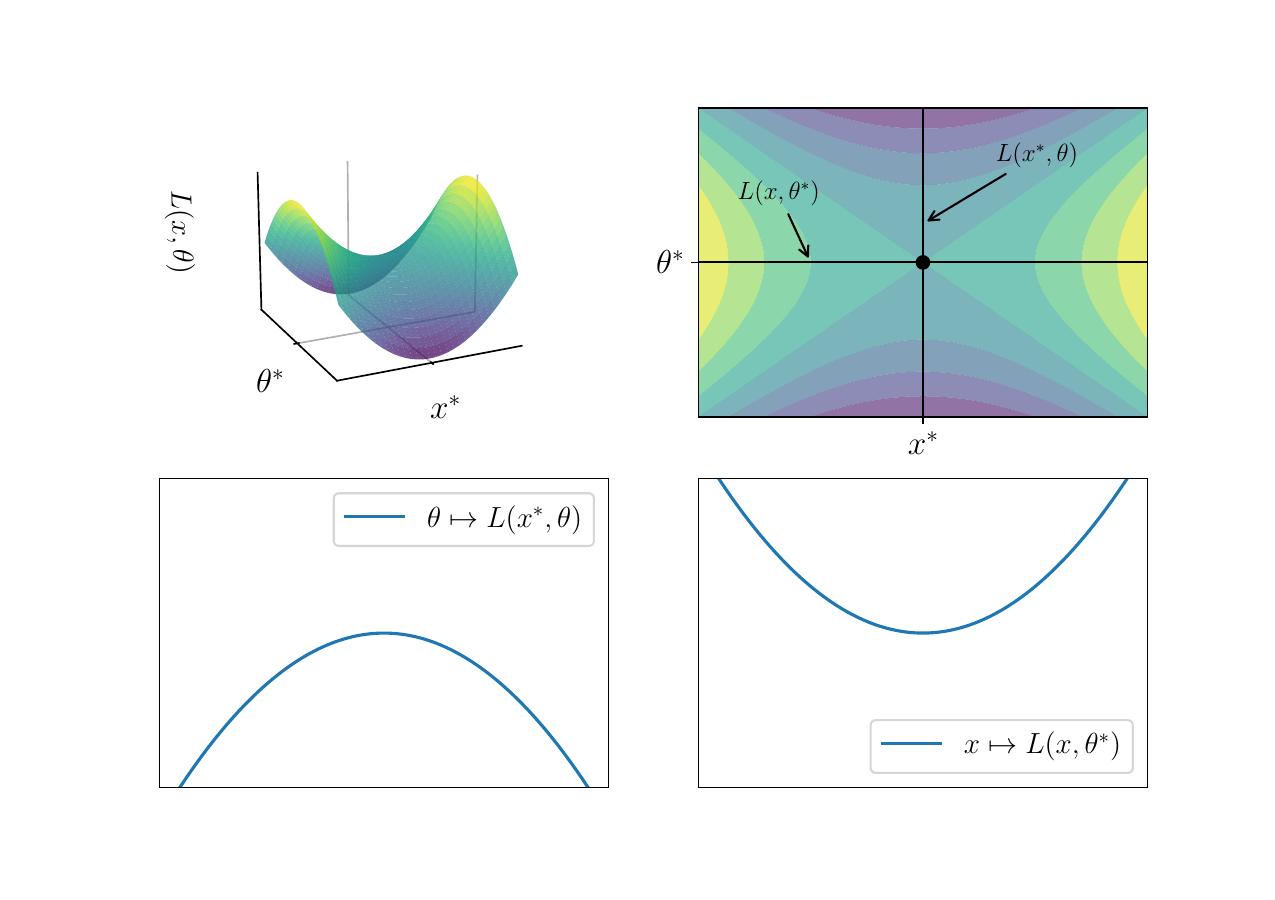}}
   \caption{\label{f:saddle_1} A saddle point $(x^*, \theta^*)$ of the function $L$}
\end{figure}

When the extrema in \eqref{eq:pandd} exist, we have the following result:

\begin{theorem}\label{t:saddle}
    If $L$ has a saddle point $(x^*, \theta^*)$ in $E \times \Theta$,
    then strong duality holds.  Moreover, $P=D = L(x^*, \theta^*)$ and $x^*$
    solves the constrained optimization problem~\eqref{eq:lagran}.
\end{theorem}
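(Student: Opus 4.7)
My plan is to exploit the two inequalities in the saddle point condition \eqref{eq:saddle} to collapse the standard weak duality bound into an equality, and then use the earlier equivalence between the primal problem and the original constrained problem \eqref{eq:lagran} to conclude that $x^*$ is a solution.

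First, I would extract from the saddle point condition the two identities
\begin{equation*}
    \max_{\theta \in \Theta} L(x^*, \theta) = L(x^*, \theta^*)
    = \min_{x \in E} L(x, \theta^*).
\end{equation*}
The left equality follows because the saddle inequality $L(x^*, \theta) \leq L(x^*, \theta^*)$ valid for all $\theta \in \Theta$ implies $\max_\theta L(x^*, \theta) \leq L(x^*, \theta^*)$, while the reverse inequality is automatic since $\theta^* \in \Theta$. The right equality is symmetric.

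Next, I would sandwich $L(x^*, \theta^*)$ between $P$ and $D$. Since $P = \min_x \max_\theta L(x, \theta) \leq \max_\theta L(x^*, \theta)$ and $D = \max_\theta \min_x L(x, \theta) \geq \min_x L(x, \theta^*)$, the two identities above give $P \leq L(x^*, \theta^*) \leq D$. Combined with weak duality $D \leq P$ from Exercise~\ref{ex:opriexdu}, this forces $P = D = L(x^*, \theta^*)$, which is strong duality.

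For the final claim, I would use the analysis preceding \eqref{eq:maffp}. Since $\max_\theta L(x^*, \theta) = L(x^*, \theta^*)$ is finite, and since $\hat f(x) := \max_\theta L(x, \theta)$ equals $+\infty$ off the feasible set $F(g, h)$, it must be that $x^* \in F(g, h)$. On this set $\hat f$ agrees with $f$, so $f(x^*) = \hat f(x^*) = L(x^*, \theta^*) = P = \min_{x \in F(g,h)} f(x)$, exhibiting $x^*$ as a solution of \eqref{eq:lagran}. I expect no serious obstacle here; the only delicate point is to ensure that the finiteness of $L(x^*, \theta^*)$ is genuinely used to rule out $x^* \notin F(g,h)$, since the saddle point definition implicitly assumes real-valued $L$ on $E \times \Theta$.
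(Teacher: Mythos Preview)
Your proof is correct and, for the strong duality portion, essentially identical to the paper's: both extract $P \leq L(x^*,\theta^*) \leq D$ from the saddle inequalities and then invoke weak duality to collapse the chain.

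For the final claim that $x^*$ solves \eqref{eq:lagran}, your route differs slightly from the paper's. The paper argues by contradiction: assuming some feasible $x_0$ with $f(x_0) < f(x^*)$, it uses $\hat f = f$ on $F(g,h)$ together with the saddle inequality to derive $f(x^*) \leq f(x_0)$. However, the paper's wording ``since the constraints are satisfied at both $x_0$ and $x^*$'' takes feasibility of $x^*$ for granted without justification. Your approach is more careful on exactly this point: you explicitly deduce $x^* \in F(g,h)$ from finiteness of $\hat f(x^*) = L(x^*,\theta^*)$, and then read off $f(x^*) = P$ directly from \eqref{eq:maffp}. This is a cleaner and more self-contained argument.
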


\begin{proof}
    Let $(x^*, \theta^*)$ be a saddle point $(x^*, \theta^*)$ of $L$ in $E \times \Theta$.
    We have, for all $(x, \theta) \in E \times \Theta$, 
    \begin{equation*}
        P 
        \leq \max_\theta L(x^*, \theta)
        \leq L(x^*, \theta^*) 
        \leq \min_x L(x, \theta^*),
    \end{equation*}
    where the first inequality is by definition and the second two are
    by the saddle point property.  The last term is clearly dominated by $D$, so we
    conclude that $P \leq L(x^*, \theta^*) \leq D$.  But, by weak duality, we
    also have $D \leq P$, so $P = L(x^*, \theta^*) = D$.  

    Finally, to confirm that $x^*$ solves the original constrained problem,
    suppose to the contrary that there exists an $x_0 \in E$ that satisfies
    the constraints and yields $f(x_0) < f(x^*)$.  Since the constraints are
    satisfied at both $x_0$ and $x^*$, we can apply~\eqref{eq:lagcs} on page~\pageref{eq:lagcs} to
    obtain $\max_{\theta \in \Theta} L(x_0, \theta) = f(x_0)$
    and $\max_{\theta \in \Theta} L(x^*, \theta) = f(x^*)$.  Hence the second inequality 
    of the saddle point condition implies
    \begin{equation*}
        f(x^*)
        = \max_{\theta \in \Theta} L(x^*, \theta)
            \leq \max_{\theta \in \Theta} L(x_0, \theta)
            = f(x_0).
    \end{equation*}
    This contradicts the hypothesis that $f(x_0) < f(x^*)$.
\end{proof}

Theorem~\ref{t:saddle} tells us that to solve the constrained optimization
problem and establish strong duality, we need only obtain a saddle point of
the Lagrangian.

\subsubsection{Karush, Kuhn and Tucker}\label{sss:kkt}

For well behaved problems, saddle points can be identified via well known
first-order conditions, called the
\navy{Karush--Kuhn--Tucker}\index{Karush--Kuhn--Tucker} (KKT) conditions.  To
state them, we return to the original problem \eqref{eq:lagran} and write
$\theta$ explicitly as $(\lambda, \mu)$, so that $(x^*, \lambda^*, \mu^*) \in
E \times \RR^m \times \RR^k_+$ satisfies the saddle point condition if
\begin{equation*}
    L(x^*, \lambda, \mu) 
        \leq L(x^*, \lambda^*, \mu^*) 
        \leq L(x, \lambda^*, \mu^*) 
        \quad \text{for all} \quad
        x \in E, \lambda \in \RR^m \text{ and } \mu \in \RR^k_+.
\end{equation*}
The \navy{KKT conditions} are met by $(x^*, \lambda^*, \mu^*) \in E \times
\RR^m \times \RR^k_+$ if $x^* \in F(g,h)$,
\begin{equation}\label{eq:swrtx}
    \nabla f(x^*) + \sum_{i=1}^m \lambda^*_i \nabla g_i(x^*)  + \sum_{i=1}^k
    \mu^*_i \nabla h_i(x^*) 
    = 0 
    \quad \text{and}
\end{equation}
\begin{equation}\label{eq:compslack}
    \mu^*_i h_i(x^*) = 0
    \quad \text{for all } i \in \natset{k}.
\end{equation}

Here we are requiring that $f, g, h$ are all differentiable at $x^*$ and, 
for a given function $q \colon E \to \RR$, we use $\nabla$ to represent the
vector of partial derivatives.   Equation \eqref{eq:swrtx} requires that the
derivative of the Lagrangian with respect to $x$ is zero at $x^*$.  The second
condition is called the \navy{complementary slackness}\index{Complementary
slackness} condition.

The KKT conditions are nothing more than saddle point conditions.
Condition~\eqref{eq:swrtx} is the first order condition for an interior
minimizer of the Lagrangian with respect to $x$, at the point $(x^*,
\theta^*)$, which tries to identify the local minimum visualized in the lower
right panel of Figure~\ref{f:saddle_1}.  The complementary slackness condition
is also needed because if $\mu^*_i h_i(x^*) = 0$ fails at some $i$, then
$\mu^*_i h_i(x^*) < 0$ must hold, in which case we could strictly increase the
Lagrangian by shifting $\mu^*_i$ to $0$.  This violates the saddle
point property.

If enough regularity conditions hold, then the KKT conditions exactly identify
extremal points.  Here is one example, which is proved in Chapter~8 of
\cite{matousek2007understanding}.

\begin{theorem}[Karush--Kuhn--Tucker]\label{t:kkt}
    If $E$ is open, $f$ is continuously differentiable and convex, and both
    $g$ and $h$ are affine functions, then $x^*$ minimizes $f$ on $F(g, h)$ if
    and only if there exists a pair $(\lambda^*, \mu^*) \in \RR^m \times
    \RR^k_+$ such that the KKT conditions hold.
\end{theorem}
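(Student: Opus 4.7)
The plan is to prove the iff statement in two directions. The sufficiency direction (KKT implies optimality) follows from the saddle point result already established, while the necessity direction (optimality implies KKT) is where the real work lies.

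For sufficiency, suppose $(x^*, \lambda^*, \mu^*)$ satisfies the KKT conditions. I would show that this triple is a saddle point of the Lagrangian $L$ on $E \times \RR^m \times \RR^k_+$; the conclusion then follows from Theorem~\ref{t:saddle}. For the inequality $L(x^*, \lambda^*, \mu^*) \leq L(x, \lambda^*, \mu^*)$ over $x \in E$, I would observe that $x \mapsto L(x, \lambda^*, \mu^*)$ is convex on $E$: $f$ is convex by hypothesis, each $g_i$ is affine, and each $h_i$ enters with a nonnegative coefficient $\mu_i^* \geq 0$ multiplied by an affine function. The stationarity condition \eqref{eq:swrtx} says the gradient of this convex function vanishes at $x^*$, so $x^*$ is a local and hence (by Exercise~\ref{ex:locglo}) global minimizer. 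For the reverse inequality $L(x^*, \lambda, \mu) \leq L(x^*, \lambda^*, \mu^*)$ over $(\lambda, \mu) \in \RR^m \times \RR^k_+$, feasibility of $x^*$ forces $g_i(x^*) = 0$ (killing the $\lambda$-term entirely) and $h_i(x^*) \leq 0$ (so $\sum_i \mu_i h_i(x^*) \leq 0$ for any $\mu \geq 0$), while complementary slackness \eqref{eq:compslack} forces $\sum_i \mu_i^* h_i(x^*) = 0$. Together these yield the required bound.

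For necessity, suppose $x^*$ minimizes $f$ over $F(g,h)$. Write the affine constraints as $g(x) = Ax - b$ and $h(x) = Cx - d$, with rows $a_j$ of $A$ and $c_i$ of $C$, and let $I := \setntn{i \in \natset{k}}{c_i^\top x^* = d_i}$ be the active index set. The plan is to apply a Farkas-type argument to produce the multipliers. Convexity and differentiability of $f$ together with optimality of $x^*$ deliver the first-order condition $\nabla f(x^*)^\top v \geq 0$ for every $v$ in the tangent cone to $F(g,h)$ at $x^*$. Because the constraints are affine, this tangent cone coincides with the linearized cone
\[
T := \setntn{v \in \RR^n}{Av = 0 \text{ and } c_i^\top v \leq 0 \text{ for } i \in I}.
\]
An application of Farkas' lemma (in its version for mixed equality/inequality systems) then yields scalars $\mu_i^* \geq 0$ for $i \in I$ and real $\lambda_j^*$ such that
\[
-\nabla f(x^*) = \sum_{i \in I} \mu_i^* c_i + \sum_{j=1}^m \lambda_j^* a_j.
\]
Setting $\mu_i^* = 0$ for $i \notin I$ gives complementary slackness automatically and rearranges into \eqref{eq:swrtx}.

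The main obstacle will be the necessity direction, for two related reasons. First, identifying the tangent cone to $F(g,h)$ with the linearized cone $T$ requires a constraint qualification; the point of assuming affine constraints is exactly to make this identification automatic, so I would need to verify (or cite) this cleanly rather than wave hands. Second, invoking Farkas' lemma on a system combining equalities $Av = 0$ and inequalities $c_i^\top v \leq 0$ calls for either the mixed version of the lemma or a small reformulation (splitting each equality into two inequalities). Neither step is deep, but keeping track of signs and the active-set bookkeeping is where errors tend to creep in.
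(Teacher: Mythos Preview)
The paper does not actually prove this theorem: immediately before the statement it says ``Here is one example, which is proved in Chapter~8 of \cite{matousek2007understanding},'' and no proof is given in the text. So there is no paper proof to compare against.

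Your proposal is a correct outline of the standard argument. The sufficiency direction is complete as written: convexity of $x \mapsto L(x,\lambda^*,\mu^*)$ plus stationarity gives one saddle inequality, feasibility plus complementary slackness gives the other, and Theorem~\ref{t:saddle} closes it out. The necessity direction is the right strategy---reduce to a first-order variational inequality over the linearized cone and extract multipliers via Farkas---and your identification of the two delicate points (tangent cone equals linearized cone for affine constraints, and the mixed equality/inequality form of Farkas) is accurate. Neither is a genuine obstacle: the first is immediate because $F(g,h)$ is a polyhedron, so feasible directions at $x^*$ are exactly $T$; the second is handled by replacing $Av=0$ with $Av \leq 0$ and $-Av \leq 0$. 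If you wanted to turn this into a full proof you would also want to spell out why optimality of $x^*$ over the convex set $F(g,h)$ gives $\nabla f(x^*)^\top (x-x^*) \geq 0$ for all feasible $x$ (this uses convexity of $f$, not just differentiability), from which the directional condition on $T$ follows.
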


\section{Solutions}\label{c:apsol} 

\shipoutAnswer

\backmatter

\printindex

\listofauthors

\bibliographystyle{apalike}
\bibliography{qe_bib}

\end{document}